\documentclass[3p,10pt]{elsarticle}

\usepackage[T1]{fontenc}
\usepackage{lmodern}
\usepackage{microtype}
\usepackage{geometry}
\makeatletter
\let\ps@pprintTitle\ps@plain
\makeatother
\usepackage{amsmath,amssymb,amsthm,mathtools,mathrsfs}
\newtheorem{theorem}{Theorem}[section]
\newtheorem*{theorem*}{Theorem}
\newtheorem{lemma}{Lemma}[section]
\newtheorem*{lemma*}{Lemma}
\newtheorem{corollary}{Corollary}[section]
\theoremstyle{definition}
\newtheorem{definition}{Definition}[section]

\theoremstyle{remark}
\newtheorem{remark}{Remark}[section]

\usepackage{graphicx}
\usepackage{tikz}
\usetikzlibrary{arrows,backgrounds,calc,fit,decorations.pathreplacing,
  decorations.markings,shapes.geometric,positioning}
\tikzset{
  every fit/.append style=text badly centered,
  internal/.style={draw,fill,shape=circle},
  external/.style={shape=circle},
  square/.style={draw,fill,rectangle},
  triangle/.style={draw,fill,regular polygon,regular polygon sides=3,inner sep=3pt},
  pentagon/.style={draw,fill,regular polygon,regular polygon sides=5,
    inner sep=2pt,minimum size=14pt}
}
\def\borderColor{black!60}

\usepackage{chngcntr}
\counterwithin{figure}{section}
\usepackage{longtable,booktabs,array,multirow}
\usepackage[font=small,labelfont=bf,justification=centering]{caption}
\newcommand{\BI}{BI_{120}}
\newcommand{\yes}{\textsf{Yes}}
\newcommand{\no}{\textsf{No}}

\usepackage{enumitem}
\setlist{itemsep=2pt,partopsep=0pt,parsep=0pt,topsep=4pt}
\setlist[itemize]{leftmargin=*}
\setlist[enumerate]{leftmargin=*}
\newlist{caselist}{description}{4}
\setlist[caselist]{style=unboxed,leftmargin=0pt,labelindent=0pt,
  font=\normalfont\bfseries,labelsep=.5em,topsep=6pt,itemsep=6pt,parsep=4pt}

\usepackage{xcolor}
\newcommand{\caseitem}[1]{\item[Case #1.]}
\newcommand{\subcaseitem}[1]{\item[Subcase #1.]}
\newcommand{\caseheading}[1]{\par\addvspace{6pt}\noindent
  \textbf{Case #1.}\enspace\ignorespaces}
\newcommand{\proofpart}[1]{\par\addvspace{8pt}\noindent
  \textbf{Part #1.}\enspace\ignorespaces}
\newcommand{\caseref}[1]{Case~#1}
\newcommand{\subcaseref}[1]{Subcase~#1}
\newcommand{\partref}[1]{Part~#1}

\biboptions{numbers,square,sort&compress}
\usepackage{xurl}
\usepackage{hyperref}
\hypersetup{colorlinks=true,allcolors=black,citecolor=blue,pdfborder={0 0 0}}
\AtBeginDocument{\hypersetup{allcolors=black,citecolor=blue,pdfborder={0 0 0}}\urlstyle{same}}
\title{\texorpdfstring{The Computational Complexity of Holant Problems on 4-regular Graphs from the Stable Subgroup Sequence of $SL(2,\mathbb{C})$}{The Computational Complexity of Holant Problems on 4-regular Graphs from the Stable Subgroup Sequence of SL(2,C)}}
\author{Yuan Huang}
\author{Zhiguo Fu}
\date{April 2026}

\begin{document}

\begin{frontmatter}
\begin{abstract}
The Holant framework provides a general setting for studying counting problems and includes graph homomorphisms (\#GH) and counting constraint satisfaction problems (\#CSP) as special cases. Over the past twenty years, a series of computational complexity dichotomies have been established for Holant problems, but the classification for complex-valued signatures is still open.  The main obstacle is the case in which all signatures have even arity.
In this paper, we establish a dichotomy for Holant problems with a complex-valued 4-ary signature, which is a key base case for the full classification of Holant problems.
We present a new strategy by introducing Schur's theorem, the classification of finite subgroups of $\mathrm{SL}(2,\mathbb{C})$ and stable subgroup sequences into the proof. These new techniques are of independent interest.
\end{abstract}

\end{frontmatter}

\section{Introduction}
The goal of classifying the computational complexity of counting problems is to identify which problems are computable in polynomial time and which are \#P-hard within a given framework. Substantial progress has been made in this field. For \#GH and \#CSP, two classical frameworks of counting
problems, the full computational complexity classifications have been established \cite{BulatovCSP,Bulatov2005,Bulatov2008,CSP,CaiChenLu2010,CaiChenLu2016,Dyer2000,Dyer2013}.
Inspired by Valiant's holographic algorithms \cite{Val08,Valiant2002}, the more general Holant framework \cite{CaiLuXia2008,CaiLuXia2009,CaiLuXia2013}
was proposed. Holant problems can express a broad class of counting problems, such as
counting matchings (\#Matchings) and perfect matchings (\#PM). In particular, Holant problems contain \#GH
and \#CSP as special cases.

\#CSP and \#GH have been well studied and the computational complexity dichotomies
have been established. However, the understanding of Holant problems is still limited. Restricted to the Boolean domain, several complexity dichotomies have been established. When all signatures are restricted
to be symmetric, a full dichotomy has been proved \cite{CaiHengWilliams2016}. For asymmetric signatures, however, dichotomies are available only for special families of Holant problems, such as Holant$^*$
\cite{CaiLuXia2011}, Holant$^+$
\cite{backensholant+} and Holant$^c$
\cite{Holantc,CaiLuXia2018},
in which some auxiliary signatures are freely available.
For Holant problems without any auxiliary signatures, a full dichotomy for real-valued Holant problems was achieved \cite{RealHolant}. Currently, the result closest to the full  dichotomy for Holant problems is the dichotomy for the case in which there exists a nontrivial odd-arity signature, established by using the complexity class FP$^{\mathrm{NP}}$ \cite{oddarity}.

When all signatures have even arity, one basic case is computing the partition function of the six-vertex model \cite{Sixvertex}. Moreover, as the generalization of the six-vertex model to higher arities, counting weighted Eulerian orientations (\#EO) has been classified using the complexity class FP$^{\mathrm{NP}}$ \cite{EO}.
In addition, a dichotomy for computing the partition function of the eight-vertex model
has been established \cite{Eightvertex}, which
is the generalization of the six-vertex model to more general 4-ary signatures.

In this paper, we give the complexity dichotomy for Holant problems on the Boolean domain with a 4-ary signature without any restrictions. The key step in the proof of a counting-complexity dichotomy is to introduce some unary signatures to reduce the arities of higher-arity signatures.
If all signatures have even arity, then all the signatures we can construct have even arity. This means the signatures of lowest positive arity we can construct are binary signatures.
In this case, we need to obtain unary signatures by interpolating with binary signatures. However, interpolation fails if the ratio of the eigenvalues of the binary signature is a root of unity. This is a major obstacle in previous proofs.  In this paper, we give a
new strategy to overcome this difficulty.
More precisely, the matrix form of a binary signature is a $2\times 2$ matrix. Up to a scalar, if the matrix form has full rank, then the binary signature is in $\mathrm{SL}(2,\mathbb{C})$.
If we have a binary signature of infinite order, then we can obtain a unary signature by interpolating directly.
If all the binary signatures are of finite order,
then any subgroup generated by finitely many binary signatures is finite by Schur's theorem.
Moreover, by the classification of the finite subgroups of $\mathrm{SL}(2,\mathbb{C})$, the subgroup generated by finitely many binary signatures is conjugate to one of $C_n$, $BD_{4n}$, $BT_{24}$, $BO_{48}$ and $BI_{120}$,  which are the finite subgroups of $\mathrm{SL}(2,\mathbb{C})$.
Our strategy is to construct a subgroup sequence.
If the sequence is unstable, then we can obtain a binary signature whose order is large enough for interpolation. Otherwise, we can derive a system of linear equations that yields a signature of eight-vertex form to complete the proof by Lemma \ref{haveeightvertex}. The normalizer and the transpose closure of the five conjugate classes play a key role in our proof.



The paper is organized as follows. In Section 2, we give the background and notation used in this paper. In Section 3, we summarize the cases that can be reduced to \#CSP$^2$. \#CSP$^2$ is a variant of \#CSP in which each variable appears an even number of times. In
Section 4, we prove the dichotomy of Holant problems with a 4-ary signature and a degenerate binary signature. In Section 5, we prove the dichotomy of Holant problems with a 4-ary signature of eight-vertex form. In Section 6, we introduce the subgroup sequence and prove the main theorem of this paper.
\section{Preliminaries and notation}

In this paper, we use $\frak{i}$ to denote the imaginary unit, i.e., $\frak{i}^2=-1$.

For each positive integer $n$, we write $[n]=\{1,\ldots,n\}$. When applied to a signature family, $\langle\mathcal{F}\rangle$ denotes its closure under tensor products and permutations of inputs; for matrices, angle brackets denote the generated subgroup.

A signature $f$ of arity $k$ is
a map $\{0,1\}^{k} \rightarrow \mathbb{C}$.
Fix a set $\mathcal{F}$ of signatures.
A signature grid $\Omega =(G,\pi)$ is a tuple, where
$G=(V,E)$ is a graph, $\pi$ labels each $v\in V$ with
a signature $f_{v}\in \mathcal{F}$ of arity deg$(v)$,
and the incident edges $E(v)$ at $v$ with input variables of $f_{v}$.
We consider all $0\mbox{-}1$ edge assignments $\sigma$, each of which gives an evaluation
$\prod_{v\in V}f_{v}(\sigma |_{E(v)})$, where $\sigma |_{E(v)}$ denotes the
restriction of $\sigma$ to $E(v)$. The counting problem on the instance $\Omega$
is to compute
\[
{\rm Holant}_{\Omega}={\rm Holant}(\Omega;\mathcal{F})=\sum_{\sigma:E \rightarrow\{0,1\}} \prod_{v\in V}f_{v}(\sigma|_{E(v)}).
\]
The Holant problem parameterized by the set $\mathcal{F}$ is denoted by Holant$(\mathcal{F})$.
We use Holant$(\mathcal{F}|\mathcal{G})$ to denote the Holant problem
over signature grids
with a bipartite graph $H=(U,V,E)$ where each vertex in $U$ or $V$ is assigned
a signature
in $\mathcal{F}$ or $\mathcal{G}$ respectively.

For a binary signature $h$, we define its matrix form as a $2\times 2$ matrix $M_{x_i,x_j}(h)$, where the rows are indexed by $x_i$, and the columns are indexed by $x_j$. For example, \[
M_{x_1,x_2}(h)=\begin{pmatrix}
 h_{00} & h_{01}\\
 h_{10} & h_{11}
\end{pmatrix}, M_{x_2,x_1}(h)=\begin{pmatrix}
 h_{00} & h_{10}\\
 h_{01} & h_{11}
\end{pmatrix}.
\]  When the input order is clear, we abbreviate this matrix as $M(h)$.
Multiplying a fixed signature by a nonzero scalar does not change the complexity of its Holant problem: the resulting instance value differs by an explicitly known product of scalars. An invertible binary matrix $M$ has determinant-one representatives $\pm M/\sqrt{\det M}$. Whenever group order is discussed, we use such a representative, rather than the unnormalized matrix. Finiteness of its order is independent of the choice of sign and is equivalent to finite projective order. A nonzero rank-one matrix is a nonzero degenerate binary signature; the zero signature must be treated separately.


For a 4-ary signature $f$, we define its matrix form as a $4\times 4$ matrix $M_{x_i,x_j,x_k,x_l}(f)$, where the rows are indexed by $x_i,x_j$, and the columns are indexed by $x_k,x_l$. For example,\[
M_{x_1,x_2,x_3,x_4}(f)=\begin{pmatrix}
  f_{0000}&  f_{0001}&  f_{0010}& f_{0011}\\
  {f_{0100}} &  {f_{0101}}& {f_{0110}}& {f_{0111}}\\
 {f_{1000}} &  {f_{1001}}& {f_{1010}}& {f_{1011}}\\
  {f_{1100}} & {f_{1101}}&  {f_{1110}}& {f_{1111}}
\end{pmatrix}, M_{x_3,x_4,x_1,x_2}(f)=\begin{pmatrix}
  f_{0000}&  f_{0100}&  f_{1000}& f_{1100}\\
  {f_{0001}} &  {f_{0101}}& {f_{1001}}& {f_{1101}}\\
 {f_{0010}} &  {f_{0110}}& {f_{1010}}& {f_{1110}}\\
  {f_{0011}} & {f_{0111}}&  {f_{1011}}& {f_{1111}}
\end{pmatrix}.
\]
When the input order is clear from the context, we use $M(f)$ to denote the matrix form of $f$. We say $f$ has eight-vertex form if $f_{0001}=f_{0010}=f_{0100}=f_{1000}=f_{1110}=f_{1101}=f_{1011}=f_{0111}=0$.

\begin{definition}
A signature is called degenerate if it is
a tensor product of unary signatures. Any signature that cannot be decomposed into
a tensor product of unary signatures is called nondegenerate.
\end{definition}


A signature is symmetric if its value depends only on the
Hamming weight of its input.
A  symmetric signature $f$ on $k$ Boolean
variables can be expressed as $[f_{0},f_{1},...,f_{k}]$, where $f_{w}$ is
the value of $f$ on inputs of Hamming weight $w$. For example,
$(=_{k})$ is the \textsc{Equality} signature $[1,0,...,0,1]$ (with $k-1$ zeros) of
arity $k$; $(\neq_{2})=[0, 1, 0]$ is the \textsc{Disequality} signature. The support of a signature $f$ is the set of inputs on which $f$ is nonzero.
Let $\mathcal{EQ}=\{=_{1},=_{2},...,=_{n},...\}$ denote the set of all
\textsc{Equality} signatures.
As the following lemma shows, \#CSP can be expressed as Holant problems.
\begin{lemma}
(\cite{Mainbook}) \#{\rm CSP}$(\mathcal{F}) \equiv_{T}{\rm Holant}(\mathcal{EQ}|\mathcal{F})$.
\end{lemma}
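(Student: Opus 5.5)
We prove the two reductions separately. Recall that an instance of \#CSP$(\mathcal{F})$ is a set of variables $x_1,\dots,x_n$ together with constraints $f_1(\mathbf{x}_{S_1}),\dots,f_m(\mathbf{x}_{S_m})$, where each $f_j\in\mathcal{F}$ and each $S_j$ is a tuple of variables. Its value is $\sum_{\mathbf{x}\in\{0,1\}^n}\prod_j f_j(\mathbf{x}_{S_j})$.

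\emph{From \#CSP to Holant.} Build a bipartite signature grid as follows. On the right side, create one vertex $v_j$ for each constraint and label it with $f_j$. On the left side, create one vertex $u_i$ for each variable $x_i$ and label it with $=_{d_i}$, where $d_i$ is the total number of occurrences of $x_i$ among all scopes $S_j$. For every occurrence of $x_i$ in position $t$ of $S_j$, add an edge between $u_i$ and $v_j$, attached to the $t$-th input of $f_j$. Repeated occurrences of the same variable in one scope give parallel edges, which is allowed in the signature grid.

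Now consider an edge assignment $\sigma$. Its weight is nonzero only if, at every $u_i$, all incident edges receive the same value. So the assignments with nonzero weight correspond bijectively to assignments $\mathbf{x}\in\{0,1\}^n$, and each equality vertex then contributes the factor $1$. The constraint vertices contribute exactly $\prod_j f_j(\mathbf{x}_{S_j})$, so the Holant value equals the \#CSP value. One edge case: a variable appearing in no constraint gives $d_i=0$. We simply drop such a variable and multiply the answer by $2$; this also avoids needing $=_0$, which is not in $\mathcal{EQ}$.

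\emph{From Holant to \#CSP.} Take a grid of Holant$(\mathcal{EQ}|\mathcal{F})$. Every edge joins a left vertex $u$ labelled $=_k$ to a right vertex labelled by some $f\in\mathcal{F}$. Introduce one Boolean variable $x_u$ for each left vertex $u$. For each right vertex $v$ labelled $f$, add the constraint $f(x_{u_1},\dots,x_{u_k})$, where $u_t$ is the left endpoint of the edge attached to the $t$-th input of $f$.

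The equality signature $=_k$ forces all edges incident to $u$ to carry a common value, and that value is exactly $x_u$. Hence the Holant sum collapses to the \#CSP sum term by term. Isolated left vertices, if any, each contribute a factor of $2$, since $=_0$ does not occur.

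Both translations are computable in polynomial time and preserve the value exactly, so the two problems are equivalent under (even parsimonious) reductions. There is no real obstacle here. The only point needing care is bookkeeping of multiple occurrences of a variable within one scope, and this is handled by the parallel edges in the first direction and the multiple arguments $x_{u_t}$ in the second.
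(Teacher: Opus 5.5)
Your proof is correct and is the standard argument behind the result the paper cites from \cite{Mainbook}; the paper gives no proof of its own. Variables become equality vertices whose arity is their number of occurrences, constraints become the $\mathcal{F}$-side vertices, and the equalities collapse the edge-assignment sum to the variable-assignment sum. One small cleanup: since $=_0\notin\mathcal{EQ}$, a grid of $\mathrm{Holant}(\mathcal{EQ}|\mathcal{F})$ cannot contain isolated left vertices at all, so your factor-of-$2$ remark in the second direction is moot.
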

Moreover, let $\mathcal{EQ}_{2}=\{=_{2},=_{4},...,=_{2n},...\}$ denote
the set of all even \textsc{Equality} signatures. We define a special case of \#CSP
\begin{center}
    \#{\rm CSP}$^2(\mathcal{F}) \equiv_{T}{\rm Holant}(\mathcal{EQ}_2|\mathcal{F})$,
\end{center}
 i.e., every variable appears an even number of times.


\subsection{Results from group theory}
\begin{theorem}\label{classification}(\cite{finitegroups})
Let $G$ be a finite subgroup of $\mathrm{SL}(2,\mathbb{C})$. Then $G$ is one of the following groups (up to conjugacy):
\begin{enumerate}
    \item a cyclic group $C_n$;
    \item a binary dihedral group, of the form $BD_{4n}$, with $n\in\mathbb{N}$;
    \item a binary group corresponding to one of the Platonic solids, that is $BT_{24}$, $BO_{48}$ or $BI_{120}$.
\end{enumerate}
\end{theorem}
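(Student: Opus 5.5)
The statement is the classical Klein classification of finite subgroups of $\mathrm{SL}(2,\mathbb{C})$. I would prove it by reducing to finite subgroups of $\mathrm{SO}(3)$ via unitarization and the double cover $\mathrm{SU}(2)\to\mathrm{SO}(3)$.

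\textbf{Step 1: unitarize.} Let $G\subset \mathrm{SL}(2,\mathbb{C})$ be finite. Averaging the standard Hermitian form over $G$ gives a $G$-invariant inner product
\[
\langle u,v\rangle_G=\frac{1}{|G|}\sum_{g\in G}(gu)^{*}(gv).
\]
Changing basis to one that is orthonormal for this form conjugates $G$ into $\mathrm{U}(2)\cap \mathrm{SL}(2,\mathbb{C})=\mathrm{SU}(2)$.

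\textbf{Step 2: pass to $\mathrm{SO}(3)$.} Use the covering $\pi:\mathrm{SU}(2)\to \mathrm{SO}(3)$, given by the conjugation action on trace-zero skew-Hermitian matrices. Its kernel is $\{\pm I\}$. Set $\bar G=\pi(G)$, a finite subgroup of $\mathrm{SO}(3)$.

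\textbf{Step 3: classify $\bar G$.} Every nonidentity rotation fixes exactly two antipodal poles on $S^2$. Count pairs (nonidentity element, pole) in two ways, with $N=|\bar G|$ and the sum running over pole orbits with stabilizer orders $n_i\ge 2$:
\[
2(N-1)=\sum_i \frac{N}{n_i}(n_i-1),
\qquad\text{equivalently}\qquad
2-\frac{2}{N}=\sum_i\Bigl(1-\frac1{n_i}\Bigr).
\]
Since each term on the right lies in $[1/2,1)$ and the left side is less than $2$, there are at most three orbits, and there are exactly two or three unless $N=1$.
\begin{itemize}
\item Two orbits force $n_1=n_2=N$, so $\bar G$ is cyclic.
\item Three orbits give $\sum 1/n_i>1$. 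Up to order the solutions are $(2,2,n)$, $(2,3,3)$, $(2,3,4)$ and $(2,3,5)$.
\end{itemize}
These correspond to the dihedral group $D_n$ of order $2n$ and to the rotation groups $A_4$, $S_4$, $A_5$ of orders $12$, $24$, $60$. The main obstacle is identifying each solution with a unique group up to conjugacy. I would do this geometrically: the orbit of largest stabilizer order realizes the vertices of a regular polygon or of a Platonic solid, and this determines the configuration up to rotation.

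\textbf{Step 4: lift back to $G$.}
\begin{itemize}
\item If $|G|$ is odd, then $-I\notin G$, so $G\cong\bar G$ has odd order. Among the groups in Step 3 only the cyclic ones can have odd order. A cyclic subgroup of $\mathrm{SU}(2)$ is diagonalizable, so it is conjugate to $C_n$.
\item If $|G|$ is even, then $G$ contains an element of order $2$. In $\mathrm{SL}(2,\mathbb{C})$ the only such element is $-I$, because an element of order $2$ has eigenvalues $\pm1$ with product $1$. Hence $G=\pi^{-1}(\bar G)$. The preimages of cyclic, dihedral, tetrahedral, octahedral and icosahedral groups are $C_{2n}$, $BD_{4n}$, $BT_{24}$, $BO_{48}$ and $BI_{120}$ respectively.
\end{itemize}
Uniqueness up to conjugacy in $\mathrm{SL}(2,\mathbb{C})$ follows in two steps. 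Conjugacy of the images in $\mathrm{SO}(3)$ lifts to conjugacy in $\mathrm{SU}(2)$, because $\pi$ is surjective. Conjugacy in $\mathrm{SU}(2)$ is in particular conjugacy in $\mathrm{SL}(2,\mathbb{C})$.
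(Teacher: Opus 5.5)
The paper does not prove this theorem. It cites \cite{finitegroups} and uses the result, together with the explicit generators in the lemma that follows it, as a black box. Your proposal instead reconstructs the classical Klein argument: average into $\mathrm{SU}(2)$, pass to $\mathrm{SO}(3)$ via $\pi$, apply the pole count $2-2/N=\sum_i(1-1/n_i)$, and lift back using that $-I$ is the only involution in $\mathrm{SL}(2,\mathbb{C})$. The outline is sound, including the odd/even split and the transfer of conjugacy through the surjective $\pi$ once both groups contain $-I$. The citation buys the paper brevity, and your route makes the classification self-contained. However, your route owes one check that the citation hides. The paper defines $C_n$, $BD_{4n}$, $BT_{24}$, $BO_{48}$, $BI_{120}$ by generators rather than as preimages, so you must verify that $\pi$ maps those generators onto the standard cyclic, dihedral, tetrahedral, octahedral and icosahedral rotation groups. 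For example, $\operatorname{diag}(\varepsilon,\varepsilon^{-1})$ with $\varepsilon$ of order $2n$ must go to a rotation of order $n$, and $\begin{psmallmatrix}0&1\\-1&0\end{psmallmatrix}$ to a half-turn about a perpendicular axis.

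The step you flag as the main obstacle is where the real work lies, and your plan misdescribes the dihedral type. There the orbit of largest stabilizer is the antipodal pair on the principal axis, not a polygon. Instead, argue that $\bar G$ preserves this axis, so it consists of $n$ rotations about it and $n$ half-turns about perpendicular axes.

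For the Platonic types, take $x$ in the orbit whose stabilizer $\langle r\rangle$ has order $k\in\{3,4,5\}$. Then:
\begin{enumerate}
\item The points of the orbit other than $\pm x$ form $\langle r\rangle$-orbits of size $k$.
\item For $k\ge 4$ the orbit is centrally symmetric, because it is the unique orbit with stabilizer order $k$ and $-y$ has the same stabilizer as $y$.
\item Transitivity forces every point of the orbit to see the same multiset of distances to the others.
\end{enumerate}
These constraints give a regular tetrahedron for $k=3$ and the octahedron $\{\pm e_i\}$ for $k=4$. For $k=5$ they give two antipodal pentagons at heights $\pm 1/\sqrt5$ on the unit sphere, i.e.\ an icosahedron. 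The rotation group of each solid has order exactly $N$, so $\bar G$ equals it. Any $\mathrm{O}(3)$-conjugation between two inscribed copies can be replaced by its negative, which lies in $\mathrm{SO}(3)$ and induces the same conjugation.

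Finally, in the involution step, say explicitly that $g^2=I$ makes $g$ diagonalizable, since its minimal polynomial divides $x^2-1$. Only then do the eigenvalues $\pm1$ with product $1$ give $g=\pm I$.
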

\begin{lemma}(\cite{finitegroups})
    \begin{itemize}
        \item $C_n$ is generated by $\begin{pmatrix}
\varepsilon  &0 \\
0  & \varepsilon^{-1}
\end{pmatrix}$, where $\varepsilon$ is a primitive $n$th root of unity, $C_n=\left \{\begin{pmatrix}
\varepsilon  &0 \\
0  & \varepsilon^{-1}
\end{pmatrix}^k \Bigg |k\in[n]\right \}$;
\item $BD_{4n}$ is generated by  $A=\begin{pmatrix}
\varepsilon  &0 \\
0  & \varepsilon^{-1}
\end{pmatrix}$ and $B=\begin{pmatrix}
0  &1 \\
-1  & 0
\end{pmatrix}$, where $\varepsilon$ is a primitive $2n$th root of unity, $BD_{4n}=\left \{A^k,A^kB \Bigg |k\in[2n]\right \}$;

\item $BT_{24}$ is generated by $\begin{pmatrix}
\frak{i}  &0 \\
0  & -\frak{i} 
\end{pmatrix}$, $\begin{pmatrix}
0  &1 \\
-1  & 0
\end{pmatrix}$, and $\frac{1}{2}\begin{pmatrix}
1+\frak{i}   &-1+\frak{i}  \\
1+\frak{i}   & 1-\frak{i} 
\end{pmatrix}$.
\item $BO_{48}$ is generated by $\frac{1}{\sqrt{2}}\begin{pmatrix}
1+\frak{i}  &0 \\
0  & 1-\frak{i} 
\end{pmatrix}$, $\begin{pmatrix}
0  &1 \\
-1  & 0
\end{pmatrix}$, and $\frac{1}{2}\begin{pmatrix}
1+\frak{i}   &-1+\frak{i}  \\
1+\frak{i}   & 1-\frak{i} 
\end{pmatrix}$.
\item $BI_{120}$ is generated by $\begin{pmatrix}
\omega^3  &0 \\
0  & \omega^2 
\end{pmatrix}$, $\begin{pmatrix}
0  &1 \\
-1  & 0
\end{pmatrix}$, and $\frac{1}{\sqrt{5}}\begin{pmatrix}
\omega^4-\omega   &\omega^2-\omega^3 \\
\omega^2-\omega^3   & \omega-\omega^4 
\end{pmatrix}$, where $\omega$ is a primitive fifth root of unity.
    \end{itemize}

\end{lemma}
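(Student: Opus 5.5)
This lemma is a standard fact cited from \cite{finitegroups}, so the plan is a direct verification for each family. The only information used is the order of each group, 1, $4n$, 24, 48, 120. For each listed generating set, I will check three things: the matrices lie in $\mathrm{SL}(2,\mathbb{C})$, the group they generate is finite, and that group has the stated order. Since Theorem \ref{classification} fixes the conjugacy class by isomorphism type, and for the binary polyhedral groups even by order, this identifies the generated group with $C_n$, $BD_{4n}$, $BT_{24}$, $BO_{48}$, $BI_{120}$ respectively.

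\textbf{Cyclic and binary dihedral cases.} For $C_n$, the diagonal matrix $\mathrm{diag}(\varepsilon,\varepsilon^{-1})$ has determinant $1$. Its order is exactly $n$ because $\varepsilon$ is primitive, which gives the explicit list of powers. For $BD_{4n}$, write $A=\mathrm{diag}(\varepsilon,\varepsilon^{-1})$ with $\varepsilon$ a primitive $2n$th root of unity. I will compute the relations $B^2=-I=A^n$ and $BAB^{-1}=A^{-1}$. The second relation lets every word in $A,B$ be rewritten as $A^k$ or $A^kB$. These $4n$ elements are distinct, since the $A^k$ are diagonal and the $A^kB$ are anti-diagonal, and within each type $k$ ranges over $[2n]$. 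Hence the group equals the displayed set and has order $4n$.

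\textbf{Polyhedral cases.} Here I will pass to $\mathrm{SO}(3)$ through the double cover $\mathrm{SU}(2)\to\mathrm{SO}(3)$, after first checking that each generator is unitary with determinant $1$.
\begin{itemize}
\item For $BT_{24}$, the first two generators give the quaternion group $Q_8$, which maps onto the Klein four-group of rotations by $\pi$ about the coordinate axes. The third matrix has trace $1$, so it has order $6$ and maps to a rotation by $2\pi/3$ that cyclically permutes the coordinate axes. The image is therefore the tetrahedral rotation group $A_4$, and since the group contains $-I$, its order is $2\cdot 12=24$.
\item For $BO_{48}$, the diagonal generator maps to a rotation by $\pi/2$. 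Together with the previous image this gives the octahedral group $S_4$, so the order is $48$.
\item For $BI_{120}$, the first two generators give $BD_{20}$, whose image is the dihedral group of order $10$. I will check that the image of the third generator combines with it to give the icosahedral rotation group $A_5$. One route is to verify that the image of the generated group is a finite group of order greater than $24$ that contains an element of order $5$; Theorem \ref{classification} then leaves only $BI_{120}$.
\end{itemize}

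The main obstacle is this icosahedral case. Closure under the three generators requires a genuine computation with fifth roots of unity, for example checking that the trace of the third generator times the others lies in the set $\{0,\pm1,\pm\tfrac{1\pm\sqrt5}{2}\}$. If that check fails, I would simply rely on the citation \cite{finitegroups}.
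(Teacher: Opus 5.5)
Your verifications for $C_n$, $BD_{4n}$, $BT_{24}$ and $BO_{48}$ are sound. The paper itself gives no argument beyond the citation. In those cases the identification comes from computing the image in $\mathrm{SO}(3)$ as $A_4$ or $S_4$, not from the order. Your framing claim that the binary polyhedral groups are determined by their order is false: $C_{24}$, $BD_{24}$ and $BT_{24}$ all have order $24$, and likewise at orders $48$ and $120$.

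The genuine gap is the $BI_{120}$ case, which your proposal does not prove. Write $D=\operatorname{diag}(\omega^3,\omega^2)$ and $C$ for the third generator.

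\textbf{Finiteness is not established.} Finiteness of $\langle D,B,C\rangle$ is the entire content of this case. Checking that the traces of $C$ times the other generators lie in $\{0,\pm1,\pm\frac{1\pm\sqrt5}{2}\}$ cannot give it. Lifts to $\mathrm{SL}(2,\mathbb{R})$ of the hyperbolic $(5,5,5)$ triangle group have generators $X,Y$ with $\operatorname{tr}X,\operatorname{tr}Y,\operatorname{tr}XY\in\{\pm\frac{1+\sqrt5}{2}\}$, yet that group is infinite. Deferring to the citation if the check fails is not an argument.

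\textbf{The identification criterion is wrong.} A finite subgroup of $\mathrm{SO}(3)$ of order greater than $24$ containing an element of order $5$ can be dihedral (e.g.\ the dihedral group of order $30$) or cyclic. So Theorem~\ref{classification} does not leave only $BI_{120}$.

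Both points can be repaired as follows.
\begin{enumerate}
\item Under the M\"obius action, $D$ is $z\mapsto\omega z$, $B$ is $z\mapsto -1/z$, and $C$ is Klein's icosahedral substitution. One checks, as in Klein's classical computation, that all three permute the twelve points $\{0,\infty\}\cup\{\omega^\nu(\omega+\omega^4),\,\omega^\nu(\omega^2+\omega^3):0\le\nu\le4\}$.
\item A M\"obius map fixing three points is the identity, so the image of the group in $\mathrm{PSL}(2,\mathbb{C})$ embeds in $S_{12}$. Since the kernel of $\mathrm{SL}\to\mathrm{PSL}$ is $\{\pm I\}$, the group is finite.
\item For the identification, $CDC^{-1}$ has order $5$ and fixed points $\{C(0),C(\infty)\}\neq\{0,\infty\}$, because $C$ is neither diagonal nor anti-diagonal.
\item In a cyclic or dihedral group, all elements of order at least $3$ share one fixed-point pair. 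Hence the image is $A_4$, $S_4$ or $A_5$, and the element of order $5$ forces $A_5$.
\item Since $-I=B^2$ lies in the group, its order is $120$.
\end{enumerate}
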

\begin{definition}
    Let $G$ be a group. An isomorphism from $G$ onto itself is called an automorphism of $G$. The set of all automorphisms of $G$ is denoted by $Aut(G)$.
\end{definition}
Note that if $G$ is finite, then $Aut(G)$ is also finite.
\begin{definition}
    The normalizer of a subgroup $H$ in the group $G$ is $\mathcal{N}_{G}(H)=\{g|g\in G, H=gHg^{-1}\}$.
\end{definition}

\begin{definition}
    The centralizer of a subgroup $H$ in the group $G$ is $\mathcal{C}_{G}(H)=\{g|g\in G, h=ghg^{-1}\mathrm{\ for\ all\ }h\in H\}$.
\end{definition}
\begin{corollary}\label{NClemma}(\cite{Algebradummit})
    For any subgroup $H$ of a group $G$, the quotient group
    $\mathcal{N}_{G}(H)/\mathcal{C}_{G}(H)$ is isomorphic to a subgroup of $Aut(H)$.
\end{corollary}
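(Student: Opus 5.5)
The plan is the standard N/C argument, built around a single homomorphism. Define $\varphi:\mathcal{N}_G(H)\to Aut(H)$ by sending $g$ to the conjugation map $c_g:H\to H$, $c_g(h)=ghg^{-1}$. The proof then has three steps.

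\textbf{Step 1: $\varphi$ is well defined.} Fix $g\in\mathcal{N}_G(H)$. By definition of the normalizer, $gHg^{-1}=H$, so $c_g$ maps $H$ into $H$ and is surjective onto $H$. It is a homomorphism because
\[ c_g(h_1h_2)=gh_1g^{-1}gh_2g^{-1}=c_g(h_1)\,c_g(h_2). \]
It is injective because $ghg^{-1}=e$ forces $h=e$. Hence $c_g\in Aut(H)$.

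\textbf{Step 2: $\varphi$ is a homomorphism.} For $g_1,g_2\in\mathcal{N}_G(H)$ and $h\in H$,
\[ c_{g_1g_2}(h)=g_1g_2hg_2^{-1}g_1^{-1}=c_{g_1}\bigl(c_{g_2}(h)\bigr), \]
so $\varphi(g_1g_2)=\varphi(g_1)\circ\varphi(g_2)$.

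\textbf{Step 3: identify the kernel.} We have $g\in\ker\varphi$ exactly when $ghg^{-1}=h$ for all $h\in H$, that is, when $g\in\mathcal{C}_G(H)$. We should also check that $\mathcal{C}_G(H)\subseteq\mathcal{N}_G(H)$, so that the kernel computed inside $\mathcal{N}_G(H)$ is all of $\mathcal{C}_G(H)$. This holds because if $g$ fixes every element of $H$ under conjugation, then $gHg^{-1}=H$.

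\textbf{Conclusion.} By the first isomorphism theorem, $\mathcal{C}_G(H)$ is normal in $\mathcal{N}_G(H)$ and
\[ \mathcal{N}_G(H)/\mathcal{C}_G(H)\cong\varphi\bigl(\mathcal{N}_G(H)\bigr), \]
which is a subgroup of $Aut(H)$.

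There is no real obstacle here. The only point needing care is Step 1: we must use the full equality $gHg^{-1}=H$, not merely the inclusion $gHg^{-1}\subseteq H$, to get surjectivity of $c_g$ onto $H$ when $H$ is infinite. The paper's definition of the normalizer supplies exactly this equality.
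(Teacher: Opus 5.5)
Your proof is correct, and it is essentially the standard argument. The paper gives no proof of its own and only cites Dummit--Foote, where the result comes from the same conjugation homomorphism $\mathcal{N}_G(H)\to Aut(H)$, $g\mapsto c_g$, whose kernel is $\mathcal{C}_G(H)$; you were right to check that $\mathcal{C}_G(H)\subseteq\mathcal{N}_G(H)$, and right that the equality $gHg^{-1}=H$ in the paper's definition of the normalizer is what makes each $c_g$ surjective onto $H$.
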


\begin{lemma}\label{normalizer}
The normalizers of $C_n$, $BD_{4n}$, $BT_{24}$, $BO_{48}$ and $BI_{120}$ in $\mathrm{SL}(2,\mathbb{C})$ are
\begin{itemize}
    \item $\mathcal{N}_{\mathrm{SL}(2,\mathbb{C})}(C_1)=\mathcal{N}_{\mathrm{SL}(2,\mathbb{C})}(C_2)=\mathrm{SL}(2,\mathbb{C})$;
    \item $\mathcal{N}_{\mathrm{SL}(2,\mathbb{C})}(C_n)= \left \{\begin{pmatrix}
  a&0 \\
  0&a^{-1}
\end{pmatrix},\begin{pmatrix}
  0&b \\
  -b^{-1}&0
\end{pmatrix}\Bigg|a,b\in \mathbb{C}\setminus\{0\} \right \}$, where $n\ge3$;
\item $\mathcal{N}_{\mathrm{SL}(2,\mathbb{C})}(BD_{4n})=BD_{8n}$, where $n\ge3$;
\item $\mathcal{N}_{\mathrm{SL}(2,\mathbb{C})}(BD_{8})=\mathcal{N}_{\mathrm{SL}(2,\mathbb{C})}(BT_{24})=\mathcal{N}_{\mathrm{SL}(2,\mathbb{C})}(BO_{48})=BO_{48}$;
\item $\mathcal{N}_{\mathrm{SL}(2,\mathbb{C})}(BI_{120})=BI_{120}$.
\end{itemize}
\end{lemma}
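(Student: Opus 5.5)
The plan is to treat the five families one at a time. Throughout, the basic tool is the following: if $g$ normalizes a finite group $H$, then $g$ permutes the elements of $H$ and preserves their orders. It also permutes the $H$-invariant structures. By an invariant structure I mean anything canonically attached to $H$: the set of eigenlines of its elements, a distinguished normal subgroup, and so on. I will combine this with Corollary~\ref{NClemma}, which says that $\mathcal{N}(H)/\mathcal{C}(H)$ embeds in $\mathrm{Aut}(H)$.

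\textbf{The cyclic groups.} $C_1=\{I\}$ and $C_2=\{\pm I\}$ are central in $\mathrm{SL}(2,\mathbb{C})$, so their normalizer is the whole group.

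For $n\ge 3$, the generator $A=\mathrm{diag}(\varepsilon,\varepsilon^{-1})$ has distinct eigenvalues, since $\varepsilon\neq\varepsilon^{-1}$. Its eigenlines are therefore exactly the two coordinate axes. If $gC_ng^{-1}=C_n$, then $gAg^{-1}=A^k$. The eigenlines of $A^k$ are again the two axes, because $\varepsilon^k\neq\varepsilon^{-k}$ as $A^k$ is also a generator. So $g$ permutes the two coordinate axes, and $g$ is either diagonal or anti-diagonal. Imposing determinant one gives exactly the displayed set. Conversely, every such matrix sends $A$ to $A^{\pm1}$, so it lies in the normalizer.

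\textbf{The binary dihedral groups, $n\ge3$.} In $BD_{4n}$ the elements $A^kB$ all satisfy $(A^kB)^2=-I$, so they have order $4$. If $n\ge3$, then $2n>4$, so $C_{2n}=\langle A\rangle$ is the unique cyclic subgroup of order $2n$; for $n\ge3$, every element of order $2n$ lies in $\langle A\rangle$. Hence any $g$ normalizing $BD_{4n}$ normalizes $C_{2n}$, and by the previous step $g$ is diagonal or anti-diagonal. Since $B$ itself is in $BD_{4n}$, it suffices to treat diagonal $g=\mathrm{diag}(a,a^{-1})$. Then $gBg^{-1}$ is anti-diagonal with entries $a^2$ and $-a^{-2}$. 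This must equal some $A^kB$, which forces $a^2=\varepsilon^k$, so $a$ is a $4n$-th root of unity. This gives exactly $BD_{8n}$, and the reverse inclusion is a direct check.

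\textbf{The exceptional groups.} For $BD_8$ (the quaternion group $Q_8$), $BT_{24}$, $BO_{48}$ and $BI_{120}$, I would argue as follows.

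First, each of these groups acts irreducibly on $\mathbb{C}^2$, because it is non-abelian. By Schur's lemma, its centralizer in $\mathrm{SL}(2,\mathbb{C})$ is $\{\pm I\}$. Corollary~\ref{NClemma} then gives $|\mathcal{N}(H)|\le 2|\mathrm{Aut}(H)|$, so the normalizer is finite. A finite subgroup of $\mathrm{SL}(2,\mathbb{C})$ is on the list of Theorem~\ref{classification}. Moreover, it contains $H$ with index at most $|\mathrm{Out}(H)|$.

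Next, I need the outer automorphism groups. These are known: $\mathrm{Out}(Q_8)\cong S_3$, and $\mathrm{Out}(BT_{24})$, $\mathrm{Out}(BO_{48})$, $\mathrm{Out}(BI_{120})$ are all of order $2$. So the normalizer is a finite group from the list that contains $H$ with small index. Checking the list, the only candidates containing $Q_8$ are $Q_8$, $BT_{24}$ and $BO_{48}$. $BO_{48}$ does realize the full index $6$, since $BO_{48}/Q_8\cong S_3$. Similarly, $BO_{48}$ contains $BT_{24}$ with index $2$. No finite subgroup of $\mathrm{SL}(2,\mathbb{C})$ properly contains $BO_{48}$ or $BI_{120}$, by the classification: there are none of orders $96$ or $240$ on the list. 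This gives $BO_{48}$, $BO_{48}$, $BO_{48}$ and $BI_{120}$ respectively.

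One more point is needed: the normalizer of $Q_8$ must be the specific copy of $BO_{48}$ generated by the given matrices. This follows because any finite group containing $Q_8$ and normalizing it is generated inside $\mathcal{N}$, and the given $BO_{48}$ already normalizes $Q_8$ and has the maximal size $48$.

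\textbf{Main obstacle.} I expect the difficult step to be the exceptional cases. There I must justify both the bound $|\mathcal{N}/H|\le|\mathrm{Out}(H)|$ and the fact that the given explicit matrix groups are mutually nested ($Q_8\subset BT_{24}\subset BO_{48}$). The alternative is direct matrix verification that the generators of $BO_{48}$ normalize $Q_8$ and $BT_{24}$.
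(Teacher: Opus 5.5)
Your proposal is sound in structure. For $C_n$ and $BD_{4n}$ it is the paper's argument in slightly different form.

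\begin{itemize}
\item For $C_n$, the paper expands $gAg^{-1}$ entrywise to force $ab=cd=0$. You instead note that $g$ carries the eigenlines of $A$ to those of $A^k$.
\item For $BD_{4n}$, the paper separates $A$ from the elements $A^kB$ by trace ($\operatorname{tr}(A^kB)=0\neq\varepsilon+\varepsilon^{-1}$). You separate them by order.
\end{itemize}

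For the four exceptional groups, the paper uses three ingredients. First, finiteness of $\mathcal N(H)$, with centralizer $\{\pm I\}$ by direct computation where you invoke Schur's lemma. Second, a direct check that the generators of $BO_{48}$ normalize $BD_8$, $BT_{24}$ and $BO_{48}$. Third, the bare assertion that no finite subgroup properly contains $BO_{48}$ or $BI_{120}$.

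Your sharper bound $[\mathcal N(H):H]\le|\mathrm{Out}(H)|$ is valid because $\mathcal C(H)=Z(H)=\{\pm I\}\subseteq H$. It turns $BD_8$ and $BT_{24}$ into pure counting: $|\mathcal N(H)|\le 48$ and $BO_{48}\subseteq\mathcal N(H)$. Maximality is then needed only for $BO_{48}$ and $BI_{120}$, and only against overgroups of index $2$. The price is importing the outer automorphism groups.

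Two of your supporting claims are false as stated.

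\begin{itemize}
\item The list has other groups containing $Q_8$: every $BD_{8m}$ contains a copy of $BD_8$ (e.g.\ $BD_{16}$, $BD_{24}$, $BD_{48}$). This is harmless, since your final step for $BD_8$ uses only the counting.
\item More seriously, the list does contain groups of orders $96$ and $240$, namely $C_{96}$, $BD_{96}$, $C_{240}$ and $BD_{240}$. This was your only justification for maximality, so it needs an extra line. A cyclic group cannot contain the non-abelian $BO_{48}$ or $BI_{120}$. If $BO_{48}\subseteq BD_{96}$, intersecting with the index-$2$ cyclic subgroup would give $BO_{48}$ a cyclic subgroup of order at least $24$. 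Likewise $BI_{120}\subseteq BD_{240}$ would give $BI_{120}$ one of order at least $60$. But their elements have order at most $8$ and $10$ respectively.
\end{itemize}

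The nesting you flag as the main obstacle is quick to settle. We have $\bigl(\tfrac{1}{\sqrt2}\operatorname{diag}(1+\frak{i},1-\frak{i})\bigr)^2=\operatorname{diag}(\frak{i},-\frak{i})$, so the listed generators give $BD_8\subseteq BT_{24}\subseteq BO_{48}$. Since $BT_{24}$ has index $2$ in $BO_{48}$, it is normal. $BD_8$ is the normal Sylow $2$-subgroup of $BT_{24}\cong\mathrm{SL}(2,3)$, hence characteristic there and therefore normal in $BO_{48}$.
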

\begin{proof}
    The proof of $\mathcal{N}_{\mathrm{SL}(2,\mathbb{C})}(C_1)=\mathcal{N}_{\mathrm{SL}(2,\mathbb{C})}(C_2)=\mathrm{SL}(2,\mathbb{C})$ is trivial.

    Note that if $G$ is a finite subgroup of $\mathrm{SL}(2,\mathbb{C})$ generated by a generating set $\mathcal{G}$, and if $S\in \mathrm{SL}(2,\mathbb{C})$ such that $S\mathcal{G}S^{-1}\subseteq G$, then $SGS^{-1}=G$.

    For $C_n$, where $n\ge3$, its generator is $\begin{pmatrix}
 \varepsilon &0 \\
 0 &\varepsilon^{-1}
\end{pmatrix}$, where $\varepsilon\ne 1$. Note that for any
$\begin{pmatrix}
 a & b\\
 c &d
\end{pmatrix}\in \mathcal{N}_{\mathrm{SL}(2,\mathbb{C})}(C_n)$, we have
$\begin{pmatrix}
 a & b\\
 c &d
\end{pmatrix}\begin{pmatrix}
 \varepsilon & 0\\
 0 &\varepsilon^{-1}
\end{pmatrix}\begin{pmatrix}
 d & -b\\
 -c &a
\end{pmatrix}=\begin{pmatrix}
 ad\varepsilon-bc\varepsilon^{-1} & ab(\varepsilon^{-1}-\varepsilon)\\
 cd(\varepsilon-\varepsilon^{-1}) &ad\varepsilon^{-1}-bc\varepsilon
\end{pmatrix}\in C_n$, and $\varepsilon^{-1}-\varepsilon\ne0$ since $\varepsilon\ne1$. Clearly $\begin{pmatrix}
 ad\varepsilon-bc\varepsilon^{-1} & ab(\varepsilon^{-1}-\varepsilon)\\
 cd(\varepsilon-\varepsilon^{-1}) &ad\varepsilon^{-1}-bc\varepsilon
\end{pmatrix}\in C_n$  if and only if $ab=0$ and $cd=0$. Then we have $\mathcal{N}_{\mathrm{SL}(2,\mathbb{C})}(C_n)=
 \left \{ \begin{pmatrix}
  a&0 \\
  0&a^{-1}
\end{pmatrix},\begin{pmatrix}
  0&b \\
  -b^{-1}&0
\end{pmatrix}\Bigg|a,b\in \mathbb{C}\setminus\{0\} \right \}$, where $n\ge3$

For $BD_{4n}$, where $n\ge3$, its generators are $A=\begin{pmatrix}
 \varepsilon &0 \\
 0 &\varepsilon^{-1}
\end{pmatrix}$ and $B=\begin{pmatrix}
 0 &1 \\
 -1 &0
\end{pmatrix}$, where $\varepsilon$ is a primitive $2n$th root of unity. Moreover, $BD_{4n}=\left \{A^k, A^kB  | k\in [2n] \right\}$. We have $\varepsilon\ne \pm\frak{i}$ since $n\ge3$.  Note that the traces of two similar matrices are necessarily equal and the trace of $A^kB$ is zero for any $k$.

Thus for any $g\in\mathcal{N}_{\mathrm{SL}(2,\mathbb{C})}(BD_{4n})$, $gAg^{-1}\in \left \{A^k| k\in[2n] \right\}\subseteq C_{2n}$. This implies that $\mathcal{N}_{\mathrm{SL}(2,\mathbb{C})}(BD_{4n})\subseteq \mathcal{N}_{\mathrm{SL}(2,\mathbb{C})}(C_{2n})$. Further, if
$g=\begin{pmatrix}
 a & 0\\
 0 &a^{-1}
\end{pmatrix}$, then
$gA^kBg^{-1}=\begin{pmatrix}
 a & 0\\
 0 &a^{-1}
\end{pmatrix}\begin{pmatrix}
 0 & \varepsilon^k\\
 -\varepsilon^{-k} &0
\end{pmatrix}\begin{pmatrix}
 a^{-1} & 0\\
 0 &a
\end{pmatrix}=\begin{pmatrix}
 0 & a^2\varepsilon^k\\
 -a^{-2}\varepsilon^{-k} &0
\end{pmatrix}\in BD_{4n}$ for $k\in[2n]$. This implies that $a^{4n}=1$, i.e. $a$ is a $4n$th root of unity. Similarly, if $g=\begin{pmatrix}
 0 & b\\
 -b^{-1} &0
\end{pmatrix}$, then $gA^kBg^{-1}=\begin{pmatrix}
 0 & b^2\varepsilon^k\\
 -b^{-2}\varepsilon^{-k} &0
\end{pmatrix}\in BD_{4n}$ for $k\in[2n]$. This implies that $b^{4n}=1$ . Then we have $\mathcal{N}_{\mathrm{SL}(2,\mathbb{C})}(BD_{4n})= BD_{8n}$, where $n\ge3$.

Note that by Corollary~\ref{NClemma}, if $H$ is a finite subgroup of $G$ and $\mathcal{C}_{G}(H)$ is finite, then $\mathcal{N}_{G}(H)$ is also finite.
For each $G\in\{BD_8,BT_{24},BO_{48},BI_{120}\}$, direct calculation gives $\mathcal C_{\mathrm{SL}(2,\mathbb C)}(G)=\{I,-I\}$, so these centralizers are finite. Thus $\mathcal{N}_{\mathrm{SL}(2,\mathbb{C})}(BD_{8}),\allowbreak\mathcal{N}_{\mathrm{SL}(2,\mathbb{C})}(BT_{24}),\allowbreak\mathcal{N}_{\mathrm{SL}(2,\mathbb{C})}(BO_{48}),\allowbreak\mathcal{N}_{\mathrm{SL}(2,\mathbb{C})}(BI_{120})$ are also finite. Further, it is easy to verify that all the generators of $BO_{48}$ are in $\mathcal{N}_{\mathrm{SL}(2,\mathbb{C})}(BD_{8}),\allowbreak\mathcal{N}_{\mathrm{SL}(2,\mathbb{C})}(BT_{24}), \mathcal{N}_{\mathrm{SL}(2,\mathbb{C})}(BO_{48})$, and $BI_{120}\subseteq \mathcal{N}_{\mathrm{SL}(2,\mathbb{C})}(BI_{120})$. Note that there does not exist a finite subgroup $H$ of $\mathrm{SL}(2,\mathbb{C})$ such that $BO_{48}\subset H$ or $BI_{120}\subset H$. Thus we have $\mathcal{N}_{\mathrm{SL}(2,\mathbb{C})}(BD_{8})=\mathcal{N}_{\mathrm{SL}(2,\mathbb{C})}(BT_{24})=\mathcal{N}_{\mathrm{SL}(2,\mathbb{C})}(BO_{48})=BO_{48}$ and $\mathcal{N}_{\mathrm{SL}(2,\mathbb{C})}(BI_{120})=BI_{120}$.
\end{proof}
\begin{definition}
For a subgroup $G$ of  $\mathrm{SL}(2,\mathbb{C})$, let $G^T=\{g^T|g\in G\}$.
     $G$ is closed under transposition if and only if $G=G^T$.
\end{definition}

\begin{lemma}\label{norma}
For two subgroups $G, H$ of $\mathrm{SL}(2,\mathbb{C})$,
if $H=PGP^{-1}$, where $P\in \mathrm{SL}(2,\mathbb{C})$,  and both $G$  and $H$
    are closed under transposition,
    then $P^{T}P\in \mathcal{N}_{\mathrm{SL}(2,\mathbb{C})}(G)$.
\end{lemma}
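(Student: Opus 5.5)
The plan is to show directly that conjugation by $P^TP$ maps $G$ into $G$, and then upgrade this inclusion to equality. The transpose-closure hypotheses let us move the conjugating matrix $P$ through a transpose, and the composite of two such moves produces $P^TP$.

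Let $g\in G$ be arbitrary. Since $G=G^T$, we have $g^T\in G$, and therefore $Pg^TP^{-1}\in PGP^{-1}=H$. Because $H=H^T$, its transpose also lies in $H$:
\[
\left(Pg^TP^{-1}\right)^T=(P^{-1})^T g P^T=(P^T)^{-1}gP^T\in H .
\]
Writing $H=PGP^{-1}$, there is some $g'\in G$ with $(P^T)^{-1}gP^T=Pg'P^{-1}$. Rearranging gives
\[
g'=P^{-1}(P^T)^{-1}\,g\,P^TP=(P^TP)^{-1}\,g\,(P^TP).
\]
This shows $Q^{-1}GQ\subseteq G$, where $Q=P^TP$; note that $Q\in\mathrm{SL}(2,\mathbb{C})$ because $\det Q=(\det P)^2=1$.

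The main obstacle is to upgrade this inclusion to the equality $QGQ^{-1}=G$ required by the definition of the normalizer. If $G$ is finite, which is the case relevant for the paper, conjugation is injective, so $|Q^{-1}GQ|=|G|$ and the inclusion is an equality. For possibly infinite $G$, I would instead rerun the same argument with $g\mapsto g^T$ applied in the other order. Starting from $g\in G$, we have $P^{-1}$ mapping $H$ to $G$ and $H=H^T$, $G=G^T$. Set $h=PgP^{-1}\in H$; then $h^T=(P^T)^{-1}g^TP^T\in H$. Hence $g^T\in P^THP^{-T}=P^TPGP^{-1}P^{-T}$, and transpose-closure of $G$ gives $g\in QGQ^{-1}$ (after transposing appropriately). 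This gives the reverse inclusion $G\subseteq QGQ^{-1}$.

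Combining the two inclusions yields $QGQ^{-1}=G$, that is, $P^TP\in\mathcal{N}_{\mathrm{SL}(2,\mathbb{C})}(G)$.
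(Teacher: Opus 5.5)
Your first inclusion is correct. With $Q=P^TP$ you show $Q^{-1}gQ\in G$ for every $g\in G$, and for finite $G$ the cardinality argument upgrades $Q^{-1}GQ\subseteq G$ to equality. That is all the paper ever needs, since the lemma is only applied to $C_n$, $BD_{4n}$, $BT_{24}$, $BO_{48}$ and $BI_{120}$.

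The gap is in the general case. What you call the reverse inclusion, $G\subseteq QGQ^{-1}$, is the same inclusion again: conjugating by $Q$ shows that $Q^{-1}GQ\subseteq G$ is equivalent to $G\subseteq QGQ^{-1}$. Concretely, your second computation gives $g^T\in QGQ^{-1}$ for all $g\in G$. Replacing $g$ by $g^T$, which is what ``transpose-closure of $G$'' buys you, just reproduces your first computation. So for infinite $G$ you have proved only one inclusion. In $\mathrm{SL}(2,\mathbb{C})$ one inclusion does not give normalization in general: for $G=\{\begin{psmallmatrix}1&n\\0&1\end{psmallmatrix}:n\in\mathbb{Z}\}$ and $Q=\operatorname{diag}(2^{-1/2},2^{1/2})$ one has $Q^{-1}GQ\subsetneq G$.

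The missing direction $QGQ^{-1}\subseteq G$ needs one ingredient you never use: $Q=P^TP$ is symmetric. Suppose $g^T=Qg''Q^{-1}$ with $g''\in G$. Transposing and using $Q^T=Q$ gives $g=(Q^{-1})^T(g'')^TQ^T=Q^{-1}(g'')^TQ$. Hence $QgQ^{-1}=(g'')^T\in G$ by $G=G^T$.

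The paper avoids the issue altogether by working with equalities of sets rather than memberships of elements. It writes $H=H^T=(PGP^{-1})^T=(P^T)^{-1}G^TP^T=(P^T)^{-1}GP^T$. Comparing with $H=PGP^{-1}$ gives $(P^TP)^{-1}G(P^TP)=G$ in one line, with both inclusions at once and no finiteness needed.
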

\begin{proof}
    Note that $G=G^T$, $H=H^T$ and $H^T=(P^{-1})^TG^TP^T=H$.
    We have $H^T=(P^{-1})^TG^TP^T=(P^{-1})^TGP^T$ by $G=G^T$. Further, we have $(P^{-1})^TGP^T=PGP^{-1}$ by $H=H^T$.
    Thus $G=(P^TP)^{-1}G(P^TP)$, i.e., $P^{T}P\in \mathcal{N}_{\mathrm{SL}(2,\mathbb{C})}(G)$.
\end{proof}

\begin{theorem}\label{Schur}(Schur's theorem \cite{Algebra})
A group is said to be periodic if each of its elements has finite order.
Every finitely generated periodic subgroup $G$ of $\mathrm{SL}(2,\mathbb{C})$ is finite.
\end{theorem}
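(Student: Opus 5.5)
The plan is to prove this via the classical Burnside–Schur route: reduce periodicity to a statement about eigenvalues, use it to produce a bound on the orders of elements, and then conclude finiteness by a trace argument. Let $G=\langle g_1,\dots,g_m\rangle\subseteq \mathrm{SL}(2,\mathbb{C})$ be periodic.

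\textbf{Step 1: reduce to matrices over a finitely generated field.} Let $K$ be the field generated over $\mathbb{Q}$ by the entries of $g_1,\dots,g_m$ and their inverses. Then $K$ is finitely generated, and $G\subseteq \mathrm{SL}(2,K)$. For $g\in G$ of finite order, the eigenvalues of $g$ are roots of unity $\lambda,\lambda^{-1}$, and they satisfy a quadratic over $K$. Since $K$ is finitely generated, its algebraic closure in $K$ is a number field $L$. Hence $\lambda$ lies in an extension of $L$ of degree at most $2$, and only finitely many roots of unity have bounded degree over $\mathbb{Q}$. So the traces $\mathrm{tr}(g)=\lambda+\lambda^{-1}$ take only finitely many values as $g$ ranges over $G$.

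\textbf{Step 2: from finitely many traces to finiteness.} Split on whether the natural action of $G$ on $\mathbb{C}^2$ is irreducible.
\begin{itemize}
\item \emph{Irreducible case.} By Burnside's theorem, $G$ spans $M_2(\mathbb{C})$ linearly, so we may choose $h_1,\dots,h_4\in G$ forming a basis. The trace form $(X,Y)\mapsto \mathrm{tr}(XY)$ is nondegenerate, so every $g\in G$ is determined by the four numbers $\mathrm{tr}(gh_i)$. Each of these lies in the finite trace set from Step 1, because $gh_i\in G$. Hence $G$ is finite.
\item \emph{Reducible case.} Conjugate so that $G$ is upper triangular. The diagonal entries are roots of unity $\lambda^{\pm1}$ drawn from a finite set, so the diagonal part of $G$ is a finite group $D$. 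The kernel $U$ of the map $G\to D$ consists of unipotent matrices $\begin{pmatrix}1&x\\0&1\end{pmatrix}$. Such a matrix has finite order in characteristic $0$ only if $x=0$, so $U$ is trivial. Therefore $G\cong D$ is finite.
\end{itemize}

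\textbf{Main obstacle.} The main obstacle is Step 1, specifically justifying the uniform degree bound. The key fact is that a root of unity lying in a quadratic extension of a finitely generated field $K$ is algebraic over $\mathbb{Q}$. It therefore lies in a quadratic extension of $L$, the algebraic closure of $\mathbb{Q}$ inside $K$, and $L$ is a number field. The bound $\varphi(n)\le 2[L:\mathbb{Q}]$ then limits $n$ to finitely many values. Alternatively, since this is the cited Schur theorem, it can simply be invoked from \cite{Algebra}, with the above as the underlying argument.
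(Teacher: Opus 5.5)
The paper gives no proof of this statement. It imports Schur's theorem from the cited algebra text and uses it only as a black box, to conclude that each $\mathbf{H}^{(m)}_f$ is finite.

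Your proposal is a correct, self-contained proof. It is the classical Burnside--Schur argument specialized to $n=2$:
\begin{enumerate}
\item Bound the degree of the eigenvalues over the number field $L=\overline{\mathbb{Q}}\cap K$, which leaves only finitely many possible traces.
\item In the irreducible case, use Burnside's theorem together with nondegeneracy of the trace form.
\item In the reducible case, use that periodic unipotents are trivial in characteristic $0$.
\end{enumerate}
Compared with the bare citation, your route relies on two standard facts: the relative algebraic closure of $\mathbb{Q}$ in a finitely generated field is a finite extension, and Burnside's theorem. In exchange it gives an explicit bound $|G|\le |T|^4$, where $T$ is the trace set. In dimension $2$, the reduction to irreducible diagonal blocks needed for general $\mathrm{GL}(n,\mathbb{C})$ collapses to your one-line unipotent argument.

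Two small tightenings:
\begin{itemize}
\item The step ``$\lambda$ lies in a quadratic extension of $L$'' is best justified directly. The trace $\mathrm{tr}(g)=\lambda+\lambda^{-1}$ lies in $K$ and is algebraic over $\mathbb{Q}$, hence lies in $L$. So the characteristic polynomial $x^2-\mathrm{tr}(g)x+1$ is already in $L[x]$.
\item The reducible case does not need Step~1. After triangularizing, $g\mapsto g_{11}$ is an injective homomorphism into $\mathbb{C}^{\times}$, because its kernel consists of periodic unipotents. So $G$ is a finitely generated abelian torsion group and hence finite.
\end{itemize}
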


\subsection{Gadgets, holographic transformations, and polynomial interpolation}

\subsubsection{Gadget construction}

One basic tool throughout the paper is gadget construction. We say a signature $f$ is realizable or constructible
from a signature set $\mathcal{F}$ if there is a gadget with some dangling edges
such that each vertex is assigned a signature from $\mathcal{F}$, and the
resulting graph, when viewed as a black-box signature with inputs on the
dangling edges, is exactly $f$. If $f$ is realizable from a set $\mathcal{F}$,
then we can freely add $f$ into $\mathcal{F}$ while preserving the
complexity. This notion is defined by an $\mathcal{F}$-gate.
An $\mathcal{F}$-gate is similar to a signature grid $(G,\pi)$ for Holant$(\mathcal{F})$
except that $G=(V,E,D)$ is a graph with internal edges $E$ and dangling edges $D$.
The dangling edges $D$ define input variables for the $\mathcal{F}$-gate. We denote
the regular edges in $E$ by $1,2,...,m$ and the dangling edges in $D$ by $m+1,...,m+n$.
Then the $\mathcal{F}$-gate defines a signature $f$
\[
f(y_{1},...,y_{n})=\sum_{\sigma:E\rightarrow \{0,1\}}\prod_{v\in V}f_{v}(\hat{\sigma}|_{E(v)}),
\]
where $(y_{1},...,y_{n})\in \{0,1\}^{n}$ is an assignment on the
dangling edges, $\hat{\sigma}$ is the extension of $\sigma$ on $E$
by the assignment $(y_{1},...,y_{n})$, and $f_{v}$ is the signature assigned
at each vertex $v\in V$ (see Figure~\ref{fig:Fgate}). This signature $f$ is called the signature of the
$\mathcal{F}$-gate.

{\tiny
\begin{figure}[tbp]
 \centering
 \begin{tikzpicture}[scale=0.95,transform shape,node distance=28,semithick]
  \node[external]  (0)                     {};
  \node[internal]  (1) [below right of=0]  {};
  \node[external]  (2) [below left  of=1]  {};
  \node[internal]  (3) [above       of=1]  {};
  \node[internal]  (4) [right       of=3]  {};
  \node[internal]  (5) [below       of=4]  {};
  \node[internal]  (6) [below right of=5]  {};
  \node[internal]  (7) [right       of=6]  {};
  \node[internal]  (8) [below       of=6]  {};
  \node[internal]  (9) [below       of=8]  {};
  \node[internal] (10) [right       of=9]  {};
  \node[internal] (11) [above right of=6]  {};
  \node[internal] (12) [below left  of=8]  {};
  \node[internal] (13) [left        of=8]  {};
  \node[internal] (14) [below left  of=13] {};
  \node[external] (15) [left        of=14] {};
  \node[internal] (16) [below left  of=5]  {};
  \path let
         \p1 = (15),
         \p2 = (0)
        in
         node[external] (17) at (\x1, \y2) {};
  \path let
         \p1 = (15),
         \p2 = (2)
        in
         node[external] (18) at (\x1, \y2) {};
  \node[external] (19) [right of=7]  {};
  \node[external] (20) [right of=10] {};
  \path (1) edge                             (5)
            edge[bend left]                 (11)
            edge[bend right]                (13)
            edge node[near start] (e1) {}   (17)
            edge node[near start] (e2) {}   (18)
        (3) edge                             (4)
        (4) edge[out=-45,in=45]              (8)
        (5) edge[bend right, looseness=0.5] (13)
            edge[bend right, looseness=0.5]  (6)
        (6) edge[bend left]                  (8)
            edge[bend left]                  (7)
            edge[bend left]                 (14)
        (7) edge node[near start] (e3) {}   (19)
       (10) edge[bend right, looseness=0.5] (12)
            edge[bend left,  looseness=0.5] (11)
            edge node[near start] (e4) {}   (20)
       (12) edge[bend left]                 (16)
       (14) edge node[near start] (e5) {}   (15)
            edge[bend right]                (12)
       (16) edge[bend left,  looseness=0.5]  (9)
            edge[bend right, looseness=0.5]  (3);
  \begin{pgfonlayer}{background}
   \node[draw=\borderColor,thick,rounded corners,fit = (3) (4) (9) (e1) (e2) (e3) (e4) (e5),inner sep=0pt,transform shape=false] {};
  \end{pgfonlayer}
 \end{tikzpicture}
\caption{ An $\mathcal{F}$-gate with 5 dangling edges.}
 \label{fig:Fgate}
\end{figure}
}

\begin{lemma}
    (\cite{Mainbook}) If $f$ is realizable from a set $\mathcal{F}$, then $\mathrm{Holant}(f, \mathcal{F}) \equiv_T\mathrm{Holant}(\mathcal{F})$.
\end{lemma}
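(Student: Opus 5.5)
We need to prove the final statement: if $f$ is realizable from $\mathcal{F}$, then $\mathrm{Holant}(f,\mathcal{F})\equiv_T\mathrm{Holant}(\mathcal{F})$. The plan is to prove the two Turing reductions separately. One direction is immediate; the other replaces each occurrence of $f$ by a copy of its gadget and checks that the Holant value is preserved.

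The direction $\mathrm{Holant}(\mathcal{F})\le_T\mathrm{Holant}(f,\mathcal{F})$ is trivial. Every signature grid over $\mathcal{F}$ is already a signature grid over $\mathcal{F}\cup\{f\}$, so the oracle answers it directly.

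For $\mathrm{Holant}(f,\mathcal{F})\le_T\mathrm{Holant}(\mathcal{F})$, I would proceed as follows.
\begin{enumerate}
\item Fix an $\mathcal{F}$-gate $\Gamma=(V_\Gamma,E_\Gamma,D_\Gamma)$ realizing $f$. Its dangling edges are identified in order with the $n$ input variables of $f$, and $\Gamma$ is a constant, independent of the input instance.
\item Given a grid $\Omega=(G,\pi)$ over $\mathcal{F}\cup\{f\}$, build $\Omega'$ by replacing each vertex $v$ labelled $f$ with a fresh disjoint copy $\Gamma_v$. The $i$-th edge incident to $v$ is attached to the $i$-th dangling edge of $\Gamma_v$: the dangling edge and the original edge are merged, so their common endpoint outside $\Gamma_v$ becomes the neighbour.
\item Each copy has constant size, so $\Omega'$ is built in time linear in $|\Omega|$, and every vertex of $\Omega'$ carries a signature from $\mathcal{F}$.
\end{enumerate}

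It remains to show $\mathrm{Holant}(\Omega';\mathcal{F})=\mathrm{Holant}(\Omega;f,\mathcal{F})$. The edge set of $\Omega'$ is the disjoint union of the original edges $E(G)$ and the internal edges $E_{\Gamma_v}$ of each copy. So every assignment $\sigma'$ of $\Omega'$ splits uniquely into a pair $(\sigma,(\tau_v)_v)$, with $\sigma$ on $E(G)$ and $\tau_v$ on $E_{\Gamma_v}$.

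Grouping the sum over $\sigma'$ by $\sigma$ gives
\[
\mathrm{Holant}_{\Omega'}=\sum_{\sigma}\prod_{u\text{ not replaced}}f_u(\sigma|_{E(u)})\prod_{v\text{ replaced}}\Bigl(\sum_{\tau_v}\prod_{w\in V_{\Gamma_v}}f_w(\widehat{\tau_v}|_{E(w)})\Bigr).
\]
Here $\widehat{\tau_v}$ extends $\tau_v$ by the values of $\sigma$ on the edges incident to $v$. The factorization is valid because distinct copies share no internal edges, so the inner sums are independent.

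By the definition of the signature of an $\mathcal{F}$-gate, each inner sum equals $f(\sigma|_{E(v)})$. This is exactly the factor $v$ contributes in $\Omega$, so the two Holant values coincide and one oracle call suffices.

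The only delicate point is bookkeeping. The order of the dangling edges must match the variable order of $f$ at each vertex, which the labelling $\pi$ specifies. If one allows dangling edges that are self-loops or edges between dangling edges, they must be treated as a single edge when merging; under the gadget definition given above, with dangling edges incident to a single internal vertex, this does not arise. No genuine obstacle is expected.

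If the reduction must respect bipartiteness, as in $\mathrm{Holant}(\mathcal{F}\mid\mathcal{G})$, the same substitution works provided $\Gamma$ is a bipartite gate whose dangling edges attach on the correct side.
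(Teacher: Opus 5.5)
Your proof is correct. The trivial inclusion handles one direction. For the other, you replace each $f$-vertex by a disjoint copy of the realizing $\mathcal{F}$-gate and factor the sum over internal-edge assignments copy by copy, which needs only one oracle call. The paper gives no proof of its own and cites \cite{Mainbook}, where the lemma rests on exactly this gadget-substitution argument, so your route is the standard one.
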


The following gadgets will be used repeatedly in this paper.

\paragraph{Chains}

For two binary signatures $g,h$, by connecting input $x_2$ of $g$ to input $x_1$ of $h$, we construct a binary signature with the signature matrix $M(g)M(h)$.

For two 4-ary signatures $f_1,f_2$, with the signature matrix $M_{x_i,x_j,x_k,x_l}(f_1)$ and $M_{x_s,x_t,x_v,x_u}(f_2)$, where ($i,j,k,l$) and $(s,t,v,u)$ are permutations of $(1, 2, 3, 4)$, by connecting $x_k$ with $x_s$, $x_l$ with $x_t$, we construct a 4-ary signature with signature matrix $M_{x_i,x_j,x_k,x_l}(f_1)M_{x_s,x_t,x_v,x_u}(f_2)$ by matrix product with row index $x_i,x_j$ and column index $x_v,x_u$.

\paragraph{Loops and pinning}

Let $f$ be a signature of arity $k\ge3$, $h$ be a binary signature, and $u$ be a unary signature.
We use $f_{h}^{ij}$ to denote the signature constructed by connecting variables $x_i$,$x_j$ of $f$ with, respectively, variables $x_1$, $x_2$ of $h$ using $=_2$. $S(f_h)$ denotes the set of all signatures of arity $k-2$ we can construct by adding a loop to $f$ using $h$, i.e. \[S(f_h)=\{f_{h}^{ij}|1\le i,j\le k, i\ne j\}.\]
We use $f_u^{i}$ to denote the signature constructed by connecting input $x_i$ of $f$ with $u$ using $=_2$. Moreover, \[S(f_u)=\{f_u^i|1\le i \le k\}.\]
Moreover, for a binary signature set $\mathcal{B}$, \[S(f_\mathcal{B}^{ij})=\{f_h^{ij}|h\in \mathcal{B} \},\]
\[S(f_\mathcal{B})=\bigcup_{h\in\mathcal{B}}S(f_h).\]

\begin{lemma}\label{solveeightvertexform}
    Let $f$ be a 4-ary signature. If $S(f_{\ne_2})\subseteq\{(0,b,c,0)|b,c\in\mathbb{C}\}$, then $f$ has eight-vertex form.
\end{lemma}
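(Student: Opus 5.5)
We compute the entries of $f^{ij}_{\ne_2}$ directly and read off the eight forced vanishings. Connecting $x_i,x_j$ of $f$ to the two inputs of $\ne_2$ sums over $x_i\ne x_j$. Hence, for the two remaining variables $x_k,x_l$ with $k<l$,
\[
f^{ij}_{\ne_2}(x_k,x_l)=f(\ldots,x_i=0,x_j=1,\ldots)+f(\ldots,x_i=1,x_j=0,\ldots).
\]
The hypothesis says the $00$ and $11$ entries of every such binary signature vanish. Since $\ne_2$ is symmetric, the order of $i,j$ is irrelevant, so we get one pair of equations for each of the six unordered pairs $\{i,j\}$.

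Write $w(\alpha)$ for the Hamming weight of $\alpha\in\{0,1\}^4$. The $00$ entry of $f^{ij}_{\ne_2}$ sums the two weight-one inputs supported on $\{i,j\}$. Let $a_t$ denote $f$ on the weight-one input with a $1$ in position $t$. The $00$ conditions then read
\[
a_i+a_j=0\quad\text{for all } 1\le i<j\le 4 .
\]
These are six linear equations in four unknowns. From $a_1+a_2=a_1+a_3=a_2+a_3=0$ we get $a_1=-a_2=a_3=-a_1$, so $a_1=0$, and then $a_2=a_3=0$; the relation $a_1+a_4=0$ gives $a_4=0$. The key point is that the complete graph $K_4$ contains a triangle, an odd cycle, so the only solution of this system is the trivial one.

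The $11$ conditions are handled the same way. Let $b_t$ denote $f$ on the weight-three input with a $0$ in position $t$. The $11$ entry of $f^{ij}_{\ne_2}$ sums the two weight-three inputs whose zero lies in $\{i,j\}$, so the conditions read $b_i+b_j=0$ for all pairs. The same triangle argument forces every $b_t=0$.

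The vanishing of all $a_t$ and $b_t$ is exactly the list $f_{0001}=f_{0010}=f_{0100}=f_{1000}=f_{1110}=f_{1101}=f_{1011}=f_{0111}=0$ defining eight-vertex form, which proves the lemma.

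The only point needing care is the bookkeeping that each pinned pair $\{i,j\}$ yields precisely the sum of the two entries of weight one (respectively three) supported on that pair, together with the connection convention through $=_2$. Given that, no further obstacle arises.
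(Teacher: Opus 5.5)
Your proof is correct and follows essentially the same route as the paper: the paper lists the twelve linear equations given by the $00$ and $11$ entries of the six loops $f^{ij}_{\ne_2}$, and asserts that they force every weight-one and weight-three entry to vanish. Your triangle (odd-cycle) argument makes explicit the linear-algebra step that the paper leaves to the reader.
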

\begin{proof}
     We have
\[
\left\{\begin{matrix}
 f_{0100}+f_{1000}=f_{0111}+f_{1011}=0\\
 f_{0001}+f_{0010}=f_{1101}+f_{1110}=0\\
 f_{0010}+f_{1000}=f_{0111}+f_{1101}=0\\
 f_{0001}+f_{0100}=f_{1011}+f_{1110}=0\\
 f_{0001}+f_{1000}=f_{0111}+f_{1110}=0\\
 f_{0010}+f_{0100}=f_{1011}+f_{1101}=0
\end{matrix}\right.\ ,
\]which implies that $f$ has eight-vertex form.
\end{proof}
\paragraph{Binary modifications}

A binary modification to input $x_i$ of $f$ using the binary signature $h$ means connecting input $x_i$ of $f$ to input $x_2$ of $h$.

Thus, applying matrices $L$ and $R$ to the first and second inputs of a binary signature with matrix $M$ gives $LMR^T$. In particular, conjugation by $L$ uses $L$ and $(L^{-1})^T$ on the two inputs.

Permuting inputs and applying invertible matrices separately to them preserve membership in $\langle\mathcal T\rangle$: apply each matrix within its tensor factor, and apply the inverses for the converse.

\subsubsection{Holographic transformations}

For a general graph, we can always transform it into a bipartite graph
while preserving the Holant value. For each edge in the graph, we replace
it by a path of length two. Each new vertex is assigned the binary
\textsc{Equality} signature $(=_{2})$. Hence, we have Holant$(\mathcal{F})
\equiv_{T}$Holant$(=_{2}|\mathcal{F})$.

For a matrix $T\in {\rm GL}(2,\mathbb{C})$ and a signature set $\mathcal{F}$,
{we write $Tf=T^{\otimes n}f$ as the transformed signature. Then we}
define $T \mathcal{F} = \{Tf \mid f\in\mathcal{F}\}$,
and similarly for $\mathcal{F} T$.
Whenever we write $T^{\otimes n} f$ or $T \mathcal{F}$,
we view the signatures as column vectors;
similarly for $f T^{\otimes n} $ or $\mathcal{F} T$ as row vectors.
Then we introduce holographic transformations for bipartite Holant problems Holant($\mathcal{F}|\mathcal{G}$).
The holographic transformation defined by $T$ is the following operation:
Given a signature grid $\Omega = (H, \pi)$ of Holant$({\mathcal{F}}|{\mathcal{G}})$,
for the same bipartite graph $H$,
we obtain a new grid $\Omega' = (H, \pi')$ of Holant$({{ \mathcal{F}T}|T^{-1}\mathcal{G} })$ by replacing each signature in
$\mathcal{F}$ or $\mathcal{G}$ with the corresponding signature in $\mathcal{F} T$  or $T^{-1} \mathcal{G}$ .
The following theorem shows that
an invertible holographic transformation does not change the complexity of the Holant problem in the bipartite setting.
\begin{theorem}\label{Holtrans}
(\cite{Val08}) For any $T \in  {\rm GL}(2,\mathbb{C})$,
\[
{\rm Holant}_{\Omega}(\mathcal{F}|\mathcal{G}) = {\rm Holant}_{\Omega'}(\mathcal{F}T|T^{-1}\mathcal{G}).
\]
\end{theorem}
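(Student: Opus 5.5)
The plan is to reduce the statement to a bookkeeping identity. Both sides are finite sums over edge assignments of the same bipartite graph $H=(U,V,E)$. The identity then follows from the fact that inserting $TT^{-1}=I$ on every edge leaves each edge's contribution unchanged.

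First I would rewrite $\mathrm{Holant}_\Omega(\mathcal F|\mathcal G)$ as a tensor contraction. For each $u\in U$ with signature $f_u$ of arity $\deg(u)$, view $f_u$ as a row covector indexed by $\{0,1\}^{\deg(u)}$. For each $v\in V$ with signature $g_v$, view $g_v$ as a column vector. Since $H$ is bipartite, each edge $e=(u,v)$ joins exactly one input of some $f_u$ to exactly one input of some $g_v$. The sum $\sum_{\sigma:E\to\{0,1\}}\prod_u f_u(\sigma|_{E(u)})\prod_v g_v(\sigma|_{E(v)})$ is then the full contraction
\[
\Bigl(\bigotimes_{u\in U} f_u\Bigr)\,\Pi\,\Bigl(\bigotimes_{v\in V} g_v\Bigr),
\]
where $\Pi$ is the permutation matrix that matches the tensor factors of the left side, ordered by $U$-incidences, with those of the right side, ordered by $V$-incidences, according to $E$. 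Equivalently, the contraction places the identity matrix $I_2$ on each edge index.

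Next, on each edge I would substitute $I_2=TT^{-1}$, with $T$ acting on the $U$-side index and $T^{-1}$ on the $V$-side index. This is valid because $T\in\mathrm{GL}(2,\mathbb C)$. Grouping all the $T$ factors at vertex $u$ gives $f_uT^{\otimes\deg(u)}$, which is $f_uT\in\mathcal FT$. Grouping all the $T^{-1}$ factors at $v$ gives $(T^{-1})^{\otimes\deg(v)}g_v=T^{-1}g_v\in T^{-1}\mathcal G$. The needed algebraic fact is that permuting tensor factors commutes with applying the same matrix to every factor: $\Pi (T^{-1})^{\otimes|E|}=(T^{-1})^{\otimes|E|}\Pi$, and similarly for $T$. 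This holds because every edge carries the same $2\times2$ matrix. The contraction is therefore equal to
\[
\Bigl(\bigotimes_u f_uT^{\otimes \deg u}\Bigr)\Pi\Bigl(\bigotimes_v (T^{-1})^{\otimes\deg v}g_v\Bigr),
\]
which is exactly $\mathrm{Holant}_{\Omega'}(\mathcal FT|T^{-1}\mathcal G)$ on the same graph $H$.

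The only point that needs care is the commutation of $\Pi$ with the uniform tensor power. To handle it cleanly I would avoid global matrices and write the argument edge-locally in index notation. The Holant is $\sum$ over edge labels $a_e$ of $\prod_u f_u(a_{E(u)})\prod_v g_v(a_{E(v)})$. Replace this by a sum over pairs of labels $(a_e,b_e)$ weighted by $\prod_e\sum_{c_e}T_{a_e c_e}(T^{-1})_{c_e b_e}$, which equals $\prod_e\delta_{a_eb_e}$. Then regroup the sum by vertices. This regrouping is a finite rearrangement of sums and products, so it needs no analytic justification, and it completes the proof.
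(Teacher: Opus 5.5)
Your proof is correct, and it is the standard argument behind Valiant's Holant theorem, which the paper does not reprove but simply cites: insert $TT^{-1}$ on every edge of the bipartite graph and regroup the factors vertex by vertex, exactly as in your index-notation version. You also match the paper's conventions correctly, reading $\mathcal{F}T$ as row vectors $f_uT^{\otimes \deg u}$ and $T^{-1}\mathcal{G}$ as column vectors $(T^{-1})^{\otimes \deg v}g_v$.
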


For a binary signature $h$ and $h'=P^{\otimes2}h$ where $P\in\mathrm{GL}(2,\mathbb{C})$, we have $M(h')=PM(h)P^T$.

We often use
$Z=\frac{1}{\sqrt{2}}\begin{pmatrix}
1&1\\ \frak{i} & -\frak{i}
\end{pmatrix}$, $H=\frac{1}{\sqrt{2}}\begin{pmatrix}
1&1\\ 1 & -1
\end{pmatrix}$
and
$X=\begin{pmatrix}
0&\frak{i}\\ \frak{i} & 0
\end{pmatrix}$ for holographic transformations in the following.

\subsubsection{Polynomial interpolation}

Polynomial interpolation is a powerful technique to prove \#P-hardness for counting problems. Now we
give some important reductions by interpolation.
\begin{lemma}\label{poly=_2}(\cite{Geneq2})
    Let $h$ be a nondegenerate binary signature. Then for any signature set $\mathcal{F}$, we have
       \[ \mathrm{Holant}(=_2|\mathcal{F},h,=_2)\le_T\mathrm{Holant}(=_2|\mathcal{F},h).\]
\end{lemma}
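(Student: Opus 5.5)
The plan is to simulate $=_2$ on the right-hand side by interpolating from powers of $h$. Since the left side of $\mathrm{Holant}(=_2|\mathcal{F},h)$ consists only of $=_2$ vertices, a chain of $k$ copies of $h$ joined through $=_2$ vertices realizes a right-side binary signature with matrix $M(h)^k$. Let $\Omega$ be an instance of $\mathrm{Holant}(=_2|\mathcal{F},h,=_2)$. Suppose $=_2$ occurs on the right side $m$ times. Each such occurrence sits between two left-side $=_2$ vertices, so after contracting, each occurrence is an edge labelled by $I$. We replace each of these $m$ edges by a chain realizing $M(h)^k$, obtaining instances $\Omega_k$ for $k=0,1,\ldots,m$, or more $k$ if needed.

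\textbf{Diagonalizable case.} Put $M(h)$ in Jordan form, $M(h)=P\,\mathrm{diag}(\lambda_1,\lambda_2)P^{-1}$ with $\lambda_1\lambda_2\neq 0$ by nondegeneracy. Then
\[
M(h)^k=\lambda_1^k\,P E_{11}P^{-1}+\lambda_2^k\,P E_{22}P^{-1}.
\]
Expanding $\mathrm{Holant}(\Omega_k)$ multilinearly over the $m$ edges gives
\[
\mathrm{Holant}(\Omega_k)=\sum_{j=0}^m c_j\,\lambda_1^{kj}\lambda_2^{k(m-j)},
\]
where $c_j$ is the sum over placements in which $j$ edges carry $PE_{11}P^{-1}$ and the rest carry $PE_{22}P^{-1}$. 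The target is $\mathrm{Holant}(\Omega)=\sum_j c_j$, the $k=0$ term. Dividing by $\lambda_2^{km}$ yields a Vandermonde system in $x_j=(\lambda_1/\lambda_2)^j$. If $\lambda_1/\lambda_2$ is not a root of unity, the $x_j$ are distinct, so the system is solvable and we recover $\sum_j c_j$. We compute $\lambda_1,\lambda_2$ from $h$ in advance, which is fine because $h$ is fixed.

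\textbf{Non-diagonalizable case.} If $M(h)=P\begin{pmatrix}\lambda&1\\0&\lambda\end{pmatrix}P^{-1}$, then
\[
M(h)^k=\lambda^k\bigl(I+k\lambda^{-1}N\bigr),\qquad N=P\begin{pmatrix}0&1\\0&0\end{pmatrix}P^{-1}.
\]
So $\lambda^{-km}\mathrm{Holant}(\Omega_k)$ is a polynomial in $k$ of degree at most $m$. Evaluating it at $k=1,\ldots,m+1$ determines the polynomial, and its value at $k=0$ is the required quantity.

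\textbf{Finite-order case (main obstacle).} This is when $\lambda_1/\lambda_2$ is a root of unity, say of order $n$, with $M(h)$ diagonalizable. Then $M(h)^n$ is a scalar multiple of $I$: $M(h)^n=\lambda_1^n I$. A chain of $n$ copies of $h$ therefore realizes $=_2$ up to the nonzero scalar $\lambda_1^n$, which is harmless as noted in the preliminaries. In this case a direct gadget replaces interpolation. If $\lambda_1=\lambda_2$ and $M(h)$ is diagonalizable, then $M(h)$ is already a scalar multiple of $I$ and $n=1$.

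Thus the statement follows in every case. The delicate point is making sure the case split (Jordan block, generic ratio, root-of-unity ratio) is exhaustive. The step most in need of checking is that the multilinear expansion is legitimate. This needs the right-side occurrences of $=_2$ to be genuinely edges between left $=_2$ vertices; that holds after merging each such occurrence with its two neighbouring left-side $=_2$ vertices into a single edge of the bipartite structure.
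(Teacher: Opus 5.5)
Your proof is correct and is the standard argument behind the cited result (the paper itself only cites \cite{Geneq2} and gives no proof): put $M(h)$ in Jordan form, realize a scalar multiple of $I$ directly by a finite chain when the eigenvalue ratio is a root of unity, and otherwise interpolate, which is the same machinery the paper uses in Lemmas~\ref{polyatdia} and~\ref{polyatnotdia}. One small fix: $\Omega_0=\Omega$ is the unknown instance, not a query, so query $\Omega_1,\ldots,\Omega_{m+1}$; the matrix $\bigl((\lambda_1/\lambda_2)^{jk}\bigr)$ with $k=1,\ldots,m+1$ is still nonsingular because its nodes $(\lambda_1/\lambda_2)^j$ are distinct and nonzero.
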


\begin{lemma}\label{polyatdia}
    Let $h$ be a binary signature with the signature matrix $M(h)=P\begin{pmatrix}
 1 & 0\\
  0&t
\end{pmatrix}P^{-1}$ where $t$ is not a root of unity and $t\ne0$. Then we have
       \[ \mathrm{Holant}(=_2|\mathcal{F},h,h')\le_T\mathrm{Holant}(=_2|\mathcal{F},h),\]
    where $h'$ has the matrix form $M(h')=P\begin{pmatrix}
 1 & 0\\
  0&\lambda
\end{pmatrix}P^{-1}$ for any $\lambda\in\mathbb{C}$.
\end{lemma}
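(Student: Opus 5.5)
The plan is the standard interpolation argument. Set $D=\mathrm{diag}(1,t)$, so that $M(h)=PDP^{-1}$. For each $s\ge 1$ we build a chain of $s$ copies of $h$. Because the problem is bipartite, consecutive copies of $h$ are joined through a vertex carrying $=_2$, whose matrix is the identity. The chain therefore has matrix
\[
M(h)^s=PD^sP^{-1}=P\,\mathrm{diag}(1,t^s)\,P^{-1},
\]
and it is a legitimate gadget in the $\mathcal{F},h$ side of $\mathrm{Holant}(=_2|\mathcal{F},h)$: its two dangling edges end at copies of $h$, so the gadget can replace an occurrence of $h'$.

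Now fix an instance $\Omega$ of $\mathrm{Holant}(=_2|\mathcal{F},h,h')$ in which $h'$ occurs $n$ times. Write $h'=P\,\mathrm{diag}(1,\lambda)\,P^{-1}$ and expand each occurrence as
\[
P\,(e_0e_0^T+\lambda e_1e_1^T)\,P^{-1}.
\]
Concretely, this means cutting each $h'$ into the "column part" $P$ and the "row part" $P^{-1}$ and summing over the middle index $i\in\{0,1\}$. Let $c_j$ be the total contribution of the assignments in which exactly $j$ of the $n$ occurrences take the middle index $1$. Then
\[
\mathrm{Holant}_\Omega=\sum_{j=0}^n c_j\lambda^j .
\]
Let $\Omega_s$ be $\Omega$ with every $h'$ replaced by the $s$-chain. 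The same decomposition gives
\[
\mathrm{Holant}_{\Omega_s}=\sum_{j=0}^n c_j (t^s)^j ,
\]
and the $c_j$ do not depend on $s$. The rest is a matter of solving for the $c_j$.

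1. Query the oracle on $\Omega_1,\dots,\Omega_{n+1}$. This gives a linear system in $c_0,\dots,c_n$ whose coefficient matrix is the Vandermonde matrix with nodes $t^0,t^1,\dots,t^n$ raised to the powers $s=1,\dots,n+1$.
2. Check that this matrix is nonsingular. It factors as a diagonal matrix with entries $t^j$ times the Vandermonde matrix in the $t^j$. The diagonal entries are nonzero because $t\ne 0$. The nodes $t^j$ are pairwise distinct because $t$ is not a root of unity.
3. Solve for the $c_j$ in polynomial time.
4. Evaluate $\sum_j c_j\lambda^j$ for the given $\lambda$. All instance sizes stay polynomial, since $s\le n+1$.

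The only delicate point is justifying the decomposition of $\mathrm{Holant}_\Omega$ as a polynomial in $\lambda$ with $s$-independent coefficients. The $s$-chain and $h'$ both have the form $P\,\mathrm{diag}(1,\cdot)\,P^{-1}$, so both can be written as the sum of $Pe_0e_0^TP^{-1}$ and a scalar multiple of $Pe_1e_1^TP^{-1}$. Replacing each occurrence by this sum and expanding by multilinearity yields the $c_j$. In the chain, the $=_2$ vertices between consecutive copies of $h$ contribute the identity, which lets the $P^{-1}P$ factors cancel. No nondegeneracy of $P$ beyond invertibility is needed.
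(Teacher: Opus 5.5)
Your proposal is correct and follows essentially the same route as the paper. Both arguments chain $s$ copies of $h$ to obtain $P\,\mathrm{diag}(1,t^s)P^{-1}$, stratify the Holant sum by how many of the diagonal blocks take the $(1,1)$ entry, and invert the resulting Vandermonde system, which is nonsingular because $t\neq 0$ and $t$ is not a root of unity. Your remarks about the intermediate $=_2$ vertices in the bipartite chain and the column-scaled Vandermonde factorization only make explicit details that the paper leaves implicit.
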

\begin{proof}
We construct a sequence of gadgets $h_{s}$ by a chain of $s$ copies of
$h$.
$h_{s}$ has the signature matrix
$M(h_s)=M(h)^s=P\begin{pmatrix}
 1 & 0\\
  0&t^s
\end{pmatrix}P^{-1}$.

Suppose $h'$ appears $m$ times in an instance
$\Omega$ of $\mathrm{Holant}(=_2|\mathcal{F},h,h')$. We replace each
appearance of $h'$ by a copy of the gadget $h_s$ to obtain an instance
$\Omega_{s}$ of $\mathrm{Holant}(=_2|\mathcal{F},h,h_s)$.
 Since $h_s$ has the signature matrix
$M(h_s)$, we can view our construction of $\Omega_{s}$ as replacing $h_s$ by three signatures with matrices $P$, $\begin{pmatrix}
 1 & 0\\
  0&t^s
\end{pmatrix}$
and $P^{-1}$, respectively.

We divide $\Omega_{s}$ into two parts. One part is represented by $\begin{pmatrix}
 1 & 0\\
  0&t^s
\end{pmatrix}^{\otimes m}$. Here we rewrite it
as a column vector $(1,0,0,t^s)^{\otimes m}$.
The other part is the rest of $\Omega_{s}$(including the signatures corresponding to matrices $P$ and $P^{-1}$) and its signature is
represented by $A$, the signature of the remaining open network, expressed as a row vector.
Then, the Holant value of $\Omega_{s}$
is the dot product $\left \langle A,(1,0,0,t^s)^{\otimes m}\right \rangle$,
which is a summation over 2$m$ bits.
We can stratify all Boolean
assignments of these $2m$ bits having a nonzero evaluation of a term in
Holant$_{\Omega_{s}}$ into the following categories:
\begin{itemize}
    \item There are $i$ copies of $\begin{pmatrix}
 1 & 0\\
  0&t^s
\end{pmatrix}$ receiving inputs 00;

    \item There are $j$  copies of $\begin{pmatrix}
 1 & 0\\
  0&t^s
\end{pmatrix}$ receiving inputs 11,
\end{itemize}
where $i+j=m$.
For any assignment in the category with parameter $(i,j)$, the evaluation
of $\begin{pmatrix}
 1 & 0\\
  0&t^s
\end{pmatrix}^{\otimes m}$ is $t^{sj}$. Let $a_{ij}$ be the summation of
values of the part $A$ over all assignments in the category $(i,j)$.
Note that $a_{ij}$ is independent of the value of $s$ since we view the
gadget $\begin{pmatrix}
 1 & 0\\
  0&t^s
\end{pmatrix}$ as a block. We can denote $a_{ij}$ by $a_{j}$. Then, we rewrite
the dot product summation and obtain
\[
{\rm Holant}_{\Omega_{s}}=\left \langle A,(1,0,0,t^s)^{\otimes m}\right \rangle =\sum_{0\leq j \leq m}a_{j}t^{sj}.
\]
Under this stratification, the Holant value of an instance of Holant$(=_2|\mathcal{F},h,h')$ can be expressed as
\[
{\rm Holant}_{\Omega}=\left \langle A,(1,0,0,\lambda)^{\otimes m}\right \rangle =\sum_{0\leq j \leq m}a_{j}\lambda^j.
\]
Since $t$ is not a root of unity and $t\neq 0$, the system of linear equations has a nonsingular Vandermonde matrix.

By querying the oracle for the values of Holant$_{\Omega_{s}}$,
we can solve for the coefficients $a_{j}$ in polynomial
time and obtain the value of $p(\lambda)=\sum_{0\leq j \leq m}a_{j}\lambda^{j}$
for any $\lambda\in\mathbb{C}$. Therefore, we have
\[ \mathrm{Holant}(=_2|\mathcal{F},h,h')\le_T\mathrm{Holant}(=_2|\mathcal{F},h).\]
\end{proof}
\begin{lemma}\label{polyatnotdia}
    Let $h$ be a binary signature with the signature matrix $M(h)=P\begin{pmatrix}
 t & 1\\
  0&t
\end{pmatrix}P^{-1}$ where $t\ne0$. Then we have
       \[ \mathrm{Holant}(=_2|\mathcal{F},h,h')\le_T\mathrm{Holant}(=_2|\mathcal{F},h),\]
    where $h'$ has the matrix form $M(h')=P\begin{pmatrix}
 \lambda & 1\\
  0&\lambda
\end{pmatrix}P^{-1}$ for any $\lambda\in\mathbb{C}$.
\end{lemma}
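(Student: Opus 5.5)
We mirror the proof of Lemma~\ref{polyatdia}, with the Jordan block in place of the diagonal matrix. Let $h_s$ be the chain of $s$ copies of $h$, so
\[
M(h_s)=M(h)^s=P\begin{pmatrix} t & 1\\ 0 & t\end{pmatrix}^sP^{-1}=P\begin{pmatrix} t^s & st^{s-1}\\ 0 & t^s\end{pmatrix}P^{-1}=t^{s}\,P\begin{pmatrix} 1 & s/t\\ 0 & 1\end{pmatrix}P^{-1}.
\]
Since $t\neq0$, the factor $t^s$ is a known nonzero scalar, so up to normalization we may use $J_s=\begin{pmatrix} 1 & s/t\\ 0&1\end{pmatrix}$. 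The target $\begin{pmatrix}\lambda&1\\0&\lambda\end{pmatrix}$ equals $\lambda\begin{pmatrix}1&1/\lambda\\0&1\end{pmatrix}$ when $\lambda\neq0$, and equals $\begin{pmatrix}0&1\\0&0\end{pmatrix}$ when $\lambda=0$. Both have the shape $\begin{pmatrix} a & x\\ 0& a\end{pmatrix}$, so we interpolate over that two-parameter family.

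Given an instance $\Omega$ of $\mathrm{Holant}(=_2|\mathcal{F},h,h')$ in which $h'$ occurs $m$ times, replace each occurrence by $h_s$ to get $\Omega_s$. View each copy of $h_s$ as three signatures with matrices $P$, $J_s$ and $P^{-1}$, and let $A$ be the remaining open network written as a row vector. Then $\mathrm{Holant}_{\Omega_s}=t^{sm}\langle A,(1,s/t,0,1)^{\otimes m}\rangle$. Stratify the $2m$ bits by the number $j$ of middle blocks receiving input $01$; the other nonzero blocks receive $00$ or $11$ and contribute $1$. With $a_j$ the total weight of $A$ in stratum $j$, which is independent of $s$, we get
\[
\mathrm{Holant}_{\Omega_s}=t^{sm}\sum_{j=0}^m a_j (s/t)^j.
\]
For the target signature with middle block $\begin{pmatrix} a & x\\ 0&a\end{pmatrix}$, the same stratification gives $\sum_j a_j x^j a^{m-j}$. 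So it suffices to recover all $a_j$.

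Query the oracle on $\Omega_s$ for $s=1,\dots,m+1$. Dividing by the known $t^{sm}$ gives the values $p(s/t)$ of the polynomial $p(y)=\sum_j a_jy^j$. The points $s/t$ are pairwise distinct because $t\neq0$, so the Vandermonde system is nonsingular and we solve for the $a_j$ in polynomial time. Then we evaluate $\sum_j a_j\lambda^{m-j}$, which is the $\lambda$-coefficient form for $\begin{pmatrix}\lambda&1\\0&\lambda\end{pmatrix}$; this covers every $\lambda$, including $\lambda=0$, where only $j=m$ survives.

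The main obstacle is only notational: we must check that the entries of $J_s$ are distinct as $s$ varies. Unlike Lemma~\ref{polyatdia}, no root-of-unity condition is needed, because the off-diagonal entry grows linearly in $s$. We must also handle the scalar $t^{s}$ and the degenerate case $\lambda=0$, which the direct $\sum_j a_j x^ja^{m-j}$ form above covers.
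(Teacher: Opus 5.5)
Your proposal is correct and follows the paper's argument: chain $s$ copies of $h$, factor out a known nonzero scalar, stratify the assignments by how the Jordan blocks are hit, and solve a nonsingular Vandermonde system at points that are distinct because $t\ne0$. The only difference is cosmetic. You pull out $t^s$ and stratify by the number of $01$ inputs, giving nodes $s/t$ and target $\sum_j a_j\lambda^{m-j}$. The paper instead pulls out $c_s=st^{s-1}$ and stratifies by the number of diagonal inputs, giving nodes $t/s$ and target $\sum_j a_j\lambda^j$.
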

\begin{proof}
    We construct a sequence of gadgets $h_{s}$ by a chain of $s$ copies of $h$.
$h_{s}$ has the signature matrix
$M(h_s)=M(h)^s=c_sP\begin{pmatrix}
 \frac{t}{s} & 1\\
  0& \frac{t}{s}
\end{pmatrix}P^{-1}$ where $c_s=s t^{s-1}\ne0$. As in Lemma~\ref{polyatdia}, let $m$ be the number of occurrences of $h'$ in $\Omega$, replace each occurrence by $h_s$ to obtain $\Omega_s$, and let $A$ denote the remaining open network after factoring out $P$ and $P^{-1}$. The stratification of all Boolean
assignments of these $2m$ bits having a nonzero evaluation of a term in
Holant$_{\Omega_{s}}$ uses the following categories:
\begin{itemize}
    \item There are $i$ copies of $\begin{pmatrix}
 \frac{t}{s} & 1\\
  0& \frac{t}{s}
\end{pmatrix}$ receiving inputs 01;

    \item There are $j$  copies of $\begin{pmatrix}
 \frac{t}{s} & 1\\
  0& \frac{t}{s}
\end{pmatrix}$ receiving inputs 00 and 11,
\end{itemize}
where $i+j=m$.

Then, we rewrite
the dot product summation and obtain
\[
c_s^{-m}{\rm Holant}_{\Omega_{s}}=\left \langle A,(\frac{t}{s},1,0,\frac{t}{s})^{\otimes m}\right \rangle =\sum_{0\leq j \leq m}a_{j}(\frac{t}{s})^{j}.
\]
Under this stratification, the Holant value of an instance of Holant$(=_2|\mathcal{F},h,h')$ can be expressed as
\[
{\rm Holant}_{\Omega}=\left \langle A,(\lambda,1,0,\lambda)^{\otimes m}\right \rangle =\sum_{0\leq j \leq m}a_{j}\lambda^j.
\]
The values $t/s$ for $s=1,\ldots,m+1$ are distinct because $t\ne0$, so the resulting Vandermonde matrix is nonsingular.

By querying the oracle for the values of Holant$_{\Omega_{s}}$ and multiplying each value by $c_s^{-m}$,
we can solve for the coefficients $a_{j}$ in polynomial
time and obtain the value of $p(\lambda)=\sum_{0\leq j \leq m}a_{j}\lambda^{j}$
for any $\lambda\in\mathbb{C}$. Therefore, we have
\[ \mathrm{Holant}(=_2|\mathcal{F},h,h')\le_T\mathrm{Holant}(=_2|\mathcal{F},h).\]
\end{proof}

\begin{corollary}\label{polyto-1}
    Let $h$ be a binary signature with full rank. Then we have
    \begin{equation}\label{reducthionfordegen}
        \mathrm{Holant}(=_2|\mathcal{F},h,h^{-1})\le_T\mathrm{Holant}(=_2|\mathcal{F},h),
    \end{equation}where $h^{-1}$ has the signature matrix $M(h^{-1})=M(h)^{-1}$.
\end{corollary}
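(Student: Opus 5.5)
We will prove Corollary~\ref{polyto-1} by passing to the Jordan normal form of $M(h)$ and applying Lemma~\ref{polyatdia} or Lemma~\ref{polyatnotdia}, with a separate direct argument when neither interpolation lemma applies. Since $h$ has full rank, $M(h)$ is invertible, so $M(h)=PJP^{-1}$ with $P\in\mathrm{GL}(2,\mathbb{C})$ and $J$ either diagonal or a single Jordan block with nonzero eigenvalue(s). Multiplying $h$ by a nonzero scalar does not change complexity, so we normalize $J$ accordingly. We split into three cases.

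\textbf{Case 1: $M(h)$ is diagonalizable with eigenvalue ratio not a root of unity.} After scaling, $M(h)=P\,\mathrm{diag}(1,t)\,P^{-1}$ with $t\ne0$ not a root of unity. Lemma~\ref{polyatdia} with $\lambda=t^{-1}$ realizes a binary signature with matrix $P\,\mathrm{diag}(1,t^{-1})P^{-1}$. This is exactly $M(h)^{-1}$, so $h^{-1}$ is obtained.

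\textbf{Case 2: $M(h)$ is a nontrivial Jordan block.} After scaling, $M(h)=P\begin{pmatrix}t&1\\0&t\end{pmatrix}P^{-1}$ with $t\ne0$. We have
\[
\begin{pmatrix}t&1\\0&t\end{pmatrix}^{-1}=\begin{pmatrix}t^{-1}&-t^{-2}\\0&t^{-1}\end{pmatrix}=-t^{-2}\begin{pmatrix}-t&1\\0&-t\end{pmatrix}.
\]
So $M(h)^{-1}$ is a nonzero scalar multiple of $P\begin{pmatrix}\lambda&1\\0&\lambda\end{pmatrix}P^{-1}$ with $\lambda=-t$. Lemma~\ref{polyatnotdia} realizes this signature, and scaling gives $h^{-1}$.

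\textbf{Case 3: $M(h)$ is diagonalizable with eigenvalue ratio a root of unity.} After scaling, $M(h)=P\,\mathrm{diag}(1,t)\,P^{-1}$ with $t^n=1$ for some $n\ge1$. Interpolation is unavailable here, so we realize $h^{-1}$ directly by a chain gadget. A chain of $n-1$ copies of $h$ has matrix $M(h)^{n-1}=P\,\mathrm{diag}(1,t^{n-1})P^{-1}=P\,\mathrm{diag}(1,t^{-1})P^{-1}$. This equals $M(h)^{-1}$ up to the scalar used in the normalization. When $n=1$, $M(h)$ is a scalar multiple of $I$, and $h^{-1}$ is a scalar multiple of $h$ itself.

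The main obstacle is keeping the normalization straight. Scalars are harmless only because each occurrence contributes a known nonzero factor, and the chain and interpolation gadgets must be placed as edges between the two sides of the bipartite $(=_2\mid\cdot)$ structure. This follows the same convention already used in the proofs of Lemma~\ref{polyatdia} and Lemma~\ref{polyatnotdia}, so no additional work beyond those lemmas is needed.
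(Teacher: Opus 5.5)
Your proof is correct and matches the paper's. The paper also realizes the inverse by a chain of $r-1$ copies of $h$ when $M(h)$ has finite projective order $r$ (your Case~3, using $h$ itself when $r=1$); otherwise it uses Jordan form with Lemma~\ref{polyatdia} at $\lambda=t^{-1}$ or Lemma~\ref{polyatnotdia} at $\lambda=-t$, and your ordering of the case split (Jordan type first, then root of unity) produces exactly the same three cases.
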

\begin{proof}
If $M(h)$ has finite projective order $r$, write $M(h)^r=cI$ with $c\ne0$. Then $M(h)^{r-1}=cM(h)^{-1}$, so a chain realizes the inverse up to a known nonzero scalar. For $r=1$, use $h$ itself, which is a scalar multiple of its inverse.

Otherwise, use Jordan normal form. In the diagonalizable case, Lemma~\ref{polyatdia} gives the inverse by taking $\lambda=t^{-1}$. In the nondiagonalizable case, Lemma~\ref{polyatnotdia} with $\lambda=-t$ gives a scalar multiple of the inverse, since $\begin{psmallmatrix}t&1\\0&t\end{psmallmatrix}^{-1}=-t^{-2}\begin{psmallmatrix}-t&1\\0&-t\end{psmallmatrix}$. The known scalar factors are divided out of the queried Holant values.
\end{proof}
\begin{corollary}\label{polytodegen}
    Let $h$ be a nondegenerate binary signature with infinite projective order. Then we have
    \begin{equation}\label{reducthionfordegen-2}
        \mathrm{Holant}(=_2|\mathcal{F},h,h')\le_T\mathrm{Holant}(=_2|\mathcal{F},h),
    \end{equation}where $h'$ is a degenerate binary signature.
\end{corollary}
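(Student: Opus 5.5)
The plan is to put $M(h)$ into Jordan normal form and apply whichever of Lemma~\ref{polyatdia} or Lemma~\ref{polyatnotdia} fits, choosing the free parameter $\lambda$ so that the interpolated matrix has rank one. Since $h$ is nondegenerate, $M(h)$ is invertible, so both eigenvalues are nonzero. I split into two cases according to whether $M(h)$ is diagonalizable.

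\emph{Diagonalizable case.} Write $M(h)=P\,\mathrm{diag}(\mu_1,\mu_2)P^{-1}$ with $\mu_1,\mu_2\neq 0$. Scaling $h$ by $\mu_1^{-1}$ does not change complexity, so we may take $M(h)=P\begin{pmatrix}1&0\\0&t\end{pmatrix}P^{-1}$ with $t=\mu_2/\mu_1$.
\begin{itemize}
\item If $t$ were a root of unity, some power $M(h)^r$ would be a scalar. Then the determinant-one representative of $M(h)$ would have finite order, since finite projective order is equivalent to finite order of the normalized matrix, as noted in the preliminaries. This contradicts the hypothesis, so $t$ is not a root of unity and $t\neq 0$.
\item Lemma~\ref{polyatdia} then lets us realize, for any $\lambda\in\mathbb{C}$, the signature $h'$ with $M(h')=P\begin{pmatrix}1&0\\0&\lambda\end{pmatrix}P^{-1}$.
\item Taking $\lambda=0$ gives $M(h')=P e_1e_1^{T}P^{-1}=(Pe_1)(e_1^{T}P^{-1})$. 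This is a nonzero rank-one matrix, i.e.\ a nonzero degenerate binary signature.
\end{itemize}

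\emph{Nondiagonalizable case.} Write $M(h)=P\begin{pmatrix}\mu&1\\0&\mu\end{pmatrix}P^{-1}$ with $\mu\neq0$. Such a matrix automatically has infinite projective order, because its powers are never scalar. Lemma~\ref{polyatnotdia} realizes $P\begin{pmatrix}\lambda&1\\0&\lambda\end{pmatrix}P^{-1}$ for any $\lambda$. Taking $\lambda=0$ gives $P\begin{pmatrix}0&1\\0&0\end{pmatrix}P^{-1}$, which again has rank one and so is degenerate.

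The only point needing care is the translation in the first case between ``infinite projective order'' and ``$t$ is not a root of unity''. If $t^r=1$ for some $r\ge1$, then $M(h)^r=\mu_1^r I$, so the normalized matrix has finite order. Conversely, finite projective order forces $t^r=1$. Everything else is a direct substitution into the two interpolation lemmas, with known scalar factors divided out of the oracle answers.

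If the intended statement requires a \emph{specific} degenerate $h'$ rather than just some degenerate one, the same lemmas give only the rank-one matrices sharing the eigenvector structure of $h$. To obtain an arbitrary degenerate $u\otimes v$, I would then combine this with Corollary~\ref{polyto-1}, which supplies $h^{-1}$, and with chains of $h$ to transport the rank-one matrix to a general one. I expect that step would be the main obstacle. For the statement as written, exhibiting one nonzero degenerate $h'$ is all that is needed.
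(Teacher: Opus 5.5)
Your proof is correct and matches the argument the paper intends: the paper states the corollary without proof, as an immediate consequence of Lemmas~\ref{polyatdia} and~\ref{polyatnotdia}, split by Jordan form exactly as in the proof of Corollary~\ref{polyto-1}, with $\lambda=0$ giving the nonzero rank-one $h'$. Your closing worry about realizing a \emph{specific} $u\otimes v$ is unnecessary, since the statement, and its later use via Theorem~\ref{dicwithdegen}, only requires some nonzero degenerate binary signature.
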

\begin{lemma}\label{infinitetodegen}
Suppose there is a gadget construction in $\mathrm{Holant}(=_2 |\mathcal{F})$ that can produce a sequence of
polynomially many distinct binary signatures of the form $P\begin{pmatrix}
 1 & 0\\
  0&t
\end{pmatrix}P^{-1}$ with polynomial-size descriptions.
Then we have
    \begin{equation}\label{reducthionfordegen-3}
        \mathrm{Holant}(=_2|\mathcal{F},h')\le_T\mathrm{Holant}(=_2|\mathcal{F}),
    \end{equation}where $h'$ is a degenerate binary signature.
\end{lemma}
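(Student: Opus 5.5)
The plan is to interpolate directly with the family of gadgets, using the same stratification as in the proof of Lemma~\ref{polyatdia}. The one difference is that the Vandermonde nodes now come from distinct gadgets rather than from powers of a single signature.

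Let the gadget construction produce binary signatures $h_1,\dots,h_N$ with
\[
M(h_s)=P\begin{pmatrix}1&0\\0&t_s\end{pmatrix}P^{-1},
\]
where the $t_s$ are pairwise distinct. We take $N$ polynomially large, at least $m+1$, where $m$ is the number of occurrences of $h'$ in the given instance. The target degenerate signature is
\[
M(h')=P\begin{pmatrix}1&0\\0&0\end{pmatrix}P^{-1},
\]
which has rank one, so it is a nonzero degenerate binary signature. Indeed, taking $\lambda=0$ in the form $P\,\mathrm{diag}(1,\lambda)P^{-1}$ always gives a rank-one matrix.

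First we reduce to the case where every $t_s\neq 0$. At most one $t_s$ can equal $0$. If some $t_s=0$, that $h_s$ is already degenerate and we are done. Otherwise we simply discard it.

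Given an instance $\Omega$ of $\mathrm{Holant}(=_2|\mathcal{F},h')$, we replace each occurrence of $h'$ by $h_s$, giving an instance $\Omega_s$ of $\mathrm{Holant}(=_2|\mathcal{F})$. We factor out $P$ and $P^{-1}$ exactly as in Lemma~\ref{polyatdia} and stratify the assignments by the number $j$ of diagonal blocks receiving input $11$. This gives
\[
\mathrm{Holant}_{\Omega_s}=\sum_{j=0}^m a_j t_s^{\,j},
\]
where the $a_j$ do not depend on $s$. Since the $t_s$ are distinct, the $(m+1)\times(m+1)$ Vandermonde system in $a_0,\dots,a_m$ is nonsingular. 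We solve it in polynomial time, using the polynomial-size descriptions of the $t_s$ (read off from the gadgets' matrices, e.g.\ from their traces). Then
\[
\mathrm{Holant}_\Omega=\sum_{j}a_j\cdot 0^j=a_0,
\]
which completes the reduction.

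The main obstacle is the bookkeeping needed to make "polynomially many distinct signatures" actually yield $m+1$ distinct nonzero nodes. Two issues arise.

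\begin{itemize}
\item \emph{Distinct signatures versus distinct eigenvalues.} Distinct signatures may differ by a scalar, so their normalized eigenvalue ratios might coincide. I would read the hypothesis as saying that the matrices in the normalized form $P\,\mathrm{diag}(1,t)P^{-1}$ are distinct. Then distinct matrices give distinct $t$, because $P$ is fixed.
\item \emph{Enough nodes.} The number of available nodes must grow with $m$, i.e.\ the gadget family must be producible for every polynomial bound. This is how the lemma will be applied, for example with chains $h^s$ for $t$ not a root of unity.
\end{itemize}

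If instead only a fixed number of distinct $t$ are available, the argument fails. So I would state explicitly that the construction yields at least $m+1$ distinct values $t_s$ in time polynomial in $m$.
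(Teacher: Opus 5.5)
Your argument is correct and is essentially the paper's intended route. The paper states this lemma without a separate proof and relies on the same stratification-plus-Vandermonde interpolation as in Lemma~\ref{polyatdia}: the distinct nodes $t_s$ from the gadget family replace the powers $t^s$, and $\lambda=0$ gives the rank-one target $P\,\mathrm{diag}(1,0)P^{-1}$. Your explicit reading of the hypothesis (at least $m+1$ distinct values $t_s$ with the fixed $P$, obtainable in time polynomial in $m$) is exactly what the paper leaves implicit, and your special treatment of $t_s=0$ is harmless but unnecessary, since a Vandermonde matrix with distinct nodes is nonsingular even when one node is $0$.
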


\subsection{Tractable signature classes and known dichotomies}

In this subsection, we give known signature classes that
define polynomial-time computable (tractable) counting problems,
and several known dichotomies.

\subsubsection{\texorpdfstring{Product-type signatures $\mathcal{P}$}{Product-type signatures P}}
\begin{definition}\label{product}
A signature is of product type if
it can be expressed as a product of unary signatures, binary
equality signatures ([1,0,1]), and binary disequality signatures ([0,1,0]).
We use $\mathcal{P}$ to denote the set of product-type signatures.
\end{definition}

\subsubsection{\texorpdfstring{Affine signatures $\mathcal{A}$}{Affine signatures A}}
\begin{definition}\label{Affine}
 A signature $f(x_{1},...,x_{n})$ of arity $n$ is affine if
it has the form
\[\lambda \cdot \chi_{AX=0} \cdot \frak{i}^{Q(X)}, \]
where $\lambda\in\mathbb{C}, X=(x_{1},x_{2},...,x_{n},1)$, $A$
is a matrix over $\mathbb{Z}_{2}$, $Q(x_{1},x_{2},...,x_{n})\in \mathbb{Z}_{4}[x_{1},x_{2},...,x_{n}]$ is a
multilinear polynomial with total degree d$(Q)\leq 2$ and
the additional requirement that the coefficients of all
cross terms are even, i.e., $Q$ has the form
\[Q(x_{1},x_{2},...,x_{n})=a_{0}+\sum_{k=1}^{n}a_{k}x_{k}+\sum_{1\leq i<j \leq n}2b_{ij}x_{i}x_{j},\]
and $\chi$ is a 0-1 indicator function such that $\chi _{AX=0}$ is 1 iff $AX=0$. We use $\mathcal{A}$ to denote the
set of all affine signatures.
\end{definition}

The classes $\mathcal{A}$ and $\mathcal{P}$ are known to be the tractable
classes for \#CSP.

\begin{theorem}\label{CSP}
(\cite{CSP}) Let $\mathcal{F}$ be any set of complex-valued signatures in Boolean variables.
Then \#{\rm CSP}$(\mathcal{F})$ is \#P-hard unless $\mathcal{F}\subseteq \mathcal{P}$
or $\mathcal{F}\subseteq \mathcal{A}$, in which case the problem is computable
in polynomial time.
\end{theorem}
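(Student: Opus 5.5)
Since Theorem~\ref{CSP} is imported from \cite{CSP}, the plan here only reconstructs the structure of that argument. It splits into a tractability half and a hardness half.

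\textbf{Tractability.} For $\mathcal{F}\subseteq\mathcal{P}$, every constraint factors into unary signatures, $=_2$ and $\neq_2$. I would build the graph on the variables whose edges are the $=_2$ and $\neq_2$ constraints. Each connected component then either is inconsistent or admits exactly two assignments, determined by one pivot variable. The partition function is the product, over components, of the sum of the two products of unary weights, which is computable in polynomial time. For $\mathcal{F}\subseteq\mathcal{A}$, the product of all constraints is $\lambda\cdot\chi_{AX=0}\cdot\frak{i}^{Q(X)}$, where the linear systems are stacked and the quadratics are added in $\mathbb{Z}_4$. I would solve $AX=0$ by Gaussian elimination over $\mathbb{Z}_2$ and substitute the free variables into $Q$; substitution keeps the cross-term coefficients even, so the form stays admissible. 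This leaves a sum $\sum_{y\in\{0,1\}^r}\frak{i}^{Q'(y)}$. I would evaluate it by repeatedly eliminating one variable: when $y_1$ has only even cross terms, summing it out produces a factor $1+\frak{i}^{a_1+\cdots}$ whose exponent is linear in the remaining variables modulo $4$. Such a factor either vanishes on a linear constraint or contributes a new linear term, so the problem stays in the same class with one fewer variable.

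\textbf{Hardness.} Assume $\mathcal{F}\not\subseteq\mathcal{P}$ and $\mathcal{F}\not\subseteq\mathcal{A}$. The steps, in order, are:
\begin{enumerate}
\item Obtain the constant pinnings $\Delta_0=[1,0]$ and $\Delta_1=[0,1]$. Since #CSP already contains all equalities, I would produce a nontrivial unary signature, or else interpolate $\Delta_0$ and $\Delta_1$ from a binary or unary gadget with non-unit eigenvalue ratio, using arguments of the type in Lemma~\ref{polyatdia}.
\item With pinning and equalities available, restrict a witness signature. Pick $f\notin\mathcal{P}$ and use pinnings and identifications of variables to extract from $f$ a binary or ternary non-product-type signature. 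Similarly, pick $g\notin\mathcal{A}$ and extract a small signature that is not affine: either its support is not an affine subspace, or its nonzero values violate the $\frak{i}^{Q}$ phase pattern.
\item Reduce to known hard bases. A binary signature $\begin{psmallmatrix}a&b\\c&d\end{psmallmatrix}$ with $ad\ne bc$ that is neither in $\mathcal{P}$ nor in $\mathcal{A}$ gives #P-hardness through #CSP of a single binary function, i.e.\ weighted graph homomorphism dichotomies \cite{Dyer2000,CaiChenLu2010}. Otherwise, combine the extracted non-$\mathcal{P}$ piece with the non-$\mathcal{A}$ piece by gadgets, such as products and summing out variables, to realize such a binary signature.
\end{enumerate}

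\textbf{Main obstacle.} The hard step is the last one. Here $\mathcal{F}$ is neither $\mathcal{P}$ nor $\mathcal{A}$, but each individual signature may lie in one of the two classes. The difficulty is to show that mixing signatures of the two types always yields a single binary witness outside both classes. I would first treat the case of degenerate-support signatures, then the case of affine support with bad phases. For the latter I would use the observation that a phase outside the powers of $\frak{i}$ ratios, combined with a product-type non-equality relation, produces a binary function $[1,x,y]$ with $x^4\ne 1$ or a non-affine support such as $[1,1,0]$. Both of these are known to be #P-hard.
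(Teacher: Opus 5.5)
The paper only imports this theorem from the cited work and gives no proof of it. Your sketch therefore has to stand on its own.

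\textbf{Tractability.} This half is sound, with one small correction. For odd $a_1$, the eliminated factor is a nonzero constant times $\frak{i}^{\pm(L\bmod 2)}$. Writing the parity $L\bmod 2$ as a $\mathbb{Z}_4$-exponent produces cross terms $2y_jy_k$. So the new term is an admissible quadratic term, not a linear one, and the class is still preserved.

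\textbf{Hardness, first gap: pinning.} Pinning cannot be obtained by interpolation in general. Suppose every $f\in\mathcal F$ satisfies $f(\sigma)=f(\bar\sigma)$, e.g.\ $\mathcal F=\{[1,2,2,1]\}$, which is \#P-hard. Then every gadget signature is complement-invariant: every realizable unary is a multiple of $[1,1]$ and every realizable binary has the form $\begin{psmallmatrix}s&t\\t&s\end{psmallmatrix}$. Anything interpolated from these is again complement-invariant, so $[1,0]$ never appears. Even with a nonconstant unary $[c,d]$, the ratio $d/c$ may be a root of unity, and then the Vandermonde argument of Lemma~\ref{polyatdia} breaks.

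The missing idea is to use the equalities to merge all variables pinned to $0$ into one variable $x$, and all variables pinned to $1$ into one variable $y$.
\begin{itemize}
\item In the complement-invariant case, $Z(x{=}0,y{=}1)=\tfrac12\bigl(Z-Z|_{x=y}\bigr)$.
\item Otherwise, take $\sigma$ with $f(\sigma)\ne f(\bar\sigma)$ and identify the variables of $f$ according to $\sigma$. This gives either a nonconstant unary, or a binary $g$ with $g_{01}\ne g_{10}$, in which case $g(x,x)$ or $\sum_y g(x,y)$ is a nonconstant unary. Together with the free $[1,1]$, this gives an invertible $2\times 2$ system at each of $x$ and $y$, with no condition on $d/c$.
\end{itemize}

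\textbf{Hardness, second gap: mixing the two classes.} Steps 2--3 state the real content of the theorem as a goal rather than proving it. The one concrete mechanism you give is false as stated. A signature $[1,x,y]$ with $x^4\ne 1$ is not \#P-hard in general: $[1,2,4]=[1,2]^{\otimes 2}\in\mathcal P$. You also need $x\ne 0$ and $y\ne x^2$.

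Moreover, a bad phase combined with a product-type relation stays in $\mathcal P$, since $\mathcal P$ is closed under gadgets. The right pairing is as follows. From any $f\in\mathcal P\setminus\mathcal A$, pinning and summing yield a unary $u=[1,a]$ with $a\notin\{0,\pm1,\pm\frak{i}\}$. This must be combined with the \emph{non-product} piece of some $g\in\mathcal A\setminus\mathcal P$, namely a parity constraint of arity at least $3$ or a phase $(-1)^{xy}$. For example, $\sum_z\chi_{x\oplus y\oplus z=0}\,u(z)=[1,a,1]$ and $u(x)u(y)(-1)^{xy}=[1,a,-a^2]$ are both nondegenerate and lie outside $\mathcal P\cup\mathcal A$.

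What you still must prove is that the extraction always reaches such a configuration. This includes the cases where the pieces carry extra unary phases $\frak{i}^{x}$, and where they come from a single function outside both classes. That case analysis is the substance of the cited proof.

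\textbf{Hardness, the binary base case.} \#CSP of a single asymmetric binary function is not an undirected graph-homomorphism problem. The dichotomies you cite are for symmetric matrices, so you need a symmetrization step, and the obvious one can fail. For $M(f)=\begin{psmallmatrix}1&2\\-2&1\end{psmallmatrix}$, we have $f\notin\mathcal P\cup\mathcal A$, but $M(f)M(f)^{T}=5I$ is tractable. In this example $f(x,y)f(y,x)=[1,-4,1]$ works instead.
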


\begin{definition}
We say a pair of signature sets $(\mathcal{G}|\mathcal{F})$ is
$\mathscr{C}$-transformable for {\rm Holant}$(\mathcal{G}|\mathcal{F})$
if there exists $T\in \mathrm{SL}(2,\mathbb{C})$ such that $\mathcal{G}T\subseteq \mathscr{C}$ and $T^{-1}\mathcal{F}\subseteq \mathscr{C}$. For $\mathcal{G}=\{(=_2)\}$, we say simply that $\mathcal{F}$ is $\mathscr{C}$-transformable if $(=_2)T\subseteq \mathscr{C}$ and $T^{-1}\mathcal{F}\subseteq \mathscr{C}$.

\end{definition}

Note that if Holant($\mathscr{C}$) is tractable and $(\mathcal{F}|\mathcal{G})$ is $\mathscr{C}$-transformable, then it follows from Theorem~\ref{Holtrans} that Holant($\mathcal{F}|\mathcal{G}$) is tractable by a holographic transformation.
We will define some tractable families of signatures that are expressible under a holographic transformation,
specific to the \#CSP$^{2}$ framework.

Let $T_p=\begin{pmatrix}
  1& 0\\
 0 &p
\end{pmatrix}$, and $\mathcal{A}_{d}^r=\{f|T_\rho^rf\in\mathcal{A},\ \rho\ \text{be a $4d$-th primitive root of unity}\}$, for $r \in [d]$.

\begin{definition}
A signature $f$ is locally affine if for each $\sigma =s_{1}s_{2}...s_{n}\in
\{0,1\}^{n}$ in the support of $f$, $(T_{\alpha^{s_{1}}}\otimes T_{\alpha^{s_{2}}}\otimes ...\otimes T_{\alpha^{s_{n}}})f\in \mathcal{A}$, where $\alpha^2=\frak{i}$.
We use $\mathcal{L}$ to denote the set of locally affine signatures.
\end{definition}

In \cite{CaiLuXia2018}, a dichotomy theorem was proved for \#CSP$^{2}_{c}$ which assumes the presence of
the two pinning signatures in \#CSP$^{2}$. By Lemma 4.3 of \cite{pinning-29} and Lemma 3.1 of \cite{pinning-32},
these pinning signatures can be constructed in \#CSP$^{2}$. We obtain the following dichotomy for \#CSP$^{2}$.

\begin{theorem}\label{CSP2dich}
Let $\mathcal{F}$ be any set of complex-valued signatures in Boolean variables.
Then $\#{\rm CSP}^{2}(\mathcal{F})$ is \#P-hard unless $\mathcal{F}\subseteq \mathcal{P}$, $\mathcal{F}\subseteq \mathcal{A}$, $\mathcal{F}\subseteq \mathcal{A}^{1}_2$ or $\mathcal{F}\subseteq \mathcal{L}$, in which cases the problem
is computable in polynomial time.
\end{theorem}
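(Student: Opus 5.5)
This statement combines known results, so I would not redo any hardness analysis. I would assemble it from the dichotomy for \#CSP$^2_c$ of \cite{CaiLuXia2018} together with the pinning constructions cited just before it. Concretely, \#CSP$^2_c(\mathcal{F})$ is \#CSP$^2(\mathcal{F}\cup\{\Delta_0,\Delta_1\})$, where $\Delta_0=[1,0]$ and $\Delta_1=[0,1]$ are the two pinning signatures. Its dichotomy says the problem is \#P-hard unless $\mathcal{F}\subseteq\mathcal{P}$, $\mathcal{F}\subseteq\mathcal{A}$, $\mathcal{F}\subseteq\mathcal{A}^1_2$ or $\mathcal{F}\subseteq\mathcal{L}$, and is polynomial-time computable otherwise. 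The pinnings themselves lie in all four classes, so adjoining them does not change which tractable class applies.

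\textbf{Hardness.} The key step is the reduction \#CSP$^2(\mathcal{F}\cup\{\Delta_0,\Delta_1\})\le_T$ \#CSP$^2(\mathcal{F})$. By Lemma 4.3 of \cite{pinning-29} and Lemma 3.1 of \cite{pinning-32}, $\Delta_0$ and $\Delta_1$ can be simulated in \#CSP$^2(\mathcal{F})$. Note that this holds for arbitrary $\mathcal{F}$, including degenerate cases where the pinning is obtained by interpolation or by a parity argument rather than by a gadget. Since Holant$(\mathcal{EQ}_2|\mathcal{F})$ preserves the even-occurrence property, the simulation stays inside the \#CSP$^2$ framework. Combining this with the hardness half of the \#CSP$^2_c$ dichotomy gives \#P-hardness whenever $\mathcal{F}$ lies in none of the four classes.

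\textbf{Tractability.} For $\mathcal{P}$ and $\mathcal{A}$, \#CSP$^2$ is a restriction of \#CSP, so Theorem~\ref{CSP} applies directly. For $\mathcal{A}^1_2$, the transformation $T_\rho$ with $\rho$ a primitive 8th root of unity maps $\mathcal{F}$ into $\mathcal{A}$. Every $=_{2k}$ is fixed by it up to the factor $\rho^{2k}$ on the all-ones input; the resulting signature $[1,0,\dots,0,\frak{i}^k]$ is affine. So the transformed instance is in \#CSP$(\mathcal{A})$, which is tractable by Theorem~\ref{Holtrans}. For $\mathcal{L}$, I would cite the algorithm from \cite{CaiLuXia2018}. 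That algorithm works because each variable appears an even number of times, so the local factors $\alpha^{s}$ contribute $\alpha^{2s}=\frak{i}^{s}$ per pair, keeping the sum affine.

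\textbf{Main obstacle.} The main point to verify is that the cited pinning lemmas apply without extra hypotheses on $\mathcal{F}$, since they are stated in slightly different settings. If a degenerate case escapes them, for instance when every signature is supported only on even-weight inputs, I would handle it separately. In that case one can check directly that the tractable class assignment is unaffected.
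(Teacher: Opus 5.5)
Your proposal matches the paper, which likewise obtains the theorem by combining the \#CSP$^{2}_{c}$ dichotomy of \cite{CaiLuXia2018} with the pinning results (Lemma 4.3 of \cite{pinning-29} and Lemma 3.1 of \cite{pinning-32}) that realize $[1,0]$ and $[0,1]$ inside \#CSP$^{2}$. Your class-by-class tractability check is correct but unnecessary, since \#CSP$^{2}(\mathcal{F})$ is a special case of \#CSP$^{2}_{c}(\mathcal{F})$ and so inherits tractability directly. Your fallback for a case escaping the pinning lemmas would not by itself give hardness, because checking that class membership is unaffected says nothing about hardness without the pins; it is not needed, however, given those citations.
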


\begin{theorem}\label{CSPd}\cite{Formquantumentanglementandback}
    Let $\mathcal{F}$ be a set of complex-valued signatures. If $\mathcal{F}\subseteq \mathcal{P}$
or $\mathcal{F} \subseteq \mathcal{A}_{d}^r$
for some $r \in [d]$, then \#$\mathrm{CSP}^d
(\ne_2,\mathcal{F})$ is tractable; otherwise, \#$\mathrm{CSP}^d
(\ne_2,\mathcal{F})$ is \#P-hard.
\end{theorem}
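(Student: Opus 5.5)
Theorem~\ref{CSPd} is quoted from \cite{Formquantumentanglementandback}, and the paper uses it as a black box. If I had to prove it, I would split the argument into a tractability part and a hardness part. Throughout I would work in the bipartite form $\mathrm{Holant}(\mathcal{EQ}_d\mid \neq_2,\mathcal{F})$, where $\mathcal{EQ}_d=\{=_d,=_{2d},\dots\}$, since every variable appears a multiple of $d$ times.

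\emph{Tractability.} If $\mathcal{F}\subseteq\mathcal{P}$, then $\neq_2$ and all equalities are also in $\mathcal{P}$, so Theorem~\ref{CSP} applies directly. If $\mathcal{F}\subseteq\mathcal{A}_d^r$, I would apply the holographic transformation of Theorem~\ref{Holtrans} with $T=T_\rho^{r}$, so that each $f\in\mathcal{F}$ becomes $T_\rho^{r}f\in\mathcal{A}$, and then track what happens on the other side.
\begin{itemize}
\item An equality $=_{kd}$ becomes $[1,0,\dots,0,\rho^{\pm rkd}]$. Since $\rho^{d}$ is a primitive fourth root of unity, this is $[1,0,\dots,0,\frak{i}^{\pm rk}]$, which is affine.
\item The signature $\neq_2$ is transformed on both inputs and becomes a scalar multiple of $[0,1,0]$, possibly after the row/column conventions are matched, so it stays affine.
\end{itemize}
The resulting instance lies in $\#\mathrm{CSP}(\mathcal{A})$, which is tractable by Theorem~\ref{CSP}. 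The only care needed is to check that the exponent convention in $\mathcal{A}_d^r$ (whether $T_\rho^{r}$ or $T_\rho^{-r}$ acts on $\mathcal{F}$) matches the side on which the equalities are transformed.

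\emph{Hardness.} Assume $\mathcal{F}\not\subseteq\mathcal{P}$ and $\mathcal{F}\not\subseteq\mathcal{A}_d^r$ for every $r\in[d]$. My plan is to reduce from a known hard $\#\mathrm{CSP}$-type problem in three steps.
\begin{enumerate}
\item Use $\neq_2$ together with $\mathcal{EQ}_d$ to realize the pinning signatures $\Delta_0=[1,0]$ and $\Delta_1=[0,1]$, or their $d$-th-power analogues. This follows the constructions cited for $\#\mathrm{CSP}^2$ (Lemma 4.3 of \cite{pinning-29}, Lemma 3.1 of \cite{pinning-32}). The idea is that $=_{d}$ with $d-1$ inputs closed off by a self-loop or a chain through $\neq_2$ yields a unary signature of the form $[1,\pm1]$ or $[1,0]$, which can then be combined with $\neq_2$.
\item With pinning available, use the standard arity-reduction argument: pin and contract variables of a signature in $\mathcal{F}\setminus\mathcal{P}$ and of a signature outside each $\mathcal{A}_d^r$ to obtain low-arity witnesses.
\item Use those witnesses to simulate $=_2$, or at least $=_k$ for all $k$ up to a root-of-unity phase that is absorbed by choosing the appropriate $r$. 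This reduces $\#\mathrm{CSP}(\mathcal{F}')$ to our problem for a set $\mathcal{F}'$ that is neither product type nor affine, which is \#P-hard by Theorem~\ref{CSP}.
\end{enumerate}

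\emph{Main obstacle.} I expect step~3, the passage from $\#\mathrm{CSP}^d$ to full $\#\mathrm{CSP}$, to be the hardest. Equalities of arity not divisible by $d$ are not directly available, and the candidate binary gadgets may only realize $=_2$ up to a factor that is a $4d$-th root of unity. This is exactly why the exceptional families $\mathcal{A}_d^r$ appear. My first approach would be to form binary gadgets from a non-affine $f$ and $\neq_2$ and to interpolate with Lemma~\ref{polyatdia} whenever an eigenvalue ratio is not a root of unity. When the ratio is a root of unity, I would show directly that $f$ then lies in some $\mathcal{A}_d^r$, contradicting the assumption.
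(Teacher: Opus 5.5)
The paper does not prove this theorem; it imports it from \cite{Formquantumentanglementandback}, so there is nothing internal to compare against. Your tractability half is fine. Take $T=T_\rho^{-r}$, so that $T^{-1}f=T_\rho^{r}f\in\mathcal{A}$. Then each $=_{kd}$ becomes $[1,0,\dots,0,(\rho^{d})^{-rk}]$, where $\rho^d$ is a primitive fourth root of unity, and $\ne_2$ becomes $\rho^r\,(\ne_2)$. Everything is affine, and Theorem~\ref{CSP} applies.

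The hardness half has a genuine gap, starting at step~1. The pinning construction you describe cannot exist. Every signature in $\mathcal{EQ}_d\cup\{\ne_2\}$ is invariant under complementing all inputs, so every gadget built only from them is too. Hence no such gadget can produce $[1,0]$, $[0,1]$, $[1,-1]$, or even $\Delta_0^{\otimes d}$, none of which is complement-invariant. In fact, read each $\ne_2$ as an edge between equality vertices. A nonzero gadget needs this graph to be bipartite, and comparing degree sums on the two sides modulo $d$ gives two consequences:
\begin{itemize}
\item for $d\ge 2$ there is no unary gadget at all, not even a multiple of $[1,1]$;
\item for $d\ge 3$, the only binary gadget with both dangling edges on equality vertices is a multiple of $\ne_2$, never $=_2$.
\end{itemize}
There are also no self-loops in the bipartite setting, and routing two inputs of $=_d$ through $\ne_2$ gives $0$. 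So pinning must be extracted from $\mathcal{F}$ itself. The pinning lemmas you cite are statements about $\#\mathrm{CSP}^2$ only, and your sketch gives no argument for general $d$.

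Step~3 carries the entire difficulty of the theorem, and its decisive claim has no argument behind it. You assert that when the relevant eigenvalue ratios are roots of unity, $f$ lies in some $\mathcal{A}_d^r$. But root-of-unity ratios are exactly the situation in which interpolation yields nothing. Nothing in your plan explains where the specific $4d$-th-root-of-unity structure defining $\mathcal{A}_d^r$ would come from. For comparison, handling the finite-order situation for a single $4$-ary signature takes the whole stable-subgroup analysis in the last section of this paper.

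Your tools also do not transfer as stated:
\begin{itemize}
\item Lemma~\ref{polyatdia} chains copies of $h$ through $=_2$ on the left. By the degree count above, $=_2$ is unavailable there when $d\ge 3$, so the interpolation would have to be rebuilt, for example by chaining through the realizable left-side $\ne_2$.
\item The contractions in step~2 hit the same obstruction. In $\#\mathrm{CSP}^d$, two inputs can only be joined through an equality of arity divisible by $d$, which leaves further dangling edges to absorb.
\end{itemize}
As written, the hardness direction is a plan rather than a proof. To complete it you need a mechanism, valid for every $d$, that produces pinning and phase-correctable equalities of all arities from $\mathcal{F}$. You also need a proof that when this mechanism fails, $\mathcal{F}\subseteq\mathcal{P}$ or $\mathcal{F}\subseteq\mathcal{A}_d^r$ for some $r$.
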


\begin{definition}\label{VanishingDef}
	A set of signatures $\mathcal{F}$ is called vanishing if the value of
	Holant$_{\Omega}(\mathcal{F})$ is zero for every signature grid $\Omega$.
	A signature $f$ is called vanishing if the singleton set $\{f\}$ is vanishing.
\end{definition}

\begin{theorem}\label{Th8.1}
	(\cite{CaiHengWilliams2016})
	Let $\mathcal{F}$ be any set of symmetric, complex-valued signatures in
	Boolean variables. Then {\rm Holant}$(\mathcal{F})$ is \#P-hard unless
	$\mathcal{F}$ satisfies one of the following conditions, in which cases
	the problem is tractable:
	\begin{enumerate}
		
		\item All nondegenerate signatures in $\mathcal{F}$
		are of arity at most 2;
		
		\item $\mathcal{F}$ is $\mathcal{A}$-transformable;
		
		\item $\mathcal{F}$ is $\mathcal{P}$-transformable;

		\item $\mathcal{F}$ is a vanishing signature set.
	\end{enumerate}
\end{theorem}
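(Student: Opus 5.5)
Since Theorem~\ref{Th8.1} is imported from \cite{CaiHengWilliams2016}, in this paper we will simply invoke it as a black box. If we had to prove it, the plan would split into a tractability half and a hardness half.

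\textbf{Tractability.} Each of the four conditions is handled by a known algorithm.
\begin{enumerate}
\item If every nondegenerate signature has arity at most $2$, the signature grid decomposes. Degenerate signatures split into unary factors, and the binary signatures form paths and cycles. The Holant value is then a product of traces and bilinear forms of $2\times2$ matrices, which is computable in polynomial time.
\item If $\mathcal{F}$ is $\mathcal{A}$- or $\mathcal{P}$-transformable, we apply Theorem~\ref{Holtrans} with the witnessing $T$. This turns the instance into one of $\mathrm{Holant}(\mathscr{C})$ for $\mathscr{C}\in\{\mathcal{A},\mathcal{P}\}$. That problem is a subproblem of $\#\mathrm{CSP}(\mathscr{C})$, which is tractable by Theorem~\ref{CSP}.
\item If $\mathcal{F}$ is vanishing, the answer is identically $0$.
\end{enumerate}

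\textbf{Hardness: odd arity.} We assume $\mathcal{F}$ violates all four conditions, so it contains a nondegenerate symmetric signature $f$ of arity at least $3$. We first pass to the bipartite form $\mathrm{Holant}(=_2\mid\mathcal{F})$. Using the orthogonal transformation $Z$ we view $=_2$ as $\neq_2$, so symmetric signatures become recurrence-type sequences $[f_0,\dots,f_k]$. The strategy is an induction on the arity via gadgets:
\begin{itemize}
\item Self-loops on $f$ lower the arity by $2$ while keeping symmetry.
\item If any odd-arity signature is present, we construct a unary signature and pin with it. Interpolation in the spirit of Lemma~\ref{polyatdia} then yields the pinning signatures, reducing to $\mathrm{Holant}^c$.
\item From $\mathrm{Holant}^c$ we can build $=_3$, or a suitable ternary signature, and reach $\#\mathrm{CSP}$. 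Theorem~\ref{CSP} then finishes the case, after a holographic transformation that matches the $\mathcal{A}$- or $\mathcal{P}$-transformability conditions.
\end{itemize}

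\textbf{Hardness: even arity.} When all signatures have even arity, we reduce to a base arity-$4$ signature and to binary signatures obtained from loops. Here two routes are possible:
\begin{itemize}
\item We interpolate a useful binary signature. This works when its eigenvalue ratio is not a root of unity, by Lemma~\ref{polyatdia} and Corollary~\ref{polytodegen}. We then combine it with $f$ to realize a hard problem, typically via $\#\mathrm{CSP}^2$ and Theorem~\ref{CSP2dich}.
\item Otherwise, the symmetric $4$-ary signature satisfies an explicit second-order recurrence. In that case we identify it directly as vanishing, as $\mathcal{A}$- or $\mathcal{P}$-transformable, or as \#P-hard by reduction from a known hard eight-vertex or Eulerian-orientation problem.
\end{itemize}

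\textbf{Main obstacle.} The hardest step is exactly this even-arity case in which every constructible binary signature has finite projective order, so interpolation fails. For symmetric signatures, the Cai--Heng--Williams approach handles it by a finite case analysis on the recurrence parameters and by using several distinct loop gadgets to escape the root-of-unity cases. This is precisely the difficulty the present paper generalizes to asymmetric $4$-ary signatures through Schur's theorem (Theorem~\ref{Schur}) and the classification in Theorem~\ref{classification}.
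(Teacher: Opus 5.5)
You are right to treat this theorem as a citation of \cite{CaiHengWilliams2016}; the paper does the same, and your tractability arguments for the four listed conditions are fine. The trouble is that the cited result does not say what is displayed here. The published dichotomy has further tractable families, in particular vanishing signatures combined with suitable binary signatures, and the four-item version is false as written.

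Take $f=Z^{\otimes 3}[0,1,0,0]$. Since $(=_2)Z^{\otimes 2}=(\ne_2)$, $\mathrm{Holant}(=_2|f)$ equals $\mathrm{Holant}(\ne_2|[0,1,0,0])$. There the number of edges carrying a $1$ must be both $|V|$ and $|E|=3|V|/2$, so $\{f\}$ is vanishing. The set $\{f,=_2\}$ is not vanishing, since a cycle of $=_2$-vertices has value $2$. Yet $\mathrm{Holant}(f,=_2)$ is trivially in P: after suppressing degree-$2$ vertices, every component containing $f$ has value $0$, and every other component is such a cycle.

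None of the four conditions holds:
\begin{itemize}
\item $f$ is nondegenerate of arity $3$, and $\{f,=_2\}$ is not vanishing.
\item For every $T$, the signature $(T^{-1})^{\otimes 3}f$ lies in the $\mathrm{GL}(2,\mathbb{C})$-orbit of $[0,1,0,0]$. That orbit contains no tensor of the form $u^{\otimes 3}+v^{\otimes 3}$, whereas every symmetric ternary signature in $\mathcal{A}\cup\mathcal{P}$ has this form. So $\{f,=_2\}$ is neither $\mathcal{A}$- nor $\mathcal{P}$-transformable.
\end{itemize}
Thus the displayed statement would declare a polynomial-time problem \#P-hard.

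Consequently the hardness half of your sketch cannot be completed, and this example shows where it breaks. Expand in the basis $[1,\frak{i}],[1,-\frak{i}]$. Each $f$-vertex puts $[1,\frak{i}]$ on two of its three half-edges, and each $=_2$-vertex on one of its two. An edge with $[1,\frak{i}]$ at both ends contributes $1+\frak{i}^2=0$. Counting then shows that every unary gadget is a multiple of $[1,\frak{i}]$, and every binary gadget is a multiple of $=_2$ or degenerate. So your odd-arity step has nothing from which to interpolate the pinnings. Even granting them, $\{f,=_2\}\subseteq\langle Z\mathcal{M}\rangle$ is a tractable case of Theorem~\ref{Hcdich}. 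Hence your next step, from $\mathrm{Holant}^c$ to $=_3$ and \#CSP, fails as well.

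This isotropic-unary obstruction is exactly why the paper needs $c^2+d^2\ne0$ in Corollary~\ref{cdne0coro} and treats $[1,\frak{i}]$ separately in Theorem~\ref{theoremwithunary}. Your even-arity alternatives (vanishing, $\mathcal{A}$- or $\mathcal{P}$-transformable, or hard) miss the same family: replace $[0,1,0,0]$ by $[0,1,0,0,0]$. Finally, $Z$ is not orthogonal. $Z^{T}Z=\begin{psmallmatrix}0&1\\1&0\end{psmallmatrix}$ is precisely what turns $=_2$ into $\ne_2$, whereas an orthogonal matrix would fix $=_2$.
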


\subsubsection{\texorpdfstring{Holant\textsuperscript{c}}{Holant c}}

\begin{definition}
We use the following notation.

\begin{itemize}
    \item $\mathcal{T}$ is the set of all unary and binary signatures,
   \item  $\mathcal{E}$ is the set of all signatures that are nonzero only on two inputs $x$ and $\overline{x}$, where $\overline{x}$ denotes the bitwise complement of x,
   \item $\mathcal{M}$ is the set of all signatures that are nonzero only on inputs of Hamming weight at most 1,
\end{itemize}
\end{definition}
\begin{theorem}\label{Hcdich}(\cite{Holantc})
    Let $\mathcal{F}$ be any set of complex-valued functions in Boolean variables. Then  Holant$^c(\mathcal{F})$ is \#P-hard unless:
    \begin{itemize}
        \item $\mathcal{F}\subseteq \left \langle \mathcal{T} \right \rangle $;
        \item $\mathcal{F}\subseteq \left \langle Z\mathcal{M} \right \rangle $ or $\mathcal{F}\subseteq \left \langle ZX\mathcal{M} \right \rangle $;
        \item $\mathcal{F}\cup\{[1,0],[0,1]\}$ is $\mathcal{A}$-transformable;
        \item $\mathcal{F}\cup\{[1,0],[0,1]\}$ is $\mathcal{P}$-transformable;
        \item $\mathcal{F}\subseteq \mathcal{L}$,
    \end{itemize}
     in which cases $\mathrm{Holant}^c(\mathcal{F})$ is computable in polynomial time.
\end{theorem}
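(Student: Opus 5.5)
The statement is the Holant$^c$ dichotomy of \cite{Holantc}. In this paper it is only quoted, so the intended justification is the citation itself. If I had to reconstruct the argument, I would split it into the usual two halves, tractability and hardness. The hardness half would rely on the fact that the pinning signatures $[1,0],[0,1]$ are freely available.

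\textbf{Tractability.} I would check each listed case separately.
\begin{enumerate}
\item If $\mathcal{F}\subseteq\langle\mathcal{T}\rangle$, every signature factors into unary and binary pieces. Each connected component is then a path or a cycle, and its value is a trace or a product of $2\times 2$ matrices.
\item If $\mathcal{F}\subseteq\langle Z\mathcal{M}\rangle$ or $\mathcal{F}\subseteq\langle ZX\mathcal{M}\rangle$, I would apply the holographic transformation by $Z$ and use Theorem~\ref{Holtrans}. Under $Z$, $=_2$ becomes $\neq_2$, so the problem becomes a weighted matching-type computation. Pinnings stay inside the class after the transformation, and the problem is then solvable as in the Holant$^*$ setting.
\item If $\mathcal{F}\cup\{[1,0],[0,1]\}$ is $\mathcal{A}$- or $\mathcal{P}$-transformable, I would conclude tractability from Theorem~\ref{CSP} via Theorem~\ref{Holtrans}.
\item If $\mathcal{F}\subseteq\mathcal{L}$, the problem sits inside \#CSP$^2$, and Theorem~\ref{CSP2dich} gives tractability.
\end{enumerate}

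\textbf{Hardness.} Assume $\mathcal{F}$ lies in none of the classes. The plan has four steps.
\begin{enumerate}
\item Pick a signature $f\notin\langle\mathcal{T}\rangle$ of minimal arity. Pin and add self-loops to $f$ to realize an irreducible ternary or 4-ary gadget that is still outside $\langle\mathcal{T}\rangle$.
\item Use the pinnings together with binary gadgets extracted from $f$. Apply the interpolation lemmas, Lemma~\ref{polyatdia} and Lemma~\ref{polyatnotdia}, to realize either all unary signatures or a rich enough binary family.
\item If all unary signatures are obtained, we are in Holant$^*$, and its dichotomy finishes the argument.
\item Otherwise the binary signatures obtainable have eigenvalue ratios that are roots of unity. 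This rigidity should force $f$ into one of the transformable classes $\mathcal{A}$, $\mathcal{P}$, $\mathcal{L}$, or $Z\mathcal{M}$.
\end{enumerate}
To handle symmetric targets, I would realize symmetric ternary signatures from $f$ by triangle or star gadgets and then invoke Theorem~\ref{Th8.1}.

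\textbf{Main obstacle.} The hard part is the asymmetric case analysis: showing that every nondegenerate asymmetric $f$ outside the tractable classes yields, after pinning, a gadget that is hard. I would first try an induction on arity. The base case is arity 3, handled by explicit computation. The inductive step shows that if every pinning of $f$ lies in a common tractable class, then $f$ itself does, unless one can exhibit a witness gadget that is hard.
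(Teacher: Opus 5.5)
Your opening sentence matches the paper exactly: the paper states this theorem with a citation to \cite{Holantc} and gives no proof. Your tractability checks are also sound. Every unary signature lies in $\mathcal{M}$, $Z$ sends $=_2$ to $\neq_2$, and a Holant instance over $\mathcal{A}$ or $\mathcal{P}$ is a \#CSP instance. Moreover $[1,0],[0,1]\in\mathcal{L}$, so $\mathcal{F}\subseteq\mathcal{L}$ places $\mathrm{Holant}^c(\mathcal{F})$ inside \#CSP$^2$.

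The hardness reconstruction, however, is not an argument, and its gap sits exactly where the theorem's content lies. Lemma~\ref{polyatnotdia} always interpolates from a nondiagonalizable binary signature. So the only obstruction is when every realizable nondegenerate binary is diagonalizable with a root-of-unity eigenvalue ratio. This is the same obstruction the present paper, without pinnings, needs Schur's theorem and the finite subgroups of $\mathrm{SL}(2,\mathbb{C})$ to overcome. Your Step~4 (``this rigidity should force $f$ into one of the transformable classes'') simply restates the desired conclusion for that case.

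Nor can your induction on pinnings supply the missing mechanism. In arity~3 every pinning is binary, hence in $\langle\mathcal{T}\rangle$, so the hypothesis carries no information. The failure persists in arity~4. Take $f_{0000}=1$, $f_{0011}=2$, $f_{1100}=3$, $f_{1111}=5$, with all other entries $0$. Up to the order of the tensor factors, every pinning is one of $[1,0]\otimes(1,0,0,2)$, $[0,1]\otimes(3,0,0,5)$, $(1,0,0,3)\otimes[1,0]$ or $(2,0,0,5)\otimes[0,1]$. Every self-loop is binary. So all of these lie in $\langle\mathcal{T}\rangle$, while $f\notin\langle\mathcal{T}\rangle$.

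Yet $\mathrm{Holant}^c(f)$ is \#P-hard. Lemma~\ref{g=_4} interpolates $=_4$ from chains of copies of $f$, and Lemma~\ref{toCSP2} then yields \#CSP$^2(f,[1,0],[0,1])$. Theorem~\ref{CSP2dich} makes this hard: the invertible block $\begin{psmallmatrix}1&2\\3&5\end{psmallmatrix}$ rules out $\mathcal{P}$, and the unequal moduli of the nonzero values rule out $\mathcal{A}$, $\mathcal{A}^1_2$ and $\mathcal{L}$. So hardness witnesses come from multi-copy interpolation gadgets, which your clause ``unless one can exhibit a witness gadget that is hard'' leaves entirely unspecified.

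Likewise, a triangle gadget does produce a symmetric ternary signature. But nothing in your plan prevents it from being degenerate, or from landing in a tractable class of Theorem~\ref{Th8.1}, when $f$ is hard. The cited proof controls this step by classifying genuinely entangled ternary signatures into the GHZ and W classes, and your outline has no substitute for it.
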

\subsubsection{The six-vertex and eight-vertex models}
\begin{theorem}
    \label{six vertex} (\cite{Sixvertex})
    Let $f$ be a 4-ary signature with signature matrix \[M(f)=\begin{pmatrix}
  0& 0 & 0 &b \\
 0 &  c&d  & 0\\
 0 & w &  z& 0\\
  y& 0 &0  &0
\end{pmatrix}.\] Then $\mathrm{Holant}(\ne_2|f)$ is \#P-hard except for the following cases:
    \begin{itemize}
        \item One of $(b,y)$, $(c,z)$, $(d,w)$ is $(0,0)$ and $f\in\mathcal{A}\cup\mathcal{P}$;
        \item There is one zero in each pair $(b,y)$, $(c,z)$, $(d,w)$, i.e. $f\in \left \langle \mathcal{M}  \right \rangle $ or $f\in\left \langle X\mathcal{M}  \right \rangle $,
    \end{itemize}
    in which cases $\mathrm{Holant}(\ne_2|f)$ is computable in polynomial time.
\end{theorem}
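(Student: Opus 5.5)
Since this is the six-vertex dichotomy of \cite{Sixvertex}, my plan follows its two-sided structure: prove the two tractable cases directly, and reduce every other case to a known \#P-hard problem by gadgets and the interpolation lemmas above. The variables are paired as $(x_1,x_2)$ and $(x_3,x_4)$. Under $\ne_2$ on the left side, the support of $f$ consists of Eulerian orientations at the vertex. The three pairs $(b,y)$, $(c,z)$, $(d,w)$ are the weights of complementary inputs: the input $x$ and its bitwise complement $\overline{x}$.

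\textbf{Tractability.}
\begin{itemize}
\item In the first case, one pair vanishes, say $(b,y)=(0,0)$. The support of $f$ then lies in inputs where the orientation pairs one incoming and one outgoing edge in a fixed way. If moreover $f\in\mathcal{A}\cup\mathcal{P}$, I will write $\mathrm{Holant}(\ne_2|f)$ as an instance of \#CSP over $\{\ne_2,f\}$. Since $\ne_2\in\mathcal{A}\cap\mathcal{P}$, Theorem~\ref{CSP} then gives polynomial time. The point to check is that a vanishing pair turns each vertex into a constraint on two ``edge-disjoint paths''. The instance then decomposes into cycles whose weights multiply in a product or affine way.
\item In the second case, each pair has one zero. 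Here $f$ is supported on inputs of weight at most $1$ after flipping appropriate variables, so $f\in\langle\mathcal{M}\rangle$ or $f\in\langle X\mathcal{M}\rangle$. I will invoke the corresponding tractable class from Theorem~\ref{Hcdich}, transporting it via $Z$ and the identity $(\ne_2)=Z^{\otimes 2}$-image of $=_2$ up to scalar.
\end{itemize}

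\textbf{Hardness.} Suppose neither condition holds.
\begin{enumerate}
\item I will first handle the case where all six values are nonzero. I will build binary gadgets from $f$ by looping with $\ne_2$, which yields elements of $S(f_{\ne_2})$. I will also form chains $M(f)M(\ne_2)^{\otimes}\cdots$, which yield $4$-ary signatures whose inner $2\times2$ block $\begin{psmallmatrix}c&d\\w&z\end{psmallmatrix}$ and outer pair $(b,y)$ evolve under matrix powers.
\item Whenever an inner block has eigenvalue ratio not a root of unity, Lemma~\ref{polyatdia} and Lemma~\ref{polyatnotdia} interpolate arbitrary weights. This lets me reduce from a known hard six-vertex point, such as counting Eulerian orientations in $4$-regular graphs. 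Alternatively, I reduce from \#CSP over a non-$\mathcal{A}\cup\mathcal{P}$ binary signature via Theorem~\ref{CSP}.
\item When some pair is $(0,0)$ but $f\notin\mathcal{A}\cup\mathcal{P}$, I will show the problem simulates \#CSP$(\ne_2,f')$ for a binary-like $f'$ outside the tractable classes. Theorem~\ref{CSP} then gives hardness.
\item When exactly one or two pairs have a single zero, I will pin or loop to reduce to the all-nonzero situation on a derived signature, or to a symmetric signature covered by Theorem~\ref{Th8.1}.
\end{enumerate}

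\textbf{Main obstacle.} The hardest step is the degenerate ``root of unity'' situation, where all the ratios $b/y$, $c/z$, $d/w$ and the eigenvalue ratios of the constructed blocks are roots of unity, so direct interpolation fails. There I would first try to combine two different gadget families, for example the $(1,2)$-pairing and the $(1,3)$-pairing chains, to get a product matrix of infinite order. If that fails, I would apply a holographic transformation by $Z$ to move to $\mathrm{Holant}(=_2|\hat f)$. The finiteness then forces $\hat f$ into the affine or product classes, and I would verify that this coincides exactly with the stated exceptional cases.
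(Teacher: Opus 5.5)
This theorem is not proved in the paper. It is imported from \cite{Sixvertex}, and the all-nonzero case is imported again separately as Lemma~\ref{sixwithax=0}. So your proposal has to stand on its own.

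Your tractability half does. $\mathrm{Holant}(\ne_2|f)$ is a sub-problem of \#CSP$(\ne_2,f)$. When each pair has a zero, three pairwise non-complementary weight-2 strings form a star or a triangle. So one coordinate is constant on the support, and $f$ is a pin tensored with a weight-1 ternary signature, or the global flip of that; Theorem~\ref{Hcdich} then transports via $Z$. Your Step 3 is also sound. With $(b,y)=(0,0)$, the transitions $x_1x_2$ and $x_3x_4$ split the graph into circuits, each carrying one free bit. This gives \#CSP$(g)\le_T\mathrm{Holant}(\ne_2|f)$ for the inner block $g$, and $g\in\mathcal{A}\cup\mathcal{P}$ iff $f\in\mathcal{A}\cup\mathcal{P}$.

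The rest of the hardness direction has genuine gaps.
\begin{itemize}
\item Step 4 relies on pinning, but every signature constructible in $\mathrm{Holant}(\ne_2|f)$ has even arity, so no unary is available. Loops drop the arity to $2$. Nothing in your toolkit turns, say, $w=0$ with the other five entries nonzero into an all-nonzero 4-ary signature, so these cases are simply unhandled.
\item Step 2 overreaches. Lemmas~\ref{polyatdia} and~\ref{polyatnotdia} interpolate a single binary signature. A 4-ary chain (through the matrix $N$ of two parallel $\ne_2$ edges) only lets you vary the spectra of the blocks of $M(f)N$, not the six weights independently. So ``interpolate to a known hard point such as \#EO'' is unsupported. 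What 4-ary interpolation can deliver is something like $=_4$ (cf.\ Lemma~\ref{g=_4}), after which you would still need Theorem~\ref{CSP2dich}.
\end{itemize}

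The decisive failure is your ``main obstacle'' fallback. If all six of $b,y,c,z,d,w$ are nonzero, neither exceptional case can hold, so the statement asserts \#P-hardness unconditionally. That includes the regime where every ratio you construct is a root of unity. In exactly that regime, you propose to show that finiteness forces $\hat f$ into $\mathcal{A}$ or $\mathcal{P}$ and then check agreement with the exceptional cases. For such $f$ the support is all six weight-2 strings, which is not an affine subspace, so $f\notin\mathcal{A}\cup\mathcal{P}$ and no such agreement is possible. Your plan therefore either derives a false conclusion there, or tacitly assumes the finite regime never occurs for all-nonzero $f$, which you do not prove. Either way, the all-nonzero hardness is left unproved, and that is the real content of \cite{Sixvertex}.
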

\begin{theorem}\label{eightvertex}
(\cite{Eightvertex})
    Let $f$ be a 4-ary signature that has eight-vertex form. Then $\mathrm{Holant}(\ne_2|f)$ is \#P-hard except for the following cases:
    \begin{itemize}
        \item $(\ne_2|f)$ is $ \mathcal{A}\text{-}$transformable;
        \item $(\ne_2|f)$ is $ \mathcal{P}\text{-}$transformable;
        \item $(\ne_2|f)$ is $ \mathcal{L}\text{-}$transformable,
    \end{itemize}
    in which cases $\mathrm{Holant}(\ne_2|f)$ is computable in polynomial time.
\end{theorem}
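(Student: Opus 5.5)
Since this theorem is quoted from \cite{Eightvertex}, I would re-derive it as follows. The tractable direction is immediate. If $(\ne_2|f)$ is $\mathcal{A}$-, $\mathcal{P}$- or $\mathcal{L}$-transformable, Theorem~\ref{Holtrans} turns the instance into one of $\mathrm{Holant}(\mathcal{C})$, or the corresponding $\#\mathrm{CSP}^2$ instance. Here $\mathcal{C}$ is the relevant class, and Theorem~\ref{CSP} and Theorem~\ref{CSP2dich} give polynomial-time algorithms. All the work is in hardness. Write
\[
M_{x_1,x_2,x_4,x_3}(f)=\begin{pmatrix} a&0&0&b\\0&c&d&0\\0&w&z&0\\y&0&0&x\end{pmatrix}.
\]
The support of $f$ lies in the even-weight inputs. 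The plan is to use this parity structure to reduce either to the six-vertex dichotomy (Theorem~\ref{six vertex}) or to $\#\mathrm{CSP}^2$ (Theorem~\ref{CSP2dich}).

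\textbf{Step 1 (normalization).} Consider the transformations that preserve $\ne_2$ up to scalar: diagonal $T_p$ paired with $T_{p^{-1}}$, the swap $X$, and $Z$. The last one exchanges $=_2$ and $\ne_2$ up to scalar. Together with permutations of the variables, these preserve eight-vertex form up to relabeling, and they let me rearrange the six nonzero "pairs" $(a,x),(b,y),(c,z),(d,w)$. If some pair, after a suitable rotation, vanishes so that $f$ becomes six-vertex, I apply Theorem~\ref{six vertex}. I then check that its tractable cases coincide with the three transformable cases listed here.

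\textbf{Step 2 (binary gadgets).} In the genuinely eight-vertex case, I form loops $f^{ij}_{\ne_2}$ and chains of $f$ with $\ne_2$. These produce binary signatures supported on even inputs, of the shape $\begin{psmallmatrix}\alpha&0\\0&\beta\end{psmallmatrix}$, or on odd inputs. They also produce 4-ary chains $M(f)\,N\,M(f)$ with $N=M(\ne_2^{\otimes2})$, whose restriction to the two $2\times2$ blocks (even part and odd part) multiplies like $2\times 2$ matrices. Whenever one of the resulting binary signatures or block eigenvalue ratios is nonzero and not a root of unity, Lemma~\ref{polyatdia} (or Lemma~\ref{polyatnotdia} in the Jordan case) lets me interpolate. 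From this I obtain either $=_2$ together with $\ne_2$, which by Lemma~\ref{poly=_2}-type arguments lets me realize $=_4$ and enter $\#\mathrm{CSP}^2$, or a 4-ary signature in which one block is degenerate. The latter collapses to six-vertex form, and Step~1 applies. Hardness then follows from Theorem~\ref{CSP2dich} or Theorem~\ref{six vertex}, unless $f$ lands in one of the listed transformable classes.

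\textbf{Step 3 (root-of-unity cases).} This is the main obstacle. When every ratio produced in Step~2 is a root of unity, interpolation fails. Here I would exploit the block structure. The $2\times2$ blocks $\begin{psmallmatrix}a&b\\y&x\end{psmallmatrix}$ and $\begin{psmallmatrix}c&d\\w&z\end{psmallmatrix}$, after scaling into $\mathrm{SL}(2,\mathbb{C})$, generate a finitely generated group. If all its elements have finite order, Theorem~\ref{Schur} makes it finite, and Theorem~\ref{classification} places it, up to conjugacy, among $C_n, BD_{4n}, BT_{24}, BO_{48}, BI_{120}$. I would treat each family by explicit computation. For cyclic and binary dihedral groups, the conjugating matrix should give a $\mathcal{P}$- or $\mathcal{A}$-transformation. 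For the exceptional groups, the specific entries ($\pm1,\pm\frak{i}$, $(1\pm\frak{i})/2$) should force $f$ into $\mathcal{A}$ or $\mathcal{L}$ after transformation, or else produce an element of infinite order, which contradicts the assumption. The delicate part is the exhaustive bookkeeping in this finite-group case analysis. I expect $BI_{120}$ to be excluded by a simple trace (irrationality) argument.
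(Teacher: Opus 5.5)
The paper does not prove this statement; it imports it from \cite{Eightvertex}. Your tractability direction is fine, and so are your targets (the six-vertex dichotomy and \#CSP$^2$). The hardness plan, however, has a genuine gap at its core, Step~3.

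In $\mathrm{Holant}(\ne_2|f)$ with $f$ of eight-vertex form, every constructible binary signature is supported on $\{01,10\}$. To see this, take a gadget with $k$ copies of $f$, $\ell$ copies of $\ne_2$, and two dangling edges at copies of $f$. Counting degrees gives $4k=2\ell+2$, so $\ell$ is odd. Each $\ne_2$ carries exactly one $1$ and each copy of $f$ an even number, so an odd number of dangling edges carry $1$. In particular, contrary to Step~2, nothing supported on $\{00,11\}$ is constructible on the $f$ side. Composed through $\ne_2$, these matrices become diagonal. The group they generate is therefore abelian, and under your root-of-unity hypothesis it is merely cyclic. Schur's theorem and Theorem~\ref{classification} then give nothing beyond the hypothesis itself. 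Consistently with this, the paper's Lemma~\ref{Cn} hands the cyclic case back to Theorems~\ref{eightvertex} and~\ref{eightvertex=2}. The eight-vertex theorem is the base case of the subgroup-sequence method, not one of its outputs.

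Your substitute, the group generated by $\begin{psmallmatrix}a&b\\y&x\end{psmallmatrix}$ and $\begin{psmallmatrix}c&d\\w&z\end{psmallmatrix}$, is not what gadgets realize. A chain through $\ne_2^{\otimes2}$ multiplies outer block by outer block and inner block by inner block simultaneously, with $J=\begin{psmallmatrix}0&1\\1&0\end{psmallmatrix}$ interleaved. What chains realize are pairs $(B_{\mathrm{out}}J,B_{\mathrm{in}}J)$, one for each way of splitting the four variables into two pairs, multiplied componentwise. The outer block of every splitting contains $(f_{0000},f_{1111})$. An infinite-order element of your group therefore yields nothing to interpolate, and its periodicity does not follow from your hypothesis. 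A group-theoretic argument would have to live in this product and cope with interpolation seeing the eigenvalues of both components at once; Step~3 does neither. The per-group conclusions ($\mathcal{P}$ or $\mathcal{A}$ for cyclic and dihedral groups, $\mathcal{A}$ or $\mathcal{L}$ for the exceptional ones, $BI_{120}$ excluded by irrationality) have no mechanism behind them. Irrational traces are no obstruction when the weights are arbitrary complex numbers, and in the paper's analogous setting $BI_{120}$ requires the long Lemma~\ref{BI120}.

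Steps~1 and~2 also contain false claims.
\begin{itemize}
\item $Z$ is not a symmetry of $\ne_2$: $(\ne_2)Z^{-1}=(=_2)$ and $(\ne_2)Z=\frak{i}[1,0,-1]$. Nor does $Z$ preserve eight-vertex form: $(Z^{-1})^{\otimes4}$ sends the indicator of $0011$ to a signature of full support.
\item The matrices fixing $\ne_2$ up to scalar are exactly the diagonal and antidiagonal ones. Together with variable permutations, they never move $(f_{0000},f_{1111})$ into an inner position. So a vanishing inner pair with $ax\ne0$ does not give six-vertex form. An example is $M(f)=\mathrm{diag}(a,c,z,a)$, whose tractable case with $\beta^{16}=1$ in Lemma~\ref{twononzeroineight} is invisible to Theorem~\ref{six vertex}. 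Also, there are four pairs, not six.
\item In Step~2, a rank-one block is not a zero block, and only a zero outer pair gives six-vertex form. Since chains move both blocks at once, you cannot interpolate one block to degeneracy independently.
\item Having $=_2$ together with $\ne_2$ does not produce $=_4$. That needs a signature as in Lemma~\ref{g=_4} or Lemma~\ref{M(g)}.
\end{itemize}

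The missing idea is the tool the cited proof uses against the root-of-unity obstruction, which this paper reproduces for $=_2$ in Lemmas~\ref{mobius}--\ref{allnonzero}: M\"obius transformations. Feeding a constructible generalized (dis)equality of ratio $t$ into a loop of $f$ produces ratio $\varphi(t)$, where $\varphi$ is the M\"obius map of a block of $f$.
\begin{itemize}
\item If $\varphi$ does not preserve the unit circle, you get a ratio that is not a root of unity.
\item If it does preserve the circle and has infinite order, you get unboundedly many distinct binaries (Lemma~\ref{infinitetodegen}).
\item The remaining rigid configurations are organized by the zero pattern of the four pairs. They are settled by explicit gadgets giving $=_4$ or $\mathcal{EQ}_2$ (Lemma~\ref{M(g)}) followed by \#CSP$^2$, by the six-vertex dichotomy, and by results such as Corollary~\ref{affine support} and Lemma~\ref{foursamepair}.
\end{itemize}
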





\subsection{\texorpdfstring{The new tractable class $\mathcal{H}$}{The new tractable class H}}

\begin{definition}\label{Hform}
    A 4-ary signature $f$ has H-form if and only if
\[M(f)=\begin{pmatrix}
 0 &  0&  0& b\\
 0 & c & d &* \\
  0& w &  z& *\\
 y & * &  *&*
\end{pmatrix}\] or \[M(f)=\begin{pmatrix}
 * &  *&  *& b\\
 * & c & d &0 \\
  *& w &  z& 0\\
 y & 0 &  0&0
\end{pmatrix},\] and its reduced form $r(f)$ is a 4-ary signature with the matrix form
\[M(r(f))=\begin{pmatrix}
 0 &  0&  0& b\\
 0 & c & d &0 \\
  0& w &  z& 0\\
 y & 0 &  0&0
\end{pmatrix}.\]
\end{definition}
Note that for any instance
of Holant($\ne_2|f$), where $f$ has H-form, its value is the same as Holant($\ne_2|r(f)$).

\begin{definition}
$f\in\mathcal{H}$ if and only if $f'=(Z^{-1})^{\otimes 4}f$ has H-form and $\mathrm{Holant}$($\ne_2|r(f')$) is tractable.
\end{definition}

\begin{theorem}\label{ff'f''}
Let $f$ be a 4-ary signature such that $f'=(Z^{-1})^{\otimes 4}f$ has H-form. Then $\mathrm{Holant}$($=_2|f$) is \#P-hard unless $f\in\mathcal{H}$, in which case $\mathrm{Holant}$($=_2|f$) is tractable.
\end{theorem}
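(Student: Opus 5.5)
The plan is to move to the $Z$-basis and then work with the reduced form. Since $(=_2)Z^{\otimes 2}$ is a nonzero multiple of $\ne_2$, Theorem~\ref{Holtrans} gives
\[
\mathrm{Holant}(=_2|f)\equiv_T\mathrm{Holant}(\ne_2|f'), \qquad f'=(Z^{-1})^{\otimes4}f .
\]
By the remark after Definition~\ref{Hform}, every instance of $\mathrm{Holant}(\ne_2|f')$ has the same value as the corresponding instance of $\mathrm{Holant}(\ne_2|r(f'))$. So the two problems are literally the same computational problem, and
\[
\mathrm{Holant}(=_2|f)\equiv_T\mathrm{Holant}(\ne_2|r(f')).
\]
It remains to justify that remark and then apply Theorem~\ref{six vertex}. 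The justification is a counting argument on the bipartite graph between the $\ne_2$ vertices and the $f'$ vertices. Each edge assignment with nonzero weight makes every $\ne_2$ vertex see exactly one $0$ and one $1$, so the total number of $1$'s entering the $f'$ side equals the number of edges of type $\ne_2$, which is $2m$ for $m$ copies of $f'$. In the first H-form pattern, $f'$ is zero on Hamming weight $0$ and $1$, and the starred entries are exactly the inputs of weight $3$ or $4$. So the total weight is at least $2$ at every $f'$ vertex, and the sum is exactly $2m$. This forces weight exactly $2$ everywhere, and on weight-$2$ inputs $f'$ agrees with $r(f')$. The second pattern is symmetric, with weight at most $2$.

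\textbf{Tractable side.} If $f\in\mathcal{H}$, then by definition $\mathrm{Holant}(\ne_2|r(f'))$ is tractable, and the equivalence above gives tractability of $\mathrm{Holant}(=_2|f)$.

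\textbf{Hard side.} If $f\notin\mathcal{H}$, then $\mathrm{Holant}(\ne_2|r(f'))$ is not tractable. Now $r(f')$ is exactly in the six-vertex form of Theorem~\ref{six vertex}, so that dichotomy shows it is \#P-hard, and hardness transfers back to $\mathrm{Holant}(=_2|f)$ through the equivalence.

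\textbf{Main obstacle.} The only delicate points are checking that the holographic transformation really carries $=_2$ to a scalar multiple of $\ne_2$, and carrying out the weight-counting argument carefully so that the stars in $M(f')$ never contribute. Both are direct. The dichotomy of Theorem~\ref{six vertex} is what makes "tractable or \#P-hard" exhaustive.
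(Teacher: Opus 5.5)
Your proposal is correct and follows essentially the same route as the paper. You transform by $Z$ to get $\mathrm{Holant}(\ne_2|f')$, then use the fact that each $\ne_2$ vertex contributes exactly one $1$. That forces every $f'$ vertex to see Hamming weight exactly $2$, so you may replace $f'$ by $r(f')$.

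One difference from the paper is that you explicitly invoke the six-vertex dichotomy (Theorem~\ref{six vertex}) to make ``tractable or \#P-hard'' exhaustive, which the paper leaves implicit. A small wording slip: in your counting, the number of $1$'s equals the number of $\ne_2$ vertices (namely $2m$), not the number of ``edges of type $\ne_2$''.
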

\begin{proof}
     By a holographic transformation using $Z$, we have \[
\mathrm{Holant}(\ne_2|f')\equiv_T\mathrm{Holant}(=_2|f).
\] In an instance of Holant$(\ne_2|f')$, for any assignment such that the value of the instance is nonzero, the entries of weight less than or greater than 2 of $f'$ never appear since $\ne_2$ on the left-hand side contributes equal numbers of zeros and ones.
Thus, we have \[
    \mathrm{Holant}(\ne_2|r(f'))\equiv_T \mathrm{Holant}(\ne_2|f'),
    \]and the hardness or tractability of $\mathrm{Holant}(\ne_2|r(f'))$ implies that $\mathrm{Holant}(=_2|f)$ is \#P-hard or tractable.
\end{proof}

\section{\texorpdfstring{Reductions to CSP\textsuperscript{2}}{Reductions to CSP squared}}
In this paper, a basic reduction is to CSP$^2$. In this section, we summarize the cases that can be reduced to CSP$^2$.

\begin{lemma}\label{toCSP2}(\cite{PlanarCSP})
    For any set of signatures $\mathcal{F}$,
    \[
\mathrm{CSP}^2(\mathcal{F}) \le_T \mathrm{Holant}(=_2|\mathcal{F},=_4) .
\]
\end{lemma}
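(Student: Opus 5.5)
We want to show $\#\mathrm{CSP}^2(\mathcal{F}) \le_T \mathrm{Holant}(=_2|\mathcal{F},=_4)$. Recall that $\#\mathrm{CSP}^2(\mathcal{F})\equiv_T\mathrm{Holant}(\mathcal{EQ}_2|\mathcal{F})$. So the plan is to simulate every even equality $=_{2k}$ on the left-hand side using only $=_2$, $=_4$ and signatures from $\mathcal{F}$, while keeping the bipartite structure intact. The idea is to build $=_{2k}$ as a gadget made of copies of $=_4$ joined by edges that carry an appropriate binary signature. The difficulty is that two left vertices cannot be joined directly. Every edge must pass through a right-hand vertex, and the only right-hand signatures available are those of $\mathcal{F}$.

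\textbf{Step 1 (the case $=_2\in$ right side).} If we could place $=_2$ on the right, then a chain of $k-1$ copies of $=_4$ would do the job. We connect one edge of each copy to one edge of the next through a right-hand $=_2$. This merges the two equality constraints into $=_{4+4-2}=\,=_6$, and in general yields $=_{2k+2}$ from $k$ copies. Together with $=_2$ itself, this gives all of $\mathcal{EQ}_2$ on the left. The general statement, however, does not assume that $=_2$ is available on the right.

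\textbf{Step 2 (a self-contained merge using only the left side).} To avoid relying on anything on the right, I would use left-hand signatures to realize an identity connection. Observe that a right-hand vertex is also forced on any edge, so the real content is the following. Given a left $=_{2a}$ and a left $=_{2b}$, we want to produce $=_{2a+2b}$ using an intermediate right-hand gadget with matrix form $I$. I would instead exploit $=_4$ itself. Take a left $=_4$ and use two of its edges as a ``through'' connector. That is, connect edge $x_1$ of $=_4$ to a dangling variable and edge $x_2$ to another, while leaving $x_3,x_4$ as additional external edges. All four variables are then forced equal, so the gadget behaves like a larger even equality on the left. Hence the whole family $\mathcal{EQ}_2$ is generated by repeatedly identifying variables. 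The real issue is identifying a variable of one left vertex with a variable of another left vertex. This requires a right-hand binary with matrix $I$ up to scalar.

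\textbf{Main obstacle.} The main obstacle is obtaining a right-hand $=_2$ (or a scalar multiple of the identity) from $\mathcal{F}\cup\{=_4\}$ on the left alone. My approach is to take any right-hand signature $f\in\mathcal{F}$ and combine it with left $=_4$ and $=_2$. The goal is to form a right-side binary gadget $h$, and then invoke Lemma~\ref{poly=_2}. That lemma says a nondegenerate binary $h$ on the right lets us interpolate $=_2$ on the right (in the bipartite notation, the right-hand $=_2$). Concretely, for $f$ of arity $n$, I would connect all but two variables of $f$ pairwise through left $=_2$'s, or tie them together through a left $=_4$. This produces candidate binary signatures on the right, and I would try to show that one of them is nondegenerate. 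If all such gadgets are degenerate or zero for every $f\in\mathcal{F}$, I expect $\mathcal{F}$ to be essentially degenerate in the relevant sense. In that case $\#\mathrm{CSP}^2(\mathcal{F})$ can be computed directly, or reduced trivially, so the reduction holds vacuously.

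Once a right-hand $=_2$ is available, Step 1 completes the simulation. The reduction then follows by replacing each $=_{2k}$ in a $\#\mathrm{CSP}^2(\mathcal{F})$ instance with its $=_4$ chain.
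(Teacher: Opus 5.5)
\textbf{Gap: you have put $=_4$ on the wrong side.} In $\mathrm{Holant}(=_2|\mathcal{F},=_4)$ the bar separates the left set $\{=_2\}$ from the right set $\mathcal{F}\cup\{=_4\}$. So $=_4$ sits on the \emph{right}, next to $\mathcal{F}$, and the only left-hand signature is $=_2$. There is no ``left $=_4$'' or ``left $=_{2a}$'' to work with, so Step 2 uses signatures that are not available.

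Read correctly, your ``main obstacle'' disappears, because the connectors you were looking for are exactly the left-hand $=_2$ vertices. Your Step 1 with the sides exchanged is the whole proof. Chain $k-1$ copies of the right-hand $=_4$, linking each adjacent pair through one left $=_2$. This realizes $=_{2k}$ for every $k\ge 2$ as a right-hand gadget, since the arity is $4(k-1)-2(k-2)=2k$. Now take an instance of $\mathrm{Holant}(\mathcal{EQ}_2|\mathcal{F})\equiv_T\#\mathrm{CSP}^2(\mathcal{F})$. Keep each left $=_2$ as it is, and replace each left $=_{2k}$ with $k\ge2$ by this gadget. Put a fresh left $=_2$ on each of the gadget's $2k$ dangling edges, since those edges leave right-hand vertices, and attach them to the $\mathcal{F}$-vertices the original $=_{2k}$ was adjacent to. The graph stays bipartite and the Holant value is unchanged. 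Equivalently, $\mathrm{Holant}(=_2|\mathcal{F},=_4)$ is just $\mathrm{Holant}(\mathcal{F},=_4)$, where every even equality is a gadget built from $=_4$.

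As written, your argument does not go through, because the step meant to overcome the nonexistent obstacle is unsound. You suggest that if every binary gadget you build from $\mathcal{F}$ is degenerate or zero, then $\mathcal{F}$ is ``essentially degenerate,'' $\#\mathrm{CSP}^2(\mathcal{F})$ is directly computable, and the reduction holds vacuously. The first inference fails. For the nondegenerate signature $f=[1,0,-1,0,1]$, every self-loop $f^{ij}_{=_2}$ vanishes identically (compare Lemma~\ref{sym}). So the pairwise contractions you propose yield only the zero signature, even though $f$ is not degenerate. The second inference is not argued at all: a reduction holds ``vacuously'' only if you exhibit a polynomial-time algorithm for $\#\mathrm{CSP}^2(\mathcal{F})$, which is a classification question, not something you can wave through. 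Drop the interpolation detour through Lemma~\ref{poly=_2} entirely and use the direct gadget replacement above.
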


\begin{lemma}\label{g=_4}(\cite{PlanarCSP})
    Let $f$ be a 4-ary signature with the signature matrix \[M(f)=\begin{pmatrix}
  a&  0& 0 & b\\
 0 &0  & 0 & 0\\
  0&  0& 0 &0 \\
 y & 0 & 0 &x
\end{pmatrix},\] where $\begin{pmatrix}
 a & b\\
 y &x
\end{pmatrix}$ has full rank. Then, for any signature set $\mathcal{F}$,
\[
\mathrm{Holant}(=_2|\mathcal{F},f,=_4) \equiv_T \mathrm{Holant}(=_2|\mathcal{F},f) .
\]
\end{lemma}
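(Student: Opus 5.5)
The direction $\mathrm{Holant}(=_2|\mathcal{F},f)\le_T\mathrm{Holant}(=_2|\mathcal{F},f,=_4)$ is trivial, so the plan concerns only realizing $=_4$ (or interpolating it) from $f$ and $=_2$. First I would read off the structure of $f$. Its support is contained in $\{0000,0011,1100,1111\}$, so $f(x_1,x_2,x_3,x_4)=\chi_{x_1=x_2}\chi_{x_3=x_4}\,g(x_1,x_3)$, where $g$ is the binary signature with $M(g)=N:=\begin{pmatrix} a & b\\ y & x\end{pmatrix}$. In words, $f$ is $=_3$ on $(x_1,x_2,\cdot)$ and $=_3$ on $(x_3,x_4,\cdot)$, joined through the full-rank binary $g$.

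Second, chain copies of $f$ through the matrix form $M_{x_1,x_2,x_3,x_4}(f)$. This matrix acts on the span of $e_{00},e_{11}$ exactly as $N$ does, and it is zero elsewhere. Connecting $x_3,x_4$ of one copy to $x_1,x_2$ of the next, with $=_2$ in between because of the bipartite setting, gives a 4-ary signature of the same shape whose middle binary is $N^s$. The identity $=_4$ is exactly this shape with middle binary $I$. So the goal becomes: realize the middle matrix $I$, up to a nonzero scalar.

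Third, split by the order of $N$. If $N$ has finite projective order $r$, then $N^r=cI$ with $c\neq0$. The $r$-fold chain then equals $c\cdot(=_4)$, and we are done by scaling.

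Otherwise, I would interpolate exactly as in Lemma~\ref{polyatdia} and Lemma~\ref{polyatnotdia}. Write $N=P\,\mathrm{diag}(1,t)P^{-1}$ with $t$ not a root of unity, or $N=P\begin{pmatrix} t&1\\0&t\end{pmatrix}P^{-1}$ in the Jordan case. Replace each occurrence of $=_4$ in an instance with $m$ such occurrences by the $s$-chain $f_s$. Then factor out the $P$ and $P^{-1}$ parts, which are absorbed into the outer $=_3$-blocks as binary modifications and hence are independent of $s$. Stratify assignments as in those lemmas. This yields a Vandermonde system in $t^s$ (respectively $t/s$ after normalizing by $c_s=st^{s-1}$), from which we recover the polynomial's value at $\lambda=1$ (respectively $\lambda=t'$ with the corresponding inverse choice), that is, the middle matrix $I$. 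Equivalently, one may phrase this as first obtaining $N^{-1}$ via Corollary~\ref{polyto-1} applied to the "binary inside" $f$, and then chaining $f$ with that signature.

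The main obstacle is justifying that the interpolation lemmas apply when the binary $N$ is not a standalone binary signature but sits inside $f$ between two ternary equalities. I would handle this by observing that the stratification argument only requires a decomposition $f_s=\bigl(E\otimes E\bigr)\circ N^s$, where $E$ is the $=_3$-block. The remaining network $A$ then consists of the $E$-blocks, $P$, $P^{-1}$, and the rest of the instance, none of which depend on $s$. So the same Vandermonde argument goes through verbatim. One must also check that the connecting $=_2$ vertices do not introduce transposes. Since the chain contracts $x_3,x_4$ with $x_1,x_2$ through $=_2$, the matrix product is plain $N\cdot N$, and no transpose appears.
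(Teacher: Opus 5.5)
Your proof is correct and follows the same chain-and-interpolate route that the paper relies on. The paper only cites \cite{PlanarCSP} for this lemma, but its proof of Lemma~\ref{twozeropair} reuses the same procedure: chain copies of $f$ so that the middle block becomes $N^s$, use $N^r=cI$ directly when $N$ has finite projective order, and otherwise interpolate through the Jordan form of $N$, with the $s$-independent parts (including the $=_3$-blocks, $P$ and $P^{-1}$) absorbed into the remaining network.

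One small slip: in the non-diagonalizable case, no single evaluation $P\begin{pmatrix}\lambda&1\\0&\lambda\end{pmatrix}P^{-1}$ equals $I$, so the phrase ``that is, the middle matrix $I$'' is wrong there. You do need your fallback of interpolating the signature with middle block $N^{-1}$ and chaining it with $f$. Alternatively, read off the coefficient $a_m$ of $\lambda^m$, which is exactly the value of the instance whose middle blocks are $I$.
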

\begin{lemma}\label{twozeropair}
    Let $f$ be a 4-ary signature with the signature matrix \[M(f)=\begin{pmatrix}
 0 &  0& 0 &0 \\
  0& c & d & 0\\
 0 & w &  z& 0\\
  0&  0&  0&0
\end{pmatrix},\] where $\begin{pmatrix}
 c & d\\
 w &z
\end{pmatrix}$ has full rank. Then, for any signature set $\mathcal{F}$,
\[
\mathrm{Holant}(=_2|\mathcal{F},f,=_4) \equiv_T \mathrm{Holant}(=_2|\mathcal{F},f) .
\]
\end{lemma}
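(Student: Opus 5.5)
We would realize, from copies of $f$ alone, a 4-ary signature $g$ of the form in Lemma~\ref{g=_4}: $g$ is supported on inputs with $y_1=y_2$ and $y_3=y_4$, and the $2\times2$ matrix $\begin{psmallmatrix} g_{0000}&g_{0011}\\ g_{1100}&g_{1111}\end{psmallmatrix}$ has full rank. Lemma~\ref{g=_4} applied to $\mathcal{F}\cup\{f\}$ then gives $\mathrm{Holant}(=_2|\mathcal{F},f,g,=_4)\equiv_T\mathrm{Holant}(=_2|\mathcal{F},f,g)$. Since $g$ is realizable from $f$, the right-hand side is equivalent to $\mathrm{Holant}(=_2|\mathcal{F},f)$, which proves the statement. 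The direction $\ge_T$ is trivial.

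The construction uses the chain gadget with re-ordered inputs. The support of $f$ is $\{0101,0110,1001,1010\}$, i.e.\ $x_1\ne x_2$ and $x_3\ne x_4$. Hence in the ordering $A=M_{x_1,x_3,x_2,x_4}(f)$ the row $(x_1,x_3)$ determines the column $(x_2,x_4)=(\bar x_1,\bar x_3)$, so $A$ is anti-diagonal with entries $c,d,w,z$ in rows $00,01,10,11$. The same holds for every ordering pairing $\{x_1,x_3\}$ against $\{x_2,x_4\}$, or $\{x_1,x_4\}$ against $\{x_2,x_3\}$; each such ordering is anti-diagonal with its four entries permuted according to the ordering.

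The product of two anti-diagonal matrices, i.e.\ connecting two copies of $f$ through a chain, is diagonal. A diagonal $M_{y_1,y_3,y_2,y_4}(g)$ means exactly $y_1=y_2$ and $y_3=y_4$, which is the support required in Lemma~\ref{g=_4}. For example:
\begin{itemize}
\item $A\cdot A$ gives the core matrix $\begin{psmallmatrix} cz&dw\\ dw&cz\end{psmallmatrix}$;
\item $A\cdot A^{T}$ (the second copy read in the ordering $M_{x_2,x_4,x_1,x_3}$) gives the core matrix $\begin{psmallmatrix} c^2&d^2\\ w^2&z^2\end{psmallmatrix}$.
\end{itemize}
Both cores have full rank exactly when $c^2z^2\ne d^2w^2$. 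The full-rank hypothesis on $\begin{psmallmatrix} c&d\\ w&z\end{psmallmatrix}$ excludes $cz=dw$, so the generic case is handled.

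The main obstacle is the remaining case $cz=-dw\neq0$, where these two-copy gadgets degenerate to multiples of $\begin{psmallmatrix}1&-1\\-1&1\end{psmallmatrix}$. Here we would first enumerate the other chain pairings, swapping the roles of $x_1,x_3$ or of $x_2,x_4$ in one factor. Such a swap permutes the anti-diagonal entries, e.g.\ $d\leftrightarrow w$, and can produce cores such as $\begin{psmallmatrix} cz&d^2\\ w^2&cz\end{psmallmatrix}$, whose determinant $c^2z^2-d^2w^2$ may still vanish. If every two-copy chain fails, we would pass to longer chains of three or four copies, in which the diagonal entries become longer products of $c,d,w,z$. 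Alternatively, we would first add a self-loop to a copy of $f$ with $=_2$ to obtain the binary signature with matrix $\begin{psmallmatrix}0&c+d\\ w+z&0\end{psmallmatrix}$ (or its variant) and use it as a binary modification that rescales or swaps entries before chaining. The goal is to break the sign relation $cz=-dw$ and obtain an invertible core. If the resulting binary signature has infinite projective order, Lemma~\ref{polyatdia} allows interpolating a suitable diagonal modification directly.
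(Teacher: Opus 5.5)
You handle the generic case correctly: when $c^2z^2\ne d^2w^2$, the chain $A\cdot A^{T}$ gives a signature of the form in Lemma~\ref{g=_4}. The gap is the case $cz=-dw\neq0$. The full-rank hypothesis allows it (e.g.\ $c=d=z=1$, $w=-1$), you leave it open, and none of the fallbacks you list can close it.

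Here is why they all fail. In any ordering where $M(f)$ is anti-diagonal, the entries in rows $00,11$ sit on complementary inputs, and so do those in rows $01,10$. So the two ``complementary products'' are $cz$ and $dw$. These products multiply along chains: an anti-diagonal times an anti-diagonal is diagonal with products $(\alpha_0\alpha_3)(\beta_0\beta_3)$ and $(\alpha_1\alpha_2)(\beta_1\beta_2)$, and similarly for the other combinations. A diagonal binary modification $\operatorname{diag}(s_0,s_1)$ multiplies both products by $s_0s_1$. Hence, when $cz=-dw$, every even chain of anti-diagonal copies yields a diagonal $g$ with $g_{0000}g_{1111}=g_{0011}g_{1100}$, i.e.\ a rank-one core. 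This holds with or without diagonal modifications, including ones obtained by interpolation.

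Diagonal modifications are also all that $=_2$-loops give you: $f^{12}_{=_2}=f^{34}_{=_2}=0$, $f^{13}_{=_2}=(z,0,0,c)$ and $f^{14}_{=_2}=(w,0,0,d)$. The matrix $\begin{psmallmatrix}0&c+d\\ w+z&0\end{psmallmatrix}$ you cite is the $\ne_2$-loop $f^{34}_{\ne_2}$, which is not available in $\mathrm{Holant}(=_2|\mathcal{F},f)$. So your swaps and longer chains cannot break the sign relation.

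The missing idea is the ordering you never use: $M_{x_1,x_2,x_3,x_4}(f)=\operatorname{diag}(0,B,0)$ with $B=\begin{psmallmatrix}c&d\\ w&z\end{psmallmatrix}$. Chains in this ordering realize ordinary matrix powers $B^s$, which do mix the entries. The paper takes this route. If $B$ has finite projective order $r$, then $r$ copies give a middle block that is a nonzero multiple of $I$. Otherwise it interpolates along $B^s$ via the Jordan form of $B$, as in Lemmas~\ref{polyatdia} and~\ref{polyatnotdia}, to obtain middle block $I$. Read as $M_{x_1,x_3,x_2,x_4}$, that signature has the form of Lemma~\ref{g=_4} with core $\begin{psmallmatrix}0&1\\ 1&0\end{psmallmatrix}$.

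You could also repair your own argument without interpolation. Replace $f$ by the chain of $k$ copies in this ordering, whose middle block is $B^k=p_kB+q_kI$, where $p_0=0$, $p_1=1$, $p_{k+1}=(\operatorname{tr}B)p_k-(\det B)p_{k-1}$ and $q_k=-(\det B)p_{k-1}$. Using $cz+dw=0$, its entries satisfy $c_kz_k+d_kw_k=-(\det B)\,p_{k-1}p_{k+1}$ and $c_kz_k-d_kw_k=(\det B)^k\neq0$. The first quantity is nonzero for $k=2$ unless $(\operatorname{tr}B)^2=\det B$, and in that case it is nonzero for $k=3$. Your two-copy chain applied to this new signature then finishes the proof.
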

\begin{proof}
 Note that $\begin{pmatrix}
  c& d\\
 w &z
\end{pmatrix}$ has full rank, then by the Jordan normal form of $\begin{pmatrix}
  c& d\\
 w &z
\end{pmatrix}$, there exists $P\in \mathrm{SL}(2,\mathbb{C})$ such that $\begin{pmatrix}
  c& d\\
 w &z
\end{pmatrix}=P\begin{pmatrix}
  1& 0\\
 0 &\lambda
\end{pmatrix}P^{-1}$, or $\begin{pmatrix}
  c& d\\
 w &z
\end{pmatrix}=P\begin{pmatrix}
  1& \lambda\\
 0 &1
\end{pmatrix}P^{-1}$ when there is a double root, both up to a scalar. This implies that \[M(f)=\begin{pmatrix}
  0& \textbf{0} & 0\\
 \textbf{0} & P & \textbf{0}\\
 0 & \textbf{0} &0
\end{pmatrix}\begin{pmatrix}
 0 &  0& 0 &0 \\
  0& 1 & 0 & 0\\
 0 & 0 &  \lambda& 0\\
  0&  0&  0&0
\end{pmatrix}\begin{pmatrix}
  0& \textbf{0} & 0\\
 \textbf{0} & P^{-1} & \textbf{0}\\
 0 & \textbf{0} &0
\end{pmatrix}\] or \[M(f)=\begin{pmatrix}
  0& \textbf{0} & 0\\
 \textbf{0} & P & \textbf{0}\\
 0 & \textbf{0} &0
\end{pmatrix}\begin{pmatrix}
 0 &  0& 0 &0 \\
  0& 1 & \lambda & 0\\
 0 & 0 &  1& 0\\
  0&  0&  0&0
\end{pmatrix}\begin{pmatrix}
  0& \textbf{0} & 0\\
 \textbf{0} & P^{-1} & \textbf{0}\\
 0 & \textbf{0} &0
\end{pmatrix}.\] Then we can construct \[\begin{pmatrix}
 0 &  0& 0 &0 \\
  0& 1 & 0 & 0\\
 0 & 0 &  1& 0\\
  0&  0&  0&0
\end{pmatrix}\] by polynomial interpolation for both cases. The polynomial interpolation procedure is the same as in Lemma~\ref{g=_4}. Further, by Lemma~\ref{g=_4} we can construct $(=_4)$.
\end{proof}
\begin{lemma}\label{gen=4}
    Let $f$ be a 4-ary signature with the signature matrix \[M(f)=\begin{pmatrix}
  0&  0& 0 & 0\\
 a &0  & 0 & b\\
  y&  0& 0 &x \\
 0& 0 & 0 &0
\end{pmatrix}\] where $\begin{pmatrix}
 a & b\\
 y &x
\end{pmatrix}$ has full rank. Then, for any signature set $\mathcal{F}$, at least one of the following holds:
\begin{itemize}
    \item
    $
\mathrm{Holant}(=_2|\mathcal{F},f,h) \equiv_T \mathrm{Holant}(=_2|\mathcal{F},f),
$ where $h$ is a nonzero degenerate binary signature.
\item $
\mathrm{Holant}(=_2|\mathcal{F},f,=_4) \equiv_T \mathrm{Holant}(=_2|\mathcal{F},f);
$
\end{itemize}
\end{lemma}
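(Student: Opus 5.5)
The plan is to build, from two copies of $f$, a 4-ary gadget of the form required by Lemma~\ref{twozeropair}, and then invoke that lemma to obtain $(=_4)$. This would establish the second alternative of the statement; I expect it to be the only one needed.

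First, I would read off the structure of $f$ from its matrix. The only nonzero entries are $f_{0100}=a$, $f_{0111}=b$, $f_{1000}=y$ and $f_{1011}=x$. Thus $f$ is supported on inputs with $x_1\neq x_2$ and $x_3=x_4$. On this support its value is $A_{x_1,x_3}$, where $A=\begin{pmatrix} a & b\\ y & x\end{pmatrix}$ is full rank by hypothesis. In matrix language, $M_{x_1,x_2,x_3,x_4}(f)$ has nonzero rows only at $01,10$ and nonzero columns only at $00,11$.

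The naive chain of $f$ with itself, $M(f)M(f)$, is identically zero, because the columns of the first factor meet rows $00,11$ of the second factor, and those rows vanish. So instead I would connect $x_3,x_4$ of one copy of $f$ to $x_3,x_4$ of a second copy, through two $(=_2)$ vertices on the left side, which keeps the gadget inside $\mathrm{Holant}(=_2|\mathcal{F},f)$. The resulting 4-ary signature has matrix
\[
M_{x_1,x_2,x_3,x_4}(f)\,M_{x_3,x_4,x_1,x_2}(f)=M(f)M(f)^T .
\]
Its rows and columns are supported only on indices $01,10$, and the middle $2\times 2$ block equals
\[
\begin{pmatrix} a^2+b^2 & ay+bx\\ ay+bx & y^2+x^2\end{pmatrix}=AA^T .
\]

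Since $\det(AA^T)=\det(A)^2\neq 0$, this gadget has exactly the form required in Lemma~\ref{twozeropair}, with a full-rank middle block. That lemma then realizes $(=_4)$, so the second alternative holds. Lemma~\ref{twozeropair} already covers both the diagonalizable and the nondiagonalizable Jordan forms, so the fact that $AA^T$ may fail to be diagonalizable causes no difficulty.

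I expect the main obstacle to be purely bookkeeping: checking the index conventions, namely that the bipartite connection through $(=_2)$ genuinely yields $M(f)M(f)^T$ rather than $M(f)M(f)$, and that the block lands in the $\{01,10\}$ position. Should some convention force a zero or degenerate product, I would fall back on extracting the binary signature $A$ (or $AA^T$) by the corresponding loop construction. I would then use Corollary~\ref{polytodegen} or Lemma~\ref{polyatdia} to interpolate a nonzero degenerate binary signature when that binary has infinite order, which would give the first alternative.
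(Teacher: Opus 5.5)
Your argument is correct, but it takes a different route from the paper's proof.

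\textbf{The paper's route.} The paper loops $f$ with $=_2$ on $x_3,x_4$ to get $f^{34}_{=_2}=(0,a+b,y+x,0)$.
\begin{itemize}
\item This binary cannot vanish identically, because $A=\begin{pmatrix} a&b\\ y&x\end{pmatrix}$ is nonsingular.
\item If exactly one of its two entries vanishes, it is the nonzero degenerate binary of the first alternative.
\item Otherwise, normalized to $(0,1,t,0)$ with $t\ne0$, it is applied as a binary modification on $x_2$. This moves the support of $f$ onto $\{0000,0011,1100,1111\}$ with outer block $\begin{pmatrix} at&bt\\ y&x\end{pmatrix}$, and Lemma~\ref{g=_4} then gives $=_4$ directly.
\end{itemize}

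\textbf{Your route.} You self-chain $f$ along $x_3,x_4$. Your computation is right: the result is $M(f)M(f)^{T}$, supported on rows and columns $01,10$, with middle block $AA^{T}$ and $\det(AA^T)=\det(A)^2\ne0$. Lemma~\ref{twozeropair} is proved earlier for arbitrary $\mathcal{F}$. Applied with $\mathcal{F}\cup\{f\}$ and combined with realizability of the gadget, it yields $=_4$.

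\textbf{Comparison.} Your version needs no case split. It also shows the degenerate-binary alternative is never actually needed, so the lemma could be stated with $=_4$ always obtainable. The cost is that you go through the interpolation inside Lemma~\ref{twozeropair}. There, the possibly nondiagonalizable Jordan form of $AA^T$ is interpolated down to an identity middle block, which is an input permutation of $\ne_4$, and only then is Lemma~\ref{g=_4} invoked. The paper's proof uses just one loop and one binary modification before calling Lemma~\ref{g=_4}.

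Your fallback paragraph is unnecessary, since the index bookkeeping does produce $M(f)M(f)^T$ and not a zero product.
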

\begin{proof}
Note that $f_{=_2}^{34}=(0,a+b,y+x,0)$. If $f_{=_2}^{34}\equiv0$, then $a=-b$ and $y=-x$, and $\begin{pmatrix}
 a & b\\
 y &x
\end{pmatrix}$ does not have full rank, a contradiction. If $a=-b$ or $y=-x$, then $h=f_{=_2}^{34}$ is a nonzero degenerate binary signature. If $a\ne-b$ and $y\ne -x$, up to a nonzero scalar, let $f_{=_2}^{34}=(0,1,t,0)$, where $t\ne0$.
By a binary modification to input $x_2$ of $f$ using the binary signature $(0,1,t,0)$, we construct a new 4-ary signature $f^*$ with signature matrix \[M(f^*)=\begin{pmatrix}
  at&  0& 0 & bt\\
 0 &0  & 0 & 0\\
  0&  0& 0 &0 \\
 y& 0 & 0 &x
\end{pmatrix}.\] Note that $\begin{pmatrix}
 at & bt\\
 y &x
\end{pmatrix}$ has full rank since $\begin{pmatrix}
 a & b\\
 y &x
\end{pmatrix}$ has full rank. Then by Lemma~\ref{g=_4} we construct $(=_4)$.
\end{proof}


\section{\texorpdfstring{Holant problems with a 4-ary signature $f$ and a degenerate binary signature}{Holant problems with a 4-ary signature f and a degenerate binary signature}}
\begin{lemma}\label{decompose}
    Let $f$ be a 4-ary signature, and $h=u_1\otimes u_2$ be a nonzero degenerate binary signature. Then in $\mathrm{Holant}(=_2|f,h)$, at least one of the following holds:
    \begin{itemize}
        \item \begin{equation}\label{frlemaadecompose}
            \mathrm{Holant}(=_2|f,h,u) \equiv_T \mathrm{Holant}(=_2|f,h),
        \end{equation}
         where $u$ is a nonzero unary signature;
        \item \begin{equation}\label{secrlemaadecompose}
            \mathrm{Holant}(\ne_2|f') \equiv_T\mathrm{Holant}(=_2|f),
        \end{equation}
        where $f'=(Z^{-1})^{\otimes 4}f$ has eight-vertex form.
    \end{itemize}
\end{lemma}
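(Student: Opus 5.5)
The plan is to exploit the fact that, inside any signature grid, an occurrence of $h=u_1\otimes u_2$ is simply two unary vertices $u_1$ and $u_2$ sitting on two unrelated edges. Every gadget therefore has even total arity, but it can "split off" a copy of $u_1$ or $u_2$. The proof splits on whether $u_1,u_2$ are isotropic for the bilinear form defined by $=_2$, i.e.\ whether $u_k\propto(1,\pm\frak{i})$.

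\textbf{Non-isotropic case.} Suppose some $u_k$, say $u_2$, satisfies $\langle u_2,u_2\rangle=u_2(0)^2+u_2(1)^2\neq 0$. Take two copies of $h$ and join their $u_2$-ends through $=_2$. This realizes $\langle u_2,u_2\rangle\, u_1\otimes u_1$, a nonzero multiple of $u_1^{\otimes 2}$. Consequently any instance using $u=u_1$ an even number of times can be simulated, with the known scalar divided out. For an odd number of uses, I would pair the extra copy of $u_1$ with one occurrence of $h$, absorbing its partner $u_2$ via a loop gadget with nonzero value. This gives \eqref{frlemaadecompose} with $u=u_1$, and symmetrically with $u=u_2$ if instead $\langle u_1,u_1\rangle\ne0$. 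The same pairing idea applies whenever $\langle u_1,u_2\rangle\ne0$ together with one extra nonvanishing contraction. The bookkeeping here is routine but must be written carefully.

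\textbf{Isotropic case.} Now assume both $u_1$ and $u_2$ are multiples of $(1,\frak{i})$ or $(1,-\frak{i})$. Apply the holographic transformation by $Z$ (Theorem~\ref{Holtrans}), which turns $=_2$ into $\ne_2$ and $f$ into $f'=(Z^{-1})^{\otimes4}f$. On the $f'$ side, the vectors $(1,\pm\frak{i})$ become, up to scalars, the pinnings $[1,0]$ and $[0,1]$. So after the transformation we have $\ne_2$, $f'$, and a degenerate binary signature that is a tensor product of pinnings.

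I would then form the gadgets obtained by attaching these pinnings, through $\ne_2$, to three of the four inputs of $f'$, or to one input of $f'$ while looping the other two with $\ne_2$. The resulting "unary" value together with the remaining pinning is a degenerate binary gadget whose nontrivial factor is a unary computed from entries of $f'$. If some such gadget has a nonzero new unary factor, the argument of the previous case (pair it with $h$ and divide out nonzero scalars) yields \eqref{frlemaadecompose}.

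\textbf{Concluding eight-vertex form.} Otherwise all these contractions vanish. Writing out the equations, as in the proof of Lemma~\ref{solveeightvertexform}, they say exactly that every weight-$1$ and weight-$3$ entry of $f'$ is zero. In other words, $f'$ has eight-vertex form, and $\mathrm{Holant}(\ne_2|f')\equiv_T\mathrm{Holant}(=_2|f)$ by Theorem~\ref{Holtrans}. This gives \eqref{secrlemaadecompose}.

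\textbf{Main obstacle.} I expect the main difficulty to be twofold. First, the parity and legitimacy issue: making precise what it means to add a single unary $u$ to an even-arity Holant problem, and showing that the pairing trick realizes an arbitrary number of copies of $u$ up to a known nonzero scalar. Second, checking that the system of vanishing contractions in the isotropic case forces all eight odd-weight entries of $f'$ to zero, and not merely some symmetric combinations of them.
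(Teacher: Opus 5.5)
Your argument is sound and follows the paper's case structure. After the $Z$-transformation the paper separates three cases:
\begin{itemize}
\item $st\ne0$ or $xy\ne0$. This is exactly your non-isotropic case, since $u(0)^2+u(1)^2=2st$ when $u=Z[s,t]^T$.
\item $u_1'\propto u_2'$.
\item $h'\propto[1,0]\otimes[0,1]$, where eight-vertex form is forced.
\end{itemize}

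The only real difference is the gadget in that last step. The paper uses your second gadget. It loops two inputs of $f'$ with $\ne_2$ to get $g'\in S(f'_{\ne_2})$ and attaches one end of $h'$. It then applies the pairing argument to $[1,0]\otimes[g'_{00},g'_{01}]$ and its companions, obtaining $g'_{00}=g'_{11}=0$. Since these entries are sums such as $f'_{0100}+f'_{1000}$, it then needs the linear system of Lemma~\ref{solveeightvertexform}.

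Your three-pinning gadget isolates single entries instead. One copy of $h'$ pins two inputs to $1$ and $0$, and a second copy pins a third input with its other end left dangling.
\begin{itemize}
\item With dangling factor $[1,0]$ you get $[1,0]\otimes[f'_{\alpha},f'_{\alpha'}]$ for an arbitrary weight-$1$ string $\alpha$; put the free input at a $0$-position of $\alpha$.
\item With dangling factor $[0,1]$ you get $[0,1]\otimes[f'_{\beta'},f'_{\beta}]$ for an arbitrary weight-$3$ string $\beta$; put the free input at a $1$-position of $\beta$.
\end{itemize}
A unary is obtained from $[1,0]\otimes[a,b]$ whenever $a\ne0$: by pairing if $b\ne0$, and directly if $b=0$. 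Symmetrically, one is obtained from $[0,1]\otimes[a,b]$ whenever $b\ne0$. So every odd-weight entry must vanish individually. Your worry about symmetric combinations therefore disappears, and Lemma~\ref{solveeightvertexform} is not needed.

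Two small repairs are needed.

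\emph{There is no odd case.} Suppose an instance of $\mathrm{Holant}(=_2|f,h,u)$ has $n_e$ copies of $=_2$, $n_f$ of $f$, $n_h$ of $h$ and $n_u$ of $u$. Counting edges gives $2n_e=4n_f+2n_h+n_u$, so $n_u$ is even. This is how the paper settles your ``legitimacy'' concern. Your proposed fix, absorbing a lone $u_2$ with a loop gadget, could not work anyway, since every gadget built from $=_2$, $f$, $h$ has even arity.

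\emph{Your isotropic case tacitly assumes $u_1\not\propto u_2$.} If both are multiples of the same isotropic vector, your gadgets pin every input to the same value and give no eight-vertex conclusion. In that case, however, $h$ itself is a nonzero multiple of $u_1^{\otimes 2}$ and replaces pairs of $u_1$ directly. This is the paper's second case, $u_1'\propto u_2'$.
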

\begin{proof}
    By a holographic transformation using $Z$, we have \[
   \mathrm{Holant}(\ne_2|f',h') \equiv_T\mathrm{Holant}(=_2|f,h),
    \]where $f'=(Z^{-1})^{\otimes 4}f$ and $h'=(Z^{-1})^{\otimes 2}h=u_1'\otimes u_2'$. Assume that $u_1'=[x,y]$ and $u_2'=[s,t]$.
    Clearly if $\mathrm{Holant}(\ne_2|f',h',u') \le_T \mathrm{Holant}(\ne_2|f',h')$ holds, then $\mathrm{Holant}(=_2|f,h,u) \le_T \mathrm{Holant}(=_2|f,h)$ holds.

    For an instance $\Omega'$ of $\mathrm{Holant}(\ne_2|f',h',u'_1)$, assume that $u'_1$ appears $m$ times.
    Note that $(\ne_2),f',h'$ all have even arities, thus $m$ is even.
    If $st\ne0$, by replacing each occurrence of $u_1'$ by the $u_1'$ in $h'$ and connecting the remaining $m$ $u_2'$ using $\ne_2$ pairwise, we construct an instance $\Omega''$ of $\mathrm{Holant}(\ne_2|f',h')$. We also have $(2st)^{\frac{m}{2}}\mathrm{Holant}_{\Omega'}(\ne_2|f',h',u'_1)=\mathrm{Holant}_{\Omega''}(\ne_2|f',h')$.  Then we have $\mathrm{Holant}(=_2|f,h,u_1) \le_T \mathrm{Holant}(=_2|f,h)$. Similarly, if $xy\ne0$, then we have $\mathrm{Holant}(=_2|f,h,u_2) \le_T \mathrm{Holant}(=_2|f,h)$. Further, if $u_1'=u_2'$ up to a nonzero scalar, by replacing two occurrences of $u_1'$ by $h'$, we have $\mathrm{Holant}_{\Omega'}(\ne_2|f',h',u'_1)=\mathrm{Holant}_{\Omega''}(\ne_2|f',h')$, then we have $\mathrm{Holant}(=_2|f,h,u_1) \le_T \mathrm{Holant}(=_2|f,h)$.

    Now we can assume that $xy=st=0$ and $u_1'\ne u_2'$.
    Without loss of generality, we assume that $u_1'=[1,0]$ and $u_2'=[0,1]$, up to a nonzero scalar. Further,
    for any binary signature $g'\in S(f'_{\ne_2})$, by connecting one variable of $g'$ and one variable of $h'$ using $\neq_2$ we have four degenerate signatures,  $[1,0]\otimes[{g'}_{00},{g'}_{01}]$, $[1,0]\otimes[{g'}_{00},{g'}_{10}]$, $[{g'}_{01},{g'}_{11}]\otimes[0,1]$, $[{g'}_{10},{g'}_{11}]\otimes[0,1]$. Note that we have $\mathrm{Holant}(\ne_2|f',h',g',u') \le_T \mathrm{Holant}(\ne_2|f',h',g')$ unless $[{g'}_{00},{g'}_{01}],[{g'}_{00},{g'}_{10}]$ is $[0,1]$ or $[0,0]$, and $[{g'}_{01},{g'}_{11}],[{g'}_{10},{g'}_{11}]$ is $[1,0]$ or $[0,0]$, up to a scalar. This implies that
    ${g'}_{00}={g'}_{11}=0$. Then, we have $S(f'_{\ne_2})\subseteq \{(0,b,c,0)|b,c\in\mathbb{C}\}$, by Lemma~\ref{solveeightvertexform}, $f'$ has eight-vertex form.
\end{proof}

The following lemma has been proved by Meng et al. in \cite{oddarity}.
\begin{lemma}\label{cdne0}(\cite{oddarity})
    Let $\mathcal{F}$ be a set of signatures, then $\mathrm{Holant}(=_2|\mathcal{F},[1,0])$ is \#P-hard except for the following cases:
    \begin{itemize}
        \item $\mathcal{F} \subseteq \left \langle Z\mathcal{M} \right \rangle$ or $ \left \langle ZX\mathcal{M} \right \rangle$;
        \item $\mathcal{F} \subseteq  \left \langle \mathcal{T} \right \rangle$;
        \item $\mathcal{F}\cup [1,0]$ is $ \mathcal{A}\text{-}$transformable;
        \item $\mathcal{F}$ is $ \mathcal{P}\text{-}$transformable;
        \item $\mathcal{F}\cup[1,0]$ is $ \mathcal{L}\text{-}$transformable,
    \end{itemize}
in which cases the problem is computable in polynomial time.
\end{lemma}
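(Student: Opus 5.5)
Lemma~\ref{cdne0} is the dichotomy of Meng et al.~\cite{oddarity} for $\mathrm{Holant}(=_2|\mathcal{F},[1,0])$. My plan is to import its proof from \cite{oddarity} rather than rebuild it, checking only that the argument carries over to the bipartite form used here.

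\emph{Tractability.} Each listed case is handled directly.
\begin{itemize}
\item If $\mathcal{F}\subseteq\langle\mathcal{T}\rangle$, every nondegenerate factor has arity at most $2$. The instance then splits into paths and cycles, evaluated by matrix products and traces.
\item If $\mathcal{F}\subseteq\langle Z\mathcal{M}\rangle$ or $\langle ZX\mathcal{M}\rangle$, a holographic transformation by $Z$ turns $=_2$ into $\ne_2$. Note $[1,0]$ maps to a multiple of $[1,\pm\frak{i}]$-type unaries; these should still be absorbed by the known $\langle\mathcal{M}\rangle$ algorithm, or be excluded by the precise statement in \cite{oddarity}.
\item The three transformable cases follow from Theorem~\ref{Holtrans} together with the tractability of $\mathcal{A}$, $\mathcal{P}$ and $\mathcal{L}$ (Theorems~\ref{CSP} and~\ref{CSP2dich}).
\end{itemize}

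\emph{Hardness.} Having $[1,0]$ alone, the aim is to also realize $[0,1]$, or to land in a case that is already classified.
\begin{enumerate}
\item Connect $[1,0]$ to signatures of $\mathcal{F}$ to produce pinned signatures of lower arity, in particular binary ones.
\item If some realizable binary signature $h$ has $h_{10}\neq 0$, chaining $[1,0]$ through $h$ yields a unary $[h_{00},h_{10}]$ that differs from $[1,0]$.
\item With two independent unaries available, repeated chaining and interpolation (Lemma~\ref{polyatdia}, Corollary~\ref{polytodegen}) give both pins $[1,0]$ and $[0,1]$. This places us in $\mathrm{Holant}^c$, and Theorem~\ref{Hcdich} then gives exactly the stated list. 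The list differs from Theorem~\ref{Hcdich} only in that $\mathcal{P}$-transformability of $\mathcal{F}$ suffices; this should be because the stabilizer of $[1,0]$ in the $\mathcal{P}$ case is compatible.
\item If instead every pinning stays in the span of $[1,0]$, the signatures are forced to be supported so that $\mathcal{F}$ is $\langle Z\mathcal{M}\rangle$-like, or \#CSP-like via Lemma~\ref{toCSP2} and Theorem~\ref{CSP2dich}.
\end{enumerate}

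\emph{Main obstacle.} The hard step is step 4: the case where every realizable unary is a multiple of $[1,0]$, so $[0,1]$ cannot be reached. Handling it requires the odd-arity machinery of \cite{oddarity}, which uses an $\mathrm{FP}^{\mathrm{NP}}$-style analysis. For that reason I would cite that proof rather than reproduce it, since the lemma is imported from \cite{oddarity}.
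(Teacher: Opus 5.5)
The paper gives no proof of this lemma; it imports it verbatim from \cite{oddarity}. So your decision to cite that reference is exactly the paper's route. The sketch you wrap around the citation, however, is not a valid argument and contains errors.

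On tractability, there are two problems. First, under $Z$ the pin becomes $Z^{-1}[1,0]=\tfrac{1}{\sqrt2}[1,1]$, not a $[1,\pm\frak{i}]$-type unary. It is $[1,\frak{i}]$ that maps to $[1,0]$, as in Theorem~\ref{theoremwithunary}. No worry is needed anyway, because every unary signature lies in $\mathcal{M}$, in $\mathcal{P}$ and in $\langle\mathcal{T}\rangle$. Second, that fact, and not any ``stabilizer compatibility'', is why the $\langle Z\mathcal{M}\rangle$, $\langle\mathcal{T}\rangle$ and $\mathcal{P}$ clauses do not mention $[1,0]$. In particular, $\mathcal{F}$ is $\mathcal{P}$-transformable iff $\mathcal{F}\cup\{[1,0],[0,1]\}$ is, so the $\mathcal{P}$ clause is not where this list differs from Theorem~\ref{Hcdich}. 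These three tractable cases follow at once from $\mathrm{Holant}(=_2|\mathcal{F},[1,0])\equiv_T\mathrm{Holant}(\mathcal{F},[1,0])\le_T\mathrm{Holant}^c(\mathcal{F})$ together with Theorem~\ref{Hcdich}. The $\mathcal{A}$- and $\mathcal{L}$-clauses go through Theorem~\ref{Holtrans} as you say.

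On hardness, step 3 is a genuine gap. A second unary $[a,b]$ with $b\ne0$ does not give $[0,1]$, for two reasons:
\begin{itemize}
\item No gadget forms the linear combination $[a,b]-a[1,0]$.
\item Lemma~\ref{polyatdia} and Corollary~\ref{polytodegen} need a binary signature of infinite projective order. A second unary does not supply one, and one need not exist. Even when one exists, they only produce matrices $P\,\mathrm{diag}(1,\lambda)P^{-1}$ in the fixed eigenbasis $P$. Their degenerate members are built from eigenvectors of that binary and need not involve $[0,1]$.
\end{itemize}
Producing a second pin, or otherwise escaping, is the real substance of the cited result. Note that the paper never derives this lemma from Theorem~\ref{Hcdich}; instead it reduces \emph{to} it in Theorem~\ref{theoremwithunary}.

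If $[0,1]$ were realized, invoking Theorem~\ref{Hcdich} would be legitimate. The reason is not that the two lists are ``exactly'' equal, but that each tractable clause there implies one of the clauses here. For example, $\mathcal{F}\subseteq\mathcal{L}$ makes $\mathcal{F}\cup[1,0]$ $\mathcal{L}$-transformable with $T=I$.

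Step 4 is, as you acknowledge, not carried out. So the only sound content of the proposal is the citation itself, which coincides with the paper.
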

Note that if we have a unary signature $[c, d]$ with $c^2+d^2\neq 0$ in Holant$(=_2|f)$, then by a holographic transformation using
$\frac{1}{c^2+d^2}\begin{pmatrix}
 c & -d\\
 d & c
\end{pmatrix}$,  we can transform $[c,d ]$ to $[1, 0]$ and  reduce it to Lemma~\ref{cdne0}(\cite{oddarity}). Thus we have the following corollary.
\begin{corollary}\label{cdne0coro}
    Let $\mathcal{F}$ be a set of signatures and $[c,d]$ be a unary signature with $c^2+d^2\ne0$, then $\mathrm{Holant}(=_2|\mathcal{F},[c,d])$ is \#P-hard except for the following cases:
    \begin{itemize}
        \item $\mathcal{F} \subseteq \left \langle Z\mathcal{M} \right \rangle$ or $ \left \langle ZX\mathcal{M} \right \rangle$;
        \item $\mathcal{F} \subseteq  \left \langle \mathcal{T} \right \rangle$;
        \item $\mathcal{F}\cup [c,d]$ is $ \mathcal{A}\text{-}$transformable;
        \item $\mathcal{F}$ is $ \mathcal{P}\text{-}$transformable;
        \item $\mathcal{F}\cup[c,d]$ is $ \mathcal{L}\text{-}$transformable,
    \end{itemize}
in which cases the problem is computable in polynomial time.
\end{corollary}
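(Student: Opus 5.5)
The plan is to reduce directly to Lemma~\ref{cdne0} by an orthogonal holographic transformation that sends $[c,d]$ to a multiple of $[1,0]$ while fixing $(=_2)$. I then check that each exceptional case of Lemma~\ref{cdne0} pulls back to the corresponding case in the statement.

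\textbf{Step 1 (choice of transformation).} Fix a square root $s=\sqrt{c^2+d^2}\neq 0$ and set
\[Q=\frac{1}{s}\begin{pmatrix} c & -d\\ d & c\end{pmatrix}.\]
Then $Q^TQ=I$ and $\det Q=1$, so $Q\in\mathrm{SO}(2,\mathbb{C})$. The normalization is by $s$ rather than by $c^2+d^2$, because orthogonality is needed. We have $(=_2)Q=(=_2)$, since $M(=_2)=I$ and $Q^TIQ=I$. Moreover $Q^{-1}[c,d]^T=Q^T[c,d]^T=\frac{1}{s}(c^2+d^2,0)^T=s[1,0]^T$.

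\textbf{Step 2 (the reduction).} By Theorem~\ref{Holtrans},
\[\mathrm{Holant}(=_2|\mathcal{F},[c,d])\equiv_T\mathrm{Holant}(=_2|Q^{-1}\mathcal{F},[1,0]),\]
where the nonzero scalar $s$ on the unary signature does not affect complexity. Put $\mathcal{F}'=Q^{-1}\mathcal{F}$ and apply Lemma~\ref{cdne0} to $\mathcal{F}'$. This gives \#P-hardness unless $\mathcal{F}'$ falls into one of the five listed cases, and in those cases tractability transfers back through the same equivalence.

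\textbf{Step 3 (translating the cases back).} This is the only step with content.
\begin{itemize}
\item $\langle\mathcal{T}\rangle$ is closed under any invertible transformation applied to every input. Hence $\mathcal{F}'\subseteq\langle\mathcal{T}\rangle$ if and only if $\mathcal{F}\subseteq\langle\mathcal{T}\rangle$.
\item For the transformable classes, the relevant pair is $(=_2|\,\cdot\,)$. Suppose $T$ witnesses that $\mathcal{F}'\cup\{[1,0]\}$ is $\mathscr{C}$-transformable. Then $QT$ works for $\mathcal{F}\cup\{[c,d]\}$ (up to the scalar $s$), because $(=_2)QT=(=_2)T$. The converse uses $Q^{-1}T$. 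This handles $\mathcal{A}$ and $\mathcal{L}$, and $\mathcal{P}$ similarly without the unary.
\item For $\langle Z\mathcal{M}\rangle$: a direct computation shows that $Z^{-1}QZ$ is diagonal, $\mathrm{diag}(\mu,\mu^{-1})$ with $\mu=(c\pm\frak{i}d)/s$, because rotations are diagonalized by $Z$. Thus $Q^{-1}Z\mathcal{M}=ZD\mathcal{M}$ with $D$ diagonal. Diagonal scaling preserves the support condition ``Hamming weight at most 1'', so $D\mathcal{M}=\mathcal{M}$. The same holds with $X$ inserted, because $X^{-1}DX$ is again diagonal. Hence membership in $\langle Z\mathcal{M}\rangle$ or $\langle ZX\mathcal{M}\rangle$ is invariant under $Q$.
\end{itemize}

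The main obstacle is this last verification, namely that $Z$ diagonalizes $Q$, so that the classes $\langle Z\mathcal{M}\rangle$ and $\langle ZX\mathcal{M}\rangle$ are stable under $Q$. Everything else is bookkeeping of holographic transformations and scalars.
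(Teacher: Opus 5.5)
Your proof is correct and follows the paper's route. The paper likewise applies the rotation $\begin{pmatrix} c & -d\\ d & c\end{pmatrix}$ to send $[c,d]$ to $[1,0]$ and then invokes Lemma~\ref{cdne0}; it normalizes by $c^2+d^2$ rather than $\sqrt{c^2+d^2}$, which is still orthogonal up to a nonzero scalar, so exact orthogonality is not actually needed. Your Step 3, which checks that each exceptional class is invariant under the rotation (in particular that $Z$ diagonalizes it, so $\langle Z\mathcal{M}\rangle$ and $\langle ZX\mathcal{M}\rangle$ are preserved), supplies the case translation that the paper leaves implicit.
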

\begin{lemma}\label{only10}
    Let $f'$ be a 4-ary signature. Then one of the following holds in $\mathrm{Holant}(\ne_2|f',[1,0]$):
    \begin{itemize}
        \item $\mathrm{Holant}(\ne_2|f',[1,0],[x,y])\equiv_T\mathrm{Holant}(\ne_2|f',[1,0]$), where $y\ne0$;
        \item $f'$ has H-form.
    \end{itemize}
\end{lemma}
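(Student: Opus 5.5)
The plan is to build only a few explicit gadgets in $\mathrm{Holant}(\ne_2|f',[1,0])$ that output right-side unary signatures. If every one of them is forced to be a multiple of $[1,0]$, the resulting linear constraints on $f'$ should say exactly that $f'$ has the second H-form of Definition~\ref{Hform}. The basic observation is that connecting an input $x_i$ of $f'$ through a left-side $\ne_2$ to a copy of $[1,0]$ pins $x_i$ to $1$. Any gadget whose single dangling edge comes from $f'$ is a legitimate right-side unary signature. If some such gadget yields $[x,y]$ with $y\neq 0$, it can be added freely, which gives the first alternative.

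\emph{Step 1: three pins.} Pin three of the four inputs of $f'$ to $1$. The output is $[f'_{1110},f'_{1111}]$, up to the position of the free variable. So either $f'_{1111}\neq0$ and we are done, or $f'_{1111}=0$.

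\emph{Step 2: one pin plus a loop.} Pin $x_i$ to $1$, join two other inputs $x_j,x_k$ by a self-loop through $\ne_2$, and leave $x_l$ dangling, where $\{i,j,k,l\}=\{1,2,3,4\}$. The loop forces $(x_j,x_k)\in\{(0,1),(1,0)\}$. The resulting unary therefore has second entry (at $x_l=1$) equal to the sum of the two weight-$3$ entries of $f'$ whose unique $0$ lies in position $j$ or position $k$. Write $e_m$ for the weight-$3$ entry with its $0$ in position $m$. Ranging over the choices of $(i,l)$, we obtain every pairwise sum $e_j+e_k$ with $j\neq k$ as the second entry of a constructible unary. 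Hence either some $e_j+e_k\neq0$ and we are done, or all pairwise sums vanish. In the latter case, using any three indices, $e_1+e_2=e_1+e_3=e_2+e_3=0$ forces $e_1=e_2=e_3=0$, and likewise $e_4=0$.

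\emph{Step 3: conclusion.} If no gadget produces a unary with nonzero second entry, then $f'_{1111}=0$ and all weight-$3$ entries vanish. Reading off $M(f')$, this means the entries $f'_{0111},f'_{1011},f'_{1101},f'_{1110},f'_{1111}$ are zero. These are exactly the zeros in the last row and column prescribed by the second matrix in Definition~\ref{Hform}, with the weight-$\le 1$ entries unconstrained. So $f'$ has H-form.

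The only delicate point is the bookkeeping in Step 2. We must check that the loop through $\ne_2$ really contributes the sum of the two weight-$3$ entries, with matching signs and no extra factor. We also need that all $\binom{4}{2}$ pairs $\{j,k\}$ actually arise as the looped pair for a suitable choice of the pinned and free positions. I expect this to be a short direct verification rather than a genuine obstacle.
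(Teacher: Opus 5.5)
Your proposal is correct and is essentially the paper's proof. The paper likewise pins inputs to $1$ by routing $[1,0]$ through $\ne_2$, and uses $\ne_2$-loops on pairs of inputs to force every pairwise sum of weight-$3$ entries to vanish; it phrases this through an intermediate binary signature $h$ with $h_{11}\neq0$ that is then pinned. It then uses the triple pin $[f'_{1110},f'_{1111}]$ to kill $f'_{1111}$, which yields the second H-form. The only difference from your version is the order of the steps, and your bookkeeping of the loop contribution $e_j+e_k$ and your coverage of all six pairs are both correct.
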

\begin{proof}
    Let $h$ be a binary signature that we can construct in $\mathrm{Holant}(\ne_2|f',[1,0]$). If $h_{11}\ne0$, then we have $h_{[0,1]}^1=[h_{10},h_{11}]$ and complete the proof. Thus we can assume that any binary signature we construct has the form $(a,b,c,0)$. Note that ${f'(x_i,x_j,x_k,x_{\ell})}_{\ne_2}^{ij}=(*,*,*,f'_{x_i=0,x_j=1,x_i=1,x_i=1}+f'_{x_i=1,x_j=0,x_i=1,x_i=1})$ for any permutation $(i,j,k,\ell)$ of $(1,2,3,4)$. Then all the weight 3 entries of $f'$ are zero. Further, if $f'_{1111}\ne0$, then we have ${f'}_{[0,1]^{\otimes3}}^{123}=[f'_{1110},f'_{1111}]$ and complete the proof. Thus all the weight 3 and weight 4 entries of $f'$ are zero. By Definition~\ref{Hform}, $f'$ has H-form.
\end{proof}
\begin{lemma}\label{both10and01}
    Let $f'$ be a 4-ary signature. Then one of the following holds in $\mathrm{Holant}(\ne_2|f',[1,0],[0,1]$):
    \begin{itemize}
        \item $f'$ has eight-vertex form;
        \item $f'$ has H-form;
        \item we can construct a generalized equality signature with arity less than 4, i.e. we have
        \[\mathrm{Holant}(\ne_2|f',[1,0],[0,1],[a,b])\equiv_T\mathrm{Holant}(\ne_2|f',[1,0],[0,1]),\] or \[\mathrm{Holant}(\ne_2|f',[1,0],[0,1],[a,0,b])\equiv_T\mathrm{Holant}(\ne_2|f',[1,0],[0,1]),\] or
        \[\mathrm{Holant}(\ne_2|f',[1,0],[0,1],[a,0,0,b])\equiv_T\mathrm{Holant}(\ne_2|f',[1,0],[0,1]),\] where $ab\ne0$.

    \end{itemize}
\end{lemma}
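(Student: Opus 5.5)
Having both pins $[1,0]$ and $[0,1]$ on the right side of $\ne_2$ lets us fix any input of $f'$ to $1$ or to $0$: a $[1,0]$ joined through $\ne_2$ forces the edge to $1$, and $[0,1]$ forces it to $0$. The plan is to extract low-arity signatures by pinning and looping. Either some of them are generalized equalities, or their vanishing forces enough zero entries of $f'$ to give eight-vertex form or H-form.

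\textbf{Pinning two inputs.} For each pair $\{i,j\}$, pin the other two variables to each of the four patterns. This gives binary signatures $g$ whose entries are the $2\times2$ blocks of $M(f')$. If some such $g$ has the form $(a,0,0,b)$ with $ab\ne0$, we obtain $[a,0,b]$ up to flipping variables. Note that a flip is available: it is realized by a $\ne_2$ path, and the middle vertex of that path can be absorbed because, on the bipartite right side, joining two $\ne_2$ through a binary $(=_2)$ is not allowed directly. So we work only with the gadget and use $(0,a,b,0)$ appropriately.

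\textbf{Pinning three inputs.} Similarly, pinning three inputs gives unaries $[f'_{\dots0},f'_{\dots1}]$. If one has both entries nonzero, it is $[a,b]$ with $ab\ne0$, and we are done. So assume every unary obtained by pinning three inputs has at most one nonzero entry. Also assume no arity-2 or arity-3 pinning or loop gadget is a generalized equality.

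\textbf{Combinatorial step.} Under these assumptions, the support of $f'$ restricted to any line of the cube $\{0,1\}^4$ (two inputs differing in one bit) has size at most $1$. Hence the support of $f'$ is an independent set in the 4-cube, so every weight class is constrained by its neighbours. Because the 4-cube is bipartite, the natural maximal supports are the even-weight inputs, which give eight-vertex form, and the odd-weight inputs. Mixed supports are possible, but the $2\times2$ blocks force each binary pinning to be of the form $(a,0,0,b)$ or $(0,a,b,0)$ or degenerate.

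\textbf{Arity-3 gadgets.} Next use loops with $\ne_2$ and pinning one input to get ternary signatures such as $f'^{ij}_{\ne_2}$ pinned, or $f'$ with one pin. These are supported on weights, and the one-pin ternary, restricted to an independent support, is $[a,0,0,b]$ exactly when the support contains a complementary pair. The aim is to show the following:
\begin{itemize}
\item if the support of $f'$ contains any odd-weight point, then either some ternary pinning is a generalized equality $[a,0,0,b]$, or all odd-weight support lies on one side, weight $1$ or weight $3$;
\item in the latter case, together with independence, the support contains none of $0000$, $1111$, and the relevant weight-2 points, which yields H-form (Definition~\ref{Hform}) after possibly swapping $0$ and $1$, which is symmetric here since both pins are present;
\item if the support is entirely even-weight, $f'$ has eight-vertex form.
\end{itemize}

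\textbf{Main obstacle.} The hard part is this case analysis of mixed supports, for example $f'_{0001}\ne0$ together with $f'_{1110}\ne0$ or $f'_{0000}\ne0$. Here one must exhibit an explicit gadget, a pin or a $\ne_2$ loop, giving $[a,0,0,b]$ or $[a,0,b]$, and verify that cancellations between different entries in loop sums cannot kill it. I would handle this by choosing pinnings that isolate single entries, so that no sums arise, resorting to loops only when needed.
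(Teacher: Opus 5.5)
Your first steps agree with the paper. Three-input pinnings give $[f'_\varepsilon,f'_\mu]$ for every pair at Hamming distance one, so unless some $[a,b]$ with $ab\ne0$ appears, the support of $f'$ is an independent set of the 4-cube, and an all-even support is eight-vertex form. However, the case that carries the lemma, with nonzero entries at both weight $1$ and weight $3$, is only stated as an aim and then flagged as the ``main obstacle'', so the proof is incomplete. Your first bullet is also false as stated. For support $\{0001,1101\}$ no single pin yields $[a,0,0,b]$: the two slices containing both points have supports $\{001,111\}$ and $\{000,110\}$. For support $\{0001,1110\}$ no pinning of any arity sees both points, since they differ in every bit. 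Yet in both examples the odd support straddles weights $1$ and $3$, and $f'$ is neither eight-vertex form nor H-form. Your digression on flips is garbled and can be dropped: a right-side flip would need a right-side $=_2$, and none is needed.

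Here is the missing argument. Let $f'_\alpha f'_\beta\ne0$ with $|\alpha|=1$ and $|\beta|=3$.
\begin{itemize}
\item \emph{Distance 2.} On the two coordinates where $\alpha$ and $\beta$ differ, $\alpha$ reads $00$ and $\beta$ reads $11$. Pin the two coordinates where they agree. The resulting block $(f'_\alpha,*,*,f'_\beta)$ has middle entries that are neighbours of $\alpha$, hence zero, so $[a,0,b]$ appears directly. This is a binary pinning, not a ternary one.
\item \emph{Complementary, $\beta=\bar\alpha$.} Say $\alpha=0001$, $\beta=1110$. Every weight-2 entry neighbours $\alpha$ or $\beta$ and vanishes, and only a loop can help: ${f'}^{34}_{\ne_2}=(f'_{0001}+f'_{0010},0,0,f'_{1101}+f'_{1110})$. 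The cancellation you worry about is removed by a case split, not by a clever choice of pins. If $f'_{0010}\ne0$, it forms a distance-2 pair with $f'_{1110}$; if $f'_{1101}\ne0$, it forms one with $f'_{0001}$; either way the previous item applies. Otherwise the loop gives $(f'_{0001},0,0,f'_{1110})$ with both ends nonzero.
\item \emph{Extreme entries.} Suppose $f'_{0000}f'_\beta\ne0$ with $|\beta|=3$. Pinning the zero coordinate of $\beta$ to $0$ gives $[f'_{0000},0,0,f'_\beta]$. Dually, $f'_{1111}$ together with a nonzero weight-1 entry gives a ternary generalized equality. This step is exactly what your second bullet needs in order to exclude $1111$ (resp.\ $0000$).
\end{itemize}
What remains is odd support on a single weight with the opposite extreme entry zero. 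That is H-form, which imposes no condition on weight-2 entries.
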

\begin{proof}
    Let $\varepsilon$ and $\mu$ be two bit strings of length 4 such that the Hamming weight of $\varepsilon\oplus \mu$ is $1$. Note that we can construct $[f'_{\varepsilon},f'_{\mu}]$ or $[f'_{\mu},f'_{\varepsilon}]$ by pinning $[0,1]$ and $[1,0]$ to $f'$. If $f'_{\varepsilon}f'_{\mu}\ne0$, then the result follows. Thus we assume that $f'_{\varepsilon}f'_{\mu}=0$. We consider the following cases separately.
    \begin{caselist}
        \caseitem{1}  $f'_{0000}f'_{1111}\ne0$. Then all the weight 1 and weight 3 entries of $f'$ are zero. $f'$ has eight-vertex form.
        \caseitem{2}  $f'_{0000}\ne0$ and $f'_{1111}=0$. Then all the weight 1 entries of $f'$ are zero. Further, if all the weight 3 entries of $f'$ are zero, then $f'$ has H-form. Thus without loss of generality, we assume that $f'_{1110}\ne0$. Then we have $f'_{0110}=f'_{1010}=f'_{1100}=0$, i.e., \[
        M(f')=\begin{pmatrix}
  {f'_{0000}} &  0&  0& {f'_{0011}}\\
  0 &  {f'_{0101}}&  0& {f'_{0111}}\\
  0 &  {f'_{1001}}&  0& {f'_{1011}}\\
  0 &  {f'_{1101}}&  {f'_{1110}}& 0
\end{pmatrix}.
        \] We have ${f'}_{[1,0]}^4=[{f'_{0000}},0,0,{f'_{1110}}]$ and complete the proof.
        \caseitem{3}  $f'_{0000}=0$ and $f'_{1111}\ne0$. Under the holographic transformation using $X$, this case reduces to \caseref{2}.
        \caseitem{4}   $f'_{0000}=f'_{1111}=0$. We assume that at least one weight 1 entry $f_{\alpha}$ is nonzero and at least one weight 3 entry $f_{\beta}$ is nonzero, otherwise, $f'$ has H-form. Note that there are two different cases, i.e., the Hamming weight of $\alpha\oplus\beta$ is 2 or 4.  Then by symmetry, we consider the following cases separately.
        \begin{caselist}
            \subcaseitem{4.1}   $f'_{0001}f'_{1101}\ne0$. We have $h={f'}_{[1,0]\otimes[0,1]}^{34}=(f'_{0001},f'_{0101},f'_{1001},f'_{1101})$. If $f'_{0101}=f'_{1001}=0$, then the result follows. If $f'_{0101}\ne0$ or $f'_{1001}\ne0$, then the result follows from $h_{[1,0]}^{1}=[f'_{0001},f'_{0101}]$ or $h_{[1,0]}^{2}=[f'_{0001},f'_{1001}]$.
            \subcaseitem{4.2}  $f'_{0001}f'_{1110}\ne0$. Then all the weight 2 entries of $f'$ are zero. Further, if $f'_{0010}\ne0$, then we have  ${f'}_{[0,1]\otimes[1,0]}^{34}=(f'_{0010},f'_{0110},f'_{1010},f'_{1110})=[f'_{0010},0,f'_{1110}]$ and complete the proof. If $f'_{1101}\ne0$, then we have ${f'}_{[1,0]\otimes[0,1]}^{34}=(f'_{0001},f'_{0101},f'_{1001},f'_{1101})=[f'_{0001},0,f'_{1101}]$ and complete the proof. Thus we can assume that $f'_{0010}=f'_{1101}=0$. Then we have ${f'}_{\ne2}^{34}=[f'_{0001},0,f'_{1110}]$ and complete the proof.\qedhere
        \end{caselist}
    \end{caselist}
\end{proof}
\begin{lemma}\label{=ktocspk}(\cite{Formquantumentanglementandback})
     For any $d \ge 3$, \#$\mathrm{CSP}^d(\ne_2,\mathcal{F})\le _T\mathrm{Holant}(\ne_2|=_d,\mathcal{F})$
\end{lemma}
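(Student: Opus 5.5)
The plan is to simulate an arbitrary instance of $\#\mathrm{CSP}^d(\ne_2,\mathcal{F})$ by an instance of $\mathrm{Holant}(\ne_2|=_d,\mathcal{F})$, treating the disequality signatures as edges. Recall that $\#\mathrm{CSP}^d(\ne_2,\mathcal{F})$ is $\mathrm{Holant}(\mathcal{EQ}_d|\ne_2,\mathcal{F})$, where $\mathcal{EQ}_d=\{=_d,=_{2d},=_{3d},\dots\}$ and every variable appears a multiple of $d$ times. The target problem is bipartite: the left side carries $\ne_2$ as the edge function and the right side carries $=_d$ and $\mathcal{F}$. The reduction therefore proceeds in three steps.

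\textbf{Step 1: realize $=_{kd}$ for every $k$.} We build these on the right-hand side using a gadget made of $=_d$ vertices joined through $\ne_2$. Joining two copies of $=_d$ by one $\ne_2$ edge forces the two copies to take opposite values. The resulting signature on the remaining $2(d-1)$ dangling edges is therefore $1$ on $0^{d-1}1^{d-1}$ and on $1^{d-1}0^{d-1}$. This is a generalized equality in which the two halves are complemented. The construction has to respect the bipartite structure: the dangling edges of such a gadget sit on right-side vertices, so they must themselves attach to left-side $\ne_2$ vertices. Each such $\ne_2$ flips the value, and the flips are absorbed by tracking the parity of each variable copy.

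\textbf{Step 2: control the parity bookkeeping.} Take a chain of $k$ copies of $=_d$ consecutively linked by $\ne_2$. Its copies alternate between the values $x$ and $\bar x$, giving $k(d-2)+2$ dangling edges. In the full CSP instance, each occurrence of a variable $x$ in a constraint $f\in\mathcal{F}$ is connected through a left $\ne_2$. That constraint therefore sees either $x$ or $\bar x$, according to which copy it is attached to. I will attach each occurrence of $x$ to a copy carrying $\bar x$, so that after passing through the $\ne_2$ the constraint sees $x$. Any dangling edges left unused are paired up by an extra $\ne_2$ between an $x$-copy edge and an $\bar x$-copy edge, which is consistent. The $\ne_2$ constraints in the CSP that relate two variables become a direct connection between a copy of one variable and a copy of the other; the sign of the copy is chosen appropriately.

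\textbf{Step 3: check that the counts match.} This is the main obstacle. The numbers of $x$-type and $\bar x$-type dangling edges are determined by $k$ and $d$. Because the number of occurrences of $x$ is a multiple of $d$, and $d\ge3$ leaves enough slack, a suitable choice of chain or star gadget can supply exactly the required number of correct-parity edges, with the remainder paired off internally. If the counts do not work out, I would first try varying the shape of the gadget: a star with a central $=_d$ and leaves attached, or using loops on a single $=_d$ through $\ne_2$, which cost one $x$ and one $\bar x$ each. Once the counts match, the Holant value equals the CSP value exactly, so the reduction is parsimonious.
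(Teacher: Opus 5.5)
Steps 1 and 2 are sound. The gap is Step 3, which you rightly call the crux: it is left unresolved, and the gadgets you offer cannot resolve it.

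\textbf{Why the chain fails.} In any connected gadget of $=_d$ vertices joined through $\ne_2$, the vertices split into a class $A$ carrying $x$ and a class $B$ carrying $\bar x$. With $e$ internal $\ne_2$ edges, there are $|A|d-e$ free $x$-edges and $|B|d-e$ free $\bar x$-edges. Their difference $(|B|-|A|)d$ is therefore unchanged by your pairing step. Every free $x$-edge must eventually be paired with a $\bar x$-edge: a constraint attached to it through $\ne_2$ would see $\bar x$, and joining two $x$-edges gives $\ne_2(x,x)=0$. So a variable with $kd$ occurrences forces $|B|-|A|=k$. A chain has $\bigl||B|-|A|\bigr|\le 1$, so it only handles variables occurring exactly $d$ times, where a single $=_d$ already suffices. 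Your fallback is also wrong. A loop on one $=_d$ through $\ne_2$ joins two edges carrying the same value, so it contributes $\ne_2(x,x)=0$ and kills the gadget; it does not cost one $x$ and one $\bar x$. The appeal to ``$d\ge 3$ leaves enough slack'' is exactly the point that needs proof.

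\textbf{What is missing.} You need a connected bipartite gadget with $a$ vertices of value $x$, all $d$ of whose edges are internal, and $a+k$ vertices of value $\bar x$. Connectivity requires $ad\ge 2a+k-1$, i.e.\ $a(d-2)\ge k-1$. This is precisely where $d\ge 3$ is used; for $d=2$ it fails once $k\ge 2$. One explicit inductive choice works:
\begin{enumerate}
\item Let $G_1$ be a single $=_d$.
\item To obtain $G_{k+1}$, add a new $=_d$ vertex $v$ and join one edge of $v$ through $\ne_2$ to a free edge of $G_k$.
\item Join the remaining $d-1\ge 2$ edges of $v$ through $\ne_2$ to two fresh $=_d$ vertices: one edge to the first and $d-2\ge 1$ edges to the second.
\end{enumerate}
Then $v$ carries $x$, the fresh vertices carry $\bar x$, and the free edges number $(kd-1)+(d-1)+2=(k+1)d$, all of type $\bar x$. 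The gadget is connected, so each value of $x$ extends to exactly one consistent assignment of weight $1$. Attaching the $\mathcal{F}$-occurrences and the CSP's $\ne_2$ constraints to these $\bar x$-edges, as you describe, then gives a parsimonious reduction. The paper itself gives no proof of this lemma; it quotes it from the cited reference.
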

We are now ready to prove the main theorem of this section.
\begin{theorem}\label{theoremwithunary}
Let $f$ be a 4-ary signature. Then Holant$(=_2|f,[c,d])$ is \#P-hard except for the following cases:
\begin{itemize}
        \item $f \in \left \langle Z\mathcal{M} \right \rangle$ or $\in \left \langle ZX\mathcal{M} \right \rangle$;
        \item $f \in \left \langle \mathcal{T} \right \rangle$;
        \item $\{f,[c,d]\}$ is $ \mathcal{A}\text{-}$transformable;
        \item $\{f,[c,d]\}$ is $ \mathcal{L}\text{-}$transformable;
        \item $f$ is $ \mathcal{P}\text{-}$transformable;
        \item $f\in\mathcal{H}$,
    \end{itemize}
    in which cases the problem is computable in polynomial time.
\end{theorem}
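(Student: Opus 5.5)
The proof splits on whether the given unary signature $[c,d]$ is isotropic, i.e. whether $c^2+d^2=0$.

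\textbf{Non-isotropic case.} If $c^2+d^2\neq 0$, we apply Corollary~\ref{cdne0coro} with $\mathcal{F}=\{f\}$. This gives exactly the first five tractable cases: $\langle Z\mathcal{M}\rangle$ and $\langle ZX\mathcal{M}\rangle$, $\langle\mathcal{T}\rangle$, $\mathcal{A}$-transformable, $\mathcal{P}$-transformable, and $\mathcal{L}$-transformable. In all other situations the problem is \#P-hard. Nothing new is needed here.

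\textbf{Isotropic case: setup.} Suppose now $c^2+d^2=0$ with $[c,d]\neq 0$, so $[c,d]$ is a multiple of $[1,\pm\frak{i}]$. We perform the holographic transformation by $Z$, which gives $\mathrm{Holant}(=_2|f,[c,d])\equiv_T\mathrm{Holant}(\ne_2|f',u')$ with $f'=(Z^{-1})^{\otimes 4}f$. A direct computation shows $u'$ is a multiple of $[1,0]$ or $[0,1]$. Using the transformation $X$, which preserves $\ne_2$, we may assume $u'=[1,0]$. We then apply Lemma~\ref{only10}, which leaves two subcases.

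\textbf{Isotropic case: $f'$ has H-form.} Theorem~\ref{ff'f''} gives the dichotomy for $\mathrm{Holant}(=_2|f)$ directly. It remains to check that adding $[1,0]$ does not change the answer. In $\mathrm{Holant}(\ne_2|f',[1,0])$ every nonzero term has a balanced assignment, and an H-form $f'$ restricted to weight-2 inputs equals $r(f')$. The extra unary signatures can therefore only pair with entries that never occur, so the value is unchanged from $\mathrm{Holant}(\ne_2|r(f'))$ with the unaries handled trivially. More concretely, $[1,0]$ forces its neighbor in $\ne_2$ to be $1$, and balance counting shows that instances with $[1,0]$ (all of the same type) vanish unless the unaries can be absorbed. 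Hence the problem is tractable iff $f\in\mathcal{H}$, and otherwise \#P-hard by Theorem~\ref{ff'f''}.

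\textbf{Isotropic case: a unary $[x,y]$ with $y\neq 0$ is constructible.} We hold both $[1,0]$ and $[x,y]$ on the right side. Combining them with $\ne_2$ and $f'$ yields a unary that is non-isotropic in the original basis, or directly yields $[0,1]$ after suitable combination. If $x\neq 0$, then $[x,y]$ is itself non-isotropic back under $Z$, because the isotropic vectors correspond only to $[1,0]$ and $[0,1]$. In that case we return to the first case, via Corollary~\ref{cdne0coro} applied to $\{f,[c,d]\}$, and its tractable conditions coincide with the listed ones. Otherwise $x=0$, we have both $[1,0]$ and $[0,1]$, and Lemma~\ref{both10and01} applies. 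It gives one of three outcomes.
\begin{itemize}
\item If $f'$ has eight-vertex form, Theorem~\ref{eightvertex} plus a check that the pinnings preserve tractability gives the $\mathcal{A}$, $\mathcal{P}$ or $\mathcal{L}$ cases.
\item If $f'$ has H-form, we are in the previous case.
\item Otherwise we obtain a generalized equality $[a,b]$, $[a,0,b]$ or $[a,0,0,b]$. A nontrivial $[a,b]$ gives a non-isotropic unary. $[a,0,b]$ can be interpolated to $=_2$ on the right unless $a/b$ is a root of unity, and it yields \#CSP-type reductions. $[a,0,0,b]$, together with Lemma~\ref{=ktocspk} for $d=3$, gives $\#\mathrm{CSP}^3(\ne_2,\ldots)$, whose dichotomy is Theorem~\ref{CSPd}.
\end{itemize}

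\textbf{Main obstacle.} The hardest step is the last one: matching the tractability conditions from Theorem~\ref{CSPd} and the \#CSP$^2$-type reductions back to exactly the listed classes. In particular, $\mathcal{A}^r_d$ with $d=3$ must collapse to $\mathcal{A}$ or $\mathcal{P}$ transformability in the presence of $[1,0]$ and $[0,1]$.
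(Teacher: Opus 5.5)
You follow the paper's architecture exactly. You apply Corollary~\ref{cdne0coro} when $c^2+d^2\neq0$. Otherwise you use the $Z$-transformation to $\mathrm{Holant}(\ne_2|f',[1,0])$ and Lemma~\ref{only10}. You then split on $x\neq0$, which returns to Corollary~\ref{cdne0coro} since $(x+y)^2+(\frak{i}(x-y))^2=4xy$, versus $x=0$, which uses Lemma~\ref{both10and01} with Theorems~\ref{eightvertex}, \ref{ff'f''} and \ref{CSPd}.

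The genuine gap is the branch where Lemma~\ref{both10and01} gives you $[a,0,b]$ with $ab\neq0$.
\begin{itemize}
\item Interpolation is not available as you describe it. With $D=\operatorname{diag}(a,b)$ and $N=M(\ne_2)$, chaining through $\ne_2$ gives $DND=ab\,N$ and then $abD$ again, so the ratio $b/a$ never changes.
\item You leave the case where $a/b$ is a root of unity open.
\item ``\#CSP-type reductions'' names no target. Even with $=_2$ on both sides you are in an unrestricted Holant problem with $[1,0]$, not a \#CSP.
\end{itemize}
The missing idea is the diagonal holographic transformation by $T=\operatorname{diag}(1,(b/a)^{1/2})$. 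It fixes $\ne_2$, $[1,0]$ and $[0,1]$ up to nonzero scalars, and it sends $[a,0,b]$ to exactly $a(=_2)$ with no exceptional ratios. Attaching a copy of $\ne_2$ to each input of this right-hand $=_2$ realizes $=_2$ on the left as well, since $NIN=I$. You are then in $\mathrm{Holant}(=_2|\ldots,[1,0])$, and Lemma~\ref{cdne0} applies directly. The same normalization $\operatorname{diag}(1,(b/a)^{1/3})$ produces the exact $=_3$ that Lemma~\ref{=ktocspk} needs in the $[a,0,0,b]$ branch; you also left that step implicit.

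A smaller point: in the $x=0$ branch, the H-form case is not ``the previous case.''
\begin{itemize}
\item Your balance count is right for the form produced by Lemma~\ref{only10}, and it is more than the paper writes there. If every weight-$3$ and weight-$4$ entry of $f'$ vanishes, the right-hand vertices carry total weight at most $2n_{f'}$. A nonzero term, however, needs weight $2n_{f'}+n/2$, where $n$ is the number of copies of $[1,0]$. So every instance containing $[1,0]$ is zero.
\item The same count, applied to a right-hand gadget with one dangling edge, shows that every constructible unary then has the form $[\ast,0]$.
\item Hence once $[0,1]$ is constructible, $f'$ has a nonzero entry of weight $3$ or $4$. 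Any H-form returned by Lemma~\ref{both10and01} is then the other pattern, with the weight-$0$ and weight-$1$ entries zero.
\item For that pattern, instances containing $[1,0]$ must use the arbitrary high-weight entries and need not vanish.
\end{itemize}
So in that subcase Theorem~\ref{ff'f''} gives only the hardness direction when $f\notin\mathcal{H}$. That is also all the paper's citation of it yields there.
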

\begin{proof}
    If $c^2+d^2\ne0$, then by Corollary~\ref{cdne0coro}, the result follows.

    If $c^2+d^2=0$, we only prove the case in which $[c,d]=[1,\frak{i}]$ up to a nonzero scalar; the other case is analogous and is omitted. By a holographic transformation using
    $Z$, we have \[
\mathrm{Holant}(\ne_2|f',[1,0])\equiv_T\mathrm{Holant}(=_2|f,[c,d]),
\]where $f'=(Z^{-1})^{\otimes4}f$.
    By Lemma~\ref{only10}, $f'$ has H-form or we can construct $[x,y]$ with $y\ne0$. If $f'$ has H-form, by Theorem~\ref{ff'f''}, $f\in \mathcal{H}$ or $\mathrm{Holant}(=_2|f)$ is \#P-hard. If we can construct $[x,y]$ with $y\ne0$, then we consider the following cases separately.
    \begin{caselist}
        \caseitem{1}  $x\ne0$. By a holographic transformation using
    $Z^{-1}$, we have \[
\mathrm{Holant}(=_2|f,[1,\frak{i}],[x+y,\frak{i}(x-y)])\equiv_T\mathrm{Holant}(\ne_2|f',[1,0],[x,y]).
\] Note that $(x+y)^2+[\frak{i}(x-y)]^2=4xy\ne0$, then by Corollary~\ref{cdne0coro}, the result follows.
\caseitem{2}  $x=0$. By Lemma~\ref{both10and01}, we consider the following cases separately.
\begin{caselist}
\subcaseitem{2.1}  We have \[\mathrm{Holant}(\ne_2|f',[1,0],[0,1],[a,b])\equiv_T\mathrm{Holant}(\ne_2|f',[1,0],[0,1]),\] where $ab\ne0$. By a holographic transformation using
    $Z^{-1}$ and Corollary~\ref{cdne0coro}, the result follows.
        \subcaseitem{2.2}  We have \[\mathrm{Holant}(\ne_2|f',[1,0],[0,1],a[1,0,\frac{b}{a}])\equiv_T\mathrm{Holant}(\ne_2|f',[1,0],[0,1]),\] where $ab\ne0$. By a holographic transformation using
    $T_{(\frac{b}{a})^{\frac{1}{2}}}=\begin{pmatrix}
  1&0 \\
 0 &(\frac{b}{a})^{\frac{1}{2}}
\end{pmatrix}$, we have \[\mathrm{Holant}(\ne_2|f'',[1,0],[0,1],=_2)\equiv_T\mathrm{Holant}(\ne_2|f',[1,0],[0,1],a[1,0,\frac{b}{a}]),\] where $f''=(T_{(\frac{b}{a})^{\frac{1}{2}}}^{-1})^{\otimes4}f'$. We can also construct $=_2$ on the left-hand side by connecting two copies of $\ne_2$ to inputs $x_1$ and $x_2$ of $=_2$ respectively. Then by Lemma~\ref{cdne0} the result follows.
        \subcaseitem{2.3}  We have \[\mathrm{Holant}(\ne_2|f',[1,0],[0,1],a[1,0,0,\frac{b}{a}])\equiv_T\mathrm{Holant}(\ne_2|f',[1,0],[0,1]),\] where $ab\ne0$. By a holographic transformation using
    $T_{(\frac{b}{a})^{\frac{1}{3}}}=\begin{pmatrix}
  1&0 \\
 0 &(\frac{b}{a})^{\frac{1}{3}}
\end{pmatrix}$, we have \[\mathrm{Holant}(\ne_2|f'',[1,0],[0,1],=_3)\equiv_T\mathrm{Holant}(\ne_2|f',[1,0],[0,1],a[1,0,0,\frac{b}{a}]),\] where $f''=(T_{(\frac{b}{a})^{\frac{1}{3}}}^{-1})^{\otimes4}f'$. By Lemma~\ref{=ktocspk}, we have \[\mathrm{CSP}^3(\ne_2,f'',[1,0],[0,1])\le_T\mathrm{Holant}(\ne_2|f'',[1,0],[0,1],=_3).\] Then by Theorem~\ref{CSPd} the result follows.
        \subcaseitem{2.4}  $f'$ has eight-vertex form. By Theorem~\ref{eightvertex}, the result follows.
        \subcaseitem{2.5}  $f'$ has H-form. By Theorem~\ref{ff'f''}, the result follows.\qedhere
\end{caselist}
    \end{caselist}

\end{proof}

\begin{theorem} \label{dicwithdegen}
    Let $f$ be a 4-ary signature. If we can construct a nonzero degenerate binary signature $h$ in $\mathrm{Holant}(=_2|f)$, then $\mathrm{Holant}(=_2|f)$ is \#P-hard except for the following cases:
\begin{itemize}
        \item $f \in\left \langle Z\mathcal{M} \right \rangle$ or $\left \langle ZX\mathcal{M} \right \rangle$;
        \item $f $ is $ \mathcal{A}\text{-}$transformable;
        \item $f $ is $ \mathcal{L}\text{-}$transformable;
        \item $f $ is $ \mathcal{P}\text{-}$transformable;
        \item $f \in \mathcal{H}$;
        \item $f \in \left \langle \mathcal{T} \right \rangle$,
    \end{itemize}
    in which cases the problem is computable in polynomial time.
\end{theorem}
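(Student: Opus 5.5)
We apply Lemma~\ref{decompose} to $f$ and the constructed nonzero degenerate binary signature $h=u_1\otimes u_2$. Since $h$ is realizable from $f$, we have $\mathrm{Holant}(=_2|f,h)\equiv_T\mathrm{Holant}(=_2|f)$, so Lemma~\ref{decompose} leaves two alternatives. In the first, we can realize a nonzero unary signature $u=[c,d]$ and obtain $\mathrm{Holant}(=_2|f,[c,d])\le_T\mathrm{Holant}(=_2|f)$. In the second, $f'=(Z^{-1})^{\otimes 4}f$ has eight-vertex form and $\mathrm{Holant}(=_2|f)\equiv_T\mathrm{Holant}(\ne_2|f')$.

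\emph{The eight-vertex alternative.} Here we invoke Theorem~\ref{eightvertex}. Either $\mathrm{Holant}(\ne_2|f')$ is \#P-hard, or $(\ne_2|f')$ is $\mathcal{A}$-, $\mathcal{P}$- or $\mathcal{L}$-transformable. Undoing the transformation by $Z$, which maps $(=_2)$ to $(\ne_2)$ up to a scalar, shows that $f$ is $\mathcal{A}$-, $\mathcal{P}$- or $\mathcal{L}$-transformable. These are all listed tractable cases.

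\emph{The unary alternative.} We apply Theorem~\ref{theoremwithunary} to $\{f,[c,d]\}$. If $\mathrm{Holant}(=_2|f,[c,d])$ is \#P-hard, then so is $\mathrm{Holant}(=_2|f)$. Otherwise one of the exceptional cases holds. The cases $f\in\langle Z\mathcal{M}\rangle$, $f\in\langle ZX\mathcal{M}\rangle$, $f\in\langle\mathcal{T}\rangle$, $f$ $\mathcal{P}$-transformable and $f\in\mathcal{H}$ concern $f$ alone and appear in the list. When $\{f,[c,d]\}$ is $\mathcal{A}$- or $\mathcal{L}$-transformable, the same transformation $T$ certifies that $f$ alone is $\mathcal{A}$- or $\mathcal{L}$-transformable, since the definition only requires $(=_2)T$ and $T^{-1}f$ to lie in the class. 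So every case falls into the stated list.

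\emph{Tractability.} Each listed case must be shown to be tractable. For $\langle Z\mathcal{M}\rangle$, $\langle ZX\mathcal{M}\rangle$ and $\langle\mathcal{T}\rangle$ this follows from Theorem~\ref{Hcdich}, since $\mathrm{Holant}(=_2|f)$ is a special case of $\mathrm{Holant}^c$. The class $\mathcal{H}$ is handled by Theorem~\ref{ff'f''}. The $\mathcal{A}$-, $\mathcal{P}$- and $\mathcal{L}$-transformable cases follow from Theorem~\ref{Holtrans} together with the tractability of the respective classes (Theorem~\ref{CSP2dich}, using that $(=_2)T$ in the class yields \#CSP$^2$-type evaluation).

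\emph{Main obstacle.} The delicate step is the bookkeeping in the unary alternative. We must make sure that hardness with the extra unary passes to $\mathrm{Holant}(=_2|f)$, which needs the reduction direction $\mathrm{Holant}(=_2|f,h,u)\le_T\mathrm{Holant}(=_2|f,h)$ given by Lemma~\ref{decompose}. We must also make sure that dropping $[c,d]$ from the transformability conditions is legitimate. Both points are direct, so the proof amounts to assembling Lemma~\ref{decompose}, Theorem~\ref{theoremwithunary} and Theorem~\ref{eightvertex}.
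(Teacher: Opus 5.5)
Your proposal is correct and follows the paper's proof: apply Lemma~\ref{decompose}, then handle the unary alternative by Theorem~\ref{theoremwithunary} and the eight-vertex alternative by Theorem~\ref{eightvertex}. Your extra bookkeeping (dropping $[c,d]$ from the transformability conditions, pulling the transformation back through $Z$, and checking that each listed case is tractable) only spells out steps the paper leaves implicit.
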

\begin{proof}
    In Holant$(=_2|f,h)$, by Lemma~\ref{decompose}, one of the following holds:
    \begin{itemize}
        \item we have
    \[
    \mathrm{Holant}(=_2|f,h,u) \le_T \mathrm{Holant}(=_2|f,h),
    \]where $u$ is a nonzero unary signature. By Theorem~\ref{theoremwithunary} the result follows.
\item we have \[\mathrm{Holant}(\ne_2|f') \equiv_T \mathrm{Holant}(=_2|f),\] where $f'=(Z^{-1})^{\otimes 4}f$ has eight-vertex form. By Theorem~\ref{eightvertex} the result follows.
    \end{itemize}
\end{proof}

In the remainder of this paper, by Theorem~\ref{dicwithdegen} and Corollary~\ref{polytodegen}, we assume that any binary signature we can construct is nondegenerate and has finite order.

\section{\texorpdfstring{Signatures $f$ in eight-vertex form}{Signatures f in eight-vertex form}}
In this section, $f$ always denotes a 4-ary signature with the signature matrix \[M(f)=\begin{pmatrix}
a  &  0&0  & b\\
 0 &  c& d & 0\\
 0 & w & z & 0\\
 y &0  &0  &x
\end{pmatrix}.\] We call $(a,x)$ the outer pair of $f$ and $(b,y),(c,z),(d,w)$ the inner pairs of $f$.

First, we introduce some known results in \cite{Eightvertex} and \cite{Sixvertex}.
\begin{lemma}\label{twononzeropairinsixvertexmodel}(\cite{Sixvertex})
    Let $f$ be a 4-ary signature with signature matrix  \[M(f)=\begin{pmatrix}
 0 &  0& 0 &0 \\
 0 &  c&d  &0 \\
  0&  w&z  & 0\\
  0& 0 &0  &0
\end{pmatrix}.\] Then $\mathrm{Holant}(\ne_2|f)$ is \#P-hard unless $f\in\mathcal{A}$ or $f\in\mathcal{P}$, in which case the problem is computable in polynomial time.
\end{lemma}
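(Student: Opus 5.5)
The statement to prove is Lemma~\ref{twononzeropairinsixvertexmodel}. The signature $f$ there is the special case $b=y=0$ of the six-vertex form treated in Theorem~\ref{six vertex}. So the plan is to read it off from that dichotomy rather than argue from scratch: specialize Theorem~\ref{six vertex} to $(b,y)=(0,0)$ and check that its two tractable cases together collapse to ``$f\in\mathcal{A}$ or $f\in\mathcal{P}$''.

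\textbf{Tractable side.} If $f\in\mathcal{A}\cup\mathcal{P}$, the first tractable case of Theorem~\ref{six vertex} applies directly, since the pair $(b,y)$ is $(0,0)$. This gives polynomial-time computability.

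\textbf{Hardness side.} Suppose $f\notin\mathcal{A}\cup\mathcal{P}$. Then the first exceptional case of Theorem~\ref{six vertex} fails, so it remains to show that the second exceptional case ($f\in\langle\mathcal{M}\rangle$ or $f\in\langle X\mathcal{M}\rangle$, i.e.\ one zero in each pair) also forces $f\in\mathcal{A}\cup\mathcal{P}$.

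\begin{itemize}
\item Assume each of the pairs $(c,z)$ and $(d,w)$ contains a zero.
\item Then at most two entries of $f$ are nonzero, namely one from $\{c,z\}$ and one from $\{d,w\}$.
\item The support of such an $f$ lies inside a set like $\{0101,1001\}$ or $\{0101,0110\}$. Each such set is the solution set of linear equations over $\mathbb{Z}_2$.
\item On that support, $f$ is a product of equality and disequality constraints together with unary weights on a free variable.
\end{itemize}

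For example, suppose $c,d\neq0$ and $z=w=0$. Then $f=[1,0]_{x_1}\cdot (x_1\neq x_2)$-type constraints times a unary function distinguishing the two support points. Concretely, $f(x)=\chi_{x_2\ne x_1,\,x_4\ne x_3,\,x_1=0}$ weighted by a unary on $x_3$, which is of product type. The other sign patterns are symmetric to this one, and the degenerate cases with fewer nonzero entries are also product type. Hence $f\in\mathcal{P}$, a contradiction.

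Therefore, when $f\notin\mathcal{A}\cup\mathcal{P}$, neither exceptional case of Theorem~\ref{six vertex} holds, and $\mathrm{Holant}(\ne_2|f)$ is \#P-hard.

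The main point needing care is this last membership check: two nonzero entries lying in the two distinct inner pairs always give a product-type signature. I would verify it by writing each of the four possible supports explicitly as a conjunction of $=_2$, $\neq_2$ and pinning constraints.
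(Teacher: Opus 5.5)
Your argument is correct, and it differs from the paper only in that the paper gives no argument. The lemma is imported verbatim from \cite{Sixvertex}, whereas you derive it as a corollary of the general six-vertex dichotomy, Theorem~\ref{six vertex} (quoted from the same source), by setting $(b,y)=(0,0)$.

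With $(b,y)=(0,0)$, the first tractable case of that theorem becomes exactly ``$f\in\mathcal{A}\cup\mathcal{P}$''. So the only real work is showing that the second tractable case adds nothing new, and your check of this is right. If each of $(c,z)$ and $(d,w)$ contains a zero, the support lies in one of
\[
\{0101,0110\},\quad \{0101,1001\},\quad \{1010,0110\},\quad \{1010,1001\}.
\]
Each of these pins one of the pairs $\{x_1,x_2\}$, $\{x_3,x_4\}$ to $(0,1)$ or $(1,0)$ and forces the other pair to be unequal. Hence $f$ is a product of unary signatures and a single $\ne_2$, so $f\in\mathcal{P}$. Supports of size at most one, including $f\equiv 0$, are trivially of product type.

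If ``one zero in each pair'' is instead read as ``exactly one zero'', that case is simply vacuous here, because $(b,y)$ has two zeros. Your argument covers both readings.

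What your route gives is a consistency check: it shows the two statements the paper cites from \cite{Sixvertex} fit together, with this lemma literally a special case of the six-vertex classification. It relies on nothing beyond the same external theorem that the paper's citation already relies on.
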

\begin{lemma}\label{twononzeroineight}\cite{Eightvertex}
     Let $f$ be a 4-ary signature with signature matrix  \[M(f)=\begin{pmatrix}
 a &  0& 0 &0 \\
 0 &  c& 0 &0 \\
  0&  0&z  & 0\\
  0& 0 &0  &a
\end{pmatrix}.\] Then $\mathrm{Holant}(\ne_2|f)$ is \#P-hard unless $\begin{pmatrix}
    1&0\\
    0&\beta
\end{pmatrix}^{\otimes 4}f\in\mathcal{A}$ with $\beta^{16}=1$ or $f\in\mathcal{P}$, in which case the problem is computable in polynomial time.
\end{lemma}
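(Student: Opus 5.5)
The plan is to exploit the support of $f$, which is only $\{0000,0101,1010,1111\}$. On its support $x_1=x_3$ and $x_2=x_4$, so
\[
f(x_1,x_2,x_3,x_4)=g(x_1,x_2)\cdot\chi_{x_1=x_3}\cdot\chi_{x_2=x_4},
\]
where $g$ is the binary signature with $M(g)=\begin{pmatrix} a&c\\ z&a\end{pmatrix}$. The steps below show that $\mathrm{Holant}(\ne_2|f)$ is equivalent to a $\#\mathrm{CSP}^2$-type problem with constraint set built from $g$. I then read off hardness and tractability from Theorem~\ref{CSP2dich}, possibly with Theorem~\ref{CSPd} in the background.

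\textbf{Strand structure.} In a signature grid of $\mathrm{Holant}(\ne_2|f)$, each $f$-vertex pairs its edges as $(x_1,x_3)$ and $(x_2,x_4)$. Following these pairings through the $\ne_2$ vertices decomposes the edge set into closed strands. Along a strand, the value is copied through each $f$-vertex and flipped through each $\ne_2$ vertex. Hence each strand carries one Boolean variable $s$, and at every crossing the local value is $s$ or $1-s$ according to the parity of $\ne_2$'s traversed. A strand with an odd number of flips forces the value $0$, which is detected in linear time.

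\textbf{Reduction to a CSP.} The Holant value is a sum over strand variables of a product of constraints $g(s,t)$ or $g(1-s,1-t)$. One may also get $g(s,1-t)$ and $g(1-s,t)$, depending on the local parities; some flips may be absorbed by renaming. This yields $\mathrm{Holant}(\ne_2|f)\le_T\#\mathrm{CSP}(\mathcal{G})$ with $\mathcal{G}=\{g,(X\otimes X)g,\dots\}$. Flipping both inputs sends $(a,c,z,a)\mapsto(a,z,c,a)$, so $\mathcal{G}$ consists of $g$ and its bit-flipped versions. Since every strand meets an even number of crossings (each crossing uses two of its edges), every variable appears an even number of times, so this is really $\#\mathrm{CSP}^2(\mathcal{G})$.

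\textbf{Tractability.} The product-type case $f\in\mathcal{P}$ and the case $T_\beta^{\otimes 4}f\in\mathcal{A}$ are handled by the dichotomy. The latter pulls back to $g$ lying in $\mathcal{A}$, $\mathcal{A}^1_2$, or $\mathcal{L}$ after a diagonal twist. I would verify that the twists permitted by Theorem~\ref{CSP2dich} together with the $\ne_2$ structure correspond exactly to $\beta^{16}=1$: a $4d$-th root with $d=4$ arises because each crossing contributes $\beta^{2}$-type factors on both strands.

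\textbf{Hardness (the converse reduction).} I expect this to be the main obstacle. I must show that every instance of $\#\mathrm{CSP}^2(\mathcal{G})$, with prescribed flip patterns, is realizable as a $\mathrm{Holant}(\ne_2|f)$ grid. The approach is to draw each variable as a closed strand, each constraint as an $f$-vertex where two strands cross, and to insert $\ne_2$ edges between consecutive crossings. The delicate point is the flip parities: they must match the desired mix of $g$ and its flips while each strand has even total parity. I would first use the freedom to route a strand through an extra crossing with a gadget $f$-vertex to adjust parity. If that fails, I would interpolate using the binary gadget obtained from a self-loop, whose matrix is built from $g$, to realize $=_2$-type corrections. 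Once the reduction is established, any $g$ outside the tractable classes yields $\#$P-hardness by Theorem~\ref{CSP2dich}.
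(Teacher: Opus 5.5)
\textbf{The hardness direction has a genuine gap.} The forward reduction is fine: the strand decomposition gives $\mathrm{Holant}(\ne_2|f)\le_T\#\mathrm{CSP}^2(\mathcal G)$. (Evenness of the number of passes comes from discarding strands with an odd number of flips, not from each crossing using two edges.) Tractability is even easier than you suggest. Since $T_\beta^{\otimes2}(\ne_2)=\beta(\ne_2)$, you can transform by $T_\beta$ without changing the left side.

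The problem is the converse, where your realizability claim is false. In a grid of $\mathrm{Holant}(\ne_2|f)$, consecutive crossings on a strand are separated by exactly one $\ne_2$. So along every strand the literals alternate $s,\bar s,s,\bar s,\dots$, and each variable occurs equally often with each polarity. Neither of your patches changes this:
\begin{itemize}
\item Routing through an extra $f$-vertex only adds a genuine factor $g$ coupled to another strand, with the same alternation.
\item Every right-side binary gadget is a generalized disequality. Indeed $f^{13}_{\ne_2}=f^{24}_{\ne_2}\equiv0$, and the other self-loops give $(0,c,z,0)$ or $(0,z,c,0)$. In general the open strand of a binary gadget has an even number of passes, because every closed strand must. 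Chaining such signatures through $\ne_2$ gives diagonal transfer matrices, so interpolating along them yields only further generalized disequalities, never an $=_2$-type correction.
\end{itemize}
The obstruction is also visible globally. Each contributing assignment puts exactly $|V_{\ne}|$ ones on the right, so $\mathrm{Holant}(\ne_2|T_\beta^{\otimes4}f)=\beta^{|V_{\ne}|}\,\mathrm{Holant}(\ne_2|f)$ for every $\beta\ne0$. Grids built from $f$ and $\ne_2$ are therefore blind to the twist $g\mapsto T_{\beta^2}^{\otimes2}g$. The $\#\mathrm{CSP}^2(\mathcal G)$ value of an instance in which some variable has unequal numbers of positive and negative occurrences is not blind to it.

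What is missing is an interpolation on a $4$-ary gadget that breaks the strand-product structure. For example, chain $2m+1$ copies of $f$ through pairs of $\ne_2$; the resulting matrix is $\operatorname{diag}(a^2,cz,cz,a^2)^m M(f)$. If $a\ne0$ and $cz/a^2$ is not a root of unity, interpolation yields $\operatorname{diag}(a,0,0,a)$, i.e.\ $=_4$ on the right. (If $cz=0$, three copies already suffice; if $a=0$, then $f\in\mathcal P$.) Lemma~\ref{=ktocspk} then gives $\#\mathrm{CSP}^4(\ne_2,f)\le_T\mathrm{Holant}(\ne_2|f)$. Theorem~\ref{CSPd} with $d=4$ gives hardness unless $f\in\mathcal P$ or $T_\beta^{\otimes4}f\in\mathcal A$ for a $16$th root of unity $\beta$. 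This, not Theorem~\ref{CSP2dich}, is where the exponent $16$ comes from. The case where $cz/a^2$ is a root of unity other than $1$ still needs separate gadgets; when it equals $1$, $g$ is degenerate and $f\in\mathcal P$.

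The classification you would read off from Theorem~\ref{CSP2dich} does coincide with the lemma's. For $a\ne0$, membership in $\mathcal L$ forces $g\in\mathcal A$, and $\mathcal A^1_2$ corresponds to $\beta^2$ being a primitive $8$th root of unity. So the defect lies in the reduction, not in the target classification.
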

\begin{lemma}\label{M(g)}\cite{Eightvertex}
    Let $g$ be a 4-ary signature with signature matrix  \[M(g)=\begin{pmatrix}
 t &  0& 0 &0 \\
 0 &  1& 0 &0 \\
  0&  0&1  & 0\\
  0& 0 &0  &t
\end{pmatrix},\] where $t\ne0$ and $t^4\ne1$. Then for any signature set $\mathcal{F}$,
\[
\mathrm{Holant}(\ne_2|\mathcal{EQ_2},\mathcal{F},g)\le_T\mathrm{Holant}(\ne_2|\mathcal{F},g).
\]
\end{lemma}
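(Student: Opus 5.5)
The idea is to realize every even equality $=_{2k}$ on the right-hand side of $\mathrm{Holant}(\ne_2|\mathcal{F},g)$ by polynomial interpolation from chains of $g$. After that, the left-hand $\ne_2$ together with $\mathcal{EQ}_2$ gives the desired reduction.

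\textbf{Step 1: interpolate $=_4$.} With the given input order, $M(g)$ is the product of a permutation matrix and a diagonal matrix. Under the bipartite $\ne_2$ connection, the chain of $s$ copies of $g$ has matrix $(M(g)N)^{s-1}M(g)$, where $N$ is the matrix of $\ne_2\otimes\ne_2$ (the flip $0\leftrightarrow 1$ on both wires). The rows and columns indexed by $\{00,11\}$ and by $\{01,10\}$ decouple. On the block $\{00,11\}$, $M(g)N$ restricts to $\begin{pmatrix}0&t\\t&0\end{pmatrix}$; on $\{01,10\}$ it restricts to $\begin{pmatrix}0&1\\1&0\end{pmatrix}$. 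For even $s$ the chain is therefore diagonal, $\mathrm{diag}(t^{s},1,1,t^{s})$ up to sign. I would verify the ordering carefully; if needed, I would use a single $\ne_2$ twist or take two-step chains so that the gadget stays diagonal.

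This gives a family $g_s$ with diagonal $(t^{s},1,1,t^{s})$. Stratify an instance in which $m$ copies of a target diagonal signature $\mathrm{diag}(\lambda,1,1,\lambda)$ appear, by the number $j$ of copies that take an outer input. The Holant value is then $\sum_j a_j\lambda^{j}$. The values $t^{s}$ for even $s$ are pairwise distinct because $t^{4}\ne 1$ and $t\ne0$, at least when $t$ is not a root of unity. The Vandermonde system is then nonsingular, exactly as in Lemma~\ref{polyatdia}. Letting $\lambda\to$ a suitable value, we recover $\mathrm{diag}(\lambda,0,0,\lambda)$ by scaling: take the family $\mathrm{diag}(1,\mu,\mu,1)$ with $\mu=t^{-s}$ and interpolate to $\mu=0$. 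This yields the signature with $M=\mathrm{diag}(1,0,0,1)$, which is $=_4$ in this input order, up to the $\ne_2$ flips.

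\textbf{Step 2: the main obstacle, $t$ a root of unity with $t^4\ne1$.} Here only finitely many distinct $t^{s}$ occur; say $t$ has order $r\ge 3$ and $r\ne4$. The even powers $t^{2k}$ then take $r/\gcd(r,2)\ge 2$ distinct values, and $t^{2}\ne1$. With $n=r/\gcd(r,2)$ distinct values, we cannot fit a degree-$m$ polynomial directly. Instead, I would note that $\sum_j a_j\lambda^{j}$, evaluated at all $n$-th roots of unity $\lambda=\zeta^{k}$, gives the sums $\sum_{j\equiv \ell \pmod n} a_j$ by a discrete Fourier transform. To isolate $j=0$, I would additionally use the freedom of the $\ne_2$ flips to vary the structure. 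In practice I would follow \cite{Eightvertex}: build a second gadget, for example by connecting two copies of $g$ with a rotated input pairing, whose diagonal has a different, infinite-order ratio. Failing that, the Fourier filtering yields a signature that is $1$ on outer inputs and $0$ on inner inputs, namely the averaged $\frac1n\sum_k \mathrm{diag}(1,\zeta^{k},\zeta^{k},1)=\mathrm{diag}(1,0,0,1)$. This averaging is legitimate because each $j$-stratum contributes $\lambda^{j}$ multiplicatively, and the average over $\lambda$ of a full product of $m$ factors is not the product of averages. This mismatch is why I expect this case to be the delicate point. The remedy is to average each occurrence independently, i.e.\ to use $n^{m}$ oracle calls in general. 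Since that count is exponential, I would instead interpolate in a single variable per occurrence class and require $t$ of infinite order, handling finite order by first composing $g$ with itself in the alternate pairing $M_{x_1,x_3,x_2,x_4}$ to produce an infinite-order ratio.

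\textbf{Step 3: build all of $\mathcal{EQ}_2$.} Once $=_4$ is available on the right, chaining copies of $=_4$ through $\ne_2$ produces even equalities (up to the global bit flip, which is absorbed since each connection flips both sides). This gives $=_{2k}$ for every $k$, and hence $\mathrm{Holant}(\ne_2|\mathcal{EQ}_2,\mathcal{F},g)\le_T\mathrm{Holant}(\ne_2|\mathcal{F},g)$.
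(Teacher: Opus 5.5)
\textbf{Step 1 is fine.} It handles $t$ not a root of unity, with one parity slip. Since $N$ commutes with $M(g)$, a chain of $s$ copies is $M(g)^sN^{s-1}$, so the \emph{odd}-length chains are diagonal, not the even ones.

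\textbf{The genuine gap is Step 2.} The hypothesis is only $t^4\neq1$, so $t$ may be a root of unity. This case is not optional: the paper invokes the lemma in Subcase~1.2 of the proof of Lemma~\ref{=+2and=-2} exactly when $t$ is a primitive eighth root of unity.

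Your Fourier-averaging idea fails, for the reason you give yourself. The fallback, ``compose $g$ in the alternate pairing to get an infinite-order ratio,'' is only asserted. It is also misdescribed as a gadget whose diagonal has a new ratio. What is missing is the actual computation, and this is also the only place where the condition $t^4\neq1$ enters.

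In the pairing $M_{x_1,x_3,x_2,x_4}(g)$, the signature is supported on the $\{00,11\}$ block, where it equals $B=\begin{pmatrix}t&1\\1&t\end{pmatrix}$. On that block the double flip $N$ acts as $J=\begin{pmatrix}0&1\\1&0\end{pmatrix}$, which commutes with $B$. So an odd-length chain through $\ne_2$ gives $B^s=H\,\mathrm{diag}\bigl((t+1)^s,(t-1)^s\bigr)\,H$, with $H$ the Hadamard matrix.

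If $t=e^{\frak{i}\theta}$ is a root of unity with $t\neq\pm1,\pm\frak{i}$, then $(t-1)/(t+1)=\frak{i}\tan(\theta/2)$ has modulus $\neq1$. Hence its odd powers are pairwise distinct. Interpolating as in Lemma~\ref{polyatdia} to $H\,\mathrm{diag}(1,1)\,H$ gives the identity block, which is exactly $=_4$.

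The two cases genuinely complement each other. For $t=\frak{i}/\sqrt3$ (not a root of unity) the second ratio is a root of unity. For $t=\pm\frak{i}$ both routes fail, which is why those values are excluded.

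\textbf{Step 3 also fails as stated.} Joining right-hand copies of $=_4$ through a single left $\ne_2$ forces complementary values on the two copies. For example, two copies joined at one edge give the signature $[x_1=x_2=x_3\neq x_4=x_5=x_6]$, not $=_6$; the flip is partial, not global, and is not absorbed. A $\ne_2$ self-loop on $=_4$ vanishes, so $=_2$ never appears. Yet $=_2$ is exactly what the paper uses right after invoking this lemma.

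The fix is to first obtain a right-hand $\ne_2$. A $\ne_2$ self-loop on inputs $x_3,x_4$ of $g$ gives $(0,1,1,0)$. Then use left $\ne_2$ / right $\ne_2$ / left $\ne_2$ as an equality wire. A self-loop of $=_4$ along such a wire gives $=_2$, and joining copies of $=_4$ along such wires gives every $=_{2k}$.
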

\begin{lemma}\label{sixwithax=0}\cite{Sixvertex}
    Let $f$ be a 4-ary signature with signature matrix  \[M(f)=\begin{pmatrix}
 0 &  0& 0 &b \\
 0 &  c&d  &0 \\
  0&  w&z  & 0\\
  y& 0 &0  &0
\end{pmatrix}\] with $byczdw\ne0$. Then $\mathrm{Holant}(\ne_2|f)$ is \#P-hard.
\end{lemma}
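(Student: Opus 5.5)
The plan is to derive this lemma directly from the six-vertex dichotomy, Theorem~\ref{six vertex}, rather than by a fresh reduction. The signature $f$ in the statement has exactly the matrix shape of that theorem, with outer pair $(a,x)=(0,0)$ and inner pairs $(b,y)$, $(c,z)$, $(d,w)$. Theorem~\ref{six vertex} says that $\mathrm{Holant}(\ne_2|f)$ is \#P-hard unless one of two tractability conditions holds. So it suffices to show that the hypothesis $byczdw\ne 0$ excludes both conditions.

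First I check the first tractable case. It requires that one of the pairs $(b,y)$, $(c,z)$, $(d,w)$ equal $(0,0)$, together with $f\in\mathcal{A}\cup\mathcal{P}$. Since all six of $b,y,c,z,d,w$ are nonzero, no pair is $(0,0)$. This case therefore fails, and I never need to test membership in $\mathcal{A}$ or $\mathcal{P}$.

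Next I check the second tractable case. It requires that each of the three pairs contain at least one zero, which is the description of $f\in\langle\mathcal{M}\rangle$ or $f\in\langle X\mathcal{M}\rangle$. Again every entry of every pair is nonzero, so this case fails too. With both exceptions excluded, Theorem~\ref{six vertex} gives that $\mathrm{Holant}(\ne_2|f)$ is \#P-hard.

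I do not expect a real obstacle here. The only point needing care is to read the exceptional conditions of Theorem~\ref{six vertex} literally, exactly as stated earlier in the paper. In particular, the parenthetical ``$f\in\langle\mathcal{M}\rangle$ or $f\in\langle X\mathcal{M}\rangle$'' must be understood as an equivalent restatement of the condition that each pair contains a zero, not as an extra and weaker condition. If one wanted a self-contained argument instead, the route would be the one in \cite{Sixvertex}: use the nonzero pairs to interpolate a six-vertex signature with generic weights, and reduce from a known \#P-hard evaluation of the Tutte polynomial. That route is longer, and I would not follow it given that Theorem~\ref{six vertex} is available.
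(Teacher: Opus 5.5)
Your derivation is correct and matches how the paper treats this statement: the paper quotes it from \cite{Sixvertex} without proof, and it is the special case of the six-vertex dichotomy (Theorem~\ref{six vertex}) in which $byczdw\ne0$ rules out both exceptional cases. Your caveat about the parenthetical does not matter, because the condition fails even if read separately: a signature in $\langle\mathcal{M}\rangle$ whose support contains all six weight-2 inputs must be a tensor product of four unary signatures, which would force $f_{0000}\ne0$ (and for $\langle X\mathcal{M}\rangle$ the same argument would force $f_{1111}\ne0$).
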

\begin{lemma}\label{foursamepair}(\cite{Eightvertex})
 Let $f$ be a 4-ary signature with signature matrix  \[M(f)=\begin{pmatrix}
 1 &  0& 0 &b \\
 0 &  c&d  &0 \\
  0&  d&c  & 0\\
  b& 0 &0  &1
\end{pmatrix}\] with $bcd\ne0$. Then in $\mathrm{Holant}(\ne_2|\mathcal{F},f)$, we have
    \begin{itemize}
        \item Either $\mathrm{Holant}(\ne_2|\mathcal{F},f)$ is \#P-hard;
        \item or $f\in \mathcal{A}$;
        \item or there exists $T\in \mathrm{SL}(2,\mathbb{C})$, such that $\mathrm{CSP}^2(\mathcal{F'},f')\le_T\mathrm{Holant}(\ne_2|\mathcal{F},f)$, where $f'=(T^{-1})^{\otimes 4}f$, $\mathcal{F'}=(T^{-1})\mathcal{F}$.
    \end{itemize}
\end{lemma}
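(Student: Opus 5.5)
The plan is to exploit the strong symmetry of $f$ and diagonalize it by a holographic transformation. The matrix $M(f)$ is block diagonal on the even-weight indices $\{00,11\}$ and the inner indices $\{01,10\}$, with blocks $\begin{psmallmatrix}1&b\\ b&1\end{psmallmatrix}$ and $\begin{psmallmatrix}c&d\\ d&c\end{psmallmatrix}$. Both blocks are diagonalized by the Hadamard basis, with eigenvalues $1\pm b$ and $c\pm d$.

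\textbf{Step 1: transform.} I take $T$ to be a suitable normalization of $H$ (rescaled to lie in $\mathrm{SL}(2,\mathbb{C})$) and set $f'=(T^{-1})^{\otimes4}f$. Under this transformation, $\ne_2$ goes to a multiple of $[1,0,-1]$. The invariance of $f$ under $X^{\otimes4}$ and under the parity operator becomes the analogous invariance of $f'$. So $f'$ again has eight-vertex form, and its diagonal part is governed by the four values $\lambda_1=1+b$, $\lambda_2=1-b$, $\lambda_3=c+d$, $\lambda_4=c-d$.

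\textbf{Step 2: interpolate.} I form chains of $f$ through $\ne_2$ on the left. In matrix terms this is $M(f)(X\otimes X)M(f)\cdots$, and since $X\otimes X$ commutes with both blocks, it preserves the common eigenbasis. The $s$-fold chain therefore has eigenvalues $\pm\lambda_i^{s}$ in the same basis. If some ratio $\lambda_i/\lambda_j$ with $\lambda_i\lambda_j\ne0$ is not a root of unity, I use a multivariate version of the stratification in Lemma~\ref{polyatdia}. This realizes every signature with the same eigenvectors whose eigenvalue vector lies in the Zariski closure of $\{(\lambda_i^s)\}$. In particular I can realize, after transforming back, either a signature of the form $M(g)=\mathrm{diag}(t,1,1,t)$ with $t\ne0$, $t^4\ne1$, or $=_4$ itself. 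Lemma~\ref{M(g)}, respectively Lemma~\ref{toCSP2}, then gives $\mathrm{CSP}^2(\mathcal{F}',f')\le_T\mathrm{Holant}(\ne_2|\mathcal{F},f)$, which is the third alternative. If some $\lambda_i=0$, the chains produce a signature of reduced rank with zero pairs. I then use Lemmas~\ref{g=_4} and \ref{twozeropair} to obtain $=_4$, or I land in the six-vertex setting of Lemma~\ref{twononzeropairinsixvertexmodel} and Lemma~\ref{sixwithax=0}, which gives hardness or $f\in\mathcal{A}$.

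\textbf{Step 3: the root-of-unity case.} Suppose all ratios $\lambda_i/\lambda_j$ are roots of unity. I first add loops $f^{ij}_{\ne_2}$, which give binary signatures such as $(c+d)[0,1,1,0]$ and $(1+b)[1,0,0,1]$ up to relabeling. I also use the other input pairings of $f$, for instance $M_{x_1,x_3,x_2,x_4}(f)$, whose blocks have different entries, to form new chains. Each such chain yields new eigenvalue ratios, which are polynomial expressions in $b,c,d$. If any of these is not a root of unity, I return to Step~2. Otherwise $b,c,d$ satisfy a finite family of cyclotomic-type constraints, so only finitely many normalized parameter families remain. For these I check directly that $f\in\mathcal{A}$, or that $f'$ is diagonal of the form in Lemma~\ref{twononzeroineight}, or I invoke Theorem~\ref{eightvertex} to get hardness. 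Here I use that the presence of $\mathcal{F}$ can only increase hardness.

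I expect Step~3 to be the main obstacle. The real work is choosing enough inequivalent gadgets (chains along different pairings, and chains mixed with loops) that "all ratios are roots of unity" forces the parameters into the affine or explicitly classified families. I would try first the two alternative pairings $M_{x_1,x_3,x_2,x_4}(f)$ and $M_{x_1,x_4,x_2,x_3}(f)$. Their eigenvalues $1\pm c$, $b\pm d$ and $1\pm d$, $b\pm c$ give three independent root-of-unity conditions, and I expect these to cut the possibilities down to the known $\mathcal{A}$ cases, namely entries in $\{\pm1,\pm\frak{i}\}$.
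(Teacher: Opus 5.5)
\textbf{The gap is in Step~2, at the words ``in particular''.}

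Chains along one pairing, and anything you interpolate from them, stay diagonal in the Bell basis $e_{00}+e_{11},\,e_{00}-e_{11},\,e_{01}+e_{10},\,e_{01}-e_{10}$. Their eigenvalue vectors must satisfy \emph{every} multiplicative relation among the chain eigenvalues built from $1\pm b$ and $c\pm d$. A single ratio that is not a root of unity does not put your targets in that closure. In this basis (in the order above) $\mathrm{diag}(t,1,1,t)$ has eigenvalues $(t,t,1,1)$ and $=_4$ has $(1,1,0,0)$.

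Here is a concrete failure. Take $c=1$, $d=b=2$. Then $M(f)=I\otimes I+2\,N\otimes N$ with $N=\begin{psmallmatrix}0&1\\1&0\end{psmallmatrix}$. Every chain through $\ne_2$, and every interpolant of these chains, has the form $u\,I\otimes I+v\,N\otimes N$. So its eigenvalues on $e_{00}+e_{11}$ and $e_{01}+e_{10}$ coincide, and neither target is reachable. The other pairings do not help: for $(13|24)$ the eigenvalue on $e_{00}-e_{11}$ is $0$, and for $(14|23)$ the two eigenvalues above are again equal (both $3$).

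What you \emph{can} reach is $u=v$, the parity signature, which is a multiple of $(Z^{-1})^{\otimes4}(=_4)$. After the transformation by $Z$ it becomes $=_4$ with $=_2$ on the left, and Lemma~\ref{toCSP2} gives the third alternative with $T$ a normalization of $Z^{-1}$. So the lemma holds in this example, but only through a target your argument never considers. The missing piece is the heart of the lemma. You need a list of admissible targets in every relevant frame. You also need a case analysis of the relation lattice of the eigenvalues over all three pairings, including the zero-eigenvalue configurations. That analysis must show one target is always reachable unless $f\in\mathcal{A}$ or hardness can be proved directly.

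\textbf{Other steps that need repair.}

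Step~1 cannot do what you describe. The eigenvectors of $M(f)$ are entangled, and $M\mapsto(T\otimes T)M(T\otimes T)^{T}$ never diagonalizes $M(f)$. With $T\propto H$ you only get a matrix of the same shape, with entries $\tfrac{\lambda_1\pm\lambda_3}{2}$ and $\tfrac{\lambda_2\pm\lambda_4}{2}$.

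Lemma~\ref{toCSP2} needs $=_2$ on the left, which $\mathrm{Holant}(\ne_2|\cdot)$ does not supply. A parity count shows that every binary gadget built from $\ne_2$, $f$ and $=_4$ is supported on odd-weight inputs. So obtaining $=_4$ in the original frame does not by itself give $\mathrm{CSP}^2$.

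The zero-eigenvalue branch is also unsupported. For $b=1$ there is no zero pair, and the interpolants along that pairing all have a rank-one outer block, so Lemma~\ref{g=_4} cannot be used. Moreover, Lemmas~\ref{g=_4} and~\ref{twozeropair} are stated with $=_2$, not $\ne_2$, on the left.

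Step~3 does not behave as you expect, in two ways.
\begin{itemize}
\item It is nearly empty. If $(1+b)/(1-b)$, $(1+c)/(1-c)$ and $(1+d)/(1-d)$ are roots of unity, then $b,c,d\in\frak{i}\mathbb{R}$. Then $(c+d)/(c-d)$ is real, and it is a root of unity only if $cd=0$. So this case arises only when one of $1\pm b$, $1\pm c$, $1\pm d$, $c\pm d$ vanishes.
\item Non-affine signatures do survive it. For $b=c=1$, $d=\frak{i}\sqrt3$, every nonzero eigenvalue in all three pairings has modulus $2$ and argument a multiple of $\pi/3$, yet $f\notin\mathcal{A}$. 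Your expectation that only entries in $\{\pm1,\pm\frak{i}\}$ remain is false, and such points need their own argument.
\end{itemize}
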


\begin{corollary}\label{affine support}(\cite{Eightvertex})
    Let $f$ be a 4-ary signature that has eight-vertex form with $f_{0000}f_{1111}\ne0$. If the support of $f$ is not affine, then $\mathrm{Holant}(\ne_2|f)$ is \#P-hard.
\end{corollary}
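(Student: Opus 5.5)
The plan is to derive the statement from Theorem~\ref{eightvertex} by a contrapositive: assume $\mathrm{Holant}(\ne_2|f)$ is not \#P-hard, so $(\ne_2|f)$ is $\mathcal{A}$-, $\mathcal{P}$- or $\mathcal{L}$-transformable, and show that $f$ must then have affine support whenever $ax\ne0$. Throughout, write $M(f)$ as in the beginning of this section, so that $a=f_{0000}$ and $x=f_{1111}$.

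\textbf{Step 1: combinatorial description of affine supports.} Since $f$ has eight-vertex form, its support lies in the even-weight strings. If $ax\ne0$, the support contains $0000$ and $1111$, so it is affine iff it is a linear subspace of the even-weight code that contains $1111$. Such a subspace is closed under complement, and the sum of two inner-pair representatives is the third, e.g.\ $0011\oplus0101=0110$. Hence the support is affine iff two conditions hold:
\begin{enumerate}
\item every inner pair $(b,y)$, $(c,z)$, $(d,w)$ is either entirely zero or entirely nonzero;
\item the number of entirely nonzero inner pairs is $0$, $1$ or $3$.
\end{enumerate}
So a non-affine support means that either some inner pair has exactly one zero (Case I), or exactly two inner pairs are fully nonzero while the third is $(0,0)$ (Case II).

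\textbf{Step 2: constrain the holographic transformations.} A transformation $T$ must send $(\ne_2)$ into the relevant class, i.e.\ $(\ne_2)T\in\mathscr C$ with $\mathscr C\in\{\mathcal A,\mathcal P,\mathcal L\}$. I would list the possible $T$ up to the stabilizer of each class, using the known descriptions of binary signatures in $\mathcal A$, $\mathcal P$ and $\mathcal L$ from \cite{Eightvertex}. This should leave only diagonal or antidiagonal $T$, or $T$ composed with $Z$ and $H$-type matrices.

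\textbf{Step 3: check the tractable cases.} For diagonal or antidiagonal $T$, the support of $T^{-1}f$ equals the support of $f$ up to complementation. Membership in $\mathcal{A}$, $\mathcal{L}$ or $\mathcal{P}$ of a signature with $0000$ and $1111$ in its support forces an affine support for $\mathcal A$ and $\mathcal L$ by definition. For $\mathcal P$ it forces a product of equalities and disequalities, which again gives an affine support. For the remaining $T$, I would instead compute $T^{-1}f$ directly. With $ax\ne0$, the entries of $T^{-1}f$ are explicit combinations of $a,x,b,y,\dots$, and membership in $\mathscr C$ imposes equalities such as $|$values$|$ all equal or specific ratios being powers of $\frak i$. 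In Cases I and II these equalities are incompatible with exactly one zero in a pair, or with exactly two full pairs.

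\textbf{Main obstacle and fallback.} The main obstacle is the non-diagonal transformations in Step 3, where the case analysis could become messy. If it does, I would switch to a direct gadget reduction. In Case I, say $b\ne0=y$, I would chain $f$ with its rotations, e.g.\ $M_{x_1,x_2,x_3,x_4}(f)\,M_{x_3,x_4,x_1,x_2}(f)$, to produce signatures realizing the offending pattern. Then interpolation as in Lemma~\ref{polyatdia}, together with Lemma~\ref{M(g)}, should isolate a signature of the shape in Lemma~\ref{twononzeroineight} or Lemma~\ref{sixwithax=0}, which would yield hardness. In Case II, squaring $M(f)$ in the appropriate pairing kills nothing and keeps two full inner pairs with $ax\ne0$. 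From there I would realize $\mathcal{EQ}_2$ and invoke Theorem~\ref{CSP2dich}, checking that a non-affine support is neither in $\mathcal A$ nor in $\mathcal L$ and that two full pairs with a missing third is not of product type.
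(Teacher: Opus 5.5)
\paragraph{Verdict: the core step is missing.} Your Step~1 is correct, and so is the diagonal/antidiagonal part of Step~3. Deducing the statement contrapositively from Theorem~\ref{eightvertex} is a legitimate route here; the paper itself gives no argument and simply imports the statement from \cite{Eightvertex}. The problem is that the whole content sits in the remaining transformations, and for those you only assert that the resulting conditions are ``incompatible'' with Cases~I and~II. That is not automatic, and nothing in the proposal shows it.

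\paragraph{Why the non-diagonal transformations need real work.} Write $N=\begin{pmatrix}0&1\\1&0\end{pmatrix}$; the requirement is $T^{T}NT\in\mathscr C$.
\begin{itemize}
\item Left factors $G$ that are diagonal or antidiagonal are harmless. Indeed $G^{T}NG\propto N$, and $G^{\otimes4}$ preserves eight-vertex form and sends the support to itself or its complement. Right factors from the stabilizer of $\mathscr C$ are likewise harmless.
\item If what remains lies in the stabilizer of $\mathcal A$, you are done for free. For example, $T=Z^{T}$ gives $T^{T}NT=I$, so $f\in G\mathcal A$ and its support is affine.
\item However, $T=Z^{T}\operatorname{diag}(1,\alpha)$ with $\alpha^{2}=\frak{i}$ gives $T^{T}NT=\operatorname{diag}(1,\frak{i})\in\mathcal A$. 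This case cannot be moved into the stabilizer of $\mathcal A$ by the allowed factors, so $f=T^{\otimes4}g$ with $g\in\mathcal A$ is in general not affine.
\item In that case you must genuinely use the eight-vertex constraint. Here $f(y)=\frac14\sum_u g(u)(\frak{i}\alpha)^{|u|}(-1)^{u\cdot y}$. Vanishing of the odd-weight entries of $f$ is equivalent to the twisted symmetry $g(\bar u)=(-\frak{i})^{|u|-2}g(u)$. This has to be played off against the quadratic form of $g$ to restrict its support, before you can tell which inner pairs of $f$ may vanish.
\item Comparable work is needed for $\mathcal P$. A diagonal $T^{T}NT$ gives the continuous family $T=G\,Z^{T}E$ with $E$ diagonal. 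You must show that an eight-vertex $Z^{T}$-image, with $ax\ne0$, of a $4$-ary product-type signature has affine support. Such a signature is forced to be a tensor product of two binaries or a two-point generalized equality.
\item The class $\mathcal L$ needs a similar analysis.
\end{itemize}

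\paragraph{The fallback does not close the gap.}
\begin{itemize}
\item In $\mathrm{Holant}(\ne_2|f)$, two copies of $f$ can only be joined through $\ne_2$. So your chain is really $M(f)(N\otimes N)M(f)$, not $M(f)M(f)$.
\item More seriously, a parity count rules out building $=_2$ directly. Each internal $\ne_2$ carries exactly one $1$, and $f$ accepts only even weight, so every binary gadget on the $f$ side is supported on $\{01,10\}$. Hence $=_2$ on that side, which you need both to flip inputs and to reach $\mathcal{EQ}_2$, is not constructible by gadgets.
\item Lemma~\ref{M(g)} would supply $\mathcal{EQ}_2$ only from a signature $\operatorname{diag}(t,1,1,t)$ with $t\ne0$ and $t^{4}\ne1$. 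You give no construction of such a signature from $f$ in either case.
\item For reference, with $=_2$ on the $f$ side Case~II would be immediate. If, say, $(d,w)=(0,0)$, flipping $x_2,x_3$ moves this pair to the outer position. All three new inner pairs are then fully nonzero, and Lemma~\ref{sixwithax=0} gives hardness.
\end{itemize}
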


\subsection{\texorpdfstring{The dichotomy of $\mathrm{Holant}(=_2|f)$, where $f$ has eight-vertex form}{The dichotomy of Holant(=2 | f), where f has eight-vertex form}}
\begin{lemma}\label{mobius}
    Let $f$ be a 4-ary signature with signature matrix  \[M(f)=\begin{pmatrix}
 a & 0 &  0& b\\
  0&  c&  d&0 \\
 0 & w &  z&0 \\
 y &0  & 0 &x
\end{pmatrix},\] where $\begin{pmatrix}
 a & b\\
  y&x
\end{pmatrix}$ has full rank. Suppose that we can construct $g_i=(1,0,0,t_i)$ as binary signatures where $t_i$ are distinct for $1\le i\le 3$. In $\mathrm{Holant}(=_2|\mathcal{F},f)$, if one of the following holds:
\begin{itemize}
    \item $\begin{pmatrix}
 a & b\\
  y&x
\end{pmatrix}$ has the form $\begin{pmatrix}
 e^{\frak{i}\theta} & e^{\frak{i}\theta}\lambda \\
  \overline{\lambda } &1
\end{pmatrix}$, where $|\lambda|\ne1$(up to a nonzero scalar), and $\begin{pmatrix}
 a & b\\
  y&x
\end{pmatrix}$ has infinite projective order;
\item $\begin{pmatrix}
 a & b\\
  y&x
\end{pmatrix}$ does not have the form  $\begin{pmatrix}
 a &0 \\
  0 &x
\end{pmatrix}$, or $\begin{pmatrix}
 0&b\\
  y &0
\end{pmatrix}$, or $\begin{pmatrix}
 e^{\frak{i}\theta} & e^{\frak{i}\theta}\lambda \\
  \overline{\lambda } &1
\end{pmatrix}$(up to a nonzero scalar), where $|\lambda|\ne1$;
\end{itemize}
then we can construct a degenerate binary signature.
\end{lemma}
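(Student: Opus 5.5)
Write $N=\begin{pmatrix} a & b\\ y&x\end{pmatrix}$ for the outer block of $M(f)$. The plan is to build, from $f$ and the binaries $g_i$, a family of binary signatures whose matrices are $N$ twisted by the diagonal matrices $D_i=\mathrm{diag}(1,t_i)$. We then show that unless $N$ has one of the exceptional forms, some element of this family has infinite projective order. After that, Corollary~\ref{polytodegen} (or Lemma~\ref{infinitetodegen}) supplies a degenerate binary signature.

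\textbf{Step 1: the basic binary gadgets.} Connect $x_3,x_4$ of $f$ to the two inputs of $g_i=(1,0,0,t_i)$, using $=_2$ in the bipartite setting. Since the middle entries of $g_i$ vanish, only the columns $00$ and $11$ of $M(f)$ contribute. The resulting binary signature has matrix
\[
N D_i=\begin{pmatrix} a & bt_i\\ y & xt_i\end{pmatrix},
\]
read in the variables $(x_1,x_2)$ after identifying rows $00,11$. Because $M(f)$ has eight-vertex form, the inner block only feeds rows $01,10$, and those are killed by the second gadget below. Chaining such gadgets, and also using $g_j$ in between, we realize every word in $N D_i$ and in the $D_j$. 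Since the $g_j$ themselves are available, we get the group (up to scalars) generated by $N$ and the $D_j$. Here $D_1D_2^{-1}$ and similar quotients are available by Corollary~\ref{polyto-1}.

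\textbf{Step 2: reduce to a group-theoretic statement.} By Corollary~\ref{polytodegen}, if any constructed full-rank binary has infinite projective order we are done. So assume, for contradiction, that all of them have finite order. By Schur's theorem (Theorem~\ref{Schur}), the subgroup $G\subset\mathrm{SL}(2,\mathbb{C})$ generated by the normalized $N$ and $D_1,D_2,D_3$ is then finite. We may normalize $t_1=1$, since otherwise we multiply by $D_1^{-1}$. The elements $D_2,D_3$ are diagonal, and at least one of them is nonscalar because the $t_i$ are distinct. Hence $G$ contains a noncentral diagonal element, and by finiteness every $t_i$ is a root of unity.

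A finite group is unitarizable, so $G$ preserves a positive definite Hermitian form $Q$. A noncentral diagonal element of $G$ has distinct eigenvalues, and its eigenvectors $e_1,e_2$ must be $Q$-orthogonal. Therefore $Q$ is diagonal, and after conjugating by a positive diagonal matrix $G$ becomes a subgroup of $SU(2)$. Translating back, $N$ up to scalar equals $\Delta U\Delta^{-1}$ with $U$ unitary and $\Delta$ positive diagonal. Writing $U=\begin{pmatrix}\alpha&\beta\\-\bar\beta e^{i\phi}&\bar\alpha e^{i\phi}\end{pmatrix}$, this says $N$ is either diagonal, or antidiagonal, or, when $\alpha\beta\neq0$, of the form $\begin{pmatrix} e^{\frak{i}\theta}&e^{\frak{i}\theta}\lambda\\ \overline{\lambda}&1\end{pmatrix}$ up to a scalar. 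In that last case, full rank forces $|\lambda|\ne1$.

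\textbf{Step 3: conclude.} Under the second hypothesis, $N$ is none of these forms, which contradicts Step 2, so some constructed binary has infinite order. Under the first hypothesis, $N$ itself has infinite projective order and is directly constructible by Step 1 with $g_1$. In both cases Corollary~\ref{polytodegen} gives a degenerate binary signature.

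The main obstacle is Step 2. We must check carefully that the rescaling forced by the Hermitian form gives exactly the stated normal form $\begin{pmatrix} e^{\frak{i}\theta}&e^{\frak{i}\theta}\lambda\\ \overline{\lambda}&1\end{pmatrix}$, with the precise placement of conjugates. We must also check that three distinct $t_i$ are really needed, namely to guarantee a nonscalar diagonal element after normalizing by $D_1$. The Step~1 gadget bookkeeping, in particular ensuring the inner block of $f$ does not contaminate the $00/11$ rows, is routine but also needs checking.
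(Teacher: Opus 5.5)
Your Step~1 gadget does not produce what you claim, and everything after it depends on it. Contracting $x_3,x_4$ of $f$ with $g_i=(1,0,0,t_i)$ gives the binary signature $f(x_1,x_2,0,0)+t_if(x_1,x_2,1,1)$. By the eight-vertex form this is $f^{34}_{g_i}=(a+t_ib,0,0,y+t_ix)$, that is, the \emph{vector} $N(1,t_i)^T$ laid on the diagonal, not the matrix $ND_i$. The block $N$ lives inside a $4$-ary signature. Chaining copies of $f$ and modifying inputs by the $g_j$ gives $4$-ary signatures whose outer blocks are words $W$ in $N$ and the $D_j$. Contracting their last two inputs with some $g_j$ again returns only the diagonal binary $\mathrm{diag}(W(1,t_j)^T)$. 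Nothing in your plan realizes $N$ itself as a binary, since that would amount to fusing $x_1,x_2$ into one edge. So Schur's theorem cannot be applied to $\langle N,D_1,D_2,D_3\rangle$, and under the first hypothesis $N$ is not ``directly constructible''.

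What the gadget really gives you is the M\"obius map $\varphi(t)=(y+tx)/(a+tb)$ acting on the available parameters. If some $g_i$ or $f^{34}_{g_i}$ has rank one, you are done. If some $|t_i|\ne1$ or $|\varphi(t_i)|\ne1$, you have a diagonal binary of infinite order and Corollary~\ref{polytodegen} applies. Otherwise $\varphi$ sends three distinct points of $S^1$ into $S^1$, hence preserves $S^1$, and this is exactly the special form. This is where the three distinct $t_i$ are used. For the first hypothesis you need the orbit argument, which is missing. Since $\varphi$ has infinite order and a M\"obius map fixing three points is the identity, some $t_i$ has an infinite $\varphi$-orbit. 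Feeding $(1,0,0,\varphi^n(t_i))$ back into $f$ then produces polynomially many distinct diagonal binaries, and Lemma~\ref{infinitetodegen} finishes.

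Even if $N$ were available, Step~2 would not give the stated normal form. Unitarizability yields $N\sim\Delta U\Delta^{-1}$ with $\Delta=\mathrm{diag}(1,r)$, $r>0$, and $U=\begin{pmatrix}\alpha&\beta\\-\omega\bar\beta&\omega\bar\alpha\end{pmatrix}$, $|\omega|=1$. Then $y/x=-r^2\,\overline{b/a}$, whereas the special form has $y/x=\overline{b/a}$. For example, $N=\begin{pmatrix}1&1\\-1&1\end{pmatrix}$ is a multiple of a unitary matrix but is neither diagonal, antidiagonal, nor of the special form.

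The mismatch is structural. The special form satisfies $N^{*}JN=(1-|\lambda|^2)J$ with $J=\mathrm{diag}(1,-1)$. So it preserves an \emph{indefinite} Hermitian form up to a real factor, and its M\"obius map preserves the unit circle. That is precisely the constraint imposed by having to send roots of unity to roots of unity. A positive definite invariant form, which is what finiteness of a group gives you, is a different condition.
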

\begin{proof}
    For $f_{g_i}^{34}=(a+t_ib,0,0,y+t_ix)$, note that at least one of $a+t_ib$ and $y+t_ix$ is nonzero since $\begin{pmatrix}
    a & b\\
    y&x
    \end{pmatrix}$ has full rank.
    If there exist $1\le i \le 3$ such that $a+t_ib=0$ or $y+t_ix=0$, then we construct a degenerate binary signature.
    Thus we assume that for any $1\le i \le 3$, $(a+t_ib)(y+t_ix)\ne0$. Then $f_{g_i}^{34}=(a+t_ib)(1,0,0,\varphi(t_i))$, where $\varphi(t_i)=\frac{y+t_ix}{a+t_ib}$. $\varphi(\mathfrak{z} )=\frac{y+\mathfrak{z}x}{a+\mathfrak{z}b}$ is a $\mathrm{M\ddot{o}bius}$ transformation of the extended complex plane $\widehat{\mathbb{C}}$. Since
a $\mathrm{M\ddot{o}bius}$ transformation is determined by any 3 distinct points, mapping 3 distinct points from $S^1$
to $S^1$
implies that this $\varphi(\mathfrak{z} )$ maps $S^1$ homeomorphically onto $S^1$.

A $\mathrm{M\ddot{o}bius}$ transformation mapping 3 distinct points from $S^1$ to $S^1$ has a special form $\mathcal{M}(\lambda,e^{\frak{i}\theta}): \mathfrak{z}\longmapsto e^{-\frak{i}\theta}\frac{\mathfrak{z}+\overline{\lambda}}{1+\lambda\mathfrak{z}}$, where $|\lambda|\ne1$. This implies that if $\begin{pmatrix}
 a & b\\
  y&x
\end{pmatrix}$ does not have the form $\begin{pmatrix}
 e^{\frak{i}\theta} & e^{\frak{i}\theta}\lambda \\
  \overline{\lambda } &1
\end{pmatrix}$(up to a nonzero scalar), we construct a binary signature of the form $(1,0,0,s)$ with $|s|\ne1$ and then by Corollary~\ref{polytodegen} we can construct a degenerate binary signature. If $\begin{pmatrix}
 a & b\\
  y&x
\end{pmatrix}$ has the form $\begin{pmatrix}
 e^{\frak{i}\theta} & e^{\frak{i}\theta}\lambda \\
  \overline{\lambda } &1
\end{pmatrix}$, and has infinite projective order, then the $\mathrm{M\ddot{o}bius}$ transformation $\varphi(\mathfrak{z} )$ defines an infinite group. If for each $i\in[3]$, there exists some $n_i>0$ such that $\varphi^{n_i}(t_i )=t_i$, then $\varphi^{n_1n_2n_3}(t_i )=t_i$ for all $i\in[3]$. So the
$\mathrm{M\ddot{o}bius}$ transformation is the identity map. This implies that $\varphi(\mathfrak{z} )$ defines a
group of finite order, a contradiction. Therefore, there is an $i\in[3]$ such that $\varphi^{n}(t_i )\ne t_i$
for all $n \in \mathbb{N}$.
This implies that $(1, \varphi^{n}(t_i ))$ are all distinct for $n\in \mathbb{N}$, since $\varphi$ maps $S^1$ 1-1 onto $S^1$.
This implies that we can construct polynomially many distinct binary signatures of the form $(1,0,0,\varphi^{n}(t_i ))$. Then by Lemma~\ref{infinitetodegen}, we can construct a degenerate binary signature.
\end{proof}

\begin{lemma}\label{atleastonenonzeroin(a,x)}
    Let $f$ be a 4-ary signature with signature matrix  \[M(f)=\begin{pmatrix}
  a&  0& 0 & b\\
  0&  c& d & 0\\
  0& w &  z& 0\\
 y &  0&  0&x
\end{pmatrix}.\] If there is at least one nonzero entry in $(a,x)$, and $M(f)$ is not of the form \[\begin{pmatrix}
 \epsilon x & 0 &0  & \epsilon y\\
  0& \epsilon z & \epsilon w & 0\\
 0 & w & z & 0\\
 y & 0 &0  &x
\end{pmatrix},\] where $\epsilon=\pm1$, then we have \[
\mathrm{Holant}(=_2|f,[1,0,t])\le_T\mathrm{Holant}(=_2|f),
\]where $t\ne\pm1,0$, or we have \[
\mathrm{Holant}(=_2|f,h)\le_T\mathrm{Holant}(=_2|f),
\]where $h$ is degenerate binary signature.
\end{lemma}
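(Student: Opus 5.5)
The plan is to build binary signatures of the form $(1,0,0,t)$ directly from $f$ by self-loops and a two-copy gadget, using only the eight-vertex structure. Since $f$ has eight-vertex form, every loop $f^{ij}_{=_2}$ is again supported on $\{00,11\}$ or $\{01,10\}$. Concretely,
\[
f^{34}_{=_2}=(a+b,0,0,y+x),\qquad f^{12}_{=_2}=(a+y,0,0,b+x),
\]
\[
f^{23}_{=_2}\text{-type loops}=(a+c,\,\cdot,\,\cdot,\,z+x)\ \text{or}\ (a+w,0,0,d+x),
\]
depending on the pairing. In each case the result is a diagonal binary signature $(p,0,0,q)$, and sometimes an anti-diagonal one.

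The case analysis on such a signature $(p,0,0,q)$ goes as follows.
\begin{itemize}
\item If exactly one of $p,q$ is zero, it is a nonzero degenerate binary signature, which is the second conclusion.
\item If $p,q\neq0$ and $q/p\neq\pm1$, we have $[1,0,t]$ with $t\ne\pm1,0$, which is the first conclusion.
\item So we may assume every loop gives either the zero signature or a multiple of $(1,0,0,\pm1)$.
\end{itemize}
Anti-diagonal loops $(0,r,s,0)$ are handled the same way. Chaining two of them gives $(rs,0,0,rs)$-type results, but chaining with the diagonal ones gives $(0,r,\pm s,0)$. Moreover, $M\cdot M^T$-type products of $(0,r,s,0)$ with itself yield $(r^2,0,0,s^2)$, so $r/s=\pm1$ or $\pm\frak{i}$ unless we are already done.

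Next, I will use $(1,0,0,-1)$ whenever it appears as a loop. Placing it on inputs $3,4$ gives $(a-b,0,0,y-x)$, and similarly for the other pairs. Combined with the plain $=_2$ loops, each pair of entries must satisfy a relation of the form "$a+b$ and $y+x$ are both zero or have ratio $\pm1$, and likewise for $a-b$, $y-x$." This gives a finite system of linear-ish constraints on $(a,b,c,d,w,z,y,x)$. When the constraints are too weak I will also form the chain $M(f)M(f)$ and its loops. The resulting $(a^2+by+\cdots,0,0,\cdots)$ quantities give additional equations, and the hypothesis that $(a,x)\neq(0,0)$ ensures that at least one of these loops is nonzero and nondegenerate in the relevant entries.

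The main step, and the expected obstacle, is solving this system. The goal is to show that, up to the symmetry $X$ (swapping $0\leftrightarrow1$), all surviving solutions satisfy $a=\epsilon x$, $b=\epsilon y$, $c=\epsilon z$, $d=\epsilon w$ with a common sign $\epsilon=\pm1$. Equivalently, $f$ is invariant up to sign under $X^{\otimes4}$, which is exactly the excluded form.

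For this step I would first try to show that the ratios $q/p\in\{\pm1\}$ coming from the three pairings force a single global sign. The strategy is to add and subtract the equations $(a+b)=\pm(y+x)$ and $(a-b)=\pm(y-x)$: mixed signs force $a=\pm y$ with $b=\pm x$, and those cases should be killed by the other pairings, or by the assumption $(a,x)\ne(0,0)$ together with nondegeneracy. Where mixed signs survive the linear constraints, I would use the two-copy chain, or a loop weighted with the anti-diagonal signature $\ne_2$-type gadget, to produce a ratio that is not $\pm1$. That yields $[1,0,t]$, or else a degenerate signature via Corollary~\ref{polytodegen} when the ratio is not a root of unity.
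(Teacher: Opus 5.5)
Your loop inventory and the reduction to the situation where every loop is $0$ or a nonzero multiple of $(1,0,0,\pm1)$ are correct. However, the step you defer as ``the main obstacle'' is the whole content of the lemma, and your sketch of it goes wrong.

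\textbf{The signs are reversed.} From $a+b=\epsilon_1(y+x)$ and $a-b=\epsilon_2(y-x)$, \emph{mixed} signs give $a=\epsilon_1x$, $b=\epsilon_1y$. That is already the excluded relation for the pair $(b,y)$. Equal signs $\epsilon_1=\epsilon_2=\epsilon$ give $a=\epsilon y$, $b=\epsilon x$, and this is the case that needs work.

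\textbf{What settles the equal-sign case.} Loops through the other inner pairs do not help, since they involve $c,z,d,w$ and not $b,y$. The loop on the complementary positions does. For $\epsilon=1$ use $f^{12}_{=_2}=(a+y,0,0,b+x)$, and for $\epsilon=-1$ use $f^{12}_{[1,0,-1]}=(a-y,0,0,b-x)$. Either one equals $2(a,0,0,\epsilon x)$. This signature is degenerate if exactly one of $a,x$ vanishes. Otherwise it gives $[1,0,t]$ with $t\neq\pm1,0$ unless $a=\epsilon' x$ for a sign $\epsilon'$. In that last case $b=\epsilon x=\epsilon' y$, so the relation holds for the pair.

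\textbf{When $[1,0,-1]$ is not available.} You must also treat the case where neither $f^{12}_{=_2}$ nor $f^{34}_{=_2}$ is a nonzero multiple of $[1,0,-1]$. Then both lie in $\mathbb{C}(=_2)$, and $a+y=b+x$, $a+b=y+x$ give $a=x$, $b=y$ outright.

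\textbf{The global sign.} This is the one-line remark that $a=\epsilon_1x=\epsilon_2x$ with $\epsilon_1\neq\epsilon_2$ forces $a=x=0$. That, rather than guaranteeing a nonzero loop, is what the hypothesis $(a,x)\neq(0,0)$ is really needed for.

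\textbf{Drop the fallbacks.} Since $=_2$ and $f$ are supported on even-weight inputs, every gadget in $\mathrm{Holant}(=_2|f)$ is supported on even-weight inputs. So no nonzero anti-diagonal binary signature, and in particular no $\neq_2$-type gadget, is constructible. Neither the two-copy chain nor Corollary~\ref{polytodegen} is needed.

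\textbf{Comparison with the paper.} The paper runs the same computation in the opposite order. By the sign remark, if $f$ is not of the excluded form then some inner pair, say $(b,y)$, admits no $\epsilon$ with $a=\epsilon x$ and $b=\epsilon y$. It then splits into four cases according to whether $f^{12}_{=_2}$ and $f^{34}_{=_2}$ lie in $\mathbb{C}(=_2)$ or in $\mathbb{C}^{\times}[1,0,-1]$. Whenever $[1,0,-1]$ appears, it finishes with $f^{12}_{[1,0,-1]}$ or $f^{34}_{[1,0,-1]}$ exactly as above. With these corrections, your pair-by-pair plan becomes that proof.
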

\begin{proof}
    For any inner pair ($s,t$) and the outer pair $(a,x)$, we say " $S(\begin{bmatrix}
a  & s\\
 t &x
\end{bmatrix})$ by $\epsilon$" if $a=\epsilon x$ and $s=\epsilon t$ for some $\epsilon=\pm1$.
If $S(\begin{bmatrix}
a  & b\\
 y &x
\end{bmatrix})$ by $\epsilon_1$, $S(\begin{bmatrix}
a  & c\\
 z &x
\end{bmatrix})$ by $\epsilon_2$, $S(\begin{bmatrix}
a  & d\\
 w &x
\end{bmatrix})$ by $\epsilon_3$, then by assumption it is not the case in which $\epsilon_1=\epsilon_2=\epsilon_3$. Without loss of generality, we assume that $\epsilon_1\ne\epsilon_2$, then $a=\epsilon_1x=\epsilon_2x$, which implies that $a=x=0$, contradicting the fact that there exists at least one nonzero entry in $(a,x)$. Thus at least one of $S(\begin{bmatrix}
a  & b\\
 y &x
\end{bmatrix})$, $S(\begin{bmatrix}
a  & c\\
 z &x
\end{bmatrix})$, $S(\begin{bmatrix}
a  & d\\
 w &x
\end{bmatrix})$ does not hold for any $\epsilon=\pm1$. Without loss of generality, we assume that $S(\begin{bmatrix}
a  & b\\
 y &x
\end{bmatrix})$ does not hold for any $\epsilon=\pm1$, i.e., there is no $\epsilon=\pm1$ such that $a=\epsilon x$ and $b=\epsilon y$.

For $f_{=_2}^{12}=(a+y,0,0,b+x)$ and $f_{=_2}^{34}=(a+b,0,0,y+x)$, there are four cases:
\begin{caselist}
    \caseitem{1}   $f_{=_2}^{12}$ and $f_{=_2}^{34}$ are both $[0,0,0]$ or $=_2$. In this case, we have \[
    \left\{\begin{matrix}
 a+y=b+x\\
a+b=y+x
\end{matrix}\right. ,
    \] then $a=x$ and $b=y$, i.e. $S(\begin{bmatrix}
a  & b\\
 y &x
\end{bmatrix})$ by $1$, a contradiction.
    \caseitem{2}   $f_{=_2}^{12}$ is $[0,0,0]$ or $=_2$, and $f_{=_2}^{34}$ is $[1,0,-1]$.  In this case, we have \[
    \left\{\begin{matrix}
 a+y=b+x\\
a+b=-y-x
\end{matrix}\right. .
    \] Then we have $a=-y$, $b=-x$ and we can construct $[1,0,-1]$. For $f_{[1,0,-1]}^{12}=(a-y,0,0,b-x)=2(a,0,0,-x)$. If $ax\ne0$, then we have $a=-\epsilon x=\epsilon b=- y$, i.e. $S(\begin{bmatrix}
a  & b\\
 y &x
\end{bmatrix})$ by $\epsilon$, a contradiction. If there exists one zero in $(a,x)$, $f_{[1,0,-1]}^{12}$ is degenerate.
    \caseitem{3}  $f_{=_2}^{34}$ is $[0,0,0]$ or $=_2$, $f_{=_2}^{12}$ is $[1,0,-1]$. This case is the same as \caseref{2} by symmetry.
    \caseitem{4}  $f_{=_2}^{12}$ and $f_{=_2}^{34}$ are both $[1,0,-1]$. In this case, we have \[
    \left\{\begin{matrix}
a+y=-b-x\\
 a+b=-y-x
\end{matrix}\right..
    \] Then we have $a+b+x+y=0$ and we can construct $[1,0,-1]$. For $f_{[1,0,-1]}^{34}=(a-b,0,0,y-x)$, if we have $a-b=y-x$, then $a=-x$ and $b=-y$, a contradiction. If we have $a-b=x-y$, i.e. $a=-y$ and $b=-x$, then for $f_{[1,0,-1]}^{12}=(a-y,0,0,b-x)=2(a,0,0,x)$. If $ax\ne0$, then we have $a=\epsilon x$ and $b=\epsilon y$, a contradiction. If there exists exactly one zero in $(a,x)$, $f_{[1,0,-1]}^{12}$ is degenerate.\qedhere
\end{caselist}

\end{proof}

\begin{corollary}
    If \[M(f)=\begin{pmatrix}
 \epsilon x & 0 &0  & \epsilon y\\
  0& \epsilon z & \epsilon w & 0\\
 0 & w & z & 0\\
 y & 0 &0  &x
\end{pmatrix}\] and  $\epsilon=1$, then $S(f_{=_2})=\{=_2\}$; if $\epsilon=-1$, then $S(f_{=_2})=\{[1,0,-1]\}$.
\end{corollary}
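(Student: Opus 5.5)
This is a direct computation, so I will simply evaluate every element of $S(f_{=_2})$, that is, every signature $f_{=_2}^{ij}$ with $i\ne j$. Since connecting $x_i,x_j$ by $=_2$ is symmetric in the two variables, $f_{=_2}^{ij}=f_{=_2}^{ji}$, and there are only six unordered pairs to check: $\{1,2\},\{3,4\},\{1,3\},\{2,4\},\{1,4\},\{2,3\}$. In each case I will write the result as $(g_{00},g_{01},g_{10},g_{11})$, indexed by the two remaining variables in increasing order, and read the entries off $M_{x_1,x_2,x_3,x_4}(f)$, whose nonzero entries are
\[
f_{0000}=\epsilon x,\quad f_{0011}=\epsilon y,\quad f_{0101}=\epsilon z,\quad f_{0110}=\epsilon w,\quad f_{1001}=w,\quad f_{1010}=z,\quad f_{1100}=y,\quad f_{1111}=x.
\]

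For the pair $\{1,2\}$ I sum the rows indexed $00$ and $11$: $f_{=_2}^{12}=(\epsilon x+y,\,0,\,0,\,\epsilon y+x)$. For $\{3,4\}$ I sum the columns indexed $00$ and $11$: $f_{=_2}^{34}=(\epsilon x+\epsilon y,\,0,\,0,\,y+x)$. In both cases the off-diagonal entries vanish by the eight-vertex form. When $\epsilon=1$ both are $(x+y)(1,0,0,1)$, a multiple of $=_2$. When $\epsilon=-1$ they are $(y-x)(1,0,0,-1)$ and $-(x+y)(1,0,0,-1)$ respectively, both multiples of $[1,0,-1]$.

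For the pair $\{1,3\}$, with remaining variables $x_2,x_4$:
\[
g_{00}=f_{0000}+f_{1010}=\epsilon x+z,\qquad g_{11}=f_{0101}+f_{1111}=\epsilon z+x,
\]
and $g_{01},g_{10}$ involve only odd-weight entries, so they vanish. This gives $(x+z)(1,0,0,1)$ if $\epsilon=1$ and $(z-x)(1,0,0,-1)$ if $\epsilon=-1$. The remaining pairs $\{2,4\},\{1,4\},\{2,3\}$ go the same way, using the pairs $(z,x)$ and $(w,x)$; the weight-parity argument always kills the off-diagonal entries.

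Every element of $S(f_{=_2})$ is therefore a scalar multiple of $=_2$ when $\epsilon=1$, and of $[1,0,-1]$ when $\epsilon=-1$. The statement $S(f_{=_2})=\{=_2\}$ (resp.\ $\{[1,0,-1]\}$) is to be read, as the paper does elsewhere, up to nonzero scalars, with zero signatures ignored. The only real care needed is sign bookkeeping in the $\epsilon=-1$ case, where one must check that each $g_{11}/g_{00}=-1$ rather than $+1$. This holds because every constructed signature has the form $(\epsilon p+q,\,0,\,0,\,\epsilon q+p)$, and for $\epsilon=-1$ this equals $(q-p)(1,0,0,-1)$.
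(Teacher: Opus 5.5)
Your proof is correct and is the same direct entrywise computation the paper relies on: the corollary is stated without proof right after the preceding lemma, and that lemma's proof already computes $f^{12}_{=_2}=(a+y,0,0,b+x)$ and $f^{34}_{=_2}=(a+b,0,0,y+x)$ in the same way. Your reading of the conclusion, that every loop signature is a scalar multiple (possibly zero) of $=_2$, respectively of $[1,0,-1]$, is also how the paper uses the statement.
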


Here $\ne_4$ denotes the signature supported on $0011$ and $1100$, with value $1$ at both inputs; input permutations of this signature are used when needed.
\begin{lemma}\label{=+2and=-2}
    Let $f$ be a 4-ary signature in eight-vertex form. If \[M(f)=\begin{pmatrix}
 \epsilon x & 0 &0  & \epsilon y\\
  0& \epsilon z & \epsilon w & 0\\
 0 & w & z & 0\\
 y & 0 &0  &x
\end{pmatrix},\] where $\epsilon=\pm1$, then we have
    \begin{itemize}
        \item Either $\mathrm{Holant}(=_2|\mathcal{F},f)$ is \#P-hard;
        \item or $f\in \mathcal{A}$;
        \item or $f\in \left \langle \mathcal{T} \right \rangle$;
        \item or there exists $T\in \mathrm{SL}(2,\mathbb{C})$, such that $\mathrm{CSP}^2(\mathcal{F'},f')\le_T\mathrm{Holant}(=_2|\mathcal{F},f)$, where $f'=(T^{-1})^{\otimes 4}f$, $\mathcal{F'}=(T^{-1})\mathcal{F}$.
    \end{itemize}
\end{lemma}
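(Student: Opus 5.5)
We want to reduce the symmetric structure $M(f)$ (rows 1,2 are $\epsilon$ times rows 4,3 reversed) to the known cases, namely Lemma~\ref{foursamepair} or Lemma~\ref{twononzeroineight}, via a holographic transformation that preserves $=_2$ up to the $\mathrm{SL}(2,\mathbb{C})$-freedom.

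\textbf{Step 1: transform by $H$.} The structure says $f(\bar{\mathbf{x}})=\epsilon f(\mathbf{x})$ for the bitwise complement. I transform by $H$ (orthogonal, so $=_2$ is preserved) and set $\hat f=H^{\otimes4}f$.
\begin{itemize}
\item Complement-(anti)invariance becomes a parity condition: $\hat f$ is supported on even-weight inputs if $\epsilon=1$ and on odd-weight inputs if $\epsilon=-1$.
\item The eight-vertex support of $f$ (even weight) becomes invariance or anti-invariance under complement.
\end{itemize}
So for $\epsilon=1$, $\hat f$ is again of eight-vertex form, and each of its three pairings satisfies $\hat f_{\alpha}=\pm\hat f_{\bar\alpha}$ with a sign fixed by the pairing. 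For $\epsilon=-1$, $\hat f$ is supported on weights $1$ and $3$ with a symmetric pairing; a further diagonal transformation $T_{\frak i}$ (or $Z$) returns it to an eight-vertex-type shape. I will handle the $\epsilon=1$ case carefully and indicate the analogous transformation for $\epsilon=-1$.

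\textbf{Step 2: normalize.} Up to scalars and sign changes realized by $T_{\frak i}$-type diagonal transformations, which keep $=_2$ only up to the SL-freedom allowed in the conclusion, the matrix of $\hat f$ takes the form of Lemma~\ref{foursamepair}, i.e. $\begin{psmallmatrix}1&0&0&b\\0&c&d&0\\0&d&c&0\\b&0&0&1\end{psmallmatrix}$ possibly with sign twists. We then split on the zero pattern of $(b,c,d)$ and the outer entry.
\begin{itemize}
\item \emph{Outer entry and $b,c,d$ all nonzero.} Apply Lemma~\ref{foursamepair} directly. Its three outcomes give \#P-hardness, $\hat f\in\mathcal A$ (hence $f\in\mathcal A$, since $H$ preserves $\mathcal A$ up to scalar), or a reduction from $\mathrm{CSP}^2$ after a transformation $T$. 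Composing $T$ with $H$ gives the fourth alternative.
\item \emph{Exactly one of $b,c,d$ vanishes.} The support is then not affine in the sense needed, so use Corollary~\ref{affine support} to get hardness.
\item \emph{Two of $b,c,d$ vanish.} Use Lemma~\ref{twononzeroineight}. This gives hardness, membership in $\mathcal P$ (which in this shape means $f\in\langle\mathcal T\rangle$), or $\mathcal A$ after a $\beta$-diagonal transformation; the last feeds the $\mathrm{CSP}^2$ alternative via Lemma~\ref{g=_4}.
\item \emph{Outer pair zero.} Use Lemma~\ref{sixwithax=0}, Lemma~\ref{twononzeropairinsixvertexmodel}, or Lemma~\ref{twozeropair}. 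Each yields hardness, $\mathcal A$ or $\mathcal P$ membership, or $=_4$ and hence $\mathrm{CSP}^2$ by Lemma~\ref{toCSP2}.
\end{itemize}

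\textbf{Main obstacle.} The hard part is the bookkeeping that moves Holant$(=_2|\cdot)$ to the Holant$(\ne_2|\cdot)$ form in which the cited lemmas are stated. This requires composing $H$ with $Z$ and checking that the resulting matrix is exactly of the form in Lemma~\ref{foursamepair}, in particular that the two inner pairs become equal, $(c,d)=(z,w)$ after transformation. I would first verify this symbolically for $\epsilon=1$, using $ZH$, which maps the complement symmetry to the swap symmetry. Only then would I handle $\epsilon=-1$ by inserting $\mathrm{diag}(1,\frak i)$.
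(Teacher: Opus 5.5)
\textbf{Where the proposal stands.} For $\epsilon=1$ your plan works once you actually pass to $\mathrm{Holant}(\ne_2|\cdot)$, and it then coincides with the paper's proof. The $H$ step is superfluous, because for $\epsilon=1$ the matrix $M(f)$ already has the shape $\begin{psmallmatrix}x&0&0&y\\0&z&w&0\\0&w&z&0\\y&0&0&x\end{psmallmatrix}$, and $Z$ alone turns $=_2$ into $\ne_2$ while preserving that shape. (In the two-pair subcase with $\beta$ a primitive $16$th root of unity, the tool is Lemma~\ref{M(g)}, which supplies $=_2$ on the signature side; Lemma~\ref{g=_4} is not the right citation.)

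\textbf{The gap: the case $\epsilon=-1$.} You claim that $T_{\frak i}$ (or $\mathrm{diag}(1,\frak i)$), or $Z$, returns the odd-weight signature to an eight-vertex shape. This is false.
\begin{itemize}
\item A diagonal transformation rescales each entry by a factor depending only on its Hamming weight, so it never changes the support.
\item $Z$ sends an odd-weight, complement-invariant signature to an odd-weight, complement-anti-invariant one.
\end{itemize}
No other change of basis helps either. Any $T$ with $T^{\mathsf T}T$ a nonzero multiple of $M(\ne_2)$ has the form $T=cOZ$ with $O$ complex orthogonal. Pulling the parity operator $\mathrm{diag}(1,-1)^{\otimes4}$ back through $T$ gives $K^{\otimes 4}$ with $K=\begin{psmallmatrix}0&1\\-1&0\end{psmallmatrix}$; this uses $Z\,\mathrm{diag}(1,-1)Z^{-1}=-\frak{i}K$ and $OKO^{\mathsf T}=(\det O)K$. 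Because $f$ has even support, $K^{\otimes4}f$ is the function $\mathbf x\mapsto f(\bar{\mathbf x})=\epsilon f(\mathbf x)$. Hence $(T^{-1})^{\otimes4}f$ is supported on odd weights whenever $\epsilon=-1$. So Lemma~\ref{foursamepair}, Lemma~\ref{twononzeroineight}, Lemma~\ref{sixwithax=0} and the other $\ne_2$-lemmas you cite can never be reached by holographic transformations alone.

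\textbf{The missing idea is a gadget.} When $\epsilon=-1$, every loop $f^{ij}_{=_2}$ is a multiple of $[1,0,-1]$; for instance $f^{12}_{=_2}=(y-x)[1,0,-1]$. So $[1,0,-1]$ is realizable (unless $x=y=z=w$, in which case $f$ is affine).
\begin{itemize}
\item Under $Z$, $[1,0,-1]$ becomes $=_2$ on the signature side, i.e.\ $\mathrm{Holant}(\ne_2|\mathcal F',f',=_2)\equiv_T\mathrm{Holant}(=_2|\mathcal F,f,[1,0,-1])$.
\item Chaining this $=_2$ through a $\ne_2$ onto the single input $x_4$ of $f'=(Z^{-1})^{\otimes4}f$ flips that one bit.
\item A one-input modification changes the parity of the support. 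It turns $M(f')=\begin{psmallmatrix}0&a&b&0\\c&0&0&d\\d&0&0&c\\0&b&a&0\end{psmallmatrix}$ into $\begin{psmallmatrix}a&0&0&b\\0&c&d&0\\0&d&c&0\\b&0&0&a\end{psmallmatrix}$.
\end{itemize}
Only after this step does the four-way split go through, exactly as for $\epsilon=1$. That split uses Lemma~\ref{foursamepair}, Lemma~\ref{twononzeroineight}, Lemma~\ref{sixwithax=0} with Corollary~\ref{affine support}, and Lemma~\ref{twononzeropairinsixvertexmodel}. This step relies on a realizable signature rather than a change of basis, and your plan has no counterpart for it.
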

\begin{proof}
We separately prove the cases of $\epsilon=1$ and $\epsilon=-1$.
\begin{caselist}
    \caseitem{1}   $\epsilon=1$. By a holographic transformation using $Z$ we have
    \[
        \mathrm{Holant}(\ne_2|f')\le_T\mathrm{Holant}(\ne_2|\mathcal{F'},f',\ne_2)\equiv_T\mathrm{Holant}(=_2|\mathcal{F},f,=_2),
    \] where $f'=(Z^{-1})^{\otimes 4}f$, $\mathcal{F'}=(Z^{-1})\mathcal{F}$. Then \[M(f')=\begin{pmatrix}
 a &  0& 0 &b \\
  0& c & d & 0\\
 0 & d &  c& 0\\
  b&  0&  0&a
\end{pmatrix},\] where $a=(x-y-z-w)/2$, $b=(x-y+z+w)/2$, $c=(x+y-z+w)/2$, and $d=(x+y+z-w)/2$. There are four cases of $M(f')$.\begin{caselist}
       \subcaseitem{1.1}  There is exactly one nonzero pair in $f'$.

       Clearly $f'=(=_4)$ or $f'=(\ne_4)$, then $f'\in\mathcal{A}$, which implies that $f\in \mathcal{A}$.
       \subcaseitem{1.2}  There are exactly two nonzero pairs in $f'$.

       If $a=0$, by Lemma~\ref{twononzeropairinsixvertexmodel}, Holant($\ne_2|f'$) is \#P-hard or $f'\in\mathcal{A}$ or $\mathcal{P}$.
       Holant($\ne_2|f'$) being \#P-hard implies that Holant($=_2|\mathcal{F},f$) is \#P-hard. $f'\in\mathcal{A}$ implies that $f\in\mathcal{A}$. Moreover, $f'\in \mathcal{P}$ implies that $\{f',f\}\subseteq\left \langle \mathcal{T} \right \rangle$. If $a\ne0$, by Lemma~\ref{twononzeroineight}, Holant($\ne_2|f'$) is \#P-hard or $\begin{pmatrix}
           1&0\\
           0&\beta
       \end{pmatrix}^{\otimes4}f'\in\mathcal{A}$ with $\beta^{16}=1$ or $f'\in\mathcal{P}$.  Holant($\ne_2|f'$) being \#P-hard implies that Holant($=_2|\mathcal{F},f$) is \#P-hard. $f'\in \mathcal{P}$ implies that $\{f',f\}\subseteq\left \langle \mathcal{T} \right \rangle$. If $\beta^8=1$, then it is easy to check that $\begin{pmatrix}
           1&0\\
           0&\beta
       \end{pmatrix}^{\otimes4}f'\in\mathcal{A}$ implies that $f'\in \mathcal{A}$, then $f\in\mathcal{A}$ also. If $\beta$ is a primitive 16th root of unity, by symmetry, assume that \[M(f')=\begin{pmatrix}
 a &  0& 0 &0 \\
  0& 1 & 0 & 0\\
 0 & 0 &  1& 0\\
  0&  0&  0&a
\end{pmatrix}\] up to a nonzero scalar. Then $\begin{pmatrix}
           1&0\\
           0&\beta
       \end{pmatrix}^{\otimes4}f'\in\mathcal{A}$ implies that $a$ is a primitive eighth root of unity, i.e. $a^4\ne1$.
By Lemma~\ref{M(g)}, we have \[
\mathrm{Holant}(\ne_2|=_2,f')\le_T\mathrm{Holant}(\ne_2|f')\le_T\mathrm{Holant}(\ne_2|\mathcal{F'},f',\ne_2)\equiv_T\mathrm{Holant}(=_2|\mathcal{F},f,=_2). 
\] By connecting one copy of $=_2$ to input $x_3$ of $f'$ and another copy of $=_2$ to input $x_4$ of $f'$, both using $\ne_2$, we construct $f^{'*}$ with signature matrix \[M(f^{'*})=\begin{pmatrix}
 0 &  0& 0 &a \\
  0& 0 & 1 & 0\\
 0 & 1 &  0& 0\\
  a&  0&  0&0
\end{pmatrix}.\] As in the case in which $a=0$, by Lemma~\ref{twononzeropairinsixvertexmodel} the result follows.
       \subcaseitem{1.3}  There are exactly three nonzero pairs in $f'$.
       By Lemma~\ref{sixwithax=0} and Corollary~\ref{affine support} the problem is \#P-hard.
       \subcaseitem{1.4}  All pairs in $f'$ are nonzero.
       By Lemma~\ref{foursamepair} the result follows.
   \end{caselist}
   \caseitem{2}  $\epsilon=-1$. By a holographic transformation using $Z$ we have
    \[
        \mathrm{Holant}(\ne_2|\mathcal{F'},f',=_2)\equiv_T\mathrm{Holant}(=_2|\mathcal{F},f,[1,0,-1]),
    \] where $f'=(Z^{-1})^{\otimes 4}f$, $\mathcal{F'}=(Z^{-1})\mathcal{F}$. Then \[M(f')=\begin{pmatrix}
 0 &  a& b &0 \\
  c& 0 & 0 &d\\
 d & 0 &  0& c\\
  0&  b&  a&0
\end{pmatrix}.\] By connecting $=_2$ to input $x_4$ of $f'$ using $\ne_2$, we construct $f^{'*}$ with signature matrix \[M(f^{'*})=\begin{pmatrix}
 a &  0& 0 &b \\
  0& c & d & 0\\
 0 & d &  c& 0\\
  b&  0&  0&a
\end{pmatrix}.\] There are four cases of $M(f^{'*})$.\begin{caselist}
       \subcaseitem{2.1}  There is exactly one nonzero pair in $f^{'*}$.

       Clearly $f^{'*}=(=_4)$ or $f^{'*}=(\ne_4)$, then $f^{'*}\in\mathcal{A}$, which implies that $f\in \mathcal{A}$.
       \subcaseitem{2.2}  There are exactly two nonzero pairs in $f^{'*}$.

       If $a=0$, by Lemma~\ref{twononzeropairinsixvertexmodel}, Holant($\ne_2|f^{'*}$) is \#P-hard or $f^{'*}\in\mathcal{A}$ or $\mathcal{P}$.
       Holant($\ne_2|f^{'*}$) being \#P-hard implies that Holant($=_2|\mathcal{F},f$) is \#P-hard. $f^{'*}\in\mathcal{A}$ implies that $f\in\mathcal{A}$. Moreover, $f^{'*}\in \mathcal{P}$ implies that $\{f^{'*},f',f\}\subseteq\left \langle \mathcal{T} \right \rangle$. If $a\ne0$, without loss of generality, we assume that \[M(f^{'*})=\begin{pmatrix}
 a &  0& 0 &0 \\
  0& c & 0 & 0\\
 0 & 0 &  c& 0\\
  0&  0&  0&a
\end{pmatrix}.\]
       By connecting one copy of $=_2$ to input $x_3$ of $f^{'*}$ and another copy of $=_2$ to input $x_4$ of $f^{'*}$, both using $\ne_2$, we construct $f^{'**}$ with signature matrix \[M(f^{'**})=\begin{pmatrix}
 0 &  0& 0 &a \\
  0& 0 & c & 0\\
 0 & c &  0& 0\\
  a&  0&  0&0
\end{pmatrix}.\] As in the case in which $a=0$, by Lemma~\ref{twononzeropairinsixvertexmodel} the result follows.
       \subcaseitem{2.3}  There are exactly three nonzero pairs in $f^{'*}$.
       By Lemma~\ref{sixwithax=0} and Corollary~\ref{affine support} the problem is \#P-hard.
       \subcaseitem{2.4}  All pairs in $f^{'*}$ are nonzero.
       By Lemma~\ref{foursamepair} the result follows.\qedhere
   \end{caselist}
\end{caselist}

\end{proof}

\begin{lemma}\label{four1}
    Let $f$ be a 4-ary signature with the signature matrix \[M(f)=\begin{pmatrix}
  1&  0&  0& 1\\
 0 &  c&0  & 0\\
 0 & 0 & z &0 \\
  1& 0 &0  &1
\end{pmatrix}\] or \[M(f)=\begin{pmatrix}
  1&  0&  0& -1\\
 0 &  c&0  & 0\\
 0 & 0 & z &0 \\
  -1& 0 &0  &1
\end{pmatrix}\] with $cz\ne0$, then $\mathrm{Holant}(=_2|\mathcal{F},f)$ is \#P-hard.
\end{lemma}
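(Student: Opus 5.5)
The plan is to move to the $\ne_2$ world, as in the proof of Lemma~\ref{=+2and=-2}, and then read off hardness from the eight-vertex results. This suffices because $\mathrm{Holant}(=_2|f)\le_T\mathrm{Holant}(=_2|\mathcal{F},f)$, so it is enough to show $\mathrm{Holant}(=_2|f)$ is \#P-hard.

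By Theorem~\ref{Holtrans} with $Z$ (possibly composed with a diagonal $\mathrm{diag}(\lambda,\lambda^{-1})$, which fixes $\ne_2$), we get $\mathrm{Holant}(=_2|f)\equiv_T\mathrm{Holant}(\ne_2|f')$ with $f'=(Z^{-1})^{\otimes4}f$. Since $Z^{-1}[s,x]=\frac{1}{\sqrt2}(\pm\frak{i})^{x}$ (sign $-$ for $s=0$, $+$ for $s=1$), every entry of $f'$ is a character sum
\[
f'_{s}=\tfrac14\sum_{\sigma}f_\sigma\,\frak{i}^{\sum_k \sigma_k(2s_k-1)},
\]
and $f$ is supported only on even-weight inputs. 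Hence $f'$ again has eight-vertex form. Its outer and inner pairs are explicit linear combinations of $1,\pm1,c,z$. For example, $f'_{0000}$ and $f'_{1111}$ are both proportional to $2\mp2-c-z$.

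I then split into cases according to which pairs of $f'$ vanish, and apply the earlier results.

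\textbf{Generic case.} Suppose $f'_{0000}f'_{1111}\ne0$. I will check that the support of $f'$ is not affine. The main tool is that the outer block $\begin{psmallmatrix}1&\pm1\\ \pm1&1\end{psmallmatrix}$ of $M(f)$ has rank one while $cz\ne0$. I expect this forces the support to have exactly six or exactly two-plus-odd many points, neither of which is a power of $2$. Corollary~\ref{affine support} then gives hardness.

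\textbf{Degenerate outer pair.} If $f'_{0000}=f'_{1111}=0$ (e.g.\ $c+z=0$ in the $+$ case), then $f'$ is a six-vertex signature. I use Lemma~\ref{sixwithax=0} when all three inner pairs are nonzero. When some pair vanishes, I use Theorem~\ref{six vertex} or Lemma~\ref{twononzeropairinsixvertexmodel}. In each subcase I must check that $f'\notin\mathcal{A}\cup\mathcal{P}$ and $f'\notin\langle\mathcal{M}\rangle\cup\langle X\mathcal{M}\rangle$; this follows from the explicit entries and $cz\ne0$.

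\textbf{Remaining case.} Exactly one of $f'_{0000},f'_{1111}$ is zero. Here I first use the gadget $f^{13}_{=_2}=(1+z,0,0,1+c)$. If its eigenvalue ratio is not a root of unity, Corollary~\ref{polytodegen} and Theorem~\ref{dicwithdegen} apply. We then show $f$ lies in none of the tractable classes listed there, using that the six-point support of $f$ is not affine (which rules out $\mathcal{A}$, $\mathcal{L}$ and $\langle Z\mathcal{M}\rangle$) and that $f$ is not in $\langle\mathcal{T}\rangle$ or $\mathcal{P}$-transformable.

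\textbf{Main obstacle.} The hardest step is verifying non-affineness of the support of $f'$ uniformly in $c,z$. Several special values (e.g.\ $c=z$, $c=-z$, $c,z\in\{\pm1,\pm\frak{i}\}$) can kill extra entries of $f'$. For these I would enumerate the vanishing patterns of the four pairs of $f'$ explicitly and treat each pattern with the matching lemma above (Lemma~\ref{sixwithax=0}, Lemma~\ref{twononzeroineight}, Lemma~\ref{twononzeropairinsixvertexmodel}, or Theorem~\ref{eightvertex}), checking directly that none of the tractable conditions holds.
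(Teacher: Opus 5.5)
\textbf{The gap.} Your key structural claim, that $f'=(Z^{-1})^{\otimes 4}f$ again has eight-vertex form, is false unless $c=z$. Even-weight support of $f$ is not enough. $Z$ sends the part of $f$ that is invariant under complementing all four bits to even weight, but sends the anti-invariant part to odd weight; compare the cases $\epsilon=1$ and $\epsilon=-1$ in Lemma~\ref{=+2and=-2}. Your own character-sum formula shows this. For even $|\sigma|$, write $\frak{i}^{\sum_k\sigma_k(2s_k-1)}=(-1)^{|\sigma|/2}(-1)^{\langle\sigma,s\rangle}$. Then in both the $+$ and the $-$ case,
\[
f'_s=\tfrac14\,(z-c)\,(-1)^{s_2+s_4}\qquad\text{for every odd-weight } s .
\]
For example, $f'_{0001}=(c-z)/4$. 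So when $c\ne z$, all eight odd-weight entries of $f'$ are nonzero, and none of Lemma~\ref{sixwithax=0}, Theorem~\ref{six vertex}, Lemma~\ref{twononzeroineight} or Corollary~\ref{affine support} applies. This breaks each of your cases:
\begin{itemize}
\item Your ``degenerate outer pair'' case $c+z=0$ with $c\ne0$ forces $c\ne z$, so $f'$ is not six-vertex there.
\item Your ``remaining case'' is empty, because $f'_{0000}=f'_{1111}$ always; the even part of $f'$ depends only on $c+z$.
\item Even when $c=z$, the generic-case support argument is false. In the $+$ case the four even-weight pairs of $f'$ are $-c/2$, $c/2$, $(2-c)/2$, $(2+c)/2$. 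So for $c\ne\pm2$ the support is the whole even-weight subspace, which is affine, and Corollary~\ref{affine support} gives nothing.
\end{itemize}

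\textbf{The missing idea} is to make $c$ and $z$ equal by a gadget before transforming. The paper chains two copies as $M_{x_1,x_2,x_3,x_4}(f)\,M_{x_4,x_3,x_2,x_1}(f)$. The reversed input order multiplies the middle block $\mathrm{diag}(c,z)$ by $\mathrm{diag}(z,c)$, while the rank-one outer block squares to twice itself. Up to the scalar $2$, this gives $f_1$ with the same outer block and both middle entries equal to $cz/2$. Now $f_1$ is complement-invariant, so its $Z$-transform $f_1'$ has eight-vertex form in the pattern $\begin{psmallmatrix}a&0&0&b\\0&c'&d'&0\\0&d'&c'&0\\b&0&0&a\end{psmallmatrix}$. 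Its pairs are nonzero except at $cz=\pm4$, where the support has six points and Corollary~\ref{affine support} does apply. In the main case, hardness comes from Lemma~\ref{foursamepair} together with $f_1'\notin\mathcal{A}$, not from counting support points.
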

\begin{proof}
    We prove only the case in which \[M(f)=\begin{pmatrix}
  1&  0&  0& 1\\
 0 &  c&0  & 0\\
 0 & 0 & z &0 \\
  1& 0 &0  &1
\end{pmatrix}.\] The proof for the other case is analogous and is omitted here. By connecting variables $x_3,x_4$ of one copy of $f$ with, respectively, variables $x_3,x_4$ of another copy of $f$, we construct a 4-ary signature $f_1$ with the signature matrix \[M(f_1)=M(f)_{x_1,x_2,x_3,x_4}M(f)_{x_4,x_3,x_2,x_1}=\begin{pmatrix}
  1&  0&  0& 1\\
 0 &  \frac{cz}{2}&0  & 0\\
 0 & 0 & \frac{cz}{2} &0 \\
  1& 0 &0  &1
\end{pmatrix}\] up to the scalar $2$. By a holographic transformation using $Z$, we have \[
\mathrm{Holant}(\ne_2|\mathcal{F'},f',f_1')\equiv_T\mathrm{Holant}(=_2|\mathcal{F},f,f_1),
\]where $\mathcal{F'}=(Z^{-1})^{\otimes4}\mathcal{F}$, $f'=(Z^{-1})^{\otimes4}f$ and $f_1'=(Z^{-1})^{\otimes4}f_1$ with the signature matrix \[M(f_1')=\begin{pmatrix}
  -\frac{cz}{2}& 0 &0  & \frac{cz}{2}\\
  0& 2+\frac{cz}{2} & 2-\frac{cz}{2} & 0\\
  0&  2-\frac{cz}{2}&  2+\frac{cz}{2}&0 \\
 \frac{cz}{2} &  0& 0 &-\frac{cz}{2}
\end{pmatrix}.\] Note that $f_1'$ is clearly not in $\mathcal{A}$. By Lemma~\ref{foursamepair}, we conclude that $\mathrm{Holant}(\ne_2|\mathcal{F'},f',f_1')$ is \#P-hard.

\end{proof}

\begin{lemma}\label{onezeropair}
    Let $f$ be a 4-ary signature with signature matrix  \[M(f)=\begin{pmatrix}
 a &  0& 0 &b \\
  0& c & 0 & 0\\
 0 & 0 &  z& 0\\
  y&  0&  0&x
\end{pmatrix}\] with $abcxyz\ne0$. If $ax = by$ or $ax = cz$, then $\mathrm{Holant}(=_2|\mathcal{F},f)$ is \#P-hard. If $ax\ne by$ and $ax\ne cz$, further assume that we have three distinct binary signatures of the form $(1,0,0,t)$, then we can construct a degenerate binary signature unless $\mathrm{Holant}(=_2|\mathcal{F},f)$ is \#P-hard.

\end{lemma}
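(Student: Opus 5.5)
We treat the two hypotheses separately.

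\textbf{Part 1: $ax=by$ or $ax=cz$.} If $ax=by$, the outer block $\begin{pmatrix} a&b\\ y&x\end{pmatrix}$ is singular. Rescaling the variables by a diagonal holographic transformation $T=\mathrm{diag}(1,\beta)$ with suitable $\beta$ (which keeps $=_2$ up to a diagonal and is undone by a binary modification that we simply keep track of), we aim to normalize $f$ so that its outer block is $\begin{pmatrix}1&\pm1\\ \pm1&1\end{pmatrix}$ up to a scalar. Since the outer block is rank one, it equals $u\otimes v$, and rescaling with $a\ne0$ gives the form $\begin{pmatrix}1&b'\\ y'&b'y'\end{pmatrix}$.

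To make it symmetric, we first form $f_1=M(f)_{x_1,x_2,x_3,x_4}M(f)_{x_4,x_3,x_2,x_1}$, exactly as in Lemma~\ref{four1}. The outer block of $f_1$ becomes $AA^{\sigma}$, a symmetric rank-one matrix $(1,\mu)^T(1,\mu)$ up to scalar. The inner diagonal stays $(cz,cz)$, and it is nonzero. Applying $\mathrm{diag}(1,\mu^{-1/2})^{\otimes4}$ then yields exactly the first matrix of Lemma~\ref{four1}, with the inner entries rescaled but still nonzero. Hardness follows from Lemma~\ref{four1}.

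The case $ax=cz$ is handled the same way after permuting inputs $x_2\leftrightarrow x_4$ (or $x_2\leftrightarrow x_3$). This permutation swaps the roles of the pair $(b,y)$ and the pair $(c,z)$ relative to the outer pair. We then apply the same argument to the resulting matrix.

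\textbf{Part 2: $ax\ne by$ and $ax\ne cz$, with three distinct $g_i=(1,0,0,t_i)$ available.} Here the outer block $\begin{pmatrix} a&b\\ y&x\end{pmatrix}$ has full rank, so Lemma~\ref{mobius} applies. We split on its form.

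\begin{itemize}
\item If the block is not of the special forms listed there, or is of the M\"obius form with infinite projective order, Lemma~\ref{mobius} gives a degenerate binary signature directly.
\item It cannot be diagonal or antidiagonal, since $abxy\ne0$.
\item The remaining case is the M\"obius form $\begin{pmatrix}e^{\frak{i}\theta}&e^{\frak{i}\theta}\lambda\\ \bar\lambda&1\end{pmatrix}$ with finite projective order.
\end{itemize}

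For the finite-order case we use the permuted block $\begin{pmatrix} a&c\\ z&x\end{pmatrix}$, obtained by reading $f$ with inputs permuted. It is also full rank, because $ax\ne cz$, and Lemma~\ref{mobius} applies to it as well. The same applies to the pair $(0,0)$ block, which is singular and therefore not relevant.

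So the remaining situation is that both $\begin{pmatrix} a&b\\ y&x\end{pmatrix}$ and $\begin{pmatrix} a&c\\ z&x\end{pmatrix}$ are of M\"obius form with finite order. We then consider the chains $f\cdot f$ along different input pairings, or the loops $f^{12}_{g_i}$ combined with binary modifications. These produce further full-rank eight-vertex-type signatures whose outer blocks are products such as $\begin{pmatrix} a&b\\ y&x\end{pmatrix}\begin{pmatrix} a&c\\ z&x\end{pmatrix}$. We try to show that a product of two distinct finite-order M\"obius-form matrices of this unitary-like type, with $abcxyz\ne0$, either has infinite projective order, which gives a degenerate signature via Corollary~\ref{polytodegen} or Lemma~\ref{mobius}, or forces a relation such as $ax=by$. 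That relation would contradict the hypothesis or return us to Part 1, where the problem is hard.

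\textbf{Main obstacle.} This last step, ruling out finite order of all the generated M\"obius maps, is where I expect the main difficulty. I would first try explicit trace computations on the products to find an element of infinite order, and otherwise fall back on Lemma~\ref{foursamepair} after a $Z$-transformation to conclude hardness.
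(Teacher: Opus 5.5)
\textbf{Part 1 has a genuine gap.} A holographic transformation by $\mathrm{diag}(1,\beta)$ does not keep you inside $\mathrm{Holant}(=_2\mid\cdot)$: the left-hand $=_2$ becomes $[1,0,\beta^{\pm 2}]$. Lemma~\ref{four1} needs $=_2$ on the left, so it no longer applies. If you instead mean to apply $\mathrm{diag}(1,\mu^{-1/2})$ to all four inputs as a gadget, you need $(1,0,0,\mu^{-1/2})$ to be constructible, which you have not shown. The symmetrization also fails in one case. With $(y,x)=k(a,b)$, the outer block of $f_1$ is $AA^T=(a^2+b^2)\begin{psmallmatrix}1&k\\k&k^2\end{psmallmatrix}$, which vanishes when $b=\pm\frak{i}a$.

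The paper avoids both issues by working with $f$ directly. The loops $f^{34}_{=_2}=(a+b)(1,0,0,k)$ and $f^{12}_{=_2}=a(1+k)(1,0,0,b/a)$ are constructible, and Corollary~\ref{polyto-1} gives their inverses. Applying these inverses to $x_1$ and $x_3$ turns the outer block into $a\begin{psmallmatrix}1&1\\1&1\end{psmallmatrix}$ while the inner entries stay nonzero, so Lemma~\ref{four1} applies. When $a+b=0$ or $1+k=0$, the paper uses $[1,0,-1]$ or the second matrix of Lemma~\ref{four1} instead. Also, only $x_2\leftrightarrow x_3$ exchanges $(b,y)$ with $(c,z)$; the swap $x_2\leftrightarrow x_4$ exchanges $(b,y)$ with the zero pair.

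\textbf{Part 2 is missing its decisive step.} Your reduction via Lemma~\ref{mobius} agrees with the paper: both $\begin{psmallmatrix}a&b\\y&x\end{psmallmatrix}$ and $\begin{psmallmatrix}a&c\\z&x\end{psmallmatrix}$ are of M\"obius form with finite projective order. But the strategy you sketch after that has no reason to succeed, because finite-order matrices can generate finite groups. For instance, if $(c,z)=(b,y)$ the two blocks coincide, and the product you single out is just $A^2$.

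The paper uses the finite order rather than fighting it.
\begin{itemize}
\item Normalize to $A=\begin{psmallmatrix}e^{\frak{i}\theta}&e^{\frak{i}\theta}\lambda_1\\ \bar\lambda_1&1\end{psmallmatrix}$, $c=e^{\frak{i}\theta}\lambda_2$, $z=\bar\lambda_2$, and take $n$ with $A^n=(\det A)^{n/2}I$.
\item A chain of $n$ copies of $f$ is then diagonal, $\mathrm{diag}(D,c^n,z^n,D)$ with $D=(\det A)^{n/2}$.
\item Reading this chain with rows indexed by $x_1,x_3$ puts it in the form of Lemma~\ref{g=_4}, so $=_4$ is obtained unless $D^2=(cz)^n$.
\item Since $\det A=e^{\frak{i}\theta}(1-|\lambda_1|^2)$ and $cz=e^{\frak{i}\theta}|\lambda_2|^2$ with real moduli, that equality gives $|\lambda_2|^2=\pm(1-|\lambda_1|^2)$. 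The same computation for the other block excludes the minus sign.
\item When $|\lambda_1|^2+|\lambda_2|^2=1$, take $f^*$ to be the phase-normalized $f$. The gadget $M_{x_1,x_4,x_3,x_2}(f^*)M_{x_2,x_3,x_4,x_1}(f^*)$ has two pairs $(1,1)$, the pair $(2\lambda_1\lambda_2,2\bar\lambda_1\bar\lambda_2)$, and a zero pair. After permuting inputs, this is the first matrix of Lemma~\ref{four1}.
\end{itemize}
Your fallback to Lemma~\ref{foursamepair} would not give hardness by itself, since that lemma also allows $f\in\mathcal{A}$ or a $\mathrm{CSP}^2$ reduction.
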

\begin{proof}
We first deal with the case in which $ax=by$, (by symmetry, the case in which $ax=cz$ is the same), then \[M(f)=\begin{pmatrix}
 a &  0& 0 &b \\
  0& c & 0 & 0\\
 0 & 0 &  z& 0\\
  ka&  0&  0&kb
\end{pmatrix},\] where $k\ne0$. We consider the following cases:
\begin{caselist}
    \caseitem{1}  $a+b\ne0$ and $1+k\ne0$.

    For $f_{=_2}^{34}=(a+b)(1,0,0,k)$, $f_{=_2}^{12}=a(1+k)(1,0,0,\frac{b}{a})$, then we can construct $(f_{=_2}^{34})^{-1}$ and $(f_{=_2}^{12})^{-1}$. By connecting $(f_{=_2}^{34})^{-1}$ to input $x_1$ of $f$ and $(f_{=_2}^{12})^{-1}$ to input $x_3$ of $f$, we obtain a new 4-ary signature with matrix \[M(f^*)=\begin{pmatrix}
 1 &  0& 0 &1 \\
  0& c^* & 0 & 0\\
 0 & 0 &  z^*& 0\\
  1&  0&  0&1
\end{pmatrix}\] up to a scalar $a$, where $c^*=\frac{c}{a}$, $z^*=\frac{z}{kb}$. By Lemma~\ref{four1} the problem is \#P-hard.
\caseitem{2}  $a+b=0$ and $1+k\ne0$.

        For $f_{=_2}^{12}=a(1+k)(1,0,0,\frac{b}{a})$, we construct $[1,0,-1]$. For $f_{[1,0,-1]}^{34}=(a-b)(1,0,0,k)$, and then we can construct $(f_{[1,0,-1]}^{34})^{-1}$ since $a-b\ne0$. By connecting $(f_{[1,0,-1]}^{34})^{-1}$ to input $x_1$ of $f$ and $[1,0,-1]$ to input $x_3$ of $f$, we obtain a new 4-ary signature with signature matrix \[M(f^*)=\begin{pmatrix}
 1 &  0& 0 &1 \\
  0& c^* & 0 & 0\\
 0 & 0 &  z^*& 0\\
  1&  0&  0&1
\end{pmatrix}\] up to $a$. By Lemma~\ref{four1} the problem is \#P-hard.
\caseitem{3}  $a+b\ne0$ and $1+k=0$. By symmetry, this case reduces to \caseref{2}.
\caseitem{4}  $a+b=1+k=0$. In this case, \[M(f)=\begin{pmatrix}
 1 &  0& 0 &-1 \\
  0& c^* & 0 & 0\\
 0 & 0 &  z^*& 0\\
  -1&  0&  0&1
\end{pmatrix}\] up to $a$, where $c^*=\frac{c}{a}$ and $z^*=\frac{z}{a}$. By Lemma~\ref{four1} the problem is \#P-hard.
\end{caselist}
We conclude that if $ax=by$ or $ax=cz$, Holant($=_2|f$) is \#P-hard.

Then we deal with the case in which $ax\ne by$, $ax\ne cz$ and we have three binary signatures of the form $(1,0,0,t)$. If no degenerate binary signature can be constructed, Lemma~\ref{mobius} and symmetry imply that \[M(f)=\begin{pmatrix}
 e^{\frak{i}\theta_1} &  0& 0 &e^{\frak{i}\theta_1}\lambda_1 \\
  0& e^{\frak{i}\theta_1}\lambda_2 & 0 & 0\\
 0 & 0 &  \overline{\lambda_2}& 0\\
  \overline{\lambda_1}&  0&  0&1
\end{pmatrix}\] up to a nonzero scalar $x$, where $|\lambda_1|\ne1$ and $|\lambda_2|\ne1$, and $\begin{pmatrix}
  e^{\frak{i}\theta_1}&e^{\frak{i}\theta_1}\lambda_1 \\
 \overline{\lambda_1} &1
\end{pmatrix}$, $\begin{pmatrix}
  e^{\frak{i}\theta_1}&e^{\frak{i}\theta_1}\lambda_2 \\
 \overline{\lambda_2} &1
\end{pmatrix}$ has finite projective order. Note that $det\begin{pmatrix}
  e^{\frak{i}\theta_1}&e^{\frak{i}\theta_1}\lambda_1 \\
 \overline{\lambda_1} &1
\end{pmatrix}=e^{\frak{i}\theta_1}(1-|\lambda_1|^2)$. Assume that $\begin{pmatrix}
  e^{\frak{i}\theta_1}&e^{\frak{i}\theta_1}\lambda_1 \\
 \overline{\lambda_1} &1
\end{pmatrix}^n=\begin{pmatrix}
  e^{\frac{n}{2}\frak{i}\theta_1}(1-|\lambda_1|^2)^{\frac{n}{2}}&0 \\
 0 &e^{\frac{n}{2}\frak{i}\theta_1}(1-|\lambda_1|^2)^{\frac{n}{2}}
\end{pmatrix}$. Then we construct $f_n$ by a chain of $n$ copies of $f_{x_1,x_2,x_3,x_4}$, and \[ M(f^n)=M_{x_1,x_2,x_3,x_4}(f)^n=\begin{pmatrix}
  e^{\frac{n}{2}\frak{i}\theta_1}(1-|\lambda_1|^2)^{\frac{n}{2}} &  0& 0 &0\\
  0& e^{n\frak{i}\theta_1}\lambda_2^n& 0 & 0\\
 0 & 0 & \overline{\lambda_2}^n& 0\\
  0 &  0&  0& e^{\frac{n}{2}\frak{i}\theta_1}(1-|\lambda_1|^2)^{\frac{n}{2}}
\end{pmatrix}.
\]If $e^{n\frak{i}\theta_1}\lambda_2^n\overline{\lambda_2}^n=e^{n\frak{i}\theta_1}|\lambda_2|^{2n}\ne e^{n\frak{i}\theta_1}(1-|\lambda_1|^2)^n$, then we can construct $=_4$. If $e^{n\frak{i}\theta_1}\lambda_2^n\overline{\lambda_2}^n=e^{n\frak{i}\theta_1}|\lambda_2|^{2n}=e^{n\frak{i}\theta_1}(1-|\lambda_1|^2)^n$, note that $|\lambda_1|,|\lambda_2|$ are real numbers, we have $|\lambda_1|^2+|\lambda_2|^2=1$ or $|\lambda_1|^2-|\lambda_2|^2=1$. By symmetry in $\lambda_1$ and $\lambda_2$, $|\lambda_1|^2-|\lambda_2|^2=1$ implies that $|\lambda_1|^2=1$, a contradiction. If $|\lambda_1|^2+|\lambda_2|^2=1$, by connecting $(e^{-\frak{i}\theta_1},0,0,1)$ to input $x_1$ of $f$, we construct a new 4-ary signature \[M(f^*)=\begin{pmatrix}
 1 &  0& 0 &\lambda_1 \\
  0& \lambda_2 & 0 & 0\\
 0 & 0 &  \overline{\lambda_2}& 0\\
  \overline{\lambda_1}&  0&  0&1
\end{pmatrix}.\] Then \[
M(g)=M_{x_1,x_4,x_3,x_2}(f^*)M_{x_2,x_3,x_4,x_1}(f^*)=\begin{pmatrix}
 1 &  0& 0 &0 \\
  0& 1 & 2\lambda_1\lambda_2 & 0\\
 0 & 2\overline{\lambda_1}\overline{\lambda_2} &  1& 0\\
  0&  0&  0&1
\end{pmatrix}
,\] then by Lemma~\ref{four1} the problem is \#P-hard.

\end{proof}

\begin{lemma}\label{a=x=0}
     Let $f$ be a 4-ary signature with the signature matrix \[M(f)=\begin{pmatrix}
 0 &  0& 0 &b \\
  0& c & d & 0\\
 0 & w &  z& 0\\
  y&  0&  0&0
\end{pmatrix}.\] Then $\mathrm{Holant}(=_2|f)$ is \#P-hard except for the following cases:
\begin{itemize}
        \item $f \in \left \langle Z\mathcal{M} \right \rangle$ or $\in \left \langle ZX\mathcal{M} \right \rangle$;
        \item $f$ is $ \mathcal{A}\text{-}$transformable;
        \item $f$ is $ \mathcal{L}\text{-}$transformable;
        \item $f$ is $ \mathcal{P}\text{-}$transformable;
        \item $f\in\mathcal{H}$;
        \item $f \in \left \langle \mathcal{T} \right \rangle$,
    \end{itemize}
    in which cases the problem is computable in polynomial time.
\end{lemma}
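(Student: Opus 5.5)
Here $f$ is in eight-vertex form with $a=x=0$, so it is a six-vertex-type signature with respect to $=_2$. The plan is to build binary signatures by self-loops, split on whether one of them is degenerate, and otherwise pin down the structure of $f$ well enough to use the six-vertex classification after a holographic transformation.

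First compute the self-loops: $f_{=_2}^{12}=(y,0,0,b)$, $f_{=_2}^{34}=(b,0,0,y)$, $f_{=_2}^{13}=(0,c,w,0)$ (up to the ordering of entries), $f_{=_2}^{24}=(0,w,c,0)$, and similarly $(0,d,z,0)$, $(0,z,d,0)$ from the pairs $\{1,4\}$ and $\{2,3\}$. \textbf{Step 1:} If any of these is nonzero and degenerate, for instance if exactly one of $b,y$ is zero, then Theorem~\ref{dicwithdegen} applies directly, and its list of exceptions is exactly the list in the statement. The same holds if we can construct a binary signature of infinite projective order, by Corollary~\ref{polytodegen}. So we may assume every constructed binary signature is either zero or nondegenerate of finite order. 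In particular each inner pair $(b,y)$, $(c,w)$, $(d,z)$ (as they appear in the loops) is either $(0,0)$ or has both entries nonzero with ratio a root of unity.

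\textbf{Step 2:} Split on the number of nonzero inner pairs.
\begin{itemize}
\item All pairs zero: $f\equiv0$, which is trivially tractable.
\item Only $(b,y)$ nonzero: $f$ has the form in Lemma~\ref{g=_4} with $a=x=0$. It is either degenerate-like, so $f\in\langle\mathcal T\rangle$, or has full rank. In the full-rank case Lemma~\ref{g=_4} gives $=_4$, Lemma~\ref{toCSP2} gives a reduction from $\#\mathrm{CSP}^2(f)$, and Theorem~\ref{CSP2dich} finishes ($f\in\mathcal P,\mathcal A,\mathcal A^1_2,\mathcal L$ translate to the transformable cases).
\item Only the middle block nonzero: use Lemma~\ref{twozeropair} or the degenerate case in the same way.
\item Several nonzero pairs: transform by $Z$ to $\mathrm{Holant}(\ne_2|f')$ with $f'=(Z^{-1})^{\otimes4}f$ and compute $M(f')$. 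If the weight-1 and weight-3 entries of $f'$ vanish, $f'$ is in eight-vertex form and Theorem~\ref{eightvertex} settles it. If the weight-0 and weight-4 entries also vanish, Theorem~\ref{six vertex} settles it and yields the $\langle Z\mathcal M\rangle$, $\langle ZX\mathcal M\rangle$ and $\mathcal H$ cases.
\end{itemize}

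\textbf{Step 3:} In the remaining situation, $f'$ has nonzero weight-1 or weight-3 entries. Construct further binary signatures, such as $f_{\ne_2}$-type loops in the $Z$-basis, or chains $M_{x_1,x_2,x_3,x_4}(f)M_{x_3,x_4,x_1,x_2}(f)$ followed by loops. Use them either to force a degenerate binary (back to Step 1) or to force $S(f'_{\ne_2})\subseteq\{(0,b,c,0)\}$, so that Lemma~\ref{solveeightvertexform} puts $f'$ in eight-vertex form.

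The main obstacle is Step 3. There we must show that the finite-order constraints (ratios $b/y$, $c/w$, $d/z$ roots of unity, and products of the resulting binaries also of finite order) collapse $f$ to one of the listed forms. I would first try to exhibit two constructible binaries whose product has non-root-of-unity eigenvalue ratio unless the parameters satisfy explicit equalities. I would then check that those equalities put $f$ into $\mathcal A$-, $\mathcal L$- or $\mathcal P$-transformable form or into H-form after $Z$.
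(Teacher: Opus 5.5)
Your Step~3 is not a loose end; it is the whole substance of the lemma, and the target you set there cannot be reached.

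First, a correction that affects the bookkeeping. Every $=_2$-loop of this $f$ is a generalized equality, e.g.\ $f^{12}_{=_2}=[y,0,b]$, $f^{13}_{=_2}=[z,0,c]$, $f^{14}_{=_2}=[w,0,d]$. The complementary pairs are $(b,y)$, $(c,z)$, $(d,w)$, not $(c,w)$, $(d,z)$, and no loop has the form $(0,c,w,0)$. With this fixed, your Step~1 matches the paper. So do the zero-pair cases: Lemma~\ref{twozeropair} or Lemma~\ref{g=_4} after permuting inputs, then Lemma~\ref{toCSP2} and Theorem~\ref{CSP2dich}.

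Now take $byczdw\ne0$ with $y/b$, $z/c$, $w/d$ roots of unity. One computes $f'(u)=-\frac14\sum_{|x|=2}f(x)(-1)^{x\cdot u}$. The weight-one (equivalently weight-three) entries of $f'$ all vanish if and only if $b=y$, $c=z$, $d=w$. So in your ``remaining situation'' $f'$ is provably not of eight-vertex form, and Lemma~\ref{solveeightvertexform} can only yield a contradiction. Your plan therefore reduces to claiming that a degenerate binary always appears. Nothing you propose produces one. The loop binaries are all $\mathrm{diag}(1,\zeta)$ with $\zeta$ a root of unity, so they commute and every product of them has finite order, and you name no other gadget. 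For example, when $y=-b$, $z=-c$, $w=-d$, $f'$ is supported on odd weights and all of its $\ne_2$-loops are generalized equalities.

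The missing idea is to chain two copies of $f$ to build a new 4-ary signature of eight-vertex form with \emph{nonzero outer pair}. The Section~5 machinery then applies.

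If some ratio is not $\pm1$, say $z\ne\pm c$, powers of $[c,0,z]$ give three distinct $(1,0,0,t)$. The signature $M_{x_1,x_2,x_3,x_4}(f)^2$ has:
outer entries $by,by\ne0$;
a zero $(0011,1100)$ pair;
middle off-diagonal entries $d(c+z),w(c+z)\ne0$;
middle diagonal entries $c^2+dw$, $z^2+dw$, which cannot both vanish.
Lemma~\ref{mobius} (if one diagonal entry vanishes) or Lemma~\ref{onezeropair} finishes.

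If all three ratios equal the same $\pm1$, Lemma~\ref{=+2and=-2} applies. Your direct use of Theorem~\ref{eightvertex} for the sign $+1$ is a valid shortcut.

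If the ratios are $\pm1$ but mixed, a $[1,0,-1]$ modification reduces to $y=b$, $z=-c$, $w=-d$. Then $M_{x_1,x_2,x_3,x_4}(f)^2$, $M_{x_1,x_4,x_3,x_2}(f)^2$, $M_{x_1,x_3,x_4,x_2}(f)^2$ are $\mathrm{diag}(b^2,c^2-d^2,c^2-d^2,b^2)$, $\mathrm{diag}(-d^2,b^2+c^2,b^2+c^2,-d^2)$, $\mathrm{diag}(-c^2,b^2+d^2,b^2+d^2,-c^2)$. Each is of the form in Lemma~\ref{g=_4} after permuting inputs. All three fail only if $(c^2-d^2)^2=b^4$, $(b^2+c^2)^2=d^4$ and $(b^2+d^2)^2=c^4$, which is impossible when $bcd\ne0$. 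So $=_4$ is realizable, and Lemma~\ref{toCSP2} with Theorem~\ref{CSP2dich} concludes.

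You do mention the chain $M_{x_1,x_2,x_3,x_4}(f)M_{x_3,x_4,x_1,x_2}(f)$, but only as a source of further loops. Its real role is as a new 4-ary signature with $ax\ne0$.
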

\begin{proof}
     Note that for any pair $(s,t)$, we can construct a binary signature which is $[s,0,t]$. If there is a pair containing exactly one zero, then we can construct a degenerate binary signature, a contradiction. If there is a zero pair, permute inputs so that the remaining entries form the middle $2\times2$ block. If this block has rank at most one, then $f\in\langle\mathcal T\rangle$. Otherwise, Lemma~\ref{twozeropair} supplies $=_4$, and then by Lemma~\ref{toCSP2} together with Theorem~\ref{CSP2dich} the result follows. Then we assume that $byczwd\ne0$. Note that any binary signature must have finite order. Then \[M(f)=\begin{pmatrix}
 0 &  0& 0 &b \\
  0& c & d & 0\\
 0 & e^{\frak{i}\theta_3}d &  e^{\frak{i}\theta_2}c& 0\\
  e^{\frak{i}\theta_1}b&  0&  0&0
\end{pmatrix}.\] If $e^{2\frak{i}\theta_2}\ne 1$, then we can construct at least three distinct binary signatures of the form $(1,0,0,t)$.
For
\[M_{x_1,x_2,x_3,x_4}(f)^2=\begin{pmatrix}
 e^{\frak{i}\theta_1}b^2 &  0& 0 &0 \\
  0& c^2+e^{\frak{i}\theta_3}d^2 & cd(1+e^{\frak{i}\theta_2}) & 0\\
 0 & e^{\frak{i}\theta_3}cd(1+e^{\frak{i}\theta_2}) &  e^{2\frak{i}\theta_2}c^2+e^{\frak{i}\theta_3}d^2& 0\\
  0&  0&  0&e^{\frak{i}\theta_1}b^2
\end{pmatrix},\] $e^{\frak{i}\theta_1}b^2\ne0$, $cd(1+e^{\frak{i}\theta_2})\ne0$ and there is at most one zero in $c^2+e^{\frak{i}\theta_3}d^2$ and $e^{2\frak{i}\theta_2}c^2+e^{\frak{i}\theta_3}d^2$ since $c^2+e^{\frak{i}\theta_3}d^2\ne e^{2\frak{i}\theta_2}c^2+e^{\frak{i}\theta_3}d^2$. If there exists one zero in $c^2+e^{\frak{i}\theta_3}d^2$ and $e^{2\frak{i}\theta_2}c^2+e^{\frak{i}\theta_3}d^2$, by Lemma~\ref{mobius} we can construct a degenerate binary signature. If all are nonzero, the result follows from Lemma~\ref{onezeropair}.

Then by symmetry, we can assume that $e^{2\frak{i}\theta_1}=e^{2\frak{i}\theta_2}=e^{2\frak{i}\theta_3}=1$. If $e^{\frak{i}\theta_1}=e^{\frak{i}\theta_2}=e^{\frak{i}\theta_3}=\pm1$, then $f_{=_2}=\{=_2\}$ or $f_{=_2}=\{[1,0,-1]\}$, the result follows from Lemma~\ref{=+2and=-2}. Then by symmetry there are two different cases, namely $e^{\frak{i}\theta_1}=-e^{\frak{i}\theta_2}=-e^{\frak{i}\theta_3}=1$ and $-e^{\frak{i}\theta_1}=e^{\frak{i}\theta_2}=e^{\frak{i}\theta_3}=1$. If $e^{\frak{i}\theta_1}=-e^{\frak{i}\theta_2}=-e^{\frak{i}\theta_3}=1$,
then
\[M_{x_1,x_2,x_3,x_4}(f)^2=\begin{pmatrix}
 b^2 &  0& 0 &0 \\
  0& c^2-d^2 & 0 & 0\\
 0 & 0 &  c^2-d^2& 0\\
  0&  0&  0&b^2
\end{pmatrix},\]
\[M_{x_1,x_4,x_3,x_2}(f)^2=\begin{pmatrix}
 -d^2 &  0& 0 &0 \\
  0& c^2+b^2 & 0 & 0\\
 0 & 0 &  c^2+b^2& 0\\
  0&  0&  0&-d^2
\end{pmatrix},\]
\[M_{x_1,x_3,x_4,x_2}(f)^2=\begin{pmatrix}
 -c^2 &  0& 0 &0 \\
  0& b^2+d^2 & 0 & 0\\
 0 & 0 &  b^2+d^2& 0\\
  0&  0&  0&-c^2
\end{pmatrix}.\] By Lemma~\ref{g=_4}, failure of all three displayed constructions to yield $=_4$ requires \[\left\{\begin{matrix}
 (c^2-d^2)^2=b^4\\
 (b^2+c^2)^2=d^4\\
(b^2+d^2)^2=c^4
\end{matrix}\right.\] Set $u=b^2$, $v=c^2$, and $w=d^2$. The first two equations imply $2v(u+v-w)=0$. Since $v\ne0$, we have $w=u+v$. The third equation now gives $4uw=0$, contradicting $uw\ne0$. Thus we can construct $=_4$. Lemma~\ref{toCSP2} and Theorem~\ref{CSP2dich} give the stated hardness or tractable alternatives.

If $-e^{\frak{i}\theta_1}=e^{\frak{i}\theta_2}=e^{\frak{i}\theta_3}=1$, note that we construct $[1,0,-1]$ in this case. By connecting $[1,0,-1]$ to input $x_1$ of $f$, the new 4-ary signature \[M(f^*)=\begin{pmatrix}
 0 &  0& 0 &b \\
  0& c & d & 0\\
 0 & -d &  -c& 0\\
  b&  0&  0&0
\end{pmatrix},\] then the proof is the same as above.
\end{proof}
\begin{lemma}\label{specialg}
    Let $g_1,g_2,g_3,g_4$ be 4-ary signatures, with signature matrices \[M(g_1)=\begin{pmatrix}
 1 &  0& 0 &1 \\
  0& 1 & 1 & 0\\
 0 & 1 &  1& 0\\
  1&  0&  0&1
\end{pmatrix},\] \[M(g_2)=\begin{pmatrix}
 1 &  0& 0 &-1 \\
  0& 1 & 1 & 0\\
 0 & 1 &  1& 0\\
  -1&  0&  0&1
\end{pmatrix},\]
\[M(g_3)=\begin{pmatrix}
 1 &  0& 0 &1 \\
  0& -1 & -1 & 0\\
 0 & -1 &  -1& 0\\
  1&  0&  0&1
\end{pmatrix},\] \[M(g_4)=\begin{pmatrix}
 1 &  0& 0 &-1 \\
  0& -1 & -1 & 0\\
 0 & -1 &  -1& 0\\
  -1&  0&  0&1
\end{pmatrix}.\] Let $h=(1,0,0,t)$ be a binary signature with $t\ne\pm1,0$, then the following four reductions hold:
\begin{itemize}
    \item $\mathrm{CSP^2}((H^{-1})\mathcal{F})\le_T\mathrm{Holant}(=_2|\mathcal{F},g_1)$;
    \item $\mathrm{CSP^2}((Z^{-1})\mathcal{F},[1,\frac{1+t}{1-t},1])\le_T\mathrm{Holant}(=_2|\mathcal{F},h,g_2)$;
    \item $\mathrm{CSP^2}((H^{-1})\mathcal{F})\le_T\mathrm{Holant}(=_2|\mathcal{F},g_3)$;
    \item $\mathrm{CSP^2}((Z^{-1})\mathcal{F},[1,\frac{1+t}{1-t},1])\le_T\mathrm{Holant}(=_2|\mathcal{F},h,g_4)$.
\end{itemize}
\end{lemma}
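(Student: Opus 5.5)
The plan is to treat each $g_i$ as a holographic image of an equality-type signature of arity $4$, and then invoke Lemma~\ref{toCSP2}. First I record the $g_i$ as Boolean functions. Reading off the matrices, $g_1$ is the even-parity indicator $\chi_{x_1\oplus x_2\oplus x_3\oplus x_4=0}$. The others are signed versions of it:
\[
g_3=g_1\cdot(-1)^{x_1+x_2},\qquad g_2=g_1\cdot(-1)^{x_1x_2+x_3x_4},\qquad g_4=g_2\cdot(-1)^{x_1+x_2}.
\]
On the support of $g_1$ we have $(-1)^{x_1+x_2}=(-1)^{x_3+x_4}$, so these descriptions do not depend on which pair is chosen.

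\textbf{$g_1$ and $g_3$.} $H$ is orthogonal, so $(=_2)H=(=_2)$. The transformation using $H$ gives $\mathrm{Holant}(=_2|\mathcal F,g_1)\equiv_T\mathrm{Holant}(=_2|H^{-1}\mathcal F,H^{-1}g_1)$. Since $H^{-1}=H$, a Fourier computation gives
\[
(H^{\otimes 4}g_1)(y)=\tfrac14\sum_{x\ \mathrm{even}}(-1)^{x\cdot y},
\]
which equals $2$ if $y\in\{0000,1111\}$ and $0$ otherwise. Hence $H^{-1}g_1$ is a nonzero multiple of $=_4$, and Lemma~\ref{toCSP2} gives the first reduction.

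For $g_3$, the extra factor $(-1)^{x_1+x_2}$ is $\mathrm{diag}(1,-1)$ on inputs $x_1,x_2$. Since $H\,\mathrm{diag}(1,-1)\,H=\begin{psmallmatrix}0&1\\1&0\end{psmallmatrix}$, $H^{-1}g_3$ is a multiple of $\ne_4$ (support $0011,1100$ after permuting inputs). I then take two copies of $\ne_4$ and connect inputs $x_3,x_4$ of the first to inputs $x_1,x_2$ of the second, each through an $=_2$ on the left. The result is supported exactly on $0000,1111$, so it is $=_4$, and Lemma~\ref{toCSP2} again finishes.

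\textbf{$g_2$ and $g_4$.} Here I use $Z$, so that $\mathrm{Holant}(=_2|\mathcal F,h,g_2)\equiv_T\mathrm{Holant}(\ne_2|Z^{-1}\mathcal F,Z^{-1}h,Z^{-1}g_2)$. A direct computation with $Z^{-1}=\frac1{\sqrt2}\begin{psmallmatrix}1&-\frak i\\1&\frak i\end{psmallmatrix}$ gives
\[
Z^{-1}h\propto(1-t,\,1+t,\,1+t,\,1-t)=(1-t)\,[1,\tfrac{1+t}{1-t},1].
\]
This is well defined because $t\ne1$, and it is where the extra binary signature in the statement comes from. Next I compute $Z^{-1}g_2$, which is $g_1$ twisted by the quadratic phase $(-1)^{x_1x_2+x_3x_4}$. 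I expect it to be, up to a scalar, a signature supported on a single complementary pair $\{\alpha,\overline\alpha\}$, i.e. $=_4$ or $\ne_4$ after permuting inputs. If it comes out as $\ne_4$-type, I apply the same two-copy composition (now through $\ne_2$) to reach $=_4$.

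With $=_4$ on the right and $\ne_2$ on the left, it remains to turn the left side into $=_2$, so that Lemma~\ref{toCSP2} applies to $\mathrm{Holant}(=_2|Z^{-1}\mathcal F,[1,\frac{1+t}{1-t},1],=_4)$. I will realize $=_2$ on the right, either as a self-loop of the $=_4$-type signature through $\ne_2$ or from $(\ne_2)$ composed with $=_4$. Then a chain $\ne_2$–$(=_2)$–$\ne_2$ gives $=_2$ on the left, and every edge can be routed through it. For $g_4$ I argue the same way: the factor $(-1)^{x_1+x_2}$ is the diagonal modification $\mathrm{diag}(1,-1)$ on two inputs, and $Z^{-1}\mathrm{diag}(1,-1)Z$ is the bit flip. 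So $Z^{-1}g_4$ is $Z^{-1}g_2$ with two inputs negated, which is again a complementary-pair signature, and the two-copy trick applies unchanged.

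\textbf{Main obstacle.} I expect the main difficulty to be the explicit evaluation of $Z^{-1}g_2$ and $Z^{-1}g_4$: confirming that they really collapse to a single complementary pair, and tracking which pair. A second delicate point is producing $=_2$ on the correct side without losing $h$. If the self-loop turns out to vanish, I will instead use $Z^{-1}h$ together with Lemma~\ref{poly=_2} to obtain the needed binary equality.
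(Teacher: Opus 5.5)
Your treatment of $g_1$ and $g_3$ is correct. Composing two copies of $\ne_4$ through $=_2$ is a fine substitute for the paper's appeal to Lemma~\ref{g=_4}. The gap is in the $g_2,g_4$ half, at the step you flagged yourself: producing an equality in the $Z$-basis. None of your proposed mechanisms can do it, because of a parity obstruction. Count edge-ends and $1$'s: each $\ne_2$ vertex sees exactly one $1$, each $\ne_4$ vertex exactly two, and each $=_4$ vertex zero or four. Consequently a gadget built only from $\ne_2$ on the left and $\ne_4$ (or $=_4$) on the right, with $k$ dangling edges on either side, carries exactly $k/2$ ones on its dangling edges in the $\ne_4$ case. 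In the $=_4$ case it carries a number $\equiv k/2 \pmod 2$. Hence every such binary gadget is supported on $\{01,10\}$, and self-loops through $\ne_2$ give only $0$ or $\ne_2$. For the same reason, your two-copy composition of $\ne_4$ through $\ne_2$ returns $\ne_4$, not $=_4$: if $a_3a_4=11$, the second copy is forced to $0011$ again. The fallback via Lemma~\ref{poly=_2} does not apply either. That lemma needs $=_2$ on the left, and in the original basis it only supplies $=_2$ on the right, whose $Z^{-1}$-image is again $\ne_2$. What you actually need in the original basis is $[1,0,-1]$. Binary chains of $h$ give only $\mathrm{diag}(1,t^k)$, which is never proportional to it when $t$ is a root of unity of odd order (e.g.\ $t=e^{2\pi\frak{i}/3}$), and interpolation is unavailable there as well.

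The missing idea is to use $Z^{-1}h\propto[1,\frac{1+t}{1-t},1]$ as a loop on the arity-4 signature rather than in binary chains. Its $00$ and $11$ entries are equal and nonzero, which is where $t\ne1$ enters. So $(\ne_4)^{12}_{[1,\frac{1+t}{1-t},1]}=(=_2)$ on the right, and likewise for $=_4$. In the original basis this is simply $(g_2)^{12}_h=(1-t)[1,0,-1]$, and the same holds for $g_4$. From there:
\begin{itemize}
\item the chain $\ne_2$--$(=_2)$--$\ne_2$ gives $=_2$ on the left;
\item routing inputs $x_3,x_4$ of $\ne_4$ through $\ne_2$ followed by the right-side $=_2$ flips them and yields $=_4$;
\item Lemma~\ref{toCSP2} then applies to $\mathrm{Holant}(=_2|Z^{-1}\mathcal{F},[1,\frac{1+t}{1-t},1],=_4)$.
\end{itemize}
This is the paper's argument. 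Finally, the evaluation you left as an expectation is short. We have $(Z^{-1})^{\otimes 4}g(y)=\frac14\sum_x g(x)(-\frak{i})^{|x|}(-1)^{x\cdot y}$, and on even-weight $x$ the phase $(-\frak{i})^{|x|}(-1)^{x_1x_2+x_3x_4}$ equals $(-1)^{x_1+x_2}$. Therefore $Z^{-1}g_2=2(\ne_4)$ and $Z^{-1}g_4=2(=_4)$.
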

\begin{proof}
Let $H=\frac{-1}{\sqrt{2}}\begin{pmatrix}
 1 & 1\\
 1 &-1
\end{pmatrix}$.
Note that

$(H^{-1})^{\otimes 4}g_1=2(=_4)$; $(H^{-1})^{\otimes 4}g_3=2(\ne_4)$;
$(H^{-1})^{\otimes 2}h=\tfrac12[1+t,1-t,1+t]=\tfrac{1-t}{2}[\frac{1+t}{1-t},1,\frac{1+t}{1-t}]$;

$(Z^{-1})^{\otimes 4}g_2=2(\ne_4)$;
$(Z^{-1})^{\otimes 4}g_4=2(=_4)$;
$(Z^{-1})^{\otimes 2}h=\tfrac12[1-t,1+t,1-t]=\tfrac{1-t}{2}[1,\frac{1+t}{1-t},1]$.

Consider the following four cases.

    \caseheading{1} In $\mathrm{Holant}(=_2|\mathcal{F},g_1)$, by a holographic transformation using $H$, we obtain the stated reduction.

    \caseheading{2} In $\mathrm{Holant}(=_2|\mathcal{F},h,g_2)$, by a holographic transformation using $Z$, we have
    \[
    \mathrm{Holant}(\ne_2|(Z^{-1})\mathcal{F},[1,\frac{1+t}{1-t},1],\ne_4)\equiv_T\mathrm{Holant}(=_2|\mathcal{F},h,g_2).
    \]
    Note that if $t\ne1$, then
    ${(\ne_4)}_{[1,\frac{1+t}{1-t},1]}^{12}=(=_2)$. Then by Lemma~\ref{g=_4} we have
    \[\begin{aligned}
    \mathrm{CSP^2}((Z^{-1})\mathcal{F},[1,\frac{1+t}{1-t},1])&\le_T\mathrm{Holant}(=_2,\ne_2|(Z^{-1})\mathcal{F},[1,\frac{1+t}{1-t},1],\ne_4)\\ &\equiv_T\mathrm{Holant}(\ne_2|(Z^{-1})\mathcal{F},[1,\frac{1+t}{1-t},1],\ne_4).
    \end{aligned}\]

    \caseheading{3} In $\mathrm{Holant}(=_2|\mathcal{F},g_3)$, by a holographic transformation using $H$, we have \[
    \mathrm{Holant}(=_2|(H^{-1})\mathcal{F},\ne_4)\equiv_T\mathrm{Holant}(=_2|\mathcal{F},g_3).
    \] Then by Lemma~\ref{g=_4} we have \[
    \mathrm{CSP^2}((H^{-1})\mathcal{F})\le_T \mathrm{Holant}(=_2|(H^{-1})\mathcal{F},\ne_4)\equiv_T\mathrm{Holant}(=_2|\mathcal{F},g_3).
    \]

    \caseheading{4} In $\mathrm{Holant}(=_2|\mathcal{F},h,g_4)$, by a holographic transformation using $Z$, we have
    \[
    \mathrm{Holant}(\ne_2|(Z^{-1})\mathcal{F},[1,\frac{1+t}{1-t},1],=_4)\equiv_T\mathrm{Holant}(=_2|\mathcal{F},h,g_4).
    \] Note that if $t\ne1$, then
    ${(=_4)}_{[1,\frac{1+t}{1-t},1]}^{12}= (=_2)$. Then by Lemma~\ref{g=_4} we have
    \[\begin{aligned}
    \mathrm{CSP^2}((Z^{-1})\mathcal{F},[1,\frac{1+t}{1-t},1])&\le_T\mathrm{Holant}(=_2,\ne_2|(Z^{-1})\mathcal{F},[1,\frac{1+t}{1-t},1],=_4)\\ &\equiv_T\mathrm{Holant}(\ne_2|(Z^{-1})\mathcal{F},[1,\frac{1+t}{1-t},1],=_4).
    \end{aligned}\]

\end{proof}
\begin{lemma}\label{allnonzero}
     Let $f$ be a 4-ary signature with the signature matrix \[M(f)=\begin{pmatrix}
 a &  0& 0 &b \\
  0& c & d & 0\\
 0 & w &  z& 0\\
  y&  0&  0&x
\end{pmatrix}\] with $abcdxywz\ne0$. Assume that we can construct three distinct binary signatures of the form $(1,0,0,t)$. Then for $\mathrm{Holant}(=_2|\mathcal{F},f)$, one of the following holds:
\begin{itemize}
        \item Either $\mathrm{Holant}(=_2|\mathcal{F},f)$ is \#P-hard;
        \item or we can construct a degenerate binary signature;
        \item or there exists $T\in \mathrm{SL}(2,\mathbb{C})$, such that $\mathrm{CSP}^2(\mathcal{F'},f')\le_T\mathrm{Holant}(=_2|\mathcal{F},f)$, where $f'=(T^{-1})^{\otimes 4}f$, $\mathcal{F'}=(T^{-1})\mathcal{F}$.
    \end{itemize}
\end{lemma}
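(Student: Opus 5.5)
I will follow the template of Lemma~\ref{onezeropair} and Lemma~\ref{a=x=0}: first use the three binary signatures $(1,0,0,t_i)$ to normalize each pair with Lemma~\ref{mobius}, and then either build $=_4$ or reduce to one of the special signatures $g_1,\dots,g_4$ of Lemma~\ref{specialg}.

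\textbf{Step 1 (normal form).} Each of the $2\times2$ matrices $\begin{psmallmatrix}a&b\\y&x\end{psmallmatrix}$, $\begin{psmallmatrix}a&c\\z&x\end{psmallmatrix}$ and $\begin{psmallmatrix}a&d\\w&x\end{psmallmatrix}$ arises as the outer pair of $f$ after a permutation of inputs. If one of them is singular, connecting the signatures $(1,0,0,t_i)$ gives $f_{g_i}^{34}$ with a zero coordinate, which is degenerate; if this fails, Lemma~\ref{onezeropair}-type arguments apply. So I assume all three are of full rank.

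By Lemma~\ref{mobius}, unless we already obtain a degenerate binary signature, each matrix is diagonal, antidiagonal, or of the form $\begin{psmallmatrix}e^{\frak{i}\theta}&e^{\frak{i}\theta}\lambda\\ \overline{\lambda}&1\end{psmallmatrix}$ with $|\lambda|\ne1$ and finite projective order. Since all eight entries are nonzero, the diagonal and antidiagonal options are excluded. Therefore, after scaling $x=1$, we get
\[
b=a\lambda_1,\quad y=\overline{\lambda_1},\qquad c=a\lambda_2,\quad z=\overline{\lambda_2},\qquad d=a\lambda_3,\quad w=\overline{\lambda_3},
\]
with a common $a=e^{\frak{i}\theta}$.

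\textbf{Step 2 (produce $=_4$ or a special form).} Using $(e^{-\frak{i}\theta},0,0,1)$, which is obtained as in the proof of Lemma~\ref{onezeropair} from powers of these finite-order matrices, I modify input $x_1$ to reach $a=1$. The matrix then has the Hermitian-like block structure
\[
\begin{pmatrix}1&\lambda_1\\ \overline{\lambda_1}&1\end{pmatrix},\qquad \begin{pmatrix}\lambda_2&\lambda_3\\ \overline{\lambda_3}&\overline{\lambda_2}\end{pmatrix}.
\]
Next I form the chains $M_{x_1,x_2,x_3,x_4}(f)^n$ for $n$ equal to the projective order, together with the analogous chains for the other two input pairings. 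These chains make the outer block scalar. If the inner block does not simultaneously become the same scalar, Lemma~\ref{g=_4} or Lemma~\ref{twozeropair} gives $=_4$, and then Lemma~\ref{toCSP2} yields the third alternative with $T=I$.

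Otherwise, the three pairings impose equalities on the moduli $|\lambda_i|$, similar to the system $|\lambda_1|^2+|\lambda_2|^2=1$ obtained in Lemma~\ref{onezeropair}. I expect these equalities to force either a contradiction with $|\lambda_i|\ne1$, or a hard case via Lemma~\ref{four1} and Lemma~\ref{foursamepair} applied after the $Z$-transformation, or all $|\lambda_i|$ and phases to be so constrained that a suitable product such as $M_{x_1,x_4,x_3,x_2}(f)M_{x_2,x_3,x_4,x_1}(f)$ is, up to scalar and a diagonal modification, one of $g_1,\dots,g_4$.

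\textbf{Step 3 (finish).} In the $g_1,g_3$ cases, Lemma~\ref{specialg} gives $\mathrm{CSP}^2$ after the transformation by $H$, so $T=H$. In the $g_2,g_4$ cases I use one $(1,0,0,t)$ with $t\ne\pm1$, which exists among three distinct values; Lemma~\ref{specialg} then gives $\mathrm{CSP}^2$ after the transformation by $Z$, so $T=Z$. Either way the third alternative holds, with $\mathcal{F}'$ and $f'$ transformed accordingly.

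The main obstacle is Step 2: the case analysis on the moduli and phases when all chain powers degenerate to scalars. I would first try to mimic the $u,v,w$ elimination from the proof of Lemma~\ref{a=x=0}, using the three pairings to show that the only surviving solutions are the $\pm1$ patterns of $g_1,\dots,g_4$.
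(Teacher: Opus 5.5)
\textbf{The singular cases are dismissed incorrectly.} Your Step 1 rests on a false claim, and that claim discards most of the real work. With all eight entries nonzero, a singular pair matrix such as $\begin{psmallmatrix}a&b\\y&x\end{psmallmatrix}$ means $(y,x)=k(a,b)$ with $k\ne0$. Then $f^{34}_{g_i}=(a+t_ib)(1,0,0,k)$. Every such contraction is therefore a multiple of one fixed nondegenerate signature, or zero for at most one $t_i$; it is never a nonzero degenerate signature.

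``Lemma~\ref{onezeropair}-type arguments'' do not rescue this either. There, hardness for $ax=by$ comes from the zero pair $d=w=0$, which lets the normalized signature land in the diagonal-inner form of Lemma~\ref{four1}. Here the inner block is full. In fact the singular situation contains tractable signatures: $g_1$ has $ax=by=cz=dw=1$ and $(H^{-1})^{\otimes4}g_1=2(=_4)$. So in these cases neither hardness nor a degenerate binary signature can be the general conclusion, and the $\mathrm{CSP}^2$ alternative has to be reached inside the singular cases themselves.

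This also makes your plan internally inconsistent. The signatures $g_1,\dots,g_4$ that Step 3 targets are all of the type $ax=by=cz=dw$, which is exactly what you excluded in Step 1. In the full-rank case you keep (the paper's Case 1), the all-scalar scenario ends in a contradiction. That contradiction comes from $(1-r_1)^2=(r_2-r_3)^2$ and its two cyclic analogues, with $r_j=|\lambda_j|^2\ne1$, so Lemma~\ref{specialg} never enters there.

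Your Step 2 criterion is also wrong. Suppose the inner block becomes $\mathrm{diag}(u,v)$ and the outer block is the scalar $s$. Lemma~\ref{g=_4} gives $=_4$ exactly when $uv\ne s^2$, not when the inner block fails to be ``the same scalar.'' You also need the branches where $L^n$ has four nonzero entries (Lemma~\ref{onezeropair}) or a pair with exactly one zero (Lemma~\ref{mobius}).

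\textbf{What is missing is a separate treatment of the singular cases,} organized by how many of $by,cz,dw$ equal $ax$.
\begin{itemize}
\item \emph{One equality} ($dw=ax$). The paper normalizes by diagonal modifications and forms the chain $\tfrac12M_{x_3,x_2,x_1,x_4}(f^*)M_{x_4,x_1,x_2,x_3}(f^*)$. The moduli relation $r+s=2$ or $r=s$ then exhibits a block of infinite projective order, and Lemma~\ref{mobius} gives a degenerate signature.
\item \emph{Two equalities.} The inner block $L$ has rank one, so $L^n=(\operatorname{tr}L)^{n-1}L$. This yields either $=_4$ or an application of Lemma~\ref{onezeropair}.
\item \emph{All equalities} ($ax=by=cz=dw$). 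This is the key missing idea. Normalize to entries $1,b,c,d,1/d,1/c,1/b,1$, then apply the rotationally symmetric gadget of four copies of $f^*$ (Figure~\ref{fig:tensor_network}) in three input orderings. Each ordering produces a signature whose outer block $\begin{psmallmatrix}A&C\\C&D\end{psmallmatrix}$ has determinant a nonzero multiple of $(bd+c)^2(bd-c)^2(c+1)^2(c-1)^2$, or of its permuted analogue. If any determinant is nonzero, you get $=_4$ or hardness. If all three vanish, this forces constraints on $b^2,c^2,d^2$. Via $M_{x_1,x_2,x_3,x_4}(f^*)M_{x_3,x_4,x_1,x_2}(f^*)$, possibly after a $[1,0,-1]$ modification, these lead exactly to $g_1,\dots,g_4$, where Lemma~\ref{specialg} supplies the third alternative.
\end{itemize}
Without this case split and the symmetric gadget, your proposal has no mechanism that actually reaches $g_1,\dots,g_4$.
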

\begin{proof}
We consider the following cases up to symmetry:
\begin{caselist}
    \caseitem{1}  $by,cz,dw\ne ax$. In this case, by Lemma~\ref{mobius}, we can construct a degenerate binary signature or \[M(f)=\begin{pmatrix}
 e^{\frak{i}\theta_1} &  0& 0 &e^{\frak{i}\theta_1}\lambda_1 \\
  0& e^{\frak{i}\theta_1}\lambda_2 & e^{\frak{i}\theta_1}\lambda_3 & 0\\
 0 & \overline{\lambda_3} &  \overline{\lambda_2}& 0\\
  \overline{\lambda_1}&  0&  0&1
\end{pmatrix}\] up to a nonzero scalar $x$, where $|\lambda_1|,|\lambda_2|,|\lambda_3|\ne 1 ,0$. Note that for $1\le j\le3$, $\begin{pmatrix}
  e^{\frak{i}\theta_1}&e^{\frak{i}\theta_1}\lambda_j \\
  \overline{\lambda_j}&1
\end{pmatrix}$ has finite projective order. Assume that twice the projective order of $\begin{pmatrix}
  e^{\frak{i}\theta_1}&e^{\frak{i}\theta_1}\lambda_j \\
  \overline{\lambda_j}&1
\end{pmatrix}$ is $n_j$. Define $d_j= e^{\frak{i}\theta_1}-e^{\frak{i}\theta_1}\lambda_j\overline{\lambda_j}$. This even exponent makes the scalar exactly $d_j^{n_j/2}$: if $A^r=cI$, then $A^{2r}=(\det A)^rI$.
We have \[
M_{x_1,x_2,x_3,x_4}(f)^{n_1}=\begin{pmatrix}
 d_1^{\frac{n_1}{2} } & \mathbf{0}  & 0\\
  \mathbf{0}& \begin{bmatrix}
 e^{\frak{i}\theta_1}\lambda_2 &e^{\frak{i}\theta_1}\lambda_3 \\
 \overline{\lambda_3} &\overline{\lambda_2}
\end{bmatrix}^{n_1} &\mathbf{0} \\
 0 & \mathbf{0} & d_1^{\frac{n_1}{2} }
\end{pmatrix}.
\] In $\begin{bmatrix}
 e^{\frak{i}\theta_1}\lambda_2 &e^{\frak{i}\theta_1}\lambda_3 \\
 \overline{\lambda_3} &\overline{\lambda_2}
\end{bmatrix}^{n_1}$, if there are four nonzero entries, the result follows from Lemma~\ref{onezeropair}. If there exists a pair with exactly one zero, we construct a degenerate binary signature by Lemma~\ref{mobius}. If all entries are zero, we construct $=_4$. Then we can assume that $\begin{bmatrix}
 e^{\frak{i}\theta_1}\lambda_2 &e^{\frak{i}\theta_1}\lambda_3 \\
 \overline{\lambda_3} &\overline{\lambda_2}
\end{bmatrix}^{n_1}$ has the form $\begin{bmatrix}
 s &0 \\
 0 &t
\end{bmatrix}$ or $\begin{bmatrix}
 0 &s \\
 t &0
\end{bmatrix}$. Note that $st=\pm det\begin{bmatrix}
 e^{\frak{i}\theta_1}\lambda_2 &e^{\frak{i}\theta_1}\lambda_3 \\
 \overline{\lambda_3} &\overline{\lambda_2}
\end{bmatrix}^{n_1}$, if $st\ne d_1^{n_1}$, we construct $=_4$ by Lemma~\ref{g=_4}. If $st=d_1^{n_1}$, we have $e^{n_1\frak{i}\theta_1}(|\lambda_2|^2-|\lambda_3|^2)^{n_1}= e^{n_1\frak{i}\theta_1}(1-|\lambda_1|^2)^{n_1}$ or $e^{n_1\frak{i}\theta_1}(|\lambda_3|^2-|\lambda_2|^2)^{n_1}= e^{n_1\frak{i}\theta_1}(1-|\lambda_1|^2)^{n_1}$. We consider only the case in which $e^{n_1\frak{i}\theta_1}(|\lambda_2|^2-|\lambda_3|^2)^{n_1}= e^{n_1\frak{i}\theta_1}(1-|\lambda_1|^2)^{n_1}$, the other case is analogous.

Write $r_j=|\lambda_j|^2$. Taking absolute values in the preceding determinant relation and repeating the argument for all three choices of the outer block gives
\[
(1-r_1)^2=(r_2-r_3)^2,\quad (1-r_2)^2=(r_1-r_3)^2,\quad (1-r_3)^2=(r_1-r_2)^2.
\]
Subtracting the first two equations gives $2(r_2-r_1)(1-r_3)=0$. Since every $r_j\ne1$, cyclic symmetry gives $r_1=r_2=r_3$. Substitution into any equation yields $r_1=1$, a contradiction.
\caseitem{2}  $by,cz\ne ax$ and $dw=ax$. In this case, by Lemma~\ref{mobius}, we can construct a degenerate binary signature or \[M(f)=\begin{pmatrix}
 e^{\frak{i}\theta_1} &  0& 0 &e^{\frak{i}\theta_1}\lambda_1 \\
  0& e^{\frak{i}\theta_1}\lambda_2 & e^{\frak{i}\theta_1}d & 0\\
 0 & \frac{1}{d} &  \overline{\lambda_2}& 0\\
  \overline{\lambda_1}&  0&  0&1
\end{pmatrix}\] up to a nonzero scalar $x$, where $|\lambda_1|,|\lambda_2|\ne1$. Write $r=|\lambda_1|^2$ and $s=|\lambda_2|^2$. The same determinant argument as in \caseref{1} gives $(1-r)^2=(s-1)^2$, so $r+s=2$ or $r=s$. Both $r$ and $s$ are positive and different from $1$.

For any available parameter $t$, the contractions on inputs $2,4$ and $1,3$ have diagonal ratios $d$ and $e^{-\frak{i}\theta_1}/d$, respectively, whenever they are nonzero. Each contraction vanishes for at most one of the three distinct parameters. Chains and inverses therefore supply the diagonal modifications needed below for every $d\ne0$.
Apply $(e^{-\frak{i}\theta_1},0,0,d)$ to input $1$ and $(1,0,0,d^{-1})$ to input $3$. The resulting signature is
\[M(f^*)=\begin{pmatrix}
1&0&0&\frac{\lambda_1}{d}\\
0&\lambda_2&1&0\\
0&1&\overline{\lambda_2}&0\\
\overline{\lambda_1}d&0&0&1
\end{pmatrix}.\]
After dividing out the known scalar $2$, the chained gadget has matrix
\[M(g)=\tfrac12M_{x_3,x_2,x_1,x_4}(f^*)M_{x_4,x_1,x_2,x_3}(f^*)=\begin{pmatrix}
1&0&0&1\\
0&(r+s)/2&\overline{\lambda_1}\lambda_2d&0\\
0&\lambda_1\overline{\lambda_2}/d&(r+s)/2&0\\
1&0&0&1
\end{pmatrix}.\]
If $r+s=2$, permute inputs so that the middle block becomes the outer block. Its eigenvalues are $1\pm\sqrt{rs}$. Since $0<rs<1$, their ratio is a positive real number different from $1$, so this block is invertible and has infinite projective order. Lemma~\ref{mobius} gives a degenerate binary signature.

If $r=s$, the permutation $(1,3,2,4)$ instead gives the outer block $\begin{psmallmatrix}1&r\\r&1\end{psmallmatrix}$. Its eigenvalues are $1+r$ and $1-r$, and their ratio has modulus different from $1$ because $r>0$ and $r\ne1$. Lemma~\ref{mobius} again applies.
\caseitem{3}  $by\ne ax$ and $cz=dw=ax$. In this case, by Lemma~\ref{mobius}, we can construct a degenerate binary signature or \[M(f)=\begin{pmatrix}
 e^{\frak{i}\theta_1} &  0& 0 &e^{\frak{i}\theta_1}\lambda_1 \\
  0& e^{\frak{i}\theta_1}c & e^{\frak{i}\theta_1}d & 0\\
 0 & \frac{1}{d} &  \frac{1}{c}& 0\\
  \overline{\lambda_1}&  0&  0&1
\end{pmatrix}\] up to a nonzero scalar $x$, where $|\lambda_1|\ne1$. Let $L=\begin{psmallmatrix}e^{\frak{i}\theta_1}c&e^{\frak{i}\theta_1}d\\1/d&1/c\end{psmallmatrix}$. This nonzero matrix has rank one, so $L^n=(\operatorname{tr}L)^{n-1}L$ for $n\ge1$.

The outer block has finite projective order, or Lemma~\ref{mobius} already applies. Choose a positive even multiple $n$ of that order, so its $n$th power is a nonzero scalar matrix. If $\operatorname{tr}L=0$, then $L^n=0$ and the chain yields $=_4$ up to scalar. Otherwise, all four entries of $L^n$ are nonzero, and Lemma~\ref{onezeropair} applies.
\caseitem{4}  $ax=by=cz=dw$. Then \[M(f)=\begin{pmatrix}
 1 &  0& 0 &b \\
  0& c & d & 0\\
 0 & \frac{by}{d} &  \frac{by}{c}& 0\\
  y&  0&  0&by
\end{pmatrix}\] by normalizing $a=1$. We can always construct $(1,0,0,\frac{1}{b})$ and $(1,0,0,\frac{1}{y})$ since we have three distinct
binary signatures of the form $(1, 0, 0, t)$. Then we have $(1,0,0,\frac{1}{by})$. By connecting $(1,0,0,\frac{1}{by})$ to input $x_1$ of $f$, we have a new 4-ary signature \[M(f^*)=\begin{pmatrix}
 1 &  0& 0 &b \\
  0& c & d & 0\\
 0 & \frac{1}{d} &  \frac{1}{c}& 0\\
  \frac{1}{b}&  0&  0&1
\end{pmatrix}.\]
\begin{figure}[tbp]
\centering
\begin{tikzpicture}[
    scale=2.2,
    vertex/.style={circle, fill=black, minimum size=6pt, inner sep=0pt},
    square/.style={rectangle, draw, fill=white, minimum size=6pt, inner sep=0pt},
    diamond/.style={shape=diamond, draw, fill=white, minimum size=6pt, inner sep=0pt},
    every node/.style={font=\normalsize}
]

\coordinate (v1) at (-1, 1);  
\coordinate (v2) at (1, 1);   
\coordinate (v3) at (1, -1);  
\coordinate (v4) at (-1, -1); 

\draw (v1) -- (v2) -- (v3) -- (v4) -- cycle; 
\draw (v1) -- (v3); 
\draw (v2) -- (v4); 
\draw (v1) -- (-1.6, 1.6) node[above left] {$y_1$};
\draw (v4) -- (-1.6, -1.6) node[below left] {$y_2$};
\draw (v3) -- (1.6, -1.6) node[below right] {$y_3$};
\draw (v2) -- (1.6, 1.6) node[above right] {$y_4$};

\node[square] at (0, 1) {};   
\node[square] at (1, 0) {};   
\node[square] at (0, -1) {};  
\node[square] at (-1, 0) {};  
\node[diamond] at (-0.5, 0.5) {}; 
\node[diamond] at (0.5, 0.5) {};  

\node[vertex] at (v1) {};
\node[vertex] at (v2) {};
\node[vertex] at (v3) {};
\node[vertex] at (v4) {};

\node[above left=2pt of v1] {$x_i$};
\node[above right=2pt of v1] {$x_l$};
\node[below left=2pt of v1] {$x_j$};
\node[below right=2pt of v1] {$x_k$};

\node[above left=2pt of v2] {$x_j$};
\node[above right=2pt of v2] {$x_i$};
\node[below left=2pt of v2] {$x_k$};
\node[below right=2pt of v2] {$x_l$};

\node[above left=2pt of v3] {$x_k$};
\node[above right=2pt of v3] {$x_{j}$};
\node[below left=2pt of v3] {$x_l$};
\node[below right=2pt of v3] {$x_i$};

\node[above left=2pt of v4] {$x_l$};
\node[above right=2pt of v4] {$x_k$};
\node[below left=2pt of v4] {$x_i$};
\node[below right=2pt of v4] {$x_j$};

\end{tikzpicture}
\caption{The circles are assigned $f^*$, and the squares and diamonds are assigned $=_2$. The gadget produces
a rotationally symmetric signature.}
\label{fig:tensor_network}
\end{figure}

The gadget in Figure~\ref{fig:tensor_network} is rotationally symmetric. For $(i,j,k,l)=(1,2,3,4)$, use cyclic edge variables $e_1,e_2,e_3,e_4$ and diagonal variables $k_1,k_2$. Its signature is the sum of $\prod_{v=1}^4 f^*(Y_v,e_{v-1},K_v,e_v)$ over these six binary variables, where $e_0=e_4$, $(Y_1,Y_2,Y_3,Y_4)=(y_1,y_4,y_3,y_2)$, and $(K_1,K_2,K_3,K_4)=(k_1,k_2,k_1,k_2)$. Let $A(b,c,d)=c^4+2b^2d^2+4bcd+1$, $B(b,c,d)=bd+2c+\frac{c^2}{bd}+\frac{bd}{c^2}+\frac{2}{c}+\frac{1}{bd}$, $C(b,c,d)=c^2+\frac{2bd}{c}+\frac{2c}{bd}+\frac{1}{c^2}+2$, $D(b,c,d)=\frac{4}{bcd}+\frac{2}{b^2d^2}+\frac{1}{c^4}+1$.
We have:
\begin{itemize}
    \item If the ordered tuple $(i, j, k, l) = (1, 2, 3, 4)$, then the gadget gives a signature $g^{bcd}$ and
    \[
    M_{y_1,y_3,y_2,y_4}(g^{bcd})=\begin{pmatrix}
 A(b,c,d) &  0& 0 & C(b,c,d)\\
 0&   B(b,c,d)&  B(b,c,d) &0 \\
 0 &  B(b,c,d) & B(b,c,d)& 0\\
 C(b,c,d) &  0&  0& D(b,c,d)
\end{pmatrix}
    \]
    \item If $(i, j, k, l) = (1, 3, 2, 4)$, then the gadget gives a signature $g^{cbd}$ and  \[
    M_{y_1,y_3,y_2,y_4}(g^{cbd})=\begin{pmatrix}
 A(c,b,d) &  0& 0 & C(c,b,d)\\
 0&   B(c,b,d)&  B(c,b,d) &0 \\
 0 &  B(c,b,d) & B(c,b,d)& 0\\
 C(c,b,d) &  0&  0& D(c,b,d)
\end{pmatrix}
    \]
    \item If $(i, j, k, l) = (1, 2, 4, 3)$, then the gadget gives a signature $g^{bdc}$ and  \[
    M_{y_1,y_3,y_2,y_4}(g^{bdc})=\begin{pmatrix}
 A(b,d,c) &  0& 0 & C(b,d,c)\\
 0&   B(b,d,c)&  B(b,d,c) &0 \\
 0 &  B(b,d,c) & B(b,d,c)& 0\\
 C(b,d,c) &  0&  0& D(b,d,c)
\end{pmatrix}
    \]
\end{itemize}
Let $M(b,c,d)=\begin{bmatrix}
 A(b,c,d) & C(b,c,d)\\
  C(b,c,d)&D(b,c,d)
\end{bmatrix}$ denote a $2 \times2$ matrix.
If $\det(M(b,c,d))\ne0$ or $\det(M(c,b,d))\ne0$ or $\det(M(b,d,c))\ne0$, then either we can construct $=_4$, or we have Holant$(=_2|g^{bcd})$(or $g^{cbd}$, or $g^{bdc}$) is \#P-hard.
If $\det(M(b,c,d))=\det(M(c,b,d))=\det(M(b,d,c))=0$, we have:
\[
\left\{\begin{matrix}
2\frac{1}{b^2d^2c^4}(bd+c)^2(bd-c)^2(c+1)^2(c-1)^2=0 \\
 2\frac{1}{b^4d^2c^2}(cd+b)^2(cd-b)^2(b+1)^2(b-1)^2=0\\
2\frac{1}{b^2d^4c^2}(bc+d)^2(bc-d)^2(d+1)^2(d-1)^2=0
\end{matrix}\right.
\]
There are four different cases up to symmetry in $b,c,d$:
\begin{caselist}

\subcaseitem{4.1}  $b^2\ne1$, $c^2\ne1$, $d^2\ne1$. Then we have $\left\{\begin{matrix}
 bd=\pm c\\
 cd=\pm b\\
 bc=\pm d
\end{matrix}\right.$. A contradiction since we have $b^2=c^2=d^2=-1$, and $b^2d^2=c^2$.
\subcaseitem{4.2}  $b^2=1$, and $c^2\ne1$, $d^2\ne1$. Then we have $\left\{\begin{matrix}
 bd=\pm c\\
 bc=\pm d
\end{matrix}\right.$.
which implies that $c^2=d^2$.
Write $d=\sigma c$ with $\sigma\in\{1,-1\}$. First suppose that $c^2\ne-1$. A nonzero contraction ${f^*}_{[1,0,t]}^{13}=(1+t/c,0,0,c+t)$ yields $(1,0,0,c)$ up to scalar. If $c$ is not a root of unity, Corollary~\ref{polytodegen} applies. Otherwise, $|c|=1$. Put $\mu=(c+c^{-1})/2$; since $c^2\ne\pm1$, we have $0<|\mu|<1$.
\[
\tfrac12M(f^*)^2=\begin{pmatrix}1&0&0&b\\0&\mu c&\mu d&0\\0&\mu/d&\mu/c&0\\1/b&0&0&1\end{pmatrix}.
\]
Its pair products are $1,\mu^2,\mu^2$, while its outer product is $1$. Thus, after permuting inputs, \caseref{2} above applies. It remains to treat $c^2=d^2=-1$.
Let
\[\begin{aligned}
M(g)&=M_{x_1,x_2,x_3,x_4}(f^*)M_{x_3,x_4,x_1,x_2}(f^*)=\begin{pmatrix}
 1+b^2 & 0 &  0& b+\frac{1}{b}\\
  0& c^2+d^2 & \frac{c}{d}+\frac{d}{c} & 0\\
  0&  \frac{c}{d}+\frac{d}{c}& \frac{1}{c^2}+\frac{1}{d^2} & 0\\
  b+\frac{1}{b}&  0&0  &1+\frac{1}{b^2}
\end{pmatrix}\\ & =2\begin{pmatrix}
 1 & 0 &  0& \frac{1}{b}\\
  0& -1 & \frac{-1}{cd} & 0\\
  0&  \frac{-1}{cd}& -1 & 0\\
  \frac{1}{b}&  0&0  &1
\end{pmatrix}
\end{aligned}\]
Here $g/2$ has middle diagonal entries $-1$ and middle off-diagonal entries $\sigma$. If $\sigma=1$, apply $[1,0,-1]$ to inputs $1$ and $3$; this signature is available by squaring $(1,0,0,c)$. These modifications change the signs of the middle off-diagonal entries and the outer off-diagonal entries. The resulting signature is $g_3$ or $g_4$. Lemma~\ref{specialg} applies, using one of the three available parameters outside $\{1,-1\}$.
\subcaseitem{4.3}  $b^2=c^2=1$ and $d^2\ne 1$. Then we have $bc=\epsilon_3 d$, a contradiction since $b^2c^2=1=d^2$.
\subcaseitem{4.4}  $b^2=c^2=d^2=1$.
Let
\[\begin{aligned}
M(g)&=M_{x_1,x_2,x_3,x_4}(f^*)M_{x_3,x_4,x_1,x_2}(f^*)=\begin{pmatrix}
 1+b^2 & 0 &  0& b+\frac{1}{b}\\
  0& c^2+d^2 & \frac{c}{d}+\frac{d}{c} & 0\\
  0&  \frac{c}{d}+\frac{d}{c}& \frac{1}{c^2}+\frac{1}{d^2} & 0\\
  b+\frac{1}{b}&  0&0  &1+\frac{1}{b^2}
\end{pmatrix}\\ & =2\begin{pmatrix}
 1 & 0 &  0& \frac{1}{b}\\
  0& 1 & \frac{1}{cd} & 0\\
  0&  \frac{1}{cd}& 1 & 0\\
  \frac{1}{b}&  0&0  &1
\end{pmatrix}
\end{aligned}\]
If $cd=1$, then $g/2=g_1$ or $g_2$. If $cd=-1$, one of $c,d$ equals $-1$, so a diagonal loop supplies $[1,0,-1]$; applying it to inputs $1$ and $3$ changes the middle off-diagonal entries to $1$. Thus the resulting signature is again $g_1$ or $g_2$, and Lemma~\ref{specialg} applies.\qedhere
\end{caselist}
\end{caselist}
\end{proof}
\begin{lemma}\label{zeroineachpaireight}
    Let $f$ be a 4-ary signature in eight-vertex form. In $\mathrm{Holant}(=_2,\mathcal{G}|\mathcal{F},f)$, if there is at least one zero in each pair of $f$, then $f\equiv0$ or we can construct a degenerate binary
    signature.
\end{lemma}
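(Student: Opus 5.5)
The idea is to use loops with $=_2$ to produce binary signatures directly from the pairs of $f$. Write
\[M(f)=\begin{pmatrix} a&0&0&b\\ 0&c&d&0\\ 0&w&z&0\\ y&0&0&x\end{pmatrix},\]
with at least one zero in each of $(a,x)$, $(b,y)$, $(c,z)$, $(d,w)$. Assume $f\not\equiv0$. For each of the three ways of splitting the four inputs into two blocks, $f$ becomes a chain-like $4\times4$ matrix in which one pair is the outer pair and another pair sits in the middle block. Contracting two inputs with $=_2$ kills all but two entries, and we get binary signatures of the form
\[f_{=_2}^{ij}=(s+s',\,0,\,0,\,t+t'),\]
where $(s,t)$ and $(s',t')$ are two of the four pairs. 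For example, $f_{=_2}^{12}=(a+y,0,0,b+x)$ and $f_{=_2}^{34}=(a+b,0,0,y+x)$. The analogous contractions on the other input pairs combine the pairs differently.

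The first step is to show that some such loop produces a nonzero degenerate binary signature, i.e.\ a diagonal $(u,0,0,v)$ with exactly one of $u,v$ zero. Since every pair has a zero, each pair is of type ``left'' ($(s,0)$ with $s\ne0$), type ``right'' ($(0,t)$ with $t\ne0$), or zero. Pick a nonzero pair $(s,t)$, say of type left. Among the six loops $f_{=_2}^{ij}$, every pair appears together with each other pair in some contraction, up to which coordinate it lands in. So pair $(s,t)$ is added to another pair $(s',t')$, giving $(s\pm s',t\pm t')$ after possibly using $X$-type symmetry of input order. If some partner pair is zero, the contraction is $(s,0)$, which is a nonzero degenerate binary signature, and we are done. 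Hence all four pairs are nonzero. If two nonzero pairs are of the same type, their sum gives $(s+s',0)$ or its analogue; this is degenerate and nonzero unless $s'=-s$. If two pairs have opposite types, we get $(s,t')$ with both entries nonzero, which is not immediately useful.

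The main obstacle is the cancellation case, where every same-type sum vanishes and the contractions give only full-rank diagonal binaries. I would handle it in two stages.

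First, I would use the full list of contractions $f^{ij}_{=_2}$ for all six choices of $\{i,j\}$. These pair up entries of the pairs in different orientations: for some choices the pair $(d,w)$ contributes as $(w,d)$. This produces more linear combinations and forces strong sign constraints, such as $a=-b$, $x=-y$, and so on.

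Second, I would use the resulting binary signature $g=(1,0,0,r)$ with $r\ne0$ to perform a second loop $f_g^{ij}=(s+rs',0,0,t+rt')$. Choosing $g$ so that $r\ne-1$ breaks the cancellation $s'=-s$ and yields a nonzero degenerate signature. If all obtainable $r$ equal $-1$, so that only $[1,0,-1]$ is available, I would instead take a chain $M(f)M(f)$ in a suitable input order. In that product, each zero-pair pattern forces the relevant blocks to be triangular or anti-triangular, and their square has a zero entry. Then a loop on the chain gives a degenerate signature.

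I expect this final bookkeeping, showing that the sign-cancellation patterns cannot survive all orientations, to be the hard step. I would organise it by the number of nonzero pairs (four, three, two, or one). The case of one nonzero pair is immediate, because a loop on an input pair gives $(s,0,0,0)$ directly.
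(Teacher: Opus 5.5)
There is a genuine gap. Your architecture matches the paper's: loop with $=_2$, extract sign constraints, then loop again with the binary signature just produced. But a sign error in the decisive step routes the only real case into an unproved detour.

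With $g=(1,0,0,r)$ and the cancellation $s'=-s$, your own formula gives $s+rs'=s(1-r)$. So the cancellation is broken by every $r\ne1$, not by $r\ne-1$. This matters because in the cancellation case the loops produce \emph{only} $[1,0,-1]$.

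To see this, normalize so the outer pair is $(a,0)$ with $a\ne0$ (complementing all bits fixes $=_2$). We have $f^{34}_{=_2}=(a+b,0,0,y)$ and $f^{12}_{=_2}=(a+y,0,0,b)$, and similarly $f^{24}_{=_2},f^{13}_{=_2}$ for $(c,z)$ and $f^{23}_{=_2},f^{14}_{=_2}$ for $(d,w)$. Avoiding a nonzero degenerate output forces the nonzero entry of each inner pair to equal $-a$. Every nonzero loop is then $a(1,0,0,-1)$.

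So your condition ``$r\ne-1$'' can never be met, and the argument falls entirely onto the chain $M(f)M(f)$ fallback. You only sketch that fallback (``their square has a zero entry'') and never show which loop on the chain is nonzero and degenerate.

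The missing observation is that $[1,0,-1]$ itself works. If $b=-a$, then $f^{34}_{[1,0,-1]}=(a-b,0,0,y-x)=(2a,0,0,0)$. If $y=-a$, then $f^{12}_{[1,0,-1]}=(a-y,0,0,b-x)=(2a,0,0,0)$. This is exactly the paper's proof: in its Case 1, $f_{1111}=-f_{1100}=-f_{1010}=-f_{1001}$, so $f^{12}_{=_2}$ is a multiple of $[1,0,-1]$, and $f^{34}_{[1,0,-1]}$ is nonzero and degenerate.

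A smaller inaccuracy: it is not true that every pair meets every other pair in some $=_2$ loop. Each $f^{ij}_{=_2}$ combines the outer pair $(a,x)$ with exactly one inner pair. The complementary loops $f^{ij}_{=_2}$ and $f^{kl}_{=_2}$ give both orientations of that combination, and inner pairs never combine with one another.

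Your conclusion that all four pairs may be assumed nonzero still holds. If the outer pair is zero, any nonzero inner pair already loops to a nonzero degenerate signature. If an inner pair is zero, the loop pairing it with a nonzero outer pair does. The availability of both orientations is also what removes your ``opposite types'' worry.
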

\begin{proof}
    By interchanging 0 and 1 and using symmetry, there are three different
cases.
\begin{caselist}
    \caseitem{1}  $f_{0000}=f_{0011}=f_{0101}=f_{0110}=0$. In this case, if $f_{=_2}^{23}$, $f_{=_2}^{24}$ and $f_{=_2}^{34}$ are nondegenerate, we have $f_{1111}=-f_{1100}=-f_{1010}=-f_{1001}$. If $f_{1111}=0$, then $f\equiv0$. If $f_{1111}\ne0$ we construct $[1,0,-1]$ by
    $f_{=_2}^{12}=f_{1111}(1,0,0,-1)$. Then $f_{[1,0,-1]}^{34}$ is degenerate.
    \caseitem{2}  $f_{0000}=f_{0011}=f_{0101}=f_{1001}=0$. In this case, if $f_{=_2}^{14}$, $f_{=_2}^{24}$ and $f_{=_2}^{34}$ are nondegenerate, we have $f_{1111}=-f_{1100}=-f_{1010}=-f_{0110}$. If $f_{1111}=0$, then $f\equiv0$. If $f_{1111}\ne0$ we construct $[1,0,-1]$ by
    $f_{=_2}^{12}=f_{1111}(1,0,0,-1)$. Then $f_{[1,0,-1]}^{34}$ is degenerate.
    \caseitem{3}  $f_{0000}=f_{0011}=f_{1010}=f_{1001}=0$. In this case, if $f_{=_2}^{13}$, $f_{=_2}^{14}$ and $f_{=_2}^{34}$ are nondegenerate, we have $f_{1111}=-f_{1100}=-f_{0101}=-f_{0110}$. If $f_{1111}=0$, then $f\equiv0$. If $f_{1111}\ne0$ we construct $[1,0,-1]$ by
    $f_{=_2}^{12}=f_{1111}(1,0,0,-1)$. Then $f_{[1,0,-1]}^{34}$ is degenerate.\qedhere
\end{caselist}
\end{proof}

\begin{theorem}\label{eightvertex=2}
    Let $f$ be a 4-ary signature with the signature matrix \[M(f)=\begin{pmatrix}
 a &  0& 0 &b \\
  0& c & d & 0\\
 0 & w &  z& 0\\
  y&  0&  0&x
\end{pmatrix},\] then $\mathrm{Holant}(=_2|f)$ is \#P-hard except for the following cases:
    \begin{itemize}
        \item $f \in \left \langle Z\mathcal{M} \right \rangle$ or $\in \left \langle ZX\mathcal{M} \right \rangle$;
        \item $f$ is $ \mathcal{A}\text{-}$transformable;
        \item $f$ is $ \mathcal{L}\text{-}$transformable;
        \item $f$ is $ \mathcal{P}\text{-}$transformable;
        \item $f\in\mathcal{H}$;
        \item $f \in \left \langle \mathcal{T} \right \rangle$,
    \end{itemize}
    in which cases the problem is computable in polynomial time.
\end{theorem}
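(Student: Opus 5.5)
We assemble the lemmas of this section and organize the proof by the zero pattern of the four pairs of $f$ (the outer pair $(a,x)$ and the inner pairs $(b,y),(c,z),(d,w)$). Throughout, Theorem~\ref{dicwithdegen} handles every branch in which a nonzero degenerate binary signature is constructed; that theorem yields exactly the listed dichotomy. So we may assume, as stated after Theorem~\ref{dicwithdegen}, that every constructible binary signature is nondegenerate and, by Corollary~\ref{polytodegen}, has finite projective order. Whenever we reach a reduction from $\mathrm{CSP}^2(f')$ with $f'=(T^{-1})^{\otimes4}f$, Theorem~\ref{CSP2dich} gives hardness unless $f'\in\mathcal{P},\mathcal{A},\mathcal{A}_2^1$ or $\mathcal{L}$. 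This translates into $f$ being $\mathcal{P}$-, $\mathcal{A}$- or $\mathcal{L}$-transformable, the case $\mathcal{A}_2^1$ being absorbed into $\mathcal{A}$-transformability by composing $T$ with $T_\rho$. The tractability direction is already known from the cited results.

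\emph{Step 1: degenerate zero patterns.} If $f\equiv0$ the problem is trivial. If every pair contains a zero, Lemma~\ref{zeroineachpaireight} gives a degenerate binary signature, and we are done. If $a=x=0$, Lemma~\ref{a=x=0} is exactly the claimed dichotomy. So we may assume $(a,x)\ne(0,0)$. If $f$ has the special $\epsilon$-form of Lemma~\ref{atleastonenonzeroin(a,x)}, then Lemma~\ref{=+2and=-2} applies with $\mathcal{F}=\emptyset$. Its outcomes are hardness, $f\in\mathcal{A}$, $f\in\langle\mathcal{T}\rangle$, or a $\mathrm{CSP}^2$ reduction, and the last is handled as described above.

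\emph{Step 2: obtaining $(1,0,0,t)$ signatures.} Otherwise Lemma~\ref{atleastonenonzeroin(a,x)} produces $[1,0,t]$ with $t\ne0,\pm1$ (or a degenerate signature). Since $[1,0,t]$ is nondegenerate with finite order, $t$ is a root of unity different from $\pm1$, so $t$ has order at least $3$. Its powers $t,t^2,t^3,\dots$ then give at least three distinct binary signatures $(1,0,0,t^k)$, obtained by chains. This meets the hypothesis of Lemmas~\ref{mobius}, \ref{onezeropair} and \ref{allnonzero}.

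\emph{Step 3: case split on nonzero pairs.} If all four pairs are entrywise nonzero, Lemma~\ref{allnonzero} gives hardness, a degenerate signature, or a $\mathrm{CSP}^2$ reduction. If some pair has exactly one zero, contracting the complementary inputs with $=_2$ or with a suitable $(1,0,0,t^k)$ gives a degenerate binary signature (as in the first lines of Lemma~\ref{a=x=0}), using that at most one of the three parameters kills the contraction. So every pair is either fully zero or fully nonzero, and at least one pair is nonzero.
\begin{itemize}
\item One inner pair is zero. After permuting inputs this is the setting of Lemma~\ref{onezeropair}, which gives hardness or a degenerate signature.
\item Two inner pairs are zero. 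After permuting inputs the remaining nonzero block is either $\begin{psmallmatrix}a&b\\y&x\end{psmallmatrix}$ or the outer pair with one inner pair in diagonal positions. If the block has full rank, Lemma~\ref{g=_4} or Lemma~\ref{twozeropair} supplies $=_4$, and Lemma~\ref{toCSP2} with Theorem~\ref{CSP2dich} finishes. If the block has rank at most one, $f\in\langle\mathcal{T}\rangle$.
\item Only the outer pair is nonzero. Then $f$ is a generalized equality, and Lemma~\ref{g=_4} again gives $=_4$ (or $f\in\langle\mathcal{T}\rangle$).
\end{itemize}

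\emph{Main obstacle.} The hard part is the bookkeeping that matches the "CSP$^2$ after transformation $T$" outcomes of Lemmas~\ref{=+2and=-2} and \ref{allnonzero} with the exact tractable classes listed in the theorem. In particular one must check that $\mathcal{A}_2^1$ and $\mathcal{L}$ after $T$ correspond to $\mathcal{A}$-/$\mathcal{L}$-transformability relative to $=_2$, which requires $(=_2)T$ to land in the right class. We would verify this by tracking that $T$ in those lemmas is $Z$, $H$, or a diagonal modification thereof, each of which maps $=_2$ into $\mathcal{A}$.
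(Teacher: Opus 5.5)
Your decomposition follows the paper's proof closely: the split on $a=x=0$, the $\epsilon$-form via Lemma~\ref{=+2and=-2}, three generalized equalities from Lemma~\ref{atleastonenonzeroin(a,x)}, then Lemmas~\ref{onezeropair}, \ref{allnonzero} and \ref{g=_4} according to the zero pattern. Your root-of-unity argument in Step~2 usefully makes explicit what the paper only asserts.

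\textbf{The gap is in Step~3.} You claim that a pair with exactly one zero gives a degenerate binary signature by direct contraction, ``as in the first lines of Lemma~\ref{a=x=0}''. That worked there only because $a=x=0$. Once $(a,x)\neq(0,0)$, the outer pair enters every such contraction:
\begin{itemize}
\item contracting inputs $3,4$ with $(1,0,0,s)$ gives $(a+sb,0,0,y+sx)$;
\item contracting inputs $1,2$ gives $(a+sy,0,0,b+sx)$.
\end{itemize}
For example, take $a=x=1$, $b=0$, $y=2$. These become $(1,0,0,2+s)$ and $(1+2s,0,0,s)$, which are nondegenerate for every $s$ of modulus one. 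So none of your three parameters yields a degenerate signature directly. The same failure occurs when the outer pair itself has exactly one zero and all inner pairs are fully nonzero; e.g.\ $a=0$ gives $(sb,0,0,y+sx)$.

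\textbf{The missing ingredient is the M\"obius argument of Lemma~\ref{mobius},} which the paper uses in both Case~2 and Case~3.
\begin{itemize}
\item After permuting inputs, pair the outer pair with an inner pair so that the block $\begin{psmallmatrix}a&b\\y&x\end{psmallmatrix}$ is invertible with exactly three nonzero entries. This is always possible here, since the case where every pair contains a zero was already removed in your Step~1.
\item Such a block is not diagonal, not antidiagonal, and not of the unit-circle-preserving form $\begin{psmallmatrix}e^{\frak{i}\theta}&e^{\frak{i}\theta}\lambda\\\overline{\lambda}&1\end{psmallmatrix}$, because each of these has two or four nonzero entries.
\item Hence the map $s\mapsto (y+sx)/(a+sb)$ cannot keep all three unit-modulus parameters on the unit circle. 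Either it kills an entry, giving a degenerate signature at once, or it produces $(1,0,0,s')$ with $|s'|\neq1$.
\item Such an $(1,0,0,s')$ has infinite order, and Corollary~\ref{polytodegen} then supplies the degenerate signature. In the example above, $s=1$ already gives $(1,0,0,3)$.
\end{itemize}
With this replacement, the rest of your case analysis goes through as in the paper.
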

\begin{proof}
We consider the following three cases:
\begin{caselist}
    \caseitem{1}  $a=x=0$. The result follows from Lemma~\ref{a=x=0}.
\end{caselist}

In the following cases there is at
least one nonzero entry in $(a, x)$. By Lemma~\ref{atleastonenonzeroin(a,x)} we assume that we can construct at least three distinct binary signatures of the form $(1,0,0,t)$. Otherwise, either we can construct a degenerate binary signature, or the result follows from Lemma~\ref{=+2and=-2}.
\begin{caselist}
    \caseitem{2}  $a=0$ or $x=0$.

    If there exists a nonzero pair, by Lemma~\ref{mobius}, we can construct a degenerate binary signature. Then there exists at least one zero in each pair, by Lemma~\ref{zeroineachpaireight}, we construct a degenerate binary signature or $f\equiv0$, which is a contradiction.
    \caseitem{3}  $ax\ne0$.

    If there exists a pair with exactly one zero, by Lemma~\ref{mobius}, we can construct a degenerate binary signature. Then any pair is nonzero or entirely zero.

    If there exist three zero pairs, then $f\in \mathcal{P}$.

    If there exist two zero pairs, then by Lemma~\ref{g=_4}, we construct $=_4$ or $f\in \left \langle \mathcal{T} \right \rangle$

    If there exists one zero pair, the result follows from Lemma~\ref{onezeropair}.

    If any pair is nonzero, the result follows from Lemma~\ref{allnonzero}.\qedhere

\end{caselist}
\end{proof}
\begin{lemma}\label{haveeightvertex}
    Let $f^*$ be a 4-ary signature with the signature matrix \[M(f^*)=\begin{pmatrix}
 a &  0& 0 &b \\
  0& c & d & 0\\
 0 & w &  z& 0\\
  y&  0&  0&x
\end{pmatrix}\] and suppose that $f^*$ is not in $\left \langle \mathcal{T} \right \rangle$; let $f$ be a 4-ary signature and $h=(0,1,-1,0)$ be a binary signature. Then
we have
    \begin{itemize}
        \item Either $\mathrm{Holant}(=_2|f^*,f,h)$ is \#P-hard;
        \item or we can construct a degenerate binary signature in $\mathrm{Holant}(=_2|f^*,f,h)$;
        \item or $f^*\in \mathcal{A}$;

        \item or there exists $T\in \mathrm{SL}(2,\mathbb{C})$ such that
        $
\mathrm{CSP}^2(f^{*'},f',h')\le_T\mathrm{Holant}(=_2|f^*,f,h),
        $
        where $f^{*'}=T^{\otimes4}f^*$, $f'=T^{\otimes4}f$ and $h'=T^{\otimes2}h$.
    \end{itemize}
\end{lemma}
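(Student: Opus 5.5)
The plan is to reuse the proof of Theorem~\ref{eightvertex=2}, run inside the larger problem $\mathrm{Holant}(=_2|f^*,f,h)$. We carry the extra signatures $f$ and $h$ as the ``$\mathcal{F}$'' in the lemmas of Section~5; all of those lemmas were stated for an arbitrary set $\mathcal{F}$. The only new ingredient is $h=(0,1,-1,0)$. Its matrix is $\begin{psmallmatrix}0&1\\-1&0\end{psmallmatrix}$, which lies in $\mathrm{SL}(2,\mathbb{C})$ and has order $4$. Applying $h$ to an input of $f^*$ swaps the outer pair and the inner pairs up to sign. This keeps eight-vertex form, and it lets us freely use the symmetries $(a,x,b,y)\mapsto(b,-y,\dots)$ that the earlier case analysis used ``by symmetry''.

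\textbf{Step 1 (no nonzero outer pair).} Suppose $a=x=0$. Apply $h$ to one input of $f^*$. This produces a signature in eight-vertex form whose outer pair is nonzero, unless $(b,y)$ is also zero. If every pair containing a $0000$-position is zero, then two zero pairs remain. Lemma~\ref{twozeropair} then gives $=_4$, or else $f^*\in\langle\mathcal T\rangle$, which is excluded. In the first case Lemma~\ref{toCSP2} gives the $\mathrm{CSP}^2$ conclusion with $T=I$. Otherwise we may assume $(a,x)\neq(0,0)$.

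\textbf{Step 2 (the symmetric case).} Now invoke Lemma~\ref{atleastonenonzeroin(a,x)}. It either gives a degenerate binary signature, or gives $[1,0,t]$ with $t\ne 0,\pm1$, or leaves the special form with $\epsilon=\pm1$. In the special form, Lemma~\ref{=+2and=-2} with $\mathcal{F}=\{f,h\}$ gives one of four outcomes: \#P-hardness, $f^*\in\mathcal{A}$, $f^*\in\langle\mathcal T\rangle$ (excluded), or a $T$ with $\mathrm{CSP}^2(T^{-1}\{f^*,f,h\})\le_T\cdots$. Replacing $T$ by $T^{-1}$ matches the statement's convention $f'=T^{\otimes4}f$.

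\textbf{Step 3 (the generic case).} If we obtained $[1,0,t]$, then chains of it and of diagonal loops of $f^*$ give at least three distinct $(1,0,0,t_i)$. We then follow Cases~2 and~3 of Theorem~\ref{eightvertex=2}, using Lemma~\ref{mobius}, Lemma~\ref{zeroineachpaireight}, Lemma~\ref{g=_4}, Lemma~\ref{onezeropair} and Lemma~\ref{allnonzero}, each with $\mathcal{F}=\{f,h\}$. Each outcome lands in one of the allowed alternatives. A degenerate binary signature stays degenerate. Obtaining $=_4$ gives $\mathrm{CSP}^2$ of everything via Lemma~\ref{toCSP2}. The three zero-pair subcase gives $f^*\in\mathcal P\subseteq\langle\mathcal T\rangle$, which is excluded, and so does the two-zero-pair subcase with rank one. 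The reductions of Lemma~\ref{allnonzero} and Lemma~\ref{specialg} already carry $\mathcal F$ through a holographic $T$. In Lemma~\ref{specialg} there is an extra unary-like factor $[1,\tfrac{1+t}{1-t},1]$ on the $\mathrm{CSP}^2$ side. It can simply be dropped, since dropping signatures only weakens the reduced problem.

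\textbf{Main obstacle.} The delicate point is Step~1 together with the ``by symmetry'' steps. In Theorem~\ref{eightvertex=2} the case $a=x=0$ was handled by Lemma~\ref{a=x=0}, which proves a full dichotomy only for $f$ alone, not for $\mathcal{F}$ plus $f$. So here the outer pair must genuinely be moved using $h$, and we must check that the sign changes introduced by $h$ do not break the hypotheses of Lemma~\ref{mobius} and Lemma~\ref{four1}. Concretely, $h$ negates one entry of a pair. We would verify pair by pair that this preserves nonvanishing and maps the ``exactly one zero'' condition to itself. If an orbit under $h$ cannot make the outer pair nonzero, we fall back on the $byczdw\ne0$ argument inside Lemma~\ref{a=x=0}. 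That argument uses only squares of chains, $=_4$ and Lemma~\ref{=+2and=-2}, all of which tolerate extra signatures.
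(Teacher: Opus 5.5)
Your plan is the paper's plan: keep $f^*,f,h$ in the auxiliary set, use $h$ to bring a nonzero pair into the outer position, and then run the Section~5 lemmas. When you replace $f^*$ by the modified signature, you should note, as the paper does, that $\mathcal{A}$-membership and non-membership in $\langle\mathcal{T}\rangle$ transfer, because $h\in\mathcal{A}$ is invertible. Beyond that, two steps fail as written.

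\textbf{First gap: one copy of $h$ breaks eight-vertex form.} Since $h$ is supported on $01,10$, applying it to $x_4$ gives $f^{**}(x_1,x_2,x_3,x_4)=h(x_4,\overline{x_4})\,f^*(x_1,x_2,x_3,\overline{x_4})$. This is supported on odd-weight inputs; for instance $f^{**}_{0001}=-a$. You need two copies, as in the paper.
\begin{itemize}
\item With $h$ on $x_3,x_4$, the pair $(a,x)$ is exchanged with $(b,y)$, and $(c,z)$ with $(d,w)$ up to sign.
\item With $h$ on $\{x_2,x_4\}$ or $\{x_2,x_3\}$, the pair $(c,z)$ or $(d,w)$ becomes the outer pair.
\end{itemize}
So every pair can be moved outward. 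The fallback in your last paragraph, to the $byczdw\neq0$ argument of Lemma~\ref{a=x=0}, is therefore never needed.

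\textbf{Second gap: $\mathcal{P}\subseteq\langle\mathcal{T}\rangle$ is false.} Take $ax\neq0$ with the other three pairs zero. Then $f^*$ is the generalized equality $[a,0,0,0,x]$.
\begin{itemize}
\item It lies in $\mathcal{P}$ but not in $\langle\mathcal{T}\rangle$. In any tensor factorization with at least two blocks, nonzero values at $0000$ and $1111$ force a nonzero value at a mixed input.
\item In general it is not in $\mathcal{A}$ either.
\item Your flow really reaches this subcase. If $x\neq\pm a$, Lemma~\ref{atleastonenonzeroin(a,x)} supplies $[1,0,x/a]$ and sends you to Step~3.
\end{itemize}
Theorem~\ref{eightvertex=2} could stop at ``$f\in\mathcal{P}$'' because $\mathcal{P}$-transformability is a tractable outcome there. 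That is not one of the four alternatives of this lemma.

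The repair is the paper's Subcase~3.1. With at least two zero pairs, the remaining $2\times2$ block has full rank. With three zero pairs it is $\mathrm{diag}(a,x)$; with two, rank one would put $f^*$ in $\langle\mathcal{T}\rangle$. Hence Lemma~\ref{g=_4} yields $=_4$, and Lemma~\ref{toCSP2} gives the $\mathrm{CSP}^2$ alternative with $T=I$.

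With these two fixes your argument is essentially the paper's. The paper first disposes of ``a zero in every pair'' with Lemma~\ref{zeroineachpaireight}, then moves a pair with both entries nonzero outward, so it only ever treats $ax\neq0$. You additionally pass through the case where exactly one of $a,x$ vanishes, as in Theorem~\ref{eightvertex=2}, which is harmless.
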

\begin{proof}
We consider the following cases of $f^*$:
    \begin{caselist}
    \caseitem{1}  There is at least one zero in each pair of $f^*$.

    Note that $f^*\not\equiv 0$. By Lemma~\ref{zeroineachpaireight}, we construct a degenerate binary signature.
    \end{caselist}
    In the following cases, there is at least one nonzero pair. If the nonzero pair is not $\{f^*_{0000},f^*_{1111}\}$, without loss of generality, we assume that $f^*_{0011}f^*_{1100}\ne0$, by connecting two copies of $h$ to inputs $x_3$ and $x_4$ of $f^*$, we construct a new 4-ary signature $f^{**}$ with eight-vertex form.
    Moreover, $f^{**}_{0000}f^{**}_{1111}\ne0$. Note that $f^{**}$ has full rank if and only if $f^*$ has full rank, and $f^{**}\in\mathcal{A}$ if and only if $f^*\in\mathcal{A}$. Thus we can always assume that $f^{*}_{0000}f^{*}_{1111}\ne0$.
    \begin{caselist}
    \caseitem{2}  There exists one nonzero pair and one pair with exactly one zero in $f^*$.

    By Lemma~\ref{atleastonenonzeroin(a,x)} we construct $[1,0,t]$ with $t\ne\pm1$. If $t\ne0$, we can construct at least three distinct binary signatures of the form $[1,0,t_i]$, then by Lemma~\ref{mobius} we construct a degenerate binary signature. If $t=0$, then we already construct a degenerate binary signature.

    \caseitem{3}  Any pair of $f^*$ is either all zero or all nonzero.

    By Lemma~\ref{atleastonenonzeroin(a,x)}, we can construct $[1,0,t]$ with $t\ne\pm1$ unless $S(f^*_{=_2})=\{=_2\}$ or $S(f^*_{=_2})=\{[1,0,-1]\}$.
    If $S(f^*_{=_2})=\{=_2\}$ or $S(f^*_{=_2})=\{[1,0,-1]\}$, the result follows from Lemma~\ref{=+2and=-2}. If we can construct $[1,0,t]$, then
    we construct a degenerate binary signature or three distinct binary signatures of the form $(1,0,0,t_i)$. Thus we assume that we can construct three distinct binary signatures of the form $(1,0,0,t_i)$.
    \begin{caselist}
        \subcaseitem{3.1}  There exist at least two zero pairs. By Lemma~\ref{g=_4} we construct $=_4$, since $f^*\notin \left \langle \mathcal{T} \right \rangle$.
        \subcaseitem{3.2}  There exists exactly one zero pair. By Lemma~\ref{onezeropair}, Holant($=_2|f^*,f,h$) is \#P-hard or we can construct a degenerate binary signature.
        \subcaseitem{3.3}  All pairs are nonzero. The result follows from Lemma~\ref{allnonzero}.\qedhere
    \end{caselist}
\end{caselist}
\end{proof}

\section{Proof of the main theorem}
In this section, we give the main theorem, i.e., the computational complexity classification of Holant$(=_2|f)$, where $f$ is a 4-ary complex-valued signature.
We note that for two nondegenerate binary signatures $g$, $h$, by connecting input $x_2$ of $g$ to input $x_1$ of $h$, we construct a binary signature with the signature matrix $M(g)M(h)$, i.e., connecting two nondegenerate binary signatures in gadget construction corresponds to matrix multiplication in $\mathrm{SL}(2,\mathbb{C})$. The signature set constructed by the binary signatures $g_1, g_2, \cdots, g_n$ via chains is equivalent to the group closure of $M(g_1), M(g_2), \cdots, M(g_n)$ in  $\mathrm{SL}(2,\mathbb{C})$. In the following, we denote the closure generated by a binary signature set $S$ as $\left \langle S\right \rangle$.

The key strategy of the proof is the following construction.
\begin{definition}
For $\mathrm{ Holant}(=_2|f$), its binary subgroup sequence is constructed as follows:

 \[H^{(1)}_f=S(f_{=_2}),\]
 \[\mathbf{H}^{(1)}_f=\left \langle H^{(1)}_f\right \rangle,\]
 \[\vdots \]
 \[H^{(m+1)}_f=S(f_{\mathbf{H}^{(m)}_f}),\]
 \[\mathbf{H}^{(m)}_f=\left \langle H^{(1)}_f,H^{(2)}_f,...,H^{(m)}_f\right \rangle,\]
 \[\vdots\]
  where $\mathbf{H}^{(m)}_f=\left \langle H^{(1)}_f,H^{(2)}_f,...,H^{(m)}_f\right \rangle$ is the closure of $H^{(1)}_f,H^{(2)}_f,...,H^{(m)}_f$ for $m\ge 1$.
\end{definition}
 It is obvious that $H^{(1)}_f\subseteq H^{(2)}_f,H^{(2)}_f\subseteq H^{(3)}_f,\cdots,H^{(n)}_f\subseteq H^{(n+1)}_f,\cdots $. Further, $\mathbf{H}^{(m)}_f$ is closed under transposition since both $M(g)$ and $M(g)^T$ are the matrix forms of the binary signature $g$.
 \begin{definition}
     If there exists $m\in\mathbb{N}$, such that $H^{(m)}_f\subseteq \mathbf{H}^{(m-1)}_f$, we say that the sequence is stable at $\mathbf{H}^{(m)}_f$. Otherwise, we say that the sequence is unstable.
 \end{definition}

Note that $\mathbf{H}^{(m)}_f$ is a finitely generated periodic subgroup. By Theorem~\ref{Schur}(Schur's theorem), $\mathbf{H}^{(m)}_f$ is a finite subgroup of $\mathrm{SL}(2,\mathbb{C})$. Further, by Theorem~\ref{classification}, $\mathbf{H}^{(m)}_f$ is conjugate to $C_n,\ BD_{4n},\ BT_{24},\ BO_{48},$ or $BI_{120}$.

\begin{lemma}\label{normal}
    Let $\mathbf{H}^{(m)}_f=PGP^{-1}$, where $P\in \mathrm{SL}(2,\mathbb{C})$, $G\in \{C_n,\ BD_{4n},\ BT_{24},\ BO_{48},\ BI_{120}\}$. Let $S=(P^TP)^{-1}$, then $S$ belongs to the normalizer of $G$, i.e., $S\in \mathcal{N}_{\mathrm{SL}(2,\mathbb{C})}(G)$.
\end{lemma}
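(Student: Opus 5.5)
The plan is to reduce the statement to Lemma~\ref{norma}, which already says: if $H=PGP^{-1}$ and both $G$ and $H$ are closed under transposition, then $P^TP\in\mathcal{N}_{\mathrm{SL}(2,\mathbb{C})}(G)$. Since a normalizer is a subgroup, it then also contains $S=(P^TP)^{-1}$. The work therefore splits into two checks: the target $\mathbf{H}^{(m)}_f$ is closed under transposition, and each standard model $G\in\{C_n, BD_{4n}, BT_{24}, BO_{48}, BI_{120}\}$, in the explicit generators listed earlier, is closed under transposition.

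For $\mathbf{H}^{(m)}_f$ the check is already noted after the definition of the subgroup sequence. For every binary signature $g$ we build, $M(g)^T$ is also a matrix form of $g$ (read the inputs in the other order), so it also lies among the generators. Since $(AB)^T=B^TA^T$, the group generated by a transpose-closed generating set is transpose-closed. Thus $\mathbf{H}^{(m)}_f=(\mathbf{H}^{(m)}_f)^T$.

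For the standard models, $C_n$ consists of diagonal matrices and is trivially closed. For the other four I would avoid checking generators one by one and use a uniform identity. Let $B=\begin{psmallmatrix}0&1\\-1&0\end{psmallmatrix}$. For every $g\in\mathrm{SL}(2,\mathbb{C})$ one verifies directly that $BgB^{-1}=(g^{-1})^T$, that is, $g^T=Bg^{-1}B^{-1}$. By the generator lemma, $B$ is a generator of each of $BD_{4n}$, $BT_{24}$, $BO_{48}$ and $BI_{120}$. Hence for such $G$ we get $G^T=BG^{-1}B^{-1}=BGB^{-1}=G$, because $G$ is a group containing $B$.

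Combining the two checks, Lemma~\ref{norma} applied to $H=\mathbf{H}^{(m)}_f=PGP^{-1}$ gives $P^TP\in\mathcal{N}_{\mathrm{SL}(2,\mathbb{C})}(G)$, and inverting gives $S\in\mathcal{N}_{\mathrm{SL}(2,\mathbb{C})}(G)$.

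The only potential obstacle is the transpose-closure of $BT_{24}$ and $BO_{48}$, whose third generators are not symmetric. The identity $g^T=Bg^{-1}B^{-1}$ disposes of this, so the remaining risk is only the routine $2\times2$ verification of that identity.
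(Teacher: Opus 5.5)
Your proposal is correct and follows the paper's proof essentially verbatim. Both reduce to Lemma~\ref{norma}, note that $C_n$ consists of diagonal matrices, and handle the other four groups with the identity $BA^{-1}B^{-1}=A^T$ for $B=\begin{psmallmatrix}0&1\\-1&0\end{psmallmatrix}\in G$. You also write out two steps the paper leaves implicit: why $\mathbf{H}^{(m)}_f$ is closed under transposition, and why $P^TP\in\mathcal{N}_{\mathrm{SL}(2,\mathbb{C})}(G)$ gives $S=(P^TP)^{-1}\in\mathcal{N}_{\mathrm{SL}(2,\mathbb{C})}(G)$, since the normalizer is a subgroup.
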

\begin{proof}
By Lemma~\ref{norma}, we only need to prove that $G$ is closed under transposition.
$C_n$ is obviously closed under transposition. Note that $BD_{4n},\ BT_{24},\ BO_{48}$ and $BI_{120}$ all contain $\begin{pmatrix}
 0 &1 \\
 -1 &0
\end{pmatrix}$. For any matrix $A=\begin{pmatrix}
 a &b \\
 c &d
\end{pmatrix}\in \mathrm{SL}(2,\mathbb{C})$, $\begin{pmatrix}
 0 &1 \\
 -1 &0
\end{pmatrix}A^{-1}\begin{pmatrix}
 0 &1 \\
 -1 &0
\end{pmatrix}^{-1}=\begin{pmatrix}
 0 &1 \\
 -1 &0
\end{pmatrix}\begin{pmatrix}
 d &-b \\
 -c &a
\end{pmatrix}\begin{pmatrix}
 0 &-1 \\
 1 &0
\end{pmatrix}=\begin{pmatrix}
 a &c \\
 b &d
\end{pmatrix}=A^T$. This implies that any group containing $\begin{pmatrix}
 0 &1 \\
 -1 &0
\end{pmatrix}$ is closed under transposition.
\end{proof}
\subsection{Unstable subgroup sequences}

\begin{lemma}\label{notstable}
    If the sequence is unstable, then for any $m\in\mathbb{N}$, $\mathbf{H}^{(m)}_f$ is one of the following cases (up to conjugacy):
    \begin{itemize}
        \item $\left \{ \begin{pmatrix}
 \lambda & 0\\
 0 & \lambda ^{-1}
\end{pmatrix} \middle|\lambda\in R_n \right \} $;
\item $\left \{ \begin{pmatrix}
 \lambda & 0\\
 0 & \lambda ^{-1}
\end{pmatrix} ,\begin{pmatrix}
 0 & \lambda\\
 -\lambda ^{-1} & 0
\end{pmatrix}\middle|\lambda\in R_n \right \}\ $,
    \end{itemize}
    where $R_n$ is a group generated by polynomially many distinct roots of unity.
\end{lemma}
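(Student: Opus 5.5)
Since the sequence is unstable, for every $m$ we have $H^{(m+1)}_f\not\subseteq\mathbf{H}^{(m)}_f$. Hence $\mathbf{H}^{(m)}_f\subsetneq\mathbf{H}^{(m+1)}_f$ for all $m$, and the orders $|\mathbf{H}^{(m)}_f|$ strictly increase and tend to infinity. By the standing assumption every constructible binary signature is nondegenerate of finite order. Schur's theorem (Theorem~\ref{Schur}) then makes each $\mathbf{H}^{(m)}_f$ finite, and Theorem~\ref{classification} makes it conjugate to one of $C_n$, $BD_{4n}$, $BT_{24}$, $BO_{48}$, $BI_{120}$.

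The plan is to exclude the three exceptional groups first. No finite subgroup of $\mathrm{SL}(2,\mathbb{C})$ properly contains $BO_{48}$ or $BI_{120}$, as used in the proof of Lemma~\ref{normalizer}. The only proper finite overgroup of a conjugate of $BT_{24}$ is a conjugate of $BO_{48}$. So if some $\mathbf{H}^{(m)}_f$ were conjugate to $BT_{24}$, $BO_{48}$ or $BI_{120}$, then within at most two more steps the chain would have to stop growing, contradicting instability. The same argument excludes $BD_8$: its finite overgroups are $BT_{24}$, $BO_{48}$ and $BD_{8k}$, and the $BD$ case is handled below. Hence for all sufficiently large $m$, and by monotonicity for all $m$, $\mathbf{H}^{(m)}_f$ is conjugate to a cyclic or binary dihedral group. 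For $BD$ a subgroup of a larger $BD$ or cyclic group is again cyclic or dihedral, so a small initial exceptional group is also impossible.

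Next I will make the conjugating matrix uniform in $m$. Write $\mathbf{H}^{(m)}_f=P_mG_mP_m^{-1}$. For large $m$, $G_m$ is $C_n$ with $n\ge3$ or $BD_{4n}$ with $n\ge3$. The subgroup of diagonal matrices in such a group is characteristic, and the group is conjugate to one containing it. So the unique maximal cyclic subgroup determines a pair of eigenlines. Because the chain is increasing and these cyclic subgroups are nested with order at least $3$, their eigenline pairs coincide. One conjugating $P$ (chosen for a large $m_0$) therefore diagonalizes the cyclic parts of all $\mathbf{H}^{(m)}_f$, including the earlier ones, which are subgroups of it.

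After conjugating by $P$, every element is either diagonal $\mathrm{diag}(\lambda,\lambda^{-1})$ or antidiagonal $\begin{pmatrix}0&\mu\\-\mu^{-1}&0\end{pmatrix}$. Here I use Lemma~\ref{normalizer}: elements normalizing the diagonal cyclic part lie in that set. Finiteness forces $\lambda,\mu$ to be roots of unity. The set $R_n$ of diagonal parameters is a finite group of roots of unity, and it is generated by the polynomially many entries of the constructed generators $H^{(1)}_f,\dots,H^{(m)}_f$. The antidiagonal elements form a coset $A\cdot\mathrm{diag}$. After a further diagonal conjugation, which commutes with the diagonal part, one representative becomes $\begin{pmatrix}0&1\\-1&0\end{pmatrix}$, and the coset then equals $\{\begin{pmatrix}0&\lambda\\-\lambda^{-1}&0\end{pmatrix}\}$. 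This would require $\mu^2\in R_n$, which holds since products of two antidiagonal elements lie in the diagonal part.

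The main obstacle is the uniformity of the conjugation across all $m$ together with the exclusion of $BD_8$. Its cyclic subgroups of order $4$ are not unique, so the characteristic subgroup argument fails there. I would treat that case directly by the overgroup argument: an infinite strictly increasing chain starting from $BD_8$ must eventually pass into $BD_{4n}$ with $n\ge3$, and the earlier small terms then inherit its diagonal frame.
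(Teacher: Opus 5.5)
Your proof is correct, and its core is the paper's. Instability makes $|\mathbf{H}^{(m)}_f|$ strictly increasing. Schur's theorem and Theorem~\ref{classification} make each term conjugate to one of the five types, and only $C_n$ and $BD_{4n}$ have unbounded order.

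The paper stops there. Its conjugating matrix may depend on $m$, and it treats the early terms and the description of $R_n$ only implicitly. You go further in two useful ways:
\begin{itemize}
\item You get the early terms as subgroups of a large cyclic or binary dihedral term. This alone makes the separate exclusion of $BT_{24}$, $BO_{48}$, $BI_{120}$ unnecessary.
\item You fix one frame for all $m$. You use the characteristic cyclic subgroup of order greater than $4$, then Lemma~\ref{normalizer}, then a single diagonal conjugation chosen at the first term containing an antidiagonal element.
\end{itemize}
The common frame matters downstream. Lemma~\ref{infsettodegen} goes through Lemma~\ref{infinitetodegen}, which needs signatures of the form $P\operatorname{diag}(1,t)P^{-1}$ with one fixed $P$.

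There are three slips, none fatal:
\begin{itemize}
\item $BT_{24}$, and hence $BD_8$, is also contained in $BI_{120}$, so your overgroup lists are incomplete. This is harmless because $BI_{120}$ is maximal.
\item A single $P$ does not diagonalize the cyclic part of every early term. An early term can be the order-$4$ group generated by an antidiagonal element. What you actually use, correctly, in the next paragraph is that early terms lie in the diagonal-or-antidiagonal set of a later term.
\item After normalizing one antidiagonal element to $B=\begin{pmatrix}0&1\\-1&0\end{pmatrix}$, the condition needed is $\mu\in R_n$, not $\mu^2\in R_n$. It holds because $B^{-1}\begin{pmatrix}0&\mu\\-\mu^{-1}&0\end{pmatrix}=\operatorname{diag}(\mu^{-1},\mu)$ lies in the group.
\end{itemize}
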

\begin{proof}
    From $\mathbf{H}^{(m-1)}_{f}$ to $\mathbf{H}^{(m)}_{f}$, we adjoin a number of generators not contained in $\mathbf{H}^{(m-1)}_{f}$. Each such adjunction ensures that the order of $\mathbf{H}^{(m)}_{f}$ is strictly greater than that of $\mathbf{H}^{(m-1)}_{f}$. Furthermore, since every $\mathbf{H}^{(m)}_{f}$ is a finite subgroup of $\mathrm{SL}(2,\mathbb{C})$, and among all finite subgroups of $\mathrm{SL}(2,\mathbb{C})$, only $C_n$ and $BD_{4n}$ can have arbitrarily large orders. Therefore, $\mathbf{H}^{(m)}_{f}$ must be conjugate to $C_n$ or $BD_{4n}$.
\end{proof}

\begin{lemma}\label{infsettodegen}
  For $\mathrm{Holant}(=_2|f)$, if its binary subgroup sequence is unstable, then $\mathrm{Holant}(=_2|f)$ is \#P-hard except for the following cases:
\begin{itemize}
        \item $f \in \left \langle Z\mathcal{M} \right \rangle$ or $\in \left \langle ZX\mathcal{M} \right \rangle$;
        \item $f $ is $ \mathcal{A}\text{-}$transformable;
        \item $f$ is $ \mathcal{P}\text{-}$transformable;
        \item $f $ is $ \mathcal{L}\text{-}$transformable;
        \item $f\in\mathcal{H}$;
        \item $f \in \left \langle \mathcal{T} \right \rangle$,
    \end{itemize}
    in which cases $\mathrm{Holant}(=_2|f)$ is computable in polynomial time.
\end{lemma}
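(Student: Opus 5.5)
We aim to show that an unstable sequence either produces a binary signature usable for interpolation (so Theorem~\ref{dicwithdegen} applies), or forces $f$ into eight-vertex form after a holographic transformation (so Theorem~\ref{eightvertex=2} or Lemma~\ref{haveeightvertex} applies). By Lemma~\ref{notstable}, for every $m$ there is $P_m\in\mathrm{SL}(2,\mathbb{C})$ with $\mathbf{H}^{(m)}_f=P_mG_mP_m^{-1}$, where $G_m$ is a diagonal cyclic group or a binary dihedral group, and $|G_m|\to\infty$. Once $|G_m|\ge 3$ (resp. $\ge 12$), the conjugating torus is determined by the cyclic part: the diagonal subgroup of order at least $3$ fixes the pair of eigenlines. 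Since $\mathbf{H}^{(m)}_f\subseteq\mathbf{H}^{(m+1)}_f$, these eigenlines are the same for all large $m$. So we fix a single $P$ with $\mathbf{H}^{(m)}_f\subseteq P\,\mathcal{N}(C_n)\,P^{-1}$ for all $m\ge m_0$. By Lemma~\ref{normal}, $S=(P^TP)^{-1}$ lies in $\mathcal{N}_{\mathrm{SL}(2,\mathbb{C})}(C_n)$, so $P^TP$ is diagonal or antidiagonal. After adjusting $P$ by a diagonal factor, we may take $P^TP\in\{I,\begin{psmallmatrix}0&1\\1&0\end{psmallmatrix}\}$ up to scalar. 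Concretely, $P$ is orthogonal up to scalar, or $P$ equals $Z$ composed with a diagonal matrix.

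The next step is a holographic transformation. In the orthogonal case, transforming by $P$ keeps $=_2$ and makes every constructible binary signature have matrix of the form $\mathrm{diag}(\lambda,\lambda^{-1})$ or $\begin{psmallmatrix}0&\lambda\\-\lambda^{-1}&0\end{psmallmatrix}$. In the $Z$ case we obtain $\mathrm{Holant}(\ne_2|f')$, and after accounting for the $\ne_2$ on the left the binary signatures again become diagonal or antidiagonal, with $\lambda$ ranging over a group $R_n$ of unbounded order. Because the sequence never stabilizes, for every $N$ we obtain diagonal binaries $(1,0,0,\lambda^2)$ with $\lambda$ a root of unity of order greater than $N$. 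Taking $N$ larger than the polynomial bound needed in the proof of Lemma~\ref{polyatdia} gives polynomially many distinct values $t_i=\lambda^{2i}$. With these we interpolate every diagonal binary $(1,0,0,s)$, $s\in\mathbb{C}$: the Vandermonde matrix in the $t_i$ is nonsingular because the $t_i$ are distinct. Lemma~\ref{infinitetodegen} then yields a degenerate binary, and Theorem~\ref{dicwithdegen} finishes the proof.

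The main obstacle is justifying this interpolation. The $m$ copies of $h'$ require $m+1$ distinct powers, so we need an element of order greater than $m+1$ that is constructible in polynomial size. I would handle this with the instability itself. Each level $H^{(k+1)}_f$ is obtained by looping $f$ with elements of $\mathbf{H}^{(k)}_f$, and every group element is a product of at most $|G|$ generators. This suggests the element of order at least $m+1$ appears by level $O(m)$, with total gadget size polynomial in $m$; this bound needs checking.

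If that bound fails, the fallback is the linear-algebra route used in the stable case. Looping $f$ with the diagonal elements $(1,0,0,\mu)$ for infinitely many $\mu$ lands in the diagonal or antidiagonal family. This forces the off-pattern entries of $f'$ to satisfy polynomial identities in $\mu$, hence to vanish, so $f'$ has eight-vertex form. Theorem~\ref{eightvertex=2} then gives the stated list of tractable cases, each of which transfers back through $P$.
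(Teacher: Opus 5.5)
Your main route is the paper's own proof: Lemma~\ref{notstable}, then polynomially many distinct binaries $P\,\mathrm{diag}(1,t)P^{-1}$, then Lemma~\ref{infinitetodegen} and Theorem~\ref{dicwithdegen}. Your remark that the eigenlines stabilize, so that one $P$ serves all large levels, fills in a point the paper leaves implicit. The step you flag, however, is a real gap, and your suggested repair does not close it.

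Every constructible binary has finite order, so chains of one fixed gadget give only boundedly many parameters. For each $m$ you therefore need $m+1$ distinct diagonal parameters realized by gadgets of size polynomial in $m$. A new generator at level $k+1$ is a loop of $f$ around an element of $\mathbf{H}^{(k)}_f$, and that element is itself a word in earlier generators. Sizes therefore multiply across levels: the ``at most $|G|$ generators'' count only gives $s_{k+1}\le|\mathbf{H}^{(k+1)}_f|\,(1+s_k)$. Orders at least double from level to level, since a proper subgroup has index at least $2$, so $O(\log m)$ levels suffice. Even then this recursion yields only a quasi-polynomial bound of order $m^{O(\log m)}$, and with the $O(m)$ levels you propose it is exponential. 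The paper's proof asserts the hypothesis of Lemma~\ref{infinitetodegen} directly from Lemma~\ref{notstable}, so it does not supply the missing bound either.

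Your fallback is the argument that actually works, and it sidesteps the size question entirely. Fix one large level $k$ and three distinct diagonal elements of $\mathbf{H}^{(k)}_f$. Their loops lie in $\mathbf{H}^{(k+1)}_f$, hence are diagonal or antidiagonal after your transformation, so a constant set of gadgets suffices. This is the reasoning of Lemma~\ref{Cn} and Lemma~\ref{bd4n>3}, which uses only this containment and never stability.

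Your conclusion ``so $f'$ has eight-vertex form'' is, however, correct only when the groups are cyclic. In the binary dihedral case a loop may be antidiagonal. For each pair $\{i,j\}$ you then learn only that either every odd-weight entry with $x_i=x_j$ vanishes or every even-weight entry with $x_i=x_j$ vanishes; one alternative holds for two of your three parameters. The coinciding pairs of an odd-weight input form a triangle, which meets the coinciding pairs of every even-weight input. This forces support on even weight \emph{or} on odd weight, and the latter genuinely occurs: any $f'$ supported on odd weight passes every test. In that case you must modify one input by an antidiagonal element of the group to reach eight-vertex form, and then invoke Lemma~\ref{haveeightvertex}, as at the end of Lemma~\ref{bd4n>3}.

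Also, when $P^TP$ is antidiagonal the loops go through $\ne_2$ and the cyclic elements act as antidiagonal matrices. The relevant computation is then Lemma~\ref{solveeightvertexform}, plus the further transformation at the end of Lemma~\ref{dicbd4n} in the dihedral case. The closing step is Theorem~\ref{eightvertex}, as in Lemma~\ref{Cn}, Subcase~2.2, not Theorem~\ref{eightvertex=2}. With these repairs, making the fallback your main argument gives a complete proof.
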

\begin{proof}
    By Lemma~\ref{notstable}, we can construct polynomially many distinct binary signatures. Then by Lemma~\ref{infinitetodegen}, we have \[ \mathrm{Holant}(=_2|f,h)\equiv_T\mathrm{Holant}(=_2|f),\] where $h$ is a degenerate binary signature. Further, by Theorem~\ref{dicwithdegen} the result follows.
\end{proof}

\subsection{\texorpdfstring{Subgroup sequences stable at $\mathbf{H}^{(m)}_{f}$}{Subgroup sequences stable at H\_f\textasciicircum (m)}}
Write $\mathbb C^{\times}=\mathbb C\setminus\{0\}$. For a set $G$, write $\mathbb{C}{G}=\{\lambda g:\lambda\in\mathbb{C},\allowbreak \ g\in G\}$ and $\mathbb{C}^{\times}{G}=\{\lambda g:\lambda\in\mathbb{C}\setminus\{0\},\allowbreak \ g\in G\}$.  Throughout the stable-case analysis, an inclusion of loop images in a displayed family of matrix representatives is understood to permit arbitrary scalar multiples. Actual group equations, normalizers, orders, and conjugacy statements always concern determinant-one matrices.

\begin{lemma}\label{sym}
    Let $f$ be a 4-ary signature. If $S(f_{=_2})=\{[0,0,0]\}$, then $f$ is symmetric.
\end{lemma}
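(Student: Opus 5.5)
The plan is to turn the hypothesis into a set of linear equations on the $16$ entries of $f$ and solve them directly. For a pair $\{i,j\}$ of inputs, the loop $f_{=_2}^{ij}$ is the binary signature on the remaining inputs $x_k,x_\ell$ whose value on an assignment $(x_k,x_\ell)=(\alpha,\beta)$ is
\[
f\big|_{x_i=x_j=0,\,x_k=\alpha,\,x_\ell=\beta}+f\big|_{x_i=x_j=1,\,x_k=\alpha,\,x_\ell=\beta}.
\]
The hypothesis $S(f_{=_2})=\{[0,0,0]\}$ says this vanishes for every pair $\{i,j\}$ and every $(\alpha,\beta)$. Equivalently, we have the basic relation
\begin{equation}\label{eq:flipzero}
f_\sigma=-f_{\sigma'} \quad\text{whenever } \sigma' \text{ is obtained from } \sigma \text{ by changing two coordinates that are both } 0 \text{ into } 1.
\end{equation}
I will then propagate \eqref{eq:flipzero} through the four Hamming-weight layers, handling the even and odd weights separately.

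\textbf{Even weights.} Applying \eqref{eq:flipzero} to $\sigma=0000$ and each of the six pairs gives that every weight-$2$ entry equals $-f_{0000}$. Now take any weight-$2$ string, for instance $0011$, and flip its two zero coordinates. This gives $f_{1111}=-f_{0011}=f_{0000}$. So all entries of weights $0$, $2$ and $4$ are determined by $f_{0000}$, and they fit the pattern $[f_0,-f_0,\ast,\ast,f_0]$ in the sense that weight $0$ and weight $4$ agree and weight $2$ is constant.

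\textbf{Odd weights.} For a weight-$1$ string $e_k$ (the unit vector at coordinate $k$), flipping two of its three zeros gives a weight-$3$ string, namely the complement of some $e_m$ with $m\ne k$. Hence \eqref{eq:flipzero} yields $f_{e_k}=-f_{\overline{e_m}}$ for every $m\ne k$. I would regard this as a bipartite graph between the four weight-$1$ strings and the four weight-$3$ strings, with $e_k$ joined to $\overline{e_m}$ exactly when $k\ne m$. This is $K_{4,4}$ minus a perfect matching, which is connected. Along each edge the value changes sign, so every weight-$1$ entry takes one common value and every weight-$3$ entry takes its negative. For a concrete check: $f_{0001}=-f_{1101}$ and $f_{0010}=-f_{1101}$ (in the second, flip coordinates $1,2$ of $0010$), so $f_{0001}=f_{0010}$, and similarly for the other weight-$1$ entries.

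Combining the two parts, $f$ depends only on Hamming weight, and in fact $f=[u,v,-u,-v,u]$ for some $u,v$. So $f$ is symmetric. There is no real obstacle in this argument. The only point needing care is the bookkeeping for the odd layer, namely checking that the relations connect all weight-$1$ and weight-$3$ entries. The connectivity of $K_{4,4}$ minus a matching settles that at once.
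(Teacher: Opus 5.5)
Your proof is correct and follows the same route as the paper. The paper writes down the same system of relations $f_\sigma+f_{\sigma\oplus e_i\oplus e_j}=0$ coming from all six loops and simply asserts that it forces symmetry, whereas you actually solve the system (getting the sharper form $f=[u,v,-u,-v,u]$), and the $K_{4,4}$-minus-a-matching connectivity argument handles the odd layer cleanly. Two small slips, neither affecting the argument: $0010$ and $1101$ are complements (that pair is exactly the removed matching edge), and flipping coordinates $1,2$ of $0010$ gives $1110$, so your concrete check should go through a common neighbour, e.g.\ $f_{0001}=-f_{0111}=f_{0010}$. Also, the even-layer pattern should read $[f_0,\ast,-f_0,\ast,f_0]$.
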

\begin{proof}
Note that \[\begin{aligned}f(x_i,x_j,x_k,x_{\ell})^{ij}_{=_2}&=(f_{x_i=0,x_j=0,x_k=0,x_{\ell}=0}+f_{x_i=1,x_j=1,x_k=0,x_{\ell}=0},\\ & f_{x_i=0,x_j=0,x_k=0,x_{\ell}=1}+f_{x_i=1,x_j=1,x_k=0,x_{\ell}=1},\\ & f_{x_i=0,x_j=0,x_k=1,x_{\ell}=0}+f_{x_i=1,x_j=1,x_k=1,x_{\ell}=0},\\ & f_{x_i=0,x_j=0,x_k=1,x_{\ell}=1}+f_{x_i=1,x_j=1,x_k=1,x_{\ell}=1})\end{aligned}.\] By the permutation of $(i,j,k,\ell)$, we have the following system of linear equations:
    \[\begin{cases}
f_{0000}+f_{0011}=f_{0100} + f_{0111} = f_{1000} + f_{1011} = f_{1100}+f_{1111}=0 \\
f_{0000}+f_{0101}=f_{0010} + f_{0111} = f_{1000} + f_{1101} = f_{1010}+f_{1111}=0 \\
f_{0000}+f_{0110}=f_{0001} + f_{0111} = f_{1000} + f_{1110} = f_{1001}+f_{1111}=0 \\
f_{0000}+f_{1001}=f_{0010} + f_{1011} = f_{0100} + f_{1101} = f_{0110}+f_{1111}=0 \\
f_{0000}+f_{1010}=f_{0001} + f_{1011} = f_{0100} + f_{1110} =f_{0101}+f_{1111}= 0 \\
f_{0000}+f_{1100}=f_{0001} + f_{1101} = f_{0010} + f_{1110} = f_{0011}+f_{1111}=0
\end{cases}.\] Clearly, this implies that $f$ is symmetric.
\end{proof}
By Lemma~\ref{sym}, without loss of generality, we can always assume that  $f_{=_2}^{34}\not\equiv 0$, otherwise $f$ is symmetric and is already covered by a dichotomy.
\begin{lemma}
    Let $f$ be a 4-ary signature, $h$ be a binary signature whose matrix $M(h)$ has full rank and $h\in\mathcal{A}$, and apply a binary modification to an input of $f$ using $h$, to construct a new 4-ary signature $f^*$. Then $f^*\in\mathcal{A}$ or $f^*\in\left \langle \mathcal{T} \right \rangle$ if and only if $f\in\mathcal{A}$ or $f\in\left \langle \mathcal{T} \right \rangle$.
\end{lemma}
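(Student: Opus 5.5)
The plan is to show that binary modification by an invertible affine binary signature $h$ preserves each of the two classes $\mathcal{A}$ and $\langle\mathcal{T}\rangle$ separately. Because the modification is reversible, this gives the equivalence in both directions.

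Write the modification on input $x_i$ as $f^*=(I\otimes\cdots\otimes M\otimes\cdots\otimes I)f$, where $M=M(h)$ or $M(h)^T$ depending on which input of $h$ is attached. Since $h$ is symmetric under transposition of its matrix only up to the choice of input, both $M(h)$ and $M(h)^T$ are affine binary signatures, so we may simply say $M$ is an invertible affine $2\times2$ matrix. The key observation is that $M^{-1}$ is again (up to a nonzero scalar) the matrix of an affine binary signature. To see this, recall that an invertible affine binary signature has, up to scalar, one of two shapes:
\begin{itemize}
\item support $\{00,11\}$ or $\{01,10\}$ with entry ratio in $\{\pm1,\pm\frak{i}\}$, i.e. $\begin{psmallmatrix}1&0\\0&\frak{i}^k\end{psmallmatrix}$ or $\begin{psmallmatrix}0&1\\\frak{i}^k&0\end{psmallmatrix}$;
\item full support of the form $\frak{i}^{Q}$ with $Q=a_0+a_1x_1+a_2x_2+2bx_1x_2$, invertibility forcing $b=1$, so that the matrix is $D_1\begin{psmallmatrix}1&1\\1&-1\end{psmallmatrix}D_2$ with $D_1,D_2$ diagonal with powers of $\frak{i}$.
\end{itemize}
In each case the inverse has the same shape, as $H^{-1}$ is a scalar multiple of $H$. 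Hence $f=(\cdots M^{-1}\cdots)f^*$ is itself a binary modification of $f^*$ by an invertible affine signature. It therefore suffices to prove one direction: if $f\in\mathcal{A}$ (resp. $\langle\mathcal{T}\rangle$), then $f^*\in\mathcal{A}$ (resp. $\langle\mathcal{T}\rangle$).

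For $\langle\mathcal{T}\rangle$, this is the remark already made in the preliminaries. If $f$ is a tensor product of unary and binary signatures, possibly after permuting inputs, apply $M$ inside the factor containing $x_i$; the result is still unary or binary.

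For $\mathcal{A}$, I would use closure of affine signatures under gadget construction. The signature $f^*$ is obtained by connecting $f$ with the affine binary $h$ along one edge, i.e. summing over the shared variable. The affine class is closed under tensor product and under contracting a pair of variables (summation over a variable of a product of $\frak{i}^{Q}\chi$ forms stays in the form). This is the standard fact underlying tractability of $\#\mathrm{CSP}(\mathcal{A})$, so $f^*\in\mathcal{A}$.

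The only step needing care is this closure of $\mathcal{A}$ under contraction. If one prefers to avoid citing it, I would verify it directly for the three generator types of invertible affine binaries: diagonal $\begin{psmallmatrix}1&0\\0&\frak{i}\end{psmallmatrix}$, the flip $X$-type $\begin{psmallmatrix}0&1\\1&0\end{psmallmatrix}$, and $H$. Every invertible affine binary is a product of these up to scalar. The diagonal factor adds $x_i$ to $Q$, and the flip substitutes $x_i\mapsto 1+x_i$; both clearly preserve the affine form. For $H$, summing $(-1)^{x_iy}$ against $\chi\,\frak{i}^{Q}$ over $x_i$ is the quadratic Gauss-sum computation. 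It yields either a new linear constraint or a new quadratic form with even cross terms, which is exactly the usual Hadamard-invariance of $\mathcal{A}$; this case is the main obstacle, and I would handle it by splitting on whether $x_i$ is determined by the support constraints and on the parity of the coefficient of $x_i^2$ in $Q$.

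Combining the two closure statements with the symmetric inverse argument gives: $f^*\in\mathcal{A}$ or $f^*\in\langle\mathcal{T}\rangle$ if and only if $f\in\mathcal{A}$ or $f\in\langle\mathcal{T}\rangle$.
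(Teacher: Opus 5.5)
Your proof is correct and follows the same route as the paper's proof. The paper also recovers $f$ from $f^*$ by a binary modification with $h^{-1}$, notes that $h^{-1}$ is again affine, appeals to closure of $\mathcal{A}$ under gadget construction, and observes that an invertible modification preserves the tensor decomposition defining $\langle\mathcal{T}\rangle$. You additionally fill in what the paper only asserts, namely the explicit shapes of invertible affine binaries and why their inverses remain affine. One small fix: in your direct check, when you substitute into the $\mathbb{Z}_4$-valued $Q$, the flip must be written $x_i\mapsto 1-x_i$; $1+x_i$ is correct only modulo $2$.
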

\begin{proof}
    $f$ can be obtained by applying a binary modification to the same input of $f^*$ using $h^{-1}$. Note that $h\in\mathcal{A}$ if and only if $h^{-1}\in\mathcal{A}$.
    Moreover, any gadget formed from two affine signatures is affine. Thus $f^*\in\mathcal{A}$ if and only if $f\in\mathcal{A}$.

    $f\in\left \langle \mathcal{T} \right \rangle$ implies that it can be decomposed into a tensor product of signatures of arity less than 3. Clearly, a binary modification using $h$ preserves its tensor product structure since $M(h)$ has full rank.
\end{proof}
\subsubsection{\texorpdfstring{Conjugacy of $\mathbf{H}^{(m)}_{f}$ to $C_n$}{Conjugacy of H\_f\textasciicircum (m) to C\_n}}
\begin{lemma}\label{Cn}
     If $\mathbf{H}^{(m)}_{f}=PC_nP^{-1}$, then $\mathrm{Holant}(=_2|f)$ is \#P-hard except for the following cases:
\begin{itemize}
        \item $f \in \left \langle Z\mathcal{M} \right \rangle$ or $\in \left \langle ZX\mathcal{M} \right \rangle$;
        \item $f $ is $ \mathcal{A}\text{-}$transformable;
        \item $f$ is $ \mathcal{P}\text{-}$transformable;
        \item $f $ is $ \mathcal{L}\text{-}$transformable;
        \item $f\in\mathcal{H}$;
        \item $f \in \left \langle \mathcal{T} \right \rangle$,
    \end{itemize}
    in which cases $\mathrm{Holant}(=_2|f)$ is computable in polynomial time.
\end{lemma}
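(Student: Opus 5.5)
The strategy is to use stability of the sequence to pin down every loop signature of $f$, and then reduce to the eight-vertex case by a holographic transformation. Since the sequence is stable at $\mathbf{H}^{(m)}_f=PC_nP^{-1}$, every binary signature obtained by looping $f$ with an element of $\mathbf{H}^{(m)}_f$ lies, up to scalar, in $\mathbb{C}\,PC_nP^{-1}$.

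\textbf{Step 1: normalize the group.} By Lemma~\ref{normal}, $S=(P^TP)^{-1}$ lies in $\mathcal{N}_{\mathrm{SL}(2,\mathbb{C})}(C_n)$. For $n\ge3$, Lemma~\ref{normalizer} then says $S$ is diagonal or antidiagonal, so $P^TP$ has one of the forms $\mathrm{diag}(\alpha,\alpha^{-1})$ or $\begin{psmallmatrix}0&\beta\\-\beta^{-1}&0\end{psmallmatrix}$. The cases $n\le2$ are degenerate and handled separately: the loops are then scalar multiples of $\pm I$, so $S(f_{=_2})\subseteq\mathbb{C}\{I\}$ and everything reduces to the case of Lemma~\ref{=+2and=-2}.

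Since $P^TP$ is symmetric, the antidiagonal alternative is impossible unless $\beta^2=-1$. That exceptional shape matches $Z^TZ$, where $Z$ is the paper's transformation. So, up to replacing $P$ by $PD$ with $D$ diagonal (which commutes with $C_n$), I will show that either $P^TP=I$, i.e.\ $P$ is orthogonal, or $P$ is $Z$ up to such a diagonal factor. In the orthogonal case, a holographic transformation by $P$ fixes $=_2$ and makes all loops diagonal. In the $Z$ case, transforming by $Z$ turns $=_2$ into $\ne_2$ and the loops into the form $(0,*,*,0)$.

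\textbf{Step 2: read off the support of the transformed signature.} In the $Z$ case, set $f'=(Z^{-1})^{\otimes4}f$. Every $\ne_2$-loop of $f'$ is antidiagonal, i.e.\ in $\{(0,b,c,0)\}$, so Lemma~\ref{solveeightvertexform} gives that $f'$ has eight-vertex form. Theorem~\ref{eightvertex} then finishes via $\mathrm{Holant}(\ne_2|f')\equiv_T\mathrm{Holant}(=_2|f)$.

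In the orthogonal case, set $g=(P^{-1})^{\otimes4}f$. Every $=_2$-loop of $g$ is of the form $(*,0,0,*)$, and this holds simultaneously for all six pairs of inputs. The analogous linear-equation argument (an $=_2$ version of Lemma~\ref{solveeightvertexform}, solving the weight-1/weight-3 system as in Lemma~\ref{sym}) should force all odd-weight entries of $g$ to vanish. So $g$ has eight-vertex form, and Theorem~\ref{eightvertex=2} applies to $\mathrm{Holant}(=_2|g)\equiv_T\mathrm{Holant}(=_2|f)$. Orthogonal holographic transformations preserve membership in each listed tractable class, so the exceptions transfer back to $f$.

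\textbf{Main obstacle.} The delicate point is the $=_2$ analogue of Lemma~\ref{solveeightvertexform}. An $=_2$-loop being diagonal only gives equations of the form $f_{\cdots}+f_{\cdots}=0$, and these may not directly kill every odd entry. If they fall short, I would also use loops with the generator $\mathrm{diag}(\varepsilon,\varepsilon^{-1})$ (with $\varepsilon^2\ne1$), which stability likewise forces to be diagonal. This yields a second independent system, with coefficients $\varepsilon$ and $\varepsilon^{-1}$ in place of $1,1$. Combined with the $=_2$ system, the $2\times2$ coefficient determinant $\varepsilon-\varepsilon^{-1}\ne0$ should force each odd-weight entry to be $0$. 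Any leftover cases where only a degenerate binary signature appears would be sent to Theorem~\ref{dicwithdegen}.
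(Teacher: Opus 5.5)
\textbf{Gap: the case $n\le 2$.} Here $\mathbf H^{(m)}_f\subseteq\{\pm I\}$, so all you know is that every $=_2$-loop of $f$ is a scalar multiple of $I$. There is also no element with $\varepsilon^2\ne1$ available for your fallback. This is exactly your ``main obstacle''. Write $e_k$ for the weight-one input with a $1$ in position $k$, and $\bar e_k$ for its complement. Diagonality of the $=_2$-loops then gives only $f_{e_l}+f_{\bar e_k}=0$ for all $k\ne l$. That forces all weight-one entries to equal some $u$ and all weight-three entries to equal $-u$, with $u$ free. For instance, $f=[1,1,0,-1,1]$ has every $=_2$-loop equal to $I$, so its sequence is stable at $C_1$ or $C_2$, yet $f$ is not of eight-vertex form. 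Lemma~\ref{=+2and=-2} is stated only for signatures already of the displayed eight-vertex shape, so you cannot invoke it.

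\textbf{The repair.} Apply the argument from your own $Z$ case with $P=Z$, which is legitimate because $ZC_nZ^{-1}=C_n$ for $n\le2$. Since $Z^TZ=(\ne_2)$, transforming by $Z$ gives $\mathrm{Holant}(\ne_2|f')$ with $f'=(Z^{-1})^{\otimes4}f$. Each $\ne_2$-loop of $f'$ is $Z^{-1}(\lambda I)(Z^{-1})^T=\lambda(\ne_2)$. The equations of Lemma~\ref{solveeightvertexform} are then $f'_{e_i}+f'_{e_j}=0$ for all $i\ne j$, plus their weight-three analogues. Unlike the $=_2$ system, these contain odd cycles, so they have only the zero solution. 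Hence $f'$ has eight-vertex form, and Theorem~\ref{eightvertex} finishes. This is the paper's Case~1.

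\textbf{The case $n\ge3$.} Here your plan coincides with the paper's Subcases~2.1 and~2.2, with two remarks. First, the fallback in the orthogonal case is not optional. By the computation above, the $=_2$ system alone never kills $u$; the second system, from a loop with $[1,0,t]$ and $t\ne1$, is what does, exactly as in the paper. Second, $P^TP\propto(\ne_2)$ only gives $P=\lambda OZ$ with $O$ complex orthogonal. Equivalently, $P=\lambda ZQ$ with $Q=Z^{-1}OZ$ either diagonal or antidiagonal. Both kinds of $Q$ normalize $C_n$, so your normalization works once you allow antidiagonal factors. Alternatively, transform by $P$ itself as the paper does, obtaining $\mathrm{Holant}(S^{-1}|f',C_nS)$ with $S^{-1}\propto(\ne_2)$.
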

\begin{proof}
    We divide the proof into two cases: $n=1,2$ in \caseref{1} and $n\ge3$ in \caseref{2}.
    \begin{caselist}
        \caseitem{1}   $\mathbf{H}^{(m)}_{f}=PC_nP^{-1}$, where $n=1,2$.

        Note that $C_1=\{I\}$ and $C_2=\{I,-I\}$. Then $\mathbf{H}^{m}_f=H^{(1)}_f=\{=_2\}$. By a holographic transformation using $Z$, we have
\[
\mathrm{Holant}(=_2|f,=_2)\equiv_T\mathrm{Holant}(\ne_2|f',\ne_2),
\] and the binary signatures constructed by adding loops satisfy $S(f'_{\ne_2})=\{\ne_2\}$. Then by Lemma~\ref{solveeightvertexform},  $f'$ has eight-vertex form. By Theorem~\ref{eightvertex} the result follows.
        \caseitem{2}  $\mathbf{H}^{(m)}_{f}=PC_nP^{-1}$, where $n\ge3$.

        By Lemma~\ref{normalizer}, $\mathcal{N}_{\mathrm{SL}(2,\mathbb{C})}(C_n)=\left \{ \begin{pmatrix}
  a&0 \\
  0&a^{-1}
\end{pmatrix}, \begin{pmatrix}
  0&b \\
  -b^{-1}&0
\end{pmatrix} \Bigg| a,b\in \mathbb{C}\setminus\{0\} \right \}$.
Note that $S=(P^TP)^{-1}=P^{-1}(P^T)^{-1}$ is symmetric. By Lemma~\ref{normal}, $S\in\mathcal{N}_{\mathrm{SL}(2,\mathbb{C})}(C_n)$. Thus $S$ has the form $\begin{pmatrix}
  a&0 \\
  0&a^{-1}
\end{pmatrix}$ or $\begin{pmatrix}
  0&1 \\
  1&0
\end{pmatrix}$ since $S$ is symmetric.
In Holant($=_2|f,PC_nP^{-1}$), by a holographic transformation using $P$, we have \[\text{Holant}(=_2|f,PC_nP^{-1})\equiv_T\text{Holant}(P^TP|f',P^{-1}PC_nP^{-1}(P^{-1})^T)\equiv_T\text{Holant}(S^{-1}|f',C_nS),\]
where $f'=(P^{-1})^{\otimes 4}f$.
\begin{caselist}
    \subcaseitem{2.1}  $S$ has the form $\begin{pmatrix}
  a&0 \\
  0&a^{-1}
\end{pmatrix}$.

By a holographic transformation using $\begin{pmatrix}
  a^{\frac{1}{2}}&0 \\
  0&a^{-\frac{1}{2}}
\end{pmatrix}$, we have
\[\text{Holant}(S^{-1}|f',C_nS)\equiv_T\text{Holant}(=_2|f'',C_n),\] where $f''=\begin{pmatrix}
  a^{-\frac{1}{2}}&0 \\
  0&a^{\frac{1}{2}}
\end{pmatrix}^{\otimes4}f'$. Then in $\text{Holant}(=_2|f'',C_n)$, its binary subgroup sequence is stable at $C_n$. Note that $n\ge3$, we always have a signature of the form $\begin{pmatrix}
  1&0 \\
  0&t
\end{pmatrix}$ with $t\ne1$. Further, $S(f''_{=_2})\subseteq C_n$ and $S(f''_{[1,0,t]})\subseteq C_n$ implies that for any binary signature $h\in S(f''_{=_2})$ or $h\in S(f''_{[1,0,t]})$, $h_{01}=h_{10}=0$, then we have:
\[\begin{cases}
f''_{0100} + f''_{0111} = f''_{1000} + f''_{1011} = 0 \\
f''_{0010} + f''_{0111} = f''_{1000} + f''_{1101} = 0 \\
f''_{0001} + f''_{0111} = f''_{1000} + f''_{1110} = 0 \\
f''_{0010} + f''_{1011} = f''_{0100} + f''_{1101} = 0 \\
f''_{0001} + f''_{1011} = f''_{0100} + f''_{1110} = 0 \\
f''_{0001} + f''_{1101} = f''_{0010} + f''_{1110} = 0
\end{cases}
\ \text{and}\ 
\begin{cases}
f''_{0100} + t f''_{0111} = f''_{1000} + t f''_{1011} = 0 \\
f''_{0010} + t f''_{0111} = f''_{1000} + t f''_{1101} = 0 \\
f''_{0001} + t f''_{0111} = f''_{1000} + t f''_{1110} = 0 \\
f''_{0010} + t f''_{1011} = f''_{0100} + t f''_{1101} = 0 \\
f''_{0001} + t f''_{1011} = f''_{0100} + t f''_{1110} = 0 \\
f''_{0001} + t f''_{1101} = f''_{0010} + t f''_{1110} = 0
\end{cases}.
\] The system of linear equations implies that all the weight 1 and weight 3 entries of $f''$ are zero, i.e., $f''$ has eight-vertex form. By Theorem~\ref{eightvertex=2} the result follows.

\subcaseitem{2.2}  $S$ has the form $\begin{pmatrix}
  0&1 \\
  1&0
\end{pmatrix}$.

In this case, Holant($S^{-1}|f',C_nS$) is exactly Holant($\ne_2|f',C_n(\ne_2)$). Moreover, $S(f'_{\ne_2})\subseteq C_n(\ne_2)\subseteq\{(0,b,c,0)|b,c\in\mathbb{C}\}$, by Lemma~\ref{solveeightvertexform}, $f'$ has eight-vertex form. By Theorem~\ref{eightvertex} the result follows.\qedhere

\end{caselist}
\end{caselist}
\end{proof}

\subsubsection{\texorpdfstring{Conjugacy of $\mathbf{H}^{(m)}_{f}$ to $BD_{4n}$}{Conjugacy of H\_f\textasciicircum (m) to BD\_(4n)}}
\begin{lemma}\label{BD8}
    Let $f$ be a 4-ary signature,
    $\mathcal{B}=\{=_2,[i,0,-i],[0,i,0],(0,1,-1,0)\}$ be a set of binary signatures and $S(f_\mathcal{B})\subseteq \mathbb{C}\mathcal{B}$. Then $\mathrm{Holant}(=_2|f,\mathcal{B})$ is \#P-hard except for the following cases:
    \begin{itemize}
        \item $f \in \left \langle Z\mathcal{M} \right \rangle$ or $\in \left \langle ZX\mathcal{M} \right \rangle$;
        \item $f$ is $ \mathcal{A}\text{-}$transformable;
        \item $f$ is $ \mathcal{L}\text{-}$transformable;
        \item $f$ is $ \mathcal{P}\text{-}$transformable;
        \item $f\in\mathcal{H}$;
        \item $f \in \left \langle \mathcal{T} \right \rangle$,
    \end{itemize}
    in which cases the problem is computable in polynomial time.
\end{lemma}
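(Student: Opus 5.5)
The plan is to turn the hypothesis $S(f_\mathcal{B})\subseteq\mathbb{C}\mathcal{B}$ into a linear system on the 16 entries of $f$. The target conclusion is that, after an explicit holographic transformation preserving $=_2$ (or turning it into $\ne_2$), $f$ has eight-vertex form. Then Theorem~\ref{eightvertex=2}, Theorem~\ref{eightvertex}, or Lemma~\ref{haveeightvertex} finishes the proof; for the last one, $h=(0,1,-1,0)\in\mathcal{B}$ is available.

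\textbf{Reformulating the constraint.} Up to scalars, the matrices of $\mathcal{B}$ are $I$, $\operatorname{diag}(1,-1)$, $\begin{psmallmatrix}0&1\\1&0\end{psmallmatrix}$ and $J=\begin{psmallmatrix}0&1\\-1&0\end{psmallmatrix}$. These span $\mathbb{C}^{2\times2}$, and $\mathbb{C}\mathcal{B}$ is the union of four lines. So every loop $f_h^{ij}$ with $h\in\mathcal{B}$ is either zero or lies on exactly one of these lines: scalar, traceless diagonal, symmetric antidiagonal, or antisymmetric. The loops with $h=I$ and $h=\operatorname{diag}(1,-1)$ give the sums and differences $f_{00kl}\pm f_{11kl}$. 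The loops with the two antidiagonal $h$'s give $f_{01kl}\pm f_{10kl}$, and similarly for every pair $\{i,j\}$. Hence, once the line containing each of the $4\times 6$ loops is fixed, the entries of $f$ are determined by a linear system.

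\textbf{Reducing the cases by symmetry.} I would use the real orthogonal elements of $\mathcal{N}_{\mathrm{SL}(2,\mathbb{C})}(BD_8)=BO_{48}$ (Lemma~\ref{normalizer}), such as $H$ and the $45^\circ$ rotation. Holographic transformations by these fix $=_2$, fix $J$ up to sign (since $OJO^T=\det(O)J$), and permute the three remaining lines. The transformation by $Z$ exchanges $=_2$ with $\ne_2$, and is used to reach the setting of Theorem~\ref{eightvertex}.

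By Lemma~\ref{sym}, we may assume $f_{=_2}^{34}\not\equiv0$. Using these symmetries, I would reduce to the case in which $f_{=_2}^{34}$ lies on one fixed line, for instance the scalar line or the line of $\operatorname{diag}(1,-1)$. Then I would run through the possible line assignments for the other loops.

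\textbf{The uniform cases.} When every loop $f_h^{ij}$ lies in the diagonal part, i.e.\ in the span of $I$ and $\operatorname{diag}(1,-1)$, the off-diagonal equations are exactly those in the proof of Lemma~\ref{Cn}, \subcaseref{2.1}, and they force eight-vertex form directly. When every loop has zero diagonal part, Lemma~\ref{solveeightvertexform} applies after the $Z$ transformation.

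\textbf{The mixed cases (main obstacle).} The hard cases are those in which different pairs $\{i,j\}$, or different $h$, land on different lines, for instance $f_{=_2}^{12}$ scalar but $f_{=_2}^{13}$ antisymmetric. Here I would exploit the fact that the same entry $f_{\alpha}$ appears in loops over three different pairs. Comparing these loops should force many entries to vanish. I expect each such configuration to end in one of three ways:
\begin{itemize}
\item it is inconsistent;
\item it makes some loop degenerate, so Theorem~\ref{dicwithdegen} applies;
\item it forces the structure of $\langle\mathcal{T}\rangle$, $\langle Z\mathcal{M}\rangle$ or $\mathcal{H}$ (Theorem~\ref{ff'f''}).
\end{itemize}
What I would try first is to attach $J$ to two inputs (as in the proof of Lemma~\ref{haveeightvertex}), which moves an antisymmetric loop into a diagonal one. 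This should reduce the mixed configurations to the uniform ones, or to a transformed copy of them under an element of $BO_{48}$.

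\textbf{Concluding.} Finally, I would check that each holographic transformation used preserves the listed tractable classes, so the dichotomy transfers back to $f$. The outcome $\mathrm{CSP}^2$ of Lemma~\ref{haveeightvertex} is handled by Theorem~\ref{CSP2dich}.
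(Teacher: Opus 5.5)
Your overall target (reach eight-vertex form, then finish with Lemma~\ref{haveeightvertex}) matches the paper, and your first uniform case is correct. But there is a genuine gap: the mixed configurations are where all the work of this lemma lies, you only record an expectation for them, and the tools you name cannot do the job.

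Your orthogonal symmetries fix the scalar line. They also fix the $J$ line, because $TJT^{T}=\det(T)J$ for every $T$, so no holographic transformation moves the $J$ line at all. Consequently they can only swap the lines of $\operatorname{diag}(1,-1)$ and $\begin{psmallmatrix}0&1\\1&0\end{psmallmatrix}$ (there are two such lines, not three), and you cannot move $f^{34}_{=_2}$ to a preferred line when it is a multiple of $J$.

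Attaching $J$ to two inputs flips two bits. It therefore preserves whether the support consists of even-weight inputs, odd-weight inputs, or both, so it cannot turn a configuration with mixed-parity support into either of your uniform cases. Of your anticipated endings, a degenerate loop cannot occur at all, since every nonzero element of $\mathbb{C}\mathcal{B}$ is invertible.

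Your second uniform case is also misstated. The $Z$-congruence $M\mapsto Z^{-1}M(Z^{-1})^{T}$ sends $I$ to $\begin{psmallmatrix}0&1\\1&0\end{psmallmatrix}$ and $J$ to $\frak{i}J$, but it sends $\begin{psmallmatrix}0&1\\1&0\end{psmallmatrix}$ to $\operatorname{diag}(-\frak{i},\frak{i})$ and $\operatorname{diag}(1,-1)$ to $I$. So Lemma~\ref{solveeightvertexform} applies after $Z$ when the $=_2$-loops lie in the span of $I$ and $J$, not when they have zero diagonal. Zero diagonal for $h=I,\operatorname{diag}(1,-1)$ means odd-weight support, and that calls for a one-sided $(\ne_2)$ modification instead.

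The missing idea is a one-sided normalization followed by an explicit case tree. The paper attaches $(f^{34}_{=_2})^{-1}$ to the single input $x_1$. This inverse is constructible by Corollary~\ref{polyto-1} and lies in $\mathbb{C}\mathcal{B}$, since $\mathcal{B}$ is a group up to scalars. The resulting $f^*$ still satisfies $S(f^*_{\mathcal{B}})\subseteq\mathbb{C}\mathcal{B}$ and has $f^{*34}_{=_2}=(=_2)$ exactly.

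The proof then branches on the line of $f^{*34}_{[1,0,-1]}$: scalar or zero, $[1,0,-1]$, or antidiagonal. In each branch it pins down the entries in order:
\begin{enumerate}
\item the $\{1,2\}$-loops using $=_2$ and $[1,0,-1]$;
\item the $\{3,4\}$-loops using $(0,1,\pm1,0)$;
\item the $\{2,3\}$-, $\{1,3\}$- and $\{1,4\}$-loops.
\end{enumerate}
Every branch ends in a contradiction or in eight-vertex form, to which Lemma~\ref{haveeightvertex} applies. Some branches first return to an earlier case by a variable permutation or by one-sided modifications with elements of $\mathcal{B}$. Subcase~3.1 uses a $Z$-transformation, which maps $\mathcal{B}$ to itself up to scalars.
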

\begin{proof}
    Up to some nonzero scalars, let $\mathcal{B}=\{=_2,[1,0,-1],\ne_2,(0,1,-1,0)\}$.
    By Lemma~\ref{sym}, we assume that ${f}^{34}_{=_2}\not \equiv 0$. Note that ${f}^{34}_{=_2}\in \mathcal{B}$, ${f}^{34}_{=_2}\in \mathcal{A}$ and has full rank. By Lemma~\ref{polyto-1}, we can construct $({f}^{34}_{=_2})^{-1}$. Moreover, $({f}^{34}_{=_2})^{-1}\in \mathcal{B}$, $({f}^{34}_{=_2})^{-1}\in \mathcal{A}$.
    By connecting input $x_2$ of $({f}^{34}_{=_2})^{-1}$ to input $x_1$ of $f$, we have a new 4-ary signature $f^*$.  Then $f^*\in \left \langle \mathcal{T} \right \rangle$ or $\mathcal{A}$ if and only if $f\in \left \langle \mathcal{T} \right \rangle$ or $\mathcal{A}$. Moreover, $S(f^*_\mathcal{B})\subseteq \mathcal{B}$.
    We have ${f^*}^{34}_{=_2}=(=_2)$, i.e.,
    \begin{equation}\label{f34=21}
        \left\{\begin{matrix}
 f^*_{0000}+f^*_{0011}=f^*_{1100}+f^*_{1111}=1\\
f^*_{0100}+f^*_{0111}=f^*_{1000}+f^*_{1011}=0
\end{matrix}\right. .
    \end{equation}
For ${f^*}^{34}_{[1,0,-1]}$, there are three cases:
\begin{caselist}
    \caseitem{1}  ${f^*}^{34}_{[1,0,-1]}=(=_2)$ up to a nonzero scalar or ${f^*}^{34}_{[1,0,-1]}\equiv0$.
    In this case, we have
    \begin{equation}\label{casea1}
    \left\{\begin{matrix}
 f^*_{0000}-f^*_{0011}=f^*_{1100}-f^*_{1111}\\
f^*_{0100}-f^*_{0111}=f^*_{1000}-f^*_{1011}=0
\end{matrix}\right. ,
    \end{equation}
    By (\ref{f34=21}) and (\ref{casea1}), we have \[
    \left\{\begin{matrix}
 f^*_{0000}=f^*_{1100}\\
 f^*_{0011}=f^*_{1111}\\
f^*_{0100}=f^*_{0111}=f^*_{1000}=f^*_{1011}=0
\end{matrix}\right. ,
    \] i.e.,
   \[ M(f^*)=\begin{pmatrix}
  f^*_{0000} &f^*_{0001}  & f^*_{0010} & f^*_{0011}\\
 0 &  f^*_{0101}& f^*_{0110} & 0\\
  0& f^*_{1001} & f^*_{1010} &0 \\
  f^*_{0000} &f^*_{1101}  & f^*_{1110} &f^*_{0011}
\end{pmatrix}.\] Note that
\[
 \left\{\begin{matrix}
 {f^*}^{12}_{=_2}=(2f^*_{0000},f^*_{0001}+f^*_{1101},f^*_{0010}+f^*_{1110},2f^*_{0011})\\
 {f^*}^{12}_{[1,0,-1]}=(0,f^*_{0001}-f^*_{1101},f^*_{0010}-f^*_{1110},0)
\end{matrix}\right. .
\] Then ${f^*}^{12}_{=_2}$ must have the form $(=_2)$ up to a nonzero scalar since $f^*_{0000}+f^*_{0011}=1$(by (\ref{f34=21})). Further, ${f^*}^{12}_{[1,0,-1]}$ must have the form $(\ne_2)$ or $(0,1,-1,0)$, up to a nonzero scalar. This implies that we have
\[
 \left\{\begin{matrix}
 f^*_{0000}=f^*_{0011}\ne0(by\ {f^*}^{12}_{=_2})\\
 f^*_{0001}+f^*_{1101}=f^*_{0010}+f^*_{1110}=0(by\  {f^*}^{12}_{=_2})\\
 \epsilon_1(f^*_{0001}-f^*_{1101})=f^*_{0010}-f^*_{1110}(by\ {f^*}^{12}_{[1,0,-1]})
\end{matrix}\right. ,
\]where $\epsilon_1=\pm1$.
Then we have \[ \left\{\begin{matrix}
 f^*_{0001}=-f^*_{1101}\\
 f^*_{0010}=-f^*_{1110}\\
\epsilon_1f^*_{0001}=f^*_{0010}\\
\epsilon_1f^*_{1101}=f^*_{1110}
\end{matrix}\right. .\]
Now \[
M(f^*)=\begin{pmatrix}
  f^*_{0000} &  f^*_{0001}&  \epsilon_1f^*_{0001}& f^*_{0000}\\
 0 &  f^*_{0101}& f^*_{0110} & 0\\
  0& f^*_{1001} & f^*_{1010} &0 \\
  f^*_{0000} &  -f^*_{0001}&  -\epsilon_1f^*_{0001}& f^*_{0000}
\end{pmatrix}.\] If $f^*_{0001}=0$
,
$f^*$ has eight-vertex form and then the result follows from Lemma~\ref{haveeightvertex}. If $f^*_{0001}\ne0$, then ${f^*}^{34}_{(0,1,\epsilon_1,0)}=(2f^*_{0001},f^*_{0101}+\epsilon_1f^*_{0110},f^*_{1001}+\epsilon_1f^*_{1010},-2f^*_{0001})$ must have the form $[1,0,-1]$ up to a nonzero scalar. We also have $f^*_{0101}+\epsilon_1f^*_{0110},f^*_{1001}+\epsilon_1f^*_{1010}=0$.
By ${f^*}^{34}_{(0,1,\epsilon_1,0)}=(2f^*_{0001},f^*_{0101}+\epsilon_1f^*_{0110},f^*_{1001}+\epsilon_1f^*_{1010},-2f^*_{0001})$ and ${f^*}^{34}_{(0,1,-\epsilon_1,0)}=(0,f^*_{0101}-\epsilon_1f^*_{0110},f^*_{1001}-\epsilon_1f^*_{1010},0)$
, we have
\begin{equation}
    \left\{\begin{matrix}
f^*_{0101}+\epsilon_1f^*_{0110}=f^*_{1001}+\epsilon_1f^*_{1010}=0\\
\epsilon_2(f^*_{0101}-\epsilon_1f^*_{0110})=f^*_{1001}-\epsilon_1f^*_{1010}
\end{matrix}\right. ,
\end{equation}where $\epsilon_2=\pm1$. Now,
\[
M(f^*)=\begin{pmatrix}
  f^*_{0000} &  f^*_{0001}&  \epsilon_1f^*_{0001}& f^*_{0000}\\
  0 &  f^*_{0101}&  -\epsilon_1f^*_{0101}& 0\\
  0 &  \epsilon_2f^*_{0101}&  -\epsilon_1\epsilon_2f^*_{0101}& 0\\
  f^*_{0000} &  -f^*_{0001}&  -\epsilon_1f^*_{0001}& f^*_{0000}
\end{pmatrix}.\] Note that \[
 \left\{\begin{matrix}
 {f^*}^{23}_{=_2}=(f^*_{0000}-\epsilon_1 f^*_{0101},f^*_{0001},-\epsilon_1f^*_{0001},\epsilon_2f^*_{0101}+f^*_{0000})\\
{f^*}^{23}_{[1,0,-1]}=(f^*_{0000}+\epsilon_1 f^*_{0101},f^*_{0001},-\epsilon_1f^*_{0001},\epsilon_2f^*_{0101}-f^*_{0000})
\end{matrix}\right. .
\] Note that $f^*_{0001}\ne0$, then ${f^*}^{23}_{=_2}$ and ${f^*}^{23}_{[1,0,-1]}$ must have the form $(\ne_2)$ or $(0,1,-1,0)$, i.e., \[f^*_{0000}-\epsilon_1 f^*_{0101}=f^*_{0000}+\epsilon_1 f^*_{0101}=0.\] This implies that $f^*_{0000}=0$, which contradicts the fact that $f^*_{0000}\neq0$.
    \caseitem{2}  ${f^*}^{34}_{[1,0,-1]}= [1,0,-1]$ up to a nonzero scalar. In this case, we have
    \begin{equation}\label{casea}
    \left\{\begin{matrix}
 f^*_{0000}-f^*_{0011}=-(f^*_{1100}-f^*_{1111})\ne0\\
f^*_{0100}-f^*_{0111}=f^*_{1000}-f^*_{1011}=0
\end{matrix}\right. .
    \end{equation}
    By (\ref{f34=21}) and (\ref{casea}), we have
    \[
M(f^*)=\begin{pmatrix}
  f^*_{0000} &  f^*_{0001}&  f^*_{0010}& f^*_{0011}\\
  0 &  *&  *& 0\\
  0 &  *&  *& 0\\
  f^*_{0011} &  f^*_{1101}&  f^*_{1110}& f^*_{0000}
\end{pmatrix},\] where $f^*_{0000}+f^*_{0011}\ne0$ and $f^*_{0000}\ne f^*_{0011}$.
 By ${f^*}^{12}_{=_2}=(f^*_{0000}+f^*_{0011},f^*_{0001}+f^*_{1101},f^*_{0010}+f^*_{1110},f^*_{0011}+f^*_{0000})$ and ${f^*}^{12}_{[1,0,-1]}=(f^*_{0000}-f^*_{0011},f^*_{0001}-f^*_{1101},f^*_{0010}-f^*_{1110},f^*_{0011}-f^*_{0000})$, we have
\begin{equation}
    \left\{\begin{matrix}
 f^*_{0001}+f^*_{1101}=f^*_{0010}+f^*_{1110}=0\\
f^*_{0001}-f^*_{1101}=f^*_{0010}-f^*_{1110}=0
\end{matrix}\right. .
\end{equation} Then $ f^*_{0001}=f^*_{1101}=f^*_{0010}=f^*_{1110}=0$, and
\[
M(f^*)=\begin{pmatrix}
  f^*_{0000} &  0&  0& f^*_{0011}\\
  0 &  *&  *& 0\\
  0 &  *&  *& 0\\
  f^*_{0011} &  0&  0& f^*_{0000}
\end{pmatrix},\] $f^*$ has eight-vertex form. Then the result follows from Lemma~\ref{haveeightvertex}.

    \caseitem{3}  ${f^*}^{34}_{[1,0,-1]}=(\ne_2)$ or ${f^*}^{34}_{[1,0,-1]}=(0,1,-1,0)$, both up to a nonzero scalar. In this case, we have
    \begin{equation}\label{casea11}
    \left\{\begin{matrix}
 f^*_{0000}-f^*_{0011}=f^*_{1100}-f^*_{1111}=0\\
\epsilon_1(f^*_{0100}-f^*_{0111})=f^*_{1000}-f^*_{1011}\ne0
\end{matrix}\right. ,
    \end{equation}where $\epsilon_1=\pm1$. By (\ref{f34=21}) and (\ref{casea11}), we have
\[
M(f^*)=\begin{pmatrix}
  f^*_{0000} &  *&  *& f^*_{0000}\\
  f^*_{0100} &  *&  *& -f^*_{0100}\\
  \epsilon_1f^*_{0100} &  *&  *& -\epsilon_1f^*_{0100}\\
  f^*_{0000} &  *&  *& f^*_{0000}
\end{pmatrix},\]where $f^*_{0100}\ne0$.
     If $f^*_{0000}=0$, then ${f^*}^{34}_{=_2}\equiv0$, a contradiction. Thus $f^*_{0000}\ne0$. Then by ${f^*}^{12}_{=_2}=(2f^*_{0000},f^*_{0001}+f^*_{1101},f^*_{0010}+f^*_{1110},2f^*_{0000})$ and ${f^*}^{12}_{[1,0,-1]}=(0,f^*_{0001}-f^*_{1101},f^*_{0010}-f^*_{1110},0)$, we have
\begin{equation}
    \left\{\begin{matrix}
 f^*_{0001}+f^*_{1101}=f^*_{0010}+f^*_{1110}=0\\
\epsilon_2(f^*_{0001}-f^*_{1101})=f^*_{0010}-f^*_{1110}
\end{matrix}\right. ,
\end{equation}where $\epsilon_2=\pm1$. Now,
\[
M(f^*)=\begin{pmatrix}
  f^*_{0000} &  f^*_{0001}&  \epsilon_2f^*_{0001}& f^*_{0000}\\
  f^*_{0100} &  *&  *& -f^*_{0100}\\
  \epsilon_1f^*_{0100} &  *&  *& -\epsilon_1f^*_{0100}\\
  f^*_{0000} &  -f^*_{0001}&  -\epsilon_2f^*_{0001}& f^*_{0000}
\end{pmatrix}.\]
If $f^*_{0001}=0$, i.e. ${f^*}^{12}_{=_2}=(=_2)$ and ${f^*}^{12}_{[1,0,-1]}\equiv0$. By the permutation of the variables of $f^*$, ${f^*}(x_3,x_4,x_1,x_2)$ falls under \caseref{1}. If $f^*_{0001}\ne0$, by ${f^*}^{34}_{(0,1,\epsilon_2,0)}=(2f^*_{0001},f^*_{0101}+\epsilon_2f^*_{0110},f^*_{1001}+\epsilon_2f^*_{1010},-2f^*_{0001})$ and ${f^*}^{34}_{(0,1,-\epsilon_2,0)}=(0,f^*_{0101}-\epsilon_2f^*_{0110},f^*_{1001}-\epsilon_2f^*_{1010},0)$
, we have
\begin{equation}
    \left\{\begin{matrix}
f^*_{0101}+\epsilon_2f^*_{0110}=f^*_{1001}+\epsilon_2f^*_{1010}=0\\
\epsilon_3(f^*_{0101}-\epsilon_2f^*_{0110})=f^*_{1001}-\epsilon_2f^*_{1010}
\end{matrix}\right. ,
\end{equation}where $\epsilon_3=\pm1$. Now
\[
M(f^*)=\begin{pmatrix}
  f^*_{0000} &  f^*_{0001}&  \epsilon_2f^*_{0001}& f^*_{0000}\\
  f^*_{0100} &  f^*_{0101}&  -\epsilon_2f^*_{0101}& -f^*_{0100}\\
  \epsilon_1f^*_{0100} &  \epsilon_3f^*_{0101}&  -\epsilon_2\epsilon_3f^*_{0101}& -\epsilon_1f^*_{0100}\\
  f^*_{0000} &  -f^*_{0001}&  -\epsilon_2f^*_{0001}& f^*_{0000}
\end{pmatrix}.\]
By ${f^*}^{13}_{=_2}=(f^*_{0000}-\epsilon_2\epsilon_3f^*_{0101},f^*_{0001}-\epsilon_1f^*_{0100},f^*_{0100}-\epsilon_2f^*_{0001},f^*_{0101}+f^*_{0000})$ and ${f^*}^{13}_{[1,0,-1]}=(f^*_{0000}+\epsilon_2\epsilon_3f^*_{0101},f^*_{0001}+\epsilon_1f^*_{0100},f^*_{0100}+\epsilon_2f^*_{0001},f^*_{0101}-f^*_{0000})$. We have
\begin{equation}
    \left\{\begin{matrix}
(f^*_{0000}-\epsilon_2\epsilon_3f^*_{0101})^2=(f^*_{0101}+f^*_{0000})^2\\
(f^*_{0001}-\epsilon_1f^*_{0100})^2=(f^*_{0100}-\epsilon_2f^*_{0001})^2\\
(f^*_{0000}+\epsilon_2\epsilon_3f^*_{0101})^2=(f^*_{0101}-f^*_{0000})^2\\
(f^*_{0001}+\epsilon_1f^*_{0100})^2=(f^*_{0100}+\epsilon_2f^*_{0001})^2
\end{matrix}\right. ,
\end{equation}
which implies that
\[
\left\{\begin{matrix}
 -\epsilon_2\epsilon_3=1\ \text{or}\ f^*_{0101}=0 ,\\
\epsilon_1=\epsilon_2
\end{matrix}\right.
\]
If $f^*_{0101}=0$, by connecting $\ne_2$ to input $x_4$ of $f^*$, and $[1,0,\epsilon_2]$ to input $x_3$ of $f^*$, $[1,0,-1]$ to input $x_1$ of $f^*$, we construct \[M(f^{**})=\begin{pmatrix}
  f^*_{0001} &  f^*_{0000}&  \epsilon_2f^*_{0000}& f^*_{0001}\\
  0 &  f^*_{0100}&  -\epsilon_2f^*_{0100}& 0\\
  0 &  -\epsilon_1f^*_{0100}&  \epsilon_1\epsilon_2f^*_{0100}& 0\\
  f^*_{0001} &  -f^*_{0000}&  -\epsilon_2f^*_{0000}& f^*_{0001}
\end{pmatrix}.\] Then $f^{**}$ falls under \caseref{1} since ${f^{**}}^{34}_{=_2}=(=_2)$ and ${f^{**}}^{34}_{[1,0,-1]}\equiv0$. If $f^*_{0101}\ne0$, then \[
M(f^*)=\begin{pmatrix}
  f^*_{0000} &  f^*_{0001}&  \epsilon_1f^*_{0001}& f^*_{0000}\\
  f^*_{0100} &  f^*_{0101}&  -\epsilon_1f^*_{0101}& -f^*_{0100}\\
  \epsilon_1f^*_{0100} &  -\epsilon_1f^*_{0101}&  f^*_{0101}& -\epsilon_1f^*_{0100}\\
  f^*_{0000} &  -f^*_{0001}&  -\epsilon_1f^*_{0001}& f^*_{0000}
\end{pmatrix}.\]
If $f^*_{0000}\ne f^*_{0101}$, by ${f^*}^{13}_{[1,0,-1]}=(f^*_{0000}-f^*_{0101},f^*_{0001}+\epsilon_1f^*_{0100},f^*_{0100}+\epsilon_1f^*_{0001},f^*_{0101}-f^*_{0000})$, we have $f^*_{0001}+\epsilon_1f^*_{0100}=f^*_{0100}+\epsilon_1f^*_{0001}=0$, i.e. $f^*_{0001}=-\epsilon_1f^*_{0100}$. Further by ${f^*}^{13}_{=_2}=(f^*_{0000}+f^*_{0101},f^*_{0001}-\epsilon_1f^*_{0100},f^*_{0100}-\epsilon_1f^*_{0001},f^*_{0101}+f^*_{0000})$ we have $-f^*_{0000}=f^*_{0101}$.

If $f^*_{0000}=f^*_{0101}$, by ${f^*}^{13}_{=_2}=(f^*_{0000}+f^*_{0101},f^*_{0001}-\epsilon_1f^*_{0100},f^*_{0100}-\epsilon_1f^*_{0001},f^*_{0101}+f^*_{0000})$ we have $f^*_{0001}-\epsilon_1f^*_{0100}=f^*_{0100}-\epsilon_1f^*_{0001}=0$, i.e. $f^*_{0001}=\epsilon_1f^*_{0100}$.

In summary, we consider the following four cases separately:
\[
\left\{\begin{matrix}
 f^*_{0000}=f^*_{0101}\\
f^*_{0100}=f^*_{0001}
\end{matrix}\right.\ \text{or} \left\{\begin{matrix}
 -f^*_{0000}=f^*_{0101}\\
-f^*_{0100}=f^*_{0001}
\end{matrix}\right.\  \text{or} \left\{\begin{matrix}
 f^*_{0000}=f^*_{0101}\\
-f^*_{0100}=f^*_{0001}
\end{matrix}\right.\ \text{or} \left\{\begin{matrix}
 -f^*_{0000}=f^*_{0101}\\
f^*_{0100}=f^*_{0001}
\end{matrix}\right. .
\]
\begin{caselist}
    \subcaseitem{3.1}  $\left\{\begin{matrix}
 f^*_{0000}=f^*_{0101}\\
f^*_{0100}=f^*_{0001}
\end{matrix}\right.$. In this case, \[
M(f^*)=\begin{pmatrix}
  f^*_{0000} &  f^*_{0100}&  f^*_{0100}& f^*_{0000}\\
  f^*_{0100} &  f^*_{0000}&  -f^*_{0000}& -f^*_{0100}\\
  f^*_{0100} &  -f^*_{0000}&  f^*_{0000}& -f^*_{0100}\\
  f^*_{0000} &  -f^*_{0100}&  -f^*_{0100}& f^*_{0000}
\end{pmatrix}.\] By a holographic transformation using $Z$, we have \[
\mathrm{Holant}(\ne_2|f',f^{*'},\mathcal{B'})\equiv_T\mathrm{Holant}(=_2|f,f^*,\mathcal{B}),
\]where \[f'=(Z^{-1})^{\otimes4}f,f^{*'}=(Z^{-1})^{\otimes4}f^*=8\begin{pmatrix}
 -\frak{i}f^*_{0100} & 0 & 0 & 0\\
  0&0  & f^*_{0000} & 0\\
  0& f^*_{0000} & 0 & 0\\
  0&  0& 0 &\frak{i}f^*_{0100}
\end{pmatrix}\] and $\mathcal{B'}=(Z^{-1})^{\otimes2}\mathcal{B}$. Note that $f^{*'}\in \left \langle \mathcal{T} \right \rangle$ or $\mathcal{A}$ if and only if $f\in \left \langle \mathcal{T} \right \rangle$ or $\mathcal{A}$.
Moreover, $\mathcal{B'}=(Z^{-1})^{\otimes2}\mathcal{B}=\mathcal{B}$,  $(=_2)\in\mathcal{B}$. Thus we have \[
 \mathrm{Holant}(\ne_2,=_2|f',f^{*'},\mathcal{B}) \equiv_T \mathrm{Holant}(\ne_2|f',f^{*'},\mathcal{B'}).
\] Then the result follows from Lemma~\ref{haveeightvertex}.
\subcaseitem{3.2}  $\left\{\begin{matrix}
 -f^*_{0000}=f^*_{0101}\\
-f^*_{0100}=f^*_{0001}
\end{matrix}\right.$. In this case, \[
M(f^*)=\begin{pmatrix}
  f^*_{0000} &  -f^*_{0100}&  -f^*_{0100}& f^*_{0000}\\
  f^*_{0100} &  -f^*_{0000}&  f^*_{0000}& -f^*_{0100}\\
  f^*_{0100} &  f^*_{0000}&  -f^*_{0000}& -f^*_{0100}\\
  f^*_{0000} &  f^*_{0100}&  f^*_{0100}& f^*_{0000}
\end{pmatrix}.\] Then ${f^*}_{=_2}^{14}=2(f^*_{0000},-f^*_{0100},f^*_{0100},f^*_{0000})$ is not in $\mathcal{B}$, a contradiction.
\subcaseitem{3.3}   $\left\{\begin{matrix}
 f^*_{0000}=f^*_{0101}\\
-f^*_{0100}=f^*_{0001}
\end{matrix}\right.$. In this case, \[
M(f^*)=\begin{pmatrix}
  f^*_{0000} &  -f^*_{0100}&  f^*_{0100}& f^*_{0000}\\
  f^*_{0100} &  f^*_{0000}&  f^*_{0000}& -f^*_{0100}\\
  -f^*_{0100} &  f^*_{0000}&  f^*_{0000}& f^*_{0100}\\
  f^*_{0000} &  f^*_{0100}&  -f^*_{0100}& f^*_{0000}
\end{pmatrix}.\] Then ${f^*}_{=_2}^{14}=2(f^*_{0000},f^*_{0100},f^*_{0100},f^*_{0000})$ is not in $\mathcal{B}$, a contradiction.
\subcaseitem{3.4}  $\left\{\begin{matrix}
 -f^*_{0000}=f^*_{0101}\\
f^*_{0100}=f^*_{0001}
\end{matrix}\right.$. In this case, \[
M(f^*)=\begin{pmatrix}
  f^*_{0000} &  f^*_{0100}&  -f^*_{0100}& f^*_{0000}\\
  f^*_{0100} &  -f^*_{0000}&  -f^*_{0000}& -f^*_{0100}\\
  -f^*_{0100} &  -f^*_{0000}&  -f^*_{0000}& f^*_{0100}\\
  f^*_{0000} &  -f^*_{0100}&  f^*_{0100}& f^*_{0000}
\end{pmatrix}.\]Then ${f^*}_{[1,0,-1]}^{14}=2(f^*_{0000},-f^*_{0100},f^*_{0100},-f^*_{0000})$ is not in $\mathcal{B}$, a contradiction.\qedhere
\end{caselist}
\end{caselist}

\end{proof}
\begin{lemma}\label{BD16/BD8}
    Let $f$ be a 4-ary signature,
    $\mathcal{B}=\{=_2,[\frak{i},0,-\frak{i}],\frac{1}{\sqrt{2}}(0,1+\frak{i},1-\frak{i},0),-\frac{1}{\sqrt{2}}(0,1-\frak{i},1+\frak{i},0)\}$ be a set of binary signatures and $S(f_{\mathcal{B}})\subseteq \mathbb{C}\mathcal{B}$. Then $\mathrm{Holant}(=_2|f)$ is \#P-hard except for the following cases:
    \begin{itemize}
        \item $f \in \left \langle Z\mathcal{M} \right \rangle$ or $\in \left \langle ZX\mathcal{M} \right \rangle$;
        \item $f$ is $ \mathcal{A}\text{-}$transformable;
        \item $f$ is $ \mathcal{L}\text{-}$transformable;
        \item $f$ is $ \mathcal{P}\text{-}$transformable;
        \item $f\in\mathcal{H}$;
        \item $f \in \left \langle \mathcal{T} \right \rangle$,
    \end{itemize}
    in which cases the problem is computable in polynomial time.
\end{lemma}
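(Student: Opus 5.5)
The plan is to follow the template of Lemma~\ref{BD8}: normalize $f$ by an available invertible binary signature, then let the closure condition $S(f_{\mathcal{B}})\subseteq\mathbb{C}\mathcal{B}$ produce a linear system on the sixteen entries of $f$. I expect that system to force $f$, or a simple modification of $f$, into eight-vertex form, possibly after the holographic transformation by $Z$, or else to give a contradiction. Up to nonzero scalars, the four elements of $\mathcal{B}$ are
\[
I,\quad \mathrm{diag}(1,-1),\quad N_1=\begin{pmatrix}0&\omega\\ \overline{\omega}&0\end{pmatrix},\quad N_2=\begin{pmatrix}0&\overline{\omega}\\ \omega&0\end{pmatrix},
\]
with $\omega=e^{\frak{i}\pi/4}$. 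Thus $[1,0,-1]$ is available. Up to a scalar, $N_1N_2=\mathrm{diag}(\frak{i},-\frak{i})$. Every element of $\mathcal{B}$ is either diagonal or anti-diagonal, so the key property is that each loop of $f$ built with a signature from $\mathcal{B}$ is again diagonal or anti-diagonal.

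\textbf{Step 1 (normalization).} By Lemma~\ref{sym} I may assume $f^{34}_{=_2}\not\equiv0$, since a symmetric $f$ is handled by Theorem~\ref{Th8.1}. This loop lies in $\mathbb{C}^{\times}\mathcal{B}$, so it has full rank. Corollary~\ref{polyto-1} then provides its inverse, which again lies in $\mathbb{C}^{\times}\mathcal{B}$ because $\mathcal{B}$ is closed under inversion up to scalars. Attaching this inverse to input $x_1$ gives $f^*$ with ${f^*}^{34}_{=_2}=(=_2)$. Every element of $\mathcal{B}$ is affine up to scalar (the entries are eighth roots of unity in the pattern of $\mathcal{A}$ after scaling), so the unnumbered lemma before Lemma~\ref{Cn} keeps the exceptional classes $\mathcal{A}$ and $\langle\mathcal{T}\rangle$ invariant. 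The condition $S(f^*_{\mathcal{B}})\subseteq\mathbb{C}\mathcal{B}$ should also survive, since $\mathcal{B}$ acts on itself by the relevant twisted products up to scalars. I will check this explicitly on the generators.

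\textbf{Step 2 (case analysis on ${f^*}^{34}_{[1,0,-1]}$).} This loop is diagonal, zero, or anti-diagonal of type $N_1$ or $N_2$. In each case, combining it with ${f^*}^{34}_{=_2}=(=_2)$ forces the first and fourth columns of $M(f^*)$ to be tied as in Cases 1--3 of Lemma~\ref{BD8}. I then impose the loops on inputs $\{1,2\}$, $\{1,3\}$ and $\{2,3\}$ with $=_2$ and $[1,0,-1]$. Wherever the odd-weight entries are nonzero, I also use loops with $N_1$ and $N_2$. Each loop must be either diagonal or anti-diagonal, and in the anti-diagonal case the ratio of its two entries must be $\omega^2=\frak{i}$ or $\overline{\omega}^2=-\frak{i}$. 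This is the only real difference from Lemma~\ref{BD8}, where the corresponding ratios were $\pm1$, so the signs $\epsilon_j$ there become $\pm\frak{i}$ here.

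\textbf{Step 3 (conclusion of each branch).} Each branch should end in one of three ways. In the first, the odd-weight entries vanish, so $f^*$, and hence $f$ after undoing the modification, has eight-vertex form, and Theorem~\ref{eightvertex=2} gives the dichotomy. In the second, the even-weight structure forces the $Z$-transformed signature to satisfy $S(f'_{\ne_2})\subseteq\{(0,b,c,0)\}$, so Lemma~\ref{solveeightvertexform} and Theorem~\ref{eightvertex} apply. In the third, some loop such as ${f^*}^{14}_{=_2}$ falls outside $\mathbb{C}\mathcal{B}$, which is a contradiction. I deliberately avoid Lemma~\ref{haveeightvertex}, because it needs $(0,1,-1,0)$, and I do not expect that signature to lie in the group generated by $\mathcal{B}$.

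The main obstacle is the analogue of Case~3 of Lemma~\ref{BD8}, where ${f^*}^{34}_{[1,0,-1]}$ is anti-diagonal and the odd-weight entries $f^*_{0100}$ and $f^*_{0001}$ are nonzero. There the $\pm\frak{i}$ ratios multiply in the quadratic relations coming from ${f^*}^{13}$. I would first try to show that these relations are inconsistent unless $f^*_{0000}=\pm\frak{i}f^*_{0101}$ and $f^*_{0001}=\pm\frak{i}f^*_{0100}$. In those surviving subcases I would transform by $Z$ and check directly that $(Z^{-1})^{\otimes4}f^*$ is supported on the eight-vertex pattern, as in Subcase~3.1 of Lemma~\ref{BD8}.
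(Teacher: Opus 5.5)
There is a genuine gap in your Step~3, in the first ending. The normalizing binary $(f^{34}_{=_2})^{-1}$ lies in $\mathbb{C}^{\times}\mathcal{B}$ and can be anti-diagonal. Attaching an anti-diagonal matrix to one input swaps the even- and odd-weight parts of a signature. So in that case, ``$f^*$ has eight-vertex form'' means that $f$ is supported on odd-weight inputs; it does not mean that $f$ has eight-vertex form. You cannot avoid this by choosing a different pair of inputs, because every $=_2$-loop of an odd-supported $f$ is anti-diagonal.

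Such $f$ do satisfy the hypothesis. For example, attach $(0,1,\frak{i},0)$ to one input of $(=_4)$, or of any admissible eight-vertex signature. This preserves $S(f_{\mathcal{B}})\subseteq\mathbb{C}\mathcal{B}$, because $\mathbb{C}\mathcal{B}$ is closed under products and transposes.

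For these $f$, Theorem~\ref{eightvertex=2} (and, in your second ending, Theorem~\ref{eightvertex}) can only be applied to $f^*$. That transfers hardness, since $f^*$ is realizable, but it does not transfer tractability. What you actually have is $\mathrm{Holant}(=_2|f)\equiv_T\mathrm{Holant}(=_2|f^*,\mathcal{B})$, not $\mathrm{Holant}(=_2|f^*)$. You have only checked that the modification preserves $\mathcal{A}$ and $\langle\mathcal{T}\rangle$. You have not checked $\mathcal{P}$- or $\mathcal{L}$-transformability, $\mathcal{H}$, or $\langle Z\mathcal{M}\rangle$.

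The paper avoids this by concluding through Lemma~\ref{haveeightvertex}. That lemma keeps $f$ and an anti-diagonal binary in the signature set, so its $\mathrm{CSP}^2$ alternative covers $f$ together with $f^*$. You are right that $(0,1,-1,0)\notin\langle\mathcal{B}\rangle$, so the lemma does not apply verbatim. However, it uses $h$ only to move a nonzero inner pair to the outer position, and the affine anti-diagonal $(0,1,\pm\frak{i},0)$ does the same job. The fix is therefore to adapt that lemma rather than bypass it. When $f$ itself is even-supported, your direct appeal to Theorem~\ref{eightvertex=2} is fine.

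Separately, your prediction for the anti-diagonal case is off. Write $f^*_{1000}=\varepsilon_1f^*_{0100}$, $f^*_{0010}=\varepsilon_2f^*_{0001}$ and $f^*_{1001}=\varepsilon_3f^*_{0101}$ with $\varepsilon_j\in\{\pm\frak{i}\}$, and assume $f^*_{0001}f^*_{0101}\ne0$. Then the $\{1,3\}$-loops force $\varepsilon_3=\varepsilon_2$ and $\varepsilon_1=-\varepsilon_2$, and one of two situations holds:
\begin{itemize}
\item $f^*_{0101}=f^*_{0000}$ and $f^*_{0001}=-\varepsilon_2f^*_{0100}$;
\item $f^*_{0101}=-f^*_{0000}$ and $f^*_{0001}=\varepsilon_2f^*_{0100}$.
\end{itemize}
So the ratio between $f^*_{0000}$ and $f^*_{0101}$ is $\pm1$, not $\pm\frak{i}$. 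In both situations ${f^*}^{23}_{=_2}$ has full support, which is a contradiction. Hence this case never survives, and no $Z$-transformed ending is needed. The paper reaches the same conclusion, although it writes $\varepsilon_1=\varepsilon_2$ at this step.
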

\begin{proof}
    Using the same procedure as in Lemma~\ref{BD8}, we construct $f^*$.
    We also have ${f^*}^{34}_{=_2}=(=_2)$, i.e. we have
    \begin{equation}\label{f34=22}
        \left\{\begin{matrix}
 f^*_{0000}+f^*_{0011}=f^*_{1100}+f^*_{1111}=1\\
f^*_{0100}+f^*_{0111}=f^*_{1000}+f^*_{1011}=0
\end{matrix}\right. .
    \end{equation}
For ${f^*}^{34}_{[1,0,-1]}$, there are three cases:
\begin{caselist}
    \caseitem{1}  ${f^*}^{34}_{[1,0,-1]}=(=_2)$ up to a nonzero scalar or ${f^*}^{34}_{[1,0,-1]}\equiv0$.
    In this case, we have
    \begin{equation}\label{casea2}
    \left\{\begin{matrix}
 f^*_{0000}-f^*_{0011}=f^*_{1100}-f^*_{1111}\\
f^*_{0100}-f^*_{0111}=f^*_{1000}-f^*_{1011}=0
\end{matrix}\right. .
    \end{equation}
    By (\ref{f34=22}) and (\ref{casea2}), we have \[
    \left\{\begin{matrix}
 f^*_{0000}=f^*_{1100}\\
 f^*_{1111}=f^*_{0011}\\
f^*_{0100}=f^*_{0111}=f^*_{1000}=f^*_{1011}=0
\end{matrix}\right. .
    \] Now
    \[
M(f^*)=\begin{pmatrix}
  f^*_{0000} &  f^*_{0001}&  f^*_{0010}& f^*_{0011}\\
  0 &  f^*_{0101}&  f^*_{0110}& 0\\
  0 &  f^*_{1001}&  f^*_{1010}& 0\\
  f^*_{0000} &  f^*_{1101}&  f^*_{1110}& f^*_{0011}
\end{pmatrix}.\]
Note that \[
    \left\{\begin{matrix}
 {f^*}^{12}_{=_2}=(2f^*_{0000},f^*_{0001}+f^*_{1101},f^*_{0010}+f^*_{1110},2f^*_{0011})\\
{f^*}^{12}_{[1,0,-1]}=(0,f^*_{0001}-f^*_{1101},f^*_{0010}-f^*_{1110},0)
\end{matrix}\right. .
    \] Then ${f^*}^{12}_{=_2}$ must have the form $(=_2)$ or [$1,0,-1$] since $ f^*_{0000}+f^*_{0011}=1$. Thus we have $f^*_{0000}=f^*_{0011}\ne0$. Further, ${f^*}^{12}_{[1,0,-1]}$ must have the form $(0,1,\pm\frak{i},0)$, this implies that we have
\begin{equation}
    \left\{\begin{matrix}
 f^*_{0001}+f^*_{1101}=f^*_{0010}+f^*_{1110}=0(by\ {f^*}^{12}_{=_2})\\
\varepsilon_1(f^*_{0001}-f^*_{1101})=f^*_{0010}-f^*_{1110}(by\ {f^*}^{12}_{[1,0,-1]})
\end{matrix}\right. ,
\end{equation}where $\varepsilon_1=\pm\frak{i}$.
Now, \[
M(f^*)=\begin{pmatrix}
  f^*_{0000} &  f^*_{0001}&  \varepsilon_1f^*_{0001}& f^*_{0000}\\
   0 &  f^*_{0101}&  f^*_{0110}& 0\\
  0 &  f^*_{1001}&  f^*_{1010}& 0\\
  f^*_{0000} &  -f^*_{0001}&  -\varepsilon_1f^*_{0001}& f^*_{0000}
\end{pmatrix}.\] If $f^*_{0001}=0$
,
$f^*$ has eight-vertex form and then the result follows from Lemma~\ref{haveeightvertex}. If $f^*_{0001}\ne0$,
by ${f^*}^{34}_{(0,1,-\varepsilon_1,0)}=(2f^*_{0001},f^*_{0101}-\varepsilon_1f^*_{0110},f^*_{1001}-\varepsilon_1f^*_{1010},-2f^*_{0001})$ and ${f^*}^{34}_{(0,1,\varepsilon_1,0)}=(0,f^*_{0101}+\varepsilon_1f^*_{0110},f^*_{1001}+\varepsilon_1f^*_{1010},0)$
, we have
\begin{equation}
    \left\{\begin{matrix}
f^*_{0101}-\varepsilon_1f^*_{0110}=f^*_{1001}-\varepsilon_1f^*_{1010}=0\\
\varepsilon_2(f^*_{0101}+\varepsilon_1f^*_{0110})=f^*_{1001}+\varepsilon_1f^*_{1010}
\end{matrix}\right. ,
\end{equation}where $\varepsilon_2=\pm\frak{i}$. Now,
\[
M(f^*)=\begin{pmatrix}
  f^*_{0000} &  f^*_{0001}&  \varepsilon_1f^*_{0001}& f^*_{0000}\\
  0 &  f^*_{0101}&  -\varepsilon_1f^*_{0101}& 0\\
  0 &  \varepsilon_2f^*_{0101}&  -\varepsilon_1\varepsilon_2f^*_{0101}& 0\\
  f^*_{0000} &  -f^*_{0001}&  -\varepsilon_1f^*_{0001}& f^*_{0000}
\end{pmatrix}.\] Then we have\[{f^*}^{23}_{=_2}=(f^*_{0000}-\varepsilon_1 f^*_{0101},f^*_{0001},-\varepsilon_1f^*_{0001},\varepsilon_2f^*_{0101}+f^*_{0000})\] and \[{f^*}^{23}_{=_2}=(f^*_{0000}+\varepsilon_1 f^*_{0101},f^*_{0001},-\varepsilon_1f^*_{0001},\varepsilon_2f^*_{0101}-f^*_{0000}).\] Note that $f^*_{0001}\ne0$, then $f^*_{0000}-\varepsilon_1 f^*_{0101}=f^*_{0000}+\varepsilon_1 f^*_{0101}=0$ implies that $f^*_{0000}=0$, a contradiction.
    \caseitem{2}  ${f^*}^{34}_{[1,0,-1]}= [1,0,-1]$ up to a nonzero scalar. In this case, we have
    \begin{equation}\label{casea19}
    \left\{\begin{matrix}
 f^*_{0000}-f^*_{0011}=-(f^*_{1100}-f^*_{1111})\ne0\\
f^*_{0100}-f^*_{0111}=f^*_{1000}-f^*_{1011}=0
\end{matrix}\right. .
    \end{equation}
    Now,
    \[
M(f^*)=\begin{pmatrix}
  f^*_{0000} &  f^*_{0001}&  f^*_{0010}& f^*_{0011}\\
  0 &  f^*_{0101}&  f^*_{0110}& 0\\
  0 &  f^*_{1001}&  f^*_{1010}& 0\\
  f^*_{0011} &  f^*_{1101}&  f^*_{1110}& f^*_{0000}
\end{pmatrix},\] where $f^*_{0000}+f^*_{0011}\ne0$ and $f^*_{0000}\ne f^*_{0011}$.
 By ${f^*}^{12}_{=_2}=(f^*_{0000}+f^*_{0011},f^*_{0001}+f^*_{1101},f^*_{0010}+f^*_{1110},f^*_{0011}+f^*_{0000})$ and ${f^*}^{12}_{[1,0,-1]}=(f^*_{0000}-f^*_{0011},f^*_{0001}-f^*_{1101},f^*_{0010}-f^*_{1110},f^*_{0011}-f^*_{0000})$, we have
\begin{equation}
    \left\{\begin{matrix}
 f^*_{0001}+f^*_{1101}=f^*_{0010}+f^*_{1110}=0\\
f^*_{0001}-f^*_{1101}=f^*_{0010}-f^*_{1110}=0
\end{matrix}\right. .
\end{equation} Then $ f^*_{0001}=f^*_{1101}=f^*_{0010}=f^*_{1110}=0$, and
\[
M(f^*)=\begin{pmatrix}
  f^*_{0000} &  0&  0& f^*_{0011}\\
  0 &  f^*_{0101}&  f^*_{0110}& 0\\
  0 &  f^*_{1001}&  f^*_{1010}& 0\\
  f^*_{0011} &  0&  0& f^*_{0000}
\end{pmatrix},\] $f^*$ has eight-vertex form and then the result follows from Lemma~\ref{haveeightvertex}.

    \caseitem{3}  ${f^*}^{34}_{[1,0,-1]}=(0,1,\frak{i},0)$ or ${f^*}^{34}_{[1,0,-1]}=(0,1,-\frak{i},0)$, both up to a nonzero scalar. In this case, we have
    \begin{equation}\label{case3}
    \left\{\begin{matrix}
 f^*_{0000}-f^*_{0011}=f^*_{1100}-f^*_{1111}=0\\
\varepsilon_1(f^*_{0100}-f^*_{0111})=f^*_{1000}-f^*_{1011}\ne0
\end{matrix}\right. ,
    \end{equation}where $\varepsilon_1=\pm\frak{i}$. Now,
\[
M(f^*)=\begin{pmatrix}
  f^*_{0000} &  f^*_{0001}&  f^*_{0010}& f^*_{0000}\\
  f^*_{0100} &  f^*_{0101}&  f^*_{0110}& -f^*_{0100}\\
  \varepsilon_1f^*_{0100} &  f^*_{1001}&  f^*_{1010}& -\varepsilon_1f^*_{0100}\\
  f^*_{0000} &  *&  *& f^*_{0000}
\end{pmatrix}\]where $f^*_{0100}\ne0$.
     If $f^*_{0000}=0$, then ${f^*}^{34}_{=_2}\equiv0$, a contradiction. If $f^*_{0000}\ne0$, then by ${f^*}^{12}_{=_2}=(2f^*_{0000},f^*_{0001}+f^*_{1101},f^*_{0010}+f^*_{1110},2f^*_{0000})$ and ${f^*}^{12}_{[1,0,-1]}=(0,f^*_{0001}-f^*_{1101},f^*_{0010}-f^*_{1110},0)$, we have
\begin{equation}
    \left\{\begin{matrix}
 f^*_{0001}+f^*_{1101}=f^*_{0010}+f^*_{1110}=0\\
\varepsilon_2(f^*_{0001}-f^*_{1101})=f^*_{0010}-f^*_{1110}
\end{matrix}\right. ,
\end{equation}where $\varepsilon_2=\pm\frak{i}$. Now,
\[
M(f^*)=\begin{pmatrix}
  f^*_{0000} &  f^*_{0001}&  \varepsilon_2f^*_{0001}& f^*_{0000}\\
  f^*_{0100} &  f^*_{0101}&  f^*_{0110}& -f^*_{0100}\\
  \varepsilon_1f^*_{0100} &  f^*_{1001}&  f^*_{1010}& -\varepsilon_1f^*_{0100}\\
  f^*_{0000} &  -f^*_{0001}&  -\varepsilon_2f^*_{0001}& f^*_{0000}
\end{pmatrix}.\]
If $f^*_{0001}=0$, i.e. ${f^*}^{12}_{=_2}=(=_2)$ and ${f^*}^{12}_{[1,0,-1]}\equiv0$. By symmetry, ${f^*}_{x_3,x_4,x_1,x_2}$ falls under \caseref{1}. If $f^*_{0001}\ne0$, by ${f^*}^{34}_{(0,1,-\varepsilon_2,0)}=(2f^*_{0001},f^*_{0101}-\varepsilon_2f^*_{0110},f^*_{1001}-\varepsilon_2f^*_{1010},-2f^*_{0001})$ and ${f^*}^{34}_{(0,1,\varepsilon_2,0)}=(0,f^*_{0101}+\varepsilon_2f^*_{0110},f^*_{1001}+\varepsilon_2f^*_{1010},0)$
, we have
\begin{equation}
    \left\{\begin{matrix}
f^*_{0101}-\varepsilon_2f^*_{0110}=f^*_{1001}-\varepsilon_2f^*_{1010}=0\\
\varepsilon_3(f^*_{0101}+\varepsilon_2f^*_{0110})=f^*_{1001}+\varepsilon_2f^*_{1010}
\end{matrix}\right. ,
\end{equation}where $\varepsilon_3=\pm\frak{i}$.
\[
M(f^*)=\begin{pmatrix}
  f^*_{0000} &  f^*_{0001}&  \varepsilon_2f^*_{0001}& f^*_{0000}\\
  f^*_{0100} &  f^*_{0101}&  -\varepsilon_2f^*_{0101}& -f^*_{0100}\\
  \varepsilon_1f^*_{0100} &  \varepsilon_3f^*_{0101}&  -\varepsilon_2\varepsilon_3f^*_{0101}& -\varepsilon_1f^*_{0100}\\
  f^*_{0000} &  -f^*_{0001}&  -\varepsilon_2f^*_{0001}& f^*_{0000}
\end{pmatrix}.\]
By ${f^*}^{13}_{=_2}=(f^*_{0000}-\varepsilon_2\varepsilon_3f^*_{0101},f^*_{0001}-\varepsilon_1f^*_{0100},f^*_{0100}-\varepsilon_2f^*_{0001},f^*_{0101}+f^*_{0000})$ and ${f^*}^{13}_{[1,0,-1]}=(f^*_{0000}+\varepsilon_2\varepsilon_3f^*_{0101},f^*_{0001}+\varepsilon_1f^*_{0100},f^*_{0100}+\varepsilon_2f^*_{0001},f^*_{0101}-f^*_{0000})$. We have
\begin{equation}
    \left\{\begin{matrix}
(f^*_{0000}-\varepsilon_2\varepsilon_3f^*_{0101})^2=(f^*_{0101}+f^*_{0000})^2\\
(f^*_{0001}-\varepsilon_1f^*_{0100})^4=(f^*_{0100}-\varepsilon_2f^*_{0001})^4\\
(f^*_{0000}+\varepsilon_2\varepsilon_3f^*_{0101})^2=(f^*_{0101}-f^*_{0000})^2\\
(f^*_{0001}+\varepsilon_1f^*_{0100})^4=(f^*_{0100}+\varepsilon_2f^*_{0001})^4
\end{matrix}\right. ,
\end{equation}
which implies that
\[
\left\{\begin{matrix}
 -\varepsilon_2\varepsilon_3=1\ \text{or}\ f^*_{0101}=0 \\
\varepsilon_1=\varepsilon_2\ \text{or}\ (f^*_{0100})^2=(f^*_{0001})^2
\end{matrix}\right. ,
\]
If $f^*_{0101}=0$, by connecting input $x_2$ of $(0,1,-\varepsilon_2,0)$ to input $x_4$ of $f^*$, $[1,0,-1]$ to input $x_1$ of $f^*$, we construct
\[M(f^{**})=\begin{pmatrix}
  f^*_{0001} &  -\varepsilon_2f^*_{0000}&  f^*_{0000}& f^*_{0001}\\
  0 &  -\varepsilon_2f^*_{0100}&  -f^*_{0100}& 0\\
  0 &  \varepsilon_1\varepsilon_2f^*_{0100}&  \varepsilon_1f^*_{0100}& 0\\
  f^*_{0001} &  \varepsilon_2f^*_{0000}&  -f^*_{0000}& f^*_{0001}
\end{pmatrix}.\] Then $f^{**}$ falls under \caseref{1} since $f^*_{0001}\ne0$.
If ${f^*}_{0100}^2=(f^*_{0001})^2$, then $(f^*_{0001}-\varepsilon_1f^*_{0100}),(f^*_{0100}-\varepsilon_2f^*_{0001}),(f^*_{0001}+\varepsilon_1f^*_{0100}),(f^*_{0100}+\varepsilon_2f^*_{0001})$ are nonzero. Then we have $f^*_{0000}-\varepsilon_2\varepsilon_3f^*_{0101}=f^*_{0101}+f^*_{0000}=f^*_{0000}+\varepsilon_2\varepsilon_3f^*_{0101}=f^*_{0101}-f^*_{0000}$, which implies that $f^*_{0000}=f^*_{0101}=0$, a contradiction. Thus we have $\varepsilon_1=\varepsilon_2=\varepsilon_3$, then
\[
M(f^*)=\begin{pmatrix}
  f^*_{0000} &  f^*_{0001}&  \varepsilon_1f^*_{0001}& f^*_{0000}\\
  f^*_{0100} &  f^*_{0101}&  -\varepsilon_1f^*_{0101}& -f^*_{0100}\\
  \varepsilon_1f^*_{0100} &  \varepsilon_1f^*_{0101}&  f^*_{0101}& -\varepsilon_1f^*_{0100}\\
  f^*_{0000} &  -f^*_{0001}&  -\varepsilon_1f^*_{0001}& f^*_{0000}
\end{pmatrix}.\]
${f^*}^{13}_{=_2}=(f^*_{0000}+f^*_{0101},f^*_{0001}-\varepsilon_1f^*_{0100},f^*_{0100}-\varepsilon_1f^*_{0001},f^*_{0101}+f^*_{0000})$ and ${f^*}^{13}_{[1,0,-1]}=(f^*_{0000}-f^*_{0101},f^*_{0001}+\varepsilon_1f^*_{0100},f^*_{0100}+\varepsilon_1f^*_{0001},f^*_{0101}-f^*_{0000})$

If $f^*_{0000} \ne f^*_{0101} $, by ${f^*}^{13}_{[1,0,-1]}=(f^*_{0000} - f^*_{0101},f^*_{0001} +\varepsilon_1f^*_{0100} , f^*_{0100}+\varepsilon_1f^*_{0001} ,f^*_{0101} - f^*_{0000})$, we have $f^*_{0001}+\varepsilon_1 f^*_{0100}= f^*_{0100}+\varepsilon_1 f^*_{0001}=0$, i.e. $f^*_{0100}=f^*_{0001}=0$, a contradiction.

If $f^*_{0000}=f^*_{0101}$, by ${f^*}^{13}_{=_2}=(f^*_{0000}+f^*_{0101},f^*_{0001}-\varepsilon_1f^*_{0100},f^*_{0100}-\varepsilon_1f^*_{0001},f^*_{0101}+f^*_{0000})$ we have $f^*_{0001}-\varepsilon_1f^*_{0100}=f^*_{0100}-\varepsilon_1f^*_{0001}=0$, i.e. $f^*_{0100}=f^*_{0001}=0$, a contradiction.\qedhere

\end{caselist}

\end{proof}
\begin{lemma}\label{bd4n>3}
    Let $f$ be a 4-ary signature, $\mathcal{GEQ}$ be a set of generalized equality binary signatures, $\mathcal{GDE}$ be a set of generalized disequality binary signatures, $|\mathcal{GEQ}|\ge3,|\mathcal{GDE}|\ge3$ and $(\ne_2)\in\mathcal{GDE}$ or $(0,1,-1,0)\in\mathcal{GDE}$.
    $\mathcal{B}=\mathcal{GEQ}\cup\mathcal{GDE}$, $S(f_{\mathcal{B}})\subseteq \mathbb{C}\mathcal{B}$. Then $\mathrm{Holant}(=_2|f)$ is \#P-hard except for the following cases:
    \begin{itemize}
        \item $f \in \left \langle Z\mathcal{M} \right \rangle$ or $\in \left \langle ZX\mathcal{M} \right \rangle$;
        \item $f$ is $ \mathcal{A}\text{-}$transformable;
        \item $f$ is $ \mathcal{L}\text{-}$transformable;
        \item $f$ is $ \mathcal{P}\text{-}$transformable;
        \item $f\in\mathcal{H}$;
        \item $f \in \left \langle \mathcal{T} \right \rangle$,
    \end{itemize}
    in which cases the problem is computable in polynomial time.
\end{lemma}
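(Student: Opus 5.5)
We follow the template of Lemma~\ref{BD8} and Lemma~\ref{BD16/BD8}, aiming to show that the loop constraints force $f$ (after an affine normalisation) into eight-vertex form, so that Lemma~\ref{haveeightvertex} or Theorem~\ref{eightvertex=2} finishes the proof. Here $\mathcal{GEQ}$ consists of signatures $(1,0,0,t)$ and $\mathcal{GDE}$ of signatures $(0,1,s,0)$, all up to scalars.

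\textbf{Step 1: normalise $f$.} By Lemma~\ref{sym} we may assume $f^{34}_{=_2}\not\equiv0$. Then $f^{34}_{=_2}\in\mathbb{C}^\times\mathcal{B}$ is invertible, and Corollary~\ref{polyto-1} realises its inverse. Modifying input $x_1$ by this inverse gives $f^*$ with $S(f^*_{\mathcal B})\subseteq\mathbb C\mathcal B$ and ${f^*}^{34}_{=_2}$ equal to either $=_2$ or $\ne_2$.
\begin{itemize}
\item If ${f^*}^{34}_{=_2}$ is $\ne_2$, we exchange the roles of the pairs $\{x_1,x_2\}$, or permute inputs, so that ${f^*}^{34}_{=_2}=(=_2)$.
\item Each modification is by an element of a finite group, so reductions are preserved and $f$ is recovered from $f^*$ at the end.
\end{itemize}

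\textbf{Step 2: kill the odd-weight entries.} For every $g=(1,0,0,t)\in\mathcal{GEQ}$, the loop ${f^*}^{ij}_g$ lies in $\mathbb C\mathcal B$. Its off-diagonal entries are linear forms in the odd-weight entries of $f^*$ with coefficients $1$ and $t$. Its diagonal entries are linear in the even-weight entries.

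If every such loop is generalized equality (or zero), then for three distinct values of $t$ the system
\[
f^*_{\alpha}+t f^*_{\beta}=0
\]
holds, as in Subcase~2.1 of Lemma~\ref{Cn}. This forces all odd-weight entries to vanish, so $f^*$ has eight-vertex form.

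Otherwise some loop is a generalized disequality. For each loop position $\{i,j\}$ we then split into cases, as in Cases 1--3 of Lemma~\ref{BD8}, according to whether the loops with $=_2$, with the other $(1,0,0,t)$, and with the members of $\mathcal{GDE}$ are of equality type or disequality type.

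\textbf{Step 3: exploit $|\mathcal{GEQ}|,|\mathcal{GDE}|\ge3$.} This is the main obstacle. The key observation is that having three distinct parameters makes each "type" condition an affine constraint that holds for three values of $t$, and hence forces both coefficient combinations to vanish. Concretely:
\begin{itemize}
\item Suppose a loop with $(1,0,0,t)$ at positions $\{3,4\}$ is of disequality type. Then the diagonal combinations $f^*_{0000}+tf^*_{0011}$ and $f^*_{1100}+tf^*_{1111}$ vanish.
\item This can hold for at most one $t$ unless those entries are zero. So for the other two parameters the loop is of equality type, which yields the sign and $\varepsilon$ relations used in Lemma~\ref{BD8}.
\end{itemize}
Running these relations through the loops at positions $12$, $13$, $14$ and $23$, as in Lemmas~\ref{BD8} and~\ref{BD16/BD8}, should lead to one of three outcomes:
\begin{itemize}
\item a contradiction, of the form $f^*_{0000}=0$, or a loop landing outside $\mathbb{C}\mathcal{B}$;
\item eight-vertex form;
\item a form that, after a holographic transformation by $Z$ (possibly composed with $H$), becomes eight-vertex form with $\ne_2$ and $=_2$ both available.
\end{itemize}

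The delicate part is where the entries of $\mathcal{GDE}$ enter: their middle entries must pair consistently across all positions. We also need $(\ne_2)$ or $(0,1,-1,0)$ in $\mathcal{GDE}$ so that disequality-form loops can be converted, by modifying two inputs, into the outer-pair-nonzero form required by Lemma~\ref{haveeightvertex}.

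\textbf{Step 4: finish.} Once $f^*$ has eight-vertex form, apply Lemma~\ref{haveeightvertex} with $h=(0,1,-1,0)$, or with $\ne_2$ after scaling. Each of its outcomes is then handled as follows:
\begin{itemize}
\item a degenerate binary signature: apply Theorem~\ref{dicwithdegen};
\item $f^*\in\mathcal A$: gives $\mathcal A$-transformability;
\item a $\mathrm{CSP}^2$ reduction: apply Theorem~\ref{CSP2dich}, which yields the listed tractable classes or \#P-hardness.
\end{itemize}
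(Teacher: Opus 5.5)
There is a genuine gap in Steps~2--3. You correctly observe that a disequality-type loop ${f^*}^{34}_{[1,0,t]}$ forces $f^*_{0000}+tf^*_{0011}=f^*_{1100}+tf^*_{1111}=0$. But you do not draw the structural conclusion from this. Instead you defer to ``the sign and $\varepsilon$ relations used in Lemma~\ref{BD8}'' and assert that running them through other positions ``should lead to'' one of three outcomes.

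That case analysis does not transfer. In Lemmas~\ref{BD8} and~\ref{BD16/BD8}, every loop had to be a multiple of one of four explicit matrices, so entry ratios such as $\pm1$ or $\pm\frak{i}$ could be extracted. Here $\mathcal B$ is only assumed to contain at least three generalized equalities and three generalized disequalities. The only information a loop carries is therefore its \emph{type}: diagonal, anti-diagonal, or zero. There are no sign relations to propagate.

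Your list of outcomes is also not the right one. The alternative to eight-vertex form is not a shape that becomes eight-vertex after a $Z$ or $H$ transformation. It is a signature supported on odd-weight inputs.

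Step~1 is unjustified as stated. Preserving $S(f^*_{\mathcal B})\subseteq\mathbb C\mathcal B$ after modifying $x_1$ would need $\mathcal B$ to be closed under products with that inverse, and $f^{34}_{=_2}\in\mathbb C\mathcal B$ needs $=_2\in\mathcal B$. Neither is a hypothesis; $\mathcal B$ need not be a group, and only the monomial type survives such modifications.

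The missing idea is that type information alone suffices, applied directly to $f$ with no normalisation. Among the three loops $f^{34}_{[1,0,t_i]}$, two have the same type. Two equations with distinct parameters then give either $f_{0100}=f_{0111}=f_{1000}=f_{1011}=0$ or $f_{0000}=f_{0011}=f_{1100}=f_{1111}=0$. Your own Step~3 bullet is one line from this. Likewise, two of the three loops $f^{34}_{(0,1,s_i,0)}$ share a type, giving $f_{0101}=f_{0110}=f_{1001}=f_{1010}=0$ or $f_{0001}=f_{0010}=f_{1101}=f_{1110}=0$.

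Of the four combinations, one is eight-vertex form and one is odd-weight support. In the remaining two, a pair of rows of $M(f)$ vanishes, and one more pigeonhole with the loops at positions $\{1,2\}$ again yields one of these two shapes. Eight-vertex form is handled by Theorem~\ref{eightvertex=2}. For odd-weight support, modify a \emph{single} input by $\ne_2$ or $(0,1,-1,0)$. This flips parity and gives an eight-vertex signature, to which Lemma~\ref{haveeightvertex} applies once $f\in\langle\mathcal T\rangle$ has been set aside. This single-input parity flip is where the hypothesis on $\mathcal{GDE}$ is really used, not the two-input modification you mention.
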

\begin{proof}
We may assume that $f\notin\langle\mathcal T\rangle$, since otherwise the stated tractable alternative holds. The invertible input modifications below preserve this exclusion, as noted in the definition of binary modification.

    Note that there exist at least three distinct binary signatures in $\mathcal{GEQ}$ or $\mathcal{GDE}$.
    Let \[\{[1,0,t_1],[1,0,t_2],[1,0,t_3]\}\subseteq \mathcal{GEQ}\] and \[\{(0,1,t_4,0),(0,1,t_5,0),(0,1,t_6,0)\}\subseteq \mathcal{GDE}.\] In \[{f}_{[1,0,t_1]}^{34},{f}_{[1,0,t_2]}^{34},{f}_{[1,0,t_3]}^{34},\] there exist at least two distinct $i,j\in\{1,2,3\}$, such that ${f}_{[1,0,t_i]}^{34}$ and ${f}_{[1,0,t_j]}^{34}$ are both generalized equality  binary signatures or both generalized disequality binary signatures. If they are both generalized equality  binary signatures, we have \[
      \left\{\begin{matrix}
f_{0100}+t_if_{0111}=f_{1000}+t_if_{1011}=0\\
f_{0100}+t_jf_{0111}=f_{1000}+t_jf_{1011}=0
\end{matrix}\right. .
    \] This implies that $f_{0100}=f_{0111}=f_{1000}=f_{1011}=0$. If they are both generalized disequality binary signatures, we have \[
      \left\{\begin{matrix}
f_{0000}+t_if_{0011}=f_{1100}+t_if_{1111}=0\\
f_{0000}+t_jf_{0011}=f_{1100}+t_jf_{1111}=0
\end{matrix}\right. .
    \] This implies that $f_{0000}=f_{0011}=f_{1100}=f_{1111}=0$. Similarly,  there are also two cases in \[f_{(0,1,t_4,0)}^{34},f_{(0,1,t_5,0)}^{34},f_{(0,1,t_6,0)}^{34},\] which imply that $f_{0001}=f_{0010}=f_{1101}=f_{1110}=0$ or $f_{0101}=f_{0110}=f_{1001}=f_{1010}=0$. Thus, there are four cases:
    \begin{caselist}
        \caseitem{1}  \[M(f)=\begin{pmatrix}
  f_{0000}&  f_{0001}&  f_{0010}& f_{0011}\\
  0 &  0& 0& 0\\
 0 &  0& 0&0\\
  {f_{1100}} & {f_{1101}}&  {f_{1110}}& {f_{1111}}
\end{pmatrix}.\] In this case, we further consider \[{f}_{[1,0,t_1]}^{12},{f}_{[1,0,t_2]}^{12},{f}_{[1,0,t_3]}^{12}.\] This implies that \[M(f)=\begin{pmatrix}
  f_{0000}& 0&  0& f_{0011}\\
  0 &  0& 0& 0\\
 0 &  0& 0&0\\
  {f_{1100}} & 0&  0& {f_{1111}}
\end{pmatrix}\] or \[M(f)=\begin{pmatrix}
  0&  f_{0001}&  f_{0010}& 0\\
  0 &  0& 0& 0\\
 0 &  0& 0&0\\
  0 & {f_{1101}}&  {f_{1110}}& 0
\end{pmatrix}.\]
 \caseitem{2}  \[M(f)=\begin{pmatrix}
  f_{0000}&  0&  0& f_{0011}\\
  0 &  {f_{0101}}& {f_{0110}}& 0\\
 0 &  {f_{1001}}& {f_{1010}}&0\\
  {f_{1100}} & 0&  0& {f_{1111}}
\end{pmatrix}\]
 \caseitem{3}  \[M(f)=\begin{pmatrix}
  0&  0&  0& 0\\
  {f_{0100}} &  {f_{0101}}& {f_{0110}}& {f_{0111}}\\
 {f_{1000}} &  {f_{1001}}& {f_{1010}}& {f_{1011}}\\
  0 &0&  0& 0
\end{pmatrix}.\] In this case, we further consider \[f_{(0,1,t_4,0)}^{34},f_{(0,1,t_5,0)}^{34},f_{(0,1,t_6,0)}^{34}.\] This implies that \[M(f)=\begin{pmatrix}
  0&  0&  0& 0\\
  {f_{0100}} & 0& 0& {f_{0111}}\\
 {f_{1000}} &  0& 0& {f_{1011}}\\
  0 &0&  0& 0
\end{pmatrix}\] or \[M(f)=\begin{pmatrix}
  0&  0&  0& 0\\
0&  {f_{0101}}& {f_{0110}}& 0\\
 0 &  {f_{1001}}& {f_{1010}}& 0\\
  0 &0&  0& 0
\end{pmatrix}.\]
 \caseitem{4}  \[M(f)=\begin{pmatrix}
  0&  f_{0001}&  f_{0010}&0\\
  {f_{0100}} &  0&0& {f_{0111}}\\
 {f_{1000}} &  0& 0& {f_{1011}}\\
  0 & {f_{1101}}&  {f_{1110}}& 0
\end{pmatrix}\]
    \end{caselist}
    In summary, \[M(f)=\begin{pmatrix}
  f_{0000}&  0&  0& f_{0011}\\
  0 &  {f_{0101}}& {f_{0110}}& 0\\
 0 &  {f_{1001}}& {f_{1010}}&0\\
  {f_{1100}} & 0&  0& {f_{1111}}
\end{pmatrix}\] or \[M(f)=\begin{pmatrix}
  0&  f_{0001}&  f_{0010}&0\\
  {f_{0100}} &  0&0& {f_{0111}}\\
 {f_{1000}} &  0& 0& {f_{1011}}\\
  0 & {f_{1101}}&  {f_{1110}}& 0
\end{pmatrix}.\]

If \[M(f)=\begin{pmatrix}
  f_{0000}&  0&  0& f_{0011}\\
  0 &  {f_{0101}}& {f_{0110}}& 0\\
 0 &  {f_{1001}}& {f_{1010}}&0\\
  {f_{1100}} & 0&  0& {f_{1111}}
\end{pmatrix},\] the result follows from Theorem~\ref{eightvertex=2}. If \[M(f)=\begin{pmatrix}
  0&  f_{0001}&  f_{0010}&0\\
  {f_{0100}} &  0&0& {f_{0111}}\\
 {f_{1000}} &  0& 0& {f_{1011}}\\
  0 & {f_{1101}}&  {f_{1110}}& 0
\end{pmatrix},\] by applying a binary modification using $(\ne_2)$ or $(0,1,-1,0)$, we construct a new 4-ary signature $f^*$ with eight-vertex form and complete the proof by Lemma~\ref{haveeightvertex}.
\end{proof}
\begin{lemma}\label{dicbd4n}
If $\mathbf{H}^{(m)}_{f}=PBD_{4n}P^{-1}$, then $\mathrm{Holant}(=_2|f)$ is \#P-hard except for the following cases:
    \begin{itemize}
        \item $f \in \left \langle Z\mathcal{M} \right \rangle$ or $\in \left \langle ZX\mathcal{M} \right \rangle$;
        \item $f$ is $ \mathcal{A}\text{-}$transformable;
        \item $f$ is $ \mathcal{L}\text{-}$transformable;
        \item $f$ is $ \mathcal{P}\text{-}$transformable;
        \item $f\in\mathcal{H}$;
        \item $f \in \left \langle \mathcal{T} \right \rangle$,
    \end{itemize}
    in which cases the problem is computable in polynomial time.
\end{lemma}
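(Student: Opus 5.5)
The plan mirrors the proof of Lemma~\ref{Cn}. First I normalize the group, then read off which of Lemmas~\ref{BD8}, \ref{BD16/BD8}, \ref{bd4n>3} applies.

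\textbf{Normalization.} By Lemma~\ref{normal}, $S=(P^TP)^{-1}$ is a symmetric matrix in $\mathcal{N}_{\mathrm{SL}(2,\mathbb{C})}(BD_{4n})$. By Lemma~\ref{normalizer}, this normalizer is $BD_{8n}$ for $n\ge3$ and $BO_{48}$ for $n=2$; the case $n=1$ is cyclic ($BD_4=C_4$) and falls under Lemma~\ref{Cn}. The symmetric elements of $BD_{8n}$ are the diagonal ones $\pm I$ and the antidiagonal $\begin{psmallmatrix}0&b\\-b^{-1}&0\end{psmallmatrix}$ with $b=-b^{-1}$, i.e.\ $b=\pm\frak{i}$. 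For $BO_{48}$ there is a short explicit list of symmetric elements.

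A holographic transformation by $P$ gives
\[\mathrm{Holant}(=_2|f,\mathbf H^{(m)}_f)\equiv_T\mathrm{Holant}(S^{-1}|f',BD_{4n}S),\qquad f'=(P^{-1})^{\otimes4}f.\]
I then apply a second transformation $Q$ with $Q^TQ=S^{-1}$ chosen to turn $S^{-1}$ into $=_2$; in the $S=\pm\frak{i}\begin{psmallmatrix}0&1\\-1&0\end{psmallmatrix}$-type cases I would instead first pass to the $\ne_2$ frame via $Z$. After this, the available binary signatures are an explicit conjugate $G'$ of $BD_{4n}$, realized as matrices $M(h)=gS$ with $g\in BD_{4n}$. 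Stability means $S(f''_{G'})\subseteq\mathbb{C}G'$, which is exactly the hypothesis $S(f_\mathcal{B})\subseteq\mathbb{C}\mathcal{B}$ of the three preparatory lemmas.

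\textbf{Identifying the case.} For $n\ge3$ the set $G'$ contains at least three distinct generalized equalities $[1,0,\varepsilon^{2k}]$, coming from the $A^k$, up to scalar. It also contains at least three generalized disequalities, coming from the $A^kB$, and one of them is $\ne_2$ or $(0,1,-1,0)$. So Lemma~\ref{bd4n>3} applies directly. For $n=2$, $BD_8$ (the quaternion group) yields exactly the set $\mathcal{B}=\{=_2,[\frak{i},0,-\frak{i}],[0,\frak{i},0],(0,1,-1,0)\}$ of Lemma~\ref{BD8} when $S$ is diagonal. When $S$ is one of the non-diagonal symmetric elements of $BO_{48}$, a $\tfrac{1}{\sqrt2}(1\pm\frak{i})$-type matrix, it yields the set of Lemma~\ref{BD16/BD8}. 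Each of these lemmas gives the required dichotomy for $\mathrm{Holant}(=_2|f)$, since the transformations used are invertible and preserve the listed tractable classes.

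\textbf{Main obstacle.} I expect the hardest step to be the $n=2$ bookkeeping. There I must enumerate the symmetric elements of $BO_{48}$, reduce them modulo the residual freedom (conjugating by $\mathcal{N}(BD_8)$ and scalars), and check that every case matches Lemma~\ref{BD8} or Lemma~\ref{BD16/BD8} up to a further diagonal or $Z$ transformation. My first attempt would be to list $S$ as $I$, $\begin{psmallmatrix}\frak{i}&0\\0&-\frak{i}\end{psmallmatrix}$, the symmetric order-8 elements $\tfrac{1}{\sqrt2}\begin{psmallmatrix}1+\frak{i}&0\\0&1-\frak{i}\end{psmallmatrix}$, and their $BD_8$-translates. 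For each, I would compute the coset $BD_8S$ directly and compare it with the two sets $\mathcal{B}$.
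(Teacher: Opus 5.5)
Your architecture is the paper's: Lemma~\ref{normal} and Lemma~\ref{normalizer}, a split on $n$, a normalizing transformation, then Lemmas~\ref{BD8}, \ref{BD16/BD8}, \ref{bd4n>3}. The gap is in the case split on $S$, which is the actual content of this lemma.

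\textbf{The list of symmetric $S$ for $n\ge3$ is incomplete.} Every diagonal element $\mathrm{diag}(a,a^{-1})$ of $BD_{8n}$ is symmetric, not only $\pm I$. So you have dropped $S^{-1}=\mathrm{diag}(a,a^{-1})$ with $a^{4n}=1$, $a\ne\pm1$. Separately, the symmetric antidiagonal element is $\begin{psmallmatrix}0&\frak{i}\\ \frak{i}&0\end{psmallmatrix}\propto(\ne_2)$; your $\frak{i}\begin{psmallmatrix}0&1\\-1&0\end{psmallmatrix}$ is antisymmetric.

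The paper splits instead on whether $S\in BD_{4n}$. If so, $BD_{4n}S=BD_{4n}$ contains $S^2$ and $S^{-1}$. The left-hand $S^{-1}$ and $=_2$ are then interchangeable by gadgets, and no further transformation is needed; this also covers the $\ne_2$-type $S$ when $n$ is even. Only $S\in BD_{8n}\setminus BD_{4n}$ is transformed:
\begin{itemize}
\item when $S$ is diagonal, by $\mathrm{diag}(a^{-1/2},a^{1/2})$;
\item when $S\propto(\ne_2)$ with $n$ odd, by $\mathrm{diag}(1,\frak{i})$ inside the $\ne_2$ frame.
\end{itemize}
That last step is genuinely needed. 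In the $\ne_2$ frame (no $Z$ required, since $S^{-1}\propto\ne_2$ already) the diagonal binaries are $A^kBS\propto\mathrm{diag}(\varepsilon^k,-\varepsilon^{-k})$, not the $A^kS$. For odd $n$ none of them is $=_2$, so Lemma~\ref{bd4n>3} does not yet apply. After $\mathrm{diag}(1,\frak{i})$ the right-hand set becomes $\{\mathrm{diag}(\varepsilon^k,\varepsilon^{-k}),(0,\varepsilon^k,\varepsilon^{-k},0)\}$. This contains $=_2$, so left $\ne_2$ and left $=_2$ become interchangeable.

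\textbf{The $\ne_2$ hypothesis can fail in the dropped case.} You claim some generalized disequality is $\ne_2$ or $(0,1,-1,0)$; this needs an argument. After normalization the disequalities are $(0,a\varepsilon^k,-a^{-1}\varepsilon^{-k},0)$. When $n$ is even and $a^{2n}=-1$, the ratio $-a^2\varepsilon^{2k}$ is never $\pm1$. So the last step of Lemma~\ref{bd4n>3} (via Lemma~\ref{haveeightvertex}) must be run with an arbitrary generalized disequality. The paper also invokes Lemma~\ref{bd4n>3} there without comment.

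\textbf{The $n=2$ test fails in both directions.} Your rule ``diagonal $\Rightarrow$ Lemma~\ref{BD8}, non-diagonal $\Rightarrow$ Lemma~\ref{BD16/BD8}'' is wrong:
\begin{itemize}
\item $S=\tfrac{1}{\sqrt2}\mathrm{diag}(1+\frak{i},1-\frak{i})\notin BD_8$ is diagonal, but it leads to $\{=_2,[1,0,-1],(0,1,\pm\frak{i},0)\}$, the set of Lemma~\ref{BD16/BD8};
\item $S\propto(\ne_2)\in BD_8$ is non-diagonal, but $BD_8S=BD_8$ is the set of Lemma~\ref{BD8}.
\end{itemize}
The correct test is again membership in $BD_8$. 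The paper lists the symmetric elements of $BO_{48}\setminus BD_8$, up to scalar, as $\mathrm{diag}(1,\pm\frak{i})$, $\begin{psmallmatrix}1&\pm1\\ \pm1&-1\end{psmallmatrix}$ and $\begin{psmallmatrix}1&\pm\frak{i}\\ \pm\frak{i}&1\end{psmallmatrix}$. It maps each of these to the set of Lemma~\ref{BD16/BD8} by explicit diagonal and $Z$-type transformations.
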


\begin{proof}
    If $n=1$, then $BD_{4}$ degenerates into a cyclic group(up to conjugacy), and we omit it.
In Holant($=_2|f,PBD_{4n}P^{-1}$), by a holographic transformation using $P$, we have \[\text{Holant}(S^{-1}|f',BD_{4n}S)\equiv_T\text{Holant}(=_2|f,PBD_{4n}P^{-1}),\]
where $f'=(P^{-1})^{\otimes 4}f$, $S^{-1}=P^TP$. Then we divide the proof into two cases: $n = 2$ in \caseref{1} and $n \ge 3$ in \caseref{2}.
\begin{caselist}
        \caseitem{1}  $n=2$.

        In this case, by Lemma~\ref{normalizer}, we have  $\mathcal{N}_{\mathrm{SL}(2,\mathbb{C})}(BD_{8})=BO_{48}$. If $S^{-1}\in BD_8$, then $BD_{8}S=BD_{8}$. Note that $(=_2)\in BD_{8}$, then by Lemma~\ref{poly=_2},
        we have
        \[\text{Holant}(=_2|f',BD_{8})\equiv_T\text{Holant}(S^{-1}|f',BD_{8}S).\] Then $\mathbf{H}^{(m)}_{f'}=BD_{8}$ and the result follows from Lemma~\ref{BD8}.
        If $S^{-1}\in BO_{48}\setminus BD_{8}$, note that $S^{-1}=P^{T}P$ is symmetric, then $S$ up to a scalar, has the form $\begin{pmatrix}
  1&0 \\
  0&\epsilon\frak{i}
\end{pmatrix}$, or $\begin{pmatrix}
  1&\epsilon \\
  \epsilon&-1
\end{pmatrix}$, or $\begin{pmatrix}
  1&\epsilon\frak{i} \\
  \epsilon\frak{i}&1
\end{pmatrix}$, where $\epsilon=\pm1$. Let $D_{\pm\epsilon}=\begin{pmatrix}
  1&0 \\
  0&(\pm\epsilon\frak{i})^{-\frac{1}{2}}
\end{pmatrix}$. We consider the following cases.

\begin{caselist}
    \subcaseitem{1.1}  $S^{-1}$ has the form  $\begin{pmatrix}
  1&0 \\
  0&\epsilon\frak{i}
\end{pmatrix}$.
Note that \[BD_{8}S=\{[1,0,\frak{i}],[1,0,-\frak{i}],(0,\frak{i},1,0),(0,-\frak{i},1,0) \}\] for any $\epsilon$.
By a holographic transformation using $D_{\epsilon}$ in $\text{Holant}(S^{-1}|f',BD_{8}S)$, we have \[\text{Holant}(=_2|f'',D_{\epsilon}^{-1}BD_{8}S(D_{\epsilon}^{-1})^T)\equiv_T\text{Holant}(S^{-1}|f',BD_{8}S).\] Note that $D_{\epsilon}^{-1}BD_{8}S(D_{\epsilon}^{-1})^T=\{=_2,[1,0,-1],(0,1,\frak{i},0),(0,1,-\frak{i},0)\}$, then the result follows from Lemma~\ref{BD16/BD8}.

\subcaseitem{1.2}   $S^{-1}$ has the form  $\begin{pmatrix}
  1&\epsilon \\
  \epsilon&-1
\end{pmatrix}$. Note that \[BD_{8}S=\{[1,1,-1],[-1,1,1],(1,1,-1,1),(1,-1,1,1)\}\] for any $\epsilon$ up to a nonzero scalar.
By a holographic transformation using $ZD_{-\epsilon}$, we have

 \[\text{Holant}(=_2|f'',(D_{-\epsilon}^{-1}Z^{-1})BD_{8}S(D_{-\epsilon}^{-1}Z^{-1})^T)\equiv_T\text{Holant}(S^{-1}|f',BD_{8}S).\] Note that $(D_{-\epsilon}^{-1}Z^{-1})BD_{8}S(D_{-\epsilon}^{-1}Z^{-1})^T=\{=_2,[1,0,-1],(0,1,\frak{i},0),(0,1,-\frak{i},0)\}$.
Then the result follows from Lemma~\ref{BD16/BD8}.
\subcaseitem{1.3}   $S^{-1}$ has the form  $\begin{pmatrix}
  1&\epsilon\frak{i} \\
  \epsilon\frak{i}&1
\end{pmatrix}$. By a holographic transformation using $Z^{-1}D_{-\epsilon}$, we have \[\text{Holant}(=_2|f'',(D^{-1}_{-\epsilon}Z)BD_{8}S(D^{-1}_{-\epsilon}Z)^T)\equiv_T\text{Holant}(S^{-1}|f',BD_{8}S).\] Note that $(D^{-1}_{-\epsilon}Z)BD_{8}S(D^{-1}_{-\epsilon}Z)^T=\{=_2,[1,0,-1],(0,1,\frak{i},0),(0,1,-\frak{i},0)\}$, then the result follows from Lemma~\ref{BD16/BD8}.

\end{caselist}
\caseitem{2}  $n>2$. In this case, by Lemma~\ref{normalizer}, we have $\mathcal{N}_{\mathrm{SL}(2,\mathbb{C})}(BD_{4n})=BD_{8n}.$

If $S^{-1}\in BD_{4n}$,
     then $BD_{4n}S=BD_{4n}$. Note that $(=_2)\in BD_{4n}$, by Lemma~\ref{poly=_2}, we have
    \[
    \text{Holant}(=_2|f',BD_{4n})\equiv_T\text{Holant}(S^{-1}|f',BD_{4n}S).
    \] Then we have $S(f'_{BD_{4n}})\subseteq BD_{4n}$. The result follows from Lemma~\ref{bd4n>3}.

    If $S^{-1}\in BD_{8n}\setminus BD_{4n}$, then $S^{-1}$ has the form $\begin{pmatrix}
  a&0 \\
  0&a^{-1}
\end{pmatrix}$ or $\begin{pmatrix}
  0&1 \\
  1&0
\end{pmatrix}$.
If $S^{-1}$ has the form $\begin{pmatrix}
  a&0 \\
  0&a^{-1}
\end{pmatrix}$. By a holographic transformation using $D=\begin{pmatrix}
  a^{-\frac{1}{2}}&0 \\
  0&a^{\frac{1}{2}}
\end{pmatrix}$, we have
\[\text{Holant}(=_2|f'',D^{-1}BD_{4n}S(D^{-1})^T)\equiv_T\text{Holant}(S^{-1}|f',BD_{4n}S).\]
The result follows from Lemma~\ref{bd4n>3}.

If $S^{-1}$ has the form $\begin{pmatrix}
  0&1 \\
  1&0
\end{pmatrix}$, $\text{Holant}(S^{-1}|f',BD_{4n}S)$ reduce to
$\text{Holant}(\ne_2|f',BD_{4n}(\ne_2))$.
$BD_{4n}(\ne_2)$ is made up of $\begin{pmatrix}
 \varepsilon^k & 0\\
  0& -\varepsilon^{-k}
\end{pmatrix}$ and $\begin{pmatrix}
 0 & \varepsilon^k\\
  \varepsilon^{-k}& 0
\end{pmatrix}$ for $k\in[2n]$. By a holographic transformation using $\begin{pmatrix}
 1 & 0\\
  0& i
\end{pmatrix}$, we have \[
\text{Holant}(\ne_2|f'',(BD_{4n}(\ne_2))')\equiv_T \text{Holant}(\ne_2|f',BD_{4n}(\ne_2)),
\] where $(BD_{4n}(\ne_2))'$ is made up of $\begin{pmatrix}
 \varepsilon^k & 0\\
  0& \varepsilon^{-k}
\end{pmatrix}$ and $\begin{pmatrix}
 0 & \varepsilon^k\\
  \varepsilon^{-k}& 0
\end{pmatrix}$. Note that $=_2\in (BD_{4n}(\ne_2))'$. By Lemma~\ref{bd4n>3} the result follows.\qedhere
    \end{caselist}

\end{proof}
\subsubsection{\texorpdfstring{Conjugacy of $\mathbf{H}^{(m)}_{f}$ to $BT_{24}$}{Conjugacy of H\_f\textasciicircum (m) to BT\_24}}
We list representatives, up to nonzero scalar multiples, of the elements of $BT_{24}$ and divide them into the following disjoint sets.
\[\mathcal{GEQ}= \left\{
\begin{pmatrix}
1  & 0\\
 0 &1
\end{pmatrix},
\begin{pmatrix}
1  & 0\\
 0 &-1
\end{pmatrix}\right \}\]
\[\mathcal{GDE}=\left\{\begin{pmatrix}
0 & 1\\
 1 &0
\end{pmatrix},
\begin{pmatrix}
0 & 1\\
 -1 &0
\end{pmatrix}\right \}\]
{\small \[\text{Z-family}=\left \{\begin{pmatrix}
1 & 1\\
 i &-i
\end{pmatrix},
\begin{pmatrix}
1 & -i\\
 -1 &-i
\end{pmatrix},
\begin{pmatrix}
1 & i\\
 1 &-i
\end{pmatrix},
\begin{pmatrix}
1 & -1\\
 -i &-i
\end{pmatrix},\begin{pmatrix}
-i & -1\\
 -i &1
\end{pmatrix},
\begin{pmatrix}
-i & i\\
 1 &1
\end{pmatrix},
\begin{pmatrix}
-i & -i\\
 -1 &1
\end{pmatrix},
\begin{pmatrix}
-i & 1\\
 i &1
\end{pmatrix}\right \}\]}

We use the following fact.
\begin{remark}
    For any $h\in\mathbb{C}^{\times} BT_{24}$, if $\frac{h_{00}}{h_{11}}=\pm1$, then $h\in \mathbb{C}^{\times}\mathcal{GEQ}$; if $\frac{h_{01}}{h_{10}}=\pm1$, then $h\in \mathbb{C}^{\times}\mathcal{GDE}$; if $\frac{h_{00}}{h_{11}}=\pm\frak{i}$ or $\frac{h_{01}}{h_{10}}=\pm\frak{i}$, then $h\in \mathbb{C}^{\times} Z\mathrm{-family}$.
\end{remark}
In the proof of the following lemma, we will frequently use this fact without explanation.

\begin{lemma}\label{BT24nsolve}
 Let $f$ be a 4-ary signature,  $S(f_{BT_{24}})\subseteq BT_{24}$. Then $\mathrm{Holant}(=_2|f)$ is \#P-hard except for the following cases:
    \begin{itemize}
        \item $f \in \left \langle Z\mathcal{M} \right \rangle$ or $\in \left \langle ZX\mathcal{M} \right \rangle$;
        \item $f$ is $ \mathcal{A}\text{-}$transformable;
        \item $f$ is $ \mathcal{L}\text{-}$transformable;
        \item $f$ is $ \mathcal{P}\text{-}$transformable;
        \item $f\in\mathcal{H}$;
        \item $f \in \left \langle \mathcal{T} \right \rangle$,
    \end{itemize}
    in which cases the problem is computable in polynomial time.
\end{lemma}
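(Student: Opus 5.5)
Our plan follows the template of Lemma~\ref{BD8} and Lemma~\ref{BD16/BD8}. By Lemma~\ref{sym} we may assume $f^{34}_{=_2}\not\equiv0$. Every element of $BT_{24}$ is affine and invertible, so Corollary~\ref{polyto-1} gives $(f^{34}_{=_2})^{-1}\in BT_{24}$ as a constructible signature. Applying it to input $x_1$ yields $f^*$ with ${f^*}^{34}_{=_2}=(=_2)$ and still $S(f^*_{BT_{24}})\subseteq\mathbb{C}BT_{24}$. By the lemma on binary modifications by affine full-rank $h$, $f^*$ lies in $\mathcal A$ or $\langle\mathcal T\rangle$ exactly when $f$ does. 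This normalization gives the linear relations $f^*_{0000}+f^*_{0011}=f^*_{1100}+f^*_{1111}$ (nonzero) and $f^*_{0100}+f^*_{0111}=f^*_{1000}+f^*_{1011}=0$.

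Next we use the loop with $[1,0,-1]\in\mathcal{GEQ}$ on inputs $3,4$. By the Remark, ${f^*}^{34}_{[1,0,-1]}$ falls into one of four types: zero or $\mathcal{GEQ}$, $\mathcal{GDE}$, or the Z-family, where the last type means the ratio of the two diagonal (or off-diagonal) entries is $\pm\frak{i}$. In each case the normalization pins down half of $M(f^*)$, as in the three cases of Lemma~\ref{BD8}. We then repeat the argument with loops on inputs $1,2$, using $=_2$ and $[1,0,-1]$, and with loops using the $\mathcal{GDE}$ elements $\ne_2$ and $(0,1,-1,0)$ on inputs $3,4$. Because $BT_{24}$ contains $BD_8$, all constraints from Lemma~\ref{BD8} still apply. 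The new feature is that loop images may also lie in the Z-family, which allows additional ratios $\pm\frak{i}$ besides $\pm1$.

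We then split on whether any loop image lies in the Z-family.

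\textbf{No image in the Z-family.} Then all images lie in $\mathbb{C}BD_8$, and the hypothesis of Lemma~\ref{BD8} holds for $f^*$. We conclude by that lemma, since the reduction there only needs $\mathcal B$ to be constructible, which it is.

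\textbf{Some image $g$ in the Z-family.} We want to reduce to an eight-vertex form. The Z-family consists of conjugates of $\mathcal{GEQ}\cup\mathcal{GDE}$ by the order-3 element of $BT_{24}$. So we apply a holographic transformation by that element, or by $Z$ composed with a diagonal $D_{\pm\epsilon}$ as in \caseref{1} of Lemma~\ref{dicbd4n}. This maps $BT_{24}$ to itself and carries $=_2$ to a form compatible with $\ne_2$ or $=_2$. After it, the offending image becomes generalized equality or disequality. Then, as in the $\epsilon_i$ bookkeeping of Lemma~\ref{BD8}, the sign and $\frak{i}$ constraints on the loops over the index pairs $\{1,2\},\{1,3\},\{1,4\},\{2,3\}$ either force all odd-weight entries to vanish, or produce a loop image outside $\mathbb{C}BT_{24}$, a contradiction. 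In the first outcome $f^*$, or $f^*$ after a modification by $\ne_2$, has eight-vertex form. We finish with Lemma~\ref{haveeightvertex}, which yields hardness, a degenerate binary signature (handled by Theorem~\ref{dicwithdegen}), $f^*\in\mathcal A$, or a reduction from $\mathrm{CSP}^2$ with Theorem~\ref{CSP2dich}.

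The main obstacle is this Z-family case. The number of sign and $\pm\frak{i}$ patterns across the six index pairs is large, and we need a symmetry reduction to keep the case analysis finite. We would exploit that the normalizer $BO_{48}$ acts on the three classes $\mathcal{GEQ}$, $\mathcal{GDE}$ and the Z-family, and that the index permutations of $f$ are also available, to reduce to a few representative patterns. Each pattern then either contradicts closure, for instance via a loop on inputs $1,4$ whose diagonal ratio is neither $\pm1$ nor $\pm\frak{i}$, or forces eight-vertex form.
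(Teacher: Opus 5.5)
\textbf{Gap in the Z-family branch.} Your first branch is fine. $\mathbb{C}BD_8=\mathbb{C}(\mathcal{GEQ}\cup\mathcal{GDE})$, so if no loop image lies in the Z-family, Lemma~\ref{BD8} applies. The problem is the Z-family branch, which carries all the real content of this lemma. The mechanism you propose there cannot work, because the Z-family is \emph{not} the set of conjugates of $\mathcal{GEQ}\cup\mathcal{GDE}$ by the order-3 element.
\begin{itemize}
\item $\mathcal{GEQ}\cup\mathcal{GDE}$ represents the quaternion subgroup $Q_8$, which is normal in $BT_{24}$ and in $BO_{48}$.
\item The Z-family represents $BT_{24}\setminus Q_8$, i.e.\ the elements of order $3$ and $6$.
\item Consider any holographic transformation, or any pair of modifications by $T$ and $(T^{-1})^T$, that keeps the available binaries equal to $\mathbb{C}BT_{24}$. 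It acts on loop images by $g\mapsto T^{-1}gT$, up to multiplication by the symmetric matrix $(T^TT)^{-1}$. Here $T$ normalizes $BT_{24}$.
\item $T^TT$ is a symmetric element of $\mathbb{C}BT_{24}$, hence lies in $\mathbb{C}Q_8$, since every Z-family element has $h_{01}/h_{10}=\pm\frak{i}$ and so none is symmetric.
\item Such a map preserves element orders and the coset $BT_{24}\setminus Q_8$. So a Z-family image stays in the Z-family and never becomes a generalized equality or disequality.
\end{itemize}
A transformation that does diagonalize an order-3 element necessarily leaves $BT_{24}$, so the bookkeeping of Lemma~\ref{BD8} no longer applies.

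The dichotomy you assert in this branch is also false as stated. Take $f=(=_2)\otimes g$ on inputs $\{1,2\},\{3,4\}$ with $g=\begin{psmallmatrix}1&1\\ \frak{i}&-\frak{i}\end{psmallmatrix}$. It satisfies the hypothesis: its loop images are $(h_{00}+h_{11})g$, scalar multiples of $I$, and products such as $hg$ and $hg^T$. It has Z-family images and produces no image outside $\mathbb{C}BT_{24}$. It is not of eight-vertex form, even after flipping inputs by $\ne_2$. It lies in $\langle\mathcal T\rangle$, so you must set $\langle\mathcal T\rangle$ aside at the start, as the paper does.

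\textbf{What is needed instead.} You need to prove, by direct computation, that the Z-family situation is impossible after normalization; this is what the paper does.
\begin{enumerate}
\item Suppose ${f^*}^{34}_{=_2}=(=_2)$ and ${f^*}^{34}_{[1,0,-1]}$ is in the Z-family. Then $f^*_{1000}=\pm\frak{i}f^*_{0100}\neq0$.
\item The loops with $\ne_2$ and $(0,1,-1,0)$ on inputs $1,2$ then have diagonal ratio $-1$, so they lie in $\mathcal{GEQ}$. This kills the middle $2\times2$ block.
\item The same loops on inputs $3,4$ force $(f^*_{1101},f^*_{1110})$ to equal $\pm(f^*_{0001},f^*_{0010})$ or $\pm(f^*_{0010},f^*_{0001})$.
\item Put $(a,b,c,d)=(f^*_{0000},f^*_{0011},f^*_{1100},f^*_{1111})$ and $t=c-d$. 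Then $a+b=c+d=s\neq0$ and $a-b=\delta t\ne0$ with $\delta=\pm\frak{i}$. Loops on inputs $1,3$ give $a^4=d^4$ and $b^4=c^4$.
\item Hence $a^4+b^4-c^4-d^4=-\tfrac32 s^2t^2=0$, so $t=0$ and $a=b$, a contradiction.
\end{enumerate}
By symmetry, $({f^*}^{ij}_{=_2})^{-1}{f^*}^{ij}_{[1,0,-1]}$ can never lie in the Z-family. Other Z-family images, for example ${f^*}^{12}_{=_2}$ alongside ${f^*}^{12}_{[1,0,-1]}\in\mathcal{GDE}$, are turned into exactly this forbidden situation by a $[1,0,-1]$ modification on input $1$. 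The remaining cases ($\mathcal{GEQ}$; $\mathcal{GDE}$, handled by conjugating with $Z$; and $=_2$ or zero) then lead to eight-vertex form, $\langle\mathcal T\rangle$, or contradictions. Without this impossibility argument, your Z-family branch has no way to close.
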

\begin{proof}
We may assume that $f\notin\langle\mathcal T\rangle$, since otherwise the stated tractable alternative holds. The invertible input modifications below preserve this exclusion, as noted in the definition of binary modification.

     Using the same procedure as in Lemma~\ref{BD8}, we construct $f^*$.
    We also have ${f^*}^{34}_{=_2}=(=_2)$. Then we have $-f^*_{0100}=f^*_{0111}$ and $-f^*_{1000}=f^*_{1011}$.
    For ${f^*}^{34}_{[1,0,-1]}$, we consider the following cases:
\begin{caselist}
\caseitem{1}  ${f^*}^{34}_{[1,0,-1]}\in Z\mathrm{-family}$ up to a nonzero scalar.

Note that ${f^*}^{34}_{[1,0,-1]}=(f^*_{0000}-f^*_{0011},2f^*_{0100},2f^*_{1000},f^*_{1100}-f^*_{1111})$. Then we have $\delta _1f^*_{0100}=f^*_{1000}$, where $\delta_1=\pm \frak{i}$ and $f^*_{0100}\ne 0$, since ${f^*}^{34}_{[1,0,-1]}\in \mathbb{C}^{\times} Z\mathrm{-family}$.
In this case, we have \[
M(f^*)=\begin{pmatrix}
  f^*_{0000} &  f^*_{0001}&  f^*_{0010}& f^*_{0011}\\
  f^*_{0100} &  *&  *& -f^*_{0100}\\
  \delta_1f^*_{0100} &  *&  *& -\delta_1 f^*_{0100}\\
  f^*_{1100} &  f^*_{1101}&  f^*_{1110}& f^*_{1111}
\end{pmatrix}.\]By ${f^*}^{12}_{\ne_2}=((1+\delta_1)f^*_{0100},f^*_{0101}+f^*_{1001},f^*_{0110}+f^*_{1010},-(1+\delta_1)f^*_{0100})$
and ${f^*}^{12}_{(0,1,-1,0)}=((1-\delta_1)f^*_{0100},f^*_{0101}-f^*_{1001},f^*_{0110}-f^*_{1010},-(1-\delta_1)f^*_{0100})$, we have $\{{f^*}^{12}_{\ne_2},{f^*}^{12}_{(0,1,-1,0)}\}\subseteq\mathcal{GEQ}$(up to a nonzero scalar) since $f^*_{0100}\ne0$. Then $f^*_{0101}=f^*_{1001}=f^*_{0110}=f^*_{1010}=0$.

\[
M(f^*)=\begin{pmatrix}
  f^*_{0000} &  f^*_{0001}&  f^*_{0010}& f^*_{0011}\\
  f^*_{0100} &  0&  0& -f^*_{0100}\\
  \delta_1f^*_{0100} &  0&  0& -\delta_1 f^*_{0100}\\
  f^*_{1100} &  f^*_{1101}&  f^*_{1110}& f^*_{1111}
\end{pmatrix}.\] By ${f^*}^{34}_{\ne_2}=[f^*_{0001}+f^*_{0010},0,f^*_{1101}+f^*_{1110}]$ and ${f^*}^{34}_{(0,1,-1,0)}=[f^*_{0001}-f^*_{0010},0,f^*_{1101}-f^*_{1110}]$, we have \[\left\{\begin{matrix}
 f^*_{0001}+f^*_{0010}=f^*_{1101}+f^*_{1110}\\
 f^*_{0001}-f^*_{0010}=f^*_{1101}-f^*_{1110}
\end{matrix}\right.\] or \[\left\{\begin{matrix}
 f^*_{0001}+f^*_{0010}=f^*_{1101}+f^*_{1110}\\
 f^*_{0001}-f^*_{0010}=f^*_{1110}-f^*_{1101}
\end{matrix}\right.\] or \[\left\{\begin{matrix}
 f^*_{0001}+f^*_{0010}=-f^*_{1101}-f^*_{1110}\\
 f^*_{0001}-f^*_{0010}=f^*_{1101}-f^*_{1110}
\end{matrix}\right.\] or \[\left\{\begin{matrix}
 f^*_{0001}+f^*_{0010}=-f^*_{1101}-f^*_{1110}\\
 f^*_{0001}-f^*_{0010}=f^*_{1110}-f^*_{1101}
\end{matrix}\right.\] which implies that
$\left\{\begin{matrix}
 f^*_{0001}=f^*_{1101}\\
 f^*_{0010}=f^*_{1110}
\end{matrix}\right.$ or $\left\{\begin{matrix}
 f^*_{0001}=f^*_{1110}\\
 f^*_{0010}=f^*_{1101}
\end{matrix}\right.$ or $\left\{\begin{matrix}
 f^*_{0001}=-f^*_{1110}\\
 f^*_{0010}=-f^*_{1101}
\end{matrix}\right.$ or $\left\{\begin{matrix}
 f^*_{0001}=-f^*_{1101}\\
 f^*_{0010}=-f^*_{1110}
\end{matrix}\right.$.
We conclude that
$\left\{\begin{matrix}
 f^*_{1101}=\epsilon_1 f^*_{0001}\\
 f^*_{1110}=\epsilon_1 f^*_{0010}
\end{matrix}\right.$ or $\left\{\begin{matrix}
 f^*_{1110}=\epsilon_1 f^*_{0001}\\
 f^*_{1101}=\epsilon_1 f^*_{0010}
\end{matrix}\right.$ where $\epsilon_1=\pm1$. Now, \[
M(f^*)=\begin{pmatrix}
  f^*_{0000} &  f^*_{0001}&  f^*_{0010}& f^*_{0011}\\
  f^*_{0100} &  0&  0& -f^*_{0100}\\
  \delta_1f^*_{0100} &  0&  0& -\delta_1 f^*_{0100}\\
  f^*_{1100} &  \epsilon_1 f^*_{0001}&  \epsilon_1 f^*_{0010}& f^*_{1111}
\end{pmatrix}\ \mathrm{or}\  \begin{pmatrix}
  f^*_{0000} &  f^*_{0001}&  f^*_{0010}& f^*_{0011}\\
  f^*_{0100} &  0&  0& -f^*_{0100}\\
  \delta_1f^*_{0100} &  0&  0& -\delta_1 f^*_{0100}\\
  f^*_{1100} &  \epsilon_1 f^*_{0010}&  \epsilon_1 f^*_{0001}& f^*_{1111}
\end{pmatrix}\]

We first deal with the case in which $\left\{\begin{matrix}
 f^*_{1101}=\epsilon_1 f^*_{0001}\\
 f^*_{1110}=\epsilon_1 f^*_{0010}
\end{matrix}\right.$.
Note that
\[
\left\{\begin{matrix}
 {f^*}^{34}_{=_2}=(f^*_{0000}+f^*_{0011},0,0,f^*_{1100}+f^*_{1111})=(=_2)\\
 {f^*}^{34}_{[1,0,-1]}=(f^*_{0000}-f^*_{0011},2f^*_{0100},2\delta_1f^*_{0100},f^*_{1100}-f^*_{1111})\\
 {f^*}^{12}_{[1,0,\epsilon_1]}=(f^*_{0000}+\epsilon_1f^*_{1100},2f^*_{0001},2f^*_{0010},f^*_{0011}+\epsilon_1f^*_{1111})\\
 {f^*}^{12}_{[1,0,-\epsilon_1]}=(f^*_{0000}-\epsilon_1f^*_{1100},0,0,f^*_{0011}-\epsilon_1f^*_{1111})\\
{f^*}^{13}_{=_2}=(f^*_{0000},f^*_{0001}-\delta_1f^*_{0100},f^*_{0100}+\epsilon_1f^*_{0010},f^*_{1111})\\
{f^*}^{13}_{\ne_2}=(f^*_{0010}+\delta_1f^*_{0100},f^*_{0011},f^*_{1100},-f^*_{0100}+\epsilon_1f^*_{0001})
\end{matrix}\right.
\]
We have \[
\left\{\begin{matrix}
 f^*_{0000}+f^*_{0011}=f^*_{1100}+f^*_{1111}\ne0\\
 f^*_{0000}-f^*_{0011}=\delta_2(f^*_{1100}-f^*_{1111})\ne0\\
f^*_{0000}+\epsilon_1f^*_{1100}=\varepsilon_1(f^*_{0011}+\epsilon_1f^*_{1111})\\
 f^*_{0000}-\epsilon_1f^*_{1100}=\epsilon_2(f^*_{0011}-\epsilon_1f^*_{1111})\\
f^*_{0000}=\varepsilon_2f^*_{1111}\\
f^*_{0011}=\varepsilon_3f^*_{1100}
\end{matrix}\right.
\]
Then $\{f^*_{0000},f^*_{0011},f^*_{1100},f^*_{1111}\}$ satisfies a homogeneous linear system of
equations whose coefficient matrix is:
\[
\begin{pmatrix}
  1& 1 &  -1& -1\\
  1&  -1&  -\delta_2&\delta_2 \\
  1&- \varepsilon_1 &  \epsilon_1 &-\epsilon_1\varepsilon_1 \\
  1&  -\epsilon_2&  -\epsilon_1 & \epsilon_1\epsilon_2\\
  1&  0& 0 &-\varepsilon_2 \\
  0& 1 & -\varepsilon_3 &0
\end{pmatrix},
\]where each $\epsilon=\pm 1$, $\delta=\pm i$, $\varepsilon^4=1$. To prove the needed consequence, write $(a,b,c,d)=(f^*_{0000},f^*_{0011},f^*_{1100},f^*_{1111})$, $s=a+b=c+d\ne0$, and $t=c-d$. The second equation gives $a-b=\delta t$ with $\delta^2=-1$. Thus $(a,b,c,d)=\frac12(s+\delta t,s-\delta t,s+t,s-t)$. The last two equations give $a^4=d^4$ and $b^4=c^4$, hence $0=a^4+b^4-c^4-d^4=-\frac32s^2t^2$. Therefore $t=0$ and $a=b=c=d$, contradicting the required nonzero difference $a-b$.

Then we deal with the case in which $\left\{\begin{matrix}
 f^*_{1110}=\epsilon_1 f^*_{0001}\\
 f^*_{1101}=\epsilon_1 f^*_{0010}
\end{matrix}\right.$. Note that  \[\left\{\begin{matrix}
 {f^*}^{12}_{=_2}=(f^*_{0000}+f^*_{1100},f^*_{0001}+\epsilon_1f^*_{0010},\epsilon_1(f^*_{0001}+\epsilon_1f^*_{0010}),f^*_{0011}+f^*_{1111})\\
 {f^*}^{12}_{[1,0,-1]}=(f^*_{0000}-f^*_{1100},f^*_{0001}-\epsilon_1f^*_{0010},-\epsilon_1(f^*_{0001}-\epsilon_1f^*_{0010}),f^*_{0011}-f^*_{1111})
\end{matrix}\right. .\] If $f^*_{0001}+\epsilon_1f^*_{0010}$ and $f^*_{0001}-\epsilon_1f^*_{0010}$ are nonzero, then $f^*_{0000}=f^*_{1100}=f^*_{0011}=f^*_{1111}=0$, a contradiction since we have ${f^*}^{34}_{=_2}=(=_2)$. If $f^*_{0001}+\epsilon_1f^*_{0010}=0$ or $f^*_{0001}-\epsilon_1f^*_{0010}=0$, then $f^*_{0001}=\epsilon_2\epsilon_1f^*_{0010}$, where $\epsilon_2=\pm1$. Now,
\[\left\{\begin{matrix}
 f^*_{1110}=\epsilon_1 f^*_{0001}=\epsilon_2f^*_{0010}\\
 f^*_{1101}=\epsilon_1 f^*_{0010}=\epsilon_2f^*_{0001}
\end{matrix}\right.,\]
this reduces to the case $\left\{\begin{matrix}
 f^*_{1101}=\epsilon_1 f^*_{0001}\\
 f^*_{1110}=\epsilon_1 f^*_{0010}
\end{matrix}\right.$.

We conclude that there does not exist a valid $f^*$ such that ${f^*}^{34}_{=_2}=(=_2)$ and ${f^*}^{34}_{[1,0,-1]}\in \mathbb{C}^{\times} Z\mathrm{-family}$. By symmetry, if there exists $1\le i,j\le4$, such that
${f^*}^{ij}_{=_2}\not\equiv 0$ and
$({f^*}^{ij}_{=_2})^{-1}{f^*}^{ij}_{[1,0,-1]}\in \mathbb{C}^{\times} Z\mathrm{-family}$, then such an $f^*$ is also not valid.
\caseitem{2}  ${f^*}^{34}_{[1,0,-1]}= [1,0,-1]$ up to a nonzero scalar.

Note that ${f^*}^{34}_{=_2}=(f^*_{0000}+f^*_{0011},f^*_{0100}+f^*_{0111},f^*_{1000}+f^*_{1011},f^*_{1100}+f^*_{1111})$ and ${f^*}^{34}_{[1,0,-1]}=(f^*_{0000}-f^*_{0011},f^*_{0100}-f^*_{0111},f^*_{1000}-f^*_{1011},f^*_{1100}-f^*_{1111})$. Then
    in this case, we have $f^*_{0100}=f^*_{1000}=f^*_{0111}=f^*_{1011}=0$ , $f^*_{0000}=f^*_{1111}$, $f^*_{0011}=f^*_{1100}$ , $f^*_{0000}+f^*_{0011}\ne0$ and $f^*_{0000}-f^*_{0011}\ne0$.

    Further, ${f^*}^{12}_{=_2}=(f^*_{0000}+f^*_{1100},f^*_{0001}+f^*_{1101},f^*_{0010}+f^*_{1110},f^*_{0011}+f^*_{1111})=(f^*_{0000}+f^*_{0011},f^*_{0001}+f^*_{1101},f^*_{0010}+f^*_{1110},f^*_{0000}+f^*_{0011})$ and ${f^*}^{12}_{[1,0,-1]}=(f^*_{0000}-f^*_{0011},f^*_{0001}-f^*_{1101},f^*_{0010}-f^*_{1110},f^*_{0000}-f^*_{0011})$. Then we have
    $f^*_{0001}=f^*_{1101}=f^*_{0010}=f^*_{1110}=0$, since $f^*_{0000}+f^*_{0011}\ne0$ and $f^*_{0000}-f^*_{0011}\ne0$.
Thus $f^*$ has eight-vertex form, by Lemma~\ref{haveeightvertex}, the result follows.

    \caseitem{3}  ${f^*}^{34}_{[1,0,-1]}\in \mathcal{GDE}$ up to a nonzero scalar.

    Note that $Z\ne_2Z^{-1}=[1,0,-1]$, $Z^{-1}(0,1,-1,0)Z=\frak{i}[1,0,-1]$. By connecting $Z$($Z^{-1}$) and $(Z^{-1})^T\ (Z^T)$ to inputs $x_1$ and $x_2$ of $f^*$,  the new 4-ary signature $f^{**}$ satisfies ${f^{**}}^{34}_{=_2}=(=_2)$ and ${f^{**}}^{34}_{[1,0,-1]}=\mathbb{C}[1,0,-1]$. Then $f^{**}$ falls under \caseref{2}. Thus $f^{**}$ has eight-vertex form, by Lemma~\ref{haveeightvertex}, the result follows.
 \caseitem{4}  ${f^*}^{34}_{[1,0,-1]}=(=_2)$ up to a nonzero scalar or ${f^*}^{34}_{[1,0,-1]}\equiv 0$.

 Note that ${f^*}^{34}_{=_2}=(f^*_{0000}+f^*_{0011},f^*_{0100}+f^*_{0111},f^*_{1000}+f^*_{1011},f^*_{1100}+f^*_{1111})$ and ${f^*}^{34}_{[1,0,-1]}=(f^*_{0000}-f^*_{0011},f^*_{0100}-f^*_{0111},f^*_{1000}-f^*_{1011},f^*_{1100}-f^*_{1111})$. Then
    in this case, we have $f^*_{0100}=f^*_{1000}=f^*_{0111}=f^*_{1011}=0$ , $f^*_{0000}=f^*_{1100}$, $f^*_{0011}=f^*_{1111}$ and $f^*_{0000}+f^*_{0011}\ne0$.
    \[
M(f^*)=\begin{pmatrix}
  f^*_{0000} &  f^*_{0001}&  f^*_{0010}& f^*_{0011}\\
  0 &  f^*_{0101}&  f^*_{0110}& 0\\
  0 &  f^*_{1001}&  f^*_{1010}& 0\\
  f^*_{0000} &  f^*_{1101}&  f^*_{1110}& f^*_{0011}
\end{pmatrix},\] where $f^*_{0000}+f^*_{0011}\ne0$.
Note that we also have $f^*_{0000}f^*_{0011}\ne0$, otherwise ${f^*}^{12}_{=_2}\notin BT_{24}$.

For ${f^*}^{12}_{[1,0,-1]}=(0,f^*_{0001}-f^*_{1101},f^*_{0010}-f^*_{1110},0)$, if there exists exactly one zero in $\{(f^*_{0001}-f^*_{1101}),(f^*_{0010}-f^*_{1110})\}$, then ${f^*}^{12}_{[1,0,-1]}$ is degenerate, a contradiction. Then we consider $f^*_{0001}=f^*_{1101}$ and $f^*_{0010}=f^*_{1110}$ in \subcaseref{4.1}, $(f^*_{0001}-f^*_{1101})(f^*_{0010}-f^*_{1110})\ne0$ in \subcaseref{4.2}.
\begin{caselist}
    \subcaseitem{4.1}    $f^*_{0001}=f^*_{1101}$ and $f^*_{0010}=f^*_{1110}$, then \[M(f^*)=\begin{pmatrix}
  f^*_{0000} &  f^*_{0001}&  f^*_{0010}& f^*_{0011}\\
  0 &  f^*_{0101}&   f^*_{0110}& 0\\
  0 &  f^*_{1001}&  f^*_{1010}& 0\\
  f^*_{0000} &  f^*_{0001}&  f^*_{0010}& f^*_{0011}
\end{pmatrix}.\] If $f^*_{0001}=f^*_{0010}=0$, $f^*$ has eight-vertex form, by Lemma~\ref{haveeightvertex}, the result follows. If there exists exactly one zero in $\{f^*_{0001},f^*_{0010}\}$, then ${f^*}^{12}_{=_2}\notin BT_{24}$. If $f^*_{0001}f^*_{0010}\ne0$, by ${f^*}^{13}_{=_2}=(f^*_{0000}+f^*_{1010},f^*_{0001},f^*_{0010},f^*_{0101}+f^*_{0011})$, ${f^*}^{13}_{[1,0,-1]}=(f^*_{0000}-f^*_{1010},f^*_{0001},-f^*_{0010},f^*_{0101}-f^*_{0011})$, we have $f^*_{0010}=\pm f^*_{0001}$ or $f^*_{0010}=\pm \frak{i}f^*_{0001}$.

If $f^*_{0010}=\pm f^*_{0001}$, then we have $f^*_{0000}=f^*_{0011}=f^*_{0101}=f^*_{1010}=0$ since $\{{f^*}^{13}_{=_2},{f^*}^{13}_{[1,0,-1]}\}\subseteq\mathcal{GDE}$, a contradiction.

If $f^*_{0010}=\pm \frak{i}f^*_{0001}$, by ${f^*}^{34}_{\ne_2}=(f^*_{0001}+f^*_{0010},f^*_{0101}+f^*_{0110},f^*_{1001}+f^*_{1010},f^*_{0001}+f^*_{0010})$, ${f^*}^{34}_{(0,1,-1,0)}=(f^*_{0001}-f^*_{0010},f^*_{0101}-f^*_{0110},f^*_{1001}-f^*_{1010},f^*_{0001}-f^*_{0010})$, we have $f^*_{0101}=f^*_{0110}=f^*_{1001}=f^*_{1010}=0$ since $\{ {f^*}^{34}_{\ne_2},{f^*}^{34}_{(0,1,-1,0)}\}\subseteq\mathcal{GEQ}$, then $f^*,f\in\left \langle \mathcal{T}  \right \rangle $.
\subcaseitem{4.2}  $(f^*_{0001}-f^*_{1101})(f^*_{0010}-f^*_{1110})\ne0$. In this case, if ${f^*}_{=_2}^{12}\in Z\mathrm{-family}$, by connecting $[1,0,-1]$ to input $x_1$ of $f^*$, the new 4-ary signature $f^{**}$ satisfies ${f^{**}}_{=_2}^{12}={f^*}_{[1,0,-1]}^{12}\in\mathcal{GDE}$ and ${f^{**}}_{[1,0,-1]}^{12}={f^*}_{=_2}^{12}\in Z\mathrm{-family}$. This implies that $({f^{**}}_{=_2}^{12})^{-1}{f^{**}}_{[1,0,-1]}^{12}\in Z\mathrm{-family}$, then such an $f^{**}$ is not valid. A contradiction.

Then we have ${f^*}_{=_2}^{12}=(2f^*_{0000},f^*_{0001}+f^*_{1101},f^*_{0010}+f^*_{1110},2f^*_{0011})\notin Z\mathrm{-family}$, which implies that
$f^*_{0001}+f^*_{1101}=f^*_{0010}+f^*_{1110}=0$ and $f^*_{0011}=\epsilon_1f^*_{0000}$ since $f^*_{0000}f^*_{0011}\ne0$. Now, \[
M(f^*)=\begin{pmatrix}
  f^*_{0000} &  f^*_{0001}&  f^*_{0010}& \epsilon_1f^*_{0000}\\
  0 &  f^*_{0101}&  f^*_{0110}& 0\\
  0 &  f^*_{1001}&  f^*_{1010}& 0\\
  f^*_{0000} &  -f^*_{0001}&  -f^*_{0010}& \epsilon_1f^*_{0000}
\end{pmatrix}.\]
 As in \subcaseref{4.1}, we assume that $f^*_{0001}f^*_{0010}\ne0$. By ${f^*}^{13}_{=_2}=(f^*_{0000}+f^*_{1010},f^*_{0001},-f^*_{0010},f^*_{0101}+\epsilon_1f^*_{0000})$, ${f^*}^{13}_{[1,0,-1]}=(f^*_{0000}-f^*_{1010},f^*_{0001},f^*_{0010},f^*_{0101}-\epsilon_1f^*_{0000})$, we have $f^*_{0010}=\pm f^*_{0001}$ or $f^*_{0010}=\pm \frak{i}f^*_{0001}$.

If $f^*_{0010}=\pm f^*_{0001}$, then we have $f^*_{0000}=f^*_{0011}=f^*_{0101}=f^*_{1010}=0$ since $\{{f^*}^{13}_{=_2},{f^*}^{13}_{[1,0,-1]}\}\subseteq\mathcal{GDE}$, a contradiction.

If $f^*_{0010}=\pm\frak{i} f^*_{0001}$, by ${f^*}^{34}_{\ne_2}=(f^*_{0001}+f^*_{0010},f^*_{0101}+f^*_{0110},f^*_{1001}+f^*_{1010},-f^*_{0001}-f^*_{0010})$, ${f^*}^{34}_{(0,1,-1,0)}=(f^*_{0001}-f^*_{0010},f^*_{0101}-f^*_{0110},f^*_{1001}-f^*_{1010},-f^*_{0001}+f^*_{0010})$, we have $f^*_{0101}=f^*_{0110}=f^*_{1001}=f^*_{1010}=0$ since $\{{f^*}^{34}_{\ne_2},{f^*}^{34}_{(0,1,-1,0)}\}\subseteq\mathcal{GEQ}$. Then we have ${f^*}^{12}_{=_2}=f^*_{0000}[1,0,\epsilon_1]$ and ${f^*}^{12}_{[1,0,-1]}=2(0,f^*_{0001},f^*_{0010},0)$.
By connecting ${f^*}^{12}_{=_2}$ to input $x_3$ of $f^*$, the new 4-ary signature $f^{**}$ satisfies ${f^{**}}^{12}_{=_2}=(=_2)$ and ${f^{**}}^{12}_{[1,0,-1]}\in\mathcal{GDE}$, then by symmetry, ${f^{**}}$ falls under \caseref{3}.\qedhere

\end{caselist}
\end{caselist}
\end{proof}

We next consider the following conjugate of $BT_{24}$, which is $D_{\alpha}^{-1}BT_{24}{D_{\alpha}}$, where $D_{\alpha}=\begin{pmatrix}
 1 & 0\\
 0 &\alpha
\end{pmatrix}$ and $\alpha^2=-\frak{i}$. We list  representatives, up to nonzero scalar multiples, of the elements of $D_{\alpha}^{-1}BT_{24}{D_{\alpha}}$ and divide them into the following disjoint sets.
    \[\mathcal{GEQ}= \left\{
\begin{pmatrix}
1  & 0\\
 0 &1
\end{pmatrix},
\begin{pmatrix}
1  & 0\\
 0 &-1
\end{pmatrix}\right \}\]
\[\mathcal{GDE}'=\left\{\begin{pmatrix}
0 & 1\\
 \frak{i}&0
\end{pmatrix},
\begin{pmatrix}
0 & 1\\
 - \frak{i} &0
\end{pmatrix}\right \}\]
\[\text{Z-family}'=\left\{\begin{gathered}
\begin{pmatrix}
1 & \alpha\\
 \alpha^{-1} \frak{i} &-\frak{i}
\end{pmatrix},\begin{pmatrix}
1 & -\alpha\frak{i}\\
 -\alpha^{-1} &-\frak{i}
\end{pmatrix},\begin{pmatrix}
1 & \alpha\frak{i}\\
 \alpha^{-1} &-\frak{i}
\end{pmatrix},\begin{pmatrix}
1 & -\alpha\\
 -\alpha^{-1}\frak{i} &-\frak{i}
\end{pmatrix},\\[3pt]
\begin{pmatrix}
-\frak{i} & -\alpha\\
 -\alpha^{-1}\frak{i} &1
\end{pmatrix},\begin{pmatrix}
-\frak{i} & \alpha\frak{i}\\
 \alpha^{-1} &1
\end{pmatrix},\begin{pmatrix}
-\frak{i} & -\alpha\frak{i}\\
 -\alpha^{-1} &1
\end{pmatrix},\begin{pmatrix}
-\frak{i} & \alpha\\
 \alpha^{-1}\frak{i} &1
\end{pmatrix}
\end{gathered}\right\}\]

We use the following fact.
\begin{remark}
    For any $h\in\mathbb{C}^{\times} D_{\alpha}^{-1}BT_{24}{D_{\alpha}}$, if $\frac{h_{00}}{h_{11}}=\pm1$, then $h\in \mathbb{C}^{\times}\mathcal{GEQ}$; if $\frac{h_{01}}{h_{10}}=\pm \frak{i}$, then $h\in \mathbb{C}^{\times}\mathcal{GDE}'$; if $\frac{h_{00}}{h_{11}}=\pm\frak{i}$ or $\frac{h_{01}}{h_{10}}=\pm1$, then $h\in \mathbb{C}^{\times} Z\mathrm{-family}'$.
\end{remark}
In the proof of the following lemma, we will frequently use this fact without explanation.
\begin{lemma}\label{DaBT24Dansolve}
 Let $f$ be a 4-ary signature,  $S(f_{D_{\alpha}^{-1}BT_{24}{D_{\alpha}}})\subseteq \mathbb{C}D_{\alpha}^{-1}BT_{24}{D_{\alpha}}$. Then $\mathrm{Holant}(=_2|f)$ is \#P-hard except for the following cases:
    \begin{itemize}
        \item $f \in \left \langle Z\mathcal{M} \right \rangle$ or $\in \left \langle ZX\mathcal{M} \right \rangle$;
        \item $f$ is $ \mathcal{A}\text{-}$transformable;
        \item $f$ is $ \mathcal{L}\text{-}$transformable;
        \item $f$ is $ \mathcal{P}\text{-}$transformable;
        \item $f\in\mathcal{H}$;
        \item $f \in \left \langle \mathcal{T} \right \rangle$,
    \end{itemize}
    in which cases the problem is computable in polynomial time.
\end{lemma}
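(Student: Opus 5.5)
The plan is to mirror the proof of Lemma~\ref{BT24nsolve}, with the roles of $\mathcal{GDE}$ and the $Z$-family partly exchanged according to the second Remark.

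First, assume $f\notin\langle\mathcal T\rangle$. As in Lemma~\ref{BD8}, normalize by the inverse of ${f}^{34}_{=_2}$ (nonzero by Lemma~\ref{sym}) via Corollary~\ref{polyto-1} to obtain $f^*$ with ${f^*}^{34}_{=_2}=(=_2)$. This normalization preserves membership in $\langle\mathcal T\rangle$ and in $\mathcal A$. It also keeps loop images inside $\mathbb{C}D_{\alpha}^{-1}BT_{24}D_{\alpha}$, since this is a group. It gives $f^*_{0111}=-f^*_{0100}$ and $f^*_{1011}=-f^*_{1000}$.

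Then split on the type of ${f^*}^{34}_{[1,0,-1]}=(f^*_{0000}-f^*_{0011},2f^*_{0100},2f^*_{1000},f^*_{1100}-f^*_{1111})$.

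\textbf{Case A: it is $[1,0,-1]$, $(=_2)$, or $0$.} Proceed exactly as Cases 2 and 4 of Lemma~\ref{BT24nsolve}. The first gives eight-vertex form, handled by Lemma~\ref{haveeightvertex}. The second gives the block structure $f^*_{0000}=f^*_{1100}$, $f^*_{0011}=f^*_{1111}$. Further loops with $=_2$ and $[1,0,-1]$ on inputs $1,2$ and $1,3$, classified by the ratio tests of the Remark, then force either eight-vertex form, membership in $\langle\mathcal T\rangle$, or a contradiction.

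\textbf{Case B: it lies in $\mathcal{GDE}'$.} Now $f^*_{1000}=\pm\frak{i}f^*_{0100}$. Transform this back to Case A. The analogue of the $Z$-trick in Case 3 of Lemma~\ref{BT24nsolve} is a group element $g\in D_\alpha^{-1}BT_{24}D_\alpha$ with $g\,(0,1,\pm\frak i,0)\,g^{T}\in\mathbb{C}[1,0,-1]$ and $g(=_2)g^T=(=_2)$. The Z-family$'$ elements, which are conjugates of $Z$ by $D_\alpha$, should serve. Modify inputs $x_1,x_2$ with $g$ and $g^{T}$-type matrices, and invert where needed using Corollary~\ref{polyto-1}.

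\textbf{Case C: it lies in Z-family$'$.} This means $(f^*_{0000}-f^*_{0011})/(f^*_{1100}-f^*_{1111})=\pm\frak i$ or $f^*_{1000}=\pm f^*_{0100}$. Rule this out by a contradiction argument as in Case 1 of Lemma~\ref{BT24nsolve}. Collect the constraints from loops on $\{1,2\}$ and $\{1,3\}$ with the $\mathcal{GEQ}$ and $\mathcal{GDE}'$ signatures; each ratio must be in $\{\pm1,\pm\frak i\}$ by the Remark. This should yield a system on $(a,b,c,d)=(f^*_{0000},f^*_{0011},f^*_{1100},f^*_{1111})$ of the form $a+b=c+d=s\ne0$, $a-b=\delta(c-d)$, together with fourth-power relations forcing $a^4+b^4=c^4+d^4$. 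As before, this gives $s^2t^2=0$, a contradiction.

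The symmetric conclusion then also disposes of every subcase of Case A in which some normalized loop quotient falls in Z-family$'$.

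The main obstacle is Case C. The constants $\alpha^{\pm1}$ change which off-diagonal ratios appear, so each "$\pm1$ versus $\pm\frak i$" dichotomy must be rederived. If the direct elimination becomes messy, the fallback is a holographic transformation by $D_\alpha$. This turns the statement into Lemma~\ref{BT24nsolve} for $f'=D_\alpha^{\otimes4}f$ on the left side $D_\alpha^{T}D_\alpha=\mathrm{diag}(1,-\frak i)$. One would then try to restore $=_2$ using an available $\mathcal{GEQ}$-type binary element, as in the $BD_8$ subcase of Lemma~\ref{dicbd4n}.
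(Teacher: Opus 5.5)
Your plan is essentially the paper's proof: the same normalization to ${f^*}^{34}_{=_2}=(=_2)$ and the same four-way split on ${f^*}^{34}_{[1,0,-1]}$. As in the paper, the $\mathcal{GDE}'$ case is pushed into the $[1,0,-1]$ case using $L_\pm=D_\alpha^{-1}Z^{\pm1}D_\alpha$, and the Z-family$'$ case is killed by the identity $a^4+b^4-c^4-d^4=-\tfrac32 s^2t^2$, once the loops ${f^*}^{12}_{(0,1,\pm\frak{i},0)}$ have forced $f^*_{0101}=f^*_{0110}=f^*_{1001}=f^*_{1010}=0$.

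Two corrections. First, in Case B you must apply $g$ to input $1$ and $(g^{-1})^T$ to input $2$, i.e.\ conjugate $M\mapsto gMg^{-1}$; no element of $\mathbb{C}D_\alpha^{-1}BT_{24}D_\alpha$ other than multiples of $\mathrm{diag}(1,\pm1)$ satisfies $gg^T\propto I$, so the congruence conditions you wrote cannot be met. Second, drop the $D_\alpha$ fallback: it just undoes the transformation made in Lemma~\ref{BT24} and puts $\mathrm{diag}(1,\pm\frak{i})\notin\mathbb{C}BT_{24}$ back on the left-hand side, which is exactly the situation this lemma was introduced to avoid.
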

\begin{proof}
We may assume that $f\notin\langle\mathcal T\rangle$, since otherwise the stated tractable alternative holds. The invertible input modifications below preserve this exclusion, as noted in the definition of binary modification.

     Using the same procedure as in Lemma~\ref{BD8}, we construct $f^*$.
    We also have ${f^*}^{34}_{=_2}=(=_2)$. Then we have $-f^*_{0100}=f^*_{0111}$ and $-f^*_{1000}=f^*_{1011}$.
    For ${f^*}^{34}_{[1,0,-1]}$, we consider the following cases:
\begin{caselist}
\caseitem{1}  ${f^*}^{34}_{[1,0,-1]}\in \mathbb{C}^{\times} Z\mathrm{-family}'$.

Note that ${f^*}^{34}_{[1,0,-1]}=(f^*_{0000}-f^*_{0011},2f^*_{0100},2f^*_{1000},f^*_{1100}-f^*_{1111})$. Then we have $\epsilon _1f^*_{0100}=f^*_{1000}$, where $\epsilon_1=\pm 1$ and $f^*_{0100}\ne 0$, since ${f^*}^{34}_{[1,0,-1]}\in Z\mathrm{-family}'$.
In this case, we have \[
M(f^*)=\begin{pmatrix}
  f^*_{0000} &  f^*_{0001}&  f^*_{0010}& f^*_{0011}\\
  f^*_{0100} &  *&  *& -f^*_{0100}\\
  \epsilon_1f^*_{0100} &  *&  *& -\epsilon_1 f^*_{0100}\\
  f^*_{1100} &  f^*_{1101}&  f^*_{1110}& f^*_{1111}
\end{pmatrix}.\]By ${f^*}^{12}_{(0,1,\frak{i},0)}=((1+\epsilon_1\frak{i})f^*_{0100},f^*_{0101}+\frak{i}f^*_{1001},f^*_{0110}+\frak{i}f^*_{1010},-(1+\frak{i}\epsilon_1)f^*_{0100})$
and ${f^*}^{12}_{(0,1,-\frak{i},0)}=((1-\frak{i}\epsilon_1)f^*_{0100},f^*_{0101}-\frak{i}f^*_{1001},f^*_{0110}-\frak{i}f^*_{1010},-(1-\frak{i}\epsilon_1)f^*_{0100})$, we have $\{{f^*}^{12}_{(0,1,\frak{i},0))},{f^*}^{12}_{(0,1,-\frak{i},0))}\}\subseteq\mathcal{GEQ}$ since
$((1\pm\epsilon_1\frak{i})f^*_{0100}\ne0$. Then $f^*_{0101}=f^*_{1001}=f^*_{0110}=f^*_{1010}=0$.

\[
M(f^*)=\begin{pmatrix}
  f^*_{0000} &  f^*_{0001}&  f^*_{0010}& f^*_{0011}\\
  f^*_{0100} &  0&  0& -f^*_{0100}\\
  \epsilon_1f^*_{0100} &  0&  0& -\epsilon_1 f^*_{0100}\\
  f^*_{1100} &  f^*_{1101}&  f^*_{1110}& f^*_{1111}
\end{pmatrix}.\] By ${f^*}^{34}_{(0,1,\frak{i},0)}=[f^*_{0001}+\frak{i}f^*_{0010},0,f^*_{1101}+\frak{i}f^*_{1110}]$ and ${f^*}^{34}_{(0,1,-\frak{i},0)}=[f^*_{0001}-\frak{i}f^*_{0010},0,f^*_{1101}-\frak{i}f^*_{1110}]$, we have \[\left\{\begin{matrix}
 f^*_{0001}+\frak{i}f^*_{0010}=f^*_{1101}+\frak{i}f^*_{1110}\\
 f^*_{0001}-\frak{i}f^*_{0010}=f^*_{1101}-\frak{i}f^*_{1110}
\end{matrix}\right.\] or \[\left\{\begin{matrix}
 f^*_{0001}+\frak{i}f^*_{0010}=f^*_{1101}+\frak{i}f^*_{1110}\\
 f^*_{0001}-\frak{i}f^*_{0010}=\frak{i}f^*_{1110}-f^*_{1101}
\end{matrix}\right.\] or \[\left\{\begin{matrix}
 f^*_{0001}+\frak{i}f^*_{0010}=-f^*_{1101}-\frak{i}f^*_{1110}\\
 f^*_{0001}-\frak{i}f^*_{0010}=f^*_{1101}-\frak{i}f^*_{1110}
\end{matrix}\right.\] or

\[\left\{\begin{matrix}
 f^*_{0001}+\frak{i}f^*_{0010}=-f^*_{1101}-\frak{i}f^*_{1110}\\
 f^*_{0001}-\frak{i}f^*_{0010}=\frak{i}f^*_{1110}-f^*_{1101}
\end{matrix}\right.\] which implies that
$\left\{\begin{matrix}
 f^*_{0001}=f^*_{1101}\\
 f^*_{0010}=f^*_{1110}
\end{matrix}\right.$ or $\left\{\begin{matrix}
 f^*_{0001}=\frak{i}f^*_{1110}\\
 f^*_{0010}=-\frak{i}f^*_{1101}
\end{matrix}\right.$ or $\left\{\begin{matrix}
 f^*_{0001}=-\frak{i}f^*_{1110}\\
 f^*_{0010}=\frak{i}f^*_{1101}
\end{matrix}\right.$ or $\left\{\begin{matrix}
 f^*_{0001}=-f^*_{1101}\\
 f^*_{0010}=-f^*_{1110}
\end{matrix}\right.$.
We conclude that
$\left\{\begin{matrix}
 f^*_{1101}=\epsilon_2 f^*_{0001}\\
 f^*_{1110}=\epsilon_2 f^*_{0010}
\end{matrix}\right.$ or $\left\{\begin{matrix}
 f^*_{1110}=\epsilon_2\frak{i} f^*_{0001}\\
 f^*_{1101}=-\epsilon_2\frak{i} f^*_{0010}
\end{matrix}\right.$ where $\epsilon_2=\pm1$. Now, \[
M(f^*)=\begin{pmatrix}
  f^*_{0000} &  f^*_{0001}&  f^*_{0010}& f^*_{0011}\\
  f^*_{0100} &  0&  0& -f^*_{0100}\\
  \epsilon_1f^*_{0100} &  0&  0& -\epsilon_1 f^*_{0100}\\
  f^*_{1100} &  \epsilon_2 f^*_{0001}&  \epsilon_2 f^*_{0010}& f^*_{1111}
\end{pmatrix}\ \mathrm{or}\  \begin{pmatrix}
  f^*_{0000} &  f^*_{0001}&  f^*_{0010}& f^*_{0011}\\
  f^*_{0100} &  0&  0& -f^*_{0100}\\
  \epsilon_1f^*_{0100} &  0&  0& -\epsilon_1 f^*_{0100}\\
  f^*_{1100} &  -\epsilon_2 \frak{if^*_{0010}}&  \epsilon_2 \frak{if^*_{0001}}& f^*_{1111}
\end{pmatrix}\]

We first deal with the case in which $\left\{\begin{matrix}
 f^*_{1101}=\epsilon_2 f^*_{0001}\\
 f^*_{1110}=\epsilon_2 f^*_{0010}
\end{matrix}\right.$.
Note that
\[
\left\{\begin{matrix}
 {f^*}^{34}_{=_2}=(f^*_{0000}+f^*_{0011},0,0,f^*_{1100}+f^*_{1111})=(=_2)\\
 {f^*}^{34}_{[1,0,-1]}=(f^*_{0000}-f^*_{0011},2f^*_{0100},2\epsilon_1f^*_{0100},f^*_{1100}-f^*_{1111})\\
 {f^*}^{12}_{[1,0,\epsilon_2]}=(f^*_{0000}+\epsilon_2f^*_{1100},2f^*_{0001},2f^*_{0010},f^*_{0011}+\epsilon_2f^*_{1111})\\
 {f^*}^{12}_{[1,0,-\epsilon_2]}=(f^*_{0000}-\epsilon_2f^*_{1100},0,0,f^*_{0011}-\epsilon_2f^*_{1111})\\
{f^*}^{13}_{=_2}=(f^*_{0000},f^*_{0001}-\epsilon_1f^*_{0100},f^*_{0100}+\epsilon_2f^*_{0010},f^*_{1111})\\
{f^*}^{13}_{(0,1,\frak{i},0)}=(f^*_{0010}+\epsilon_1\frak{i}f^*_{0100},f^*_{0011},\frak{i}f^*_{1100},-f^*_{0100}+\epsilon_1\frak{i}f^*_{0001})
\end{matrix}\right.
\]
We have \[
\left\{\begin{matrix}
 f^*_{0000}+f^*_{0011}=f^*_{1100}+f^*_{1111}\ne0\\
 f^*_{0000}-f^*_{0011}=\delta_1(f^*_{1100}-f^*_{1111})\ne0\\
f^*_{0000}+\epsilon_2f^*_{1100}=\varepsilon_1(f^*_{0011}+\epsilon_2f^*_{1111})\\
 f^*_{0000}-\epsilon_2f^*_{1100}=\epsilon_3(f^*_{0011}-\epsilon_2f^*_{1111})\\
f^*_{0000}=\varepsilon_2f^*_{1111}\\
f^*_{0011}=\varepsilon_3f^*_{1100}
\end{matrix}\right.
\]
Then $\{f^*_{0000},f^*_{0011},f^*_{1100},f^*_{1111}\}$ satisfies a homogeneous linear system of
equations whose coefficient matrix is:
\[
\begin{pmatrix}
  1& 1 &  -1& -1\\
  1&  -1&  -\delta_1&\delta_1 \\
  1&- \varepsilon_1 &  \epsilon_2 &-\epsilon_2\varepsilon_1 \\
  1&  -\epsilon_3&  -\epsilon_2 & \epsilon_2\epsilon_3\\
  1&  0& 0 &-\varepsilon_2 \\
  0& 1 & -\varepsilon_3 &0
\end{pmatrix},
\]where each $\epsilon=\pm 1$, $\delta=\pm i$, $\varepsilon^4=1$. To prove the needed consequence, write $(a,b,c,d)=(f^*_{0000},f^*_{0011},f^*_{1100},f^*_{1111})$, $s=a+b=c+d\ne0$, and $t=c-d$. The second equation gives $a-b=\delta t$ with $\delta^2=-1$. Thus $(a,b,c,d)=\frac12(s+\delta t,s-\delta t,s+t,s-t)$. The last two equations give $a^4=d^4$ and $b^4=c^4$, hence $0=a^4+b^4-c^4-d^4=-\frac32s^2t^2$. Therefore $t=0$ and $a=b=c=d$, contradicting the required nonzero difference $a-b$.

Then we deal with the case in which $\left\{\begin{matrix}
 f^*_{1110}=\epsilon_2 \frak{i}f^*_{0001}\\
 f^*_{1101}=-\epsilon_2 \frak{i}f^*_{0010}
\end{matrix}\right.$. Note that  \[\left\{\begin{matrix}
 {f^*}^{12}_{=_2}=(f^*_{0000}+f^*_{1100},f^*_{0001}-\epsilon_2\frak{i}f^*_{0010},\epsilon_2\frak{i}(f^*_{0001}-\epsilon_2\frak{i}f^*_{0010}),f^*_{0011}+f^*_{1111})\\
 {f^*}^{12}_{[1,0,-1]}=(f^*_{0000}-f^*_{1100},f^*_{0001}+\epsilon_2\frak{i}f^*_{0010},-\epsilon_2\frak{i}(f^*_{0001}+\epsilon_2\frak{i}f^*_{0010}),f^*_{0011}-f^*_{1111})
\end{matrix}\right. .\] If $f^*_{0001}+\epsilon_2\frak{i}f^*_{0010}$ and $f^*_{0001}-\epsilon_2\frak{i}f^*_{0010}$ are nonzero, then $f^*_{0000}=f^*_{1100}=f^*_{0011}=f^*_{1111}=0$, a contradiction since we have ${f^*}^{34}_{=_2}=(=_2)$. If $f^*_{0001}+\epsilon_2\frak{i}f^*_{0010}=0$ or $f^*_{0001}-\epsilon_2\frak{i}f^*_{0010}=0$, then $f^*_{0001}=\epsilon_2\epsilon_3\frak{i}f^*_{0010}$, where $\epsilon_3=\pm1$. Now
\[\left\{\begin{matrix}
 f^*_{1110}=\epsilon_2 \frak{i}f^*_{0001}=-\epsilon_3f^*_{0010}\\
 f^*_{1101}=-\epsilon_2 \frak{i}f^*_{0010}=-\epsilon_3f^*_{0001}
\end{matrix}\right.,\]
this reduces to the case $\left\{\begin{matrix}
 f^*_{1101}=\epsilon_1 f^*_{0001}\\
 f^*_{1110}=\epsilon_1 f^*_{0010}
\end{matrix}\right.$.

We conclude that there does not exist a valid $f^*$ such that ${f^*}^{34}_{=_2}=(=_2)$ and ${f^*}^{34}_{[1,0,-1]}\in Z\mathrm{-family}'$. By symmetry, if there exists $1\le i,j\le4$, such that
${f^*}^{ij}_{=_2}\not\equiv 0$ and
$({f^*}^{ij}_{=_2})^{-1}{f^*}^{ij}_{[1,0,-1]}\in Z\mathrm{-family}'$, then such an $f^*$ is also not valid.
\caseitem{2}  ${f^*}^{34}_{[1,0,-1]}= \mathbb{C}^{\times}[1,0,-1]$.

Note that ${f^*}^{34}_{=_2}=(f^*_{0000}+f^*_{0011},f^*_{0100}+f^*_{0111},f^*_{1000}+f^*_{1011},f^*_{1100}+f^*_{1111})$ and ${f^*}^{34}_{[1,0,-1]}=(f^*_{0000}-f^*_{0011},f^*_{0100}-f^*_{0111},f^*_{1000}-f^*_{1011},f^*_{1100}-f^*_{1111})$. Then
    in this case, we have $f^*_{0100}=f^*_{1000}=f^*_{0111}=f^*_{1011}=0$ , $f^*_{0000}=f^*_{1111}$, $f^*_{0011}=f^*_{1100}$ , $f^*_{0000}+f^*_{0011}\ne0$ and $f^*_{0000}-f^*_{0011}\ne0$.

    Further, ${f^*}^{12}_{=_2}=(f^*_{0000}+f^*_{1100},f^*_{0001}+f^*_{1101},f^*_{0010}+f^*_{1110},f^*_{0011}+f^*_{1111})=(f^*_{0000}+f^*_{0011},f^*_{0001}+f^*_{1101},f^*_{0010}+f^*_{1110},f^*_{0000}+f^*_{0011})$ and ${f^*}^{12}_{[1,0,-1]}=(f^*_{0000}-f^*_{0011},f^*_{0001}-f^*_{1101},f^*_{0010}-f^*_{1110},f^*_{0000}-f^*_{0011})$. Then we have
    $f^*_{0001}=f^*_{1101}=f^*_{0010}=f^*_{1110}=0$, since $f^*_{0000}+f^*_{0011}\ne0$ and $f^*_{0000}-f^*_{0011}\ne0$.
Thus $f^*$ has eight-vertex form, by Lemma~\ref{haveeightvertex}, the result follows.

    \caseitem{3}  ${f^*}^{34}_{[1,0,-1]}\in
    \mathbb{C}^{\times}\mathcal{GDE}'$.

    Write $B_\pm=\begin{psmallmatrix}0&1\\\pm\frak{i}&0\end{psmallmatrix}$ and recall that $\alpha^2=-\frak{i}$. Set $L_+=D_\alpha^{-1}ZD_\alpha$ and $L_-=D_\alpha^{-1}Z^{-1}D_\alpha$. Direct multiplication gives
\[
L_+B_+L_+^{-1}=\alpha^{-1}\operatorname{diag}(1,-1),\qquad L_-B_-L_-^{-1}=\frak{i}\alpha^{-1}\operatorname{diag}(1,-1).
\]
Apply $L_\pm$ to input $1$ and $(L_\pm^{-1})^T$ to input $2$, choosing the sign corresponding to the contraction. The resulting signature satisfies ${f^{**}}^{34}_{=_2}=(=_2)$, and ${f^{**}}^{34}_{[1,0,-1]}$ is a nonzero scalar multiple of $[1,0,-1]$. It therefore falls under \caseref{2}, which completes this case.
 \caseitem{4}  ${f^*}^{34}_{[1,0,-1]}=\mathbb{C}(=_2)$.

 Note that ${f^*}^{34}_{=_2}=(f^*_{0000}+f^*_{0011},f^*_{0100}+f^*_{0111},f^*_{1000}+f^*_{1011},f^*_{1100}+f^*_{1111})$ and ${f^*}^{34}_{[1,0,-1]}=(f^*_{0000}-f^*_{0011},f^*_{0100}-f^*_{0111},f^*_{1000}-f^*_{1011},f^*_{1100}-f^*_{1111})$. Then
    in this case, we have $f^*_{0100}=f^*_{1000}=f^*_{0111}=f^*_{1011}=0$ , $f^*_{0000}=f^*_{1100}$, $f^*_{0011}=f^*_{1111}$ and $f^*_{0000}+f^*_{0011}\ne0$.
    \[
M(f^*)=\begin{pmatrix}
  f^*_{0000} &  f^*_{0001}&  f^*_{0010}& f^*_{0011}\\
  0 &  f^*_{0101}&  f^*_{0110}& 0\\
  0 &  f^*_{1001}&  f^*_{1010}& 0\\
  f^*_{0000} &  f^*_{1101}&  f^*_{1110}& f^*_{0011}
\end{pmatrix},\] where $f^*_{0000}+f^*_{0011}\ne0$.
Note that we also have $f^*_{0000}f^*_{0011}\ne0$, otherwise ${f^*}^{12}_{=_2}\notin BT_{24}$.

For ${f^*}^{12}_{[1,0,-1]}=(0,f^*_{0001}-f^*_{1101},f^*_{0010}-f^*_{1110},0)$, if there exists exactly one zero in $\{(f^*_{0001}-f^*_{1101}),(f^*_{0010}-f^*_{1110})\}$, then ${f^*}^{12}_{[1,0,-1]}$ is degenerate, a contradiction. Then we consider $f^*_{0001}=f^*_{1101}$ and $f^*_{0010}=f^*_{1110}$ in \subcaseref{4.1}, $(f^*_{0001}-f^*_{1101})(f^*_{0010}-f^*_{1110})\ne0$ in \subcaseref{4.2}.

\begin{caselist}
    \subcaseitem{4.1}   In this case, \[M(f^*)=\begin{pmatrix}
  f^*_{0000} &  f^*_{0001}&  f^*_{0010}& f^*_{0011}\\
  0 &  f^*_{0101}&   f^*_{0110}& 0\\
  0 &  f^*_{1001}&  f^*_{1010}& 0\\
  f^*_{0000} &  f^*_{0001}&  f^*_{0010}& f^*_{0011}
\end{pmatrix}.\] If $f^*_{0001}=f^*_{0010}=0$, $f^*$ has eight-vertex form, by Lemma~\ref{haveeightvertex}, the result follows. If there exists exactly one zero in $\{f^*_{0001},f^*_{0010}\}$, then ${f^*}^{12}_{=_2}\notin BT_{24}$. If $f^*_{0001}f^*_{0010}\ne0$, by ${f^*}^{13}_{=_2}=(f^*_{0000}+f^*_{1010},f^*_{0001},f^*_{0010},f^*_{0101}+f^*_{0011})$, ${f^*}^{13}_{[1,0,-1]}=(f^*_{0000}-f^*_{1010},f^*_{0001},-f^*_{0010},f^*_{0101}-f^*_{0011})$, we have $f^*_{0010}=\pm f^*_{0001}$ or $f^*_{0010}=\pm \frak{i}f^*_{0001}$.

If $f^*_{0010}=\pm \frak{i}f^*_{0001}$, then we have $f^*_{0000}=f^*_{0011}=f^*_{0101}=f^*_{1010}=0$ since $\{{f^*}^{13}_{=_2},{f^*}^{13}_{[1,0,-1]}\}\subseteq\mathcal{GDE}'$, a contradiction.

If $f^*_{0010}=\pm f^*_{0001}$, by ${f^*}^{34}_{(0,1,\frak{i},0)}=(f^*_{0001}+\frak{i}f^*_{0010},f^*_{0101}+\frak{i}f^*_{0110},f^*_{1001}+\frak{i}f^*_{1010},f^*_{0001}+\frak{i}f^*_{0010})$, ${f^*}^{34}_{(0,1,-\frak{i},0)}=(f^*_{0001}-\frak{i}f^*_{0010},f^*_{0101}-\frak{i}f^*_{0110},f^*_{1001}-\frak{i}f^*_{1010},f^*_{0001}-\frak{i}f^*_{0010})$, we have $f^*_{0101}=f^*_{0110}=f^*_{1001}=f^*_{1010}=0$ since $\{{f^*}^{34}_{(0,1,\frak{i},0)},{f^*}^{34}_{(0,1,-\frak{i},0)}\}\subseteq\mathcal{GEQ}$, then $f^*,f\in\left \langle \mathcal{T}  \right \rangle $.
\subcaseitem{4.2}  In this case, if ${f^*}_{=_2}^{12}\in Z\mathrm{-family}'$, by connecting $[1,0,-1]$ to input $x_1$ of $f^*$, the new 4-ary signature $f^{**}$ satisfies ${f^{**}}_{=_2}^{12}={f^*}_{[1,0,-1]}^{12}\in\mathcal{GDE}'$ and ${f^{**}}_{[1,0,-1]}^{12}={f^*}_{=_2}^{12}\in Z\mathrm{-family}'$. This implies that $({f^{**}}_{=_2}^{12})^{-1}{f^{**}}_{[1,0,-1]}^{12}\in Z\mathrm{-family}'$, then such an $f^{**}$ is not valid. A contradiction.

Then we have ${f^*}_{=_2}^{12}=(2f^*_{0000},f^*_{0001}+f^*_{1101},f^*_{0010}+f^*_{1110},2f^*_{0011})\notin Z\mathrm{-family}'$, which implies that
$f^*_{0001}+f^*_{1101}=f^*_{0010}+f^*_{1110}=0$ and $f^*_{0011}=\epsilon_1f^*_{0000}$ since $f^*_{0000}f^*_{0011}\ne0$. Now, \[
M(f^*)=\begin{pmatrix}
  f^*_{0000} &  f^*_{0001}&  f^*_{0010}& \epsilon_1f^*_{0000}\\
  0 &  f^*_{0101}&  f^*_{0110}& 0\\
  0 &  f^*_{1001}&  f^*_{1010}& 0\\
  f^*_{0000} &  -f^*_{0001}&  -f^*_{0010}& \epsilon_1f^*_{0000}
\end{pmatrix}.\]
 As in \subcaseref{4.1}, we assume that $f^*_{0001}f^*_{0010}\ne0$. By ${f^*}^{13}_{=_2}=(f^*_{0000}+f^*_{1010},f^*_{0001},-f^*_{0010},f^*_{0101}+\epsilon_1f^*_{0000})$, ${f^*}^{13}_{[1,0,-1]}=(f^*_{0000}-f^*_{1010},f^*_{0001},f^*_{0010},f^*_{0101}-\epsilon_1f^*_{0000})$, we have $f^*_{0010}=\pm f^*_{0001}$ or $f^*_{0010}=\pm \frak{i}f^*_{0001}$.

If $f^*_{0010}=\pm \frak{i}f^*_{0001}$, then we have $f^*_{0000}=f^*_{0011}=f^*_{0101}=f^*_{1010}=0$ since $\{{f^*}^{13}_{=_2},{f^*}^{13}_{[1,0,-1]}\}\subseteq\mathcal{GDE}'$, a contradiction.

If $f^*_{0010}=\pm f^*_{0001}$, by ${f^*}^{34}_{(0,1,\frak{i},0)}=(f^*_{0001}+\frak{i}f^*_{0010},f^*_{0101}+\frak{i}f^*_{0110},f^*_{1001}+\frak{i}f^*_{1010},-f^*_{0001}-\frak{i}f^*_{0010})$, ${f^*}^{34}_{(0,1,-\frak{i},0)}=(f^*_{0001}-\frak{i}f^*_{0010},f^*_{0101}-\frak{i}f^*_{0110},f^*_{1001}-\frak{i}f^*_{1010},-f^*_{0001}+\frak{i}f^*_{0010})$, we have $f^*_{0101}=f^*_{0110}=f^*_{1001}=f^*_{1010}=0$ since $\{{f^*}^{34}_{(0,1,\frak{i},0)},{f^*}^{34}_{(0,1,-\frak{i},0)}\}\subseteq\mathcal{GEQ}$. Then we have ${f^*}^{12}_{=_2}=f^*_{0000}[1,0,\epsilon_1]$ and ${f^*}^{12}_{[1,0,-1]}=2(0,f^*_{0001},f^*_{0010},0)$.
By connecting ${f^*}^{12}_{=_2}$ to input $x_3$ of $f^*$, the new 4-ary signature $f^{**}$ satisfies ${f^{**}}^{12}_{=_2}=(=_2)$ and ${f^{**}}^{12}_{[1,0,-1]}\in\mathcal{GDE}'$, then by symmetry, ${f^{**}}$ falls under \caseref{3}.\qedhere

\end{caselist}

\end{caselist}
\end{proof}

\begin{lemma}\label{BT24}
If $\mathbf{H}^{(m)}_{f}=PBT_{24}P^{-1}$, then $\mathrm{Holant}(=_2|f)$ is \#P-hard except for the following cases:
    \begin{itemize}
        \item $f \in \left \langle Z\mathcal{M} \right \rangle$ or $\in \left \langle ZX\mathcal{M} \right \rangle$;
        \item $f$ is $ \mathcal{A}\text{-}$transformable;
        \item $f$ is $ \mathcal{L}\text{-}$transformable;
        \item $f$ is $ \mathcal{P}\text{-}$transformable;
        \item $f\in\mathcal{H}$;
        \item $f \in \left \langle \mathcal{T} \right \rangle$,
    \end{itemize}
    in which cases the problem is computable in polynomial time.
\end{lemma}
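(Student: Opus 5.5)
Following the pattern of Lemma~\ref{dicbd4n}, I would reduce to the two normalized situations already handled, Lemma~\ref{BT24nsolve} and Lemma~\ref{DaBT24Dansolve}. Start from $\mathbf{H}^{(m)}_{f}=PBT_{24}P^{-1}$, with the sequence stable at this group, and apply a holographic transformation by $P$. This gives
\[\mathrm{Holant}(S^{-1}|f',BT_{24}S)\equiv_T\mathrm{Holant}(=_2|f,PBT_{24}P^{-1}),\]
where $f'=(P^{-1})^{\otimes4}f$ and $S^{-1}=P^TP$. By Lemma~\ref{normal} and Lemma~\ref{normalizer}, $S\in\mathcal{N}_{\mathrm{SL}(2,\mathbb{C})}(BT_{24})=BO_{48}$. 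In addition $S$ is symmetric, so only finitely many shapes are possible up to sign. The proof then splits according to whether $S\in BT_{24}$ or $S\in BO_{48}\setminus BT_{24}$.

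\textbf{Case $S^{-1}\in BT_{24}$.} Here $BT_{24}S=BT_{24}$, and this set contains $=_2$. By Lemma~\ref{poly=_2} the left side $S^{-1}$ can be replaced by $=_2$. Stability gives $S(f'_{BT_{24}})\subseteq\mathbb{C}BT_{24}$, so Lemma~\ref{BT24nsolve} applies.

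\textbf{Case $S^{-1}\in BO_{48}\setminus BT_{24}$.} I would list the symmetric elements of this coset up to scalar. Using the generator $\frac{1}{\sqrt2}\mathrm{diag}(1+\frak{i},1-\frak{i})$, these should be
\begin{itemize}
\item $\mathrm{diag}(1,\pm\frak{i})$, and
\item a few off-diagonal or full symmetric matrices, namely the images of $\mathrm{diag}(1,\pm\frak{i})$ under conjugation by $Z$, by $H$, and by elements of $BT_{24}$.
\end{itemize}

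For $S^{-1}=\mathrm{diag}(1,\pm\frak{i})$, a further holographic transformation by $D=\mathrm{diag}(1,(\pm\frak{i})^{-1/2})$ turns the left signature into $=_2$. It turns the right-hand set into $D^{-1}BT_{24}S(D^{-1})^T$, which I expect to equal $\mathbb{C}\,D_\alpha^{-1}BT_{24}D_\alpha$ with $\alpha^2=-\frak{i}$, possibly after swapping $\alpha$ with its conjugate via the symmetry $0\leftrightarrow1$. Lemma~\ref{DaBT24Dansolve} then finishes this subcase.

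For the remaining shapes of $S^{-1}$, I would write $S^{-1}=Q^T\mathrm{diag}(1,\pm\frak{i})Q$ with $Q\in BO_{48}$ chosen among $Z$-type or $H$-type matrices that normalize $BT_{24}$. Transforming by $Q$ brings the problem back to the diagonal subcase, exactly as Subcases~1.2 and~1.3 of Lemma~\ref{dicbd4n} do for $BD_8$.

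The main obstacle is the bookkeeping in the second case. I need the complete list of symmetric elements of $BO_{48}\setminus BT_{24}$, and for each one an explicit $Q$ such that the transformed binary set is exactly $\mathcal{GEQ}\cup\mathcal{GDE}'\cup Z\text{-family}'$ up to scalars. I would first verify the diagonal computation by hand. I would then obtain the non-diagonal shapes from it by noting that conjugating by an element $g\in BT_{24}$ sends $S\mapsto gSg^T$, which preserves both symmetry and the coset. The symmetric elements then form only a few orbits, each containing a diagonal representative.
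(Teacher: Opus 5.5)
Your proposal is correct and has the same skeleton as the paper's proof. You transform by $P$, use Lemma~\ref{normal} and Lemma~\ref{normalizer} to place $S$ in $BO_{48}$, send $S\in BT_{24}$ to Lemma~\ref{BT24nsolve}, and send the odd coset to Lemma~\ref{DaBT24Dansolve} after a diagonal transformation.

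The genuine difference is how the non-diagonal symmetric elements of $BO_{48}\setminus BT_{24}$ are handled. The paper lists the six symmetric shapes: $\operatorname{diag}(1,\pm\mathfrak{i})$, $\begin{pmatrix}1&\pm\mathfrak{i}\\ \pm\mathfrak{i}&1\end{pmatrix}$, $\begin{pmatrix}1&1\\1&-1\end{pmatrix}$ and $\begin{pmatrix}-1&1\\1&1\end{pmatrix}$. It factors each as $b\operatorname{diag}(1,\mathfrak{i})$ with $b\in\mathbb{C}^{\times}BT_{24}$. Because $BT_{24}$ has index two, the right-hand set $BT_{24}S=BT_{24}\operatorname{diag}(1,\mathfrak{i})$ is independent of $S$, and a single transformation by $D_\alpha$ finishes.

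That route is shorter. However, its assertion that the left signature becomes $=_2$ is literal only when $S^{-1}$ is the diagonal shape matched to $\alpha$. For the four full shapes, the diagonal left signature must come from a gadget, which the paper leaves implicit: the left chain $S^{-1}$, $bS$, $S^{-1}$ has matrix $S^{-1}b=\operatorname{diag}(1,\mathfrak{i})^{-1}$.

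Your congruence $S^{-1}=Q^{T}\operatorname{diag}(1,\pm\mathfrak{i})Q$, with $Q$ normalizing $BT_{24}$, gives an honest holographic equivalence for every $S$. Since $Q$ normalizes $BT_{24}$, the transformed right-hand set is automatically $BT_{24}\operatorname{diag}(1,\pm\mathfrak{i})^{-1}$. So no per-shape check against $\mathcal{GEQ}\cup\mathcal{GDE}'\cup Z\text{-family}'$ is needed.

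The orbit bookkeeping you flag as the main obstacle is short.
\begin{itemize}
\item Since $g^{T}=Bg^{-1}B^{-1}$ in $\mathrm{SL}(2,\mathbb{C})$, you have $gSg^{T}B=g(SB)g^{-1}$.
\item $S$ is symmetric exactly when $\operatorname{tr}(SB)=0$.
\item Hence the symmetric elements of the odd coset correspond to its six projective involutions, namely the transpositions in $S_4\setminus A_4$, which form a single $A_4$-conjugacy class.
\end{itemize}
All six shapes therefore lie in one congruence orbit under $BT_{24}$, and $Q$ can be taken in $BT_{24}$ itself. For example, with $G=\begin{pmatrix}1+\mathfrak{i}&-1+\mathfrak{i}\\1+\mathfrak{i}&1-\mathfrak{i}\end{pmatrix}$ one gets $G\operatorname{diag}(1,\mathfrak{i})G^{T}\propto\begin{pmatrix}1&\mathfrak{i}\\ \mathfrak{i}&1\end{pmatrix}$ and $G^{2}\operatorname{diag}(1,\mathfrak{i})(G^{2})^{T}\propto\begin{pmatrix}-1&1\\1&1\end{pmatrix}$.

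Two wording fixes are needed.
\begin{itemize}
\item What you call conjugation must be congruence $S\mapsto QSQ^{T}$. Genuine conjugation preserves the normalized trace, and two of the six shapes have trace zero while $\operatorname{diag}(1,\pm\mathfrak{i})$ does not.
\item No swap of $\alpha$ is needed. Since $\operatorname{diag}(1,\mathfrak{i})\in\mathbb{C}^{\times}BO_{48}$ normalizes $BT_{24}$, $D_\beta^{-1}BT_{24}D_\beta$ is the same set up to scalars for $\beta^{2}=\mathfrak{i}$ and $\beta^{2}=-\mathfrak{i}$.
\end{itemize}
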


\begin{proof}

Note that $\mathcal{N}_{\mathrm{SL}(2,\mathbb{C})}(BT_{24})=BO_{48}$.
In Holant($=_2|f,PBT_{24}P^{-1}$), by a holographic transformation using $P$, we have \[\text{Holant}(=_2|f,PBT_{24}P^{-1})\equiv_T\text{Holant}(P^TP|f',P^{-1}PBT_{24}P^{-1}(P^{-1})^T)=\text{Holant}(S^{-1}|f',BT_{24}S),\]
where $f'=(P^{-1})^{\otimes 4}f$. We consider the following cases:
\begin{caselist}
    \caseitem{1}  $S\in BT_{24}$. Then $=_2\in BT_{24}S=BT_{24}$. The result follows from Lemma~\ref{BT24nsolve}.

    \caseitem{2}  $S\in BO_{48}\setminus BT_{24}$. Up to a nonzero scalar, $S=\begin{pmatrix}
1 & 0\\
0 &\frak{i}
\end{pmatrix},\allowbreak\ \mathrm{or}\ \begin{pmatrix}
1 & 0\\
0 &-\frak{i}
\end{pmatrix},\allowbreak\ \mathrm{or}\ \begin{pmatrix}
1 & \frak{i}\\
\frak{i} &1
\end{pmatrix},\allowbreak\ \mathrm{or}\ 
\begin{pmatrix}
1 & -\frak{i}\\
-\frak{i} &1
\end{pmatrix}$, or $\begin{pmatrix}
1 & 1\\
1 & -1
\end{pmatrix}$, or $\begin{pmatrix}
-1 & 1\\
1 &1
\end{pmatrix}$.
Note that $\begin{pmatrix}
1 & 0\\
0 &-\frak{i}
\end{pmatrix}=\begin{pmatrix}
1 & 0\\
0 &-1
\end{pmatrix}\begin{pmatrix}
1 & 0\\
0 &\frak{i}
\end{pmatrix}$, $\begin{pmatrix}
1 & \frak{i}\\
\frak{i} &1
\end{pmatrix}=\begin{pmatrix}
1 & 1\\
\frak{i} &-\frak{i}
\end{pmatrix}\begin{pmatrix}
1 & 0\\
0 &\frak{i}
\end{pmatrix}$, $\begin{pmatrix}
1 & -\frak{i}\\
-\frak{i} &1
\end{pmatrix}=\begin{pmatrix}
1 & -1\\
-\frak{i} &-\frak{i}
\end{pmatrix}\begin{pmatrix}
1 & 0\\
0 &\frak{i}
\end{pmatrix}$, $\begin{pmatrix}
1 & 1\\
1 &-1
\end{pmatrix}=\begin{pmatrix}
1 & -\frak{i}\\
1 &\frak{i}
\end{pmatrix}\begin{pmatrix}
1 & 0\\
0 &\frak{i}
\end{pmatrix}$, $\begin{pmatrix}
-1 & 1\\
1 &1
\end{pmatrix}=\begin{pmatrix}
-1 & -\frak{i}\\
1 &-\frak{i}
\end{pmatrix}\begin{pmatrix}
1 & 0\\
0 &\frak{i}
\end{pmatrix}$. Then for any $S$,  $BT_{24}S=BT_{24}\begin{pmatrix}
1 & 0\\
0 &\frak{i}
\end{pmatrix}$. By a holographic transformation using $D_\alpha=\begin{pmatrix}
1 & 0\\
0 &\alpha
\end{pmatrix}$, where $\alpha^2=-\frak{i}$, we have \[\text{Holant}(S^{-1}|f',BT_{24}\begin{pmatrix}
1 & 0\\
0 &i
\end{pmatrix})\equiv_T\mathrm{Holant}(D_\alpha S^{-1} D_\alpha|f'',D_\alpha^{-1}BT_{24}D_\alpha)\equiv_T \mathrm{Holant}(=_2|f'',D_\alpha^{-1}BT_{24}D_\alpha)\]
where $f''=(D_\alpha^{-1})^{\otimes 4}f'$. Then the result follows from Lemma~\ref{DaBT24Dansolve}.\qedhere
\end{caselist}

\end{proof}
\subsubsection{\texorpdfstring{Conjugacy of $\mathbf{H}^{(m)}_{f}$ to $BO_{48}$}{Conjugacy of H\_f\textasciicircum (m) to BO\_48}}
First, we list representatives, up to nonzero scalar multiples, of the elements of $BO_{48}$ and divide them into the following disjoint sets.
\[\mathcal{GEQ}= \left\{\begin{pmatrix}
1  & 0\\
 0 &i
\end{pmatrix}
\begin{pmatrix}
1  & 0\\
 0 &-i
\end{pmatrix}
\begin{pmatrix}
1  & 0\\
 0 &1
\end{pmatrix}
\begin{pmatrix}
1  & 0\\
 0 &-1
\end{pmatrix}\right \}\]
\[\mathcal{GDE}=\left\{\begin{pmatrix}
0 & 1\\
 1 &0
\end{pmatrix}
\begin{pmatrix}
0 & 1\\
 -1 &0
\end{pmatrix}
\begin{pmatrix}
0 & 1\\
 i &0
\end{pmatrix}
\begin{pmatrix}
0 & 1\\
 -i &0
\end{pmatrix} \right \}\]

{\small \[\text{Z-family}=\left \{\begin{pmatrix}
1 & 1\\
 i &-i
\end{pmatrix}
\begin{pmatrix}
1 & -i\\
 -1 &-i
\end{pmatrix}
\begin{pmatrix}
1 & i\\
 1 &-i
\end{pmatrix}
\begin{pmatrix}
1 & -1\\
 -i &-i
\end{pmatrix}\right \}\]}
\[\text{Z-family}^{-1}=\left \{\begin{pmatrix}
-i & -1\\
 -i &1
\end{pmatrix}
\begin{pmatrix}
-i & i\\
 1 &1
\end{pmatrix}
\begin{pmatrix}
-i & -i\\
 -1 &1
\end{pmatrix}
\begin{pmatrix}
-i & 1\\
 i &1
\end{pmatrix}\right \}\]
\[\text{H-family}=\left \{\begin{pmatrix}
1 & 1\\
 -1 &1
\end{pmatrix}
\begin{pmatrix}
1 &-1\\
1 &1
\end{pmatrix}
\begin{pmatrix}
1 & i\\
i &1
\end{pmatrix}
\begin{pmatrix}
1 & -i\\
-i &1
\end{pmatrix}
\begin{pmatrix}
1 & -i\\
i &-1
\end{pmatrix}
\begin{pmatrix}
1 & i\\
-i &-1
\end{pmatrix}
\begin{pmatrix}
1 & 1\\
1 &-1
\end{pmatrix}
\begin{pmatrix}
-1 & 1\\
 1 &1
\end{pmatrix}\right \}\]
Note that for any $h\in BO_{48}$, $(\frac{h_{00}}{h_{11}})^4=1$ or $h_{00}=h_{11}=0$, $(\frac{h_{01}}{h_{10}})^4=1$ or $h_{01}=h_{10}=0$. This fact helps us set up homogeneous linear equations for the entries of $f$.
\begin{lemma}\label{solvebo48}
	Let $a,\allowbreak b,\allowbreak c,\allowbreak d\in\mathbb{C}$, $\mu_4=\{\pm 1, \pm \frak{i}\}$. Suppose that, for every $z\in\mu_{4}$, either
	\[
	a+bz=c+dz=0,
	\]
	or both quantities are nonzero and $(a+bz)/(c+dz)\in\mu_{4}$. Then, for some $\sigma\in\mu_{4}$, the vector $(a,\allowbreak b,\allowbreak c,\allowbreak d)$ has one of the forms
	\begin{equation}\label{eq:bo-four-forms}
		(t,u,\sigma t,\sigma u),\qquad
		(t,0,0,\sigma t),\qquad
		(0,t,\sigma t,0).
	\end{equation}
	If in addition $a+b=c+d=1$, these reduce to 
	$$
			(a,b,a,b),\qquad
	(1,0,0,1),\qquad
	(0,1,1,0).
	$$
\end{lemma}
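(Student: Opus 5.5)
The plan is to recast the hypothesis as a polynomial identity on $\mu_4$. Put $p(z)=a+bz$ and $q(z)=c+dz$. For a fixed $z\in\mu_4$, the hypothesis holds at $z$ exactly when $p(z)^4=q(z)^4$. Indeed, if both values vanish the identity is trivial, and if both are nonzero with ratio in $\mu_4$ then the fourth powers agree. Conversely, $p(z)^4=q(z)^4$ forces either both values to be zero or their ratio to be a fourth root of unity. So the hypothesis says that
\[P(z)=p(z)^4-q(z)^4,\]
a polynomial of degree at most $4$, vanishes at the four points of $\mu_4$. It must therefore equal $\kappa(z^4-1)$ for some constant $\kappa$. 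Comparing coefficients of $z,z^2,z^3$ gives
\[a^3b=c^3d,\qquad a^2b^2=c^2d^2,\qquad ab^3=cd^3.\]
Comparing the $z^0$ and $z^4$ coefficients gives $a^4-c^4=-\kappa$ and $b^4-d^4=\kappa$, hence $a^4+b^4=c^4+d^4$.

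Next I would solve this system by cases on $ab$.

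\emph{Case $ab\neq0$.} Then $cd\neq0$ by the middle equation. Dividing $a^3b=c^3d$ by $ab^3=cd^3$ gives $a^2/b^2=c^2/d^2$, and the middle equation gives $cd=\pm ab$.
\begin{itemize}
\item If $cd=ab$, then $a^3b=c^3d$ reduces to $a^2=c^2$. Writing $c=\varepsilon a$ with $\varepsilon=\pm1$, we get $d=\varepsilon b$. This is the form $(t,u,\sigma t,\sigma u)$ with $\sigma=\varepsilon$.
\item If $cd=-ab$, then $a^3b=c^3d$ reduces to $c^2=-a^2$. Writing $c=\delta a$ with $\delta=\pm\frak{i}$, we get $d=-b/\delta=\delta b$. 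This is the same form with $\sigma=\delta$.
\end{itemize}

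\emph{Case $ab=0$.} Then $cd=0$ as well, and the remaining relation is $a^4+b^4=c^4+d^4$.
\begin{itemize}
\item If $b=0$ and $d=0$, then $a^4=c^4$, giving $(t,0,\sigma t,0)$. This is the first form with $u=0$.
\item If $b=0$ and $c=0$, then $a^4=d^4$, giving $(t,0,0,\sigma t)$.
\item The case $a=0$ is symmetric. It gives $(0,t,0,\sigma t)$, which is the first form with $t=0$, or $(0,t,\sigma t,0)$.
\item The degenerate case $a=b=0$ forces $c^4+d^4=0$ with $cd=0$, so $c=d=0$. This is covered by every form with $t=u=0$.
\end{itemize}

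Finally, impose $a+b=c+d=1$. In the first form this gives $t+u=1=\sigma(t+u)$, so $\sigma=1$ and the vector is $(a,b,a,b)$. In the second form, $t=1$ and $\sigma t=1$ give $(1,0,0,1)$. In the third form, $t=1$ and $\sigma t=1$ give $(0,1,1,0)$.

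The only delicate step is the initial equivalence between the mixed zero/ratio condition and $p(z)^4=q(z)^4$ on $\mu_4$, in particular excluding the possibility that exactly one of $p(z),q(z)$ is zero. Once that is checked, everything reduces to comparing coefficients against $z^4-1$, followed by routine algebra.
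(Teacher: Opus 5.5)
Your proof is correct and follows essentially the same route as the paper: you show that $(a+bz)^4-(c+dz)^4$ is a scalar multiple of $z^4-1$, compare coefficients, and split on whether $ab=0$, which is exactly what the paper does. The only minor difference is that the paper obtains $a/b=c/d$ in one step by dividing consecutive coefficient identities, while you split on $cd=\pm ab$; both give $(c,d)=\sigma(a,b)$ with $\sigma\in\mu_4$.
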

\begin{proof}
	The polynomial
	\[
	R(z)=(a+bz)^4-(c+dz)^4
	\]
	vanishes at all four fourth roots of unity. Hence $R(z)$ is a scalar multiple of $z^4-1$. Comparing coefficients gives
	\begin{equation}\label{eq:bo-coefficients}
		a^3b=c^3d,\quad a^2b^2=c^2d^2,\quad ab^3=cd^3,
		\quad a^4+b^4=c^4+d^4.
	\end{equation}
	If $ab\neq0$, then $cd\neq0$. Dividing the first identity by the second and the second by the third gives $a/b=c/d$. Write $(c,\allowbreak d)=\sigma(a,\allowbreak b)$; the identities then give $\sigma^4=1$.
	
	If $ab=0$, then $cd=0$. The last equality in \eqref{eq:bo-coefficients} shows that any remaining nonzero entries differ by a fourth root of unity. Placing those entries in the four possible pairs of positions gives precisely \eqref{eq:bo-four-forms}. A vector with only one nonzero coordinate cannot satisfy the last equality. Finally, the normalization at $z=1$ forces $\sigma=1$ in the proportional case and $t=\sigma=1$ in each sparse case.
\end{proof}

\begin{lemma}\label{BO48}
    Let $f$ be a 4-ary signature,  $S(f_{BO_{48}})\subseteq \mathbb{C}BO_{48}$. Then $\mathrm{Holant}(=_2|f)$ is \#P-hard except for the following cases:
    \begin{itemize}
        \item $f \in \left \langle Z\mathcal{M} \right \rangle$ or $\in \left \langle ZX\mathcal{M} \right \rangle$;
        \item $f$ is $ \mathcal{A}\text{-}$transformable;
        \item $f$ is $ \mathcal{L}\text{-}$transformable;
        \item $f$ is $ \mathcal{P}\text{-}$transformable;
        \item $f\in\mathcal{H}$;
        \item $f \in \left \langle \mathcal{T} \right \rangle$,
    \end{itemize}
    in which cases the problem is computable in polynomial time.
\end{lemma}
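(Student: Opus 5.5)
We follow the template of Lemma~\ref{BT24nsolve} and Lemma~\ref{BD16/BD8}, with Lemma~\ref{solvebo48} organizing the linear constraints. First, we may assume $f\notin\langle\mathcal T\rangle$. Since $f^{34}_{=_2}\not\equiv0$ (Lemma~\ref{sym}), it is a nonzero element of $\mathbb{C}BO_{48}$, so it is affine up to scalar. By Corollary~\ref{polyto-1} we can realize its inverse and attach it to input $x_1$, as in Lemma~\ref{BD8}, to obtain $f^*$ with ${f^*}^{34}_{=_2}=(=_2)$ and $S(f^*_{BO_{48}})\subseteq\mathbb{C}BO_{48}$. This modification preserves membership in $\mathcal A$ and in $\langle\mathcal T\rangle$.

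The key step is to exploit the $\mathcal{GEQ}$ part of $BO_{48}$, namely $[1,0,z]$ for $z\in\mu_4$. Contracting inputs $3,4$ with $[1,0,z]$ gives the binary signature with entries
\[
\bigl(f^*_{0000}+zf^*_{0011},\ f^*_{0100}+zf^*_{0111},\ f^*_{1000}+zf^*_{1011},\ f^*_{1100}+zf^*_{1111}\bigr),
\]
and this must lie in $\mathbb{C}BO_{48}$ for every $z\in\mu_4$. By the stated fact about $BO_{48}$, each of the diagonal ratio and the antidiagonal ratio is either $0/0$ or lies in $\mu_4$. This is exactly the hypothesis of Lemma~\ref{solvebo48}, applied to $(a,b,c,d)=(f^*_{0000},f^*_{0011},f^*_{1100},f^*_{1111})$ and to $(f^*_{0100},f^*_{0111},f^*_{1000},f^*_{1011})$. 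The normalization ${f^*}^{34}_{=_2}=(=_2)$ gives $f^*_{0000}+f^*_{0011}=f^*_{1100}+f^*_{1111}=1$ and $f^*_{0100}+f^*_{0111}=f^*_{1000}+f^*_{1011}=0$. Hence the outer quadruple is $(a,b,a,b)$, $(1,0,0,1)$, or $(0,1,1,0)$, and the inner quadruple is $(t,-t,\sigma t,-\sigma t)$ or zero. We repeat this argument for the contractions over the other pairs $\{i,j\}$, and also with $\mathcal{GDE}$ elements $(0,1,z,0)$, which gives analogous constraints on the weight-one and weight-three entries. Together these cut $f^*$ down to a small number of parametrized shapes.

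We then handle the resulting shapes in three cases.
\begin{enumerate}
\item If all weight-one and weight-three entries vanish, $f^*$ has eight-vertex form, and Lemma~\ref{haveeightvertex} finishes the case; its conclusions match the listed tractable classes.
\item If some contraction ${f^*}^{ij}_{[1,0,-1]}$ lies in $\mathcal{GDE}$, we conjugate it to a diagonal element by attaching a $Z$, $Z^{-1}$, or $H$-family element of $BO_{48}$ to two inputs, as in Case~3 of Lemma~\ref{BT24nsolve}. This reduces to the diagonal case and hence back to case 1.
\item If some contraction, after normalizing by ${f^*}^{ij}_{=_2}$, lands in the $Z$-family, $Z$-family$^{-1}$, or H-family, we show the configuration is inconsistent. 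We combine the quartic relations $a^4=d^4$, $b^4=c^4$ with $a+b=c+d$, exactly as in the $BT_{24}$ proof, to force $a=b=c=d$ and contradict nondegeneracy. Alternatively, the case forces $f^*\in\langle\mathcal T\rangle$ (when the middle block vanishes) or $f^*\in\mathcal A$.
\end{enumerate}

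We expect the main obstacle to be the H-family. It has no analogue in $BT_{24}$, and it allows the contraction ${f^*}^{34}_{[1,0,-1]}$ to have all four entries nonzero with ratios $\pm1,\pm\frak i$. To handle it, we would first try to multiply by a suitable $H$-family element, which is available in $BO_{48}$ via chains, so as to send the pair $\bigl({f^*}^{34}_{=_2},{f^*}^{34}_{[1,0,-1]}\bigr)$ into the $\mathcal{GEQ}\cup\mathcal{GDE}$ situation already treated. If that fails, we would use Lemma~\ref{solvebo48} on the $13$- and $14$-contractions to derive enough linear relations to force eight-vertex form after one more binary modification.
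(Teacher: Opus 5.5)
Your normalization to ${f^*}^{34}_{=_2}=(=_2)$ is correct. So is your use of Lemma~\ref{solvebo48} on $(f^*_{0000},f^*_{0011},f^*_{1100},f^*_{1111})$ and $(f^*_{0100},f^*_{0111},f^*_{1000},f^*_{1011})$, and it does not need the paper's Part~I/Part~II split. The inner form $(t,-t,\sigma t,-\sigma t)$ says that the off-diagonal ratio of ${f^*}^{34}_{[1,0,z]}$ does not depend on $z$; this is the paper's condition $a_2=a_3=a_4$.

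From there on, however, the proposal is not a proof. Your Case~3 claims that a normalized contraction in the Z-family, Z-family$^{-1}$ or H-family gives an inconsistent configuration, via the $BT_{24}$ quartic argument. That argument relies on the $BT_{24}$ fact that diagonal ratio $\pm1$ forces membership in $\mathcal{GEQ}$, and the H-family breaks exactly this. The claim itself is false. Apply $Z$ to input $x_1$ and $(Z^{-1})^T$ to input $x_2$ of $(=_4)$. Since $Z\in\mathbb{C}BO_{48}$, the result still satisfies $S(f^*_{BO_{48}})\subseteq\mathbb{C}BO_{48}$ and has ${f^*}^{34}_{=_2}=(=_2)$. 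Yet ${f^*}^{34}_{[1,0,\frak{i}]}=Z\operatorname{diag}(1,\frak{i})Z^{-1}=\frac{1+\frak{i}}{2}\begin{psmallmatrix}1&-1\\1&1\end{psmallmatrix}$ lies in the H-family.

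Such configurations must be reduced, not refuted; this is the paper's Part~II. If the inner quadruple is nonzero, then either some ${f^*}^{34}_{[1,0,t]}$ with $t\ne1$ lies in the set $\mathcal{S}$ of six matrices that $H$ or $Z$ conjugate to diagonal ones, or two such contractions coincide up to scalar. Modifying $x_1,x_2$ by $H^{-1},H^T$ or $Z^{-1},Z^T$, or normalizing by one of the coinciding contractions, then makes a contraction with $t\ne1$ diagonal.

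With your quadruple forms this step is short. Suppose the inner quadruple is nonzero. Then the outer forms $(1,0,0,1)$ and $(0,1,1,0)$ are impossible. At $z=-1$ the diagonal ratio is $-1$, forcing an H-family element with off-diagonal ratio $\pm1$. At $z=\frak{i}$ the diagonal ratio is $\pm\frak{i}$, forcing a Z-family or Z-family$^{-1}$ element with off-diagonal ratio $\pm\frak{i}$. These contradict a constant off-diagonal ratio. For the outer form $(a,b,a,b)$, some contraction lies in $\mathbb{C}^{\times}\mathcal{S}$. You only mention this as something you ``would first try''.

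The larger missing piece comes after that. A diagonal $\{3,4\}$-contraction with $t\ne1$ kills only $f^*_{0100},f^*_{0111},f^*_{1000},f^*_{1011}$. Your Case~2 says this ``reduces to the diagonal case and hence back to case 1'', which silently assumes all eight weight-one and weight-three entries vanish. In $BT_{24}$ that step was automatic, because the relevant $\{1,2\}$-contractions have diagonal ratio $\pm1$ and are therefore diagonal. In $BO_{48}$ there are full-support elements whose diagonal ratios lie in $\mu_4$. So $f^*_{0001},f^*_{0010},f^*_{1101},f^*_{1110}$ and the middle block remain undetermined.

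The paper's Part~I is a long case analysis of exactly these entries. It applies Lemma~\ref{solvebo48} to the $\{1,2\}$-, $\mathcal{GDE}$- and $\{1,3\}$-contractions, and uses fourth-root mismatches such as $(2/(1+\frak{i}))^4\ne1$ to eliminate most shapes. The surviving shapes lead to outcomes your case list never reaches:
\begin{enumerate}
\item $f^*\in\langle\mathcal T\rangle$;
\item constructing $=_4$ as $M(f^{**}_{x_1,x_2,x_3,x_4})M(f^{**}_{x_3,x_4,x_1,x_2})$, followed by Lemma~\ref{toCSP2};
\item $f^*\in\mathcal A$ with a support that is not of eight-vertex form.
\end{enumerate}
Your Case~3 fallback ``$f^*\in\langle\mathcal T\rangle$ or $f^*\in\mathcal A$'' is given without argument, and the $=_4$/CSP$^2$ route is absent.
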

\begin{proof}
We may assume that $f\notin\langle\mathcal T\rangle$, since otherwise the stated tractable alternative holds. The invertible input modifications below preserve this exclusion, as noted in the definition of binary modification.

    Using the same procedure as in Lemma~\ref{BD8}, we construct $f^*$.
    We also have ${f^*}^{34}_{=_2}=(=_2)$.
    Note that if there exists $t\ne1$, such that ${f^*}_{[1,0,t]}^{34}\in\mathbb{C}\mathcal{GEQ}$, then $f^*_{0100}=f^*_{0111}=f^*_{1000}=f^*_{1011}=0$. We divide the proof into two parts: suppose there exists $t\ne1$, such that ${f^*}_{[1,0,t]}^{34}\in\mathbb{C}\mathcal{GEQ}$ in \partref{I} and there does not exist such $t$ in \partref{II}.

    \proofpart{I} In this case, $f^*_{0100}=f^*_{0111}=f^*_{1000}=f^*_{1011}=0$. By Lemma~\ref{solvebo48}, \[\begin{pmatrix}
 f^*_{0000}\\
f^*_{0011} \\
 f^*_{1100}\\
f^*_{1111}
\end{pmatrix}= \begin{pmatrix}
 t_1\\
 b_1\\
 t_1\\
 b_1
\end{pmatrix}\text{ or }\begin{pmatrix}
 1\\
 0\\
 0\\
 1
\end{pmatrix}\text{ or }\begin{pmatrix}
 0\\
 1\\
 1\\
 0
\end{pmatrix}.\] We consider the following two cases: \[\begin{pmatrix}
 f^*_{0000}\\
f^*_{0011} \\
 f^*_{1100}\\
f^*_{1111}
\end{pmatrix}= \begin{pmatrix}
 t_1\\
 b_1\\
 t_1\\
 b_1
\end{pmatrix}\] in \caseref{1}, \[\begin{pmatrix}
 f^*_{0000}\\
f^*_{0011} \\
 f^*_{1100}\\
f^*_{1111}
\end{pmatrix}= \begin{pmatrix}
 1\\
 0\\
 0\\
1
\end{pmatrix}\text{ or }\begin{pmatrix}
 0\\
 1\\
 1\\
 0
\end{pmatrix}\] in \caseref{2}.
\begin{caselist}
    \caseitem{1}  In this case, by Lemma~\ref{solvebo48}, \[
\begin{pmatrix}
 f^*_{0001}\\
f^*_{1101} \\
 f^*_{0010}\\
f^*_{1110}
\end{pmatrix}
=\begin{pmatrix}
 t_2\\
 b_2\\
 \sigma_2t_2\\
 \sigma_2b_2
\end{pmatrix}\text{ or }\begin{pmatrix}
 t_2\\
 0\\
 0\\
 \sigma_2 t_2
\end{pmatrix}\text{ or }\begin{pmatrix}
 0\\
 t_2\\
 \sigma_2t_2\\
 0
\end{pmatrix}or \begin{pmatrix}
0 \\
 0\\
0 \\
0
\end{pmatrix}
.\]

If \[\begin{pmatrix}
 f^*_{0001}\\
f^*_{1101} \\
 f^*_{0010}\\
f^*_{1110}
\end{pmatrix}=\begin{pmatrix}
0 \\
 0\\
0 \\
0
\end{pmatrix},\] then $f^*$ has eight-vertex form and the result follows from Lemma~\ref{haveeightvertex}.

If \[\begin{pmatrix}
 f^*_{0001}\\
f^*_{1101} \\
 f^*_{0010}\\
f^*_{1110}
\end{pmatrix}=\begin{pmatrix}
 0\\
 t_2\\
 \sigma_2t_2\\
 0
\end{pmatrix},\] where $t_2\ne0$, then ${f^*}_{=_2}^{12}=(2t_1,t_2,\sigma_2t_2,2b_1)$. Note that if $t_1=0, b_1\ne0$ or $b_1=0, t_1\ne0$, then ${f^*}_{=_2}^{12}\notin BO_{48}$, a contradiction. Then $t_1b_1\ne0$. Thus ${f^*}_{=_2}^{12}\in BO_{48}\setminus (\mathcal{GEQ}\cup\mathcal{GDE})$. Then we have $(\frac{2t_1}{t_2})^4=1$. Further by ${f^*}_{[1,0,\frak{i}]}^{12}=((1+\frak{i})t_1,\frak{i}t_2,\sigma_2t_2,(1+\frak{i})b_1)$, we have $(\frac{(1+\frak{i})t_1}{\frak{i}t_2})^4=1$. Writing $\frac{2t_1}{t_2}=\alpha_1$, $\frac{(1+\frak{i})t_1}{\frak{i}t_2}=\alpha_2$, we have $t_2=\frac{2t_1}{\alpha_1}=\frac{(1+i)t_1}{i\alpha_2}$, i.e. $\frac{i\alpha_2}{\alpha_1}=\frac{1+i}{2}$. However $(\frac{i\alpha_2}{\alpha_1})^4=1\ne(\frac{1+i}{2})^4=-\tfrac14$, a contradiction.

If \[\begin{pmatrix}
 f^*_{0001}\\
f^*_{1101} \\
 f^*_{0010}\\
f^*_{1110}
\end{pmatrix}=\begin{pmatrix}
 t_2\\
 0\\
 0\\
 \sigma_2t_2
\end{pmatrix},\] similar to the case in which \[\begin{pmatrix}
 f^*_{0001}\\
f^*_{1101} \\
 f^*_{0010}\\
f^*_{1110}
\end{pmatrix}=\begin{pmatrix}
 0\\
 t_2\\
 \sigma_2t_2\\
 0
\end{pmatrix},\] we can obtain a contradiction.

If \[
\begin{pmatrix}
 f^*_{0001}\\
f^*_{1101} \\
 f^*_{0010}\\
f^*_{1110}
\end{pmatrix}
=\begin{pmatrix}
 t_2\\
 b_2\\
 \sigma_2t_2\\
 \sigma_2b_2
\end{pmatrix},\] by Lemma~\ref{solvebo48}, we have \[
\begin{pmatrix}
 f^*_{0101}\\
f^*_{0110} \\
f^*_{1001}\\
f^*_{1010}
\end{pmatrix}
=\begin{pmatrix}
 t_3\\
 b_3\\
 \sigma_3t_3\\
 \sigma_3b_3
\end{pmatrix}\text{ or }\begin{pmatrix}
 t_3\\
 0\\
 0\\
 \sigma_3 t_3
\end{pmatrix}\text{ or }\begin{pmatrix}
 0\\
 t_3\\
 \sigma_3t_3\\
 0
\end{pmatrix}or \begin{pmatrix}
0 \\
 0\\
0 \\
0
\end{pmatrix}
.\] Further, we consider the following cases:
\[\begin{pmatrix}
 f^*_{0101}\\
f^*_{0110} \\
f^*_{1001}\\
f^*_{1010}
\end{pmatrix}
=\begin{pmatrix}
 t_3\\
 b_3\\
 \sigma_3t_3\\
 \sigma_3b_3
\end{pmatrix}\] in \subcaseref{1.1}, \[\begin{pmatrix}
 f^*_{0101}\\
f^*_{0110} \\
f^*_{1001}\\
f^*_{1010}
\end{pmatrix}
=\begin{pmatrix}
 t_3\\
 0\\
 0\\
 \sigma_3 t_3
\end{pmatrix}\text{ or }\begin{pmatrix}
 0\\
 t_3\\
 \sigma_3t_3\\
 0
\end{pmatrix}\] in \subcaseref{1.2}, \[\begin{pmatrix}
 f^*_{0101}\\
f^*_{0110} \\
f^*_{1001}\\
f^*_{1010}
\end{pmatrix}
=\begin{pmatrix}
0 \\
 0\\
0 \\
0
\end{pmatrix}\]
\begin{caselist}
    \subcaseitem{1.1}  In this case, \[
M(f^*)=\begin{pmatrix}
  \overset{0000}{t_1} &  \overset{0001}{t_2}&  \overset{0010}{\sigma_2t_2}& \overset{0011}{b_1}\\
  \overset{0100}{0} &  \overset{0101}{t_3}&  \overset{0110}{b_3}& \overset{0111}{0}\\
  \overset{1000}{0} &  \overset{1001}{\sigma_3t_3}&  \overset{1010}{\sigma_3b_3}& \overset{1011}{0}\\
  \overset{1100}{t_1} &  \overset{1101}{b_2}&  \overset{1110}{\sigma_2b_2}& \overset{1111}{b_1}
\end{pmatrix}
\]
and $t_1t_2t_3b_1b_2b_3\ne 0$. Now we solve for $t_1,t_2,t_3,b_1,b_2,b_3$.
By ${f^*}_{\mathcal{GDE}}^{12}$, we have $b_3=\sigma_4t_3$.
By ${f^*}_{\mathcal{GDE}}^{34}$, we have $b_2=\sigma_5t_2$. By ${f^*}_{\mathcal{GEQ}}^{12}$, we have $b_1=\sigma_6t_1$. Now,
\[
M(f^*)=\begin{pmatrix}
  \overset{0000}{t_1} &  \overset{0001}{t_2}&  \overset{0010}{\sigma_2t_2}& \overset{0011}{\sigma_6t_1}\\
  \overset{0100}{0} &  \overset{0101}{t_3}&  \overset{0110}{\sigma_4t_3}& \overset{0111}{0}\\
  \overset{1000}{0} &  \overset{1001}{\sigma_3t_3}&  \overset{1010}{\sigma_3\sigma_4t_3}& \overset{1011}{0}\\
  \overset{1100}{t_1} &  \overset{1101}{\sigma_5t_2}&  \overset{1110}{\sigma_2\sigma_5t_2}& \overset{1111}{\sigma_6t_1}
\end{pmatrix}
\]
If $\sigma_2^2\ne \sigma_4^2$, then $(1+\sigma_2^3\sigma_4)$, $(1+\sigma_4^3\sigma_2)$ are nonzero and $\sigma_2^2+\sigma_4^2=0$.
By
${f^*}_{(0,1,\sigma_2^3,0)}^{34}=(2t_2,(1+\sigma_2^3\sigma_4)t_3,(1+\sigma_2^3\sigma_4)\sigma_3t_3,2\sigma_5 t_2)$, ${f^*}_{(0,1,\sigma_4^3,0)}^{34}=((1+\sigma_4^3\sigma_2)t_2,2t_3,2\sigma_3t_3,(1+\sigma_4^3\sigma_2)\sigma_5 t_2)$,
we have $\frac{2}{1+\sigma_2^3\sigma_4}\frac{t_2}{t_3}=\alpha_1$ and $\frac{1+\sigma_4^3\sigma_2}{2}\frac{t_2}{t_3}=\alpha_2$.  Then $\frac{t_2}{t_3}=\frac{\alpha_1(1+\sigma_2^3\sigma_4)}{2}=\frac{2\alpha_2}{1+\sigma_4^3\sigma_2}$, $\frac{\alpha_1}{\alpha_2}= \frac{4}{(1+\sigma_2^3\sigma_4)(1+\sigma_4^3\sigma_2)}=\frac{4}{2+\sigma_2\sigma_4(\sigma_2^2+\sigma_4^2)}=2$. However $(\frac{\alpha_1}{\alpha_2})^4=1\ne2^4$, a contradiction. Therefore, we have $\sigma_2^2 = \sigma_4^2$. Now we can choose the appropriate $\beta$ such that $\beta_1^4=1$, $\beta_1\sigma_2\ne-1$ and $\beta_1\sigma_4 \ne-1$. By ${{f^*}_{(0,1,\beta_1,0)}^{34}}=((1+\beta_1\sigma_2)t_2,(1+\beta_1\sigma_4)t_3,(1+\beta_1\sigma_4)\sigma_3t_3,(1+\beta_1\sigma_2)\sigma_5t_2)$, we have $(\frac{(1+\beta_1\sigma_2)t_2}{(1+\beta_1\sigma_4)t_3})^4=1$. Note that $(\frac{1+\beta_1\sigma_2}{1+\beta_1\sigma_4})^4=1$ since $\sigma_2^2 = \sigma_4^2$. Thus we have
$t_3=\sigma_7t_2$, where $\sigma_7^4=1$.

Similarly, if $\sigma_5^2\ne1$, then $(1+\sigma_5)$, $(1+\sigma_5^3)$ are nonzero, and $1+\sigma_5^2=0$.
By ${{f^*}_{=_2}^{12}}=(2t_1,(1+\sigma_5)t_2,(1+\sigma_5)\sigma_2t_2,2\sigma_6t_1)$,
${{f^*}_{[1,0,\sigma_5^3]}^{12}}=((1+\sigma_5^3)t_1,2t_2,2\sigma_2t_2,(1+\sigma_5^3)\sigma_6t_1)$, we have $\frac{2}{1+\sigma_5}\frac{t_1}{t_2}=\alpha_1$ and $\frac{1+\sigma_5^3}{2}\frac{t_1}{t_2}=\alpha_2$. Then we obtain a contradiction. Thus $\sigma_5^2=1$ and we can choose an appropriate $\beta_2$, such that $\beta_2\ne-1$ and $\beta_2\sigma_5\ne-1$. Then by  ${{f^*}_{[1,0,\beta_2]}^{12}}=((1+\beta_2)t_1,(1+\beta_2\sigma_5)t_2,(1+\beta_2\sigma_5)\sigma_2t_2,(1+\beta_2)\sigma_6t_1)$, we have $(\frac{(1+\beta_2)t_1}{(1+\beta_2\sigma_5)t_2})^4=1$ and $(\frac{1+\beta_2}{1+\beta_2\sigma_5})^4=1$.
Thus, we have $t_1=\sigma_8t_2$. Now,
\[
M(f^*)=t_2\begin{pmatrix}
  \overset{0000}{\sigma_8} &  \overset{0001}{1}&  \overset{0010}{\sigma_2}& \overset{0011}{\sigma_6\sigma_8}\\
  \overset{0100}{0} &  \overset{0101}{\sigma_7}&  \overset{0110}{\sigma_4\sigma_7}& \overset{0111}{0}\\
  \overset{1000}{0} &  \overset{1001}{\sigma_3\sigma_7}&  \overset{1010}{\sigma_3\sigma_4\sigma_7}& \overset{1011}{0}\\
  \overset{1100}{\sigma_8} &  \overset{1101}{\sigma_5}&  \overset{1110}{\sigma_2\sigma_5}& \overset{1111}{\sigma_6\sigma_8}
\end{pmatrix}
\]
Obviously we can always choose an appropriate $\beta_3$ such that $\sigma_7+\beta_3\sigma_6\sigma_8\ne0$ and $(\frac{1}{\sigma_7+\beta_3\sigma_6\sigma_8})^4 \ne1$.
Then
${{f^*}_{[1,0,\beta_3]}^{13}}=(\sigma_8+\beta_3\sigma_3\sigma_4\sigma_7,1,\beta_3\sigma_2\sigma_5,\sigma_7+\beta_3\sigma_6\sigma_8)\notin BO_{48}$, since $(\frac{1}{\sigma_7+\beta_3\sigma_6\sigma_8})^4 \ne1$.

\subcaseitem{1.2}  In this case, we can obtain a contradiction using steps similar to those in the case in which \[\begin{pmatrix}
 f^*_{0001}\\
f^*_{1101} \\
 f^*_{0010}\\
f^*_{1110}
\end{pmatrix}=\begin{pmatrix}
 0\\
 t_2\\
 \sigma_2t_2\\
 0
\end{pmatrix}\]
\subcaseitem{1.3}  In this case, \[
M(f^*)=\begin{pmatrix}
  \overset{0000}{t_1} &  \overset{0001}{t_2}&  \overset{0010}{\sigma_2t_2}& \overset{0011}{b_1}\\
  \overset{0100}{0} &  \overset{0101}{0}&  \overset{0110}{0}& \overset{0111}{0}\\
  \overset{1000}{0} &  \overset{1001}{0}&  \overset{1010}{0}& \overset{1011}{0}\\
  \overset{1100}{t_1} &  \overset{1101}{b_2}&  \overset{1110}{\sigma_2b_2}& \overset{1111}{b_1}
\end{pmatrix}.
\]
By ${f^*}_{(0,1,\sigma_2^3,0)}^{34}=2[t_2,0,b_2]$, we have $b_2=\sigma_3t_2$. By
${f^*}_{=_2}^{12}=(2t_1,t_2+b_2,\sigma_2(t_2+b_2),2b_1)$, we have $b_1=\sigma_4t_1$.
If $\sigma_3^2\ne1$, we have $1+\sigma_3$ and $1+\sigma_3^3$ are nonzero and $1+\sigma_3^2=0$.
Now, by ${f^*}_{=_2}^{12}=(2t_1,(1+\sigma_3)t_2,(1+\sigma_3)\sigma_2t_2,2\sigma_4t_1)$,
 ${f^*}_{[1,0,\sigma_3^3]}^{12}=((1+\sigma_3^3)t_1,2t_2,2\sigma_2t_2,(1+\sigma_3^3)\sigma_4t_1)$, we have $(\frac{2t_1}{(1+\sigma_3)t_2})^4=(\frac{(1+\sigma_3^3)t_1}{2t_2})^4=1$, i.e., $2^4\times2^4=(1+\sigma_3^3)^4(1+\sigma_3)^4=[2+\sigma_3(1+\sigma_3^2)]^4=2^4$, then we obtain a contradiction. Then $\sigma_3^2=1$ and we have $t_1=\sigma_5t_2$. Now \[
M(f^*)=t_2\begin{pmatrix}
  \overset{0000}{\sigma_5} &  \overset{0001}{1}&  \overset{0010}{\sigma_2}& \overset{0011}{\sigma_4\sigma_5}\\
  \overset{0100}{0} &  \overset{0101}{0}&  \overset{0110}{0}& \overset{0111}{0}\\
  \overset{1000}{0} &  \overset{1001}{0}&  \overset{1010}{0}& \overset{1011}{0}\\
  \overset{1100}{\sigma_5} &  \overset{1101}{\sigma_3}&  \overset{1110}{\sigma_2\sigma_3}& \overset{1111}{\sigma_4\sigma_5}
\end{pmatrix}.
\] If $\sigma_3=1$, then $f^*, f \in \left \langle \mathcal{T} \right \rangle$. If $\sigma_3=-1$, \[
M(f^*)=t_2\begin{pmatrix}
  \overset{0000}{\sigma_5} &  \overset{0001}{1}&  \overset{0010}{\sigma_2}& \overset{0011}{\sigma_4\sigma_5}\\
  \overset{0100}{0} &  \overset{0101}{0}&  \overset{0110}{0}& \overset{0111}{0}\\
  \overset{1000}{0} &  \overset{1001}{0}&  \overset{1010}{0}& \overset{1011}{0}\\
  \overset{1100}{\sigma_5} &  \overset{1101}{-1}&  \overset{1110}{-\sigma_2}& \overset{1111}{\sigma_4\sigma_5}
\end{pmatrix}.
\]
By connecting input $x_2$ of $[1,0,\sigma_2^3]$ to input $x_3$ of $f^*$, we construct \[M(f^{**})=\begin{pmatrix}
  \overset{0000}{\sigma_5} &  \overset{0001}{1}&  \overset{0010}{1}& \overset{0011}{\sigma_6}\\
  \overset{0100}{0} &  \overset{0101}{0}&  \overset{0110}{0}& \overset{0111}{0}\\
  \overset{1000}{0} &  \overset{1001}{0}&  \overset{1010}{0}& \overset{1011}{0}\\
  \overset{1100}{\sigma_5} &  \overset{1101}{-1}&  \overset{1110}{-1}& \overset{1111}{\sigma_6}
\end{pmatrix},
\]where $\sigma_6=\sigma_2^3\sigma_4\sigma_5$.
Note that ${f^{**}}_{=_2}^{13}=(\sigma_5,1,-1,\sigma_6)$. For a matrix $h\in BO_{48}$, if $\frac{h_{01}}{h_{10}}=-1$, then $h$ is $\begin{pmatrix}
1 & 1\\
 -1 &1
\end{pmatrix}\ \text{or} \
\begin{pmatrix}
1 &-1\\
1 &1
\end{pmatrix}$ or $\begin{pmatrix}
1 & -i\\
i &-1
\end{pmatrix}\ \text{or} \
\begin{pmatrix}
1 & i\\
-i &-1
\end{pmatrix}$. Thus we have $\frac{\sigma_5}{\sigma_6}=\pm1$, i.e. $\sigma_5^2=\sigma_6^2$. Then \[\begin{gathered}f^{**}_{x_1,x_2,x_3,x_4}f^{**}_{x_3,x_4,x_1,x_2}\\ =\begin{pmatrix}
  \overset{0000}{\sigma_5} &  \overset{0001}{1}&  \overset{0010}{1}& \overset{0011}{\sigma_6}\\
  \overset{0100}{0} &  \overset{0101}{0}&  \overset{0110}{0}& \overset{0111}{0}\\
  \overset{1000}{0} &  \overset{1001}{0}&  \overset{1010}{0}& \overset{1011}{0}\\
  \overset{1100}{\sigma_5} &  \overset{1101}{-1}&  \overset{1110}{-1}& \overset{1111}{\sigma_6}
\end{pmatrix}\begin{pmatrix}
  \overset{0000}{\sigma_5} &  \overset{0001}{0}&  \overset{0010}{0}& \overset{0011}{\sigma_5}\\
  \overset{0100}{1} &  \overset{0101}{0}&  \overset{0110}{0}& \overset{0111}{-1}\\
  \overset{1000}{1} &  \overset{1001}{0}&  \overset{1010}{0}& \overset{1011}{-1}\\
  \overset{1100}{\sigma_6} &  \overset{1101}{0}&  \overset{1110}{0}& \overset{1111}{\sigma_6}
\end{pmatrix}\\ =\begin{pmatrix}
  \overset{0000}{\sigma_5^2+\sigma_6^2+2} &  \overset{0001}{0}&  \overset{0010}{0}& \overset{0011}{\sigma_5^2+\sigma_6^2-2}\\
  \overset{0100}{0} &  \overset{0101}{0}&  \overset{0110}{0}& \overset{0111}{0}\\
  \overset{1000}{0} &  \overset{1001}{0}&  \overset{1010}{0}& \overset{1011}{0}\\
  \overset{1100}{\sigma_5^2+\sigma_6^2-2} &  \overset{1101}{0}&  \overset{1110}{0}& \overset{1111}{\sigma_5^2+\sigma_6^2+2}
\end{pmatrix}.\end{gathered}\]
Then $f^{**}_{x_1,x_2,x_3,x_4}f^{**}_{x_3,x_4,x_1,x_2}$ is either $(=_4)$ or $(\ne_4)$. If it is $(\ne_4)$, note that we have $(\ne_2)$, thus we can also construct $(=_4)$. Then by Lemma~\ref{toCSP2} the result follows.
\end{caselist}
\caseitem{2}  If \[\begin{pmatrix}
 f^*_{0000}\\
f^*_{0011} \\
 f^*_{1100}\\
f^*_{1111}
\end{pmatrix}=\begin{pmatrix}
 0\\
 1\\
 1\\
 0
\end{pmatrix},\] by connecting two copies of $\ne_2$ to inputs $x_3,x_4$ of $f^*$, we construct a new 4-ary signature $f^{**}$ and then \[\begin{pmatrix}
 f^{**}_{0000}\\
f^{**}_{0011} \\
 f^{**}_{1100}\\
f^{**}_{1111}
\end{pmatrix}=\begin{pmatrix}
 1\\
 0\\
 0\\
 1
\end{pmatrix}.\] Note that the cases of $f^*$ and $f^{**}$ are equivalent, thus we can assume that \[\begin{pmatrix}
 f^{*}_{0000}\\
f^{*}_{0011} \\
 f^{*}_{1100}\\
f^{*}_{1111}
\end{pmatrix}=\begin{pmatrix}
 1\\
 0\\
 0\\
 1
\end{pmatrix}.\]

Similarly, by Lemma~\ref{solvebo48}, \[
\begin{pmatrix}
 f^*_{0001}\\
f^*_{1101} \\
 f^*_{0010}\\
f^*_{1110}
\end{pmatrix}
=\begin{pmatrix}
 t_2\\
 b_2\\
 \sigma_2t_2\\
 \sigma_2b_2
\end{pmatrix}\text{ or }\begin{pmatrix}
 t_2\\
 0\\
 0\\
 \sigma_2 t_2
\end{pmatrix}\text{ or }\begin{pmatrix}
 0\\
 t_2\\
 \sigma_2t_2\\
 0
\end{pmatrix}or \begin{pmatrix}
0 \\
 0\\
0 \\
0
\end{pmatrix}
.\]

If \[\begin{pmatrix}
 f^*_{0001}\\
f^*_{1101} \\
 f^*_{0010}\\
f^*_{1110}
\end{pmatrix}=\begin{pmatrix}
0 \\
 0\\
0 \\
0
\end{pmatrix},\] then $f^*$ has eight-vertex form and the result follows from Lemma~\ref{haveeightvertex}.

If \[\begin{pmatrix}
 f^*_{0001}\\
f^*_{1101} \\
 f^*_{0010}\\
f^*_{1110}
\end{pmatrix}=\begin{pmatrix}
 t_2\\
 b_2\\
 \sigma_2t_2\\
  \sigma_2b_2
\end{pmatrix},\] where $t_2b_2\ne0$. Choose an appropriate $\beta_1$, such that $\beta_1^4=1$ and $1+\beta_1\sigma_2\ne0$. By ${f^*}_{(0,1,\beta_1,0)}^{34}=((1+\beta_1\sigma_2)t_2,f^*_{0101}+\beta_1 f^*_{0110},f^*_{1001}+\beta_1 f^*_{1010},(1+\beta_1\sigma_2)b_2)$, we have $b_2 =\sigma_3 t_2$.

Choose an appropriate $\beta_2$, such that $\beta_2^4=1$ and $1+\beta_2\sigma_3=(1+\frak{i})$. Then ${f^*}_{[1,0,\beta_2]}^{12}=(1,(1+\beta_2\sigma_3)t_2,(1+\beta_2\sigma_3)\sigma_2t_2,\beta_2)$, we have $(\frac{1}{(1+\beta_2\sigma_3)t_2})^4=(\frac{1}{(1+\frak{i})t_2})^4=1$.
By ${f^*}_{[1,0,\sigma_3^3]}^{12}=(1,2t_2,2\sigma_2t_2,\sigma_3^3 )$, we have $(\frac{1}{2t_2})^4=1$. Thus $(\frac{1}{(1+\frak{i})t_2})^4=1$ and $(\frac{1}{2t_2})^4=1$ implies that $(\frac{2}{1+\frak{i}})^4=1$, a contradiction.

If \[
\begin{pmatrix}
 f^*_{0001}\\
f^*_{1101} \\
 f^*_{0010}\\
f^*_{1110}
\end{pmatrix}
=\begin{pmatrix}
 t_2\\
 0\\
 0\\
 \sigma_2 t_2
\end{pmatrix}\text{ or }\begin{pmatrix}
 0\\
 t_2\\
 \sigma_2t_2\\
 0
\end{pmatrix}\]
Then ${f^*}_{=_2}^{12}=(1,t_2,\sigma_2t_2, 1)$ or ${f^*}_{=_2}^{12}=(1,\sigma_2t_2,t_2, 1)$, and we always have $t_2=\sigma_4$. Further, we consider \[\begin{pmatrix}
 f^*_{0101}\\
f^*_{0110} \\
 f^*_{1001}\\
f^*_{1010}
\end{pmatrix}\] in the following three cases:
\[\begin{pmatrix}
 f^*_{0101}\\
f^*_{0110} \\
 f^*_{1001}\\
f^*_{1010}
\end{pmatrix}=\begin{pmatrix}
 t_3\\
 b_3\\
 \sigma_3t_3\\
  \sigma_3b_3
\end{pmatrix},\] where $t_3b_3\ne0$ in \subcaseref{2.1}; \[\begin{pmatrix}
 f^*_{0101}\\
f^*_{0110} \\
 f^*_{1001}\\
f^*_{1010}
\end{pmatrix}=\begin{pmatrix}
 0\\
0\\
0\\
 0
\end{pmatrix}\] in \subcaseref{2.2}; \[\begin{pmatrix}
 f^*_{0101}\\
f^*_{0110} \\
 f^*_{1001}\\
f^*_{1010}
\end{pmatrix}=\begin{pmatrix}
 t_3\\
 0\\
 0\\
 \sigma_3 t_3
\end{pmatrix}\text{ or }\begin{pmatrix}
 0\\
 t_3\\
 \sigma_3t_3\\
 0
\end{pmatrix}\] in \subcaseref{2.3}.
\begin{caselist}
    \subcaseitem{2.1}  In this case, similar to \[\begin{pmatrix}
 f^*_{0001}\\
f^*_{1101} \\
 f^*_{0010}\\
f^*_{1110}
\end{pmatrix}=\begin{pmatrix}
 t_2\\
 b_2\\
 \sigma_2t_2\\
  \sigma_2b_2
\end{pmatrix},\] we can obtain a contradiction.
\subcaseitem{2.2}  In this case, $f^*$ belongs to $\mathcal{A}$ and hence $f$ belongs to $\mathcal{A}$.
\subcaseitem{2.3}  In this case, we have $t_3=\sigma_5$ by ${f^*}_{\ne_2}^{34}$. Then there are four cases of $f^*$.
If \[M(f^*)=\begin{pmatrix}
  \overset{0000}{1} &  \overset{0001}{\sigma_4}&  \overset{0010}{0}& \overset{0011}{0}\\
  \overset{0100}{0} &  \overset{0101}{\sigma_5}&  \overset{0110}{0}& \overset{0111}{0}\\
  \overset{1000}{0} &  \overset{1001}{0}&  \overset{1010}{\sigma_3 \sigma_5}& \overset{1011}{0}\\
  \overset{1100}{0} &  \overset{1101}{0}&  \overset{1110}{\sigma_2\sigma_4}& \overset{1111}{1}
\end{pmatrix},\] ${f^*}_{[1,0,\beta]}^{13}=(1+\beta\sigma_3\sigma_5,\sigma_4,\beta\sigma_2\sigma_4,\sigma_5+\beta)$, by choosing an appropriate $\beta$ such that $\beta^4=1$ and $1+\beta\sigma_3\sigma_5=1+\frak{i}$, ${f^*}_{[1,0,\beta]}^{13}\notin BO_{48}$. If \[ M(f^*)=\begin{pmatrix}
  \overset{0000}{1} &  \overset{0001}{\sigma_4}&  \overset{0010}{0}& \overset{0011}{0}\\
  \overset{0100}{0} &  \overset{0101}{0}&  \overset{0110}{\sigma_5}& \overset{0111}{0}\\
  \overset{1000}{0} &  \overset{1001}{\sigma_3 \sigma_5}&  \overset{1010}{0}& \overset{1011}{0}\\
  \overset{1100}{0} &  \overset{1101}{0}&  \overset{1110}{\sigma_2\sigma_4}& \overset{1111}{1}
\end{pmatrix},\] similarly, let $1+\beta\sigma_5=1+\frak{i}$, then ${f^*}_{[1,0,\beta]}^{23}= (1+\beta\sigma_5,\sigma_4,\beta\sigma_2\sigma_4,\sigma_3\sigma_5+\beta)\notin BO_{48}$. If \[M(f^*)=\begin{pmatrix}
  \overset{0000}{1} &  \overset{0001}{0}&  \overset{0010}{\sigma_4}& \overset{0011}{0}\\
  \overset{0100}{0} &  \overset{0101}{\sigma_5}&  \overset{0110}{0}& \overset{0111}{0}\\
  \overset{1000}{0} &  \overset{1001}{0}&  \overset{1010}{\sigma_3 \sigma_5}& \overset{1011}{0}\\
  \overset{1100}{0} &  \overset{1101}{\sigma_2\sigma_4}&  \overset{1110}{0}& \overset{1111}{1}
\end{pmatrix},\] let $1+\beta\sigma_5=1+\frak{i}$, then ${f^*}_{[1,0,\beta]}^{24}=(1+\beta\sigma_5,\sigma_4,\beta\sigma_2\sigma_4,\sigma_3\sigma_5+\beta)\notin BO_{48}$. If \[M(f^*)=\begin{pmatrix}
  \overset{0000}{1} &  \overset{0001}{0}&  \overset{0010}{\sigma_4}& \overset{0011}{0}\\
  \overset{0100}{0} &  \overset{0101}{0}&  \overset{0110}{\sigma_5}& \overset{0111}{0}\\
  \overset{1000}{0} &  \overset{1001}{\sigma_3 \sigma_5}&  \overset{1010}{0}& \overset{1011}{0}\\
  \overset{1100}{0} &  \overset{1101}{\sigma_2\sigma_4}&  \overset{1110}{0}& \overset{1111}{1}
\end{pmatrix},\] let $1+\beta\sigma_3\sigma_5=1+\frak{i}$, then ${f^*}_{[1,0,\beta]}^{14}=(1+\beta\sigma_3\sigma_5,\sigma_4,\beta\sigma_2\sigma_4,\sigma_5+\beta)\notin BO_{48}$.
\end{caselist}
\end{caselist}
 \proofpart{II} Note that \[
\begin{pmatrix}
0 & 1\\
 1 &0
\end{pmatrix}=H\begin{pmatrix}
1 & 0\\
 0 &-1
\end{pmatrix} H^{-1},\ 
\begin{pmatrix}
0 & 1\\
 -1 &0
\end{pmatrix}=\frak{i}Z\begin{pmatrix}
1 & 0\\
 0 &-1
\end{pmatrix} Z^{-1},\ 
\begin{pmatrix}
1 & 1\\
 -1 &1
\end{pmatrix}=(1+\frak{i})Z\begin{pmatrix}
1 & 0\\
 0 &-\frak{i}
\end{pmatrix} Z^{-1},\ 
\]\[
\begin{pmatrix}
1 & -1\\
 1 &1
\end{pmatrix}=(1-\frak{i})Z\begin{pmatrix}
1 & 0\\
 0 &\frak{i}
\end{pmatrix} Z^{-1},\ 
\begin{pmatrix}
1 & \frak{i}\\
 \frak{i} &1
\end{pmatrix}=(1+\frak{i})H\begin{pmatrix}
1 & 0\\
 0 &-\frak{i}
\end{pmatrix} H^{-1},\ 
\begin{pmatrix}
1 & -\frak{i}\\
 -\frak{i} &1
\end{pmatrix}=(1-\frak{i})H\begin{pmatrix}
1 & 0\\
 0 &\frak{i}
\end{pmatrix} H^{-1}\ 
\]
Let $\mathcal{S}$ be a set made up of the six matrices above.
If there exists $t\ne1$ such that ${f^*}_{[1,0,t]}^{34}\in \mathcal{S}$, by connecting $H^{-1}$(or $Z^{-1}$) to input $x_1$ of $f^*$, and $H^{T}$(or $Z^{T}$) to input $x_2$ of $f^*$, the new 4-ary signature $f^{**}$ satisfies ${f^{**}}_{=_2}^{34}=(=_2)$ and  ${f^{**}}_{[1,0,t]}^{34}\in \mathcal{GEQ}$. Then $f^{**}$ is in \partref{I}. Further, if there exists $t_1,t_2$, such that ${f^{*}}_{[1,0,t_1]}^{34}={f^{*}}_{[1,0,t_2]}^{34}$, up to a nonzero scalar, then by connecting input $x_1$ of $[1,0,t_1]$ to input $x_3$ of $f^*$, we construct a new 4-ary signature $f^{**}$ that satisfies ${f^{**}}_{=_2}^{34}={f^{*}}_{[1,0,t_1]}^{34}={f^{*}}_{[1,0,t_2]}^{34}={f^{**}}_{[1,0,\frac{t_2}{t_1}]}^{34}$, up to some nonzero scalar. Then by connecting input $x_2$ of $({f^{**}}_{=_2}^{34})^{-1}$ to input $x_1$ of $f^{**}$, we construct a new 4-ary signature $f^{***}$ that satisfies ${f^{***}}_{=_2}^{34}={f^{***}}_{[1,0,\frac{t_2}{t_1}]}^{34}=(=_2)$, up to a nonzero scalar, then $f^{***}$ is in \partref{I}.

We now examine the homogeneous linear system that \[\begin{pmatrix}
 f^*_{0100}\\
f^*_{0111} \\
 f^*_{1000}\\
f^*_{1011}
\end{pmatrix}\] satisfies, whose coefficient matrix is
\[
\begin{pmatrix}
  1&  1& 0 & 0\\
  0&  0& -1 & -1\\
 1 &  -1&  -a_2& a_2\\
 1 &  i& -a_3 & -ia_3\\
  1&  -i& -a_4 &ia_4
\end{pmatrix}.\]
The homogeneous linear system has a nonzero solution if and only if $a_2=a_3=a_4$. This means that the ratio of ${f^{*}}_{[1,0,-1]}^{34}$, ${f^{*}}_{[1,0,\frak{i}]}^{34}$ and ${f^{*}}_{[1,0,-\frak{i}]}^{34}$ at positions 01 and 10 is the same.
We divide $BO_{48}\setminus\{\mathcal{GEQ}\cup\mathcal{S}\}$ into four categories based on the ratio of the entries in positions 01 and 10.
If $\frac{{{f^{*}}_{[1,0,-1]}^{34}}_{01}}{{{f^{*}}_{[1,0,-1]}^{34}}_{10}}=\pm1$, then to ensure the homogeneous linear system has a nonzero solution, we have either ${f^{*}}_{[1,0,\pm\frak{i}]}^{34} \in \mathcal{S}$ or ${f^{*}}_{[1,0,-1]}^{34}={f^{*}}_{[1,0,\pm\frak{i}]}^{34}$, up to a nonzero scalar.

If $\frac{{{f^{*}}_{[1,0,-1]}^{34}}_{01}}{{{f^{*}}_{[1,0,-1]}^{34}}_{10}}=\pm\frak{i}$, then we consider the homogeneous linear system that \[\begin{pmatrix}
 f^*_{0000}\\
f^*_{0011} \\
 f^*_{1100}\\
f^*_{1111}
\end{pmatrix}\] satisfies, whose coefficient matrix is
\[
\begin{pmatrix}
  1&  1& -1 & -1\\
 1 &  -1&  -b_2& b_2\\
 1 &  i& -b_3 & -ib_3\\
  1&  -i& -b_4 &ib_4
\end{pmatrix}.\] To ensure the homogeneous linear system has a nonzero solution, we have either ${f^{*}}_{[1,0,\pm\frak{i}]}^{34} \in \mathcal{S}$ or ${f^{*}}_{[1,0,-1]}^{34}={f^{*}}_{[1,0,\pm\frak{i}]}^{34}$, up to a nonzero scalar or \[\begin{pmatrix}
 f^*_{0000}\\
f^*_{0011} \\
 f^*_{1100}\\
f^*_{1111}
\end{pmatrix}=\begin{pmatrix}
0\\
0 \\
0\\
0
\end{pmatrix}.\] A contradiction since ${f^{*}}_{=_2}^{34}=(=_2)$.
\end{proof}


\begin{lemma}\label{dicbo48}
If $\mathbf{H}^{(m)}_{f}=PBO_{48}P^{-1}$, then $\mathrm{Holant}(=_2|f)$ is \#P-hard except for the following cases:
    \begin{itemize}
        \item $f \in \left \langle Z\mathcal{M} \right \rangle$ or $\in \left \langle ZX\mathcal{M} \right \rangle$;
        \item $f$ is $ \mathcal{A}\text{-}$transformable;
        \item $f$ is $ \mathcal{L}\text{-}$transformable;
        \item $f$ is $ \mathcal{P}\text{-}$transformable;
        \item $f\in\mathcal{H}$;
        \item $f \in \left \langle \mathcal{T} \right \rangle$,
    \end{itemize}
    in which cases the problem is computable in polynomial time.
\end{lemma}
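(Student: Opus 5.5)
The argument follows the pattern of Lemma~\ref{BT24} and Lemma~\ref{dicbd4n}. First, perform the holographic transformation by $P$:
\[\mathrm{Holant}(=_2|f,PBO_{48}P^{-1})\equiv_T\mathrm{Holant}(S^{-1}|f',BO_{48}S),\]
where $f'=(P^{-1})^{\otimes4}f$ and $S=(P^TP)^{-1}$. Since the sequence is stable at $\mathbf{H}^{(m)}_f=PBO_{48}P^{-1}$, every loop of $f'$ built from the transformed binaries lands, up to a scalar, back in the transformed group; this is the closure hypothesis we will exploit. By Lemma~\ref{normal}, $S\in\mathcal{N}_{\mathrm{SL}(2,\mathbb{C})}(BO_{48})$, and Lemma~\ref{normalizer} gives this normalizer as $BO_{48}$ itself. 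Hence $S\in BO_{48}$, so $BO_{48}S=BO_{48}$. In addition, $S^{-1}=P^TP$ is a symmetric matrix lying in $BO_{48}$.

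Next, reduce the left-hand side to $=_2$. Because $S^{-1}$ is a nondegenerate binary signature and $(=_2)\in BO_{48}=BO_{48}S$ is available on the right, we can realize $=_2$ on the left from $S^{-1}$ and $S$ by a chain. Lemma~\ref{poly=_2} then lets us replace the left-hand side by $=_2$. This is exactly how the cases $S^{-1}\in BD_{4n}$ and $S\in BT_{24}$ were handled, and it yields
\[\mathrm{Holant}(=_2|f',BO_{48})\equiv_T\mathrm{Holant}(=_2|f,\mathbf{H}^{(m)}_f).\]
Stability now translates into $S(f'_{BO_{48}})\subseteq\mathbb{C}BO_{48}$. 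Lemma~\ref{BO48} applies directly to $f'$ and gives the dichotomy for $f'$.

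Finally, transfer the conclusion back to $f$. The tractable classes $\langle Z\mathcal{M}\rangle$, $\langle ZX\mathcal{M}\rangle$, $\mathcal{A}$-, $\mathcal{L}$- and $\mathcal{P}$-transformability, $\mathcal{H}$ and $\langle\mathcal{T}\rangle$ are stated for $\mathrm{Holant}(=_2|\cdot)$. The equivalence above, composed with the holographic transformation $P$, carries hardness from $f'$ to $f$. In the tractable direction, $f$ lies in one of these transformable classes via the composed transformation.

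The main obstacle is the step where $S\ne I$ and the left-hand side is $S^{-1}$ rather than $=_2$. In the $BT_{24}$ case this forced a separate conjugate lemma (Lemma~\ref{DaBT24Dansolve}). Here, since the normalizer equals $BO_{48}$, no such extra case should arise, but this must be checked carefully. Concretely, we need a symmetric representative of $S^{-1}$ in $BO_{48}$ (for instance $\mathrm{diag}(1,\pm\frak{i})$ or one of the $H$-family matrices) such that the constructed binary signatures genuinely supply $=_2$ on the left side of the bipartition. If that fails, the fallback is a diagonal transformation $D_\alpha$ as in Lemma~\ref{BT24}, after noting that $D_\alpha^{-1}BO_{48}D_\alpha$ with $\alpha^2=\pm\frak{i}$ equals $BO_{48}$ up to scalars. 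The latter holds because $\mathrm{diag}(1,\frak{i})$ is already in $\mathbb{C}^{\times}BO_{48}$, so Lemma~\ref{BO48} would still apply.
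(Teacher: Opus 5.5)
Your proposal is correct and follows the paper's proof: you transform by $P$, use $\mathcal{N}_{\mathrm{SL}(2,\mathbb{C})}(BO_{48})=BO_{48}$ to get $S\in BO_{48}$ and $BO_{48}S=BO_{48}$, pass to $\mathrm{Holant}(=_2|f',BO_{48})$, and apply Lemma~\ref{BO48}. The worry in your last paragraph is unnecessary, because $S^2\in BO_{48}S$ is available on the right and the left--right--left chain $S^{-1},S^2,S^{-1}$ already realizes $=_2$ on the left. You should drop the $D_\alpha$ fallback, whose justification is wrong anyway: conjugation by $D_\alpha$, unlike conjugation by $D_\alpha^2=\operatorname{diag}(1,\alpha^2)$, does not preserve $BO_{48}$, since the normalizer of $BO_{48}$ is $BO_{48}$ itself and $D_\alpha\notin\mathbb{C}^{\times}BO_{48}$.
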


\begin{proof}
     By Lemma~\ref{normalizer}, we have $\mathcal{N}_{\mathrm{SL}(2,\mathbb{C})}(BO_{48})=BO_{48}$.
In Holant($=_2|f,PBO_{48}P^{-1}$), by a holographic transformation using $P$, we have \[\text{Holant}(=_2|f,PBO_{48}P^{-1})\equiv_T\text{Holant}(S^{-1}|f',BO_{48}S),\]
where $f'=(P^{-1})^{\otimes 4}f$. Note that $BO_{48}S=BO_{48}$, then we have \[\text{Holant}(=_2|f,PBO_{48}P^{-1})\equiv_T\text{Holant}(S^{-1}|f',BO_{48}S)\equiv_T\text{Holant}(=_2|f',BO_{48}).\] By Lemma~\ref{BO48}, the result follows.
\end{proof}

\subsubsection{\texorpdfstring{Conjugacy of $\mathbf{H}^{(m)}_{f}$ to $BI_{120}$}{Conjugacy of H\_f\textasciicircum (m) to BI\_120}}



Let $\varepsilon=e^{2\pi\frak{i}/5}$, $\mu_5=\{\varepsilon^j|j\in[5]\}$, $p=\frac{\varepsilon^4-\varepsilon}{\sqrt{5}}$, $q=\frac{ \varepsilon^2-\varepsilon^3}{\sqrt{5}}$. Using the generators
\[
A=\begin{pmatrix}
  \varepsilon^2&  0\\
 0 &  \varepsilon^3
\end{pmatrix}\ ,
B=\begin{pmatrix}
  0&  1\\
 -1 &  0
\end{pmatrix},
C=\begin{pmatrix}
  p&  q\\
 q &  -p
\end{pmatrix}\ ,
\]
we can calculate each element of $BI_{120}$, and divide $BI_{120}$ into four disjoint sets:

$\mathcal{GEQ}$: All diagonal matrices in $BI_{120}$;

$\mathcal{GDE}$: All anti-diagonal matrices in $BI_{120}$;

$\mathcal{Z}^p$:All full-support matrices $h$ in $BI_{120}$, such that $\frac{h_{01}h_{10}}{h_{00}h_{11}}=-(\frac{p}{q})^2$;

$\mathcal{Z}^q$:All full-support matrices $h$ in $BI_{120}$, such that $\frac{h_{01}h_{10}}{h_{00}h_{11}}=-(\frac{q}{p})^2$.

Let $\mathcal{Z}=\mathcal{Z}^p\cup\mathcal{Z}^q$.
 This gives the following necessary ratio conditions for $h\in \mathbb{C}BI_{120}$:

\begin{center}
\begin{tabular}{ccc}
\toprule

Matrix type & $h_{00}/h_{11}$ & $h_{01}/h_{10}$\\\midrule
$\mathcal{GEQ}$ & $\mu_{5}$ & undefined\\
$\mathcal{GDE}$ & undefined & $-\mu_{5}$\\
$\mathcal{Z}$ & $-\mu_{5}$ & $\mu_{5}$\\
\bottomrule
\end{tabular}
\end{center}

The two ratios in the last row are independent; these necessary conditions alone do not characterize group membership.

\begin{lemma}\label{4eq}
Let $x,\allowbreak y,\allowbreak w,\allowbreak z\in\mathbb{C}$. Suppose that, for every $\sigma\in\mu_{5}$, either $x+y\sigma=w+z\sigma=0$, or both expressions are nonzero and their ratio lies in $\mu_{5}$. Then, for some $\varrho\in\mu_{5}$, the vector $(x,\allowbreak y,\allowbreak w,\allowbreak z)$ has one of the forms
\begin{equation}\label{eq:bi-five-forms}
 (t,b,\varrho t,\varrho b),\qquad
 (t,0,0,\varrho t),\qquad
 (0,t,\varrho t,0).
\end{equation}
If in addition $x+y=w+z=1$, these reduce to
\[
 (x,y,x,y)\qquad (x+y=1), \qquad (1,0,0,1),\qquad(0,1,1,0).
\]
\end{lemma}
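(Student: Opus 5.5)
This is the fifth-root analogue of Lemma~\ref{solvebo48}, and the plan is the same. Consider the polynomial
\[
R(z)=(x+yz)^5-(w+zz')^5\Big|_{z'=z}=(x+yz)^5-(w+zz)^5 .
\]
The hypothesis says that for each $\sigma\in\mu_5$ either both linear forms vanish, or both are nonzero with ratio a fifth root of unity. In either case $R(\sigma)=0$. Since $\deg R\le5$ and $R$ vanishes on all five elements of $\mu_5$, $R(z)=\lambda(z^5-1)$ for some $\lambda\in\mathbb{C}$.

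Comparing the coefficients of $z,z^2,z^3,z^4$ gives
\[
x^4y=w^4z,\quad x^3y^2=w^3z^2,\quad x^2y^3=w^2z^3,\quad xy^4=wz^4,
\]
after cancelling the binomial coefficients $5,10,10,5$. The constant and top coefficients give $y^5-z^5=-(x^5-w^5)$, i.e. $x^5+y^5=w^5+z^5$.

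\textbf{Case $xy\neq0$.} Then $wz\ne0$, since $x^4y=w^4z$ is nonzero. Dividing consecutive identities (first by second, second by third) gives $x/y=w/z$. So $(w,z)=\varrho(x,y)$, and substituting into $x^4y=w^4z$ yields $\varrho^5=1$. This is the form $(t,b,\varrho t,\varrho b)$.

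\textbf{Case $xy=0$.} Then $x^4y=0=w^4z$ and $xy^4=0=wz^4$, which forces $wz=0$. Indeed, if $wz\neq0$ then $w^4z\ne0$, a contradiction. Hence at most one of $x,y$ and at most one of $w,z$ is nonzero, and $x^5+y^5=w^5+z^5$ says the surviving nonzero entries have equal fifth powers.

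If exactly one coordinate of the whole vector were nonzero, this last equation would fail. If all coordinates vanish, the vector is trivially of every form with $t=0$. The remaining placements are:
\begin{itemize}
\item $(x,w)$ nonzero gives $(t,0,\varrho t,0)$, a special case of the first form with $b=0$;
\item $(y,z)$ nonzero likewise gives the first form with $t=0$;
\item $(x,z)$ nonzero gives $(t,0,0,\varrho t)$;
\item $(y,w)$ nonzero gives $(0,t,\varrho t,0)$.
\end{itemize}

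For the normalization $x+y=w+z=1$, evaluate at $z=1$.
\begin{itemize}
\item In the proportional form, $w+z=\varrho(x+y)$ gives $\varrho=1$, so the vector is $(x,y,x,y)$ with $x+y=1$.
\item In $(t,0,0,\varrho t)$, the conditions $t=1$ and $\varrho t=1$ force $t=\varrho=1$, giving $(1,0,0,1)$.
\item In $(0,t,\varrho t,0)$, similarly $t=\varrho t=1$, giving $(0,1,1,0)$.
\end{itemize}

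The only delicate point is the case split when some entries vanish. One must check that the vanishing of the mixed coefficients really forces $wz=0$ rather than some other degenerate pattern. The argument above uses the extreme coefficients $x^4y=w^4z$ and $xy^4=wz^4$, which suffice.
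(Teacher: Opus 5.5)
Your proof is correct and follows the paper's argument: $(x+y\sigma)^5-(w+z\sigma)^5$ vanishes on $\mu_5$, you compare coefficients, and you split on $xy\neq0$ versus $xy=0$, with your explicit handling of the $(x,w)$ and $(y,z)$ placements filling in a step the paper leaves implicit. The only flaw is notational: you use $z$ both as a coordinate and as the polynomial variable, so $(w+zz)^5$ literally reads as $(w+z^2)^5$; write the polynomial in a fresh variable such as $\sigma$, as the paper does, and normalize by evaluating at $\sigma=1$.
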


\begin{proof}
The polynomial $(x+y\sigma)^5-(w+z\sigma)^5$ vanishes on $\mu_{5}$ and is therefore a scalar multiple of $\sigma^5-1$. Comparing coefficients gives
\[
 x^4y=w^4z,\quad x^3y^2=w^3z^2,\quad
 x^2y^3=w^2z^3,\quad xy^4=wz^4,
 \qquad x^5+y^5=w^5+z^5.
\]
If $xy\neq0$, then $wz\neq0$, and division of consecutive identities yields $x/y=w/z$. Thus $(w,\allowbreak z)=\varrho(x,\allowbreak y)$ with $\varrho^5=1$. If $xy=0$, then $wz=0$, and the final equality yields the sparse forms in \eqref{eq:bi-five-forms}. The zero vector is already included. Finally, the normalization at $\sigma=1$ forces $\varrho=1$ in the proportional case and $t=\varrho=1$ in each sparse case.
\end{proof}

\begin{lemma}\label{-4eq}
    Let $x,\allowbreak y,\allowbreak w,\allowbreak z\in\mathbb{C}$. Suppose that, for every $\sigma\in-\mu_{5}$, either $x+y\sigma=w+z\sigma=0$, or both expressions are nonzero and their ratio lies in $\mu_{5}$. Then, for some $\varrho\in\mu_{5}$, the vector $(x,\allowbreak y,\allowbreak w,\allowbreak z)$ has one of the forms
\begin{equation}\label{eq:bi-five-forms-2}
 (t,-b,\varrho t,-\varrho b),\qquad
 (t,0,0,-\varrho t),\qquad
 (0,t,-\varrho t,0).
\end{equation}
\end{lemma}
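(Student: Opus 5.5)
The plan is to reduce Lemma~\ref{-4eq} to Lemma~\ref{4eq} by substituting $y\mapsto -y$ and $z\mapsto -z$.

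Set $y'=-y$ and $z'=-z$. For $\sigma\in-\mu_5$ write $\sigma=-\tau$ with $\tau\in\mu_5$. Then
\[
x+y\sigma=x+y'\tau,\qquad w+z\sigma=w+z'\tau .
\]
So the hypothesis on $(x,y,w,z)$ over $-\mu_5$ is exactly the hypothesis of Lemma~\ref{4eq} for $(x,y',w,z')$ over $\mu_5$. Nothing about the ratio condition changes: the allowed ratios still lie in $\mu_5$, and the vanishing alternative is the same.

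Lemma~\ref{4eq} then gives, for some $\varrho\in\mu_5$, that $(x,y',w,z')$ is one of
\[
(t,b,\varrho t,\varrho b),\qquad (t,0,0,\varrho t),\qquad (0,t,\varrho t,0).
\]
Undoing the substitution turns these into $(t,-b,\varrho t,-\varrho b)$, then $(t,0,0,-\varrho t)$, then $(0,-t,\varrho t,0)$.

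The first two are already in the form \eqref{eq:bi-five-forms-2}. For the third, rename $t\mapsto -t$ to get $(0,t,-\varrho t,0)$, which matches the third form in \eqref{eq:bi-five-forms-2}.

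Alternatively, one could redo the argument directly. The polynomial $(x+y\sigma)^5-(w+z\sigma)^5$ vanishes on $-\mu_5$, so it is a multiple of $\sigma^5+1$. Comparing coefficients gives the same four middle identities as in Lemma~\ref{4eq}, together with $x^5-y^5=w^5-z^5$, and the case analysis then proceeds as before.

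There is no real obstacle. The only point needing care is the third sparse form, where the signs must be tracked through the renaming of $t$ so that the result lands exactly on $(0,t,-\varrho t,0)$.
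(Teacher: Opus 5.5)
Your proposal is correct and is exactly the paper's argument: apply Lemma~\ref{4eq} to $(x,-y,w,-z)$ and undo the sign change. You also make explicit the renaming $t\mapsto -t$ in the third sparse form, which the paper performs without comment.
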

\begin{proof}
    By Lemma $\ref{4eq}$, the vector ($x,-y,w,-z$) has one of the forms
\[ (t,b,\varrho t,\varrho b),\qquad
 (t,0,0,\varrho t),\qquad
 (0,t,\varrho t,0).\]
 Thus the vector ($x,y,w,z$) has one of the forms
\[ (t,-b,\varrho t,-\varrho b),\qquad
 (t,0,0,-\varrho t),\qquad
 (0,t,-\varrho t,0).\]
\end{proof}

\begin{lemma}\label{BI120}
    Let $f$ be a 4-ary signature,  $S(f_{BI_{120}})\subseteq \mathbb{C}BI_{120}$. Then $\mathrm{Holant}(=_2|f)$ is \#P-hard except for the following cases:
    \begin{itemize}
        \item $f \in \left \langle Z\mathcal{M} \right \rangle$ or $\in \left \langle ZX\mathcal{M} \right \rangle$;
        \item $f$ is $ \mathcal{A}\text{-}$transformable;
        \item $f$ is $ \mathcal{L}\text{-}$transformable;
        \item $f$ is $ \mathcal{P}\text{-}$transformable;
        \item $f\in\mathcal{H}$;
        \item $f \in \left \langle \mathcal{T} \right \rangle$,
    \end{itemize}
    in which cases the problem is computable in polynomial time.
\end{lemma}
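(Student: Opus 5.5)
We follow the template of Lemma~\ref{BO48}, replacing the fourth roots of unity by fifth roots and Lemma~\ref{solvebo48} by Lemmas~\ref{4eq} and~\ref{-4eq}. First, we may assume $f\notin\langle\mathcal T\rangle$. As in Lemma~\ref{BD8}, we normalize $f^{34}_{=_2}$ (nonzero by Lemma~\ref{sym}, and lying in $\mathbb{C}BI_{120}$, hence of full rank). Using Corollary~\ref{polyto-1}, we apply its inverse to input $x_1$ to obtain $f^*$ with ${f^*}^{34}_{=_2}=(=_2)$. Since $BI_{120}$ is a group, this keeps $S(f^*_{BI_{120}})\subseteq\mathbb{C}BI_{120}$.

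The key inputs are the $\mathcal{GEQ}$ elements $[1,0,\sigma]$ with $\sigma\in\mu_5$, namely the powers of $A$ up to scalar. Contracting inputs $3,4$ of $f^*$ with these gives $(f^*_{0000}+\sigma f^*_{0011},\,f^*_{0100}+\sigma f^*_{0111},\,f^*_{1000}+\sigma f^*_{1011},\,f^*_{1100}+\sigma f^*_{1111})$. By the ratio table, each such contraction that is diagonal or full-support has $h_{00}/h_{11}\in\pm\mu_5$, and each anti-diagonal one has $h_{01}/h_{10}\in-\mu_5$.

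\textbf{Part I.} Suppose some $[1,0,\sigma]$ with $\sigma\ne1$ gives a $\mathcal{GEQ}$ contraction. Together with $\sigma=1$, this forces $f^*_{0100}=f^*_{0111}=f^*_{1000}=f^*_{1011}=0$. We then apply Lemma~\ref{4eq}, using the normalization $x+y=w+z=1$, to $(f^*_{0000},f^*_{0011},f^*_{1100},f^*_{1111})$. This leaves the forms $(a,b,a,b)$, $(1,0,0,1)$ and $(0,1,1,0)$, and the last reduces to the middle one via two $\ne_2$-type elements of $\mathcal{GDE}$ (available after rescaling, since $B\in BI_{120}$). Next we use the $\mathcal{GDE}$ elements $(0,1,-\varrho,0)$ on inputs $3,4$ together with Lemma~\ref{-4eq} to constrain the weight-one/weight-three block $(f^*_{0001},f^*_{1101},f^*_{0010},f^*_{1110})$. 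We then use loops on inputs $1,3$ and $1,4$ to constrain the middle block. Three outcomes are possible:
\begin{itemize}
\item the odd entries vanish, so $f^*$ has eight-vertex form and Lemma~\ref{haveeightvertex} applies (with $h$ a $\mathcal{GDE}$ element, conjugated to $(0,1,-1,0)$);
\item we land in $\langle\mathcal T\rangle$ or $\mathcal A$;
\item we exhibit some contraction ${f^*}^{ij}_{[1,0,\beta]}$ whose entry ratios violate the table, giving a contradiction.
\end{itemize}
The contradictions come, as in BO$_{48}$, from comparing two contractions whose ratios are $(1+\beta\sigma)$ multiples of each other. Since $|1+\beta\sigma|$ takes values in $\{2,\,2\cos(\pi/5),\,2\cos(2\pi/5),\dots\}$, suitable choices give a ratio not in $\mu_5$.

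\textbf{Part II.} Suppose no such $\sigma$ exists. Then every ${f^*}^{34}_{[1,0,\sigma]}$ with $\sigma\ne1$ lies in $\mathcal{GDE}\cup\mathcal Z$. A $\mathcal{GDE}$ or $\mathcal Z$ element $g$ of finite order is diagonalizable within $BI_{120}$-conjugacy: $g=QDQ^{-1}$ with $D$ diagonal. We conjugate by $Q$ on inputs $1,2$, as in Part~II of Lemma~\ref{BO48}, using Lemma~\ref{norma}/Lemma~\ref{normal} to keep the left side $=_2$ (the normalizer is $BI_{120}$ itself), and reduce to Part~I. Otherwise we show that the homogeneous linear system for $(f^*_{0100},f^*_{0111},f^*_{1000},f^*_{1011})$, with four independent equations coming from $\sigma\in\mu_5$ and the prescribed off-diagonal ratios, forces coincidences between contractions, which again reduce to Part~I. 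The same holds for the system for $(f^*_{0000},f^*_{0011},f^*_{1100},f^*_{1111})$, which otherwise forces these entries to vanish and contradicts ${f^*}^{34}_{=_2}=(=_2)$.

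\textbf{Main obstacle.} The hardest step is the bookkeeping for the $\mathcal Z$ class. Unlike $BO_{48}$, its ratios involve the irrational quantity $(p/q)^2=\bigl((3\pm\sqrt5)/2\bigr)^{\pm1}$. I would first try to exploit the invariant $h_{01}h_{10}/(h_{00}h_{11})\in\{-(p/q)^{2},-(q/p)^{2}\}$. Two contractions built from $\sigma$ and $\sigma'$ give this cross-ratio as a rational function of $\sigma$; requiring it to lie in the two-element set for all five $\sigma$ should overdetermine the entries and force the degenerate cases.
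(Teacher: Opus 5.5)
Your proposal has a genuine gap, and it sits exactly where $BI_{120}$ differs from $BO_{48}$.

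\textbf{Part II.} Your reduction of Part II to Part I rests on the claim that every $\mathcal{GDE}$ or $\mathcal Z$ element can be diagonalized by a conjugation inside $BI_{120}$. That claim is false. The diagonal elements of $BI_{120}$ form the cyclic group $\langle -A\rangle$ of order $10$, so an element can be $BI_{120}$-conjugate into $\mathcal{GEQ}$ only if its order divides $10$. Every $\pm A^kB\in\mathcal{GDE}$ has trace $0$, hence order $4$. Many elements of $\mathcal Z$ have order $3$, $4$ or $6$; these are the ``No'' rows of the eigenvector table. Conjugating by some $Q\notin\mathbb{C}BI_{120}$ is not an available gadget, and by the normalizer fact you cite it would leave $BI_{120}$.

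You must therefore prove that the contractions ${f^*}^{34}_{[1,0,\varepsilon^i]}$ land in the good subfamily: $\mathcal Z^q$ with $h_{00}/h_{11}\in\{-\varepsilon^2,-\varepsilon^3\}$, or $\mathcal Z^p$ with $h_{00}/h_{11}\in\{-\varepsilon,-\varepsilon^4\}$. The paper does this in four steps.
\begin{itemize}
\item It excludes $f^*_{0100}/f^*_{1000}\in-\mu_5$ outright. In that case the contractions are anti-diagonal, which forces the diagonal block to vanish, contradicting ${f^*}^{34}_{=_2}=(=_2)$.
\item For $f^*_{0100}/f^*_{1000}\in\mu_5$, it shows the four $00/11$ ratios in $-\mu_5$ are pairwise distinct. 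A coincidence forces $f^*_{0000}=f^*_{1100}$, hence ratio $1\notin-\mu_5$.
\item The $5\times4$ homogeneous system then has nonzero solutions for only two patterns. These give $f^*_{0011}=f^*_{1100}=u f^*_{0000}$ with $u\in\{\varepsilon^3+\varepsilon^2+2,\,-\varepsilon^3-\varepsilon^2+1\}$.
\item The cross-ratio forces ${f^*}^{34}_{[1,0,\varepsilon^{\pm1}]}\in\mathbb{C}^\times\mathcal Z^q$ and ${f^*}^{34}_{[1,0,\varepsilon^{\pm2}]}\in\mathbb{C}^\times\mathcal Z^p$, the opposite assignment being impossible. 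Only then does the table supply $U\in BI_{120}$.
\end{itemize}
Your description (``forces coincidences'', or ``forces these entries to vanish'') is not what happens: coincidences are impossible, and nonzero solutions do exist.

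\textbf{Part I.} Your contradictions are imported from $BO_{48}$. There, every full-support element has four entries of equal modulus, so mixed ratios such as $h_{00}/h_{01}$ are roots of unity. In $BI_{120}$ this fails, because $|p/q|=(1+\sqrt5)/2$. The moduli $2\cos(\pi/5)$ and $2\cos(2\pi/5)$ you list are precisely $|p/q|^{\pm1}$, values that genuinely occur. Only three quantities are constrained: $h_{00}/h_{11}\in-\mu_5$, $h_{01}/h_{10}\in\mu_5$, and the cross-ratio $h_{01}h_{10}/(h_{00}h_{11})\in\{-(p/q)^2,-(q/p)^2\}$. So ``a ratio not in $\mu_5$'' proves nothing.

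Consequently the subcase $f^*_{0000}/f^*_{0011}\in-\mu_5$, where the $12$-loops lie in $\mathcal Z$, is unproved. The paper closes it with explicit cross-ratio identities.
\begin{itemize}
\item Comparing ${f^*}^{12}_{=_2}$ with ${f^*}^{12}_{[1,0,\varepsilon]}$ would require either $\frac14=\frac{\varepsilon}{(1+\varepsilon)^2}$ or $\frac{\varepsilon}{(1+\varepsilon)^2}=\frac{q^4}{4p^4}$.
\item Comparing ${f^*}^{12}_{=_2}$ with ${f^*}^{23}_{=_2}$ forces either $\varrho=1$, where $\varrho=f^*_{0010}/f^*_{1110}$ (so $f\in\langle\mathcal T\rangle$), or the impossible identity $\frac{(\varrho+1)^2}{4\varrho}=(p/q)^4$.
\end{itemize}
You flag this as the main obstacle but stop at ``should overdetermine''.

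Relatedly, you apply Lemma~\ref{-4eq} to the $34$-loops with $(0,1,-\varrho,0)$. Their $00/11$ ratio lies in $\mu_5$ or $-\mu_5$ according to whether they are in $\mathcal{GEQ}$ or $\mathcal Z$, so you must fix the type first.

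Two smaller points:
\begin{itemize}
\item $\ne_2\notin\mathbb{C}BI_{120}$, although $B$ applied to inputs $3,4$ performs the swap you want.
\item ``$f^*\in\mathcal A$'' is not a terminal outcome here, because $f^*$ comes from $f$ by non-affine modifications. The paper first modifies by $[1,0,\varepsilon]$ to get a non-affine eight-vertex signature, and only then invokes Lemma~\ref{haveeightvertex}.
\end{itemize}
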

\begin{proof}
We may assume that $f\notin\langle\mathcal T\rangle$, since otherwise the stated tractable alternative holds. The invertible input modifications below preserve this exclusion, as noted in the definition of binary modification.

    Using the same procedure as in Lemma~\ref{BD8}, we construct $f^*$.
    We also have ${f^*}^{34}_{=_2}=(=_2)$. We divide the proof into two parts: suppose there exists $t\ne1$, such that ${f^*}_{[1,0,t]}^{34}\in \mathbb{C}\mathcal{GEQ}$  in \partref{I} and there does not exist such $t$ in \partref{II}.

\proofpart{I}
In this case, $f^*_{0100}=f^*_{0111}=f^*_{1000}=f^*_{1011}=0$. Note that when $f^*_{0100}=f^*_{0111}=f^*_{1000}=f^*_{1011}=0$, $S({f^*}_{\mathcal{GEQ}}^{34})\subseteq \mathbb{C} \mathcal{GEQ}$. Moreover, ${f^*}_{[1,0,\varepsilon^i]}^{34}=[f^*_{0000}+\varepsilon^if^*_{0011},0,f^*_{1100}+\varepsilon^if^*_{1111}]$. By Lemma~\ref{4eq},


\[
\begin{pmatrix}
 f^*_{0000}\\
 f^*_{0011}\\
 f^*_{1100}\\
 f^*_{1111}
\end{pmatrix}=\begin{pmatrix}
 t_1\\
 b_1\\
 t_1\\
b_1
\end{pmatrix}\ or\ \begin{pmatrix}
 0\\
 1\\
 1\\
0
\end{pmatrix}\ or\ \begin{pmatrix}
 1\\
 0\\
 0\\
1
\end{pmatrix},
\]where $t_1+b_1=1$.

If \[\begin{pmatrix}
 f^*_{0000}\\
 f^*_{0011}\\
 f^*_{1100}\\
 f^*_{1111}
\end{pmatrix}=\begin{pmatrix}
 0\\
 1\\
 1\\
0
\end{pmatrix}\ or\ \begin{pmatrix}
 1\\
 0\\
 0\\
1
\end{pmatrix},\] then ${f^*}_{(1,0,0,\varepsilon^i)}^{12}=(\varepsilon^i,{f^*}_{0001}+\varepsilon^i{f^*}_{1101},{f^*}_{0010}+\varepsilon^i{f^*}_{1110},1)$ or ${f^*}_{(1,0,0,\varepsilon^i)}^{12}=(1,{f^*}_{0001}+\varepsilon^i{f^*}_{1101},{f^*}_{0010}+\varepsilon^i{f^*}_{1110},\varepsilon^i)$ for $i\in[5]$. Thus we have $\frac{{f^*}_{(1,0,0,\varepsilon^i)}^{12}(00)}{{f^*}_{(1,0,0,\varepsilon^i)}^{12}(11)}=\varepsilon^i$(or $\varepsilon^{-i}$). This implies that $S({f^*}_{\mathcal{GEQ}}^{12})\subseteq \mathbb{C}^{\times}\mathcal{GEQ}$. Thus $f^*_{0001}=f^*_{0010}=f^*_{1101}=f^*_{1110}=0$. $f^*$ has eight-vertex form. If $f^*\in\mathcal{A}$, by applying a binary modification using $[1,0,\varepsilon]$ to $f^*$, we construct a new 4-ary signature $f^{**}$ such that $f^{**}\notin\mathcal{A}$ and $f^{**}$ has eight-vertex form. Thus we always have a 4-ary signature that has eight-vertex form and does not belong to $\mathcal{A}$.  By Lemma~\ref{haveeightvertex}, the result follows.

If \[\begin{pmatrix}
 f^*_{0000}\\
 f^*_{0011}\\
 f^*_{1100}\\
 f^*_{1111}
\end{pmatrix}=\begin{pmatrix}
 t_1\\
 b_1\\
 t_1\\
b_1
\end{pmatrix},\] where $t_1+b_1=1$, then
\[
M({f^*})=\begin{pmatrix}
  \overset{0000}{t_1} &  \overset{0001}{*}&  \overset{0010}{*}& \overset{0011}{b_1}\\
  \overset{0100}{0} &  \overset{0101}{*}&  \overset{0110}{*}& \overset{0111}{0}\\
  \overset{1000}{0} &  \overset{1001}{*}&  \overset{1010}{*}& \overset{1011}{0}\\
  \overset{1100}{t_1} &  \overset{1101}{*}&  \overset{1110}{*}& \overset{1111}{b_1}
\end{pmatrix}.
\] Note that ${f^*}_{(1,0,0,\varepsilon^i)}^{12}=((1+\varepsilon^i)t_1,{f^*}_{0001}+\varepsilon^i{f^*}_{1101},{f^*}_{0010}+\varepsilon^i{f^*}_{1110},(1+\varepsilon^i)b_1)$ for $i\in[5]$. Then we have $t_1b_1\ne0$, otherwise ${f^*}_{(1,0,0,\varepsilon^i)}^{12}\notin BI_{120}$. Moreover, $\frac{{f^*}_{(1,0,0,\varepsilon^i)}^{12}(00)}{{f^*}_{(1,0,0,\varepsilon^i)}^{12}(11)}=\frac{t_1}{b_1}$.
If $\frac{t_1}{b_1}\ne \pm\varepsilon^k$, then ${f^*}_{(1,0,0,\varepsilon^i)}^{12}\notin BI_{120}$, a contradiction.
If $\frac{t_1}{b_1}=\varepsilon^k$, then $\frac{{f^*}_{(1,0,0,\varepsilon^i)}^{12}(00)}{{f^*}_{(1,0,0,\varepsilon^i)}^{12}(11)}=\varepsilon^k$, thus $S({f^*}_{\mathcal{GEQ}}^{12})\subseteq\mathbb{C}\mathcal{GEQ}$, and $f^*_{0001}=f^*_{0010}=f^*_{1101}=f^*_{1110}=0$. $f^*$ has eight-vertex form.
Similarly, we can always construct a 4-ary signature that has eight-vertex form and does not belong to $\mathcal{A}$.  By Lemma~\ref{haveeightvertex}, the result follows.  If $\frac{t_1}{b_1}=-\varepsilon^k=-\varrho_1$, then $\frac{{f^*}_{(1,0,0,\varepsilon^i)}^{12}(00)}{{f^*}_{(1,0,0,\varepsilon^i)}^{12}(11)}=-\varrho_1$. Moreover, $S({f^*}_{\mathcal{GEQ}}^{12})\subseteq\mathbb{C}^{\times}\mathcal{Z}$. Thus by Lemma~\ref{4eq}, \[\begin{pmatrix}
 f^*_{0001}\\
 f^*_{1101}\\
 f^*_{0010}\\
 f^*_{1110}
\end{pmatrix}=\begin{pmatrix}
 \varrho_2t_2\\
 \varrho_2b_2\\
 t_2\\
b_2
\end{pmatrix}\ or\ \begin{pmatrix}
 t_2\\
 0\\
 0\\
\varrho_2t_2
\end{pmatrix}\ or\ \begin{pmatrix}
 0\\
 t_2\\
 \varrho_2t_2\\
0
\end{pmatrix}.\] If \[\begin{pmatrix}
 f^*_{0001}\\
 f^*_{1101}\\
 f^*_{0010}\\
 f^*_{1110}
\end{pmatrix}=\begin{pmatrix}
 t_2\\
 0\\
 0\\
\varrho_2t_2
\end{pmatrix}\ or\ \begin{pmatrix}
 0\\
 t_2\\
 \varrho_2t_2\\
0
\end{pmatrix},\] then ${f^*}_{[1,0,\varepsilon^i]}^{12}=((1+\varepsilon^i)t_1,t_2,\varepsilon^i\varrho_2t_2,(1+\varepsilon^i)b_1)$ or ${f^*}_{[1,0,\varepsilon^i]}^{12}=((1+\varepsilon^i)t_1,\varepsilon^it_2,\varrho_2t_2,(1+\varepsilon^i)b_1)$ and ${f^*}_{[1,0,\varepsilon^i]}^{12}\in \mathbb{C}^{\times}\mathcal{Z}$ for $i\in[5]$ since $t_1t_2b_2\ne0$. If $\{{f^*}_{=_2}^{12},{f^*}_{[1,0,\varepsilon]}^{12}\}\subseteq\mathbb{C}^{\times}\mathcal{Z}^p$ or $\mathbb{C}^{\times}\mathcal{Z}^q$, we have $\frac{\varrho_2t_2^2}{4t_1b_1}=\frac{\varepsilon \varrho_2t_2^2}{(1+\varepsilon)^2t_1b_1}=-(\frac{p}{q})^2$ or $-(\frac{q}{p})^2$, i.e., $\frac{1}{4}=\frac{\varepsilon}{(1+\varepsilon)^2}$, a contradiction. If  ${f^*}_{=_2}^{12}\in\mathbb{C}^{\times}\mathcal{Z}^p$ and ${f^*}_{[1,0,\varepsilon]}^{12}\in\mathbb{C}^{\times}\mathcal{Z}^q$, then we have $\frac{\varrho_2t_2^2}{4t_1b_1}=-(\frac{p}{q})^2$ and $\frac{\varepsilon\varrho_2 t_2^2}{(1+\varepsilon)^2t_1b_1}=-(\frac{q}{p})^2$, i.e., $\frac{\varepsilon}{(1+\varepsilon)^2}=\frac{q^4}{4p^4}$. Note that $\frac{q}{p}=\varepsilon^2+\varepsilon^3+1$, a contradiction. The case in which ${f^*}_{=_2}^{12}\in\mathbb{C}^{\times}\mathcal{Z}^q$ and ${f^*}_{[1,0,\varepsilon]}^{12}\in\mathbb{C}^{\times}\mathcal{Z}^p$ is analogous and is omitted.

Now we consider the case in which \[\begin{pmatrix}
 f^*_{0001}\\
 f^*_{1101}\\
 f^*_{0010}\\
 f^*_{1110}
\end{pmatrix}=\begin{pmatrix}
 \varrho_2t_2\\
 \varrho_2b_2\\
 t_2\\
b_2
\end{pmatrix},\] i.e.,
\[
M(f^*)=\begin{pmatrix}
  \overset{0000}{-\varrho_1 b_1} &  \overset{0001}{\varrho_2t_2}&  \overset{0010}{t_2}& \overset{0011}{b_1}\\
  \overset{0100}{0} &  \overset{0101}{}&  \overset{0110}{}& \overset{0111}{0}\\
  \overset{1000}{0} &  \overset{1001}{}&  \overset{1010}{}& \overset{1011}{0}\\
  \overset{1100}{-\varrho_1 b_1} &  \overset{1101}{\varrho_2b_2}&  \overset{1110}{b_2}& \overset{1111}{b_1}
\end{pmatrix}.
\]
If $t_2=b_2=0$, then $f^*$ has eight-vertex form and similarly, we can always construct a 4-ary signature that has eight-vertex form and does not belong to $\mathcal{A}$. By Lemma~\ref{haveeightvertex}, the result follows. Note that ${f^*}_{(0,1,-\varepsilon^i,0)}^{34}=((\varrho_2-\varepsilon^i)t_2,f^*_{0101}-\varepsilon^if^*_{0110},f^*_{1001}-\varepsilon^if^*_{1010},(\varrho_2-\varepsilon^i)b_2)$. If $t_2=0$ or $b_2=0$, then ${f^*}_{(0,1,-\varepsilon^i,0)}^{34}\notin\ \mathbb{C}BI_{120}$, a contradiction. Thus we assume that $t_2b_2\ne0$.

If $\frac{t_2}{b_2}\notin\pm \mu_5$, then ${f^*}_{(0,1,-\varepsilon^i,0)}^{34}\notin BI_{120}$, a contradiction.

If $\frac{t_2}{b_2}=-\varrho_3\in-\mu_5$. Note that ${f^*}_{[1,0,\varepsilon^i]}^{13}=(f^*_{0000}+\varepsilon^if^*_{1010},\varrho_2t_2,\varepsilon^ib_2,f^*_{0101}+\varepsilon^if^*_{1111})$, then $\frac{{f^*}_{[1,0,\varepsilon^i]}^{13}(01)}{{f^*}_{[1,0,\varepsilon^i]}^{13}(10)}=\frac{\varrho_2t_2}{\varepsilon^ib_2}\in-\mu_5$, which implies that $S({f^*}_{\mathcal{GEQ}}^{13})\subseteq\mathbb{C}^{\times}\mathcal{GDE}$. Then we have \[\begin{pmatrix}
 f^*_{0000}\\
 f^*_{1010}\\
 f^*_{0101}\\
f^*_{1111}
\end{pmatrix}=\begin{pmatrix}
0\\
0\\
0\\
0
\end{pmatrix},\] a contradiction since $f^*_{0000}f^*_{1111}\ne0$.

 If $\frac{t_2}{b_2}=\varrho_3\in\mu_5$, note that there exist at least four distinct $i$ such that $\varrho_2\ne\varepsilon^i$, then in $S({f^*}_{\mathcal{GDE}}^{34})$, there are at least four binary signatures in $\mathbb{C}\mathcal{GEQ}$, thus
\[\begin{pmatrix}
 f^*_{0101}\\
 f^*_{0110}\\
 f^*_{1001}\\
f^*_{1010}
\end{pmatrix}=\begin{pmatrix}
0\\
0\\
0\\
0
\end{pmatrix}.\] Now
\[
M(f^*)=\begin{pmatrix}
  \overset{0000}{-\varrho_1 b_1} &  \overset{0001}{\varrho_2\varrho_3b_2}&  \overset{0010}{\varrho_3b_2}& \overset{0011}{b_1}\\
  \overset{0100}{0} &  \overset{0101}{0}&  \overset{0110}{0}& \overset{0111}{0}\\
  \overset{1000}{0} &  \overset{1001}{0}&  \overset{1010}{0}& \overset{1011}{0}\\
  \overset{1100}{-\varrho_1 b_1} &  \overset{1101}{\varrho_2 b_2}&  \overset{1110}{b_2}& \overset{1111}{b_1}
\end{pmatrix},
\] where $b_1b_2\ne0$. Note that ${f^*}_{=_2}^{12}=(-2\varrho_1b_1,(\varrho_3+1)\varrho_2b_2,(\varrho_3+1)b_2,2b_1)$ and
${f^*}_{=_2}^{23}=(f^*_{0000}+f^*_{0110},f^*_{0001}+f^*_{0111},f^*_{1000}+f^*_{1110},f^*_{1001}+f^*_{1111})=(-\varrho_1 b_1,\varrho_2\varrho_3b_2,b_2,b_1)$.
Then $\{{f^*}_{=_2}^{12},{f^*}_{=_2}^{23}\}\subseteq \mathbb{C}^{\times}\mathcal{Z}$. If $\{{f^*}_{=_2}^{12},{f^*}_{=_2}^{23}\}\subseteq \mathbb{C}^{\times}\mathcal{Z}^p$ or $\mathbb{C}^{\times}\mathcal{Z}^q$, then we have $\frac{(\varrho_3+1)^2\varrho_2b_2^2}{-4\varrho_1b_1^2}=\frac{\varrho_2\varrho_3b_2^2}{-\varrho_1b_1^2}$, i.e., $\varrho_3=1$. Thus $f^*\in \left \langle \mathcal{T} \right \rangle$ and then $f\in\left \langle \mathcal{T} \right \rangle$. If ${f^*}_{=_2}^{12}\in \mathbb{C}^{\times}\mathcal{Z}^p$ and ${f^*}_{=_2}^{23} \in \mathbb{C}^{\times}\mathcal{Z}^q$, then we have $\frac{(\varrho_3+1)^2\varrho_2b_2^2}{-4\varrho_1b_1^2}=-(\frac{p}{q})^2$ and $\frac{\varrho_2\varrho_3b_2^2}{-\varrho_1b_1^2}=-(\frac{q}{p})^2$, i.e., $\frac{(\varrho_3+1)^2}{4\varrho_3}=(\frac{p}{q})^4$. Note that $\frac{p}{q}=\varepsilon^2+\varepsilon^3$ and $\varrho_3^5=1$, then we can obtain a contradiction. The case in which ${f^*}_{=_2}^{12}\in \mathbb{C}^{\times}\mathcal{Z}^q$ and ${f^*}_{=_2}^{23} \in \mathbb{C}^{\times}\mathcal{Z}^p$ is analogous and is omitted.

\proofpart{II}
In this case, we have
\[
\left\{\begin{matrix}
 f^*_{0000}+f^*_{0011}=f^*_{1100}+f^*_{1111}=1\\
f^*_{0100}+f^*_{0111}=f^*_{1000}+f^*_{1011}=0
\end{matrix}\right..
\]
Then $f^*_{0011}=1-f^*_{0000}$, $f^*_{1111}=1-f^*_{1100}$, and
\[\begin{pmatrix}
 f^*_{0100}\\
 f^*_{0111}\\
 f^*_{1000}\\
f^*_{1011}
\end{pmatrix}= \begin{pmatrix}
 t_1\\
 -t_1\\
 b_1\\
-b_1
\end{pmatrix},\] where $t_1b_1\ne0$. Note that ${f^*}_{[1,0,\varepsilon^i]}^{34}=((1-\varepsilon^i)f^*_{0000}+\varepsilon^i,(1-\varepsilon^i)t_1,(1-\varepsilon^i)b_1,(1-\varepsilon^i)f^*_{1100}+\varepsilon^i)$.
If $\frac{t_1}{b_1}\ne\pm \varepsilon^k$, then ${f^*}_{[1,0,\varepsilon^i]}^{34}\notin BI_{120}$, a contradiction. If $\frac{t_1}{b_1} = -\varepsilon^k$, then for $i\in[4]$, ${f^*}_{[1,0,\varepsilon^i]}^{34}\in \mathbb{C}\mathcal{GDE}$, and then $f^*_{0000}=f^*_{0011}=f^*_{1100}=f^*_{1111}=0$, a contradiction. Indeed, both diagonal entries vanish for two distinct values of the parameter, so all four coefficients vanish, contradicting their prescribed sums. Thus we have $\frac{t_1}{b_1} =\varepsilon^k=\varrho_1$,
then  ${f^*}_{[1,0,\varepsilon^i]}^{34}\in\mathbb{C}^{\times}\mathcal{Z}$. This implies that $\frac{{f^*}_{[1,0,\varepsilon^i]}^{34}(00)}{{f^*}_{[1,0,\varepsilon^i]}^{34}(11)}\in-\mu_5$.
If there exists $n,m\in[4]$ and $n\ne m$, such that $\frac{{f^*}_{[1,0,\varepsilon^n]}^{34}(00)}{{f^*}_{[1,0,\varepsilon^n]}^{34}(11)}=\frac{{f^*}_{[1,0,\varepsilon^m]}^{34}(00)}{{f^*}_{[1,0,\varepsilon^m]}^{34}(11)}$, i.e., $\frac{(1-\varepsilon^n)f^*_{0000}+\varepsilon^n}{(1-\varepsilon^n)f^*_{1100}+\varepsilon^n}=\frac{(1-\varepsilon^m)f^*_{0000}+\varepsilon^m}{(1-\varepsilon^m)f^*_{1100}+\varepsilon^m}$, we have $f^*_{0000}=f^*_{1100}$. But then for $i\in[4]$, $\frac{{f^*}_{[1,0,\varepsilon^i]}^{34}(00)}{{f^*}_{[1,0,\varepsilon^i]}^{34}(11)}=1\notin-\mu_5$, a contradiction. Thus for any $n,m\in[4]$ and $n\ne m$, $\frac{{f^*}_{[1,0,\varepsilon^n]}^{34}(00)}{{f^*}_{[1,0,\varepsilon^n]}^{34}(11)}\ne\frac{{f^*}_{[1,0,\varepsilon^m]}^{34}(00)}{{f^*}_{[1,0,\varepsilon^m]}^{34}(11)}$.

By $S({f^*}_{\mathcal{GEQ}}^{34})$,
\[\begin{pmatrix}
 f^*_{0000}\\
 f^*_{0011}\\
 f^*_{1100}\\
f^*_{1111}
\end{pmatrix}\] satisfies a homogeneous linear system of equations whose coefficient matrix is:
\[
\begin{pmatrix}
  1&  1&  -1& -1\\
  1&  \varepsilon&  -a_1& -\varepsilon a_1\\
  1&  \varepsilon^2&  -a_2& -\varepsilon^2 a_2\\
  1&  \varepsilon^3&  -a_3& -\varepsilon^3 a_3\\
  1&  \varepsilon^4&  -a_4&-\varepsilon^4 a_4
\end{pmatrix},
\] where $a_i=\frac{{f^*}_{[1,0,\varepsilon^{i}]}^{34}(00)}{{f^*}_{[1,0,\varepsilon^{i}]}^{34}(11)}\in-\mu_5$ and are all different for $i\in[4]$. Then we can solve this homogeneous linear system of equations by going through all the possibilities of $(a_1,a_2,a_3,a_4)$. There are two cases in which \[\begin{pmatrix}
 f^*_{0000}\\
 f^*_{0011}\\
 f^*_{1100}\\
f^*_{1111}
\end{pmatrix}\] has a nonzero solution:
\begin{caselist}
    \caseitem{1}   $(a_1,a_2,a_3,a_4)=(-\varepsilon^2, -\varepsilon, -\varepsilon^4, -\varepsilon^3)$. In this case, \[\begin{pmatrix}
 f^*_{0000}\\
 f^*_{0011}\\
 f^*_{1100}\\
f^*_{1111}
\end{pmatrix}=f^*_{0000}\begin{pmatrix}
1\\
 \varepsilon^3+\varepsilon^2+2\\
 \varepsilon^3+\varepsilon^2+2\\
1
\end{pmatrix}.\]
\caseitem{2}  $(a_1,a_2,a_3,a_4)=(-\varepsilon^3, -\varepsilon^4, -\varepsilon, -\varepsilon^2)$. In this case, \[\begin{pmatrix}
 f^*_{0000}\\
 f^*_{0011}\\
 f^*_{1100}\\
f^*_{1111}
\end{pmatrix}=f^*_{0000}\begin{pmatrix}
1\\
 -\varepsilon^3-\varepsilon^2+1\\
 -\varepsilon^3-\varepsilon^2+1\\
1
\end{pmatrix}.\]
\end{caselist}
Now, let $u\in\{\varepsilon^3+\varepsilon^2+2 ,-\varepsilon^3-\varepsilon^2+1\}$, $f^*_{0000}=t$, we have  \[M(f^*)=\begin{pmatrix}
  \overset{0000}{t} &  \overset{0001}{*}&  \overset{0010}{*}& \overset{0011}{ut}\\
  \overset{0100}{\varrho_1b_1} &  \overset{0101}{*}&  \overset{0110}{*}& \overset{0111}{-\varrho_1b_1}\\
  \overset{1000}{b_1} &  \overset{1001}{*}&  \overset{1010}{*}& \overset{1011}{-b_1}\\
  \overset{1100}{ut} &  \overset{1101}{*}&  \overset{1110}{*}& \overset{1111}{t}
\end{pmatrix}.\]

Note that ${f^*}_{[1,0,\varepsilon^i]}^{34}=((1+\varepsilon^iu)t,(1-\varepsilon^i)\varrho_1b_1,(1-\varepsilon^i)b_1,(u+\varepsilon^i)t)$, and ${f^*}_{[1,0,\varepsilon^i]}^{34}\in \mathbb{C}^{\times}\mathcal{Z}$ for $i\in[4]$.
For convenience, let $x_i=(1-\varepsilon^i)^2$ and $y_i=(1+\varepsilon^iu)(u+\varepsilon^i)$. For $n,m\in[4]$ with $n\ne m$, the following table lists the values of  $\frac{x_n}{y_n}-\frac{x_m}{y_m}$ for different $n,m$ and $u$.

\begin{center}
\begin{tabular}{ccc}
\toprule

\multirow{2}{*}{($n,m$)} & \multicolumn{2}{c}{Value of $\frac{x_n}{y_n}-\frac{x_m}{y_m}$} \\
\cmidrule(lr){2-3}
 &$u=\varepsilon^3+\varepsilon^2+2$ &  $u=-\varepsilon^3-\varepsilon^2+1$\\\midrule
$(1,2)$ & $-3\varepsilon^3 - 3\varepsilon^2 + 1$ & $-3\varepsilon^3 - 3\varepsilon^2 - 4$\\
$(1,3)$& $-3\varepsilon^3 - 3\varepsilon^2 + 1$ & $-3\varepsilon^3 - 3\varepsilon^2 - 4$\\
$(1,4)$ & 0 & 0\\
$(2,3)$  & 0 & 0\\
$(2,4)$  & $3\varepsilon^3 + 3\varepsilon^2 - 1$ & $3\varepsilon^3 + 3\varepsilon^2 + 4$\\
$(3,4)$  & $3\varepsilon^3 + 3\varepsilon^2 - 1$ & $3\varepsilon^3 + 3\varepsilon^2 + 4$\\
\bottomrule
\end{tabular}
\end{center}
This implies that
\[
\begin{gathered}
\{{f^*}_{[1,0,\varepsilon^1]}^{34},{f^*}_{[1,0,\varepsilon^4]}^{34}\}\subseteq \mathbb{C}^{\times}\mathcal{Z}^p\quad\text{and}\quad \{{f^*}_{[1,0,\varepsilon^2]}^{34},{f^*}_{[1,0,\varepsilon^3]}^{34}\}\subseteq \mathbb{C}^{\times}\mathcal{Z}^q,\\
\text{or}\\
\{{f^*}_{[1,0,\varepsilon^1]}^{34},{f^*}_{[1,0,\varepsilon^4]}^{34}\}\subseteq \mathbb{C}^{\times}\mathcal{Z}^q\quad\text{and}\quad \{{f^*}_{[1,0,\varepsilon^2]}^{34},{f^*}_{[1,0,\varepsilon^3]}^{34}\}\subseteq \mathbb{C}^{\times}\mathcal{Z}^p.
\end{gathered}
\]

If $\{{f^*}_{[1,0,\varepsilon^1]}^{34},{f^*}_{[1,0,\varepsilon^4]}^{34}\}\subseteq \mathbb{C}^{\times}\mathcal{Z}^p$ and $\{{f^*}_{[1,0,\varepsilon^2]}^{34},{f^*}_{[1,0,\varepsilon^3]}^{34}\}\subseteq \mathbb{C}^{\times}\mathcal{Z}^q$, then we have $\frac{x_1\varrho_1b_1^2}{y_1t^2}=-(\frac{p}{q})^2$ and $\frac{x_2\varrho_1b_1^2}{y_2t^2}=-(\frac{q}{p})^2$, i.e., $\frac{x_1y_2}{y_1x_2}=(\frac{p}{q})^4$. Note that $\frac{x_1y_2}{y_1x_2}-(\frac{p}{q})^4=6\varepsilon^3 + 6\varepsilon^2 + 3$ for any $u$, a contradiction. Thus we have $\{{f^*}_{[1,0,\varepsilon^1]}^{34},{f^*}_{[1,0,\varepsilon^4]}^{34}\}\subseteq \mathbb{C}^{\times}\mathcal{Z}^q$ and $\{{f^*}_{[1,0,\varepsilon^2]}^{34},{f^*}_{[1,0,\varepsilon^3]}^{34}\}\subseteq \mathbb{C}^{\times}\mathcal{Z}^p$. We calculate the eigenvector matrix of all matrices in $\mathcal{Z}$ and check whether this eigenvector matrix belongs to $\mathbb{C}^{\times}BI_{120}$. Note that for a matrix $g\in\mathcal{Z}$, if $g\in \mathcal{Z}^q$ and $\frac{g_{00}}{g_{11}}=-\varepsilon^2$ or $-\varepsilon^3$, then $g$ can be written as $g=UDU^{-1}$, where $D\in\mathcal{GEQ}$ and $U\in BI_{120}$. If $g\in \mathcal{Z}^p$ and $\frac{g_{00}}{g_{11}}=-\varepsilon^1$ or $-\varepsilon^4$, then $g$ can be written as $g=UDU^{-1}$, where $D\in\mathcal{GEQ}$ and $U\in BI_{120}$. Recall that $(a_1,a_2,a_3,a_4)=(-\varepsilon^2, -\varepsilon, -\varepsilon^4, -\varepsilon^3)$ or $(a_1,a_2,a_3,a_4)=(-\varepsilon^3, -\varepsilon^4, -\varepsilon, -\varepsilon^2)$. Thus for $i\in[4]$, ${f^*}_{[1,0,\varepsilon^i]}^{34}$ can be written as $UDU^{-1}$, where $D\in\mathcal{GEQ}$ and $U\in BI_{120}$. By applying the binary modifications $U^{-1}$ to input $x_1$ and $U^T$ to input $x_2$, the new 4-ary signature $f^{**}$ satisfies $M({f^{**}}_{=_2}^{34})=U^{-1}IU=I$ and $M({f^{**}}_{[1,0,\varepsilon^i]}^{34})=U^{-1}(UDU^{-1})U=D\in\mathcal{GEQ}$, for the chosen $i\in[4]$. Since $\varepsilon^i\ne1$, $f^{**}$ is in \partref{I}.

In the following table, set $\rho=e^{\pi\frak{i}/3}$ and $\sigma=e^{2\pi\frak{i}/3}$.

\begingroup
\small
\setlength{\tabcolsep}{4pt}
\renewcommand{\arraystretch}{1.18}
\begin{longtable}{@{}clllc@{}}
\toprule
${g_{00}}/{g_{11}} $ & Matrix $g$ & Eigenvector matrix $U$ & $D=U^{-1}gU$ & In $\mathbb{C}^{\times}\BI$?\\
\midrule
\endfirsthead
\toprule
${g_{00}}/{g_{11}}$ & Matrix $g$ & Eigenvector matrix $U$ & $D=U^{-1}gU$ & In $\mathbb{C}^{\times}\BI$?\\
\midrule
\endhead
\midrule
\multicolumn{5}{r}{Continued on the next page}\\
\endfoot
\bottomrule
\endlastfoot
\multicolumn{5}{@{}l}{\textbf{Ratio: }$g_{01}/g_{10}=1$}\\*
\multicolumn{5}{@{}l}{\textbf{Family: }$\mathcal{Z}^q$}\\*[4pt]
$-\varepsilon^0 $ & $\begin{pmatrix}p&q\\q&-p\end{pmatrix}$ & $\begin{pmatrix}q&q\\\frak{i}-p&-\frak{i}-p\end{pmatrix}$ & $\operatorname{diag}(\frak{i},-\frak{i})$ & \no\\*[4pt]
$-\varepsilon^1 $ & $\begin{pmatrix}-p\varepsilon^3&-q\\-q&p\varepsilon^2\end{pmatrix}$ & $\begin{pmatrix}-q&-q\\\rho+p\varepsilon^3&\overline\rho+p\varepsilon^3\end{pmatrix}$ & $\operatorname{diag}(\rho,\overline\rho)$ & \no\\*[4pt]
$-\varepsilon^2 $ & $\begin{pmatrix}p\varepsilon&q\\q&-p\varepsilon^4\end{pmatrix}$ & $\begin{pmatrix}p&q\\q&-p\end{pmatrix}$ & $\operatorname{diag}(-\varepsilon^2,-\varepsilon^3)$ & \yes\\*[4pt]
$-\varepsilon^3 $ & $\begin{pmatrix}-p\varepsilon^4&-q\\-q&p\varepsilon\end{pmatrix}$ & $\begin{pmatrix}p&q\\q&-p\end{pmatrix}$ & $\operatorname{diag}(-\varepsilon^3,-\varepsilon^2)$ & \yes\\*[4pt]
$-\varepsilon^4 $ & $\begin{pmatrix}p\varepsilon^2&q\\q&-p\varepsilon^3\end{pmatrix}$ & $\begin{pmatrix}q&q\\\rho-p\varepsilon^2&\overline\rho-p\varepsilon^2\end{pmatrix}$ & $\operatorname{diag}(\rho,\overline\rho)$ & \no\\*[4pt]
\multicolumn{5}{@{}l}{\textbf{Family: }$\mathcal{Z}^p$}\\*[4pt]
$-\varepsilon^0 $ & $\begin{pmatrix}q&-p\\-p&-q\end{pmatrix}$ & $\begin{pmatrix}-p&-p\\\frak{i}-q&-\frak{i}-q\end{pmatrix}$ & $\operatorname{diag}(\frak{i},-\frak{i})$ & \no\\*[4pt]
$-\varepsilon^1 $ & $\begin{pmatrix}-q\varepsilon^3&p\\p&q\varepsilon^2\end{pmatrix}$ & $\begin{pmatrix}p&q\\q&-p\end{pmatrix}$ & $\operatorname{diag}(-\varepsilon^4,-\varepsilon)$ & \yes\\*[4pt]
$-\varepsilon^2 $ & $\begin{pmatrix}-q\varepsilon&p\\p&q\varepsilon^4\end{pmatrix}$ & $\begin{pmatrix}p&p\\\rho+q\varepsilon&\overline\rho+q\varepsilon\end{pmatrix}$ & $\operatorname{diag}(\rho,\overline\rho)$ & \no\\*[4pt]
$-\varepsilon^3 $ & $\begin{pmatrix}q\varepsilon^4&-p\\-p&-q\varepsilon\end{pmatrix}$ & $\begin{pmatrix}-p&-p\\\rho-q\varepsilon^4&\overline\rho-q\varepsilon^4\end{pmatrix}$ & $\operatorname{diag}(\rho,\overline\rho)$ & \no\\*[4pt]
$-\varepsilon^4 $ & $\begin{pmatrix}q\varepsilon^2&-p\\-p&-q\varepsilon^3\end{pmatrix}$ & $\begin{pmatrix}p&q\\q&-p\end{pmatrix}$ & $\operatorname{diag}(-\varepsilon,-\varepsilon^4)$ & \yes\\[4pt]
\addlinespace[7pt]
\multicolumn{5}{@{}l}{\textbf{Ratio: }$g_{01}/g_{10}=\varepsilon$}\\*
\multicolumn{5}{@{}l}{\textbf{Family: }$\mathcal{Z}^q$}\\*[4pt]
$-\varepsilon^0 $ & $\begin{pmatrix}-p&-q\varepsilon^3\\-q\varepsilon^2&p\end{pmatrix}$ & $\begin{pmatrix}-q\varepsilon^3&-q\varepsilon^3\\\frak{i}+p&-\frak{i}+p\end{pmatrix}$ & $\operatorname{diag}(\frak{i},-\frak{i})$ & \no\\*[4pt]
$-\varepsilon^1 $ & $\begin{pmatrix}p\varepsilon^3&q\varepsilon^3\\q\varepsilon^2&-p\varepsilon^2\end{pmatrix}$ & $\begin{pmatrix}q\varepsilon^3&q\varepsilon^3\\\sigma-p\varepsilon^3&\overline\sigma-p\varepsilon^3\end{pmatrix}$ & $\operatorname{diag}(\sigma,\overline\sigma)$ & \no\\*[4pt]
$-\varepsilon^2 $ & $\begin{pmatrix}-p\varepsilon&-q\varepsilon^3\\-q\varepsilon^2&p\varepsilon^4\end{pmatrix}$ & $\begin{pmatrix}p\varepsilon^4&q\varepsilon^4\\q\varepsilon&-p\varepsilon\end{pmatrix}$ & $\operatorname{diag}(\varepsilon^2,\varepsilon^3)$ & \yes\\*[4pt]
$-\varepsilon^3 $ & $\begin{pmatrix}p\varepsilon^4&q\varepsilon^3\\q\varepsilon^2&-p\varepsilon\end{pmatrix}$ & $\begin{pmatrix}p\varepsilon^4&q\varepsilon^4\\q\varepsilon&-p\varepsilon\end{pmatrix}$ & $\operatorname{diag}(\varepsilon^3,\varepsilon^2)$ & \yes\\*[4pt]
$-\varepsilon^4 $ & $\begin{pmatrix}p\varepsilon^2&q\varepsilon^3\\q\varepsilon^2&-p\varepsilon^3\end{pmatrix}$ & $\begin{pmatrix}q\varepsilon^3&q\varepsilon^3\\\rho-p\varepsilon^2&\overline\rho-p\varepsilon^2\end{pmatrix}$ & $\operatorname{diag}(\rho,\overline\rho)$ & \no\\*[4pt]
\multicolumn{5}{@{}l}{\textbf{Family: }$\mathcal{Z}^p$}\\*[4pt]
$-\varepsilon^0 $ & $\begin{pmatrix}-q&p\varepsilon^3\\p\varepsilon^2&q\end{pmatrix}$ & $\begin{pmatrix}p\varepsilon^3&p\varepsilon^3\\\frak{i}+q&-\frak{i}+q\end{pmatrix}$ & $\operatorname{diag}(\frak{i},-\frak{i})$ & \no\\*[4pt]
$-\varepsilon^1 $ & $\begin{pmatrix}-q\varepsilon^3&p\varepsilon^3\\p\varepsilon^2&q\varepsilon^2\end{pmatrix}$ & $\begin{pmatrix}p\varepsilon^4&q\varepsilon^4\\q\varepsilon&-p\varepsilon\end{pmatrix}$ & $\operatorname{diag}(-\varepsilon^4,-\varepsilon)$ & \yes\\*[4pt]
$-\varepsilon^2 $ & $\begin{pmatrix}q\varepsilon&-p\varepsilon^3\\-p\varepsilon^2&-q\varepsilon^4\end{pmatrix}$ & $\begin{pmatrix}-p\varepsilon^3&-p\varepsilon^3\\\sigma-q\varepsilon&\overline\sigma-q\varepsilon\end{pmatrix}$ & $\operatorname{diag}(\sigma,\overline\sigma)$ & \no\\*[4pt]
$-\varepsilon^3 $ & $\begin{pmatrix}q\varepsilon^4&-p\varepsilon^3\\-p\varepsilon^2&-q\varepsilon\end{pmatrix}$ & $\begin{pmatrix}-p\varepsilon^3&-p\varepsilon^3\\\rho-q\varepsilon^4&\overline\rho-q\varepsilon^4\end{pmatrix}$ & $\operatorname{diag}(\rho,\overline\rho)$ & \no\\*[4pt]
$-\varepsilon^4 $ & $\begin{pmatrix}q\varepsilon^2&-p\varepsilon^3\\-p\varepsilon^2&-q\varepsilon^3\end{pmatrix}$ & $\begin{pmatrix}p\varepsilon^4&q\varepsilon^4\\q\varepsilon&-p\varepsilon\end{pmatrix}$ & $\operatorname{diag}(-\varepsilon,-\varepsilon^4)$ & \yes\\[4pt]
\addlinespace[7pt]
\multicolumn{5}{@{}l}{\textbf{Ratio: }$g_{01}/g_{10}=\varepsilon^2$}\\*
\multicolumn{5}{@{}l}{\textbf{Family: }$\mathcal{Z}^q$}\\*[4pt]
$-\varepsilon^0 $ & $\begin{pmatrix}p&q\varepsilon\\q\varepsilon^4&-p\end{pmatrix}$ & $\begin{pmatrix}q\varepsilon&q\varepsilon\\\frak{i}-p&-\frak{i}-p\end{pmatrix}$ & $\operatorname{diag}(\frak{i},-\frak{i})$ & \no\\*[4pt]
$-\varepsilon^1 $ & $\begin{pmatrix}-p\varepsilon^3&-q\varepsilon\\-q\varepsilon^4&p\varepsilon^2\end{pmatrix}$ & $\begin{pmatrix}-q\varepsilon&-q\varepsilon\\\rho+p\varepsilon^3&\overline\rho+p\varepsilon^3\end{pmatrix}$ & $\operatorname{diag}(\rho,\overline\rho)$ & \no\\*[4pt]
$-\varepsilon^2 $ & $\begin{pmatrix}p\varepsilon&q\varepsilon\\q\varepsilon^4&-p\varepsilon^4\end{pmatrix}$ & $\begin{pmatrix}p\varepsilon^3&q\varepsilon^3\\q\varepsilon^2&-p\varepsilon^2\end{pmatrix}$ & $\operatorname{diag}(-\varepsilon^2,-\varepsilon^3)$ & \yes\\*[4pt]
$-\varepsilon^3 $ & $\begin{pmatrix}p\varepsilon^4&q\varepsilon\\q\varepsilon^4&-p\varepsilon\end{pmatrix}$ & $\begin{pmatrix}p\varepsilon^3&q\varepsilon^3\\q\varepsilon^2&-p\varepsilon^2\end{pmatrix}$ & $\operatorname{diag}(\varepsilon^3,\varepsilon^2)$ & \yes\\*[4pt]
$-\varepsilon^4 $ & $\begin{pmatrix}-p\varepsilon^2&-q\varepsilon\\-q\varepsilon^4&p\varepsilon^3\end{pmatrix}$ & $\begin{pmatrix}-q\varepsilon&-q\varepsilon\\\sigma+p\varepsilon^2&\overline\sigma+p\varepsilon^2\end{pmatrix}$ & $\operatorname{diag}(\sigma,\overline\sigma)$ & \no\\*[4pt]
\multicolumn{5}{@{}l}{\textbf{Family: }$\mathcal{Z}^p$}\\*[4pt]
$-\varepsilon^0 $ & $\begin{pmatrix}-q&p\varepsilon\\p\varepsilon^4&q\end{pmatrix}$ & $\begin{pmatrix}p\varepsilon&p\varepsilon\\\frak{i}+q&-\frak{i}+q\end{pmatrix}$ & $\operatorname{diag}(\frak{i},-\frak{i})$ & \no\\*[4pt]
$-\varepsilon^1 $ & $\begin{pmatrix}q\varepsilon^3&-p\varepsilon\\-p\varepsilon^4&-q\varepsilon^2\end{pmatrix}$ & $\begin{pmatrix}p\varepsilon^3&q\varepsilon^3\\q\varepsilon^2&-p\varepsilon^2\end{pmatrix}$ & $\operatorname{diag}(\varepsilon^4,\varepsilon)$ & \yes\\*[4pt]
$-\varepsilon^2 $ & $\begin{pmatrix}q\varepsilon&-p\varepsilon\\-p\varepsilon^4&-q\varepsilon^4\end{pmatrix}$ & $\begin{pmatrix}-p\varepsilon&-p\varepsilon\\\sigma-q\varepsilon&\overline\sigma-q\varepsilon\end{pmatrix}$ & $\operatorname{diag}(\sigma,\overline\sigma)$ & \no\\*[4pt]
$-\varepsilon^3 $ & $\begin{pmatrix}q\varepsilon^4&-p\varepsilon\\-p\varepsilon^4&-q\varepsilon\end{pmatrix}$ & $\begin{pmatrix}-p\varepsilon&-p\varepsilon\\\rho-q\varepsilon^4&\overline\rho-q\varepsilon^4\end{pmatrix}$ & $\operatorname{diag}(\rho,\overline\rho)$ & \no\\*[4pt]
$-\varepsilon^4 $ & $\begin{pmatrix}-q\varepsilon^2&p\varepsilon\\p\varepsilon^4&q\varepsilon^3\end{pmatrix}$ & $\begin{pmatrix}p\varepsilon^3&q\varepsilon^3\\q\varepsilon^2&-p\varepsilon^2\end{pmatrix}$ & $\operatorname{diag}(\varepsilon,\varepsilon^4)$ & \yes\\[4pt]
\addlinespace[7pt]
\multicolumn{5}{@{}l}{\textbf{Ratio: }$g_{01}/g_{10}=\varepsilon^3$}\\*
\multicolumn{5}{@{}l}{\textbf{Family: }$\mathcal{Z}^q$}\\*[4pt]
$-\varepsilon^0 $ & $\begin{pmatrix}-p&-q\varepsilon^4\\-q\varepsilon&p\end{pmatrix}$ & $\begin{pmatrix}-q\varepsilon^4&-q\varepsilon^4\\\frak{i}+p&-\frak{i}+p\end{pmatrix}$ & $\operatorname{diag}(\frak{i},-\frak{i})$ & \no\\*[4pt]
$-\varepsilon^1 $ & $\begin{pmatrix}p\varepsilon^3&q\varepsilon^4\\q\varepsilon&-p\varepsilon^2\end{pmatrix}$ & $\begin{pmatrix}q\varepsilon^4&q\varepsilon^4\\\sigma-p\varepsilon^3&\overline\sigma-p\varepsilon^3\end{pmatrix}$ & $\operatorname{diag}(\sigma,\overline\sigma)$ & \no\\*[4pt]
$-\varepsilon^2 $ & $\begin{pmatrix}p\varepsilon&q\varepsilon^4\\q\varepsilon&-p\varepsilon^4\end{pmatrix}$ & $\begin{pmatrix}p\varepsilon^2&q\varepsilon^2\\q\varepsilon^3&-p\varepsilon^3\end{pmatrix}$ & $\operatorname{diag}(-\varepsilon^2,-\varepsilon^3)$ & \yes\\*[4pt]
$-\varepsilon^3 $ & $\begin{pmatrix}-p\varepsilon^4&-q\varepsilon^4\\-q\varepsilon&p\varepsilon\end{pmatrix}$ & $\begin{pmatrix}p\varepsilon^2&q\varepsilon^2\\q\varepsilon^3&-p\varepsilon^3\end{pmatrix}$ & $\operatorname{diag}(-\varepsilon^3,-\varepsilon^2)$ & \yes\\*[4pt]
$-\varepsilon^4 $ & $\begin{pmatrix}p\varepsilon^2&q\varepsilon^4\\q\varepsilon&-p\varepsilon^3\end{pmatrix}$ & $\begin{pmatrix}q\varepsilon^4&q\varepsilon^4\\\rho-p\varepsilon^2&\overline\rho-p\varepsilon^2\end{pmatrix}$ & $\operatorname{diag}(\rho,\overline\rho)$ & \no\\*[4pt]
\multicolumn{5}{@{}l}{\textbf{Family: }$\mathcal{Z}^p$}\\*[4pt]
$-\varepsilon^0 $ & $\begin{pmatrix}q&-p\varepsilon^4\\-p\varepsilon&-q\end{pmatrix}$ & $\begin{pmatrix}-p\varepsilon^4&-p\varepsilon^4\\\frak{i}-q&-\frak{i}-q\end{pmatrix}$ & $\operatorname{diag}(\frak{i},-\frak{i})$ & \no\\*[4pt]
$-\varepsilon^1 $ & $\begin{pmatrix}q\varepsilon^3&-p\varepsilon^4\\-p\varepsilon&-q\varepsilon^2\end{pmatrix}$ & $\begin{pmatrix}p\varepsilon^2&q\varepsilon^2\\q\varepsilon^3&-p\varepsilon^3\end{pmatrix}$ & $\operatorname{diag}(\varepsilon^4,\varepsilon)$ & \yes\\*[4pt]
$-\varepsilon^2 $ & $\begin{pmatrix}q\varepsilon&-p\varepsilon^4\\-p\varepsilon&-q\varepsilon^4\end{pmatrix}$ & $\begin{pmatrix}-p\varepsilon^4&-p\varepsilon^4\\\sigma-q\varepsilon&\overline\sigma-q\varepsilon\end{pmatrix}$ & $\operatorname{diag}(\sigma,\overline\sigma)$ & \no\\*[4pt]
$-\varepsilon^3 $ & $\begin{pmatrix}-q\varepsilon^4&p\varepsilon^4\\p\varepsilon&q\varepsilon\end{pmatrix}$ & $\begin{pmatrix}p\varepsilon^4&p\varepsilon^4\\\sigma+q\varepsilon^4&\overline\sigma+q\varepsilon^4\end{pmatrix}$ & $\operatorname{diag}(\sigma,\overline\sigma)$ & \no\\*[4pt]
$-\varepsilon^4 $ & $\begin{pmatrix}-q\varepsilon^2&p\varepsilon^4\\p\varepsilon&q\varepsilon^3\end{pmatrix}$ & $\begin{pmatrix}p\varepsilon^2&q\varepsilon^2\\q\varepsilon^3&-p\varepsilon^3\end{pmatrix}$ & $\operatorname{diag}(\varepsilon,\varepsilon^4)$ & \yes\\[4pt]
\addlinespace[7pt]
\multicolumn{5}{@{}l}{\textbf{Ratio: }$g_{01}/g_{10}=\varepsilon^4$}\\*
\multicolumn{5}{@{}l}{\textbf{Family: }$\mathcal{Z}^q$}\\*[4pt]
$-\varepsilon^0 $ & $\begin{pmatrix}p&q\varepsilon^2\\q\varepsilon^3&-p\end{pmatrix}$ & $\begin{pmatrix}q\varepsilon^2&q\varepsilon^2\\\frak{i}-p&-\frak{i}-p\end{pmatrix}$ & $\operatorname{diag}(\frak{i},-\frak{i})$ & \no\\*[4pt]
$-\varepsilon^1 $ & $\begin{pmatrix}p\varepsilon^3&q\varepsilon^2\\q\varepsilon^3&-p\varepsilon^2\end{pmatrix}$ & $\begin{pmatrix}q\varepsilon^2&q\varepsilon^2\\\sigma-p\varepsilon^3&\overline\sigma-p\varepsilon^3\end{pmatrix}$ & $\operatorname{diag}(\sigma,\overline\sigma)$ & \no\\*[4pt]
$-\varepsilon^2 $ & $\begin{pmatrix}-p\varepsilon&-q\varepsilon^2\\-q\varepsilon^3&p\varepsilon^4\end{pmatrix}$ & $\begin{pmatrix}p\varepsilon&q\varepsilon\\q\varepsilon^4&-p\varepsilon^4\end{pmatrix}$ & $\operatorname{diag}(\varepsilon^2,\varepsilon^3)$ & \yes\\*[4pt]
$-\varepsilon^3 $ & $\begin{pmatrix}p\varepsilon^4&q\varepsilon^2\\q\varepsilon^3&-p\varepsilon\end{pmatrix}$ & $\begin{pmatrix}p\varepsilon&q\varepsilon\\q\varepsilon^4&-p\varepsilon^4\end{pmatrix}$ & $\operatorname{diag}(\varepsilon^3,\varepsilon^2)$ & \yes\\*[4pt]
$-\varepsilon^4 $ & $\begin{pmatrix}-p\varepsilon^2&-q\varepsilon^2\\-q\varepsilon^3&p\varepsilon^3\end{pmatrix}$ & $\begin{pmatrix}-q\varepsilon^2&-q\varepsilon^2\\\sigma+p\varepsilon^2&\overline\sigma+p\varepsilon^2\end{pmatrix}$ & $\operatorname{diag}(\sigma,\overline\sigma)$ & \no\\*[4pt]
\multicolumn{5}{@{}l}{\textbf{Family: }$\mathcal{Z}^p$}\\*[4pt]
$-\varepsilon^0 $ & $\begin{pmatrix}q&-p\varepsilon^2\\-p\varepsilon^3&-q\end{pmatrix}$ & $\begin{pmatrix}-p\varepsilon^2&-p\varepsilon^2\\\frak{i}-q&-\frak{i}-q\end{pmatrix}$ & $\operatorname{diag}(\frak{i},-\frak{i})$ & \no\\*[4pt]
$-\varepsilon^1 $ & $\begin{pmatrix}q\varepsilon^3&-p\varepsilon^2\\-p\varepsilon^3&-q\varepsilon^2\end{pmatrix}$ & $\begin{pmatrix}p\varepsilon&q\varepsilon\\q\varepsilon^4&-p\varepsilon^4\end{pmatrix}$ & $\operatorname{diag}(\varepsilon^4,\varepsilon)$ & \yes\\*[4pt]
$-\varepsilon^2 $ & $\begin{pmatrix}-q\varepsilon&p\varepsilon^2\\p\varepsilon^3&q\varepsilon^4\end{pmatrix}$ & $\begin{pmatrix}p\varepsilon^2&p\varepsilon^2\\\rho+q\varepsilon&\overline\rho+q\varepsilon\end{pmatrix}$ & $\operatorname{diag}(\rho,\overline\rho)$ & \no\\*[4pt]
$-\varepsilon^3 $ & $\begin{pmatrix}-q\varepsilon^4&p\varepsilon^2\\p\varepsilon^3&q\varepsilon\end{pmatrix}$ & $\begin{pmatrix}p\varepsilon^2&p\varepsilon^2\\\sigma+q\varepsilon^4&\overline\sigma+q\varepsilon^4\end{pmatrix}$ & $\operatorname{diag}(\sigma,\overline\sigma)$ & \no\\*[4pt]
$-\varepsilon^4 $ & $\begin{pmatrix}q\varepsilon^2&-p\varepsilon^2\\-p\varepsilon^3&-q\varepsilon^3\end{pmatrix}$ & $\begin{pmatrix}p\varepsilon&q\varepsilon\\q\varepsilon^4&-p\varepsilon^4\end{pmatrix}$ & $\operatorname{diag}(-\varepsilon,-\varepsilon^4)$ & \yes\\[4pt]
\end{longtable}
\endgroup

\end{proof}

\begin{lemma}\label{dicbi120}
If $\mathbf{H}^{(m)}_{f}=PBI_{120}P^{-1}$, then $\mathrm{Holant}(=_2|f)$ is \#P-hard except for the following cases:
    \begin{itemize}
        \item $f \in \left \langle Z\mathcal{M} \right \rangle$ or $\in \left \langle ZX\mathcal{M} \right \rangle$;
        \item $f$ is $ \mathcal{A}\text{-}$transformable;
        \item $f$ is $ \mathcal{L}\text{-}$transformable;
        \item $f$ is $ \mathcal{P}\text{-}$transformable;
        \item $f\in\mathcal{H}$;
        \item $f \in \left \langle \mathcal{T} \right \rangle$,
    \end{itemize}
    in which cases the problem is computable in polynomial time.
\end{lemma}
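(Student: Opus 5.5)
The plan follows the proof of Lemma~\ref{dicbo48} almost verbatim, with the normalizer of $BI_{120}$ in place of that of $BO_{48}$. First, apply a holographic transformation using $P$ to $\mathrm{Holant}(=_2|f,PBI_{120}P^{-1})$. As in Lemma~\ref{Cn}, Lemma~\ref{dicbd4n} and Lemma~\ref{BT24}, this gives
\[
\mathrm{Holant}(=_2|f,PBI_{120}P^{-1})\equiv_T\mathrm{Holant}(S^{-1}|f',BI_{120}S),
\]
where $f'=(P^{-1})^{\otimes4}f$ and $S=(P^TP)^{-1}$. By Lemma~\ref{normal}, $S\in\mathcal{N}_{\mathrm{SL}(2,\mathbb{C})}(BI_{120})$, and by Lemma~\ref{normalizer} this normalizer is $BI_{120}$ itself. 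Hence $S\in BI_{120}$ and $BI_{120}S=BI_{120}$. Since $(=_2)\in BI_{120}$, we may use Lemma~\ref{poly=_2} or simply the fact that $S^{-1}$ is realized on the right as an element of the group. This lets us trade the left signature $S^{-1}$ for $=_2$ and obtain
\[
\mathrm{Holant}(=_2|f,PBI_{120}P^{-1})\equiv_T\mathrm{Holant}(=_2|f',BI_{120}).
\]

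Next, use stability. Because the binary subgroup sequence of $f$ is stable at $\mathbf{H}^{(m)}_f=PBI_{120}P^{-1}$, every loop of $f$ built from a signature of $\mathbf{H}^{(m)}_f$ lies in $\mathbb{C}\,\mathbf{H}^{(m)}_f$. Transforming by $P$, this says $S(f'_{BI_{120}})\subseteq\mathbb{C}BI_{120}$. The one point to verify here is that loops transform correctly, namely that a loop on $f$ with $h$ corresponds to a loop on $f'$ with $P^{-1}h(P^{-1})^T$. This is routine bookkeeping, already used in the other conjugacy cases. The hypotheses of Lemma~\ref{BI120} are then met for $f'$, and that lemma gives the dichotomy for $\mathrm{Holant}(=_2|f')$.

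Finally, transfer the conclusion back to $f$. Each tractable class in the list is stated up to holographic transformation, or is invariant under $T^{\otimes4}$ with $T$ in the relevant group. So membership of $f'$ in one of them corresponds to membership of $f$ in the same class, and \#P-hardness transfers through the equivalences above. Since $\mathrm{Holant}(=_2|f)$ already realizes $\mathbf{H}^{(m)}_f$ by gadgets, $\mathrm{Holant}(=_2|f,PBI_{120}P^{-1})\equiv_T\mathrm{Holant}(=_2|f)$.

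The only step I expect to need care is this last transfer. The classes $\langle\mathcal T\rangle$, $\langle Z\mathcal M\rangle$ and $\mathcal H$ are not obviously invariant under an arbitrary $P$, so I would have to track which transformation Lemma~\ref{BI120} actually applies. Its proof largely produces either hardness or a reduction through Lemma~\ref{haveeightvertex}, which is phrased transformation-invariantly, so I would argue the transfer case by case in the same way the earlier conjugacy lemmas implicitly do.
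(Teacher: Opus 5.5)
Your proposal is correct and follows essentially the same route as the paper. Both transform by $P$ to $\mathrm{Holant}(S^{-1}|f',BI_{120}S)$, use $S\in\mathcal{N}_{\mathrm{SL}(2,\mathbb{C})}(BI_{120})=BI_{120}$ to get $BI_{120}S=BI_{120}$, exchange the left $S^{-1}$ for $=_2$, and then invoke Lemma~\ref{BI120} for $f'$. The paper leaves implicit both the check that stability gives $S(f'_{BI_{120}})\subseteq\mathbb{C}BI_{120}$ and the transfer of the tractable cases from $f'$ back to $f$, the two points you flag, and it handles them no more carefully than you do.
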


\begin{proof}
     By Lemma~\ref{normalizer}, we have $\mathcal{N}_{\mathrm{SL}(2,\mathbb{C})}(BI_{120})=BI_{120}$.
In Holant($=_2|f,PBI_{120}P^{-1}$), by a holographic transformation using $P$, we have \[\text{Holant}(=_2|f,PBI_{120}P^{-1})\equiv_T\text{Holant}(S^{-1}|f',BI_{120}S),\]
where $f'=(P^{-1})^{\otimes 4}f$. Note that $BI_{120}S=BI_{120}$, then we have \[\text{Holant}(=_2|f,PBI_{120}P^{-1})\equiv_T\text{Holant}(S^{-1}|f',BI_{120}S)\equiv_T\text{Holant}(=_2|f',BI_{120}).\] By Lemma~\ref{BI120}, the result follows.
\end{proof}
\subsection{Completion of the proof}
\begin{theorem}
    Let $f$ be a 4-ary signature. Then $\mathrm{Holant}(=_2|f)$ is \#P-hard  except for the following cases:
\begin{itemize}
        \item $f \in \left \langle Z\mathcal{M} \right \rangle$ or $\in \left \langle ZX\mathcal{M} \right \rangle$;
        \item $f$ is $ \mathcal{A}\text{-}$transformable;
        \item $f$ is $ \mathcal{L}\text{-}$transformable;
        \item $f$ is $ \mathcal{P}\text{-}$transformable;
        \item $f\in\mathcal{H}$;
        \item $f \in \left \langle \mathcal{T} \right \rangle$,
    \end{itemize}
    in which cases the problem is computable in polynomial time.
\end{theorem}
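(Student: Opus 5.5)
The proof assembles the case analysis of the previous sections around the binary subgroup sequence of $\mathrm{Holant}(=_2|f)$. Tractability of the listed cases is already known: $\langle\mathcal T\rangle$ and $\langle Z\mathcal M\rangle,\langle ZX\mathcal M\rangle$ come from Theorem~\ref{Hcdich}; the $\mathcal A$-, $\mathcal P$-, $\mathcal L$-transformable cases follow from Theorem~\ref{Holtrans}; and $\mathcal H$ is handled by Theorem~\ref{ff'f''}. So only hardness outside these classes needs proof.

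First I dispose of the easy situations. If some construction in $\mathrm{Holant}(=_2|f)$ yields a nonzero degenerate binary signature, Theorem~\ref{dicwithdegen} gives the full dichotomy. If some constructible nondegenerate binary signature has infinite projective order, Corollary~\ref{polytodegen} produces a degenerate binary signature, and we are back in the previous case. If $S(f_{=_2})=\{[0,0,0]\}$, then $f$ is symmetric by Lemma~\ref{sym}, and Theorem~\ref{Th8.1} applies; its tractable conditions fall within the listed ones, since symmetric $\mathcal{A}$/$\mathcal{P}$-transformable signatures and arity-at-most-$2$ nondegenerate ones are covered. Hence we may assume that every constructible binary signature is nondegenerate, of finite order, and that $f^{34}_{=_2}\not\equiv0$.

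Under these assumptions, each $\mathbf H^{(m)}_f$ is a finitely generated periodic subgroup of $\mathrm{SL}(2,\mathbb C)$, so it is finite by Theorem~\ref{Schur}. By Theorem~\ref{classification} it is conjugate to $C_n$, $BD_{4n}$, $BT_{24}$, $BO_{48}$ or $BI_{120}$. I then split on stability of the sequence.

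If the sequence is unstable, Lemma~\ref{infsettodegen} finishes the proof. If it is stable at some $m$, then $H^{(m+1)}_f\subseteq\mathbf H^{(m)}_f$. This means that looping $f$ with any element of $\mathbf H^{(m)}_f$ stays inside $\mathbf H^{(m)}_f$, which is exactly the hypothesis $S(f_G)\subseteq\mathbb C G$ used in the subgroup lemmas. Every element of $\mathbf H^{(m)}_f$ is realizable by chains, and inverses are realizable by Corollary~\ref{polyto-1}. Writing $\mathbf H^{(m)}_f=PGP^{-1}$, I apply Lemma~\ref{Cn}, \ref{dicbd4n}, \ref{BT24}, \ref{dicbo48} or \ref{dicbi120} according to $G$. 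Each of these returns precisely the stated dichotomy.

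The main point to check is that stability really does deliver the closure hypothesis needed in those lemmas. Concretely, one must confirm that the finitely many elements of $\mathbf H^{(m)}_f$ can all be added to the instance with polynomial overhead; this holds because $\mathbf H^{(m)}_f$ is a fixed finite group. One must also confirm that the holographic transformation by $P$, with $S=(P^TP)^{-1}$ lying in the normalizer (Lemma~\ref{normal}), is exactly the setup assumed in each group lemma. Both checks are bookkeeping, so the theorem follows.
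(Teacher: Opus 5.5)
Your proposal is correct and is essentially the paper's own argument. The paper likewise uses Theorem~\ref{dicwithdegen}, Corollary~\ref{polytodegen} and Lemma~\ref{sym} to reduce to the case where every constructible binary signature is invertible of finite order and $f^{34}_{=_2}\not\equiv0$, applies Theorems~\ref{Schur} and~\ref{classification} to the binary subgroup sequence, and concludes with Lemma~\ref{infsettodegen} in the unstable case and Lemmas~\ref{Cn}, \ref{dicbd4n}, \ref{BT24}, \ref{dicbo48}, \ref{dicbi120} in the stable case; the one thing worth adding is that the vanishing alternative of Theorem~\ref{Th8.1}, which you did not mention in your symmetric case, causes no trouble, since $S(f_{=_2})=\{0\}$ forces $f=\alpha(1,\frak{i})^{\otimes 4}+\beta(1,-\frak{i})^{\otimes 4}$, which is $\mathcal{P}$-transformable.
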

\begin{proof}
For Holant$(=_2|f)$, we construct its binary subgroup sequence. If the subgroup sequence is unstable, the result follows from Lemma~\ref{infsettodegen}. If the subgroup sequence is stable at $m$, then $\mathbf{H}^{(m)}_f$ is conjugate to $C_n,\ BD_{4n},\ BT_{24},\ BO_{48},$ or $BI_{120}$, and the result follows, respectively, from Lemma~\ref{Cn}, Lemma~\ref{dicbd4n}, Lemma~\ref{BT24}, Lemma~\ref{dicbo48}, Lemma~\ref{dicbi120}.


\end{proof}

\section*{Acknowledgements}
ChatGPT (OpenAI) was used to assist with language editing, LaTeX formatting, and checking and revising parts of the proofs. The authors take responsibility for the content of this manuscript.

\bibliography{ref}

\end{document}